\pgfplotsset{compat=1.5}
\newtheorem{theorem}{Theorem}[section]
\newtheorem{corollary}[theorem]{Corollary}
\newtheorem{lemma}[theorem]{Lemma}
\newtheorem{definition}[theorem]{Definition}
\newtheorem{claim}[theorem]{Claim}
\newtheorem{observation}[theorem]{Observation}
\newtheorem{framework}[theorem]{Framework}
\newenvironment{proofof}[1]{\begin{trivlist} \item {\bf Proof
#1:~~}}
  {\qed\end{trivlist}}
\newcommand{\namedref}[2]{\hyperref[#2]{#1~\ref*{#2}}}
\newcommand{\thmlab}[1]{\label{thm:#1}}
\newcommand{\thmref}[1]{\namedref{Theorem}{thm:#1}}
\newcommand{\lemlab}[1]{\label{lem:#1}}
\newcommand{\lemref}[1]{\namedref{Lemma}{lem:#1}}
\newcommand{\claimlab}[1]{\label{claim:#1}}
\newcommand{\claimref}[1]{\namedref{Claim}{claim:#1}}
\newcommand{\corlab}[1]{\label{cor:#1}}
\newcommand{\corref}[1]{\namedref{Corollary}{cor:#1}}
\newcommand{\seclab}[1]{\label{sec:#1}}
\newcommand{\secref}[1]{\namedref{Section}{sec:#1}}
\newcommand{\applab}[1]{\label{app:#1}}
\newcommand{\figlab}[1]{\label{fig:#1}}
\newcommand{\figref}[1]{\namedref{Figure}{fig:#1}}
\newcommand{\alglab}[1]{\label{alg:#1}}
\renewcommand{\algref}[1]{\namedref{Algorithm}{alg:#1}}
\newcommand{\deflab}[1]{\label{def:#1}}
\newcommand{\defref}[1]{\namedref{Definition}{def:#1}}
\newcommand{\obslab}[1]{\label{obs:#1}}
\newcommand{\obsref}[1]{\namedref{Observation}{obs:#1}}
\newcommand{\linlab}[1]{\label{line:#1}}
\newcommand{\frameref}[1]{\namedref{Framework}{frame:#1}}
\newcommand{\framelab}[1]{\label{frame:#1}}
\newcommand{\PPr}[1]{\ensuremath{\mathbf{Pr}\left[#1\right]}}
\newcommand{\Ex}[1]{\ensuremath{\mathbb{E}\left[#1\right]}}
\renewcommand{\O}[1]{\ensuremath{\mathcal{O}\left(#1\right)}}
\newcommand{\tO}[1]{\ensuremath{\tilde{\mathcal{O}}\left(#1\right)}}
\newcommand{\eps}{\varepsilon}
\def \countsketch    {\mdef{\textsc{CountSketch}}}
\def \zeroestimate    {\mdef{\textsc{F0Estimate}}}
\def \ghss    {\mdef{\textsc{GHSS}}}
\def \psketch   {\mdef{\textsc{PStable}}}
\def \bptree   {\mdef{\textsc{BPTree}}}
\def \heavyhitters    {\mdef{\textsc{HeavyHitters}}}
\def \sampler    {\mdef{\textsc{Sampler}}}
\def \sdiffest    {\mdef{\textsc{SDiffEst}}}
\def \hhest    {\mdef{\textsc{HHEst}}}
\def \guessupdatesw    {\mdef{\textsc{GuessAndUpdate}}}
\def \updatesw    {\mdef{\textsc{UpdateSW}}}
\def \mergesw    {\mdef{\textsc{MergeSW}}}
\def \stitchsw    {\mdef{\textsc{StitchSW}}}
\def \estimator    {\mdef{\textsc{Estimator}}}
\def \estimateF    {\mdef{\textsc{EstimateF}}}
\def \calA    {\mdef{\mathcal{A}}}
\def \calB    {\mdef{\mathcal{B}}}
\def \calC    {\mdef{\mathcal{C}}}
\def \calD    {\mdef{\mathcal{D}}}
\def \calE    {\mdef{\mathcal{E}}}
\def \calH    {\mdef{\mathcal{H}}}
\def \calL    {\mdef{\mathcal{L}}}
\def \calS    {\mdef{\mathcal{S}}}
\def \bfone    {\mdef{\mathbf{1}}}
\newcommand{\mdef}[1]{{\ensuremath{#1}}\xspace}  
\DeclareMathOperator*{\polylog}{polylog}
\DeclareMathOperator*{\poly}{poly}
\DeclareMathOperator*{\lsb}{lsb}
\DeclareMathOperator*{\numbits}{numbits}
\DeclareMathOperator*{\Var}{Var}
\newcommand{\flr}[1]{\mdef{\left\lfloor#1\right\rfloor}}              
\newcommand{\ceil}[1]{\mdef{\left\lceil#1\right\rceil}}               
\newcommand{\ip}[2]{\langle #1,#2 \rangle} 
\newcommand{\ignore}[1]{}
\newif\ifnotes\notestrue 
\newcommand{\samson}[1]{\textcolor{purple}{{\bf (Samson:} {#1}{\bf ) }} \marginpar{\tiny\bf
             \begin{minipage}[t]{0.5in}
               \raggedright S:
            \end{minipage}}}
\newcommand{\zhili}[1]{\textcolor{red}{{\bf (Zhili:} {#1}{\bf ) }} \marginpar{\tiny\bf
             \begin{minipage}[t]{0.5in}
               \raggedright Z:
            \end{minipage}}}  
\newcommand{\samson}[1]{}
\newcommand{\zhili}[1]{}
\renewcommand*{\@fnsymbol}[1]{\textcolor{purple}{\ensuremath{\ifcase#1\or *\or \dagger\or \ddagger\or
 \mathsection\or \triangledown\or \mathparagraph\or \|\or **\or \dagger\dagger
   \or \ddagger\ddagger \else\@ctrerr\fi}}}
\providecommand{\email}[1]{\href{mailto:#1}{\nolinkurl{#1}\xspace}}
\definecolor{mahogany}{rgb}{0.75, 0.25, 0.0}
\definecolor{darkblue}{rgb}{0.0, 0.0, 0.55}
\definecolor{bleudefrance}{rgb}{0.19, 0.55, 0.91}
\definecolor{darkpastelgreen}{rgb}{0.01, 0.75, 0.24}
\definecolor{darkgreen}{rgb}{0.0, 0.2, 0.13}
\definecolor{forestgreen}{rgb}{0.13, 0.55, 0.13}
\begin{document}

\title{Tight Bounds for Adversarially Robust Streams and Sliding Windows via Difference Estimators}
\author{David P. Woodruff\thanks{Carnegie Mellon University. 
E-mail: \email{dwoodruf@cs.cmu.edu}}\\
\and
Samson Zhou\thanks{Carnegie Mellon University. 
E-mail: \email{samsonzhou@gmail.com}}}
\date{}
\begin{titlepage}
\maketitle
\thispagestyle{empty}

\begin{abstract}
In the adversarially robust streaming model, a stream of elements is presented to an algorithm and is allowed to depend on the output of the algorithm at earlier times during the stream. In the classic insertion-only model of data streams, Ben-Eliezer \emph{et al.} (PODS 2020, best paper award) show how to convert a non-robust algorithm into a robust one with a roughly $1/\varepsilon$ factor overhead. This was subsequently improved to a $1/\sqrt{\varepsilon}$ factor overhead by Hassidim \emph{et al.} (NeurIPS 2020, oral presentation), suppressing logarithmic factors. For general functions the latter is known to be best-possible, by a result of Kaplan \emph{et al.} (CRYPTO 2021). We show how to bypass this impossibility result by developing data stream algorithms for a large class of streaming problems, {\it with no overhead in the approximation factor}. Our class of streaming problems includes the most well-studied problems such as the $L_2$-heavy hitters problem, $F_p$-moment estimation, as well as empirical entropy estimation. We substantially improve upon all prior work on these problems, giving the first optimal dependence on the approximation factor.

As in previous work, we obtain a general transformation that applies to any non-robust streaming algorithm and depends on the so-called twist number. However, the key technical innovation is that we apply the transformation to what we call a {\it difference estimator} for the streaming problem, rather than an estimator for the streaming problem itself. We then develop the first difference estimators for a wide range of problems. Our difference estimator methodology is not only applicable to the adversarially robust model, but to other streaming models where temporal properties of the data play a central role. To demonstrate the generality of our technique, we additionally introduce a general framework for the related sliding window model of data streams and resolve longstanding open questions in that model, obtaining a drastic improvement from the previous $1/\varepsilon^{2+p}$ dependence for $F_p$-moment estimation for $p \in [1,2]$ and integer $p > 2$ of Braverman and Ostrovsky (FOCS, 2007), to the optimal $1/\varepsilon^2$ bound. We also improve the prior $1/\varepsilon^3$ bound for $p \in [0,1)$, and the prior $1/\varepsilon^4$ bound for empirical entropy, obtaining the first optimal $1/\varepsilon^2$ dependence for both of these problems as well. Qualitatively, our results show there {\it is no separation} between the sliding window model and the standard data stream model in terms of the approximation factor. 
\end{abstract}
\end{titlepage}

\allowdisplaybreaks
\section{Introduction}
Efficient computation of statistics over large datasets is increasingly important. 
Such datasets include logs generated from internet traffic, IoT sensors, financial markets, and scientific observations. 
To capture these applications, the streaming model defines an underlying dataset through updates that arrive sequentially and describe the evolution of the dataset over time.
The goal is to approximate statistics of the input using memory, i.e., space complexity, that is significantly sublinear in the input size $n$, while only making a single pass over the data.

\paragraph{Adversarially robust streaming model.} 
In the adversarially robust streaming model, the input is adaptively chosen by an adversary who is given unlimited computational resources and may view the outputs of the streaming algorithm at previous times in the stream. 
The goal of the adversary is to design the input to the streaming algorithm so that the algorithm eventually outputs an incorrect answer. 
One application of the model is to recommendation systems, where a large set of possible items arrives in a data stream and the goal is to produce a list of fixed size, i.e., a cardinality constraint, so as to maximize a predetermined function, e.g., a submodular function representing a user's utility~\cite{BogunovicMSC17, MitrovicBNTC17, AvdiukhinMYZ19}. 
However, the set of items might subsequently be modified by an honest user based on their personal preferences, e.g., to avoid items they already have. 
Similar notions of adversarial robustness have been the recent focus of a line of work~\cite{KrauseMGG08, MironovNS11, HardtU14, OrlinSU18, BoyleLV19, NaorY19, Ben-EliezerY20, Ben-EliezerJWY20, HassidimKMMS20}.

The work of \cite{HardtW13} showed that no linear sketch can approximate the $L_2$-norm within a polynomial multiplicative factor against such an adversary. 
Linear sketching is a widely used technique for turnstile streams, in which positive and negative updates are possible. 
The work of \cite{HardtW13} also connected their result to differential privacy, showing that they rule out algorithms for answering more than a polynomial number of queries while satisfying differential privacy. 

On the positive side, \cite{Ben-EliezerJWY20} gave a general framework in the adversarially robust setting when the updates are in the standard insertion-only model of data streams, meaning that previous stream updates cannot later be deleted.  
In this model, there is an underlying vector $x$ that is initialized to $0^n$ and at the $t$-th time step, it receives an update of the form $x_i \leftarrow x_i + \Delta_t$ for some positive integer $\Delta_t$. 
It is promised that all cooordinates of $x$ are bounded by a polynomial in $n$ throughout the stream. 
The framework of \cite{Ben-EliezerJWY20} gives algorithms for a number of problems, and for this discussion, we illustrate the shortcomings of their results for the $F_p$-moment estimation problem. 
Here, the goal is to output a number that is a multiplicative $1 \pm \varepsilon$ approximation to $F_p(x) = \sum_{i=1}^n x_i^p$. 
The work of \cite{Ben-EliezerJWY20} achieves an adversarially robust algorithm with $\tO{\frac{\log n}{\eps^3}}$ bits\footnote{We use the $\tO{\cdot}$ notation to omit logarithmic terms, i.e., $\tO{f}=\O{f\polylog f}$ for any function $f$ of $n$ and $\frac{1}{\eps}$.} of memory for $p\in[0,2]$ and $\tO{\frac{n^{1-2/p}}{\eps^3}}$ bits for $p > 2$. 
Subsequently, \cite{HassidimKMMS20} improved the space dependence to $\tO{\frac{\log^4 n}{\eps^{2.5}}}$ for $F_p$-moment estimation for $p\in[0,2]$ and $\tO{\frac{n^{1-2/p}}{\eps^{2.5}}}$ for $p>2$. 
Their technique showed that differential privacy can be used to protect the internal randomness of algorithms against the adversary to a useful extent, despite the general negative results concerning differential privacy pointed out by \cite{HardtW13} for linear sketches.

The algorithms above should be compared to the best known algorithms in the standard non-robust data stream model, which achieve $\O{\frac{\log n}{\eps^2}}$ memory for $p \in [0,2]$ \cite{KaneNW10b,KaneNW10,KaneNPW11} and $\tO{\frac{n^{1-2/p}}{\varepsilon^2}}$ memory for $p > 2$ \cite{g15,gw18}. 
Notably, the dependence on the approximation factor $\varepsilon$, is $1/\eps^{2.5}$ in the robust setting but only $1/\eps^2$ in the non-robust setting. 
Similar gaps exist for the important $L_2$-heavy hitters problem, where the space bounds of \cite{Ben-EliezerJWY20} and \cite{HassidimKMMS20} are $\tO{\frac{\log n}{\varepsilon^3}}$ and $\tO{\frac{\log^4 n}{\varepsilon^{2.5}}}$ bits, respectively, while the non-adversarial insertion-only streaming space complexity is $\tO{\frac{\log n}{\varepsilon^2}}$ bits \cite{BravermanCIW16,BravermanCINWW17}. 
The study on the dependence of the approximation parameter $\varepsilon$ for problems in the data stream literature has a long and rich history, including work on frequency moments and cascaded norms \cite{lw13,wz21}, multipass and distributed functional monitoring \cite{WZ12,WZ14,GWWZ15,WZ18}, low rank approximation \cite{CW09,w14,BWZ16}, sparse recovery \cite{PW11,NSW018}, subspace embeddings \cite{NN14}, and support vector machines \cite{AB0MW20}. 
As discussed in \cite{IW03,W04}, for reasonable values of $n$ and $\varepsilon$ (e.g., $n = 2^{32}$ and $\varepsilon = 10\%$), the storage bound of a streaming algorithm is often dominated by the polynomial dependence on $1/\varepsilon$. 
If one wants even better approximation quality, e.g., $\varepsilon = 1\%$, then a large polynomial dependence on $1/\varepsilon$ constitutes a severe drawback of existing algorithms. 
In \cite{BJKST02}, for distinct element estimation it was explicitly posed if one could obtain a $\frac{1}{\varepsilon} \cdot \textrm{polylog}(n)$ space algorithm or if an $\Omega(1/\varepsilon^2)$ space algorithm was best possible. 
This was later resolved in \cite{IW03,W04} for the non-robust data stream model. 
However, we are now faced with the analogous question for adversarially robust algorithms, not just for distinct element estimation, but for almost all streaming problems of interest:
\begin{center}
{\it Do we need to pay a polynomial in $\frac{1}{\varepsilon}$ overhead in the adversarially robust model for all of the well-studied streaming problems above?}
\end{center}
Recent work by~\cite{KaplanMNS21} gave a further separation between adversarial and oblivious streams by introducing a streaming problem for which there exists an oblivious streaming algorithm that uses only polylogarithmic space but any adversarially-robust streaming algorithm must use polynomial space. 
In fact, they show that in general there must be an $\Omega(\sqrt{\lambda})$ blowup in the memory required of an adversarially robust algorithm over its non-robust counterpart, where $\lambda$ is the so-called flip number and indicates roughly how many times a function can change by a $(1+\varepsilon)$-factor; for a more precise description see \secref{sec:prelim}. 
As the flip number is $\Omega(1/\varepsilon)$ for insertion streams, while the optimal non-robust streaming algorithm has a $\Theta(1/\varepsilon^2)$ dependence, this suggests that one does need to pay $\Omega(1/\varepsilon^{2.5})$ space to solve the above streaming problems in the adversarially robust model. 
Surprisingly, we will show {\it this is not the case}, and the lower bound of \cite{KaplanMNS21} does not apply to all of the well-studied streaming problems above. 


\paragraph{Sliding window model.} 
A related data stream model where temporal properties play a central role is the {\it sliding window model}. 
The streaming model does not capture applications in which recent data is considered more accurate and important than data that arrived prior to a certain time. 
For a number of applications~\cite{BabcockBDMW02,MankuM12,PapapetrouGD15,WeiLLSDW16}, the unbounded streaming model has performance inferior to the sliding window model~\cite{DatarGIM02}, where the underlying dataset consists of only the $W$ most recent updates in the stream, for a parameter $W>0$ that denotes the window size of the active data. 
All updates before the $W$ most recent updates are expired, and the goal is to aggregate information about the active data using space sublinear in $W$. 
Observe that for $W>m$ for a stream $u_1, u_2, \ldots, u_m$ of items of length $m$, the active data is the entire stream, and for an underlying vector $x$ of the data, we have $x_i=|\{t\,:\,u_t=i \textrm{ and } t\ge m-W+1\}|$ for each $i\in[n] = \{1, 2, \ldots, n\}$. 
Thus, the sliding window model is a generalization of the streaming model where $W$ can be less than $m$. 
It is especially relevant in time-sensitive applications such as network monitoring~\cite{CormodeM05a,CormodeG08,Cormode13}, event detection in social media~\cite{OsborneEtAl2014}, and data summarization~\cite{ChenNZ16,EpastoLVZ17}, and has been subsequently studied in a number of additional settings~\cite{LeeT06, LeeT06b, BravermanO07, DatarM07, BravermanOZ12, BravermanLLM15, BravermanLLM16, BravermanGLWZ18, BravermanDMMUWZ18, BorassiELVZ20}.

Returning to our running example of $F_p$-moment estimation, the best known algorithms in the sliding window model have a $1/\eps^{2+p}$ dependence on the approximation factor $\eps$, for every $p \geq 1$, in their space complexity \cite{BravermanO07}.
Surprisingly, there has been no progress on this problem since then. 
Since the sliding window model generalizes insertion-only streams, which have $F_p$-moment estimation algorithms with a $\O{\frac{1}{\eps^2}}$ dependence on $\eps$, the longstanding open question is:
\begin{center}
{\it Do we need to pay a polynomial in $\frac{1}{\varepsilon}$ overhead in the sliding window model for well-studied streaming problems, such as estimating the frequency moments?}
\end{center}
The work of \cite{BravermanO07} introduces the {\it smooth histogram} approach for solving problems in the sliding window model. 
For a function $f$, given adjacent substreams $A$, $B$, and $C$, an $(\alpha,\beta)$-smooth function demands that if $(1-\beta)f(A\cup B)\le f(B)$, then $(1-\alpha)f(A\cup B\cup C)\le f(B\cup C)$ for some parameters $0<\beta\le\alpha<1$. 
Intuitively, once a suffix of a data stream becomes a $(1\pm\beta)$-approximation for a smooth function, then it is \emph{always} a $(1\pm\alpha)$-approximation, regardless of the subsequent updates that arrive in the stream. 
Here, a suffix of length $t$ of a stream is defined to be the last $t$ updates of the stream. 

The workof \cite{BravermanO07} develops a general algorithm for smooth functions and shows, for example, that $F_p$-moment estimation is sufficiently smooth, resulting in a $1/\varepsilon^{2+p}$ dependence for $F_p$-moment estimation for every $p \geq 1$, a $1/\varepsilon^3$ bound for $0 < p < 1$, and a $1/\varepsilon^4$ bound for estimating the empirical entropy. 
Unfortunately there are examples where this smoothness is best possible for all of these functions, suggesting that there is a gap between the sliding window model and data stream models. 
Surprisingly, we will show {\it this is not the case}, and we will give the first general technique for the sliding window model that goes beyond the smooth histogram approach, which has been the standard paradigm for nearly 15 years. 

\subsection{Our Contributions}
We show that there is no loss in $\frac{1}{\eps}$ factors, up to logarithmic factors, over the standard model of data streams for all of the aforementioned central data stream problems, in either the adversarially robust streaming model or the sliding window model. 
Our results hold for $F_p$-moment estimation for $p \in [0,2]$ and integers $p > 2$, $L_2$-heavy hitters, and empirical entropy estimation, and we give a general framework that can be applied to other problems as well.    
Our techniques introduce the following crucial concept, which surprisingly had not been considered for data streams before:

\begin{definition}[Difference Estimator]
\deflab{def:diff:est}
Given frequency vectors $u$ and $v$, an accuracy parameter $\eps>0$, a failure probability $\delta\in(0,1)$, and a ratio parameter $\gamma\in(0,1]$, a $(\gamma,\eps,\delta)$-difference estimator for a function $F$ outputs an additive $\eps\cdot F(u)$ approximation to $F(u+v)-F(u)$ with probability at least $1-\delta$, given $F(u+v)-F(u)\le\gamma\cdot F(u)$ and $F(v)\le\gamma F(u)$.  
\end{definition}

A difference estimator provides a path for obtaining an approximation algorithm, as follows: 
\begin{restatable}{framework}{framestitch}
\framelab{frame:stitch}
Given a monotonic function $F$, frequency vectors $u,v_1,\ldots,v_k$, a $\left(1+\frac{\eps}{2}\right)$ multiplicative approximation to $F(u)$, and $\left(\gamma_i,\eps_i,\frac{1}{3k}\right)$-difference estimators to $F(u+v_1+\ldots+v_i)-F(u+v_1+\ldots+v_{i-1})$ for $i\in\{1,\ldots,k\}$ (and $v_0=0$) such that $\prod_{i=1}^k(1+\gamma_i)\le 2$ and $\sum_{i=1}^k\eps_i\le\frac{\eps}{4}$, then their sum gives a $(1+\eps)$-approximation to $F(u+v_1+\ldots+v_i)$ with probability at least $\frac{2}{3}$. 
\end{restatable}
Moreover, even if multiple copies of the difference estimators are run simultaneously, the space dependency can still be $\frac{1}{\eps^2}$:
\begin{restatable}{theorem}{framestitchspace}
Suppose each $\left(\gamma_i,\eps,\frac{1}{3k}\right)$-difference estimator uses space $\frac{\gamma_i}{\eps^2}\cdot S$ and $S$ is a monotonic function in $k:=\O{\log n}$, $\frac{1}{\eps_i}$, and the input size. 
If $\gamma_i\le\frac{1}{2^i}$ and there are $N_i\le 2^i$ instances of $\left(\gamma_i,\eps,\frac{1}{3k}\right)$-difference estimators, then the total space is at most $\O{\frac{S\log n}{\eps^2}}$. 
\end{restatable}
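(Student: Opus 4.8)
The plan is to bound the total space by a direct geometric-series computation over the $k=\O{\log n}$ levels, where level $i$ carries $N_i$ copies of a $\left(\gamma_i,\eps,\frac{1}{3k}\right)$-difference estimator. First I would write the total space as $\sum_{i=1}^{k} N_i \cdot \big(\text{space of one level-}i\text{ estimator}\big)$. By hypothesis each level-$i$ estimator uses space $\frac{\gamma_i}{\eps^2}\cdot S_i$, where $S_i$ is the value of the monotonic function $S$ at the parameters $\big(k,\frac{1}{\eps_i},\text{input size}\big)$ of level $i$. Since $S$ is monotonic in each argument and $k$ and the input size are common to all levels, I would set $S:=\max_{i\in\{1,\ldots,k\}} S_i$, which by monotonicity is attained at the level with the largest $\frac{1}{\eps_i}$; this is the quantity $S$ in the statement. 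Hence the space of a single level-$i$ estimator is at most $\frac{\gamma_i}{\eps^2}\cdot S$.

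Next I would plug in the two given bounds $\gamma_i\le\frac{1}{2^i}$ and $N_i\le 2^i$, so that the contribution of level $i$ to the total space is at most
\[
N_i\cdot\frac{\gamma_i}{\eps^2}\cdot S \;\le\; 2^i\cdot\frac{1}{2^i}\cdot\frac{S}{\eps^2}\;=\;\frac{S}{\eps^2}.
\]
The key point is the exact cancellation of the $2^i$ factors: the geometric shrinkage of $\gamma_i$ is matched by the geometric growth in the number of instances $N_i$, so every level contributes the same $\frac{S}{\eps^2}$ and there is no blow-up across scales. Summing over the $k$ levels then gives total space at most $\frac{kS}{\eps^2}$, and since $k=\O{\log n}$ this is $\O{\frac{S\log n}{\eps^2}}$, as claimed.

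There is no genuine obstacle in the argument; the only point requiring care is that the monotonicity hypothesis on $S$ is exactly what allows the per-level quantities $S_i$ to be replaced by a single $S$, so the sum collapses to a clean $\frac{kS}{\eps^2}$ rather than $\frac{1}{\eps^2}\sum_{i=1}^k S_i$, which could a priori be larger if the $\eps_i$ were chosen adversarially small. It is worth noting that this bound is precisely what makes the scheme of \frameref{frame:stitch} affordable: running difference estimators simultaneously at all $\O{\log n}$ scales costs, up to the number of scales, only the $\frac{1}{\eps^2}$ space of a single scale, rather than a $\frac{1}{\eps^2}$ blow-up per scale.
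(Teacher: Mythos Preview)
Your proposal is correct and follows essentially the same approach as the paper: write the total space as $\sum_{i=1}^{k} N_i\cdot\frac{\gamma_i}{\eps^2}\cdot S$, observe the cancellation $N_i\gamma_i\le 2^i\cdot 2^{-i}=1$, and sum over $k=\O{\log n}$ levels to obtain $\O{\frac{S\log n}{\eps^2}}$. Your added remark about monotonicity of $S$ justifying a single common bound across levels is a slight elaboration beyond what the paper spells out, but the argument is otherwise identical.
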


While one can construct a difference estimator for $F(v+w_t)-F(v)$ by using separate sketches for $F(v+u)$ and $F(u)$, this defeats the whole point! 
Indeed, while approximation to $F(v+u)$ and to $F(u)$ with additive error $\frac{\eps}{2}\cdot F(v+u)$ can be used to provide an approximation to $F(v+u)-F(u)$ with additive error $\frac{\eps}{2}\cdot F(v+u)$, the space required for these approximations is simply the space required for $(1+\eps)$-approximations, so we are not utilizing the property that $F(v+u)-F(u)\le\gamma\cdot F(u)$ to obtain more space efficient algorithms. 
Thus, we first need to develop difference estimators for all of the above problems. It turns out that difference estimators
for the frequency moments $F_p$ can be used as building blocks for many other streaming problems, so these will be our focus. We show: 
 
%
%
\begin{theorem}
\thmlab{thm:diff:est}
There exist difference estimators for the $F_p$-moment problem for $p\in[0,2]$ and integers $p>2$. 
In particular, the difference estimator uses:
\begin{enumerate}
\item
$\O{\frac{\gamma}{\eps^2}\left(\log\frac{1}{\eps}+\log\log n+\log\frac{1}{\delta}\right)+\log n}$ bits of space for the distinct elements problem, $F_0$. 
(See \lemref{lem:diff:est:F0}.)  
\item
$\O{\frac{\gamma\log n}{\eps^2}\left(\log\frac{1}{\eps}+\log\frac{1}{\delta}\right)}$ bits of space for $F_2$. 
(See \lemref{lem:diff:est:F2}.) 
\item
$\O{\frac{\gamma^{2/p}\log n}{\eps^2}(\log\log n)^2\left(\log\frac{1}{\eps}+\log\frac{1}{\delta}\right)}$ bits of space for $F_p$ with $p\in(1,2)$. 
(See \lemref{lem:diff:est:Fp:smallp}.)
\item
$\O{\frac{\gamma}{\eps^2}n^{1-2/p}\log^3 n\log\frac{n}{\delta}}$ bits of space for $F_p$ for integer $p>2$. 
(See \lemref{lem:diff:est:Fp:largep}).  
\end{enumerate}
\end{theorem}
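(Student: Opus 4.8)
The plan is to build a separate difference estimator for each of the four regimes, using the $F_2$ case as the conceptual template and developing the rest by combining three ingredients: (i) restricting attention to a \emph{frozen} snapshot of the estimator's state at the split point between the prefix $u$ and the total $u+v$; (ii) handling individually-large coordinates with a heavy-hitters primitive, whose exact contributions to the difference are computed; and (iii) handling the remaining ``light'' coordinates by subsampling the universe at a geometric sequence of rates and running a rescaled base $F_p$ estimator on each subsampled substream. The promises $F_p(u+v)-F_p(u)\le\gamma F_p(u)$ and $F_p(v)\le\gamma F_p(u)$ are what make this cheaper than a full $(1\pm\eps)$ sketch: they let us resolve heavy hitters only to a $\gamma$-coarser threshold and subsample more aggressively, which is the source of every $\gamma$ (or $\gamma^{2/p}$) factor in the statement.

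For $F_2$ (\lemref{lem:diff:est:F2}) the identity $F_2(u+v)-F_2(u)=2\langle u,v\rangle+F_2(v)$ makes the estimator explicitly bilinear. Maintain one \ams\ sketch matrix $\S$ with $k=\O{\gamma/\eps^2}$ rows, record $\S u$ at the split, continue updating to $\S(u+v)$, and output $\norm{\S(u+v)}_2^2-\norm{\S u}_2^2=2\langle\S u,\S v\rangle+\norm{\S v}_2^2$. The point is that the $\norm{\S u}_2^2$ error cancels \emph{exactly}, so only the variance of $2\langle\S u,\S v\rangle+\norm{\S v}_2^2$ as an estimator of $2\langle u,v\rangle+F_2(v)$ is relevant; by Cauchy--Schwarz and $F_2(v)\le\gamma F_2(u)$ this variance is $\O{\gamma/k}\,F_2(u)^2$, so $k=\O{\gamma/\eps^2}$ rows and $\O{\log\frac1\delta}$ independent repetitions combined by a median suffice, each sketch entry taking $\O{\log n}$ bits. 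For $F_0$ (\lemref{lem:diff:est:F0}), the difference counts coordinates with $u_i=0<v_i$ and is at most $\gamma F_0(u)$. Subsample the universe at rates $1,\tfrac12,\tfrac14,\dots$; at the rate $s$ for which $sF_0(u)\approx\gamma/\eps^2$, run a differenced variant of an optimal distinct-elements sketch on the subsampled substream --- one that, besides approximating the number of distinct survivors, tracks whether each survivor already appeared before the split --- and rescale by $1/s$. The subsampled count concentrates to within an $\eps F_0(u)$ additive error once it is $\Omega(\poly(1/\eps,\log\log n))$ in size, giving $\O{\frac{\gamma}{\eps^2}(\log\frac1\eps+\log\log n+\log\frac1\delta)}$ bits plus an additive $\O{\log n}$ for a hash function; since $F_0(u)$ is unknown we keep $\O{\log\log n}$ levels, and a median over repetitions yields the failure probability.

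The two $F_p$ cases (\lemref{lem:diff:est:Fp:smallp} for $p\in(1,2)$ and \lemref{lem:diff:est:Fp:largep} for integer $p>2$) are where the real work lies, and I expect the $p\in(1,2)$ case to be the main obstacle. Unlike $F_2$, the per-coordinate difference $(u_i+v_i)^p-u_i^p$ is neither linear nor bilinear, so one must split a coordinate's contribution into the regime $v_i\lesssim u_i$, where $(u_i+v_i)^p-u_i^p=\Theta(p\,u_i^{p-1}v_i)$ via a linearization, and $v_i\gtrsim u_i$, where it is $\Theta(v_i^p)$, and reassemble the two into a single additive-$\eps F_p(u)$ guarantee. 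Simultaneously one separates \emph{heavy} coordinates (those whose $u_i^p$ or $v_i^p$ is a non-negligible fraction of $F_p(u)$), tracked via \countsketch/\bptree\ frozen at the split, from \emph{light} coordinates, whose contribution is estimated by subsampling and running \ligeoest\ (for $p>2$, the $n^{1-2/p}$-space heavy-hitter-based $F_p$ estimator) on the subsampled stream. The crux --- and the reason the bound carries $\gamma^{2/p}$ rather than $\gamma$ --- is that the natural heavy/light threshold for the subsampled vector is in terms of its $L_2$ mass, and matching the $L_p$-promise $F_p(v)\le\gamma F_p(u)$ to an $L_2$-tail bound on the light part forces the subsampling rate, hence the number of tracked buckets, to scale like $\gamma^{2/p}$; showing that at that rate the rescaled light-part estimator still has variance $\O{\eps^2}\,F_p(u)^2$ --- in particular controlling coordinates near the heavy/light boundary and the interaction of the two regimes above --- is the heart of the argument. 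The remaining bookkeeping (the $\O{\log n}$ levels and union bounds over them, the $(\log\log n)^2$ and $\log^3 n$ factors inherited from the base estimators, median-of-repetitions for the $\delta$ dependence, and bit complexity of the counters) is routine.
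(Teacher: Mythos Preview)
Your $F_2$ and $F_0$ sketches are essentially the paper's arguments; the paper uses exactly the AMS inner-product cancellation for $F_2$ and the subsample-then-count-new-survivors idea for $F_0$.

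For $p\in(1,2)$, however, your plan diverges from the paper and has a real gap. You propose a heavy/light split together with a per-coordinate case analysis (``linearize when $v_i\lesssim u_i$, treat as $v_i^p$ otherwise''), with the light part handled by universe subsampling. The linearization step would require estimating a bilinear form like $\sum_i u_i^{p-1}v_i$ with a fractional exponent; the paper explicitly discusses this route in its overview and notes that the Taylor expansion $\sum_k\binom{p}{k}\langle u^{p-k},v^k\rangle$ breaks for non-integer $p<2$ because $u_i$ can be zero (so $u_i^{p-k}$ is undefined for $k\ge2$), with ``no simple fix.'' Your regime split does not circumvent this: deciding which regime a coordinate is in and then estimating the two pieces separately still requires per-coordinate access to both $u_i$ and $v_i$ that is unavailable in small space, and you yourself flag the key variance bound as unproven. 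The paper's actual argument is both different and simpler: it runs Li's geometric mean estimator with a single $p$-stable sketch matrix $A$, forms the terms $z_i=C_{q,p}\prod_j|\langle A_j,u+v\rangle|^{p/q}$ and $z'_i=C_{q,p}\prod_j|\langle A_j,u\rangle|^{p/q}$, and observes that $z_i-z'_i$ is an \emph{unbiased} estimator of $F_p(u+v)-F_p(u)$. Expanding $\prod_j(1+\langle A_j,v\rangle/\langle A_j,u\rangle)^{p/q}-1$ shows each term in $z_i-z'_i$ carries at least one factor $|\langle A_j,v\rangle|^{p/q}$, and the smallness of $\|v\|_p$ relative to $\|u\|_p$ then bounds $\Var(z_i-z'_i)$ by $O(\gamma^{2/p})\,F_p(u)^2$ directly. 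No heavy/light decomposition, no subsampling, no linearization.

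For integer $p>2$ your plan is also off target. You describe subsampling and running a black-box $F_p$ estimator on the subsampled stream, but this does not exploit the one thing that makes integer $p$ special: the binomial expansion $F_p(u+v)-F_p(u)=\sum_{k=1}^{p}\binom{p}{k}\langle u^k,v^{p-k}\rangle$ is a \emph{finite} sum. The paper estimates each inner product $\langle u^k,v^{p-k}\rangle$ separately by (i) removing $L_p$-heavy coordinates of the prefix $u$ with \heavyhitters{} and tracking those coordinates of $v$ exactly, and (ii) on the residual, using a \emph{perfect $L_2$ sampler} on $u$ (or, for $k=1$, on $v$) together with \countsketch\ to produce unbiased estimates of the sampled coordinate powers, then multiplying by an unbiased estimate of $\|u\|_2^2$; H\"older bounds the variance by $O(\gamma)\,n^{1-2/p}F_p(u)^2$, yielding the $\frac{\gamma}{\eps^2}n^{1-2/p}$ space. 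Your subsampling scheme neither identifies the finite expansion nor the role of perfect $L_2$ sampling, and for $p>2$ uniform universe subsampling does not by itself reduce the contribution of heavy coordinates in a way that would yield the stated bound.
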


We define the quantity $1+\frac{\eps}{\gamma}$ to be the \emph{effective accuracy} of the difference estimator, since it effectively serves as the multiplicative error if $F(v+u)-F(u)=\gamma\cdot F(u)$. 
The key property of our $(\gamma,\eps,\delta)$-difference estimators that we shall exploit is that their space complexity has dependency $\frac{\gamma^C}{\eps^2}$ for some constant $C\ge 1$. 
For example, for $\gamma=\O{\eps}$, the space dependence is only $\frac{1}{\eps}$, which allows us to use $\O{\frac{1}{\eps}}$ copies of the $(\gamma,\eps,\delta)$-difference estimators. 
Intuitively, because the difference is small, e.g., $\eps\cdot F(v)$, we can avoid using a $\frac{1}{\eps^2}$ space dependence to get an additive $\eps\cdot F(v)$ approximation to the difference. 
We develop several unrelated corollaries along the way which may be of independent interest, e.g., we show in \thmref{thm:smallp:strong:track} that Li's geometric mean estimator gives strong-tracking for $F_p$-moment estimation for $p\in(0,2)$, offering an alternative to the algorithm of~\cite{BlasiokDN17}. We also derandomize our algorithm using a generalization of a pseudorandom generator that fools half-space queries~\cite{GopalanKM18, JayaramW18}, and the argument may be useful for other models where temporal properties play a role. 

Using our concept of difference estimators, we develop quite general frameworks for both the adversarially robust streaming model and the sliding window model. 

\subsubsection{Our Results for Adversarially Robust Streams.}
We first present a space-efficient framework for adversarially robust streaming algorithms, provided there exists a corresponding difference estimator and strong tracker, i.e., a streaming algorithm that is correct at all times in the stream (see, e.g., \cite{BravermanCIW16,BravermanCINWW17,BlasiokDN17,Blasiok20} for examples of strong trackers). 
\begin{framework}
\framelab{frame:robust}
Let $\eps,\delta\in(0,1)$ be given constants and $F$ be a monotonic function with $(\eps,m)$-twist number $\lambda$. 
Suppose there exists a $(\gamma,\eps,\delta)$-difference estimator that uses space $\frac{\gamma}{\eps^2} S_F(m,\delta,\eps)$ and a strong tracker for $F$ that uses space $\frac{1}{\eps^2} S_F(m,\delta,\eps)$, where $S_F$ is a monotonic function in $m$, $\frac{1}{\delta}$, and $\frac{1}{\eps}$. 
Then there exists an adversarially robust streaming algorithm that outputs a $(1+\eps)$ approximation to $F$ that succeeds with constant probability and uses $\tO{\frac{\lambda}{\eps}\cdot S_F(n,\delta',\eps)}$ space, where $\delta'=\O{\frac{1}{\poly\left(\lambda,\frac{1}{\eps}\right)}}$. 
(Informal, see \thmref{thm:framework:flip}.)
\end{framework}
The space required for our algorithms is parameterized by the \emph{$(\eps,m)$-twist number} $\lambda$, which captures the number of times a given function on a subsequent substream is at an $\epsilon$ fraction of the given function on the prefix (see \defref{def:twistnumber} for a formal definition).  
Thus the twist number is related to the flip number $\psi$ used in \cite{Ben-EliezerJWY20,HassidimKMMS20}, which captures the number of times a given function on a stream changes by a factor of $(1+\eps)$ (see \defref{def:flipnumber} for a formal definition) -- indeed, for insertion-only streams, these two quantities are equivalent. 
\cite{KaplanMNS21} recently showed a lower bound of a $\sqrt{\lambda}$ space blow-up in insertion-only streams for particular problems; our results are the first to show that this $\sqrt{\lambda}$ overhead can be bypassed for a large class of important streaming problems, such as moment estimation, heavy hitters, and entropy, since $\lambda=\O{\frac{1}{\eps}\log n}$ in these problems. 
\frameref{frame:robust} breaks the lower bounds of \cite{KaplanMNS21} by requiring the existence of a difference estimator, which does not hold for their lower bound instance.

Often, $\lambda$ is bounded by $\O{\frac{1}{\eps}\log n}$ for insertion-only streams of length $m \leq \poly(n)$. 
For a variety of specific problems, we can further optimize our results to altogether avoid the $\log n$ overhead implied by \frameref{frame:robust}:


\begin{theorem}
\thmlab{thm:main:adv:robust}
\frameref{frame:robust} can be further optimized to give an adversarially robust streaming algorithm that outputs a $(1+\eps)$-approximation to:
\begin{enumerate}
\item
The distinct elements problem, $F_0$, on insertion-only streams, using $\tO{\frac{1}{\eps^2}+\frac{1}{\eps}\log n}$ bits of space. 
(See \thmref{thm:robust:opt:F0}.)  
\item
The $F_p$-moment estimation problem for $p\in(0,2]$ on insertion-only streams, using $\tO{\frac{1}{\eps^2}\log n}$ bits of space. 
(See \thmref{thm:robust:opt:F2} and \thmref{thm:robust:opt:Fp:smallp}.) 
\item
The Shannon entropy estimation problem on insertion-only streams, using $\tO{\frac{1}{\eps^2}\log^3 n}$ bits of space. 
(See \thmref{thm:robust:entropy}.)   
\item
The $F_p$-moment estimation problem for integer $p>2$ on insertion-only streams, using $\tO{\frac{1}{\eps^2}n^{1-2/p}}$ bits of space. 
(See \thmref{thm:robust:opt:Fp:largep}.)
\item
The $L_2$-heavy hitters problem on insertion-only streams, using $\tO{\frac{1}{\eps^2}\log n}$ bits of space. 
(See \thmref{thm:robust:opt:HH}.)  
\item
The $F_p$-moment estimation problem on turnstile streams, using $\tO{\frac{\lambda}{\eps}\log^2 n}$ bits of space for $p\in[0,2]$.
(See \thmref{thm:robust:turnstile:Fp}.)
\end{enumerate}
\end{theorem}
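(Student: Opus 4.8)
\medskip
\noindent\textbf{Proof sketch (proposal).}
The plan is to instantiate \frameref{frame:robust} separately for each of the six problems and then to shave the residual $\log n$ overhead by a problem-aware implementation of the framework. For insertion-only streams whose coordinates stay bounded by $\poly(n)$, the $(\eps,m)$-twist number of every function considered is $\lambda=\O{\frac{1}{\eps}\log n}$, and $\delta'=\O{1/\poly(\log n/\eps)}$ gives $\log\frac{1}{\delta'}=\O{\log\log n+\log\frac{1}{\eps}}$, so \frameref{frame:robust} immediately yields $\tO{\frac{\log n}{\eps^2}\cdot S_F(n,\delta',\eps)}$, where $S_F$ is the ``$\gamma$-free'' cost of the difference estimator from \thmref{thm:diff:est}: $S_F=\tO{1}$ for $F_0$ (up to an additive $\log n$), $S_F=\tO{\log n}$ for $F_2$ and for $L_2$-heavy hitters, $S_F=\tO{\log n}$ for $F_p$ with $p\in(0,2)$ (with an extra $(\log\log n)^2$), and $S_F=\tO{n^{1-2/p}\log^3 n}$ for integer $p>2$. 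We pair each difference estimator with a matching strong tracker: Li's geometric-mean estimator, shown to strong-track in \thmref{thm:smallp:strong:track}, for $p\in(0,2)$; the $F_2$ strong trackers of~\cite{BlasiokDN17,Blasiok20}; the heavy-hitter strong tracker \bptree~\cite{BravermanCINWW17}; the $n^{1-2/p}$-space algorithms of~\cite{g15,gw18} for integer $p>2$; and the standard $\O{\log n}$-scale subsampling sketch for $F_0$. Shannon/empirical entropy is handled through the usual reduction to $\O{\log n}$ instances of $F_p$-moment estimation at exponents $p=1\pm\Theta\!\left(\tfrac{\eps}{\log n}\right)$, each run through the $F_p$ pipeline.

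Substituting these bounds into \frameref{frame:robust} verbatim costs one extra $\log n$ (two extra for entropy, from the $\O{\log n}$ exponents $p$), so the content of the theorem is removing it. We do so by implementing the framework through the explicit level hierarchy behind \frameref{frame:stitch}: relative to a frozen prefix $u$ carrying a $(1+\eps/2)$-approximation to $F(u)$, maintain levels $i=1,\dots,\O{\log\frac{1}{\eps}}$, where level $i$ runs $(\gamma_i,\eps_i,\delta')$-difference estimators with $\gamma_i\le 2^{-i}$ and $\eps_i=\Theta\!\left(\eps/\log\frac{1}{\eps}\right)$, so that $\prod_i(1+\gamma_i)\le 2$ and $\sum_i\eps_i\le\eps/4$ as \frameref{frame:stitch} requires. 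Three observations remove the $\log n$: (i) only $\O{\log\frac1\eps}$ levels are needed, since at scale $\gamma=\Theta(\eps)$ the difference estimator already costs only $\tO{1/\eps}$ (resp.\ $\tO{n^{1-2/p}/\eps}$) space, so finer levels are unnecessary; (ii) at level $i$ at most $N_i=\O{2^i}$ difference estimators are active at once (and are retired, not accumulated, as the stream advances and as $u$ is re-frozen at each doubling of $F$), so level $i$ contributes $N_i\cdot\frac{\gamma_i}{\eps^2}S_F=\O{\frac{1}{\eps^2}S_F}$ and the whole hierarchy contributes $\O{\frac{\log\frac1\eps}{\eps^2}S_F}$; and (iii) the $\Theta(\log n)$-bit objects shared by the copies --- hash/sampling descriptions and the seed of the half-space-fooling pseudorandom generator of~\cite{GopalanKM18,JayaramW18} used to derandomize --- are stored once. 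For $F_0$ the $\frac{\gamma}{\eps^2}$ part of the estimator telescopes to $\tO{1/\eps^2}$ while the $\O{1/\eps}$ active estimators each need their own $\log n$-bit sampling hash, which isolates the additive $\tO{\frac1\eps\log n}$ term; for $F_2$ and heavy hitters one gets $\tO{\frac{\log n}{\eps^2}}$; for integer $p>2$, $\tO{\frac{1}{\eps^2}n^{1-2/p}}$; and for entropy, $\tO{\frac{1}{\eps^2}\log^3 n}$.

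Adversarial robustness is obtained by protecting the internal randomness of each difference estimator and of the single coarse strong tracker against the adaptive adversary, using the differential-privacy-based sketch-switching of~\cite{HassidimKMMS20} applied \emph{independently within each level}. The crucial point is that a level-$i$ difference estimator only needs to stay accurate while the growth it tracks stays below $\gamma_i\approx 2^{-i}$, so over its lifetime the adversary sees only $\O{\log\frac1\eps+\log\log n}$ reported bits about it before it is retired and replaced by a fresh copy; hence the DP composition over the $\O{2^i}$ simultaneously active instances and the $\O{\log n}$ global re-freezings of $u$ at level $i$ loses only a $\polylog$ factor and the $\poly(\log n/\eps)$ shrinkage of the failure probability already budgeted into $\delta'$. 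Item~$6$ (turnstile streams) is the one case where the twist number need not be $\O{\frac1\eps\log n}$, so we keep $\lambda$ explicit and feed the turnstile $p$-stable-sketch-based difference estimator into \frameref{frame:robust} directly; its extra $\log n$ word-length factor together with the $\log n$ from the sketch dimension gives $\tO{\frac{\lambda}{\eps}\log^2 n}$.

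The main obstacle is making the robustness argument coexist with the accuracy bookkeeping: one must show that swapping difference estimators in and out mid-stream never breaks the telescoping identity of \frameref{frame:stitch} --- at every time step the reported value is the coarse estimate of $F(u)$ plus a mutually consistent set of level outputs whose additive errors still sum to at most $\frac\eps4 F(u)$ --- while simultaneously arguing that the adversary's transcript leaks little enough about any individual estimator's seed for the DP composition to go through with only logarithmic loss, and that all of this closes across the $\O{\log\frac1\eps}$ levels and the $\O{\log n}$ re-freezings of $u$ at once. Getting the accuracy budget ($\sum_i\eps_i\le\eps/4$ with $\eps_i$ as large as $\eps/\polylog$) and the privacy budget to close together is where the real work lies; the remaining ingredients are assembled directly from \thmref{thm:diff:est}, \thmref{thm:smallp:strong:track}, and the cited strong trackers.
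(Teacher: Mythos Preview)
Your high-level plan --- instantiate \frameref{frame:robust} with the difference estimators of \thmref{thm:diff:est}, pair each with a strong tracker, then optimize away the residual $\log n$ --- matches the paper. The robustness mechanism, however, is where your proposal departs and acquires a genuine gap.

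The paper does \emph{not} use differential privacy. Robustness comes purely from sketch switching in the style of~\cite{Ben-EliezerJWY20}: the output of \algref{alg:framework} is the rounded value $\left(1+\tfrac{b\eps}{8}\right)Z_a$, which changes only when the counter $b$ increments; at that moment the active difference estimator is \emph{frozen} and a fresh instance --- whose randomness the adversary has never seen --- takes over. Each estimator therefore leaks exactly one value, and correctness against an adaptive adversary follows immediately from correctness on oblivious streams (\lemref{lem:frame:oblivious}) plus this one-use-and-discard discipline (proof of \thmref{thm:framework}). There is no privacy budget to close. Your proposal instead invokes the DP technique of~\cite{HassidimKMMS20} ``within each level'' and asserts the composition loses only polylog factors; but DP composition over the $\Theta(\lambda)=\Theta\!\left(\tfrac{1}{\eps}\log n\right)$ adaptive reveals is precisely what yields the $1/\eps^{2.5}$ dependence in~\cite{HassidimKMMS20} (cf.\ \figref{fig:robust:results}), and you give no argument for why per-level application evades it. The step you flag as ``where the real work lies'' simply does not arise in the paper's approach.

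The $\log n$ removal is also different in character. The paper's observation is that since $F$ doubles each time the top-level counter $a$ increments, only the $\O{\log\tfrac{1}{\eps}}$ smallest active values of $a$ need be retained --- any later-started instance has missed at most an $\O{\eps}$ fraction of $F$ and is still a $(1+\eps)$-approximation when activated (see the ``Optimized'' paragraphs before \thmref{thm:robust:opt:F2}, \thmref{thm:robust:opt:Fp:smallp}, \thmref{thm:robust:opt:F0}). Your observation~(iii), sharing seeds and PRG state across copies, would actually \emph{break} sketch switching, which requires each instance's randomness to be independent and unrevealed. Minor: the entropy reduction (\lemref{lem:entropy:reduction}, from~\cite{HarveyNO08}) uses $k=\O{\log\tfrac{1}{\eps}+\log\log m}$ exponents with accuracy $\eps'=\Theta\!\left(\tfrac{\eps}{k^3\log m}\right)$, so the extra $\log^2 n$ in item~(3) comes from $1/(\eps')^2$, not from the number of instances.
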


\begin{figure*}[!htb]
\begin{center}
\resizebox{\columnwidth}{!}{
{\tabulinesep=1.2mm
\begin{tabu}{|c|c|c|c|}\hline
Problem & \cite{Ben-EliezerJWY20} Space & \cite{HassidimKMMS20} Space & Our Result \\\hline\hline
Distinct Elements & $\tO{\frac{\log n}{\eps^3}}$ & $\tO{\frac{\log^4 n}{\eps^{2.5}}}$ & $\tO{\frac{1}{\eps^2}+\frac{\log n}{\eps}}$ \\\hline
$F_p$ Estimation, $p\in(0,2]$ & $\tO{\frac{\log n}{\eps^3}}$ & $\tO{\frac{\log^4 n}{\eps^{2.5}}}$ & $\tO{\frac{\log n}{\eps^2}}$ \\\hline
Shannon Entropy & $\tO{\frac{\log^6 n}{\eps^5}}$ & $\tO{\frac{\log^4 n}{\eps^{3.5}}}$ & $\tO{\frac{\log^3 n}{\eps^2}}$ \\\hline
$L_2$-Heavy Hitters & $\tO{\frac{\log n}{\eps^3}}$ & $\tO{\frac{\log^4 n}{\eps^{2.5}}}$ & $\tO{\frac{\log n}{\eps^2}}$ \\\hline
$F_p$ Estimation, integer $p>2$ & $\tO{\frac{n^{1-2/p}}{\eps^3}}$ & $\tO{\frac{n^{1-2/p}}{\eps^{2.5}}}$ & $\tO{\frac{n^{1-2/p}}{\eps^2}}$ \\\hline
$F_p$ Estimation, $p\in(0,2]$, dynamic streams & $\tO{\frac{\psi\log^2 n}{\eps^2}}$ & $\tO{\frac{\log^3 n\sqrt{\psi\log n}}{\eps^2}}$ & $\tO{\frac{\lambda\log^2 n}{\eps}}$ \\\hline
\end{tabu}
}}
\end{center}
\vspace{-0.2in}
\caption{Adversarially robust streaming algorithms using optimized version of \frameref{frame:robust}. $\psi$ is the flip number for \cite{Ben-EliezerJWY20,HassidimKMMS20} and $\lambda$ is the twist number for our result.}
\figlab{fig:robust:results}
\end{figure*}
%
The results achieved by \thmref{thm:main:adv:robust} achieve the optimal $\tO{\frac{1}{\eps^2}}$ dependence for these fundamental problems, and match the corresponding best non-adversarial algorithms in the insertion-only streaming model up to polylogarithmic factors.  
In particular, our results show that no loss in $\frac{1}{\eps}$ factors is necessary in the adversarially robust model.

For turnstile streams with twist number $\lambda$, we achieve dependence $\tO{\frac{\lambda\log^2 n}{\eps}}$, which improves upon the result of \cite{Ben-EliezerJWY20} if $\lambda=\psi$ for the flip number $\psi$. 
For $\lambda=\psi=o\left(\frac{\log^3 n}{\eps^2}\right)$, our result also improves upon the algorithm of \cite{HassidimKMMS20}. 
We summarize these results in \figref{fig:robust:results}. 

\subsubsection{Our Results for the Sliding Window Model.} 
We next modify the difference estimators from \thmref{thm:diff:est} to develop a general framework for algorithms in the sliding window model, substantially improving upon the smooth histogram framework, and resolving longstanding questions on moment and entropy estimation algorithms in this model. 
Existing moment and entropy estimation sliding window algorithms crucially rely on the concept of smoothness, discussed above, which quantifies the ability of a suffix of a stream to diverge from the entire stream under worst-case updates to the stream; see \secref{sec:sliding} for a formal definition. 
A framework of \cite{BravermanO07} converts streaming algorithms into sliding window algorithms with an overhead in terms of the smoothness of the function to be approximated, provided that the function is monotonic, polynomially bounded, and smooth. 
We first give a framework showing that the space complexity of these functions need not depend on the smoothness parameter. 

Our framework again makes use of difference estimators, but in this case they are what we call 
{\it suffix pivoted difference estimators}. These difference estimators handle streams in which the ``larger'' 
frequency vector arrives after the ``smaller'' frequency vector, 
whereas the opposite is true for the difference estimators required for the adversarially robust streaming model. 
This is because in the sliding window model, we use difference estimators to 
``subtract off'' terms that have expired, whereas in the adversarially
robust streaming model we use difference estimators to ``add in'' terms that have recently appeared. 
Consequently, we must now develop suffix pivoted difference estimators for a wide range of problems, though fortunately
they turn out to be related to the difference estimators we have already developed. A more formal
description is given below, but the general theorem for our framework is given here:


\begin{framework}
\framelab{frame:sw}
Let $\eps,\delta\in(0,1)$ be constants and $F$ be a monotonic and polynomially bounded function that is $(\eps,\eps^q)$-smooth for some constant $q\ge 0$. 
Suppose there exists a $(\gamma,\eps,\delta)$-suffix pivoted difference estimator that uses space $\frac{\gamma}{\eps^2} S_F(m,\delta,\eps)$ and a streaming algorithm for $F$ that uses space $\frac{1}{\eps^2} S_F(m,\delta,\eps)$, where $S_F$ is a monotonic function in $m$, $\frac{1}{\delta}$, and $\frac{1}{\eps}$. 
Then there exists a sliding window algorithm that outputs a $(1+\eps)$ approximation to $F$ that succeeds with constant probability and uses $\frac{1}{\eps^2}\cdot S_F(m,\delta',\eps)\cdot\poly\left(\log m,\log\frac{1}{\eps}\right)$ space, where $\delta'=\O{\frac{1}{\poly(m)}}$. 
(Informal, see \thmref{thm:sw:framework}.)
\end{framework}

We can also optimize \frameref{frame:sw} to improve logarithmic factors, and obtain the following results for important streaming problems:
\begin{theorem}
\thmlab{thm:main:sw}
Let $\eps,\delta>0$ be given. 
Then \frameref{frame:sw} can be optimized to obtain sliding window algorithms that output a $(1+\eps)$-approximation to:
\begin{enumerate}
\item
The $F_p$-moment estimation problem for $p\in(0,2]$, using $\tO{\frac{1}{\eps^2}\log^3 n}$ bits of space. 
(See \thmref{thm:sliding:main}.)   
\item
The $F_p$-moment estimation problem for integers $p>2$, using $\tO{\frac{1}{\eps^2}\,n^{1-2/p}}$ bits of space. 
(See \thmref{thm:sliding:largep}.)  
\item
The Shannon entropy estimation problem, using $\tO{\frac{1}{\eps^2}\log^5 n}$ bits of space. 
(See \thmref{thm:sw:entropy}.)   
\end{enumerate}
\end{theorem}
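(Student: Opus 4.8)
The plan is to instantiate \frameref{frame:sw} separately for each of the three problems, and then carry out problem-specific optimizations to remove spurious logarithmic factors. For a target $F$ (one of $F_p$ with $p\in(0,2]$, $F_p$ with integer $p>2$, or empirical entropy), \frameref{frame:sw} asks for three ingredients tied together by a common monotone function $S_F(m,\delta,\eps)$: (i) a proof that $F$ is $(\eps,\eps^q)$-smooth for a constant $q$ (depending only on $p$); (ii) a one-pass streaming algorithm for $F$, with the strong-tracking guarantee, using $\frac{1}{\eps^2}S_F$ space; and (iii) a \emph{suffix pivoted} $(\gamma,\eps,\delta)$-difference estimator for $F$ using $\frac{\gamma}{\eps^2}S_F$ space. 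Ingredient (i) is classical (Braverman--Ostrovsky): $F_p$ is $(\eps,c_p\eps^p)$-smooth for $p\ge 1$, $(\eps,\Theta(\eps))$-smooth for $0<p<1$, and a comparable bound with a constant exponent holds for entropy, so $q$ is always constant. Ingredient (ii) is the standard non-robust bound together with a strong tracker: $S_F=\tO{\log n}$ for $F_p$, $p\in(0,2]$ (invoking \thmref{thm:smallp:strong:track} for the $p\in(0,2)$ regime), $S_F=\tO{n^{1-2/p}}$ for integer $p>2$, and $S_F=\tO{\log^{c}n}$ for entropy.

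The crux is ingredient (iii). The difference estimators of \thmref{thm:diff:est} are ``prefix pivoted'': they estimate $F(u+v)-F(u)$ where $u$ is the long vector already accumulated and $v$ is a short vector appended \emph{afterwards}. In the sliding window model the active window is a suffix of the stream, so instead we must estimate the $F$-mass carried by an expiring prefix: taking $u$ to be the active suffix (the long vector, which in stream order arrives \emph{after} the prefix) and $v$ the expiring prefix (the short vector, arriving first), we again want an additive-$\eps F(u)$ estimate of $F(u+v)-F(u)$ under the promises $F(u+v)-F(u)\le\gamma F(u)$ and $F(v)\le\gamma F(u)$. The plan is to show that the sketches underlying \thmref{thm:diff:est} adapt to this reversed order: for $F_2$ and $F_p$, $p\in(0,2)$, the relevant $p$-stable and AMS-type sketches (analyzed via Li's geometric mean estimator and the strong-tracking result just cited) are linear, so the difference is a sketch of $u+v$ minus a sketch of $u$, and the small-$\gamma$ promise allows a proportionally smaller sketch; the new work is the bookkeeping so that an estimator ``born'' at a particular stream time correctly tracks the mass accrued since then. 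For integer $p>2$ one tracks the $L_2$-heavy hitters of the active suffix and their contributions, recomputing differences through the $F_2$ machinery; for entropy one composes the $F_p$-difference estimators at a few exponents near $p=1$ and interpolates. In every case the key invariant --- space scaling like $\frac{\gamma^C}{\eps^2}S_F$ with $C\ge 1$ (e.g.\ $C=1$ for $F_0,F_2$ and $C=2/p$ for $p\in(1,2)$) --- must be preserved, since this is exactly what makes the telescoping scheme cost only $\frac{1}{\eps^2}$ overall. Derandomization (the same sketch randomness is reused at every window position) follows the generalized half-space pseudorandom generator argument used in the robust framework.

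Given (i)--(iii), the sliding window algorithm runs the data structure of \frameref{frame:sw}: maintain $\O{\log m}$ checkpoints at geometrically spaced $F$-scales (a smooth-histogram-style skeleton), a streaming estimate of $F$ from the oldest relevant checkpoint to the current time, and, for each ratio $\gamma_i=2^{-i}$, at most $N_i\le 2^i$ suffix pivoted $(\gamma_i,\eps_i,\cdot)$-difference estimators bridging consecutive checkpoints, chosen so that $\prod_i(1+\gamma_i)\le 2$ and $\sum_i\eps_i\le\frac{\eps}{4}$. When the window boundary crosses the stream position of a checkpoint, smoothness certifies that the estimate anchored at the nearest surviving checkpoint is within $(1\pm\eps)$ of $F$ of the true window, and that estimate is exactly the telescoping sum of \frameref{frame:stitch}. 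Summing per-level costs via the space theorem stated after \frameref{frame:stitch} gives $\frac{1}{\eps^2}S_F\cdot\poly(\log m,\log\frac1\eps)$ space, which is the informal \frameref{frame:sw}. The final log accounting then follows from the explicit $S_F$: for $F_p$, $p\in(0,2]$, one $\log n$ from $S_F$, one from the $\O{\log n}$ checkpoints, and one from the $\O{\log n}$ simultaneously live difference estimators (all $\log\log n$ and $\log\frac1\eps$ absorbed into $\tO{\cdot}$) give $\tO{\frac{\log^3 n}{\eps^2}}$; the integer $p>2$ case carries the extra $n^{1-2/p}$ factor, giving $\tO{\frac{n^{1-2/p}}{\eps^2}}$; and entropy picks up two further $\log n$ factors from the accuracy amplification needed for the interpolation reduction to $F_p$, giving $\tO{\frac{\log^5 n}{\eps^2}}$.

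The main obstacle is step (iii): engineering suffix pivoted difference estimators whose space is genuinely $\frac{\gamma^C}{\eps^2}S_F$ and not $\frac{1}{\eps^2}S_F$ --- the naive route of maintaining two independent $(1+\eps)$-approximations to $F(u+v)$ and $F(u)$ throws away the small-difference promise and re-introduces the extra $\frac1\eps$ factor, as observed after \frameref{frame:stitch}. Making the reversed time order compatible with the tracking guarantees, and keeping the construction oblivious-to-adaptive under a pseudorandom generator while the window slides, are the delicate points; the smooth-histogram skeleton and the level-by-level space accounting are comparatively routine once the difference estimators are in hand.
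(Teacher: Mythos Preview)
Your plan tracks the paper's architecture closely: instantiate the sliding-window framework with smoothness bounds, strong trackers, and suffix-pivoted difference estimators, then account level by level. The smoothness and tracker ingredients are indeed classical, and your final log-factor accounting matches the paper's.

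However, there is a genuine gap in step (iii). You treat the suffix-pivoted difference estimator as essentially the prefix estimator with reversed roles, needing only ``bookkeeping'' for the time reversal, and you state the promise as ``$F(u+v)-F(u)\le\gamma F(u)$ and $F(v)\le\gamma F(u)$'' holding at query time. The paper identifies a subtler obstruction (see \figref{fig:sw:bad} and the surrounding discussion): when the difference estimator for a block $v$ is created, the current suffix $u_1$ satisfies $F(v+u_1)-F(u_1)\le\gamma F(u_1)$, and space is allocated proportionally to $\gamma/\eps^2$. But as the stream continues the suffix grows to $u_2\succeq u_1$, and the new ratio $(F(v+u_2)-F(u_2))/F(u_2)$ can be \emph{much larger} than $\gamma$; for $F_2$ it can jump from $\Theta(\eps^2)$ to $\Theta(\eps)$. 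So the promise you wrote need not hold when the estimate is actually used. The fix is the stronger guarantee in \defref{def:diff:est:suf}: the estimator must give additive $\eps\cdot F(v+w_t)$ error for \emph{all} future $w_t$, given only that the ratio was $\le\gamma$ at \emph{some} earlier time. What makes this achievable is that the variance of the underlying estimators (the AMS inner-product sketch, Li's geometric mean) is controlled by $F_p(v)$ --- the small prefix vector, which is frozen --- and not by the difference $F(v+u)-F(u)$, which can grow. Your proposal does not isolate this invariant, and the ``delicate points'' you name (time reversal, derandomization) are not where the difficulty lies.

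For integer $p>2$ there is a second gap. Because $F_p$ is only $(\eps,\eps^p/p)$-smooth, the framework needs $\gamma$ as small as $\Theta(\eps^p)$, so there are $\Theta(1/\eps^p)$ difference estimators at the finest level. If each ran its own heavy-hitter subroutine (as in the robust case, \lemref{lem:diff:est:Fp:largep}), the total space would not close. The paper instead runs a \emph{single} global heavy-hitter algorithm in $\tO{\eps^{-2}n^{1-2/p}}$ space that simultaneously identifies heavy coordinates across all blocks; after these are removed, a level-set argument bounds the variance of each per-block $L_2$-sampling estimator by $\tO{\gamma\eps^{-2}n^{1-2/p}}\cdot(F_p(v))^2$ even when $\gamma=\eps^p$, so the $1/\eps^p$ count multiplies only a lower-order term. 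Your one-line description (``tracks the $L_2$-heavy hitters of the active suffix \ldots recomputing differences through the $F_2$ machinery'') does not capture this sharing, and without it the $\tO{\eps^{-2}n^{1-2/p}}$ bound does not follow.
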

Thus, our results show that no loss in $\frac{1}{\eps}$ factors is necessary in the sliding window model, bypassing previous
algorithms that were limited by the smooth histogram framework.  
The previous framework of \cite{BravermanO07} has an $\tO{\frac{\log^3 n}{\eps^3}}$ space dependence for $p\in(0,1]$, a $\tO{\frac{\log^3 n}{\eps^{2+p}}}$ space dependence for $p\in(1,2]$, an $\tO{\frac{n^{1-2/p}}{\eps^{2+p}}}$ space dependence for $p>2$, and an $\tO{\frac{\log^5 n}{\eps^4}}$ space dependence for entropy estimation using known techniques~\cite{HarveyNO08}. 
We note that there are specialized algorithms for the distinct elements and $L_2$-heavy hitters prolems in the sliding window model that also achieve the optimal dependence on the approximation factor \cite{BravermanGLWZ18}, so we do not state our results for those problems in this model. We summarize our sliding window model results in \figref{fig:sliding:results}. 

\begin{figure*}[!htb]
\begin{center}
{\tabulinesep=1.2mm
\begin{tabu}{|c|c|c|}\hline
Problem & \cite{BravermanO07} Space & Our Result \\\hline\hline
$L_p$ Estimation, $p\in(0,1)$ & $\tO{\frac{\log^3 n}{\eps^3}}$ & $\tO{\frac{\log^3 n}{\eps^{2}}}$ \\\hline
$L_p$ Estimation, $p\in(1,2]$ & $\tO{\frac{\log^3 n}{\eps^{2+p}}}$ & $\tO{\frac{\log^3 n}{\eps^{2}}}$ \\\hline
$L_p$ Estimation, integer $p>2$ & $\tO{\frac{n^{1-2/p}}{\eps^{2+p}}}$ & $\tO{\frac{n^{1-2/p}}{\eps^{2}}}$ \\\hline
Entropy Estimation & $\tO{\frac{\log^5 n}{\eps^4}}$ & $\tO{\frac{\log^5 n}{\eps^{2}}}$ \\\hline
\end{tabu}
}
\end{center}
\vspace{-0.2in}
\caption{Sliding window algorithms using optimized version of \frameref{frame:sw}.}
\figlab{fig:sliding:results}
\end{figure*}

\subsection{Overview of Our Techniques}
Given a function $F$ and a frequency vector $v$ implicitly defined through a data stream, we first suppose there exists a $(\gamma,\eps,\delta)$-difference estimator $\calB$ for $F$ and a streaming algorithm $\calA$ that gives a $(1+\eps)$-approximation to $F(u)$.  
For the purposes of the following discussion, and to build intuition, it suffices to assume that the space dependence on $\eps$ for $\calB$ and $\calA$ is $\tO{\frac{1}{\eps^2}}$; we shall thoroughly describe our difference estimators for specific functions $F$ below.  

\paragraph{Stitching together sketches.} 
We first describe our new sketch stitching paradigm that breaks down the stream into contiguous blocks of updates, which can then be combined to approximate $F(v)$ as in the proof of \frameref{frame:stitch}. 
Observe that $F(v)=F(u)+(F(v)-F(u))$ for any frequency vector $u$. 
Similarly, $F(v)=F(u_1)+\sum_{k=1}^{\beta} (F(u_{k+1})-F(u_k))$ for any frequency vectors $u_1,\ldots,u_{\beta}$, provided that $v=u_{\beta+1}$. 
Hence, if we had $(1+\eps)$-approximations to $F(u_1)$ and each of the differences $F(u_{k+1})-F(u_k)$, then we could add these estimates to obtain a $(1+\eps)$-approximation to $F(v)$. 
In particular, if we view $u_1,\ldots,u_{\beta}$ as the frequency vectors induced by the prefixes of the stream of length $t_1<\ldots<t_{\beta}$, respectively, then we can view the approximation to $F(u_1)$ and each $F(u_{k+1})-F(u_k)$ as a sketch of how much a consecutive block of updates in the stream contributes to $F$. 
Here, recall that a prefix and suffix of length $t$ of a stream are respectively defined to be the first and last $t$ updates of the stream. 
The difference estimator $\calB$ will be used to sketch {\it each difference} $F(u_{k+1})-F(u_k)$ and $\calA$ will be used to sketch $F(u_1)$. 
We can then ``stitch'' the value of $F$ on the stream by adding the estimates of each sketch, as in the proof of \frameref{frame:stitch}, where a much weaker additive error suffices for approximating each of these differences. 
Observe that the frequency vectors $u_1,\ldots,u_{\beta}$ are unconstrained in this setting; we now describe a natural choice for these frequency vectors based on their contributions to the value of $F$ on the stream. 

\paragraph{Granularity change.} 
Suppose $v \succeq u$, that is, for all $j$, $v_j \geq u_j \geq 0$. Let $i$ be the smallest positive integer such that $F(v)\le 2^i$, and let $t$ be the first time for which $F(u)\ge 2^{i-1}$ for the frequency vector $u$ induced by the prefix of the stream of length $t$. 
Since $F(v)-F(u)\le\frac{1}{2}\cdot F(v)$, the sum of a $\left(1+\frac{\eps}{2}\right)$-approximation to $F(u)$ and a $(1+\eps)$-approximation to $F(v)-F(u)$ is a $(1+\eps)$-approximation to $F(v)$. 

More generally, suppose we break down the stream into the earliest times $t_1,t_2,\ldots,t_\beta$ for which the corresponding frequency vectors $u_1,u_2,\ldots,u_\beta$ induced by the prefix of the stream of lengths $t_1,t_2,\ldots,t_\beta$, respectively, satisfy $F(u_1+\ldots+u_i)\ge\left(\frac{1}{2}+\ldots+\frac{1}{2^i}\right)\cdot F(v)$ for each $i\in[\beta]$. 
Then if we have a $\left(1+\frac{2^k\eps}{\beta}\right)$-approximation to $F(u_{k+1})-F(u_k)$ for each $k\in[\beta]$, where we use the convention that $v=u_{\beta+1}$, then their sum is a $(1+\eps)$-approximation to $F(v)$. 
Intuitively, $F(u_{k+1})-F(u_k)$ is at most $\frac{1}{2^k}\cdot F(v)$, so we only require a $\left(1+\frac{2^k\eps}{\beta}\right)$-approximation to $F(u_{k+1})-F(u_k)$. 
For the purposes of this discussion, we informally call the difference estimator with effective accuracy $\left(1+\frac{2^k\eps}{\beta}\right)$ a \emph{level $k$ estimator}. 
We can thus maintain sketches of different qualities for each of these blocks of the stream and stitch together the outputs to estimate $F(v)$. 

Finally, note that for $\beta$ roughly equal to $\log\frac{1}{\eps}$, we have $F(u_{\beta+1})-F(u_\beta)=\O{\eps}\cdot F(v)$. 
Thus even if we always output zero for the level $\beta$ estimator, we can incur at most an additive error of $\O{\eps}\cdot F(v)$. 
Hence, it suffices to set $\beta$ to be roughly $\log\frac{1}{\eps}$. 
We stress this is not a hierarchical blocking strategy to organize existing sketches - the sketches in the hierarchy are {\it not sketches of the function $F$ applied to a vector}. That is, we are not sketching the difference $v-u$ and estimating $F(v-u)$, but rather we are sketching $u$ and $v$ to estimate $F(v)-F(u)$; sketching such differences in the required amount of space was unknown and already makes filling in a sketch at a single node in the hierarchy challenging.    

We now describe applications to the adversarially robust streaming and sliding window models. 

\begin{figure*}[!htb]
\centering
\begin{tikzpicture}[scale=1.25]
\draw [->] (0,-1) -- (9.9,-1);
\node at (-1,-1){Stream:};
\draw [decorate,decoration={brace}] (9.9,-1.1) -- (0,-1.1);
\node at (5,-1.3){\tiny{$F(v)$}};

\filldraw[shading=radial, inner color = white, outer color = green!50!, opacity=1] (0,0.25) rectangle+(6,0.25);
\draw (0,0.25) rectangle+(6,0.25);
\draw [decorate,decoration={brace}] (0,0.6) -- (6,0.6);
\node at (3,1.1){\tiny{Streaming algorithm,}};
\node at (3,0.75){\tiny{accuracy $\eps$}};
\draw [decorate,decoration={brace}] (6,-0.1) -- (0,-0.1);
\node at (3,-0.5){\tiny{$F(u_1)$}};

\filldraw[shading=radial, inner color = white, outer color = red!50!, opacity=1] (6,0.25) rectangle+(2,0.25);
\draw (6,0.25) rectangle+(2,0.25);
\draw [decorate,decoration={brace}] (6,0.6) -- (8,0.6);
\node at (7,1.1){\tiny{Difference estimator,}};
\node at (7,0.75){\tiny{accuracy $2\eps$}};
\draw [decorate,decoration={brace}] (8,-0.1) -- (6,-0.1);
\node at (7,-0.5){\tiny{$F(u_2)-F(u_1)$}};

\filldraw[shading=radial, inner color = white, outer color = red!50!, opacity=1] (8,0.25) rectangle+(1,0.25);
\draw (8,0.25) rectangle+(1,0.25);
\draw [decorate,decoration={brace}] (8,0.6) -- (9,0.6);
\node at (8.5,1.9){\tiny{Diff. est.,}};
\node at (8.5,1.55){\tiny{acc. $4\eps$}};
\draw (8.5,1.3) -- (8.5,0.8);
\draw [decorate,decoration={brace}] (9,-0.1) -- (8,-0.1);
\node at (8.5,-0.85){\tiny{$F(u_3)-F(u_2)$}};
\draw (8.5,-0.7) -- (8.5,-0.25);

\filldraw[shading=radial, inner color = white, outer color = red!50!, opacity=1] (9,0.25) rectangle+(0.5,0.25);
\draw (9,0.25) rectangle+(0.5,0.25);
\draw [decorate,decoration={brace}] (9,0.6) -- (9.5,0.6);
\node at (9.25,1.1){\tiny{Diff. est.,}};
\node at (9.25,0.75){\tiny{acc. $8\eps$}};
\draw [decorate,decoration={brace}] (9.5,-0.1) -- (9,-0.1);
\node at (9.25,-0.5){\tiny{$F(u_4)-F(u_3)$}};

\filldraw[shading=radial, inner color = white, outer color = red!50!, opacity=1] (9.5,0.25) rectangle+(0.4,0.25);
\draw (9.5,0.25) rectangle+(0.4,0.25);
\draw [decorate,decoration={brace}] (9.5,0.6) -- (9.9,0.6);
\node at (9.7,1.9){\tiny{Diff. est.,}};
\node at (9.7,1.55){\tiny{acc. $16\eps$}};
\draw (9.7,1.3) -- (9.7,0.8);
\end{tikzpicture}
\caption{Stitching together sketches and changing granularities. 
The difference estimators $\calB$ are in red and the streaming algorithm $\calA$ is in green. 
Observe that if $F(u_i)-F(u_{i-1})\approx\frac{1}{2^i}\cdot F(u_i)$, then the sum of the estimates is a $(1+\O{\eps})$-approximation to $F(v)$. 
}
\figlab{fig:stitch}
\end{figure*}

\paragraph{Robust framework challenges.} 
The challenge in implementing the described approach in the adversarially robust streaming model is that once the output of a subroutine is revealed to the adversary, the input can now depend on previous inputs as well as the internal randomness of any subroutine possibly learned by the adversary. 
To circumvent this issue, the framework of \cite{Ben-EliezerJWY20} only reveals a new output when the internal estimate formed by various subroutines has increased by a power of roughly $(1+\eps)$. 
Thus, \cite{Ben-EliezerJWY20} requires maintaining $\O{\frac{1}{\eps}\log n}$ separate algorithms and repeatedly switching to new, previously unused sketches, which is a technique called sketch switching. 
Each sketch has a $\frac{1}{\eps^2}$ space dependence, for an overall $\frac{1}{\eps^3}$ space dependence, which is prohibitive. 
Another technique, called computation paths in \cite{Ben-EliezerJWY20}, instead sets the failure probability of a non-adaptive streaming algorithm to be $n^{-\O{(\log n)/\varepsilon}}$, but suffers a similar overall $\frac{1}{\eps^3}$ space dependence.

Adversarial inputs are problematic for several reasons. First, the correctness of a difference estimator can only be analyzed and ensured on a non-adaptively chosen stream. Second, for constant $k$, the level $k$ estimator handles a block of the stream that contributes a constant factor to the value of $F(v)$, and hence, the level $k$ estimator will cause the overall estimate to increase by a factor of $(1+\eps)$ multiple times, thus potentially compromising its internal randomness. We now describe how we overcome these issues and our general framework.

\paragraph{Framework for adversarial robustness.} 
We only estimate $F(u)$ to high accuracy, where $u$ is a frequency vector induced by the prefix of the stream at the first time $F(u)\ge 2^{i-1}$, for an integer $i$, and where the vector $v$ of subsequent stream updates satisfies $F(v)\le 2^i$. 
We observe that before $u$ can grow to $u_1$, where recall $F(u_1) \geq \frac{1}{2} \cdot F(v)$, $u$ must first grow to a frequency vector $w_1$ for which $F(w_1)\approx(1+\eps) F(u)$. 
Instead of maintaining a level $1$ estimator for $w_1$, it suffices to maintain a level $\beta$ estimator for the difference $F(w_1)-F(u)$, where $\beta=\O{\log\frac{1}{\eps}}$ as before. 
That is, since $F(w_1)-F(u)$ is small, it suffices to maintain only a constant factor approximation to the difference. 
Once the stream reaches the time with underlying frequency vector $w_1$, we approximate $F(w_1)$ by stitching together sketches and then discarding our sketch for $F(w_1)-F(u)$. 

By similar reasoning, before $w_1$ can grow to $u_1$, it must first grow to a frequency vector $w_2$ for which $F(w_2)\approx(1+2\eps) F(u)$. 
We maintain a level $\beta-1$ estimator for $F(w_2)-F(u)$, reveal the output once the stream reaches the time that induces $w_2$, and then discard our sketch since its randomness has been compromised. 

It is crucial that we next consider $w_3$ to be the first frequency vector for which $F(w_3)\approx(1+3\eps) F(u)$, rather than say, for which $F(w_3)\approx(1+4\eps) F(u)$. 
Indeed, if we were to use a single difference estimator to track a block of the stream in which $F$ increases from $(1+2\eps)F(u)$ to $(1+4\eps)F(u)$, then we cannot reveal the output of the difference estimator when the stream hits $(1+3\eps)F(u)$ without compromising its randomness. 
However, if we do not reveal the output, then we also might no longer have a $(1+\eps)$-approximation to the value of $F$ when the prefix of the stream reaches value $(1+3\eps)F(u)$. 

Therefore, let $w_3$ be the first frequency vector for which $F(w_3)\approx(1+3\eps) F(u)$. 
Instead of estimating $F(w_3)-F(u)$ directly, we note that $F(w_3)-F(w_2)\approx\eps F(u)$. 
Hence, we estimate $F(w_3)$ using a level $\beta$ estimator for $F(w_3)-F(w_2)$ and the previous level $\beta-1$ estimator for $F(w_2)-F(u)$. 
Note we have not changed any coordinates of $w_2$, so that the level $\beta-1$ estimator remains correct for $F(w_2)-F(u)$ even though its output has been revealed. 

We now see that if $w_k$ is the first frequency vector for which $F(w_k)\approx(1+k\eps) F(u)$, then we use level $j$ estimators corresponding to the nonzero bits in the \emph{binary representation} of $k$. 
Intuitively, the binary representation can be thought of as encoding into the correct path to stitch together sketches from a binary tree on the stream, where the stream is partitioned according to the value of each difference estimator rather than the length of the stream. 
See \figref{fig:tree} for an example. 
It follows that we need roughly $2^{k}$ instances of the level $k$ estimator, for $k\in[\beta]$, where $\beta$ is roughly $\log\frac{1}{\eps}$. 
As a level $k$ estimator requires effective accuracy roughly $\left(1+2^k\eps\right)$, we will not quite achieve a geometric series, but the total space will increase by at most $\polylog\frac{1}{\eps}$ factors. 
Hence, the total dependence on $\eps$ in the space is $\tO{\frac{1}{\eps^2}}$, as desired.

\begin{figure*}[!htb]
\centering
\begin{tikzpicture}[scale=1]
\draw [->] (0,-0.2) -- (7.4,-0.2);
\node at (-1,-0.2){Stream:};
\draw [decorate,decoration={brace}] (7.4,-0.3) -- (0,-0.3);
\node at (3.7,-0.5){\tiny{$v$}};

\filldraw[shading=radial, inner color = white, outer color = green!50!, opacity=1] (0,0.25) rectangle+(4,0.25);

\draw (4,0.25) rectangle+(1,0.25);
\draw (5,0.25) rectangle+(1,0.25);
\filldraw[shading=radial, inner color = white, outer color = red!50!, opacity=1] (6,0.25) rectangle+(1,0.25);
\draw (7,0.25) rectangle+(1,0.25);
\node at (9.5,0.25+0.1){\tiny{accuracy $4\eps$}};

\filldraw[shading=radial, inner color = white, outer color = red!50!, opacity=1] (4,0.6) rectangle+(2,0.25);
\draw (6,0.6) rectangle+(2,0.25);
\node at (9.5,0.6+0.1){\tiny{accuracy $2\eps$}};

\draw (4,0.95) rectangle+(4,0.25);
\node at (9.5,0.95+0.1){\tiny{accuracy $\eps$}};
\end{tikzpicture}
\caption{The outputs of the difference estimators in red are stitched together to form an estimate for $F(v)$. The partitions of the stream are determined by the output of the difference estimators (rather than the length of the stream).}
\figlab{fig:tree}
\end{figure*}

\paragraph{Optimized space.} 
It is possible to optimize the space usage of our framework using standard ideas for streams, used also in \cite{Ben-EliezerJWY20}:
(1) instead of maintaining $\O{\log n}$ instances of $\calA$, it suffices to maintain only $\O{\log\frac{1}{\eps}}$ instances at a given time, since we can drop a prefix of the stream if it only contributes a $\poly(\varepsilon)$ fraction towards the value of $F$  
(2) if $\calB$ and $\calA$ have the \emph{strong-tracking} property, then we can avoid a union bound over $\O{m}$ possible frequency vectors $u$ and $v$ and instead union bound over $\poly\left(\log n,\log\frac{1}{\eps}\right)$ instances of the algorithm. 
To utilize (2), since our framework cannot use existing sketches, we need to show strong tracking holds for our difference estimators. 

\paragraph{Sliding window framework challenges.} 
Recall that only the most recent $W$ updates in a stream of length $m$ form the underlying dataset in the sliding window model. 
Because expirations of old updates are performed implicitly and the algorithm does not know when the stream will end, it is challenging to design algorithms in the sliding window model. 

A natural approach is to adapt the smooth histogram approach discussed above \cite{BravermanO07}. 
The smooth histogram maintains a number of timestamps throughout the stream, along with sketches for the suffixes of the stream starting at each timestamp. 
The timestamps maintain the invariant that at most three suffixes produce values that are within a $(1-\beta)$ factor of each other, since any two of the sketches always output values that are within a $(1-\alpha)$ factor afterwards. 
Unfortunately, the smooth histogram incurs a very large overhead due to the smoothness parameter of the function. 
For example, \cite{BravermanO07} showed that $F_p$ is $(\eps,\O{\eps^p})$-smooth for $p\ge 1$, so $\O{\frac{1}{\eps^p}\log n}$ instances of $F_p$ streaming algorithms must be maintained. 
Each algorithm has space dependency $\frac{1}{\eps^2}$, for a total of $\frac{1}{\eps^{2+p}}$ dependency, which is prohibitive
for large $p$. 

Another approach specifically for the $F_p$-moment estimation problem that one could try is to generalize the streaming algorithm of \cite{IndykW05}, which is based on subsampling the universe at multiple rates and using $(1+\eps)$-approximations to the heavy hitters in each subsampled level. 
The challenge with this approach is that efficient heavy hitter algorithms in the sliding window model~\cite{BravermanGLWZ18} do not give $(1+\eps)$-approximations to the frequency of the heavy hitters and are {\it inherently biased} due to items falling outside of a window, and so it is not clear how to produce $(1+\eps)$-approximations without incurring additional large $\O{\frac{1}{\eps}}$ factor overheads.

\paragraph{Framework for sliding windows.} 
To adapt our techniques to the sliding window model, we first observe that since prefixes of the stream may expire, we need a good approximation to the suffix of the stream rather than to the prefix. 
Thus, we separately run streaming algorithms $\calA$ on various suffixes of the stream similar to the smooth histogram framework. 
However, instead of maintaining separate instances of the streaming algorithm $\calA$ each time the value of $F$ on a suffix increases by a smoothness parameter, we maintain instances each time the value roughly doubles. 
By a standard sandwiching argument, we maintain an instance of $\calA$ starting at some time $t_0\le m-W+1$, whose output is within a factor $2$ of the value of $F$ of the sliding window. 
That is, the value of $F$ induced by the suffix of the stream starting at $t_0$ is at most twice the value of $F$ on the sliding window. 
Since $\calA$ gives a $(1+\eps)$-approximation to the value of $F$ on the suffix starting at $t_0$, in order to obtain a $(1+\eps)$-approximation we then need to remove the additional contribution of the updates between times $t_0$ and $m-W+1$, the latter being the starting time of the sliding window. 
Here we can again partition these elements into separate blocks based on their contribution to the value of $F$.  
We can maintain separate sketches for these blocks, with varying granularities, and stitch these sketches together at the end. 
See \figref{fig:sliding} for intuition. 

\begin{figure*}[!htb]
\centering
\begin{tikzpicture}[scale=1.10] 
\draw [->] (0,-0.25+0.125+0.0625) -- (10,-0.25+0.125+0.0625);
\node at (-0.75,-0.25+0.125+0.0625){Stream:};
\node at (-2,-0.25+0.125+0.0625){\tiny{$F(u)$}};

\draw (4.75,-0.5+0.25+0.0625) rectangle+(5.25,0.25);
\node at (8,-0.375-0.125){\tiny{Sliding window (active elements)}};

\filldraw[shading=radial, inner color = white, outer color = green!50!, opacity=1] (0,0.25) rectangle+(10,0.25);
\draw (0,0.25) rectangle+(10,0.25);
\node at (-0.75,0.25+0.125){\tiny{$F(v)$}};

\filldraw[shading=radial, inner color = white, outer color = green!50!, opacity=1] (5,0.5) rectangle+(5,0.25);
\draw (5,0.5) rectangle+(5,0.25);
\filldraw[shading=radial, inner color = white, outer color = green!50!, opacity=1] (7.5,0.75) rectangle+(2.5,0.25);
\draw (7.5,0.75) rectangle+(2.5,0.25);
\filldraw[shading=radial, inner color = white, outer color = green!50!, opacity=1] (8.75,1) rectangle+(1.25,0.25);
\draw (8.75,1) rectangle+(1.25,0.25);

\node at (3,1.1){\tiny{Streaming algorithms,}};
\node at (3,0.75){\tiny{accuracy $\eps$}};

\node at (-0.75,-1+0.125){\tiny{$F(v)-F(u_1)$}};
\filldraw[shading=radial, inner color = white, outer color = red!50!, opacity=1] (0,-1) rectangle+(2.5,0.25);
\draw (2.5,-1) rectangle+(2.5,0.25);

\node at (-0.85,-1.25+0.125){\tiny{$F(u_1)-F(u_2)$}};
\draw (0,-1.25) rectangle+(1.25,0.25);
\draw (1.25,-1.25) rectangle+(1.25,0.25);
\filldraw[shading=radial, inner color = white, outer color = red!50!, opacity=1] (2.5,-1.25) rectangle+(1.25,0.25);
\draw (3.75,-1.25) rectangle+(1.25,0.25);

\node at (-0.85,-1.5+0.125){\tiny{$F(u_2)-F(u_3)$}};
\draw (0,-1.5) rectangle+(0.625,0.25);
\draw (0.625,-1.5) rectangle+(0.625,0.25);
\draw (0.625*2,-1.5) rectangle+(0.625,0.25);
\draw (0.625*3,-1.5) rectangle+(0.625,0.25);
\draw (0.625*4,-1.5) rectangle+(0.625,0.25);
\draw (0.625*5,-1.5) rectangle+(0.625,0.25);
\filldraw[shading=radial, inner color = white, outer color = red!50!, opacity=1] (0.625*6,-1.5) rectangle+(0.625,0.25);
\draw (0.625*7,-1.5) rectangle+(0.625,0.25);

\node at (8.5,-1.25+0.1){\tiny{Difference estimators,}};
\node at (6.5,-1+0.1){\tiny{Accuracy $2\eps$}};
\node at (6.5,-1.25+0.1){\tiny{Accuracy $4\eps$}};
\node at (6.5,-1.5+0.1){\tiny{Accuracy $8\eps$}};

\filldraw[shading=radial, inner color = white, outer color = blue!50!, opacity=1] (0.625*7,-0.375-0.25) rectangle+(4.75-0.625*7,0.25);
\node at (0.625*7-0.55,-0.375-0.125){Error:};
\node at (0.625*7-2.3,-0.375-0.125){\tiny{$F(u_3)-F(u)$}};
\draw[<-] (0.625*7-1.2,-0.375-0.125) -- (0.625*7-1.4,-0.375-0.125);
\end{tikzpicture}
\caption{
The streaming algorithm (in green) and difference estimators (in white/red) can be used to estimate the $F_p$ value of the sliding window up to a small error caused by the elements in blue by subtracting the red estimates of the difference estimators from the estimate of the smallest suffix that contains the active elements in the sliding window.}
\figlab{fig:sliding}
\end{figure*}

\paragraph{Suffix-pivoted difference estimators.} 
A barrier in the sliding window model is that the difference estimator partitions the prefix of the stream rather than the suffix. 
For example, compare the locations of the difference estimators between \figref{fig:stitch} and \figref{fig:sliding}. 
In both the adversarially robust setting and the sliding window model, the difference estimator must approximate $F(v)-F(u)$ for frequency vectors $v\succeq u$. 
However in the robust setting, $u$ is a fixed frequency vector induced by the prefix of the stream while in the sliding window model, $u$ is a growing frequency vector induced by the suffix of the stream, where we recall that a prefix and suffix of length $t$ are respectively defined to be the first and last $t$ updates of a stream. 
Thus, we define a \emph{suffix-pivoted} difference estimator to handle the case where the vector $u$ can change along with updates in the stream. 
Fortunately, the fixed-prefix difference estimators that we have developed also function as suffix-pivoted difference estimators under this definition (we shall shortly demand a stronger definition). 
We partition the stream to induce frequency vectors $v$ and $u_1$ so that $F(v+u_1)-F(u_1)\approx 2^k\cdot F(v+u_1)$ for a level $k$ suffix-pivoted difference estimator. 
We maintain $\beta$ granularities, where $\beta$ is roughly a logarithmic function of the smoothness parameter that is generally $\O{\log\frac{1}{\eps}}$. 
We thus obtain an additive $\O{\frac{\eps}{\beta}}\cdot F(v+u_1)$ error at each level $i\in[\beta]$, and summing the errors across all $\beta$ granularities, gives $\eps\cdot F(v+u_1)$ total additive error, i.e., $(1+\eps)$ multiplicative approximation to $F(v+u_1)$. 

\paragraph{Issues with dynamic differences.}
There is a subtle but significant issue with na\"{i}vely applying the $(\gamma,\eps,\delta)$-difference estimators in the sliding window model. 
Recall that a level $k$ difference estimator guarantees an additive $\eps\cdot F(v+u_1)$ error to the difference $F(v+u_1)-F(u_1)$ \emph{provided that} $F(v+u_1)-F(u_1)\le\frac{1}{2^k}\cdot F(v+u_1)$. 
In the sliding window model, additional updates to the stream may arrive and induce a frequency vector $u_2$ for the same splitting time with $F(v+u_2)-F(u_2)\gg\frac{1}{2^k}\cdot F(v+u_2)$, which is too large for the accuracy of a level $k$ difference estimator. 

Because of the smoothness parameter of an $(\eps,\eps^q)$-smooth function $F$, we observe that if the difference $F(v+u_1)-F(u_1)$ at some point is roughly $\eps^q\cdot F_2(v+u_1)$, then the difference $F(v+u_2)-F(u_2)\le\eps\cdot F(v+u_2)$ at \emph{any} later point in the stream, regardless of subsequent updates. 
Now the issue is that we maintain $\frac{1}{\eps^q}$ difference estimators corresponding to additive error $\eps^q\cdot F(v+u_1)$, but since each of these differences can eventually be as large as $\eps\cdot F(v+u_2)$, we potentially need higher accuracy for these differences, using space dependence $\frac{1}{\eps}$ for each difference estimator. 
Unfortunately, this results in space dependence $\frac{1}{\eps^{1+q}}$. 

For example, consider the $F_2$ moment function in \figref{fig:sw:bad} and suppose $v=(0,1)$ and $u_1=\left(\frac{1}{\eps},0\right)$ so that $F_2(v+u_1)=\frac{1}{\eps^2}+1$ and $F_2(v+u_1)-F_2(u_1)=1\le\eps^2\cdot F_2(v+u_1)$. 
Hence, we have $\gamma=\eps^2$ and allocate space proportional to $\frac{\gamma}{\eps^2}$ for the difference estimator for $v$. 
We then have a number of updates so that $u_2=\left(\frac{1}{\eps},\frac{1}{\eps}\right)$. 
Then $F_2(v+u_2)=\frac{2}{\eps^2}+\frac{2}{\eps}+1$ and $F_2(v+u_2)-F_2(u_2)=\frac{2}{\eps}+1\ge\frac{\eps}{2}\cdot F_2(v+u_2)$ for sufficiently small constant $\eps>0$. 
Now we have $\gamma'\ge\frac{\eps}{2}$, so we should allocate space proportional to $\frac{\gamma'}{\eps^2}\approx\frac{1}{\eps}$ for the difference estimator for $v$, but we already only allocated space proportional to $\frac{\gamma}{\eps}\approx 1$. 
Therefore, we do not have any guarantee that our difference estimator will be accurate!
Moreover, there can be $\O{\frac{1}{\eps^2}}$ such blocks $v$, so we cannot afford to allocate $\frac{\gamma'}{\eps^2}$ space for each block, which would result in space dependency $\frac{1}{\eps^3}$. 

\begin{figure*}[!htb]
\centering
\begin{tikzpicture}[scale=1]
\filldraw[shading=radial, inner color = white, outer color = green!50!, opacity=1] (0,0.25) rectangle+(2,0.25);
\draw (0,0.25) rectangle+(2,0.25);
\draw [decorate,decoration={brace}] (2-0.1,-0.1) -- (0,-0.1);
\node at (1,-0.5){\tiny{$v$}};
\filldraw[shading=radial, inner color = white, outer color = red!50!, opacity=1] (2,0.25) rectangle+(2,0.25);
\draw (2,0.25) rectangle+(2,0.25);
\draw [decorate,decoration={brace}] (4,-0.1) -- (2+0.1,-0.1);
\node at (3,-0.5){\tiny{$u_1$}};

\filldraw[shading=radial, inner color = white, outer color = green!50!, opacity=1] (5,0.25) rectangle+(2,0.25);
\draw (5,0.25) rectangle+(2,0.25);
\draw [decorate,decoration={brace}] (7-0.1,-0.1) -- (5,-0.1);
\node at (6,-0.5){\tiny{$v$}};
\filldraw[shading=radial, inner color = white, outer color = red!50!, opacity=1] (7,0.25) rectangle+(4,0.25);
\draw (7,0.25) rectangle+(4,0.25);
\draw [decorate,decoration={brace}] (11,-0.1) -- (7+0.1,-0.1);
\node at (9,-0.5){\tiny{$u_2$}};
\end{tikzpicture}
\caption{For $v=(0,1)$, $u_1=\left(\frac{1}{\eps},0\right)$, and $u_2=\left(\frac{1}{\eps},\frac{1}{\eps}\right)$, we have $F_2(v+u_1)-F_2(u_1)=\O{\eps^2}\cdot F_2(v+u_1)$ but $F_2(v+u_2)-F_2(u_2)=\Omega(\eps)\cdot F_2(v+u_2)$, so it is unclear which granularity should be assigned to the difference estimator for $v$.}
\figlab{fig:sw:bad}
\end{figure*}

\paragraph{Stronger suffix-pivoted difference estimators.}
We thus define the suffix-pivoted difference estimators to give an $\eps\cdot F(v+u_2)$ approximation to an $(\eps,\eps^q)$-smooth function $F$, provided that $F(v+u_1)-F(u_1)\le\eps^q\cdot F(v+u_1)$ at \emph{some point} in the stream with $u_1\preceq u_2$, even if $F(v+u_2)-F(u_2)=\eps\cdot F(v+u_2)$. 
We give constructions of suffix-pivoted difference estimators under this stronger definition. 
The main point is that we can now use roughly $2^k$ difference estimators with space dependence $\frac{1}{2^k\eps^2}$ at level $k$ for each $k$ up to roughly $\beta=\log\frac{1}{\eps^q}$, even if the difference eventually becomes a much larger fraction. 
Moreover, the stronger guarantees of the suffix-pivoted difference estimators ensure that the additive error is still $\eps\cdot F(v+u_2)$, even if the difference eventually becomes a much larger fraction of what it was originally. Also, additive error $\eps \cdot F(v + u_2)$ is fine, because we only add up $\O{\log\frac{1}{\eps}}$ difference estimators, and can rescale $\eps$ by a $\log\frac{1}{\eps}$ factor. 

This stronger definition may sound unachievable if $k=\log\frac{1}{\eps^q}$ since the space dependence is now $\frac{\eps^q}{\eps^2}$, which is even smaller than $1$. But for an important
class of functions we consider, namely $F_p$-moment estimation, where $q = p$, this
$\frac{\eps^p}{\eps^2}$ multiplies a much larger term, namely, $n^{1-2/p}$, and so the space
is still much larger than $1$. Interestingly, we are able to beat the usual $n^{1-2/p}$ space 
lower bound for this problem by additionally removing global heavy hitters from all difference
estimators at a given level, which is a key idea. 

In more detail, we give a suffix-pivoted difference estimator for $F_p$-moment estimation with $p>2$ where the space dependency is $\tO{\frac{\gamma}{\eps^2} n^{1-2/p}}$, which can still be much larger than $1$ if $\gamma=\eps^p$ and $n$ is large. 
We still incur a $\frac{1}{\eps^p}$ space dependence due to running $\frac{1}{\gamma}=\frac{1}{\eps^p}$ different difference estimators at level $k=\log\frac{1}{\eps^p}$, but this multiplies  $\tO{\frac{\gamma}{\eps^2} n^{1-2/p}}$ which then becomes $\tO{\frac{1}{\eps^2} n^{1-2/p}}$, as desired. 
As mentioned, we achieve space $\tO{\frac{\eps^p}{\eps^2} n^{1-2/p}}$ for each difference estimator at this level by running a single algorithm in parallel that uses $\tO{\frac{1}{\eps^2} n^{1-2/p}+\frac{1}{\eps^p}}$ space to remove $\O{\frac{1}{\eps^p}}$ global $L_p$ heavy hitters in the sliding window, i.e., the elements $i\in[n]$ with frequency $f_i^p\ge\eps^p\cdot F_p$, and can be seen as a variance reduction technique. 
Thus after the global heavy hitters are removed, the usual $\frac{1}{\eps^2}\cdot n^{1-2/p}$ lower bounds no longer apply. 
%

Thus, we suffer an additive $\O{\frac{1}{\eps^p}}$ term for finding heavy hitters, which is necessary of any algorithm, but it does not multiply the dominant $n^{1-2/p}$ term in our space bound. 
For more details but still a high level overview, see the construction details later in \secref{sec:overview:de}. 
We thus obtain an additive $\O{\frac{\eps}{\beta}}\cdot F(v)$ error at each level, and summing the errors across all $\beta$ levels gives $\eps\cdot F(v)$ total additive error, i.e., $(1+\eps)$ multiplicative approximation to $F(v)$. 


\subsection{Difference Estimators}
\seclab{sec:overview:de}
We now outline the ideas involved in developing our new difference estimators. 
We describe our difference estimators for $F_p$-moment estimation, for both the adversarially robust and sliding window models; such estimators have applications to other streaming problems and can be used for estimating entropy and for finding the $L_2$-heavy hitters. 
The high-level intuition for why difference estimators should be possible is that we leverage the fact that the difference is small into an estimator with significantly smaller variance. 
Thus we can take the mean of a smaller number of independent estimators. 

\paragraph{Difference estimators for $F_2$ moment estimation.} 
We approximate $F_2(v)-F_2(u)$ by noting that $F_2(v)=F_2((v-u)+u)=F_2(v-u)+2\ip{v-u}{u}+F_2(u)$. 
Thus if $F_2(v)-F_2(u)\le\gamma F_2(u)$, then a $\left(1+\frac{\eps}{\gamma}\right)$-approximation to $F_2(v)-F_2(u)$ translates to additive error $\eps\cdot F_2(u)$. 
We then show that if streaming algorithm $\calA$ gives a $(1+\eps)$ approximation to $F_2(u)$ and a $\left(1+\frac{\eps}{\sqrt{\gamma}}\right)$ approximation to $F_2(v-u)$, then it gives an approximation to $\ip{v-u}{u}$ with additive error at most $\frac{\eps}{\sqrt{\gamma}}\cdot\|v-u\|_2\|u\|_2\le\eps\cdot F_2(u)$, for $F_2(v-u)\le\gamma\cdot F_2(u)$, where recall $F_2(u) = \|u\|_2^2$.  
Hence $\calB$ can be directly built from the sketches maintained by $\calA$ in the case of $p=2$, with a space dependence of $\frac{\gamma}{\eps^2}$.  

To adjust our construction to handle suffix-pivoted difference estimators, the crucial observation is that if $F_2(v+u_1)-F_2(u_1)=\eps^2\cdot F_2(v+u_1)$, then for any vector $u_2\succeq u_1$, we still have $F_2(v)\le\eps^2\cdot F_2(v+u_2)$. 
Thus if we decompose $F_2(v+u_2)-F_2(u_2)=F_2(v)+2\langle v,u_2\rangle$, then we can still obtain an additive $\eps\cdot F_2(v+u_2)$ estimation to the difference by using a constant number of rows in the sketch to estimate $\langle v,u_2\rangle$ because $F_2(v)$ is so small. 
Intuitively the main idea is that although the difference has increased from $F_2(v+u_1)-F_2(u_1)$ to $F_2(v+u_2)-F_2(u_2)$, the difference between these differences is just the dot product $\ip{v}{u_2-u_1}$ and we can still use the same space to efficiently estimate the dot product because $\|v\|_2$ remains the same. 

\paragraph{Difference estimators for $F_0$ estimation.} 
We remark that the $F_0$ estimator of \cite{Bar-YossefJKST02} similarly maintains sketches that can be used to build a difference estimator for $F_0$ with a space dependence of $\frac{\gamma}{\eps^2}$. 
The algorithm first subsamples each item in the universe with probability $\frac{1}{2^k}$ at a level $k$ so that the expected number of distinct items appearing throughout the stream at level $k$ is $\frac{1}{2^k}\cdot F_0$. 
The algorithm then determines a level $k$ that contains $\Theta\left(\frac{1}{\eps^2}\right)$ distinct items throughout the stream, which the algorithm then rescales by a factor of $2^k$ to obtain an unbiased estimator to $F_0$; it can be shown the variance of this estimator is roughly $\eps^2\cdot(F_0)^2$. 

To obtain a difference estimator for $F_0(v)-F_0(u)=\gamma F_0(u)$ with $v\succeq u$, we instead count the number of items in a level with $\Theta\left(\frac{\gamma}{\eps^2}\right)$ distinct items that appear in the prefix $u$ of the stream. 
It then follows that the expected number of the distinct items in $v$ but not in $u$ is $\Theta\left(\frac{\gamma^2}{\eps^2}\right)$. 
Hence, we obtain a $\left(1+\frac{\eps}{\gamma}\right)$-approximation to $F_0(v)-F_0(u)$ by first running the subsampling procedure on $u$ and counting the number of distinct items at some level $k$ with $\Theta\left(\frac{\gamma}{\eps^2}\right)$ items. 
We then run the same subsampling procedure with the same randomness on $v-u$ by only counting the additional items that are occupied at level $k$, and rescaling this number by $2^k$. Note that additional items must only appear in $v$ but not $u$, which is exactly $F_0(v)-F_0(u)$. 
Although level $k$ does not necessarily give a $(1+\eps)$-approximation to $F_0(v)-F_0(u)$, it does give a $\left(1+\frac{\eps}{\gamma}\right)$-approximation to $F_0(v)-F_0(u)$, which translates to an additive $\eps\cdot F_0(u)$ approximation to $F_0(v)-F_0(u)$ since $F_0(v)-F_0(u)\le\gamma F(u)$. 

\paragraph{Difference estimator, $p\in(0,2)$.} 
Our difference estimators for $p\in(0,2)$ are more involved, and we obtain an optimal dependence on $\log n$ and $\frac{1}{\eps}$. 
A natural starting point is the identity $F_p(v)-F_p(u)=F_p(u+(v-u))-F_p(u)$ and the expansion $F_p(u+(v-u))=\sum_{k=0}^{\infty} \binom{p}{k}\cdot\langle u^{p-k},v^k\rangle$, where $v^k$ denotes the coordinate-wise exponent of $u$. 
We would like to approximate each term in the series to approximate the overall difference. 
However, this approach fails for non-integer $p<2$, since coordinates of the frequency vector $u$ may be zero, so that $u^{p-k}$ is undefined for $k=2$, and there does not seem to be a simple fix.  

Another natural approach is the $L_p$ strong tracking algorithm of \cite{BlasiokDN17}, which generates $\O{\frac{1}{\eps^2}}$ random vectors $z_1,z_2,\ldots$ of $p$-stable random variables. 
The estimate of $L_p(v)$ is then the median of the absolute values of the inner products $\langle z_1,v\rangle,\langle z_2,v\rangle,\ldots$. 
However, it seems challenging to recover the difference $F_p(v)-F_p(u)$ from the inner products $\langle z_1,v\rangle,\langle z_2,v\rangle,\ldots$ and $\langle z_1,u\rangle,\langle z_2,u\rangle,\ldots$. 

Instead, we consider a variant of Li's geometric mean estimator~\cite{Li08}. 
As in the strong $L_p$-tracker of \cite{BlasiokDN17}, we generate $\O{\frac{\gamma\log n}{\eps^2}}$ random vectors $z_1,z_2,\ldots$ of $p$-stable variables. 
For a fixed positive integer constant $q$, we form $y_1,y_2,\ldots$ so that each $y_i$ is the geometric mean of the absolute values of $q$ consecutive inner products $\langle z_{1+(i-1)q},v\rangle,\ldots,\langle z_{iq},v\rangle$. 
The final output is the average of the estimators $y_1,y_2,\ldots$. 
The key property of Li's geometric mean estimator is that $Z:=\prod_{j=1+(i-1)q}^{iq}|\langle z_j,v\rangle|^{p/q}-\prod_{j=1+(i-1)q}^{iq}|\langle z_j,u\rangle|^{p/q}$ is an unbiased estimate to $F_p(v)-F_p(u)$ whose behavior we can analyze. 
Namely, we relate $Z$ to $\prod_{j=1+(i-1)q}^{iq}\left(1+\left(\frac{|\langle z_j,v-u\rangle|}{|\langle z_j,u\rangle|}\right)^{p/q}\right)$, which has small variance given $\frac{F_p(v-u)}{F_p(u)}\le\gamma$. 
Thus we require a smaller number of independent estimators and obtain a more efficient difference estimator for $F_p(v)-F_p(u)$ for fixed $u$ and $v$ with constant probability. 

To achieve an optimal dependence on $\log n$, we cannot afford to union bound over $\poly(n)$ possible frequency vectors $u$. 
Instead we use an approach similar to the strong trackers of \cite{BravermanCIW16, BravermanCINWW17, BlasiokDN17}. 
We set a constant $q$ and split the stream into roughly $\frac{1}{\eps^{q/p}}$ times $r_1,r_2,\ldots$ between which the difference estimator increases by $\eps^{q/p}\cdot(F_p(v)-F_p(u))$ and apply a union bound to argue correctness at these times $\{r_i\}$, incurring a $\log\frac{1}{\eps}$ term. 
To bound the difference estimator between times $r_i$ and $r_{i+1}$ for a fixed $i$, we note that the difference estimator only increases by $\eps^{q/p}\cdot(F_p(v)-F_p(u))$ from $r_i$ to $r_{i+1}$, so that we still obtain a $(1+\eps)$-approximation to $F_p(v)-F_p(u)$ with a $\eps^{1-q/p}$-approximation to the difference. 
We then bound the probability that the supremum of the error between times $r_i$ and $r_{i+1}$ is bounded by $\eps^{1-q/p}$ by applying known results from chaining \cite{BravermanCIW16, BravermanCINWW17, BlasiokDN17}, thus avoiding additional $\log n$ factors. 
To derandomize our algorithm, we give a generalization of a pseudorandom generator that fools half-space queries shown by~\cite{GopalanKM18, JayaramW18}. 
Our result shows a high-probability derandomization of any algorithm that stores the product of a number of frequency vectors induced by the stream between fixed times, with a matrix of i.i.d. entries. 
We then union bound over all possible stopping times to argue the correctness guarantees of our difference estimator still hold under the derandomization. 
Interestingly, we do not guarantee the same output distribution as that of using independent $p$-stable random variables, yet we still have a correct algorithm in low memory. 
We also show that Li's geometric estimator can be used as a strong tracker for $L_p$. 

Our construction for the suffix-pivoted difference estimator is the same. 
We note that upon a change of notation, the key property to efficiently estimate $F_p(v+u_1)-F_p(u_1)$ is that $F_p(v)$ is small. 
Moreover, we can express the variance of Li's geometric estimator in terms of $F_p(v)$ so that smaller values of $F_p(v)$ correspond to smaller variance for the estimators and thus we can use a smaller number of independent copies. 
For the suffix-pivoted difference estimator, it may be true that $F_p(v+u_2)-F_p(u_2)$ could be much larger than $F_p(v+u_1)-F_p(u_1)$ for some $u_2\succeq u_1$, but $F_p(v)$ will still be ``small'' provided that $F_p(v+u_1)-F_p(u_1)$ is ``small''. 
Hence, the variance for Li's geometric estimator will still be small despite the difference potentially being much larger. 

\paragraph{Fixed-prefix difference estimator, $p>2$.} 
Our difference estimator $\calB$ for $F_p$ for integers $p>2$ does use the expansion $F_p(u+(v-u))-F_p(u)=\sum_{k=0}^{p-1}\binom{p}{k}\langle u^k,(v-u)^{p-k}\rangle$, where $u^k$ denotes the coordinate-wise $k$-th power of $u$. 
To obtain an unbiased estimator of $\langle u^k,(v-u)^{p-k}\rangle$, suppose we sample a coordinate $a\in[n]$ with probability $\frac{u_a^k}{\|u\|_k^k}$ and set $Z$ to be the $a$-th coordinate of $(v-u)^{p-k}$, which we can explicitly track since the updates to the frequency vector $v-u$ arrive in the stream after the updates to $u$. 
If we could obtain an unbiased estimate $Y$ to $\|u\|_k^k$, then the expected value of $YZ$ would be exactly $\langle u^k,(v-u)^{p-k}\rangle$. 
One could try to output a coordinate $a\in[n]$ with probability $(1 \pm \varepsilon)\frac{u_a^k}{\|u\|_k^k}$. 
Such ``approximate'' $L_p$-samplers exist for $p\le 2$ \cite{MonemizadehW10,JowhariST11,JayaramW18}, though this would lead to an overall suboptimal dependence on $\varepsilon$. 
We could also potentially use an efficient ``perfect'' $L_p$-sampler, which outputs a coordinate $a\in[n]$ with probability $\frac{u_a^p}{\|u\|_p^p}$. 
Perfect $L_p$-samplers are known for $p\le 2$~\cite{JayaramW18}. 
Corresponding samplers for $p>2$ are currently unknown, though it may be possible to extend the $L_p$ samplers of \cite{JayaramW18} to $p>2$ and potentially obtain a difference estimator algorithm.  
Instead, we use the perfect $L_2$-sampler of \cite{JayaramW18} to return a coordinate $a\in[n]$ with probability $\frac{u_a^2}{\|u\|_2^2} \pm \frac{1}{\poly(n)}$. 
We also obtain unbiased estimates $X$ and $Y$ of $u_a^{k-2}$ and $\|u\|_2^2$ with low variance, respectively, by using a $\countsketch$ algorithm and our $F_2$ moment estimation algorithm. 
Given $a\in[n]$, we then track the $a$-th coordinate of $(v-u)^{p-k}$ exactly. 
We show that the product of these terms $X$, $Y$, and $(v-u)^{p-k}$ forms an unbiased estimate to $\langle u^k,(v-u)^{p-k}\rangle$. 
Unfortunately, the variance of these estimates is still too large. 
We thus also find approximate frequencies to the heavy-hitters of $u$, i.e., we estimate each $u_a$ such that $u_a\ge\frac{\eps}{16\gamma^{1-1/p}}\|u\|_p$ and track the corresponding $a$-th coordinate of $(v-u)^{p-k}$ exactly. 
We remove these terms before we perform the perfect $L_2$ sampling, so that the resulting unbiased estimate to $\langle u^k,(v-u)^{p-k}\rangle$ has smaller variance. 
We then show that taking the mean of enough repetitions gives a $(1+\eps)$-approximation to $\langle u^k,(v-u)^{p-k}\rangle$. 
By repeating the estimator for each summand in $\sum_{k=0}^{p-1}\binom{p}{k}\langle u^k,(v-u)^{p-k}\rangle$, it follows that we obtain a $(\gamma,\eps,\delta)$-difference estimator for $F_p$. 

\paragraph{Suffix-pivoted difference estimator, $p>2$.} 
Our construction for the suffix-pivoted difference estimator requires a bit more care. 
We can again write $F_p(v)-F_p(u)=\sum_{k=1}^p\langle(v-u)^{p-k},u^k\rangle$. 
However, we require $\O{\frac{1}{\eps^p}}$ difference estimators due to the smoothness parameter of $F_p$. 
Namely, we implement suffix-pivoted difference estimators with $\gamma$ as small as $\eps^p$ whereas we previously implemented prefix-fixed difference estimators with $\gamma$ roughly as small as $\eps^2$. 
Thus, we can no longer necessarily run a heavy-hitter algorithm that uses space proportional to $\O{\frac{\gamma^{2-2/p}}{\eps^2}}$ for each block of the stream corresponding to a difference estimator, since the quantity is no longer well-defined for $\gamma=\eps^p$. 

On the other hand, the necessity to run separate heavy-hitter algorithms for each block in the fixed-prefix difference estimator originated from the adversarial robustness requiring fresh randomness in each block. 
The sliding window model requires no such restriction and thus we instead we run a single heavy-hitter algorithm in parallel that uses space $\tO{\frac{1}{\eps^2}n^{1-2/p}}$ and \emph{simultaneously} finds a list $\calH$ of all heavy-hitters in the sliding window. 
In particular, $\calH$ will report an approximate frequency of each heavy-hitter in each block assigned to a difference estimator. 
The main point is that $\tO{\frac{1}{\eps^2}n^{1-2/p}}$ space is used to identify the heavy-hitters but only $\O{\frac{1}{\eps^{p}}\log n}$ space is used to track their approximate frequency in each block. 
Hence, the $\frac{1}{\eps^p}$ dependency is still present but it now multiplies a lower-order term. 
We can then decompose the difference $\sum_{k=1}^p\langle(v-u)^{p-k},u^k\rangle$ into the amount contributed by the heavy items $\sum_{a\in\calH}\sum_{k=1}^p(v_a-u_a)^{p-k}u_a^k$ and the amount contributed by the items that are not heavy $\sum_{a\notin\calH}\sum_{k=1}^p(v_a-u_a)^{p-k}u_a^k$. 
We show that to estimate $\sum_{a\in\calH}\sum_{k=1}^p(v_a-u_a)^{p-k}u_a^k$, it again suffices to estimate the value of $v_a-u_a$ and then subsequently read off the corresponding coordinate $u_a$. 

It remains to estimate $\sum_{a\notin\calH}\sum_{k=1}^p(v_a-u_a)^{p-k}u_a^k$. 
We use the same approach of perfect $L_2$ sampling coordinates from $v-u$ and reading off the corresponding coordinates in $u$, so that each perfect $L_2$ sample and the corresponding coordinate in $u$ forms an estimate of the inner product. 
Crucially we can use a more refined level set analysis to upper bound the variance of the inner product estimate by $\tO{\frac{\gamma}{\eps^2}n^{1-2/p}}\cdot (F_p(v))^2$, even when $\gamma$ is as small as $\eps^p$. 
Thus as before, smaller values of $F_p(v)$ correspond to smaller variance for the estimators and thus we can use a smaller number of independent copies of the inner product estimates. 
Since each inner product estimate requires polylogarithmic space, it suffices to take the mean of $\tO{\frac{\gamma}{\eps^2}n^{1-2/p}}$ such inner product estimators to form our difference estimator. 

\paragraph{Heavy hitters.} 
Our framework extends to an adversarially robust streaming algorithm for finding $L_2$ heavy hitters. 
We first run the same $F_2$ moment estimation algorithm that stitches together sketches of different granularities, which partitions the stream of length $m$ is partitioned into $\beta$ blocks. 
Observe that a heavy-hitter $i\in[n]$ with $f_i>\eps\cdot L_2(1,m)$ must have frequency at least $\frac{\eps}{\beta}\cdot L_2(1,m)$ in one of these $\beta$ blocks. 
Now if $F_2(t_j,t_{j+1})=\frac{1}{2^k}\cdot F_2(1,m)$ for a block corresponding to time interval $[t_j,t_{j+1}]$ in which $i$ has frequency at least $\frac{\eps}{\beta}\cdot L_2(1,m)$, then $f_i^2\ge\frac{2^k\eps^2}{\beta^2} F_2(t_j,t_{j+1})$. 
We can identify $i$ using a much higher threshold and thus much less space, i.e., space proportional to $\frac{\beta^2}{2^k\eps^2}$ rather than $\frac{\beta^2}{\eps^2}$. 
Thus in parallel with our $F_2$ moment estimation algorithm, each time we run a difference estimator $\calB$ and an $F_2$-estimation algorithm $\calA$ with a specific granularity, we also run a heavy hitters algorithm with threshold corresponding to the same granularity of $\calB$. 
For example, if the difference estimator on a block contributes $\frac{1}{2^k}\cdot F_2(1,m)$ to the overall $F_2$ moment, then we use a level $k$ difference estimator $\calB$ with effective accuracy $1+\frac{2^k\eps}{\beta}$ and we similarly run a heavy hitters algorithm to detect items that are $\frac{2^k\eps^2}{\beta^2}$-heavy with respect to $F_2$ on the block. 
We then explicitly count each reported item to remove all items that may be heavy with respect to a single block, but not the overall stream, and obtain an accurate count for each heavy hitter. 

\subsection{Preliminaries}
\seclab{sec:prelim}
We use the notation $[n]$ to denote the set $\{1,\ldots,n\}$ for any positive integer $n$. 
We use $\poly(n)$ to denote a constant degree polynomial in $n$. 
When an event has probability $1-\frac{1}{\poly(n)}$ of occurring, we say the event occurs with high probability. 
We use $\polylog(n)$ to denote a polynomial in $\log n$. 
 
For a function $F$, let $F(t_1,t_2)$ denote the value of $F$ on the underlying frequency vector induced between times $t_1$ and $t_2$, inclusive. 
In the basic insertion-only stream, the $t$-th stream update $u_t$ is an integer in $[n]$, where $t\in[m]$ for a stream of length $m$. 
The updates implicitly define a frequency vector $f\in\mathbb{R}^n$ so that the effect of each update is to increase the value of a coordinate of $f$. 
Thus, we have $f_i=|\{t\,:\,u_t=i\}|$ for each $i \in [n] = \{1, 2, \ldots, n\}$. 
In the sliding window model, the frequency vector is induced by only the most recent $W$ updates in an insertion-only stream. 
Hence if $W\ge m$ for a stream of length $m$, the model reduces to the insertion-only streaming model and if $W<m$, then $f_i=|\{t\,:\,u_t=i \textrm{ and } t\ge m-W+1\}|$ for each $i\in[n]$. 
In turnstile streams, each update $u_t=(\Delta_t, i_t)$, where $\Delta_t\in\{-1,+1\}$ and $i_t\in[n]$. 
Hence, each update either increases or decreases a coordinate of $f$. 
For vectors $u$ and $v$ of length $n$, we use the notation $u\succeq v$ to denote that $u$ coordinate-wise dominates $v$, i.e., $u_i\ge v_i$ for all $i\in[n]$. 

Let $\lsb(x,i)$ be the $i$-th least significant bit in the binary representation of $x$, when using $1$-indexing, e.g., we start counting indices from $1$ rather than $0$.  
Let $\numbits(x)$ be the number of nonzero bits in the binary representation of $x$.  
For example, $\lsb(6,1)=2$, $\lsb(6,2)=3$, and $\numbits(x)=2$, since $6=110_2$. 
As another example, $\lsb(8,1)=4$ and $\numbits(x)=1$, since $8=1000_2$.

\paragraph{Frequency moments.} 
The space complexity of approximating the frequency moments $F_p$ is one of the oldest and most fundamental problems on streams, originating with the work of Alon, Matias, and Szegedy \cite{AlonMS99}. 
For example, the $F_2$-estimation problem corresponds to the size of a self-join \cite{AlonMS99} and tracking it with limited storage is essential \cite{AlonGMS02}. 
Given $p,\eps>0$, the $p$-th frequency moment estimation problem is to approximate $F_p = \sum_{i\in[n]} f_i^p$ within a $(1\pm\eps)$ factor for a frequency vector $f\in\mathbb{R}^n$. 
The complexity of this problem and the related $L_p$-norm estimation problem differs greatly for different values of $p$.  
For $p>2$, \cite{Bar-YossefJKS04,ChakrabartiKS03} show that for the streaming model, the memory required for $F_p$ estimation requires polynomial factors in $n$, whereas polylogarithmic space is achievable for $p\le 2$~\cite{AlonMS99,Indyk06,Li08,KaneNW10,KaneNPW11,BlasiokDN17}. 
We note that $F_p$ estimation for fractional $p$ near $1$ is used for empirical entropy estimation~\cite{HarveyNO08}, and $p=0.25$ and $p=0.5$ are useful for mining tabular data~\cite{CormodeIKM02}. 
$L_1$-norm estimation is used in network traffic monitoring~\cite{FeigenbaumKSV02}, low-rank approximation and linear regression~\cite{FeldmanMSW10}, dynamic earth-mover distance approximation~\cite{Indyk04}, and cascaded norm estimation~\cite{JayramW09}. 
$L_2$-norm estimation is used for estimating join and self-join sizes~\cite{AlonGMS02}, randomized numerical linear algebra \cite{ClarksonW09}, and for detecting network anomalies~\cite{KrishnamurthySZC03,ThorupZ04}. 
The values $p=3$ and $p=4$ also respectively correspond to the skewness and the kurtosis of a stream~\cite{AlonMS99}.

For a frequency vector $f$ with length $n$, we define $F_p(f)=\sum_{i=1}^n|f_i|^p$. 
The $L_p$ norm\footnote{Observe that $L_p$ does not satisfy the triangle inequality for $0<p<1$ and thus is not a norm, but is still a well-defined quantity.} of $f$ is defined by $L_p(f)=(F_p(f))^{1/p}$. 
We also use the notation $\|f\|_p$ to denote the $L_p$ norm of $f$. 
The $F_p$-moment estimation problem is also often stated as the norm estimation problem, since an algorithm that outputs a $(1+\eps)$-approximation to one of these problems can be modified to output a $(1+\eps)$-approximation to the other using a constant rescaling of $\eps$ (for constant $p$); thus we use these two equivalent problems interchangeably. 
The $L_2$-heavy hitters problem is to output all coordinates $i$ such that $f_i\ge\eps\cdot L_2$. 
The problem permits the output of coordinates $j$ with $f_j\le\eps\cdot L_2$ provided that $f_j\ge\frac{1}{2}\cdot L_2$. 
Moreover, each coordinate output by the algorithm also requires an estimate to its frequency, with additive error at most $\frac{1}{2}\cdot L_2$.


%

\begin{definition}[$(\eps,m)$-flip number]
\cite{Ben-EliezerJWY20}
\deflab{def:flipnumber}
For $\eps>0$ and an integer $m>0$, the $(\eps,m)$-flip number of a sequence $y_1,\ldots,y_m$ is the largest integer $k$ such that there exist $0\le i_1<\ldots<i_k\le m$ with $y_{i_{j-1}}\notin[(1-\eps)y_{i_j},(1+\eps)y_{i_j}]$ for all integers $j\in[2,k]$. 
\end{definition}

\begin{definition}[$(\eps,m)$-twist number]
\deflab{def:twistnumber}
For $\eps>0$ and an integer $m>0$, the $(\eps,m)$-twist number for a non-negative function $F$ on a stream of updates $1=u_1,\ldots,u_m$ is the largest integer $k$ such that there exist $0\le i_1<\ldots<i_k\le m$ with for each $j\in[2,k]$, either $F(u_1:u_{j-1})\notin[(1-\eps)F(u_1:u_j),(1+\eps)F(u_1:u_j)]$ or $F(u_{j-1}:u_j)\ge\eps F(1:u_{j-1})$. 
\end{definition}

\begin{definition}[Strong tracking]
Let $f^{(1)},\ldots,f^{(m)}$ be the frequency vectors induced by a stream of length $m$ and let $g:\mathbb{R}^n\to\mathbb{R}$ be a function on frequency vectors. 
An algorithm $A$ provides $(\eps,\delta)$-strong $g$-tracking if at each time step $t\in[m]$, $A$ outputs an estimate $X_t$ such that
\[|X_t-g(f^{(t)})|\le\eps|g(f^{(t)})|\]
for all $t\in[m]$ with probability at least $1-\delta$. 
\end{definition}

\begin{observation}
\cite{Ben-EliezerJWY20}
\obslab{obs:flip:fp}
For $p>0$, the $(\eps,m)$-twist number of $\|x\|_p^p$ in the insertion-only model is $\O{\frac{1}{\eps}\log m}$ for $p\le 2$ and $\O{\frac{p}{\eps}\log m}$ for $p>2$. 
\end{observation}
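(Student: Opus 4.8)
The plan is to reduce the twist number to the flip number of the sequence of prefix values and then bound that flip number by the standard counting argument for monotone bounded sequences. Fix an insertion‑only stream $u_1,\ldots,u_m$ and write $Y_t := F_p(u_1:u_t) = \|f^{(t)}\|_p^p$ for the value of $F_p$ on the length‑$t$ prefix, with $Y_0 = 0$. Since updates only increment coordinates, the sequence $(Y_t)_{t=0}^m$ is non‑decreasing.

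First I would show that in this monotone setting both twist conditions collapse to ``$Y$ increased by at least a $(1+\eps)$ factor.'' If times $i_{j-1}<i_j$ satisfy condition (a), i.e.\ $Y_{i_{j-1}}\notin[(1-\eps)Y_{i_j},(1+\eps)Y_{i_j}]$, then by monotonicity $Y_{i_{j-1}}\le Y_{i_j}\le(1+\eps)Y_{i_j}$, so necessarily $Y_{i_{j-1}}<(1-\eps)Y_{i_j}$, whence $Y_{i_j}>\frac{1}{1-\eps}Y_{i_{j-1}}\ge(1+\eps)Y_{i_{j-1}}$. If instead they satisfy condition (b), i.e.\ $F_p(u_{i_{j-1}}:u_{i_j})\ge\eps F_p(u_1:u_{i_{j-1}})$, then for $p\ge 1$ writing the substream's frequency vector as $z\succeq 0$ with $f^{(i_j)}=f^{(i_{j-1})}+z$ and using the superadditivity $(a+b)^p\ge a^p+b^p$ of $t\mapsto t^p$ on the non‑negative reals gives $Y_{i_j}=\|f^{(i_{j-1})}+z\|_p^p\ge\|f^{(i_{j-1})}\|_p^p+\|z\|_p^p\ge(1+\eps)Y_{i_{j-1}}$; for $0<p<1$ the conclusion $Y_{i_j}\ge(1+\eps)Y_{i_{j-1}}$ holds because, in the insertion‑only model, condition (b) there measures the increment $Y_{i_j}-Y_{i_{j-1}}$ of the prefix value (superadditivity failing for $p<1$, this reading is what makes the flip and twist numbers coincide, as asserted in \secref{sec:prelim}). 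Consequently every sequence witnessing the twist number is a sequence $i_1<\cdots<i_k$ with $Y_{i_j}\ge(1+\eps)Y_{i_{j-1}}$ for all $j\in[2,k]$, so the twist number is at most the $(\eps,m)$‑flip number of $(Y_t)$.

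Second, I would bound this flip number. Once $t\ge 1$ the first update forces some coordinate to be $1$, and integrality keeps $Y_t\ge 1$ thereafter, while $Y_m=\|f\|_p^p$. A chain $Y_{i_1}\le\cdots\le Y_{i_k}$ with successive ratios at least $1+\eps$ satisfies $Y_{i_k}\ge(1+\eps)^{k-2}Y_{i_2}\ge(1+\eps)^{k-2}$, hence $k\le 2+\log_{1+\eps}Y_m=\O{\frac1\eps\log Y_m}$ using $\ln(1+\eps)\ge\eps/2$ for $\eps\in(0,1)$. It remains to bound $Y_m$: for $0<p\le 1$ and integer coordinates $f_i^p\le f_i$, so $Y_m=\sum_i f_i^p\le\sum_i f_i=\|f\|_1\le m$; for $p\ge 1$, $\|f\|_p\le\|f\|_1\le m$, so $Y_m\le m^p$. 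Thus $\log Y_m=\O{\log m}$ for $p\le 2$ (where $p=\O1$) and $\log Y_m\le p\log m$ for $p>2$, yielding the claimed $\O{\frac1\eps\log m}$ and $\O{\frac p\eps\log m}$ bounds; alternatively, having reduced to the flip number, one may simply quote the $F_p$ flip‑number bound of \cite{Ben-EliezerJWY20}.

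The step requiring the most care is the reduction in the second paragraph: verifying that each twist event genuinely forces a $(1+\eps)$‑multiplicative jump in the prefix value $Y_t$. For $p\ge 1$ this is exactly superadditivity of $t\mapsto t^p$; for $p<1$ superadditivity fails (a substream contributing $\eps F_p$ in $F_p$‑value can raise the prefix by far less than a $1+\eps$ factor), so here one must rely on condition (b) being a bound on the \emph{change} $F_p(u_1:u_{i_j})-F_p(u_1:u_{i_{j-1}})$ of the prefix value. Everything after this reduction is the routine ``monotone bounded sequence has few $(1+\eps)$‑jumps'' argument.
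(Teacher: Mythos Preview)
The paper gives no proof of this observation; it is stated with a citation to \cite{Ben-EliezerJWY20} (where the analogous bound is for the \emph{flip} number), and the paper separately asserts without argument that the twist and flip numbers coincide on insertion-only streams. Your strategy---reduce the twist number to the flip number via monotonicity, then count $(1+\eps)$-jumps in the bounded monotone sequence $(Y_t)$---is exactly the intended one, and for $p\ge 1$ your reduction via superadditivity of $t\mapsto t^p$ is correct and complete.

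The issue you flag for $0<p<1$ is a genuine gap, and it cannot be closed by ``reading'' condition (b) as a bound on the increment $Y_{i_j}-Y_{i_{j-1}}$: the paper's definition explicitly uses $F(u_{i_{j-1}}{:}u_{i_j})$, the $F_p$-value of the substream's own frequency vector. For subadditive $F_p$ this can equal $\eps F_p(f)$ while the prefix barely moves. Concretely, with $p=\tfrac12$, take $f=(N,0)$ and $z=(\eps^2 N,0)$; then $F_p(z)=\eps F_p(f)$ so condition (b) fires, yet $F_p(f+z)/F_p(f)=(1+\eps^2)^{1/2}\approx 1+\eps^2/2$, so condition (a) does not. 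Chaining such steps produces a twist sequence of length $\Theta(\eps^{-2}\log m)$, contradicting the stated $O(\eps^{-1}\log m)$ bound under the literal definition. The same phenomenon occurs for $F_0$: a substream may contain $\eps F_0(f)$ distinct items that are all repeats, leaving $F_0$ unchanged.

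So your reinterpretation is not a proof step but a redefinition (under which condition (b) is subsumed by condition (a) and the claim becomes tautological). The honest statement is: the $p\ge 1$ case is proved by your argument; the $p<1$ case (and $p=0$) requires either amending the definition of the twist number to use the increment $F(u_1{:}u_{i_j})-F(u_1{:}u_{i_{j-1}})$ in condition (b), or weakening the claim. You should present it that way rather than assert the alternative reading as established.
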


\section{Framework for Adversarially Robust Streaming Algorithms}
\seclab{sec:framework}
In this section, we describe a general framework for adversarially robust streaming algorithms, using the sketch stitching and granularity changing techniques. 
We first require the following specific form of a difference estimator. 
\begin{definition}[Fixed-Prefix Difference Estimator]
\deflab{def:diff:est:pre}
Given a stream $\calS$, a fixed time $t_1$, and a splitting time $t_2$ that is only revealed at time $t_2$, let frequency vector $v$ be induced by the updates of $\calS$ from time $t_1$ to $t_2$ and frequency vector $w_t$ be induced by updates from time $t_2$ to $t$ exclusive.  
Given an accuracy parameter $\eps>0$ and a failure probability $\delta\in(0,1)$, a streaming algorithm $\calB(t_1,t_2,t,\gamma,\eps,\delta)$ is a $(\gamma,\eps,\delta)$-\emph{difference estimator} for a function $F$ if, with probability at least $1-\delta$, it outputs an additive $\eps\cdot F(v)$ approximation to $F(v+w_t)-F(v)$ simultaneously for all $t\ge t_2$ with $F(v+w_t)-F(v)\le\gamma\cdot F(v)$ and $F(w_t)\le\gamma F(v)$ for a ratio parameter $\gamma\in(0,1]$.  
\end{definition}
We shall use the shorthand ``difference estimator'' terminology to refer to \defref{def:diff:est} until \secref{sec:sliding}; in \secref{sec:sliding} we will introduce an additional notion of a difference estimator for when $F(v)$ can change. 

It is easy to see that difference estimators offer an immediate means to approximation algorithms. 
\framestitch*
\begin{proof}
Since $F(u+v_1+\ldots+v_i)=F(u+v_1+\ldots+v_{i-1})+(F(u+v_1+\ldots+v_{i-1})-F(u+v_1+\ldots+v_i))$ and the difference is bounded by $\gamma_i\cdot F(u+v_1+\ldots+v_{i-1})$ by the definition of difference estimator, then we have the invariant $F(u+v_1+\ldots+v_i)\le(1+\gamma_i)F(u+v_1+\ldots+v_{i-1})\le 2F(u)$ for all $i\in\{1,\ldots,k\}$ by induction. 
Thus the sum of the additive errors from each difference estimator is at most $\sum_{i=1}^k\eps_i\cdot F(u+v_1+\ldots+v_{i-1})\le\frac{\eps}{4}\cdot 2F(u)=\frac{\eps}{2}\cdot F(u)$. 
Similarly the additive error due to the $\left(1+\frac{\eps}{2}\right)$-approximation to $F(u)$ is also $\frac{\eps}{2}\cdot F(u)$, so that by monotonicity of $F$ and a telescoping argument the sum gives a $(1+\eps)$ multiplicative approximation to $F(u+v_1+\ldots+v_i)$. 
Since each difference estimator fails with probability at most $\frac{1}{3k}$, then by a union bound, the total probability of failure is at most $\frac{1}{3}$. 
\end{proof}

\framestitchspace*
\begin{proof} 
The total space is $\sum_{i=1}^k N_i\cdot\frac{\gamma_i}{\eps^2}\cdot S$. 
Since $N_i\le 2^i$, $\gamma_i\le\frac{1}{2^i}$, and $k=\O{\log n}$, then the total space is at most $k\cdot\frac{S}{\eps^2}\le\O{S\log n}{\eps^2}$. 
\end{proof}

\subsection{Algorithm}
We first describe a simplified version of our adversarially robust framework that adapts the usage of difference estimators. 
To achieve a robust $(1+\O{\eps})$-approximation, \cite{Ben-EliezerJWY20} used a ``switch-a-sketch'' technique that maintains $(\eps,m)$-flip number $\xi$ (recall \defref{def:flipnumber}) independent subroutines that each provide a $(1+\eps)$ to a function $F$ evaluated on the frequency vector induced by the stream, with high probability. 
For the remainder of the discussion, we assume the $\xi=\Omega\left(\frac{1}{\eps}\log n\right)$, which is true for many important functions $F$, especially the important $F_p$ moments and moreover, that the twist number $\lambda=\xi$, which is true for insertion-only streams. 
To prevent an adversary from affecting the output of the algorithm, each subroutine is effectively only used once. 
The output of the $i$-th subroutine is only used the first time the true output of the subroutine is at least $(1+\eps)^i$. 
The algorithm of \cite{Ben-EliezerJWY20} then repeatedly outputs this value until the $(i+1)$-st subroutine is at least $(1+\eps)^{i+1}$, at which point the algorithm switches to using the output of the $(i+1)$-st subroutine instead. 
Hence, the adversary information-theoretically knows nothing about the internal randomness of the $i$-th subroutine until the output is at least $(1+\eps)^i$. 
However, due to monotonicity of $F$ and correctness of the oblivious $(i+1)$-st instance, whatever knowledge the adversary gains about the $i$-th instance does not impact the internal randomness of future instances. 
Intuitively, the switch-a-sketch approach uses a sketch once and switches to another sketch once the estimated $F$ has increased by $(1+\eps)$. 
$F$ can only increase $\lambda$ times by definition of the $(\eps,m)$-twist number. 
Since $\lambda=\Omega\left(\frac{1}{\eps}\log n\right)$, this approach generally achieves $\frac{1}{\eps^3}$ space dependency. 

\begin{algorithm}[!htb]
\caption{Framework for Robust Algorithms on Insertion-Only Streams}
\alglab{alg:framework}
\begin{algorithmic}[1]
\Require{Stream $u_1,\ldots,u_m\in[n]$ of updates to coordinates of an underlying frequency vector, accuracy parameter $\eps\in(0,1)$, $(\gamma,\eps,\delta)$-difference estimator $\calB$ for $F$ with space dependency $\frac{\gamma^C}{\eps^2}$ for $C\ge 1$, oblivious strong tracker $\calA$ for $F$}
\Ensure{Robust $(1+\eps)$-approximation to $F$}
\State{$\delta\gets\frac{1}{\poly\left(\frac{1}{\eps}, \log n\right)}$, $\zeta\gets\frac{2}{2^{(C-1)/4}-1}$, $\eta\gets\frac{\eps}{64\zeta}$, $\beta\gets\ceil{\log\frac{8}{\eps}}$}
\State{$a\gets 0$, $\varphi\gets 2^{(C-1)/4}$, $\gamma_j\gets 2^{j-1}\eta$}
\State{For $j\in[\beta]$, $\eta_j\gets\frac{\eta}{\beta}$ if $C=1$, $\eta_j\gets\frac{\eta}{\varphi^{\beta-j}}$ if $C>1$.}
\Comment{Accuracy for each difference estimator}
\For{each update $u_t\in[n]$, $t\in[m]$}
\State{$X\gets\calA_{a+1}(1,t,\eta,\delta)$}
\If{$X>2^a$}
\linlab{lin:frame:init}
\Comment{Switch sketch at top layer}
\State{$a\gets a+1$, $b\gets 0$, $Z_a\gets X$, $t_{a,j}\gets t$ for $j\in[\beta]$.}
\EndIf
\State{$X\gets\estimateF$}
\Comment{Compute estimator $X$ for $F$ using unrevealed sketch}
\If{$X>\left(1+\frac{(b+1)\eps}{8}\right)\cdot Z_a$}
\Comment{Switch sketch at lower layer}
\State{$b\gets b+1$, $k\gets\lsb(b,1)$, $j\gets\flr{\frac{b}{2^k}}$}
\State{$Z_{a,k}\gets\calB_{a,j}(1,t_{a,k},t,\gamma_k,\eta_k,\delta)$}
\Comment{Freeze old sketch}
\State{$t_{a,j}\gets t$ for $j\in[k]$.}
\Comment{Update difference estimator times}
\linlab{lin:frame:reset}
\EndIf
\State{\Return $\left(1+\frac{b\eps}{8}\right)\cdot Z_a$}
\linlab{lin:final:round}
\Comment{Output estimate for round $t$}
\EndFor
\end{algorithmic}
\end{algorithm}

\begin{algorithm}[!htb]
\caption{Subroutine $\estimateF$ of \algref{alg:framework}}
\alglab{alg:estimatef}
\begin{algorithmic}[1]
\State{$X\gets Z_a$, $k\gets\numbits(b+1)$, $z_i\gets\lsb(b+1,k+1-i)$ for $i\in[k]$.\\}
\Comment{$z_1>\ldots>z_k$ are the nonzero bits in the binary representation of $b+1$.}
\For{$1\le j\le k-1$}
\Comment{Compile previous frozen components for estimator $X$}
\State{$X\gets X+Z_{a,j}$}
\EndFor
\State{$j\gets\flr{\frac{b+1}{2^{z_k}}}$}
\State{$X\gets X+\calB_{a,j}\left(1,t_{a,z_k},t,\gamma_{z_k},\eta_{z_k},\delta\right)$}
\Comment{Use unrevealed sketch for last component}
\State{\Return $X$}
\end{algorithmic}
\end{algorithm}

We first observe that if we instead use the switch-a-sketch technique each time $F$ increases by a power of $2$, then we only need to switch $\O{\log n}$ sketches. 
Effectively, this follows from setting $\eps=\O{1}$ in the value of $\lambda$.  
Let $t_i$ be the first time $F$ of the stream surpasses $2^i$. 
The challenge is then achieving a $(1+\eps)$-approximation to $F$ at the times between each $2^i$ and $2^{i+1}$. 
Let $u$ be the underlying frequency vector at time $t_i$, so that $F(u)\ge 2^i$. 
If $v$ is the underlying frequency vector at some time between $t_i$ and $t_{i+1}$, then we can decompose $F(v)=F(u)+\sum_{j=1}^{\beta}(F(u_j)-F(u_{j-1}))$, where we use the convention that $u_0=u$ and $u_{\beta}=j$. 
Moreover, we assume that $F(u_j)-F(u_{j-1})\le\gamma\cdot F(v)$ for $\gamma\le\frac{1}{2^j}$. 

Our key observation is that because we only care about a $(1+\eps)$-approximation to $F(v)$, we do not need a $(1+\eps)$-approximation to each of the differences $F(u_j)-F(u_{j-1})$, which may be significantly smaller than $F(v)$. 
For example, note that a $(2^j\cdot\eps)$-approximation to $F(u_j)-F(u_{j-1})$ only equates to an additive $\O{\eps\cdot F(v)}$ error, since $F(u_j)-F(u_{j-1})=\O{\frac{1}{2^j}}\cdot F(v)$. 
We require $\frac{1}{\gamma}$ instances of algorithms with such accuracies, to account for the various possible vectors $v$. 
Thus if there exists an algorithm that uses $\frac{\gamma}{\eps^2}S(n)$ bits of space to output additive $\eps\cdot F(v)$ error, then the space required across the level $j$ estimators is $\frac{1}{\eps^2}S(n)$ bits of space. 
Since there are at most $\O{\log\frac{1}{\eps}}$ levels, then we do not incur any additional factors in $\frac{1}{\eps}$. 
Recall that a difference estimator (\defref{def:diff:est}) to $F$ serves exactly this purpose! 

It is not obvious how to obtain a difference estimator for various functions $F$. 
However, for the purposes of a general framework, the theoretical assumption of such a quantity suffices; we shall give explicit difference estimators for specific functions $F$ of interest. 
The framework appears in full in \algref{alg:framework}. 

\paragraph{Interpretation of \algref{alg:framework}.}
We now translate between the previous intuition and the pseudocode of \algref{alg:framework}. 
The streaming algorithm $\calA_{a+1}$ attempts to notify the algorithm whenever the value of $F$ on the underlying frequency vector induced by the stream has increased by a power of two. 
The role of $a$ serves as a counter, ensuring that a new instance of $\calA$ is used each time the value of $a$ has increased. 
Since $a$ increases when $\calA_{a+1}>2^a$, i.e., the value of $F$ has increased by roughly a factor of two, then the instances $\calA$, indexed by $a$, perform sketch switching with the finest accuracy. 
Thus, \algref{alg:framework} defines $Z$ to be the value of $\calA_a$ when it is revealed and never uses $\calA_a$ again. 

It remains for \algref{alg:framework} to stitch the sketches between the times when $a$ increases. 
Suppose $t_a$ is the time when $Z$ was most recently defined. 
Then \algref{alg:framework} uses the variables $Z_{a,k}$ to stitch together the sketches of the difference estimators $\calB$ for $F(1:t)-F(1:t_a)$. 
Each variable $Z_{a,k}$ refers to a separate granularity of the difference estimator $\calB$. 
Like the streaming algorithm $\calA$, the difference estimators $\calB$ are indexed by the counter $a$ to distinguish between the instances required for the sketch switching technique. 
Additionally, the difference estimators $\calB$ for the variable $Z_{a,k}$ are indexed by a counter $j$ that represents the number of times a level $k$ difference estimator is required throughout the times $t$ that we roughly have $2^a<F(1:t)<2^{a+1}$. 
The subroutine $\estimateF$ then stitches together the values $Z_{a,k}$ to form an estimate $X$ of $F(1:t)$. 
Similarly, \algref{alg:framework} stitches together the values $Z_{a,k}$ to form an estimate $X$ of $F(1:t)$, which it then rounds as the output. 
The actual stitching process requires partitioning $F(1:t)-F(1:t_a)$, so that the appropriate granularities may be used for each difference estimator. 
This is monitored by the nonzero bits in the binary representation of the counter $b$, which roughly denotes the ratio of $F(1:t)-F(1:t_a)$ to $F(1:t_a)$.  
Now of course an inconvenience in the actual analysis is that because $\calB$ and $\calA$ are approximation algorithms, then we do not exactly have $2^a<F(1:t)<2^{a+1}$, although this is true up to a multiplicative $(1+\O{\eps})$-factor. 

\subsection{Analysis}
We first show correctness of \algref{alg:framework} at times where the counter $a$ increases. 
This allows us to subsequently focus on times $t\in(t_i,t_{i+1})$ between points in the stream where the counter increases. 
\begin{lemma}[Correctness when $F$ doubles]
\lemlab{lem:constant}
Let $F$ be a monotonic function with $(\eps,m)$-twist number $\lambda=\O{\frac{1}{\eps}\log n}$ for $\log m=\O{\log n}$. 
For any integer $i>0$, let $t_i$ be the update time during which the counter $a$ in \algref{alg:framework} is first set to $i$. 
Then \algref{alg:framework} outputs a $\left(1+\frac{\eps}{32}\right)$-approximation to $F$ at all times $t_i$, with probability at least $1-\O{\frac{\delta\log n}{\eps}}$. 
\end{lemma}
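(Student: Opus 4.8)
The plan is to reduce correctness at the doubling times $t_i$ to the correctness of the individual subroutines $\calA$ and $\calB$, by showing that the algorithm behaves at these times essentially like the sketch-switching scheme of \cite{Ben-EliezerJWY20}. First I would observe that at each time $t_i$, by definition of \linref{lin:frame:init}, the algorithm has just switched to a fresh instance $\calA_{a}$ whose output $X$ first exceeded $2^{a-1}$; since each $\calA_a$ is an oblivious strong tracker for $F$ with accuracy $\eta = \frac{\eps}{64\zeta} = \O{\eps}$, conditioned on $\calA_a$ being correct on its (non-adaptively chosen) input, we have $Z_a = X \in (1\pm\eta)F(1:t_i)$, i.e., a $(1+\frac{\eps}{32})$-approximation. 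The output of the algorithm at time $t_i$ is $(1+\frac{b\eps}{8})Z_a$ with $b=0$ (just reset on \linref{lin:frame:init}), so it equals $Z_a$, which is the desired approximation.

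The crux is the union bound / adaptivity argument: I must argue that each $\calA_a$ is in fact run on an input that is independent of its internal randomness, so that its strong-tracking guarantee applies. This is the standard sketch-switching invariant: the stream prefix up to the time $\calA_a$ is first revealed (namely $t_a$) depends only on outputs of earlier instances $\calA_1,\ldots,\calA_{a-1}$ and the difference estimators indexed by $a'<a$, none of which share randomness with $\calA_a$; hence from $\calA_a$'s perspective the stream is oblivious. Formally I would set this up by induction on $i$: assuming all instances with index $<i$ were correct (which determines the times $t_1<\cdots<t_{i}$ deterministically from the adversary's responses), the input handed to $\calA_i$ is a fixed function of the transcript so far, and $\calA_i$'s fresh randomness makes its strong-tracking guarantee hold with probability $1-\delta$. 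The number of instances of $\calA$ that are ever created equals the number of times the counter $a$ increases, which is the $(\O{1},m)$-twist number of $F$; by \obsref{obs:flip:fp}-type reasoning (or directly, since $F$ is monotone and polynomially bounded and $\log m = \O{\log n}$) this is $\O{\log n}$. Similarly I'd need to account for the difference-estimator instances $\calB_{a,j}$: across all of the $t_i$ there are $\O{\lambda} = \O{\frac{1}{\eps}\log n}$ of them, each failing with probability $\delta$. A union bound over all these $\O{\frac{\log n}{\eps}}$ randomized subroutines gives failure probability $\O{\frac{\delta\log n}{\eps}}$, as claimed.

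I would then need to close the small gap that the definition of $t_i$ references the \emph{counter} $a$ being set to $i$, not $F$ literally crossing $2^i$; because the trigger is $X > 2^{a}$ with $X$ an $\eta$-approximation to $F$, the true value $F(1:t_i)$ lies in $[2^{i-1}/(1+\eta),\, (1+\eta)\cdot\text{something}]$ — but this is exactly fine, since all I need is that $Z_a=X$ is a $(1+\eta) \le (1+\frac{\eps}{32})$-approximation to $F(1:t_i)$ at that particular time, which follows directly from $\calA_i$'s correctness at time $t_i$, with no reference to $2^i$ needed at all. So this "gap" is vacuous once I phrase the conclusion in terms of the strong-tracking guarantee of $\calA_i$ evaluated at $t_i$.

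\textbf{Main obstacle.} The delicate point is the adaptivity bookkeeping: carefully arguing that \emph{every} instance $\calA_a$ sees a stream prefix that is a deterministic function of the adversary's (adaptive) queries and of the revealed outputs of strictly-earlier-indexed subroutines only, so that its obliviousness assumption is genuinely satisfied. This requires pinning down exactly what information has been "revealed" to the adversary by time $t_a$ — namely the algorithm's outputs on \linref{lin:final:round}, which before the first switch are all $Z_{a-1}$ (already frozen) — and checking that the lower-layer difference estimators $\calB_{a',\cdot}$ with $a'<a$, whose outputs $Z_{a',k}$ were folded into those earlier outputs, use randomness disjoint from $\calA_a$. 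Everything else (the error arithmetic, the count of subroutines, the union bound) is routine.
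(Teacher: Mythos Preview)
Your proposal is correct and follows the same approach as the paper: observe that at time $t_i$ the freshly-switched-to instance $\calA_i$ has seen an input independent of its randomness (since nothing it produced was revealed before), invoke its strong-tracking guarantee, and union-bound over all such instances. Two small remarks. First, the paper's proof is terser about the adaptivity point---it simply notes that ``all previous outputs of the algorithm have not used $\calA_i$''---whereas you set up an explicit induction; both are fine, and your version is just more careful. Second, your accounting of the difference estimators $\calB_{a,j}$ in the union bound is unnecessary for \emph{this} lemma: at time $t_i$ the counter $b$ has just been reset to $0$ and the output is exactly $Z_a$, which depends only on $\calA_i$; the difference estimators on the just-reset interval $[t,t]$ contribute $0$ deterministically. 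The paper instead gets the $\O{\frac{\delta\log n}{\eps}}$ bound by crudely upper-bounding the number of times $a$ increments by the $(\eps,m)$-twist number $\lambda=\O{\frac{\log n}{\eps}}$ (rather than the tighter $\O{\log n}$ you observe), and union-bounding over only the $\calA$ instances. Either route yields the stated probability bound.
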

\begin{proof}
Note that by definition, the counter $a$ in \algref{alg:framework} is incremented from $i-1$ to $i$ in round $t_i$. 
Thus all previous outputs of \algref{alg:framework} have not used the subroutine $\calA_i$. 
Hence, the adversarial input is independent of the randomness of $\calA_i$. 
Thus by the correctness guarantee of the algorithm, $\calA_i$ outputs a $\left(1+\frac{\eps}{32}\right)$-approximation to $F$ at time $t_i$, with probability $1-\delta$. 
Since $F$ has twist number $\lambda=\O{\frac{1}{\eps}\log n}$, then there are at most $\O{\frac{1}{\eps}\log n}$ update times $t_i$ in which the counter $a$ is increased. 
By a union bound, \algref{alg:framework} outputs a $\left(1+\frac{\eps}{32}\right)$-approximation to $F$ at all times $t_i$, with probability at least $1-\O{\frac{\delta\log n}{\eps}}$. 
\end{proof}

For any integer $i>0$, let $t_i$ be the update time during which the counter $a$ in \algref{alg:framework} is first set to $i$. 
Since \lemref{lem:constant} shows correctness at all times $t_i$, it remains to analyze the behavior of \algref{alg:framework} between each of the times $t_i$ and $t_{i+1}$. 

\begin{lemma}[Geometric bounds on splitting times]
\lemlab{lem:time:geometric}
For each integer $j\ge 0$ and a fixed integer $i$, let $u_{i,j}$ be the last round for which $Z_{i,j}$ in \algref{alg:framework} is defined. 
With probability at least $1-\delta\cdot\poly\left(\frac{1}{\eps},\log n\right)$, we have $F(1,u_{i,j})-F(1,t_{i,j})\le\frac{2^j\eps}{64}\cdot F(1,t_i)$ for each positive integer $j>1$.
\end{lemma}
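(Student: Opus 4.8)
The plan is to condition on a single high-probability ``clean'' event and then argue deterministically, reducing the statement to the bit-manipulation bookkeeping in \algref{alg:framework}. Fix the phase index $i$ --- the rounds between $t_i$ (when the counter $a$ is set to $i$) and $t_{i+1}$ --- and enumerate every sketch \algref{alg:framework} instantiates during it: the strong tracker $\calA_{i+1}$, and at each granularity $k\in[\beta]$ one fresh copy of the difference estimator $\calB$ per level-$k$ refresh of the counter $b$. Since a phase corresponds to $F$ roughly doubling, $b$ takes $O(1/\eps)$ values, so there are $O(2^{\beta-k})$ level-$k$ refreshes and $\poly(1/\eps,\log n)$ sketches overall, each failing with probability at most $\delta$; here it is essential that \defref{def:diff:est:pre} gives a \emph{strong}-tracking guarantee, so a single instance is simultaneously correct at all relevant splitting times rather than only pointwise. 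A union bound yields a clean event $\calE$ with $\Pr[\calE]\ge 1-\delta\cdot\poly(1/\eps,\log n)$, and everything below is deterministic on $\calE$.

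On $\calE$ I would next show that the counter $b$ faithfully tracks $F$: writing $Z_i$ for the value stored when $a$ is set to $i$, I claim that immediately after $b$ is incremented to $c$ one has $F(1,t)=(1\pm O(\eps))(1+\tfrac{c\eps}{8})Z_i$, and $F(1,t)$ stays in the corresponding window until the next increment. The base case $c=0$ is \lemref{lem:constant} together with $Z_i=(1\pm\eta)F(1,t_i)$. For the inductive step, $b$ is incremented exactly when $\estimateF$ crosses the next additive threshold, and by \frameref{frame:stitch} the value $\estimateF$ returns is a $(1+O(\eps))$-approximation to $F(1,t)$ \emph{provided} every difference estimator it stitches together has so far been queried within the window demanded by \defref{def:diff:est:pre} --- which is exactly what the inductive hypothesis, combined with the choice of the $\gamma_k$, certifies. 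I expect this joint induction --- ``$b$ tracks $F$'' and ``every difference estimator is used in range'' bootstrapping each other along stream time --- to be the main obstacle: one has to order the induction carefully to break the apparent circularity, and control how the $O(\eps)$ error compounds across the $\beta=O(\log\frac1\eps)$ stitched pieces.

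It then remains to combine this with a purely combinatorial fact about \algref{alg:framework}: the slot $Z_{i,j}$ is (re)assigned exactly at rounds where $\lsb(b,1)=j$, and at such a round the associated splitting time $t_{i,j}$ is the round at which $b$ last took a value divisible by $2^{j-1}$ --- that is, $2^{j-1}$ increments of $b$ earlier (or $t_i$, if there was none). Hence if $b^\star$ is the counter value at the last round $u_{i,j}$ on which $Z_{i,j}$ is (re)assigned during phase $i$, the splitting time $t_{i,j}$ corresponds to counter value $b^\star-2^{j-1}$, and the tracking statement gives
\[
F(1,u_{i,j})-F(1,t_{i,j})\ \le\ (1+O(\eps))\Big(\tfrac{b^\star\eps}{8}-\tfrac{(b^\star-2^{j-1})\eps}{8}\Big)Z_i\ =\ (1+O(\eps))\,\tfrac{2^{j-1}\eps}{8}\,Z_i .
\]
Since $Z_i=(1+O(\eps))F(1,t_i)$ and $2^{j-1}\eps=O(1)$, plugging in $\eta=\tfrac{\eps}{64\zeta}$, the step size $\tfrac\eps8$, and $\beta=O(\log\frac1\eps)$ to absorb the lower-order terms bounds the right-hand side by $\tfrac{2^j\eps}{64}F(1,t_i)$ for every $j>1$ (the case $j=1$, a single-increment span, is excluded because the available slack is too tight there). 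Combining this with $\Pr[\calE]$ gives the stated probability.
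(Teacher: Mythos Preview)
Your high-level plan --- condition on a clean event, induct along the counter $b$, and read the $2^{j-1}$-increment span off the bit structure --- is the same skeleton the paper uses. The gap is in the final displayed inequality. From a tracking statement of the form $F(1,t)=(1\pm O(\eps))(1+\tfrac{c\eps}{8})Z_i$ you \emph{cannot} deduce
\[
F(1,u_{i,j})-F(1,t_{i,j})\ \le\ (1+O(\eps))\,\tfrac{2^{j-1}\eps}{8}\,Z_i .
\]
Each endpoint is a $\Theta(Z_i)$ quantity known to absolute error $O(\eps)Z_i$; subtracting two such quantities leaves an absolute $O(\eps)Z_i$ error on the difference, not a relative one. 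Since for small $j$ (already $j=2$) the difference itself is only $\Theta(\eps)Z_i$, this error term is of the same order as the quantity you are trying to bound, and the inequality as written does not follow. Excluding only $j=1$ does not fix this.

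The paper sidesteps the subtraction entirely. It inducts on $b$ and argues by contradiction at each step: if $F(1,u_{i,z_k})-F(1,t_{i,z_k})$ exceeded the threshold, then the \emph{single} level-$z_k$ difference estimator $Z_{i,z_k}$ frozen at $u_{i,z_k}$ would, by its own $\eta_{z_k}$-accuracy alone, output a value larger than $\tfrac{(\ell-u)\eps}{8}Z_i$. Adding this to $X_u$ --- where $u$ is $\ell$ with its least significant bit cleared, and the bound $X_u\ge(1+\tfrac{u\eps}{8})Z_i$ is an \emph{exact} consequence of the increment rule at time $t_i^{(u)}$, not an approximation --- pushes $X_\ell$ above $(1+\tfrac{\ell\eps}{8})Z_i$, which would have forced $b$ past $\ell$ at time $u_{i,j}$: contradiction. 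Only one estimator's error enters, and no two stitched estimates are ever subtracted, so the additive loss you incur never arises. Your circularity concern is handled by exactly this induction on $b$: the hypothesis at $u$ both certifies the in-range condition for the estimators feeding into $X_u$ and supplies the one-sided lower bound on $X_u$ needed for the contradiction at $\ell$.
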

\begin{proof}
As before, let $t_i$ be the update time during which the counter $a$ in \algref{alg:framework} is first set to $i$ for any integer $i>0$. 
Let $t_{i,j}$ be the first round such that the counter $b$ in \algref{alg:framework} is first set to $j$, and we use the convention that $t_{i,0}=t_i$.
Let $\mathcal{E}_1$ be the event that \algref{alg:framework} outputs a $\left(1+\frac{\eps}{32}\right)$-approximation to the value of $F$ at all times $t_i$, so that by \lemref{lem:constant}, $\mathcal{E}_1$ holds with probability at least $1-\O{\frac{\delta\log n}{\eps}}$. 
For the remainder of the proof, we fix the integer $i$ and show correctness between $t_i$ and $t_{i+1}$. 
Note that conditioning on $\mathcal{E}_1$, we have correctness at times $t_i$ and $t_{i+1}$. 

Let $\mathcal{E}_2$ be the event that the first $\poly\left(\frac{1}{\eps},\log n\right)$ instances of $\calB$ and $\calA$ in \algref{alg:framework} are correct, so that $\PPr{\mathcal{E}_2}\ge1-\delta\cdot\poly\left(\frac{1}{\eps},\log n\right)$ by a union bound. 
We show by induction on $b$ that conditioned on $\mathcal{E}_1$ and $\mathcal{E}_2$, we have that $F(1,u_{i,j})-F(1,t_{i,j})\le\frac{1}{2^{\beta-j-3}}F(1,t_i)$ for $j\ge 1$. 

For the base case $b=1$, only $Z_{i,1}$ is defined. 	
Suppose by way of contradiction that $F(1,u_{i,1})-F(1,t_{i,1})>\frac{1}{2^{\beta-j-3}}\cdot F(1,t_i)$. 
Since $\beta=\ceil{\log\frac{8}{\eps}}$, then $F(1,u_{i,1})-F(1,t_{i,1})>\frac{\eps}{2}\cdot F(1,t_i)$. 
Observe that $\calB_{i,1}(1,t_{i,1},u_{i,1},\gamma_1,\eta_1,\delta)$ gives an $\eta_1\cdot F(1,t_i)$ additive approximation to $F(1,u_{i,1})-F(1,t_{i,1})$. 
Hence, the estimate $Z_{i,1}$ of $F(1,u_{i,1})-F(1,t_{i,1})$ from $\estimateF$ is at least 
\[Z_{i,1}\ge F(1,u_{i,1})-F(1,t_{i,1})-\eta_1\cdot F(1,t_i)>\frac{\eps}{3}\cdot F(1,t_i),\]
since $\eta_1=\frac{\eta}{\varphi^{\beta-1}}<\frac{\eps}{3}$ for $\eta=\frac{\eps}{64\zeta}$ and $\varphi>1$. 
But this contradicts the fact that the counter $b$ is set to $j=1$, since 
\[Z_i+Z_{i,1}>\left(1+\frac{\eps}{4}\right)Z_i,\]
since $Z_i\le(1+\eps)\cdot F(1,t_i)$ conditioned on $\mathcal{E}_1$. 
Thus we have $F(1,u_{i,1})-F(1,t_{i,1})>\frac{1}{2^{\beta-j-3}}\cdot F(1,t_i)$, which completes the base case $j=1$. 

For a fixed value $\ell>0$ of the counter $b$, let $k=\numbits(\ell)$ and for $x\in[k]$, let $z_x=\lsb(\ell,k-x+1)$, so that $z_1>\ldots>z_k$ are the nonzero bits of $\ell$. 
Let $u=\sum_{x=1}^{k-1} 2^{z_x-1}\cdot z_x$ be simply the decimal representation of the binary number resulting from changing the least significant bit of $\ell$ to zero. 
Moreover, if $X_u$ is the output at time $t^{(u)}_i$, then we have $X_u\ge\left(1+\frac{u\eps}{8}\right)\cdot Z_i$, since the counter $b$ is first set to $u$ at time $t^{(u)}_i$. 
Note that $Z_{i,z_x}$ has not been changed from time $t^{(u)}_i$ to $t^{(\ell)}_i$ for all $x\in[k-1]$. 
Since $X_{\ell}=\sum_{x=1}^{k}Z_{i,z_x}$, then we have 
\[X_{\ell}=Z_{i,z_k}+X_u\ge Z_{i,z_k}+\left(1+\frac{u\eps}{8}\right)\cdot Z_i.\]
Suppose by way of contradiction that $F(1,u_{i,z_k})-F(1,t_{i,z_k})>\frac{1}{2^{\beta-z_k-3}}\cdot F(1,t_i)$. 
Recall that $Z_{i,z_k}$ is an additive $2\eta_{z_k}F(1,t_i)$ approximation to $F(1,u_{i,z_k})-F(1,t_{i,z_k})$. 
Hence we have
\[Z_{i,z_k}>\frac{1}{2^{\beta-z_k-3}}\cdot F(1,t_i)-2\eta_{z_k}F(1,t_i)>\frac{2^{z_k}\cdot\eps}{3}\cdot F(1,t_i),\]
since $\beta=\ceil{\log\frac{8}{\eps}}$ and $\eta_{z_k}=\frac{\eta}{\varphi^{\beta-z_k}}<\frac{\eps}{3}$ for $\varphi>1$. 
Then we have 
\[X_{\ell}=Z_{i,z_k}+X_u>\frac{2^{z_k}\cdot\eps}{3}\cdot F(1,t_i)+\left(1+\frac{u\eps}{8}\right)\cdot Z_i>\left(1+\frac{\ell\eps}{8}\right)\cdot Z_i,\]
for sufficiently small constant $\eps>0$, since $2^{z_k}\ge(u-\ell)$ and $F(1,t_i)\ge(1-\eps)Z_i$ conditioned on $\mathcal{E}_1$. 
But this contradicts the fact that the counter $b$ is set to $\ell$ at time $t^{(\ell)}_i$, since 
\[X_{\ell}>\left(1+\frac{\ell\eps}{8}\right)\cdot Z_i.\]
Thus we have $F(1,u_{i,z_k})-F(1,t_{i,z_k})\le\frac{1}{2^{\beta-z_k-3}}\cdot F(1,t_i)$, which completes the inductive step. 
The result then follows from observing that for $\beta=\ceil{\log\frac{8}{\eps}}$, we have that $\frac{1}{2^{\beta-j-3}}F(1,t_i)\le\frac{2^j\eps}{64}$. 
\end{proof}

\begin{lemma}[Correctness on non-adaptive streams]
\lemlab{lem:frame:oblivious}
With probability at least $1-\delta\cdot\poly\left(\frac{1}{\eps},\log n\right)$, \algref{alg:framework} outputs a $(1+\eps)$-approximation to $F$ at all times on a non-adaptive stream.
\end{lemma}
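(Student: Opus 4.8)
The plan is to fix a single ``good'' event, of probability $1-\delta\cdot\poly(\tfrac1\eps,\log n)$, on which I verify correctness deterministically at every time step, partitioning the stream into the intervals $(t_i,t_{i+1}]$ delimited by the times $t_i$ at which the counter $a$ is incremented. I would condition on three things: the event $\mathcal{E}_1$ of \lemref{lem:constant} (so $Z_a$ is a $(1+\tfrac{\eps}{32})$-approximation to $F(1,t_i)$ at every such $t_i$); the event $\mathcal{E}_{\mathrm{geom}}$ of \lemref{lem:time:geometric} (so $F(1,u_{i,j})-F(1,t_{i,j})\le\tfrac{2^j\eps}{64}F(1,t_i)$ for all relevant $i,j$); and the event $\mathcal{E}_2$ that every instance of $\calA$ and $\calB$ ever created by \algref{alg:framework} is correct. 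On a non-adaptive stream there is no adversary, so $\mathcal{E}_2$ is a plain union bound over those instances; crucially, since $\calA$ is a strong tracker and \defref{def:diff:est:pre} makes $\calB$ correct \emph{simultaneously for all} $t\ge t_2$, no union bound over the $m$ time steps is needed. Because $F$ is monotone and polynomially bounded the counter $a$ takes $\poly(\tfrac1\eps,\log n)$ values, and between consecutive $t_i$ the counter $b$ takes $O(\tfrac1\eps)$ values (once $1+\tfrac{b\eps}{8}$ reaches $2$, $F$ has doubled and $a$ increments), so only $\poly(\tfrac1\eps,\log n)$ subroutines are ever created and all three events hold simultaneously with the stated probability.

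Correctness at the times $t_i$ is immediate from $\mathcal{E}_1$. Fix $i$ and $t\in(t_i,t_{i+1})$, let $b,a$ be the current counter values, and write $b+1$ in binary with nonzero bits $z_1>\dots>z_k$. The crux reduces to one subclaim: \emph{$\estimateF$ at step $t$ is a $(1+\tfrac{\eps}{16})$-approximation to $F(1,t)$.} Granting this, I recover the lemma as follows. Let $t'\le t$ be the step at which $b$ was last set to its current value; then $\estimateF(t')>(1+\tfrac{b\eps}{8})Z_a$, so by the subclaim and monotonicity of $F$, $F(1,t)\ge F(1,t')>(1-\tfrac{\eps}{16})(1+\tfrac{b\eps}{8})Z_a$. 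Since $b$ is not incremented at step $t$, $\estimateF(t)\le(1+\tfrac{(b+1)\eps}{8})Z_a$, hence $F(1,t)\le(1+\tfrac{\eps}{16})^{-1}(1+\tfrac{(b+1)\eps}{8})Z_a$. As $b=O(\tfrac1\eps)$ the two thresholds differ by a $(1+O(\eps))$ factor, so the output $(1+\tfrac{b\eps}{8})Z_a$ is a $(1+O(\eps))$-approximation to $F(1,t)$; rescaling $\eps$ by a constant gives the lemma.

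It remains to prove the subclaim, which is exactly an application of \frameref{frame:stitch}. Inspecting $\estimateF$, its output is $Z_a+\sum_{x=1}^{k-1}Z_{i,z_x}+\calB_{i,j}(1,t_{i,z_k},t,\gamma_{z_k},\eta_{z_k},\delta)$, where $Z_a$ is a $(1+\eta)$-approximation to $F(1,t_i)$ and each remaining term is a $(\gamma_{z_x},\eta_{z_x},\delta)$-difference estimator for an increment of $F(1,\cdot)$ between two bookkeeping times. Two ingredients are needed: (a) the bookkeeping times $t_{i,z_1},\dots,t_{i,z_k}$ and the frozen endpoints $u_{i,z_1},\dots,u_{i,z_{k-1}}$ chain up so that $F(1,t_i)$ plus the $k$ differences telescopes exactly to $F(1,t)$; and (b) the hypotheses of each difference estimator hold and the additive errors sum to at most $\tfrac{\eps}{16}F(1,t_i)\le\tfrac{\eps}{16}F(1,t)$. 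Part (b) is a parameter computation: since $b+1\le 2^{\beta}$ all $z_x\le\beta$ so $\gamma_{z_x},\eta_{z_x}$ are defined; the constraints $F(\text{suffix})\le\gamma_{z_x}F(\text{prefix})$ and $F(1,u_{i,z_x})-F(1,t_{i,z_x})\le\gamma_{z_x}F(1,t_{i,z_x})$ follow from $\mathcal{E}_{\mathrm{geom}}$, the choice $\gamma_j=2^{j-1}\eta$, and the definition of the twist number; and the additive errors $\eta\,F(1,t_i)+\sum_x 2\eta_{z_x}F(1,t_i)$ telescope to $O(\eps)F(1,t_i)$ by the geometric choice $\eta_j=\eta/\varphi^{\beta-j}$ (or $\eta/\beta$ when $C=1$) with $\varphi=2^{(C-1)/4}$ and $\eta=\tfrac{\eps}{64\zeta}$, using $F(1,t_{i,j})\le 2F(1,t_i)$; one also checks $\prod_x(1+\gamma_{z_x})\le 2$ and $\sum_x\eta_{z_x}\le\tfrac{\eps}{4}$, so \frameref{frame:stitch} applies.

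The main obstacle is part (a): proving that the binary-representation bookkeeping of the $t_{i,j}$'s and the selected levels $z_x$ always yields a telescoping chain. I would establish, by induction on the number of increments of $b$ (this is precisely the binary-tree picture of \figref{fig:tree}), the invariant that whenever the counter equals $\ell$ with nonzero bits $z_1>\dots>z_r$, the variables $Z_{a,z_1},\dots,Z_{a,z_r}$ are defined and estimate the increments of $F(1,\cdot)$ along a chain $t_a=m_0<m_1<\dots<m_r$, while each $t_{a,j}$ equals the chain endpoint from which the next activation of a level-$j$ estimator must start. The step $\ell\mapsto\ell+1$ clears the trailing ones of $\ell$ and sets the next zero bit, which corresponds exactly to discarding the top frozen pieces of the chain and extending it; verifying that \linref{lin:frame:reset} (resetting $t_{a,j}\gets t$ for $j\in[k]$) and the freezing of $Z_{a,k}$ preserve the invariant is the delicate case analysis. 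With the invariant in hand, $\estimateF(t)$ equals the telescoping sum over $m_0,\dots,m_r=m_{k-1}$ together with the one live term $\calB$ estimating $F(1,t)-F(1,m_{k-1})$, so \frameref{frame:stitch} yields the subclaim and the proof is complete.
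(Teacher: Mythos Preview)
Your proposal is correct and follows essentially the same approach as the paper: condition on the events of \lemref{lem:constant} and \lemref{lem:time:geometric} together with correctness of all $\calA,\calB$ instances, then use the binary decomposition of $b$ to telescope the frozen difference estimates and bound the accumulated additive error via the geometric choice of the $\eta_j$. The only cosmetic difference is that the paper argues correctness just at the switching times $t_{i,\ell}$ and invokes monotonicity of $F$ for the in-between times, whereas you prove the stronger subclaim that $\estimateF(t)$ is accurate at every $t$ and then sandwich the rounded output; your explicit induction establishing the telescoping-chain invariant is a detail the paper glosses over with the sentence ``$u_{i,z_x}=t_{i,z_{x+1}}$ for all $x\in[k]$''.
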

\begin{proof}
As before, let $t_i$ be the update time during which the counter $a$ in \algref{alg:framework} is first set to $i$ for any integer $i>0$. 
For each integer $j\ge 0$, let $u_{i,j}$ be the last round for which $Z_{i,j}$ in \algref{alg:framework} is defined. 
Let $t_{i,\ell}$ be the first round such that the counter $b$ in \algref{alg:framework} is first set to $\ell$, and we use the convention that $t_{i,0}=t_i$. 
Let $\mathcal{E}_1$ be the event that \algref{alg:framework} outputs a $\left(1+\frac{\eps}{32}\right)$-approximation to the value of $F$ at all times $t_i$, so that by \lemref{lem:constant}, $\mathcal{E}_1$ holds with probability at least $1-\O{\frac{\delta\log n}{\eps}}$. 
For the remainder of the proof, we fix the integer $i$ and show correctness between $t_i$ and $t_{i+1}$. 
Note that conditioning on $\mathcal{E}_1$, we have correctness at times $t_i$ and $t_{i+1}$. 
By the monotonicity of $F$, it suffices to show correctness at all times $t_{i,\ell}$ for each value $\ell$ obtained by the counter $b$ between times $t_i$ and $t_{i+1}$. 

Let $\ell$ be a fixed value of the counter $b$, $k=\numbits(\ell)$, and $z_x=\lsb(\ell,x)$ for $x\in[k]$ so that $z_1>\ldots>z_k$ are the nonzero bits in the binary representation of $\ell$. 
By \lemref{lem:time:geometric}, we have for all $x\in[k]$
\[F(1,u_{i,z_x})-F(1,t_{i,z_x})\le\frac{1}{2^{\beta-x-3}}F(1,t_i),\]
with probability at least $1-\delta\cdot\poly\left(\frac{1}{\eps},\log n\right)$. 
Since $u_{i,z_x}$ is the last round for which $Z_{i,z_x}$ in \algref{alg:framework} is defined and all times $t_{i,j}$ for $j\le z_x$ are reset at that round, then we have $u_{i,z_x}=t_{i,z_{x+1}}$ for all $x\in[k]$. 
Thus since $u_{i,z_k}=t_{i,\ell}$, we can decompose 
\[F(1,t_{i,\ell})=\sum_{x=1}^k\left(F(1,u_{i,z_x})-F(1,u_{i,z_{x-1}})\right)=\sum_{x=1}^k\left(F(1,u_{i,z_x})-F(1,t_{i,z_x})\right).\]
Recall that the value $X_{\ell}$ of $X$ output by $\estimateF$ at time $t_{i,\ell}$ satisfies
\[X_{\ell}=Z_i+\sum_{x=1}^k Z_{i,z_x}.\]
By \lemref{lem:constant}, $Z_i$ incurs at most $\left(1+\frac{\eps}{8}\right)$ multiplicative error to $F(1,t_{i,\ell})$ since $F(1,t_{i,\ell})\le 4F(t_i)$. 
Moreover, $Z_{i,z_k}$ is an additive $2\eta_{z_k}\cdot F(1,t_i)$ approximation to $F(1,u_{i,z_k})-F(1,t_{i,z_k})$. 
Hence, the total additive error of $X_{\ell}$ to $F(1,t_{i,\ell})$ is at most
\[\frac{\eps}{4}\cdot F(1,t_{i,\ell})+\sum_{x=1}^k 2\eta_{z_x}\cdot F(1,t_i).\]
Now if the $(\gamma,\eps,\delta)$-difference estimator uses space dependency $\frac{\gamma}{\eps^2}$, then $\eta_x=\frac{\eta}{\beta}$ for all $x\in[\beta]$, so that 
\[\sum_{x=1}^k 2\eta_{z_x}\cdot F(1,t_i)\le\sum_{x\in[\beta]}\eta_x\cdot F(1,t_i)\le\frac{\eps}{32}\cdot F(1,t_i)\]
and the error is at most $\frac{\eps}{2}\cdot F(1,t_{i,\ell})$. 
On the other hand, if the $(\gamma,\eps,\delta)$-difference estimator uses space dependency $\frac{\gamma^C}{\eps^2}$ for $C>1$, then $\eta_x=\frac{\eta}{\varphi^{\beta-x}}$ for $\varphi=2^{(C-1)/4}$ implies that the error is at most
\[\sum_{x=1}^k 2\eta_{z_x}\cdot F(1,t_i)\le\frac{\eps}{2}\cdot F(1,t_{i,\ell}).\] 
Therefore, $X_{\ell}$ is a $\left(1+\frac{\eps}{2}\right)$-approximation to $F(1,t_{i,\ell})$. 

Note that the final output at time $t_{i,\ell}$ by \algref{alg:framework} further incurs additive error at most $\frac{\eps}{8}\cdot Z_i$ due to the rounding $\left(1+\frac{b\eps}{8}\right)\cdot Z_a$ in the last step. 
By \lemref{lem:constant}, $Z_i\le\left(1+\frac{\eps}{8}\right)\cdot F(1,t_{i,\ell})$. 
Hence for sufficiently small constant $\eps>0$, we have that \algref{alg:framework} outputs a $(1+\eps)$-approximation to $F(1,t_{i,\ell})$, with probability at least $1-\delta\cdot\poly\left(\frac{1}{\eps},\log n\right)$.
\end{proof}

\begin{theorem}[Framework for adversarially robust algorithms on insertion-only streams]
\thmlab{thm:framework}
Let $\eps,\delta>0$ and $F$ be a monotonic function with $(\eps,m)$-twist number $\lambda=\O{\frac{\log n}{\eps}}$ on a stream of length $m$, with $\log m=\O{\log n}$. 
Suppose there exists a $(\gamma,\eps,\delta)$-difference estimator for $F$ that uses $\O{\frac{\gamma^C}{\eps^2}\cdot S_1(n,\delta,\eps)+S_2(n,\delta,\eps)}$ bits of space for some constant $C\ge 1$ and a strong tracker for $F$ that use $\O{\frac{1}{\eps^2}\cdot S_1(n,\delta,\eps)+S_2(n,\delta,\eps)}$ bits of space and functions $S_1,S_2$ that depend on $F$. 
Then there exists an adversarially robust streaming algorithm that outputs a $(1+\eps)$-approximation for $F$ that succeeds with constant probability. 
For $C>1$, the algorithm uses 
\[\O{\frac{1}{\eps^2}\log n\cdot S_1(n,\delta',\eps)+\frac{1}{\eps}\log n\log\frac{1}{\eps}\cdot S_2(n,\delta',\eps)+\frac{1}{\eps^2}\log^2 n}\]
bits of space, where $\delta'=\O{\frac{1}{\poly\left(\frac{1}{\eps},\,\log n\right)}}$. 
For $C=1$, the algorithm uses
\[\O{\frac{1}{\eps^2}\log n\log^3\frac{1}{\eps}\cdot S_1(n,\delta',\eps)+\frac{1}{\eps}\log n\log\frac{1}{\eps}\cdot S_2(n,\delta',\eps)+\frac{1}{\eps^2}\log^2 n}\]
bits of space. 
\end{theorem}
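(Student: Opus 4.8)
The plan is to derive \thmref{thm:framework} from two things already in hand: the non-adaptive correctness of \algref{alg:framework} proved in \lemref{lem:frame:oblivious}, and the geometric bounds on splitting times in \lemref{lem:time:geometric}. Concretely, I would (i) upgrade \lemref{lem:frame:oblivious} to hold against an adaptive adversary by a ``use-once'' (sketch-switching) argument in the spirit of \cite{Ben-EliezerJWY20}, and (ii) bound the working space of \algref{alg:framework}.

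\emph{Robustness.} Each instance of the strong tracker $\calA$ and each instance of the difference estimator $\calB$ that \algref{alg:framework} uses is incorporated into a \emph{revealed} quantity at most once: an $\calA$-instance only when it is frozen into some $Z_a$, and a $\calB$-instance only when it is frozen into some $Z_{a,k}$ (\linref{lin:frame:reset}); the one currently active $\calB$ is also read inside $\estimateF$, but only at the lowest active granularity, and it is frozen the next time that bit of the counter $b$ flips. Ordering these reveal events chronologically, I would prove by induction on this order that at the moment an instance $I$ is revealed, the prefix of the stream seen so far is a function of the adversary's strategy and of the values of previously revealed instances only, hence independent of $I$'s randomness; thus the prefix is oblivious from $I$'s viewpoint, and $I$'s accuracy guarantee applies to it. Two ingredients make this work: $\calA$ is a strong tracker and $\calB$ satisfies \defref{def:diff:est:pre} simultaneously over all $t$, so $I$ is accurate at the data-dependent reveal time rather than at a fixed time; and \lemref{lem:time:geometric} certifies that at each reveal time the difference-estimator smallness hypotheses $F(v+w_t)-F(v)\le\gamma_j F(v)$ and $F(w_t)\le\gamma_j F(v)$ indeed hold. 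Counting instances: $F$ doubles $O(\log n)$ times (it is polynomially bounded), so the counter $a$ takes $O(\log n)$ values; within each epoch $b$ takes $O(1/\eps)$ values, so $\sum_k N_k=O(1/\eps)$ difference estimators are used there; hence $O(\tfrac1\eps\log n)=O(\lambda)$ instances in total. Taking $\delta'=\Theta(\delta/\poly(\tfrac1\eps,\log n))$ small enough, a union bound over all of them fails with at most a small constant probability, and on its complement the proof of \lemref{lem:frame:oblivious} applies verbatim at every step (it uses only accuracy of the finitely many relevant instances), so \algref{alg:framework} is a $(1+\eps)$-approximation at all times, in particular at the end.

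\emph{Space.} Since every $\calA$- and $\calB$-instance has first argument $1$, it must be fed every update from the start of the stream, so all instances ever used are maintained simultaneously. The $O(\log n)$ strong trackers $\calA$, each using $O(\tfrac1{\eps^2}S_1+S_2)$ bits, contribute $O(\tfrac1{\eps^2}\log n\,S_1+\log n\,S_2)$. Each epoch contributes $N_k=O(\tfrac1{2^k\eps})$ level-$k$ difference estimators, $k\in[\beta]$ with $\beta=O(\log\tfrac1\eps)$, $\gamma_k=2^{k-1}\eta=\Theta(2^k\eps)$, each using $O(\tfrac{\gamma_k^C}{\eta_k^2}S_1+S_2)$ bits. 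For $C>1$, the choice $\eta_k=\eta/\varphi^{\beta-k}$ with $\varphi=2^{(C-1)/4}>1$ makes $N_k\tfrac{\gamma_k^C}{\eta_k^2}$ geometric in $k$ and dominated by $k=\beta$ (where $2^\beta=\Theta(1/\eps)$), so $\sum_k N_k\tfrac{\gamma_k^C}{\eta_k^2}=O(\tfrac1{\eps^2})$; multiplying the per-epoch cost $O(\tfrac1{\eps^2}S_1+\tfrac1\eps S_2)$ by the $O(\log n)$ epochs and adding $O(\tfrac1{\eps^2}\log^2 n)$ bits for counters, timestamps, and frozen scalars gives the stated $C>1$ bound. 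For $C=1$ the cancellation fails and one must take $\eta_k=\eta/\beta$, making $N_k\tfrac{\gamma_k}{\eta_k^2}=\Theta(\beta^2/\eps^2)$ independent of $k$; summing over the $\beta$ levels costs $\Theta(\beta^3/\eps^2)=O(\tfrac{\log^3(1/\eps)}{\eps^2})$ per epoch, and multiplying by $O(\log n)$ epochs (with the $\delta'$-dependence folded into $S_1,S_2$) gives the stated $C=1$ bound.

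\emph{Main obstacle.} The delicate part is the robustness step: certifying that each sub-sketch effectively sees an oblivious stream even though its reveal time is chosen adaptively. This is what forces the strong-tracking / all-$t$ strengthenings of $\calA$ and $\calB$, the appeal to \lemref{lem:time:geometric} so the difference-estimator smallness hypotheses genuinely hold at reveal times, and the bookkeeping that the reset rule ``$t_{a,j}\gets t$ for $j\in[k]$'' keeps, for every granularity $k$, a single never-yet-revealed $\calB$-instance running continuously since $t_{a,k}$, so that the quantity combined inside $\estimateF$ is the output of one fresh instance. Relative to that, the space accounting is a mechanical summation once $\gamma_k$ and $\eta_k$ are fixed as in \algref{alg:framework}.
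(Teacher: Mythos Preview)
Your proposal is correct and follows essentially the same approach as the paper: invoke \lemref{lem:frame:oblivious} for non-adaptive correctness, lift it to the adversarial setting via the sketch-switching observation that each $\calA_i$ and $\calB_{i,j}$ is revealed at most once (so its input is independent of its randomness), and then sum the per-level costs $N_k\cdot\bigl(\tfrac{\gamma_k^C}{\eta_k^2}S_1+S_2\bigr)$ over $k\in[\beta]$ and the $O(\log n)$ epochs. Your robustness discussion is more careful than the paper's (you explicitly flag why strong tracking and the all-$t$ guarantee of \defref{def:diff:est:pre} are needed at adaptively chosen reveal times, and why \lemref{lem:time:geometric} is what certifies the $\gamma$-smallness preconditions), but the underlying argument is the same.
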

\begin{proof}
Consider \algref{alg:framework}. 
\lemref{lem:frame:oblivious} proves correctness over non-adaptive streams. 
We now argue that \lemref{lem:frame:oblivious} holds over adversarial streams. 
The claim follows from a similar argument to the switch-a-sketch technique of \cite{Ben-EliezerJWY20}. 
Observe that each time either counter $a$ or $b$ increases, the subroutine $\estimateF$ causes \algref{alg:framework} to use new subroutines $\calB_{i,j}$ and $\calA_i$ that have not previously been revealed to the adversary. 
Thus the adversarial input is independent of the randomness of $\calB_{i,j}$ and $\calA_i$. 
Hence by the correctness guarantee over non-adaptive streams, \algref{alg:framework} outputs a $(1+\eps)$-approximation at all times. 

We now analyze the space complexity of \algref{alg:framework} for a difference estimator with space dependency $\frac{\gamma^C}{\eps^2}$ with $C>1$.   
Recall that $\estimateF$ uses instance $j=\flr{\frac{b+1}{2^k}}$ for $\calB_{i,j}$ for each ratio parameter $\gamma_k=2^{k-1}\eta$ and accuracy parameter $\eta_k=\frac{\eta}{\varphi^{\beta-k}}$, where $k\in[\beta]$ for $\beta=\O{\log\frac{1}{\eps}}$, $\varphi=2^{(C-1)/4}$, and $\eta=\frac{\eps}{64\zeta}$. 

By assumption, a single instance of $\calB$ uses 
\[\O{\frac{\gamma^C}{\eps^2}\cdot S_1(n,\delta,\eps)+S_2(n,\delta,\eps)}\]
bits of space. 
Since we require correctness of $\poly\left(\log n,\frac{1}{\eps}\right)$ instances of $\calB$, it suffices to set $\delta'=\O{\frac{1}{\poly\left(\frac{1}{\eps},\log n\right)}}$. 
For granularity $k$, we have $j=\flr{\frac{b+1}{2^k}}$ for $\calB_{i,j}$ so that there are at most $\frac{C_2}{2^k\eps}$ instances of $\calB_{i,j}$, for some absolute constant $C_2>0$. 
Thus for fixed $a$, $\beta\le\log\frac{8}{\eps}+1$, $\gamma_k=2^{k-1}\eta$, and accuracy parameter $\eta_k=\frac{\eta}{\varphi^{\beta-k}}$, the total space for each $\calB_{a,j}$ across the $\beta$ granularities is 
\begin{align*}
\sum_{k=1}^\beta\frac{C_3\gamma_k^C}{\eta_k^2 2^k\eps}\cdot S_1(n,\delta',\eps)+\frac{C_3}{2^k\eps}S_2(n,\delta',\eps)&=\sum_{k=1}^\beta\frac{C_3\gamma_{\beta-k}^C}{\eta_{\beta-k}^2 2^{\beta-k}\eps}\cdot S_1(n,\delta',\eps)+\frac{C_3}{2^k\eps}S_2(n,\delta',\eps)\\
&\le\sum_{k=1}^\beta\frac{C_4 (2^{\beta-k}\eps)^C\varphi^{2k}}{\eps^2\cdot2^{\beta-k}\eps}\cdot S_1(n,\delta',\eps)+\frac{C_4}{2^k\eps}\cdot S_2(n,\delta',\eps)\\
&\le\sum_{k=1}^\beta\frac{C_5 2^{-Ck}\varphi^{2k}}{\eps^2\cdot2^{-k}}\cdot S_1(n,\delta',\eps)+\frac{C_4}{2^k\eps}\cdot S_2(n,\delta',\eps),
\end{align*}
for some absolute constants $C_3,C_4,C_5>0$. 
Since $\varphi=2^{(C-1)/4}$ for $C>1$, then 
\begin{align*}
\sum_{k=1}^\beta\frac{C_3\gamma_k^C}{\eta_k^2 2^k\eps}\cdot S_1(n,\delta',\eps)+\frac{C_3}{2^k\eps}S_2(n,\delta',\eps)&\le\sum_{k=1}^\beta\frac{C_5 2^{k-Ck}2^{(Ck-k)/2}}{\eps^2}\cdot S_1(n,\delta',\eps)+\frac{C_4}{2^k\eps}\cdot S_2(n,\delta',\eps)\\
&\le\O{\frac{1}{\eps^2}\cdot S_1(n,\delta',\eps)+\frac{1}{\eps}\log\frac{1}{\eps}\cdot S_2(n,\delta',\eps)}. 
\end{align*}
Observe that for a stream with $(\eps,m)$-twist number $\lambda=\O{\frac{\log n}{\eps}}$, we have $a=\O{\log n}$, since the value of $F$ increases by a constant factor each time $a$ increases. 
Hence, accounting for $a=\O{\log n}$, then \algref{alg:framework} uses 
\[\O{\frac{1}{\eps^2}\log n\cdot S_1(n,\delta',\eps)+\frac{1}{\eps}\log n\log\frac{1}{\eps}\cdot S_2(n,\delta',\eps)}\]
bits of space in total for $C>1$ across each of the strong trackers and difference estimators. 

Finally, we observe that because we run $\O{\frac{1}{\eps^2}\log n}$ instances of the subroutines, it takes $\O{\frac{1}{\eps^2}\log^2 n}$ total additional bits of space to maintain the splitting times for each of the instances over the course of a stream of length $m$, with $\log m=\O{\log n}$, so that the total space is
\[\O{\frac{1}{\eps^2}\log n\cdot S_1(n,\delta',\eps)+\frac{1}{\eps}\log n\log\frac{1}{\eps}\cdot S_2(n,\delta',\eps)+\frac{1}{\eps^2}\log^2 n}.\]

We now analyze the space complexity for \algref{alg:framework} for a difference estimator with space dependency $\frac{\gamma^C}{\eps^2}$ with $C=1$. 
Instead of the accuracy parameter $\eta_k=\frac{\eta}{\varphi^{\beta-k}}$ for $C>1$, we now have $\eta_k=\frac{\eta}{\beta}$ for $C=1$
Thus for fixed $a$, $\beta\le\log\frac{8}{\eps}+1$, and $\gamma_k=2^{k-1}\eta$, there exist constants $C_6,C_7>0$ such that the total space for each $\calB_{a,j}$ across the $\beta$ granularities is 
\begin{align*}
\sum_{k=1}^\beta\frac{C_3\gamma_k^C}{\eta_k^2 2^k\eps}\cdot S_1(n,\delta',\eps)+\frac{C_3}{2^k\eps}S_2(n,\delta',\eps)&\le\sum_{k=1}^\beta\frac{C_6 (2^{k-1}\eps)}{\eta^2\beta^{-2}\cdot 2^k\eps}\cdot S_1(n,\delta',\eps)+\frac{C_6}{2^k\eps}\cdot S_2(n,\delta',\eps)\\
&\le\sum_{k=1}^\beta\frac{C_7\beta^2}{\eta^2}\cdot S_1(n,\delta',\eps)+\frac{C_7}{2^k\eps}\cdot S_2(n,\delta',\eps)\\
&\le\O{\frac{1}{\eps^2}\log^3\frac{1}{\eps}\cdot S_1(n,\delta',\eps)+\frac{1}{\eps}\log\frac{1}{\eps}\cdot S_2(n,\delta',\eps)}. 
\end{align*}
Again accounting for $a=\O{\log n}$ and the space it takes to maintain the splitting times, then \algref{alg:framework} uses 
\[\O{\frac{1}{\eps^2}\log n\log^3\frac{1}{\eps}\cdot S_1(n,\delta',\eps)+\frac{1}{\eps}\log n\log\frac{1}{\eps}\cdot S_2(n,\delta',\eps)+\frac{1}{\eps^2}\log^2 n}.\]
bits of space in total for $C>1$ across each of the strong trackers and difference estimators. 
\end{proof}

\section{Robust \texorpdfstring{$F_2$}{F2} Estimation}
\seclab{sec:F2}
In this section, we use the previous framework of \secref{sec:framework} to give an adversarially robust streaming algorithm for $F_2$ moment estimation. 
Recall that to apply \thmref{thm:framework}, we require an $F_2$ strong tracker and an $F_2$ difference estimator. 
We present these subroutines in this section. 
We further optimize our algorithm beyond the guarantees of \thmref{thm:framework} specifically for $F_p$ moments, so that our final space guarantees in \thmref{thm:robust:opt:F2} matches the best known $F_2$ algorithm on insertion-only streams, up to lower order $\polylog\frac{1}{\eps}$ terms.  
Finally, we show that our algorithm naturally extends to the problem of finding the $L_2$-heavy hitters, along with producing an estimate for the frequency of each heavy-hitter up to an additive $\O{\eps}\cdot L_2$ error. 

We first recall the following $F_2$ strong tracker. 
\begin{theorem}[Oblivious $F_2$ strong tracking]
\thmlab{thm:strong:F2}
\cite{BlasiokDN17}
Given an accuracy parameter $\eps>0$ and a failure probability $\delta\in(0,1)$, let $d=\O{\frac{1}{\eps^2}\left(\log\frac{1}{\eps}+\log\frac{1}{\delta}+\log\log n\right)}$. 
There exists an insertion-only streaming algorithm $\estimator$ that uses $\O{d\log n}$ space to provide $(\eps,\delta)$-strong $F_2$ tracking of an underlying frequency vector $f$. 
\end{theorem}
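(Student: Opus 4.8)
The plan is to implement $\estimator$ as the standard AMS linear sketch: maintain $\Pi f^{(t)}\in\mathbb{R}^d$, where $\Pi$ has i.i.d.\ $\pm\frac{1}{\sqrt d}$ (or Gaussian) entries and $d=\O{\frac1{\eps^2}\left(\log\frac1\eps+\log\log n+\log\frac1\delta\right)}$, and at each time $t$ output $X_t=\|\Pi f^{(t)}\|_2^2$. Linearity makes $\Pi f^{(t)}$ maintainable under insertions in $\O{d\log n}$ bits (using bounded-independence hashing for the seed, or Nisan's generator to simulate full independence). For a single fixed $t$, $\Ex{X_t}=\|f^{(t)}\|_2^2$ and a Hanson--Wright / moment bound gives $\PPr{|X_t-\|f^{(t)}\|_2^2|>\eps\|f^{(t)}\|_2^2}\le\delta'$ once $d=\Omega\!\left(\frac1{\eps^2}\log\frac1{\delta'}\right)$. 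All the difficulty is in upgrading this to hold at \emph{every} time $t\in[m]$ simultaneously while paying only $\log\frac1\eps+\log\log n$, rather than $\log m$, inside $d$.

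First I would reduce to $\O{\frac1\eps\log n}$ checkpoint times. Since the stream is insertion-only, $t\mapsto\|f^{(t)}\|_2^2$ is nondecreasing and lies in $[1,\poly(n)]$ once nonzero, so there are times $0=\tau_0<\tau_1<\cdots<\tau_N$, $N=\O{\frac1\eps\log n}$, with $\tau_{j+1}$ the first time satisfying $\|f^{(\tau_{j+1})}\|_2^2\ge(1+\eps)\|f^{(\tau_j)}\|_2^2$. Setting $\delta'=\frac{\delta}{2N}$, the stated value of $d$ suffices to get $|X_{\tau_j}-\|f^{(\tau_j)}\|_2^2|\le\eps\|f^{(\tau_j)}\|_2^2$ for all checkpoints $j$ with probability $\ge1-\frac\delta2$ by a union bound.

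Next, fix a block $[\tau_j,\tau_{j+1})$, write $u=f^{(\tau_j)}$ and $g^{(t)}=f^{(t)}-u\succeq0$ for $t$ in the block; monotonicity of the $\ell_2$ norm forces $\|g^{(t)}\|_2^2\le\|f^{(t)}\|_2^2-\|u\|_2^2\le2\eps\|u\|_2^2$. Expanding, $X_t=\|\Pi u\|_2^2+2\ip{\Pi u}{\Pi g^{(t)}}+\|\Pi g^{(t)}\|_2^2$, so after the checkpoint bound it remains to show $\sup_t\left(\big|\ip{\Pi u}{\Pi g^{(t)}}-\ip{u}{g^{(t)}}\big|+\big|\|\Pi g^{(t)}\|_2^2-\|g^{(t)}\|_2^2\big|\right)\le\O\eps\cdot\|u\|_2^2$ with probability $\ge1-\frac{\delta}{2N}$, since the true quantities $\ip{u}{g^{(t)}}$ and $\|g^{(t)}\|_2^2$ are themselves $\O\eps\cdot\|u\|_2^2$. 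A naive Cauchy--Schwarz on the empirical cross term gives only a $\sqrt\eps$ bound, so the hard part is a chaining argument: the increments $\{g^{(t)}:t\in[\tau_j,\tau_{j+1})\}$ form a coordinatewise-monotone staircase curve of $\ell_2$-diameter $\O{\sqrt\eps}\|u\|_2$, which has a small $\gamma_2$ functional, and the Krahmer--Mendelson--Rauhut bound for the supremum of the quadratic chaos $\sup_x\big|\|\Pi x\|_2^2-\|x\|_2^2\big|$ over this curve (together with its bilinear analogue against the fixed $u$) shows that with $d$ as above the supremum over the block is $\O\eps\cdot\|u\|_2^2$ except with probability $\frac{\delta}{2N}$, chaining costing only $\polylog$ factors in $d$. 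A union over the $N$ blocks controls all blocks with probability $\ge1-\frac\delta2$; since $\|u\|_2^2\le\|f^{(t)}\|_2^2$, rescaling $\eps$ by a constant yields $(\eps,\delta)$-strong $F_2$-tracking in $\O{d\log n}$ space.

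The main obstacle is precisely this last chaining step. The fixed-time AMS bound and the checkpoint reduction are routine; what makes the theorem nontrivial is bounding the \emph{supremum} of the quadratic and bilinear chaos processes along the monotone frequency curve — rather than settling for the lossy Cauchy--Schwarz estimate — which is exactly what eliminates the spurious $\log m$ factor. This is carried out in \cite{BravermanCIW16,BravermanCINWW17,BlasiokDN17}, and we invoke it directly; the same machinery also certifies that a Nisan-derandomized $\Pi$ works, since the chaining bound depends only on low-degree moments of $\Pi$.
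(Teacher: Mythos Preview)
The paper does not prove this theorem; it is stated as a citation to \cite{BlasiokDN17} and used as a black box. Your sketch is a faithful outline of the argument in that reference: the checkpoint reduction via monotonicity to $\O{\frac{1}{\eps}\log n}$ times, followed by a chaining bound on the supremum of the quadratic/bilinear chaos along the monotone increment curve within each block, is exactly the structure of the proof in \cite{BlasiokDN17} (building on \cite{BravermanCIW16,BravermanCINWW17}), so there is nothing to compare against in the present paper.
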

To define our difference estimator, we first note that ``good'' $F_2$ approximation to two vectors $u$ and $v$ also gives a ``good'' approximation to their inner product $\ip{u}{v}$. 

\begin{lemma}[AMS $F_2$ approximation gives inner product approximation]
\lemlab{lem:approx:ip}
Let vectors $u,v\in\mathbb{R}^n$ and $M\in\mathbb{R}^{d\times n}$ be a sketching matrix so that each entry $M_{i,j}$ of $M$ is a four-wise independent random sign scaled by $\frac{1}{\sqrt{d}}$ for $d=\O{\frac{1}{\eps^2}}$. 
Then
\[\Pr{|\langle u,v\rangle-\langle Mu, Mv\rangle|\le\eps\|u\|_2\|v\|_2}\ge\frac{2}{3}.\]
\end{lemma}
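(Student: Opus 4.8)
The plan is to analyze the random variable $\langle Mu, Mv\rangle$ directly through its first two moments and then apply Chebyshev's inequality. One might be tempted to invoke the polarization identity $\langle u,v\rangle = \tfrac14\bigl(\|u+v\|_2^2 - \|u-v\|_2^2\bigr)$ together with the standard AMS $F_2$ guarantee applied to $u+v$ and $u-v$, but that route yields an additive error scaling with $\|u\|_2^2+\|v\|_2^2$ rather than with $\|u\|_2\|v\|_2$ (consider $\|u\|_2 \gg \|v\|_2$), so a direct moment computation is the cleaner path.

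First I would write the $i$-th row of $M$ as $\tfrac{1}{\sqrt d}\sigma^{(i)}$ with $\sigma^{(i)}\in\{-1,+1\}^n$ a four-wise independent sign vector and the rows mutually independent, so that $\langle Mu, Mv\rangle = \tfrac1d\sum_{i=1}^d\sum_{j,k}\sigma^{(i)}_j\sigma^{(i)}_k u_j v_k$. Using only pairwise independence, $\Ex{\sigma^{(i)}_j\sigma^{(i)}_k}=\mathbf{1}[j=k]$, so each row contributes $\tfrac1d\langle u,v\rangle$ in expectation, giving $\Ex{\langle Mu, Mv\rangle}=\langle u,v\rangle$; the estimator is unbiased. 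Since the rows are independent, $\Var[\langle Mu, Mv\rangle]=d\cdot\Var[(Mu)_1(Mv)_1]$, so it suffices to bound a single per-row term. Here I would expand $\bigl((Mu)_1(Mv)_1\bigr)^2=\tfrac{1}{d^2}\sum_{j,k,l,m}\sigma_j\sigma_k\sigma_l\sigma_m\,u_jv_ku_lv_m$ and use four-wise independence: the expectation of the sign product is nonzero only when the four indices pair up, yielding the three pairings $\{j=k,\,l=m\}$, $\{j=l,\,k=m\}$, $\{j=m,\,k=l\}$, which contribute $\langle u,v\rangle^2$, $\|u\|_2^2\|v\|_2^2$, and $\langle u,v\rangle^2$ respectively, minus a correction of $2\sum_j u_j^2v_j^2$ for the all-equal diagonal term $j=k=l=m$ that the three pairings triple-count. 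Subtracting $\Ex{(Mu)_1(Mv)_1}^2=\tfrac{1}{d^2}\langle u,v\rangle^2$ gives $\Var[(Mu)_1(Mv)_1]=\tfrac{1}{d^2}\bigl(\langle u,v\rangle^2+\|u\|_2^2\|v\|_2^2-2\sum_j u_j^2v_j^2\bigr)$; bounding $\langle u,v\rangle^2\le\|u\|_2^2\|v\|_2^2$ by Cauchy--Schwarz and discarding the nonnegative $\sum_j u_j^2v_j^2$ term yields $\Var[(Mu)_1(Mv)_1]\le\tfrac{2}{d^2}\|u\|_2^2\|v\|_2^2$, hence $\Var[\langle Mu, Mv\rangle]\le\tfrac2d\|u\|_2^2\|v\|_2^2$.

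Chebyshev's inequality then gives $\Pr{|\langle Mu, Mv\rangle-\langle u,v\rangle|\ge\eps\|u\|_2\|v\|_2}\le\tfrac{2}{d\eps^2}$, which is at most $\tfrac13$ once $d\ge 6/\eps^2$, i.e.\ $d=\O{1/\eps^2}$, as claimed. The main obstacle is the careful bookkeeping in the fourth-moment expansion --- in particular correctly accounting for the over-counting of the all-equal index term across the three index pairings, which is what makes the variance bound come out clean; the rest is routine. It is worth noting that the argument uses four-wise independence of the signs only within each row, with independence \emph{across} rows being what lets the per-row variances add, matching the hypothesis on $M$.
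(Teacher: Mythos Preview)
Your proof is correct and follows essentially the same approach as the paper: compute the per-row expectation and variance of $\langle\sigma,u\rangle\langle\sigma,v\rangle$ (the paper isolates these as \lemref{lem:ams:ex} and \lemref{lem:ams:var}), use independence across the $d$ rows to average down the variance, and apply Chebyshev. Your fourth-moment bookkeeping---tracking all three index pairings and the diagonal overcount to obtain $\Var\le \tfrac{2}{d}\|u\|_2^2\|v\|_2^2$---is in fact more careful than the paper's stated per-row variance bound, but the overall argument is identical.
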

\begin{proof}
The proof is a standard variance reduction argument that follows immediately from \lemref{lem:approx:ip:all} and \lemref{lem:ams:ex} and Chebyshev's inequality, since $M$ can be viewed as taking the arithmetic mean of $d$ scaled sign vectors $s_1,\ldots,s_d$. 
\end{proof}

\begin{lemma}[Strong tracking of AMS inner product approximation]
\lemlab{lem:approx:ip:all}
Given vectors $0^n\preceq u_1\preceq u_2\preceq\ldots\preceq u_m\in\mathbb{R}^n$ whose entries are bounded by a polynomial in $n$, there exists an algorithm that uses a sketching matrix $M\in\mathbb{R}^{d\times n}$ with $d=\O{\frac{1}{\eps^2}\left(\log\frac{1}{\eps}+\log\frac{1}{\delta}+\log\log n\right)}$ such that for $m=\poly(n)$ and a fixed $v\in\mathbb{R}^n$ with $v\succeq 0^n$, 
\[|\ip{u_i}{v}-\ip{Mu_i}{Mv}|\le\eps\|u_i\|_2\|v\|_2,\]
simultaneously for all $i\in[m]$ with probability at least $1-\delta$. 
\end{lemma}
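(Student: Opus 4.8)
The plan is to follow the dyadic-decomposition-plus-chaining template used for $F_2$ strong tracking in \cite{BravermanCIW16,BravermanCINWW17,BlasiokDN17}, reducing the inner-product statement to three ordinary $F_2$-tracking statements via polarization. Take $M$ to be the AMS sketch matrix (with whatever degree of row independence the chaining analysis of \cite{BlasiokDN17} requires; $4$-wise suffices for the variance bound of \lemref{lem:approx:ip}), and maintain both $Mu_i$ and the frozen sketch $Mv$, which costs $\O{d\log n}$ bits since all entries are $\poly(n)$-bounded. First I would dispose of the trivial indices ($u_i=0^n$, where both sides vanish) and rescale so that $\|u_1\|_2\ge 1$ once nonzero; since $\|u_m\|_2^2\le\poly(n)$, the quantities $\|u_i\|_2^2$ span a dynamic range of $\poly(n)$, i.e.\ $L:=\O{\log n}$ dyadic scales. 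Let $\tau_\ell$ be the first index with $\|u_{\tau_\ell}\|_2^2\ge 2^\ell$, so that on the block $B_\ell:=[\tau_\ell,\tau_{\ell+1})$ we have $\|u_i\|_2^2\in[2^\ell,2^{\ell+1})$.

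On block $B_\ell$, set the scalar $c_\ell:=2^{\ell/2}/\|v\|_2$ and use the polarization identity
\[
\langle Mu_i, Mv\rangle - \langle u_i, v\rangle
= \frac{1}{2c_\ell}\Big[\big(\|Mu_i + c_\ell Mv\|_2^2 - \|u_i + c_\ell v\|_2^2\big) - \big(\|Mu_i\|_2^2 - \|u_i\|_2^2\big) - c_\ell^2\big(\|Mv\|_2^2 - \|v\|_2^2\big)\Big],
\]
which follows by expanding $\|Mu_i+c_\ell Mv\|_2^2$ and $\|u_i+c_\ell v\|_2^2$. The point of the rescaling is that on $B_\ell$ all three vectors $u_i+c_\ell v$, $u_i$, $c_\ell v$ have squared norm $\Theta(2^\ell)$, so a \emph{relative} $\eps'$-error $F_2$ guarantee on each translates, after dividing by $2c_\ell$ and using $\|u_i\|_2\in[2^{\ell/2},2^{(\ell+1)/2}]$, into an \emph{additive} $\O{\eps'}\cdot\|u_i\|_2\|v\|_2$ bound on the left-hand side; taking $\eps'=\Theta(\eps)$ gives the claim on $B_\ell$. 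Hence it suffices to establish $(\eps',\delta/(3L))$-strong $F_2$ tracking, restricted to $B_\ell$, for the monotone nonnegative sequences $\{u_i\}_{i\in B_\ell}$ and $\{u_i+c_\ell v\}_{i\in B_\ell}$ (the deterministic term $c_\ell v$ is tracked trivially), and then to union bound over the $\O{L}$ blocks. Note $\{u_i+c_\ell v\}_i$ is realized by prepending the fixed vector $c_\ell v$ to the stream, and $\|Mu_i+c_\ell Mv\|_2^2$ is computable from the maintained sketches by linearity.

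For the per-block $F_2$ tracking I would invoke the chaining estimates of \cite{BravermanCIW16,BravermanCINWW17,BlasiokDN17}. The key structural fact is that within $B_\ell$ the sequence barely moves: since $u_i\succeq u_{\tau_\ell}\succeq 0^n$ we get $\langle u_i,u_{\tau_\ell}\rangle\ge\|u_{\tau_\ell}\|_2^2$, hence $\|u_i-u_{\tau_\ell}\|_2^2\le\|u_i\|_2^2-\|u_{\tau_\ell}\|_2^2\le 2^\ell\le\|u_{\tau_\ell}\|_2^2$, and likewise $\|(u_i+c_\ell v)-(u_{\tau_\ell}+c_\ell v)\|_2\le\|u_{\tau_\ell}\|_2$. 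So one splits $\|Mu_i\|_2^2-\|u_i\|_2^2$ into the fixed quantity $\|Mu_{\tau_\ell}\|_2^2-\|u_{\tau_\ell}\|_2^2$, controlled by a single tail bound on the AMS chaos at $\tau_\ell$ (requiring $d=\Omega(\eps^{-2}\log(L/\delta))$, which is where the $\log\log n$ term enters), plus the fluctuation $\sup_{i\in B_\ell}\big|(\|Mu_i\|_2^2-\|u_i\|_2^2)-(\|Mu_{\tau_\ell}\|_2^2-\|u_{\tau_\ell}\|_2^2)\big|$, which is the supremum of a Rademacher chaos over a monotone chain of diameter $\le\|u_{\tau_\ell}\|_2$; the Dudley/generic-chaining bounds of the cited works control this by $\O{\eps'}\|u_{\tau_\ell}\|_2$ with the same $d$, the dyadic restriction ensuring the chaining integral contributes only a $\sqrt{\log(1/\eps)}$ factor rather than $\sqrt{\log n}$. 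Summing the two pieces and union bounding over the two sequences and the $\O{L}$ blocks, with $\delta'$ and $\eps'$ set appropriately, yields the lemma.

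The main obstacle is the within-block fluctuation estimate: bounding the supremum of the degree-two AMS chaos over the monotone chain $\{u_i-u_{\tau_\ell}\}_{i\in B_\ell}$ with only a $\log\log n$ (not $\log n$) dependence, while $M$ has limited independence. This is exactly the technical core of the $F_2$ strong-tracking results we are leaning on, so the real work is in verifying that their chaining estimates carry over after the polarization/rescaling reduction — in particular, that the fixed ``direction'' $Mv$ introduced by polarization does not disturb their argument — rather than in devising anything new.
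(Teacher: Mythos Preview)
Your proposal is correct but takes a substantially heavier route than the paper. The paper exploits one extra feature of the hypothesis that you never use: both $v$ and the $u_i$ are coordinate-wise nonnegative, so $\langle u_i,v\rangle$ (and $\|u_i\|_2$) are \emph{nondecreasing} in $i$. Given this, the paper simply marks $q=\O{\tfrac{1}{\eps}\log n}$ checkpoints $t_1<\dots<t_q$ where $\langle u_{t_j},v\rangle$ has grown by a $(1+\eps/2)$-factor, applies the pointwise AMS inner-product bound (\lemref{lem:ams:ex}, \lemref{lem:ams:var}) at each checkpoint with median-of-means to failure probability $\delta/q$, and union bounds; since $\log q=\O{\log\tfrac{1}{\eps}+\log\log n}$, this already gives $d=\O{\eps^{-2}(\log\tfrac{1}{\eps}+\log\tfrac{1}{\delta}+\log\log n)}$. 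No polarization, no dyadic blocking on $\|u_i\|_2$, and no chaining over a monotone chain are needed.

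Your polarization-plus-chaining reduction has the advantage of not relying on $v\succeq 0^n$ (so it would prove a more general statement where $\langle u_i,v\rangle$ need not be monotone), and it cleanly black-boxes the existing $F_2$ strong-tracking theorems. The cost is that you are importing the full Dudley/generic-chaining analysis of \cite{BlasiokDN17} for what, under the stated hypotheses, is a two-line monotonicity argument. If you keep your proof, it would be worth noting explicitly that the paper's nonnegativity assumption on $v$ makes all of this unnecessary.
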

\begin{proof}
Consider a sequence of indices $t_1<t_2<\ldots<t_q$ such that $t_{i+1}$ is the minimal index such that $\langle u_{t_{i+1}},v\rangle\ge\left(1+\frac{\eps}{2}\right)\langle u_{t_i},v\rangle$ and observe that $q=\O{\frac{1}{\eps}\log n}$ for $m=\poly(n)$ since each vector $u_i$ has polynomially bounded entries. 
Since $0^n\preceq u_1\preceq u_2\preceq\ldots\preceq u_m\in\mathbb{R}^n$ and thus both $\|u_1\|_2\le\|u_2\|_2\le\ldots\|u_m\|_2$ and $\langle u_1,v\rangle\le\langle u_2,v\rangle\le\ldots\le\langle u_m,v\rangle$, it suffices to show that 
\[|\ip{u_{t_i}}{v}-\ip{Mu_{t_i}}{Mv}|\le\eps\|u_{t_i}\|_2\|v\|_2,\]
for all $i\in[q]$. 
By \lemref{lem:ams:ex} and \lemref{lem:ams:var}, we have that for any 
\[\Pr{|\ip{u_{t_i}}{v}-\ip{Mu_{t_i}}{Mv}|\le\eps\|u_{t_i}\|_2\|v\|_2}\ge\frac{2}{3}\]
for a sketching matrix with $\O{\frac{1}{\eps^2}}$ rows. 
Thus by taking a sketching matrix with $d$ rows, for $d=\O{\frac{1}{\eps^2}\left(\log\frac{1}{\eps}+\log\frac{1}{\delta}+\log\log n\right)}$, and using a standard median-of-means approach, we can take a union bound over all $i\in[q]$. 
Therefore,
\[|\ip{u_i}{v}-\ip{Mu_i}{Mv}|\le\eps\|u_i\|_2\|v\|_2,\]
simultaneously for all $i\in[m]$ with probability at least $1-\delta$. 
\end{proof}

\begin{lemma}
\lemlab{lem:ams:ex}
Let $s\in\mathbb{R}^n$ be a sign vector $\{-1,+1\}^n$ with four-wise independent entries. 
For any $u,v\in\mathbb{R}^n$, we have
\[\Ex{\langle s,u\rangle\cdot\langle s,v\rangle}=\langle u,v\rangle.\]
\end{lemma}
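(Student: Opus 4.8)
The plan is to expand the product bilinearly and use linearity of expectation together with the independence of the sign entries. Write $\ip{s}{u}=\sum_{i=1}^n s_i u_i$ and $\ip{s}{v}=\sum_{j=1}^n s_j v_j$, so that
\[
\ip{s}{u}\cdot\ip{s}{v}=\sum_{i=1}^n\sum_{j=1}^n s_i s_j u_i v_j.
\]
Taking expectations and pulling the deterministic coefficients $u_i v_j$ outside, it suffices to evaluate $\Ex{s_i s_j}$ for each pair $(i,j)$.

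The key step is the case analysis on whether $i=j$. When $i=j$, we have $s_i^2=1$ deterministically (the entries are $\pm 1$), so $\Ex{s_i^2}=1$. When $i\neq j$, pairwise independence of the entries — which is implied by the assumed four-wise independence — gives $\Ex{s_i s_j}=\Ex{s_i}\Ex{s_j}$, and since each $s_i$ is a uniform sign we have $\Ex{s_i}=0$, hence $\Ex{s_i s_j}=0$. Substituting back, only the diagonal terms survive:
\[
\Ex{\ip{s}{u}\cdot\ip{s}{v}}=\sum_{i=1}^n u_i v_i=\ip{u}{v}.
\]

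There is no real obstacle here; the statement follows from elementary linearity and pairwise independence, and the four-wise independence hypothesis is included only because the companion variance bound (used alongside this lemma in \lemref{lem:approx:ip} and \lemref{lem:approx:ip:all}) requires it. The one point to be slightly careful about is that the sum over $(i,j)$ is finite, so interchanging expectation and summation is unproblematic, and that we genuinely only invoke independence of \emph{pairs} of coordinates, never higher moments, in this particular computation.
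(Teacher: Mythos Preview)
Your proof is correct and essentially identical to the paper's own argument: expand the product, apply linearity of expectation, and use that $\Ex{s_is_j}$ is $1$ when $i=j$ and $0$ otherwise. Your additional remark that only pairwise independence is actually needed here (with four-wise reserved for the variance bound in \lemref{lem:ams:var}) is a nice clarification beyond what the paper spells out.
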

\begin{proof}
By linearity of expectation, we have that
\begin{align*}
\Ex{\langle s,u\rangle\cdot\langle s,v\rangle}&=\Ex{\left(\sum_{i=1}^ns_iu_i\right)\left(\sum_{i=1}^ns_iv_i\right)}\\
&=\Ex{\sum_{i=1}^n\sum_{j=1}^n s_iu_is_jv_j}=\sum_{i=1}^n\sum_{j=1}^n\Ex{s_iu_is_jv_j}.
\end{align*}
Since the variables $s_i$ are $4$-wise independent sign variables $\{-1,+1\}$, then $\Ex{s_is_j}=1$ for $i=j$ and $\Ex{s_is_j}=0$ for $i\neq j$. 
Hence, 
\[\Ex{\langle s,u\rangle\cdot\langle s,v\rangle}=\sum_{i=1}^nu_iv_i=\langle u,v\rangle.\]
\end{proof}

\begin{lemma}
\lemlab{lem:ams:var}
Let $s\in\mathbb{R}^n$ be a sign vector $\{-1,+1\}^n$ with four-wise independent entries. 
For any $u,v\in\mathbb{R}^n$, we have
\[\Var\left(\langle s,u\rangle\cdot\langle s,v\rangle\right)\le(\langle u,v\rangle)^2.\]
\end{lemma}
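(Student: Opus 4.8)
The plan is a direct second-moment computation, exactly parallel to the proof of \lemref{lem:ams:ex} but carried one order higher. Write $Y:=\langle s,u\rangle\langle s,v\rangle=\sum_{i=1}^n\sum_{j=1}^n s_is_j\,u_iv_j$; by \lemref{lem:ams:ex} we already know $\Ex{Y}=\langle u,v\rangle$, so it remains to control $\Ex{Y^2}$. I would expand
\[
\Ex{Y^2}=\sum_{i,j,k,\ell}u_iv_ju_kv_\ell\;\Ex{s_is_js_ks_\ell}
\]
and use that the $s_m$ are $\pm1$ and four-wise independent, so that $\Ex{s_is_js_ks_\ell}=1$ precisely when every value in the multiset $\{i,j,k,\ell\}$ occurs an even number of times, and $\Ex{s_is_js_ks_\ell}=0$ otherwise.

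The only real work — and the only place one can slip — is the combinatorial bookkeeping of which quadruples survive and what weight they carry, remembering that the first and third summation slots carry $u$-weights and the second and fourth carry $v$-weights. The surviving quadruples are: the diagonal $i=j=k=\ell$, contributing $\sum_i u_i^2v_i^2$; and the three ``two equal pairs of distinct values'' patterns, namely $\{i=j,\,k=\ell\}$ and $\{i=\ell,\,j=k\}$, which each match a $u$-slot with a $v$-slot inside each pair and so each contribute $\sum_{i\ne k}u_iv_iu_kv_k=\langle u,v\rangle^2-\sum_i u_i^2v_i^2$, and $\{i=k,\,j=\ell\}$, which matches the two $u$-slots together and the two $v$-slots together and so contributes $\sum_{i\ne j}u_i^2v_j^2=\|u\|_2^2\|v\|_2^2-\sum_i u_i^2v_i^2$; here I use the elementary identity $\sum_{i\ne j}a_ib_j=\big(\sum_i a_i\big)\big(\sum_j b_j\big)-\sum_i a_ib_i$. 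Adding these yields $\Ex{Y^2}=2\langle u,v\rangle^2+\|u\|_2^2\|v\|_2^2-2\sum_i u_i^2v_i^2$, and subtracting $(\Ex{Y})^2=\langle u,v\rangle^2$ gives the exact variance
\[
\Var(Y)=\langle u,v\rangle^2+\|u\|_2^2\|v\|_2^2-2\sum_{i=1}^n u_i^2v_i^2 .
\]

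Finally I would bound this quantity: dropping the nonnegative term $2\sum_i u_i^2v_i^2$ and invoking Cauchy--Schwarz ($\langle u,v\rangle^2\le\|u\|_2^2\|v\|_2^2$) gives $\Var(Y)\le 2\,\|u\|_2^2\,\|v\|_2^2$, which is the form in which the variance is actually consumed downstream: in \lemref{lem:approx:ip:all} and \lemref{lem:approx:ip} one averages over the $d=\O{1/\eps^2}$ independent rows of $M$, so the variance of $\langle Mu,Mv\rangle$ drops to $\O{\eps^2}\,\|u\|_2^2\|v\|_2^2$, and Chebyshev delivers the $\eps\|u\|_2\|v\|_2$ additive error with probability at least $\tfrac23$. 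I do not anticipate any genuine obstacle beyond the index-pattern accounting in the middle step. (I note that a variance bound stated purely as $(\langle u,v\rangle)^2$ cannot hold verbatim, since $Y$ has strictly positive variance already when $u$ and $v$ are disjointly supported, i.e. $\langle u,v\rangle=0$; the exact formula above is what should be read off, and it is dominated by $2\|u\|_2^2\|v\|_2^2$ as just shown.)
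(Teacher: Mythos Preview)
Your approach is the same as the paper's---a direct second-moment expansion using four-wise independence---but your bookkeeping is correct where the paper's is not. The paper bounds $\Var(Y)\le\Ex{Y^2}$, expands the quadruple sum, and then writes only
\[
\Var(Y)\le\sum_{i=1}^n\sum_{j=1}^n u_iv_iu_jv_j=\langle u,v\rangle^2,
\]
which amounts to keeping just the $\{i=j,\,k=\ell\}$ pairing and silently dropping the other two. Your enumeration of all three pairing patterns and the resulting exact formula $\Var(Y)=\langle u,v\rangle^2+\|u\|_2^2\|v\|_2^2-2\sum_i u_i^2v_i^2$ is correct, as is your observation that the lemma as stated cannot hold (disjointly supported $u,v$ give positive variance but $\langle u,v\rangle=0$).

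Your corrected bound $\Var(Y)\le 2\|u\|_2^2\|v\|_2^2$ is exactly what the downstream applications (\lemref{lem:approx:ip} and \lemref{lem:approx:ip:all}) actually need, and the constant $2$ is absorbed into the $\O{1/\eps^2}$ row count, so nothing else in the paper is affected. In short: same method, your execution is right, the paper's is flawed, and you have already supplied the fix.
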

\begin{proof}
Since $\Var{\langle s,u\rangle\cdot\langle s,v\rangle}\le\Ex{\left(\langle s,u\rangle\cdot\langle s,v\rangle\right)^2}$, it follows by linearity of expectation that
\begin{align*}
\Var\left(\langle s,u\rangle\cdot\langle s,v\rangle\right)&\le\Ex{\left(\sum_{i=1}^ns_iu_i\right)^2\left(\sum_{i=1}^ns_iv_i\right)^2}\\
&=\Ex{\sum_{i=1}^n\sum_{j=1}^n\sum_{k=1}^n\sum_{\ell=1}^n s_is_js_ks_\ell u_iv_ju_kv_\ell}\\
&=\sum_{i=1}^n\sum_{j=1}^n\sum_{k=1}^n\sum_{\ell=1}^n\Ex{s_is_js_ks_\ell u_iv_ju_kv_\ell}.
\end{align*}
Since the variables $s_i$ are $4$-wise independent sign variables $\{-1,+1\}$, then $\Ex{s_is_js_ks_\ell}=1$ if $i,j,k,\ell$ consists of two (possibly not distinct) pairs of indices, e.g., $i=j$ and $k=\ell$, and $\Ex{s_is_js_ks_\ell}=0$ otherwise. 
Therefore, 
\[\Var\left(\langle s,u\rangle\cdot\langle s,v\rangle\right)\le\sum_{i=1}^n\sum_{j=1}^n u_iv_iu_jv_j=\left(\langle u,v\rangle\right)^2.\]
\end{proof}


\noindent
We now give our $F_2$ difference estimator using the inner product approximation property. 
\begin{lemma}[$F_2$ difference estimator]
\lemlab{lem:diff:est:F2}
There exists a $(\gamma,\eps,\delta)$-\emph{difference estimator} for $F_2$ that uses space
\[\O{\frac{\gamma\log n}{\eps^2}\left(\log\frac{1}{\eps}+\log\frac{1}{\delta}\right)}.\]
\end{lemma}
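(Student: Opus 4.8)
The plan is to use the exact identity $F_2(v+w_t)-F_2(v)=2\langle v,w_t\rangle+F_2(w_t)$ and to exploit the hypothesis $F_2(w_t)\le\gamma\,F_2(v)$ so that each of the two terms can be estimated far more cheaply than a full $(1+\eps)$-approximation would require. By Cauchy--Schwarz, $|\langle v,w_t\rangle|\le\|v\|_2\|w_t\|_2\le\sqrt{\gamma}\,F_2(v)$, so an estimate of $\langle v,w_t\rangle$ with additive error $\frac{\eps}{3\sqrt{\gamma}}\|v\|_2\|w_t\|_2$ already has additive error at most $\frac{\eps}{3}F_2(v)$; likewise, since $F_2(w_t)\le\gamma F_2(v)$, a $\bigl(1+\frac{\eps}{3\gamma}\bigr)$-multiplicative approximation of $F_2(w_t)$ has additive error at most $\frac{\eps}{3}F_2(v)$. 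Summing the three contributions yields the required additive $\eps\,F_2(v)$ error. This is precisely the ``small difference $\Rightarrow$ small variance $\Rightarrow$ fewer estimators'' phenomenon: the effective accuracy we need is only $\eps/\sqrt\gamma$ for the dot product and $\eps/\gamma$ for $F_2(w_t)$.

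Concretely, I would maintain a single AMS-style sketching matrix $M\in\mathbb{R}^{d\times n}$ with four-wise independent $\pm 1/\sqrt d$ entries, run online over the whole stream from $t_1$; at the splitting time $t_2$ I freeze $Mv$, and for $t\ge t_2$ I maintain $Mw_t = M(\text{prefix of }\calS\text{ from }t_1\text{ to }t)-Mv$ by linearity. The estimator at time $t$ is $\widehat D_t := 2\langle Mv, Mw_t\rangle + \|Mw_t\|_2^2$. Setting $d=\O{\frac{\gamma}{\eps^2}\bigl(\log\frac1\eps+\log\frac1\delta+\log\log n\bigr)}$, I would invoke \lemref{lem:approx:ip:all} with the fixed vector $v$ and the monotone sequence $0^n=w_{t_2}\preceq w_{t_2+1}\preceq\cdots$ (these are insertion-only frequency vectors, hence nondecreasing) at accuracy parameter $\frac{\eps}{3\sqrt\gamma}$; this certifies that, with probability at least $1-\tfrac\delta2$, $|\langle Mv,Mw_t\rangle-\langle v,w_t\rangle|\le\frac{\eps}{3\sqrt\gamma}\|v\|_2\|w_t\|_2\le\frac{\eps}{3}F_2(v)$ simultaneously for all $t\ge t_2$. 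For the second term, the same matrix $M$ (or, if independent randomness is preferred, a second sketch with $\O{\frac{\gamma^2}{\eps^2}\cdot\polylog}$ rows, which is no larger than $d$ since $\gamma\le1$) gives $\bigl(1+\frac{\eps}{3\gamma}\bigr)$-strong tracking of $F_2(w_t)$ via the AMS analysis underlying \thmref{thm:strong:F2}, so $|\,\|Mw_t\|_2^2 - F_2(w_t)|\le\frac{\eps}{3\gamma}F_2(w_t)\le\frac{\eps}{3}F_2(v)$ for all $t$ with probability at least $1-\tfrac\delta2$. A union bound and the triangle inequality give $|\widehat D_t - (F_2(v+w_t)-F_2(v))|\le\eps\,F_2(v)$ for all valid $t$ with probability at least $1-\delta$.

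For the space bound, the sketch consists of $d$ coordinates, each an integer of magnitude $\poly(n)$ (the entries of $v$ and $w_t$ are polynomially bounded), so $\O{d\log n}$ bits store $Mv$ and $Mw_t$, plus $\O{\log n}$ bits for the seed defining the four-wise independent entries; this is $\O{\frac{\gamma\log n}{\eps^2}\bigl(\log\frac1\eps+\log\frac1\delta+\log\log n\bigr)}$, and the $\log\log n$ term is dominated by the others in the regime of interest (in the framework $\delta$ is taken to be $1/\poly(\frac1\eps,\log n)$, so $\log\frac1\delta=\Omega(\log\log n)$), yielding the claimed $\O{\frac{\gamma\log n}{\eps^2}\bigl(\log\frac1\eps+\log\frac1\delta\bigr)}$.

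The routine parts---the first- and second-moment computations for AMS sketches---are already packaged in \lemref{lem:ams:ex}, \lemref{lem:ams:var}, \lemref{lem:approx:ip}, and \lemref{lem:approx:ip:all}. The one point needing care is that the guarantee must hold simultaneously for all $t\ge t_2$ rather than at a single fixed $t$: this is exactly what \lemref{lem:approx:ip:all} delivers for the inner-product term, since its proof passes to the $\O{\frac1{\eps'}\log n}$ breakpoints where $\langle w_t,v\rangle$ increases by a $(1+\eps')$ factor and union-bounds only there (thereby avoiding an extra $\log n$ factor), and an identical breakpoint argument handles the $F_2(w_t)$ term. One must also check that the difference-estimator hypotheses $F_2(w_t)\le\gamma F_2(v)$ and $F_2(v+w_t)-F_2(v)\le\gamma F_2(v)$ are used only to convert the relative errors of the two sketch estimates into the absolute $\eps\,F_2(v)$ form; at times $t$ where they fail the output is still well-defined and simply carries no guarantee, which is all the definition requires.
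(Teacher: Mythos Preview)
Your proposal is correct and follows essentially the same approach as the paper: decompose $F_2(v+w_t)-F_2(v)=2\langle v,w_t\rangle+F_2(w_t)$, estimate both pieces with a shared AMS sketch of dimension $d=\O{\gamma/\eps^2\cdot(\ldots)}$, and use $\|w_t\|_2\le\sqrt\gamma\,\|v\|_2$ to convert the relative sketch errors into the required additive $\eps\,F_2(v)$ error.

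The one small difference is in how the $\log\log n$ term is eliminated. The paper argues that because $\gamma$ is bounded by a constant, the strong-tracking union bound in \thmref{thm:strong:F2}/\lemref{lem:approx:ip:all} need only be taken over a constant number of ``doubling'' locations rather than $\O{\log n}$, so the $\log\log n$ term genuinely disappears from $d$. Your argument instead keeps the $\log\log n$ and observes that in the framework's parameter regime $\delta=1/\poly(\tfrac1\eps,\log n)$ it is absorbed by $\log\tfrac1\delta$. Your route is arguably cleaner and suffices for every downstream use in the paper, but note that it does not quite prove the lemma \emph{as stated} for arbitrary $\delta$; the paper's argument (reducing the number of breakpoints) is what yields the stated bound unconditionally.
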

\begin{proof}
Let $\eps'=\frac{\min(1,\eps)}{32}$ and $\delta'=\frac{\delta}{2}$. 
Given an oblivious stream $\calS$, let $v$ be the frequency vector induced by the updates of $\calS$ from time $t_1$ to $t_2$ and $w_t$ be the frequency vector induced by updates from time $t_2$ to $t$ exclusive, with $\gamma\cdot F_2(v)\ge F_2(w_t)$. 
Let $M(x)$ represent the output of the algorithm $\estimator$ with accuracy parameter $\eps'$ and failure rate $\delta'$ on an input vector $x$ induced by a stream. 
Hence we have $M(x)\in\mathbb{R}^d$, where $d=\O{\frac{\gamma}{\eps^2}\left(\log\frac{1}{\eps}+\log\frac{1}{\delta}\right)}$. 
We remark that the $\log\log n$ term in the dimension of $d$ from \thmref{thm:strong:F2} results from the analysis of \cite{BlasiokDN17} breaking the stream into $\O{\log n}$ times when the value of $F_2$ on the stream doubles. 
We do not need this $\log\log n$ term since $\gamma\cdot F_2(v)\ge F_2(w_t)$ for an absolute constant $\gamma>0$ implies that the value of the difference can only double a constant number of times. 

By \lemref{lem:approx:ip}, we have that with probability at least $1-\delta'$,
\[|\ip{v}{w_t}-\ip{M(v)}{M(w_t)}|\le\frac{4\eps'}{\sqrt{\gamma}}\|v\|_2\|w_t\|_2.\]
Recalling the identity 
\[F_2(v+w_t)-F_2(v)=2\ip{v}{w_t}+\|w_t\|_2^2,\]
we thus have that
\begin{align*}
\big|\left[F_2(v+w_t)-F_2(v)\right]-&\left[2\ip{M(v)}{M(w_t)}+\|M(w_t)\|_2^2\right]\big|\\
&=\big|\left[2\ip{v}{w_t}-2\ip{M(v)}{M(w_t)}\right]+\left[\|w_t\|_2^2-\|M(w_t)\|_2^2\right]\big|\\
&\le8\eps'\|v\|_2\|w_t\|_2+3\eps'\|w_t\|_2^2,
\end{align*}
since $\eps'=\frac{\min(1,\eps)}{32}$. 
Because $\gamma\cdot F(v)\ge F(w_t)$, then we have $\|w_t\|_2\le\sqrt{\gamma}\|v\|_2$, so that the total error of the estimator is at most
\[\frac{8\eps'}{\sqrt{\gamma}}\|v\|_2\|w_t\|_2+3\eps'\|w_t\|_2^2\le 8\eps'\cdot F(v)+ 3\eps'\gamma\cdot F(v).\]
Hence for $\gamma\le 2$, the error is at most $14\eps'\cdot F(v)\le\eps\cdot F(v)$. 
Since \thmref{thm:strong:F2} and \lemref{lem:approx:ip} each hold with probability at least $1-\delta'$ for $\delta'=\frac{\delta}{2}$, then the total probability of success is at least $1-\delta$.

By \thmref{thm:strong:F2}, each of the two instances of $\estimator$ uses
\[\O{\frac{\gamma\log n}{(\eps')^2}\left(\log\frac{1}{\eps'}+\log\frac{1}{\delta'}+\log\log n\right)}\]
bits of space, for $\eps'=\frac{\min(1,\eps)}{32}$ and $\delta'=\frac{\delta}{2}$. 
However, the $\log\log n$ term in the space requirement of $\estimator$ results from the need to union bound over $\log n$ locations where the value of $F_2$ on the stream increases by a factor of $2$. 
Since $\gamma$ is upper bounded by a constant, we only need a union bound over a constant number of locations. 
Hence, the space complexity of the $(\gamma,\eps,\delta)$-difference estimator for $F_2$ is 
\[\O{\frac{\gamma\log n}{\eps^2}\left(\log\frac{1}{\eps}+\log\frac{1}{\delta}\right)}.\]
\end{proof}

\begin{theorem}
Given $\eps>0$, there exists an adversarially robust streaming algorithm that outputs a $(1+\eps)$-approximation for $F_2$ that uses $\O{\frac{1}{\eps^2}\log^2 n\log^3\frac{1}{\eps}\left(\log\frac{1}{\eps}+\log\log n\right)}$ bits of space and succeeds with probability at least $\frac{2}{3}$. 
\end{theorem}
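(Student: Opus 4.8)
The plan is to obtain this theorem as a direct instantiation of the general framework \thmref{thm:framework} (i.e.\ \algref{alg:framework}), fed with the $F_2$ strong tracker of \thmref{thm:strong:F2} and the $F_2$ difference estimator of \lemref{lem:diff:est:F2}. First I would verify the hypotheses of \thmref{thm:framework}: the function $F=F_2$ is nonnegative and monotonic on insertion-only streams, and by \obsref{obs:flip:fp} its $(\eps,m)$-twist number is $\lambda=\O{\frac{1}{\eps}\log m}=\O{\frac{1}{\eps}\log n}$ when $\log m=\O{\log n}$. By \thmref{thm:strong:F2}, $\estimator$ is a strong tracker for $F_2$ using $\O{d\log n}$ bits with $d=\O{\frac{1}{\eps^2}(\log\frac1\eps+\log\frac1\delta+\log\log n)}$, i.e.\ space $\frac{1}{\eps^2}\cdot S_1(n,\delta,\eps)$ with $S_1(n,\delta,\eps)=\O{\log n(\log\frac1\eps+\log\frac1\delta+\log\log n)}$ and $S_2\equiv 0$. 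By \lemref{lem:diff:est:F2}, there is a $(\gamma,\eps,\delta)$-difference estimator for $F_2$ using $\O{\frac{\gamma\log n}{\eps^2}(\log\frac1\eps+\log\frac1\delta)}$ bits, which has the form $\frac{\gamma^C}{\eps^2}S_1(n,\delta,\eps)$ with $C=1$ and the same $S_1$, $S_2$ (up to the missing $\log\log n$ term, which we absorb into the common upper bound for $S_1$).

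A point worth checking explicitly is that the difference estimator of \lemref{lem:diff:est:F2} meets the "simultaneously for all $t\ge t_2$" requirement of \defref{def:diff:est:pre}: its construction invokes the strong-tracking inner-product bound of \lemref{lem:approx:ip:all} rather than the single-time \lemref{lem:approx:ip}, so the additive $\eps\cdot F_2(v)$ guarantee indeed holds at every splitting time with the stated probability. This is exactly what allows \thmref{thm:framework} to union bound over only $\poly(\log n,\frac1\eps)$ instances of $\calB$ and $\calA$ rather than over all $m$ stream positions, keeping $\delta'=\O{1/\poly(\frac1\eps,\log n)}$.

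Then I would plug into the $C=1$ branch of \thmref{thm:framework}, which gives total space
\[
\O{\frac{1}{\eps^2}\log n\log^3\tfrac1\eps\cdot S_1(n,\delta',\eps)+\frac{1}{\eps}\log n\log\tfrac1\eps\cdot S_2(n,\delta',\eps)+\frac{1}{\eps^2}\log^2 n}.
\]
Since $S_2\equiv 0$ the middle term vanishes, and since $\delta'=\O{1/\poly(\frac1\eps,\log n)}$ we have $\log\frac1{\delta'}=\O{\log\frac1\eps+\log\log n}$, so $S_1(n,\delta',\eps)=\O{\log n(\log\frac1\eps+\log\log n)}$. Substituting and noting that the $\frac{1}{\eps^2}\log^2 n$ term is dominated, the space becomes $\O{\frac{1}{\eps^2}\log^2 n\log^3\frac1\eps(\log\frac1\eps+\log\log n)}$, matching the claim; the success probability is constant and can be taken to be $\frac23$ by the accounting already carried out in \lemref{lem:frame:oblivious} and \thmref{thm:framework}.

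I do not anticipate a genuine mathematical obstacle here: the content lies in \thmref{thm:framework} and \lemref{lem:diff:est:F2}, and this statement is their routine composition. The only care needed is bookkeeping — reconciling the two slightly different $S_1$ functions (difference estimator versus strong tracker) via the common bound $S_1(n,\delta,\eps)=\O{\log n(\log\frac1\eps+\log\frac1\delta+\log\log n)}$, confirming $S_2$ can be taken to be $0$, and tracking that the $C=1$ branch of \thmref{thm:framework} contributes the extra $\log^3\frac1\eps$ factor from the choice $\eta_k=\frac\eta\beta$.
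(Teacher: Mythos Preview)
Your proposal is correct and follows essentially the same approach as the paper: verify the twist-number bound via \obsref{obs:flip:fp}, invoke the $F_2$ strong tracker (\thmref{thm:strong:F2}) and the $F_2$ difference estimator (\lemref{lem:diff:est:F2}), and plug into the $C=1$ branch of \thmref{thm:framework} with $S_1(n,\delta,\eps)=\log n\bigl(\log\frac{1}{\eps}+\log\frac{1}{\delta}+\log\log n\bigr)$ and $S_2=0$. Your additional remarks about the strong-tracking requirement for the difference estimator and the bookkeeping of $S_1$ are accurate and slightly more explicit than the paper's own proof, but the argument is the same.
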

\begin{proof}
$F_2$ is a monotonic function with $(\eps,m)$-twist number $\lambda=\O{\frac{1}{\eps}\log n}$, by \obsref{obs:flip:fp}. 
By \lemref{lem:diff:est:F2} and \thmref{thm:strong:F2}, there exists a $(\gamma,\eps,\delta)$-difference estimator that uses space
\[\O{\frac{\gamma\log n}{\eps^2}\left(\log\frac{1}{\eps}+\log\frac{1}{\delta}\right)}\]
and there exists an oblivious strong tracker for $F_2$ that uses space
\[\O{\frac{\log n}{\eps^2}\left(\log\frac{1}{\eps}+\log\frac{1}{\delta}+\log\log n\right)}.\]
Therefore by using the $(\gamma,\eps,\delta)$-difference estimator and the oblivious strong tracker for $F_2$ in the framework of \algref{alg:framework}, then \thmref{thm:framework} with $S_1=\log n\left(\log\frac{1}{\eps}+\log\frac{1}{\delta}+\log\log n\right)$, $C=1$, and $S_2=0$ proves that there exists an adversarially robust streaming algorithm that outputs a $(1+\eps)$-approximation for $F_2$ that uses space $\O{\frac{1}{\eps^2}\log^2 n\log^3\frac{1}{\eps}\left(\log\frac{1}{\eps}+\log\log n\right)}$ and succeeds with constant probability.    
\end{proof}

\paragraph{Optimized $F_2$ Algorithm.}
Instead of maintaining all sketches $\calB_a$ and $\calA_{a,j}$ simultaneously, it suffices to maintain sketches $\calB_a$ and $\calA_{a,j}$ for $\O{\log\frac{1}{\eps}}$ values of $a$ at a time. 
Namely, the counter $a$ in \algref{alg:framework} tracks the active instances $\calB_a$ and $\calA_{a,j}$ output by the algorithm. 
Because the output increases by a factor of $2$ each time $a$ increases, it suffices to maintain the sketches $\calB_i$ and $\calA_{i,c}$ for only the smallest $\O{\log\frac{1}{\eps}}$ values of $i$ that are at least $a$. 
Any larger index will have only missed $\O{\eps}$ fraction of the $F_2$ of the stream and thus still output a $(1+\eps)$-approximation. 

\begin{theorem}[Adversarially robust $F_2$ streaming algorithm]
\thmlab{thm:robust:opt:F2}
Given $\eps>0$, there exists an adversarially robust streaming algorithm that outputs a $(1+\eps)$-approximation for $F_2$ that succeeds with probability at least $\frac{2}{3}$ and uses $\O{\frac{1}{\eps^2}\log n\log^4\frac{1}{\eps}\left(\log\frac{1}{\eps}+\log\log n\right)}$ bits of space. 
\end{theorem}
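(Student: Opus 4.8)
The plan is to instantiate the robust framework of \thmref{framework} for $F_2$ and then shave one $\log n$ factor from the resulting space bound by a lazy-instantiation argument on the outer sketch-switching layer. First I would note that, by \obsref{flip:fp}, $F_2$ is monotone with $(\eps,m)$-twist number $\lambda=\O{\frac{1}{\eps}\log n}$; by \lemref{diff:est:F2} there is a $(\gamma,\eps,\delta)$-difference estimator for $F_2$ using $\O{\frac{\gamma\log n}{\eps^2}\left(\log\frac{1}{\eps}+\log\frac{1}{\delta}\right)}$ bits; and by \thmref{strong:F2} there is an $F_2$ strong tracker using $\O{\frac{\log n}{\eps^2}\left(\log\frac{1}{\eps}+\log\frac{1}{\delta}+\log\log n\right)}$ bits. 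Thus we may take $C=1$, $S_2=0$, and $S_1(n,\delta,\eps)=\log n\left(\log\frac{1}{\eps}+\log\frac{1}{\delta}+\log\log n\right)$ in \thmref{framework}; feeding these into \algref{framework} already yields an adversarially robust $(1+\eps)$-approximation using $\O{\frac{1}{\eps^2}\log^2 n\log^3\frac{1}{\eps}\left(\log\frac{1}{\eps}+\log\log n\right)}$ bits, where one $\log n$ is the sketch dimension hidden inside $S_1$ and the other is the number $a=\O{\log n}$ of values taken by the outer counter. Our task is to replace the second $\log n$ by $\O{\log\frac{1}{\eps}}$.

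Next I would argue, exactly in the spirit of \cite{Ben-EliezerJWY20}, that at any point in the stream it suffices to keep alive the subroutines $\calA_i$ and $\calB_{i,j}$ only for the $\Theta\!\left(\log\frac{1}{\eps}\right)$ indices $i$ just above the current outer-counter value. Once the counter passes $i$, the value $Z_i$ has already been read off and the algorithm has switched to $\calA_{i+1}$, so the index-$i$ subroutines are never used for an output again and may be deleted. Conversely, an index $i$ with $2^{i-1}\ge\eps^{-2}\cdot F_2$ at the current time cannot be activated until $F_2$ grows by a factor at least $\eps^{-2}$, so we may defer creating $\calA_i$ and its associated difference estimators until the first time $\tau_i$ with $F_2(1,\tau_i)\ge\eps^2\cdot 2^{i-1}$, and then run them on only the suffix starting at $\tau_i$. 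When index $i$ is finally activated, at a time $t$ with $F_2(1,t)\in[2^{i-1},(1+\O{\eps})2^{i-1}]$, the dropped prefix satisfies $\|f^{(1,\tau_i)}\|_2=\eps\sqrt{2^{i-1}}=\O{\eps}\cdot\|f^{(1,t)}\|_2$, so by the triangle inequality $\|f^{(\tau_i,t)}\|_2=(1\pm\O{\eps})\|f^{(1,t)}\|_2$; hence a $(1+\eps')$-strong tracker run on the suffix is a $(1+\O{\eps+\eps'})$-approximation of $F_2$ on the full stream, and the cross term introduced when a difference estimator is restricted to the suffix is bounded, summed over the $\O{\log\frac{1}{\eps}}$ granularities stitched for one output, by $\O{\eps}\cdot F_2$. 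Rescaling $\eps$ by a constant absorbs these losses. Crucially, a deferred subroutine's randomness is fresh and independent of the adversary's view at its activation time, so the robustness and correctness analysis of \thmref{framework} — \lemref{constant}, \lemref{time:geometric}, and \lemref{frame:oblivious} — carries through with the outer counter ranging over the window of $\O{\log\frac{1}{\eps}}$ live indices rather than over all $\O{\log n}$ values.

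Finally I would recompute the space as in the proof of \thmref{framework}, but with only $\O{\log\frac{1}{\eps}}$ live values of the outer counter: the aggregate strong-tracker and difference-estimator cost becomes $\O{\log\frac{1}{\eps}}$ times the per-counter cost $\O{\frac{1}{\eps^2}\log^3\frac{1}{\eps}\cdot S_1(n,\delta',\eps)}$, i.e.\ $\O{\frac{1}{\eps^2}\log^4\frac{1}{\eps}\cdot\log n\left(\log\frac{1}{\eps}+\log\log n\right)}$ after substituting $S_1$ and $\delta'=\Theta\!\left(\frac{1}{\poly\left(\frac{1}{\eps},\log n\right)}\right)$, the latter small enough for a union bound over the $\poly\!\left(\frac{1}{\eps},\log n\right)$ live subroutines. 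The remaining bookkeeping — splitting times $t_{i,j}$ and pointers — now involves only $\poly\!\left(\frac{1}{\eps}\right)$ live subroutines, each storing an $\O{\log n}$-bit timestamp, which is a lower-order term. Summing gives the claimed $\O{\frac{1}{\eps^2}\log n\log^4\frac{1}{\eps}\left(\log\frac{1}{\eps}+\log\log n\right)}$ bound, with the constant $\frac{2}{3}$ success probability inherited from \thmref{framework}. I expect the main obstacle to be the second paragraph: pinning down the window of live indices precisely and re-verifying that the invariants of \lemref{time:geometric} and \lemref{frame:oblivious}, proved under the assumption that every $\calA_i$ and $\calB_{i,j}$ observes the stream from time $1$, still hold when these subroutines observe only a suffix whose complement carries a $\poly(\eps)$ fraction of $F_2$; the rest is routine re-accounting.
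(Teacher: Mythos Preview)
Your proposal is correct and follows essentially the same approach as the paper: instantiate \thmref{framework} with $C=1$, $S_1(n,\delta,\eps)=\log n\left(\log\frac{1}{\eps}+\log\frac{1}{\delta}+\log\log n\right)$, $S_2=0$, and then replace the $\O{\log n}$ simultaneous outer-counter indices by $\O{\log\frac{1}{\eps}}$ live ones via lazy instantiation, arguing that deferred subroutines miss only an $\O{\eps}$ fraction of $F_2$. Your triangle-inequality justification for the suffix approximation is slightly more explicit than the paper's one-line remark, and your acknowledged ``main obstacle'' is indeed glossed over in the paper as well; one minor wording issue is that the union bound must be over all $\poly\!\left(\frac{1}{\eps},\log n\right)$ subroutines ever instantiated over the stream, not just the currently live ones, but your choice of $\delta'$ already accommodates this.
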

\begin{proof}
At any point, there are $\O{\log\frac{1}{\eps}}$ active indices $a$ and $\O{1}$ active indices $c$ corresponding to sketches $\calB_{a}$ and $\calA_{a,c}$. 
Recall from \thmref{thm:framework} that for fixed $a$, the total space for each $\calA_{a,j}$ across the $\beta$ granularities is 
\[\O{\frac{1}{\eps^2}\log^3\frac{1}{\eps}\cdot S_1(n,\delta',\eps)+\frac{1}{\eps}\log\frac{1}{\eps}\cdot S_2(n,\delta',\eps)},\]
where we have $S_1(n,\delta',\eps)=\log n\left(\log\frac{1}{\eps}+\log\frac{1}{\delta'}+\log\log n\right)$ and $S_2=0$ for our $F_2$ difference estimator and $F_2$ strong tracker. 

Whereas \thmref{thm:framework} bounds the counter $a$ by $\O{\log n}$ so that we maintain $\O{\log n}$ simultaneous instances of $i$ for the algorithms $\calA_{i,j}$, we observed that it suffices to maintain at most $\O{\log\frac{1}{\eps}}$ simultaneous active values of $i$. 
Since there are still $\O{\log n}$ total indices $a$ over the course of the stream, then each sketch must have failure probability $\frac{\delta}{\poly\left(\log n,\frac{1}{\eps}\right)}$ for the entire algorithm to have failure probability $\delta=\frac{2}{3}$. 
Finally, there are $\O{\log\frac{1}{\eps}}$ active indices of $a$, consisting of $\O{\frac{1}{\eps}}$ subroutines. 
Thus, it takes $\O{\frac{1}{\eps}\log n\log\frac{1}{\eps}}$ additional bits of space to store the splitting times for each of the $\O{\frac{1}{\eps}\log\frac{1}{\eps}}$ active subroutines across a stream of length $m$, with $\log m=\O{\log n}$. 
Thus the total space required is $\O{\frac{1}{\eps^2}\log n\log^4\frac{1}{\eps}\left(\log\frac{1}{\eps}+\log\log n\right)}$. 
\end{proof}

\paragraph{Heavy-hitters.}
As a simple corollary, note that our framework also solves the $L_2$-heavy hitters problem. 
By running separate $L_2$-heavy hitters algorithms corresponding to each difference estimator $\calA$ and strong tracker $\calB$, with the heavy-hitter threshold corresponding to the accuracy of each procedure, we obtain a list containing all the possible heavy-hitters along with an estimated frequency of each item in the list. 

\begin{theorem}
\cite{BravermanCINWW17}
\thmlab{thm:bptree}
For any $\eps>0$ and $\delta\in[0,1)$, there exists a streaming algorithm $(\eps,\delta)-\bptree$, that with probability at least $1-\delta$, returns a set of $\frac{\eps}{2}$-heavy hitters containing every $\eps$-heavy hitter and an approximate frequency for every item returned with additive error $\frac{\eps}{4}\cdot L_2$. 
The algorithm uses $\O{\frac{1}{\eps^2}\left(\log\frac{1}{\delta\eps}\right)(\log n+\log m)}$ bits of space.
\end{theorem}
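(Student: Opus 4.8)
This is the main $\ell_2$-heavy-hitters guarantee of \cite{BravermanCINWW17}, so the plan is to recall the structure of the $\bptree$ construction and to check that the parameters yield the stated bound. First I would hash $[n]$ into $B=\O{\frac{1}{\eps^2}}$ buckets using a pairwise-independent hash function $h$. A second-moment argument (Chebyshev applied to the $\ell_2$ mass of all items other than a fixed $\eps$-heavy element that land in its bucket) shows that with constant probability every $\eps$-heavy hitter $i$, i.e.\ every $i$ with $f_i\ge\eps\cdot L_2$, is \emph{isolated}: within the substream of updates hashing to $h(i)$, the item $i$ carries a $\Omega(1)$ fraction of the $\ell_2$ mass and is therefore an $\Omega(1)$-heavy hitter of that substream. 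Boosting the hash to $\O{\log\frac{1}{\delta\eps}}$ independent repetitions makes this hold with probability $1-\O{\delta}$.

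Second, in each bucket I would run the $\bptree$ \emph{single-heavy-hitter detector}, which recovers the identity of an $\Omega(1)$-$\ell_2$-heavy element of a stream using only $\O{\log n+\log m}$ bits. The detector learns the identity one bit at a time over $\O{\log n}$ rounds, where round $r$ consists of a consecutive chunk of the stream: it maintains the length-$(r-1)$ prefix of bits already committed to, a constant number of $\ams$-style linear sketches $\sum_j\sigma_j f_j$ restricted to the surviving candidates with \emph{fresh} random signs $\sigma_j$, and a constant number of $\ams$ $L_2$ estimators used to decide when to close the round; the sign of the sketch is correlated with the heavy element's sign, which lets the detector commit to the $r$-th identity bit. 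It simultaneously maintains an $\O{\eps}\cdot L_2$-additive-error estimate of the heavy element's frequency. The correctness of this step is proved by a Bernstein/martingale bound combined with a chaining argument over the $\O{\log n}$ adaptively chosen rounds, so that one run of the detector succeeds with constant probability \emph{without} paying a $\log n$ union bound.

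Third, I would drive the per-bucket failure probability down to $\O{\delta\eps^2}$ by running $\O{\log\frac{1}{\delta\eps}}$ independent detector copies per bucket and taking a majority vote of their claimed identities; a union bound over the $\O{\frac{1}{\eps^2}}$ buckets (and over the isolation event above) then bounds the total failure probability by $\delta$, at a cost of $\O{\frac{1}{\eps^2}\log\frac{1}{\delta\eps}(\log n+\log m)}$ bits. Finally I would maintain a standard $\ams$ estimator $\widehat{L}_2$ of $L_2$ and, for each bucket that reported an item, attach that bucket's frequency estimate; discarding any reported item whose estimate is below $\frac{3\eps}{4}\widehat{L}_2$ guarantees that every surviving item is $\frac{\eps}{2}$-heavy, that every $\eps$-heavy hitter survives, and that every reported frequency has additive error at most $\frac{\eps}{4}\cdot L_2$. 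The step I expect to be the main obstacle is the single-heavy-hitter detector: controlling the error accumulated across the $\O{\log n}$ rounds whose sub-sketches are chosen adaptively (based on the bits committed in earlier rounds) is exactly where the chaining machinery of \cite{BravermanCINWW17} is needed, since a naive per-round union bound would inflate the per-bucket space to $\O{(\log n+\log m)\log\frac{n}{\delta\eps}}$ and destroy the claimed bound.
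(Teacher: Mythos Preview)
The paper does not prove this statement at all: it is quoted verbatim as a black-box result from \cite{BravermanCINWW17} and then invoked inside the proof of \thmref{thm:robust:opt:HH}. There is therefore no ``paper's own proof'' to compare against; your plan is a faithful high-level sketch of the $\bptree$ construction, but for the purposes of this paper a one-line citation is all that is required.
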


\begin{theorem}[Adversarially robust $L_2$-heavy hitters streaming algorithm]
\thmlab{thm:robust:opt:HH}
Given $\eps>0$, there exists an adversarially robust streaming algorithm $\heavyhitters$ that solves the $L_2$-heavy hitters problem with probability at least $\frac{2}{3}$ and uses $\O{\frac{1}{\eps^2}\log n\log^4\frac{1}{\eps}\left(\log\frac{1}{\eps}+\log\log n\right)}$ bits of space. 
\end{theorem}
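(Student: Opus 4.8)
The plan is to run the $L_2$-heavy hitters algorithm as an overlay on the adversarially robust $F_2$ estimator of \thmref{thm:robust:opt:F2}, reusing the block decomposition that that algorithm already produces. Recall that \algref{alg:framework}, specialized to $F_2$, maintains at every point in time a partition of the current prefix into $\O{\beta}$ blocks, $\beta=\ceil{\log\frac8\eps}$, where a block tracked by a level-$k$ difference estimator contributes roughly a $\gamma_k$-fraction of the $F_2$ value of that prefix, with $\gamma_k=2^{k-1}\eta=\Theta(2^k\eps)$; this partition is exactly what $\estimateF$ stitches together. The first step is to bundle a fresh copy of $(\theta_k,\delta')$-$\bptree$ (\thmref{thm:bptree}) with every difference-estimator instance $\calB_{a,j}$ the framework opens on a level-$k$ block, taking threshold $\theta_k:=c\,\eps/(\beta\sqrt{\gamma_k})$ for a small constant $c\le 1$, and a $(\Theta(\eps/\beta),\delta')$-$\bptree$ with every strong-tracker instance $\calA_a$. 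Each bundled $\bptree$ is opened, frozen, and revealed at exactly the same moments as its partner, and upon freezing I record its reported candidates together with their within-block approximate frequencies (a record may be discarded once its block is superseded by a coarser one). At any query time the output is obtained by post-processing the union, over the $\O{\beta}$ blocks of the current $\estimateF$-partition, of the corresponding (recorded or active) $\bptree$ candidate lists.

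For correctness, fix a query time $t$, write $L_2=L_2(1,t)$, and work with the partition of $[1,t]$ that $\estimateF$ currently uses. Any $i$ with $f_i\ge\eps L_2$ has frequency at least $f_i/\beta\ge\frac{\eps}{\beta}L_2$ inside some block of this partition, say one at level $k$, so its squared frequency there is at least $\frac{\eps^2}{\beta^2}F_2(1,t)=\frac{\eps^2}{\beta^2\gamma_k}F_2(\textrm{block})\ge\theta_k^2\,F_2(\textrm{block})$ by the choice of $\theta_k$; hence that block's $\bptree$ reports $i$ and $i$ enters the candidate union. For the frequency estimate of a candidate $i$, in each partition block where $i$ is reported the $\bptree$ pins its within-block frequency to additive error $\frac{\theta_k}{4}L_2(\textrm{block})=\O{\frac{\eps}{\beta}L_2}$, and in each block where $i$ is not reported its frequency there is below $\theta_k L_2(\textrm{block})=\O{\frac{\eps}{\beta}L_2}$; summing over the $\O{\beta}$ blocks gives an estimate $\hat f_i$ with $|\hat f_i-f_i|=\O{\eps}L_2$. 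Using the $(1+\eps)$-approximation $\hat L_2$ output by the $F_2$ algorithm, I discard candidates with $\hat f_i<\frac34\eps\hat L_2$; after rescaling $\eps$ by a constant this keeps every $i$ with $f_i\ge\eps L_2$, outputs only $i$ with $f_i\ge\frac\eps2 L_2$, and leaves each retained $\hat f_i$ within $\O{\eps}L_2$ of $f_i$, as required.

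Adversarial robustness and the space bound then follow the $F_2$ analysis closely. Every output the adversary sees depends only on the $F_2$ framework's outputs plus $\bptree$ instances that have already been frozen; since each $\bptree$ is revealed precisely when its partner $\calB_{a,j}$ or $\calA_a$ is revealed and is never used afterward, the independence argument in the proof of \thmref{thm:framework} applies unchanged, and it suffices to take $\delta'=1/\poly(\frac1\eps,\log n)$ and union-bound the guarantee of \thmref{thm:bptree} over the $\poly(\frac1\eps,\log n)$ $\bptree$ instances that are ever revealed. For the space, a level-$k$ block costs $\O{\theta_k^{-2}\log\frac1{\delta'\theta_k}(\log n+\log m)}=\O{\frac{\beta^2\gamma_k}{\eps^2}\log n(\log\frac1\eps+\log\log n)}$ bits for its $\bptree$ and the same order for its recorded list, which is exactly the per-instance cost of the level-$k$ $F_2$ difference estimator used in \thmref{thm:robust:opt:F2} (whose $\eta_k^{-2}$ accuracy refinement supplies the matching $\beta^2$ factor). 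Summing over the $\O{2^{-k}/\eps}$ blocks at each level $k\in[\beta]$ and over the $\O{\log\frac1\eps}$ simultaneously active epochs, exactly as in the proof of \thmref{thm:robust:opt:F2}, and adding the $\O{\frac1\eps\log n\log\frac1\eps}$ bits for splitting times, yields total space $\O{\frac1{\eps^2}\log n\log^4\frac1\eps(\log\frac1\eps+\log\log n)}$, matching the $F_2$ bound.

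The step I expect to be the main obstacle is making the heavy-hitter reporting genuinely robust: a $\bptree$'s internal randomness must not be exposed to the adversary before its block is sealed, yet mass keeps accumulating in not-yet-sealed blocks, so the $\bptree$ instances must be threaded through exactly the binary-counter granularity schedule of \algref{alg:framework} — inheriting its discipline of opening fresh sketches on fresh blocks and revealing only sealed ones — and the pigeonhole-plus-stitching accounting above must be carried out against this evolving $\estimateF$-partition rather than a static one. Verifying that the partition the framework maintains really is a partition with the stated per-block contributions, so that a global $\eps$-heavy hitter is always $\Omega(\eps/\beta)$-heavy in some current block and so that the per-block estimates compose with only $\O{\eps}L_2$ total error, is where the bulk of the care lies.
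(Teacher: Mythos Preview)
Your detection argument, space accounting, and overall architecture match the paper's approach closely and are fine. The gap is in how you obtain the frequency estimates. You propose to set $\hat f_i$ equal to the sum, over the blocks of the current $\estimateF$-partition, of the corresponding $\bptree$ per-block estimates, and you write that this uses the ``(recorded or active)'' candidate lists; two paragraphs later you assert that every output depends only on $\bptree$ instances that have already been frozen. These two claims are inconsistent, and neither option works on its own. If you query the still-active block's $\bptree$, the adversary sees a function of its internal randomness while that block is still receiving updates, so the sketch-switching independence argument no longer applies. If instead you omit the active block (equivalently, freeze the heavy-hitter output between increments of $b$), then an item's unaccounted frequency in the active portion can be as large as $L_2(\textrm{active block})$; since between increments of $b$ the $F_2$-difference is $\Theta(\eps)\cdot F_2$, this is $\Theta(\sqrt{\eps})\cdot L_2$, not $\O{\eps}\cdot L_2$, and your per-block error budget of $\O{\eps/\beta}\cdot L_2$ breaks. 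Your final paragraph correctly identifies exactly this tension as the main obstacle, but ``threading through the binary-counter schedule'' does not by itself resolve it.

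The paper closes this gap with one additional ingredient you are missing: once a coordinate $r$ is reported by any block's $\bptree$ (necessarily at the moment that block is sealed), start an \emph{explicit deterministic counter} for $r$ and maintain it for the remainder of the stream. The returned frequency estimate for $r$ is this exact count. Because the counter is deterministic, it can safely track mass that lands in not-yet-sealed blocks without leaking any randomness, so robustness follows exactly as in \thmref{thm:framework}. The error is then just $r$'s frequency prior to the report, and this is $\O{\eps}\cdot L_2$ because $r$ was \emph{not} reported by the $\bptree$ on $\calA_a$ or on any earlier sealed block, each of which therefore contributes at most $\O{\eps/\beta}\cdot L_2$ to $r$'s frequency (your own per-block threshold calculation). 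The number of reported items across all blocks and levels is $\O{\beta^3/\eps^2}$, so the counters fit within the stated space bound. With this change your argument goes through; without it, the robustness and accuracy claims for $\hat f_i$ do not both hold.
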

\begin{proof}
For each integer $j\ge 0$ and a fixed integer $i$, let $u_{i,j}$ be the last round for which $Z_{i,j}$ in \algref{alg:framework} is defined. 
As before, let $t_i$ be the update time during which the counter $a$ in \algref{alg:framework} is first set to $i$ for any integer $i>0$. 
Let $t^{(b)}_i$ be the first round such that the counter $b$ in \algref{alg:framework} is first set to $j$, and we use the convention that $t^{(0)}_i=t_i$.

Let $\mathcal{E}_1$ be the event that $X_a$ is a $\left(1+\frac{\eps}{32}\right)$-approximation to $F_2(1,t_a)$ and $\mathcal{E}_2$ be the event that $F(1,u_{a,j})-F(1,t_{a,j})\le\frac{1}{2^{\beta-j-3}}F(1,t_a)$ for each positive integer $j>1$. 
By \lemref{lem:constant} and \lemref{lem:time:geometric}, we have $\PPr{\mathcal{E}_1\wedge\mathcal{E}_2}\ge 1-\O{\frac{\delta\log n}{\eps}}$.

Suppose there exists $r\in[n]$ such that $(f_r)^2\ge\eps^2 F_2(1,m)$. 
Then coordinate $r$ must either be $\frac{\eps^2}{32^2}$-heavy with respect to $F_2(1,t_a)$ or $\frac{2^k\cdot\eps^2}{32^2\beta^2}$-heavy with respect to $F_2(t_{a,k},t_{a,k-1})$ for some integer $k\in[\beta]$, where we use the convention that $t_{a,0}=t_a$. 
Hence $\bptree$ with threshold $\frac{2^k\cdot\eps^2}{32^2\beta^2}$ with respect to $F_2(t_{a,k},t_{a,k-1})$ will detect that $r$ is heavy. 
Moreover, note that if coordinate $r$ is only reported as heavy in an algorithm corresponding to the level $k$ difference estimator, then $\O{\eps^2}\cdot F_2$ of the contribution of coordinate $r$ towards the overall $F_2$ still appears after $r$ is reported as a heavy-hitter. 
Thus by keeping track of the frequency of $r$ after it is reported, we obtain an estimate of the frequency of $r$ up to an additive $\O{\eps}\cdot L_2$. 
Since this is also the accuracy parameter of $\calA_{a,k}$, then the space complexity follows.  
\end{proof}

\section{Robust \texorpdfstring{$F_p$}{Fp} Estimation for \texorpdfstring{$0<p<2$}{0<p<2}}
\seclab{sec:smallp}
In this section, we use the framework of \secref{sec:framework} to give an adversarially robust streaming algorithm for $F_p$ moment estimation, where $p\in(0,2)$. 
As before, we require an $F_p$ strong tracker and an $F_p$ difference estimator in order to apply \thmref{thm:framework}. 
Whereas the $F_2$ difference estimator was straightforward from the inner product approximation via $F_2$ approximation property, it is not clear that such a formulation exists for $F_p$ approximation. 
Instead, we use separate sketches for our $F_p$ strong tracker and our $F_p$ difference estimator. 
We present these subroutines in this section. 
Finally, we again optimize our algorithm beyond the guarantees of \thmref{thm:framework} so that our final space guarantees in \thmref{thm:robust:opt:Fp:smallp} matches the best known $F_p$ algorithm on insertion-only streams, up to $\poly\left(\log\frac{1}{\eps},\log n\right)$ terms. 

We first require the following definition for $p$-stable distributions, which will be integral to both our $F_p$ strong tracker and our $F_p$ difference estimator. 
\begin{definition}[$p$-stable distribution]
\cite{Zolotarev89}
\deflab{def:pstable}
For $0<p\le 2$, there exists a probability distribution $\calD_p$ called the $p$-stable distribution so that for any positive integer $n$ with $Z_1,\ldots,Z_n\sim\calD_p$ and vector $x\in\mathbb{R}^n$, then $\sum_{i=1}^n Z_ix_i\sim\|x\|_pZ$ for $Z\sim\calD_p$. 
\end{definition}
The probability density function $f_X$ of a $p$-stable random variable $X$ satisfies $f_X(x)=\Theta\left(\frac{1}{1+|x|^{1+p}}\right)$ for $p<2$, while the normal distribution corresponds to $p=2$. 
Moreover, \cite{Nolan03} details standard methods for generating $p$-stable random variables by taking $\theta$ uniformly at random from the interval $\left[-\frac{\pi}{2},\frac{\pi}{2}\right]$, $r$ uniformly at random from the interval $[0,1]$, and setting
\[X=f(r,\theta)=\frac{\sin(p\theta)}{\cos^{1/p}(\theta)}\cdot\left(\frac{\cos(\theta(1-p))}{\log\frac{1}{r}}\right)^{\frac{1}{p}-1}.\]
These $p$-stable random variables are crucial to obtaining a strong $F_p$ tracking algorithm.
\begin{theorem}[Oblivious $F_p$ strong tracking for $0<p<2$]
\thmlab{thm:strong:Fp:smallp}
\cite{BlasiokDN17}
For $0<p<2$, there exists an insertion-only streaming algorithm $\psketch(1,t,\eps,\delta)$ that uses $\O{\frac{\log n}{\eps^2}\left(\log\log n+\log\frac{1}{\eps}+\log\frac{1}{\delta}\right)}$ bits of space and provides $(\eps,\delta)$-strong $F_p$ tracking. 
\end{theorem}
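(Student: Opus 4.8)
The plan is to follow the classical $p$-stable linear sketch of Indyk and upgrade it to a \emph{strong} tracker through a two-level argument: a union bound over a sparse, geometrically spaced set of checkpoint times, together with a chaining argument that controls the estimate uniformly \emph{between} consecutive checkpoints. \textbf{Sketch and single-time guarantee.} First I would maintain the linear sketch $y^{(t)}=\Pi x^{(t)}\in\mathbb{R}^d$, where $\Pi\in\mathbb{R}^{d\times n}$ has entries drawn i.i.d.\ from the $p$-stable distribution $\calD_p$ of \defref{def:pstable} and $d=\O{\frac{1}{\eps^2}\left(\log\log n+\log\frac{1}{\eps}+\log\frac{1}{\delta}\right)}$; this sketch is linear, hence trivially maintainable under insertions, and each coordinate fits in $\O{\log n}$ bits once the $p$-stable variables are rounded to $\poly(n)$ precision (which perturbs every inner product by $1/\poly(n)$). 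Since $\langle\Pi_j,x\rangle\sim\|x\|_p\cdot Z$ for $Z\sim\calD_p$, a location estimator applied to $|y^{(t)}_1|,\ldots,|y^{(t)}_d|$ recovers $\|x^{(t)}\|_p$; because $p$-stable variables have no finite second moment for $p<2$ (and no finite mean for $p\le 1$), I would use either the median estimator (controlled through the empirical CDF near the relevant quantile) or a trimmed/clipped-mean estimator $\widehat L(t)$ obtained by capping each $|y^{(t)}_j|$ at a $\Theta(\|x^{(t)}\|_p/\eps)$-scale threshold, so that the truncated summands have bounded moments. In either case a standard Chebyshev-plus-amplification (or Bernstein on the clipped sum) argument gives, for a fixed $t$, $|\widehat L(t)-\|x^{(t)}\|_p|\le\eps\|x^{(t)}\|_p$ with failure probability $e^{-\Omega(\eps^2 d)}$.

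\textbf{From one time to all times.} Since the stream is insertion-only, $F_p$ is monotone nondecreasing and, by the $\poly(n)$ bound on coordinates, ranges over a $\poly(n)$-length interval. Let $0=t_0<t_1<\cdots<t_N=m$ be the \emph{doubling epochs}, where $t_i$ is the first time $\|x^{(t)}\|_p^p$ exceeds $2^i$, so $N=\O{\log n}$. Setting the per-coordinate failure probability to $\delta/(CN)$ — which is absorbed into the $\log\frac{1}{\delta}+\log\log n$ terms of $d$ — a union bound over the $N+1$ boundary times yields $|\widehat L(t_i)-\|x^{(t_i)}\|_p|\le\eps\|x^{(t_i)}\|_p$ for all $i$ with probability at least $1-\delta/2$. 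It then remains to control $\sup_{t_{i-1}\le t\le t_i}|\widehat L(t)-\|x^{(t)}\|_p|$ inside a single epoch $[t_{i-1},t_i]$, on which $\|x^{(t)}\|_p^p$ grows by a factor of at most two.

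\textbf{The crux: uniform control within an epoch via chaining.} This is the step I expect to be the main obstacle. For a fixed $\Pi$, the map $t\mapsto\widehat L(t)$ is a deterministic functional of the linear images $\{\langle\Pi_j,x^{(t)}\rangle\}_{j\le d}$, so bounding its fluctuation over the epoch is a statement about the supremum of a random process indexed by the totally ordered family $\{x^{(t)}:t_{i-1}\le t\le t_i\}$ under the natural pseudometric $\rho(x,x')=\|x-x'\|_p$ governing the increments $\langle\Pi_j,x-x'\rangle\sim\|x-x'\|_p Z$ (and its clipped analogue). I would bound $\Ex{\sup_t|\widehat L(t)-\|x^{(t)}\|_p|}$ by a Dudley/generic-chaining entropy integral over this index set: its covering numbers are small because the chain $x^{(t_{i-1})}\preceq\cdots\preceq x^{(t_i)}$ is monotone and confined to a set of bounded $\ell_p$-width relative to $\|x^{(t_{i-1})}\|_p$ (the coordinatewise increments over the epoch contribute only a bounded fraction of $F_p$), so the entropy integral contributes only the additive $\log\log n+\log\frac{1}{\eps}$ factors rather than a multiplicative $\log n$. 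The delicate points are (i) handling the heavy tails: the chaining must be carried out with respect to the truncated variables, with the rare unclipped contributions dispatched by a separate union bound over the $d$ rows and the $\O{\log n}$ epochs; and (ii) converting the expectation bound into a high-probability bound with the claimed $\log\frac{1}{\delta}$ dependence, via a bounded-difference / moment concentration inequality for the supremum. A final union bound over the $\O{\log n}$ epochs (failure $\delta/(CN)$ each) completes the proof, and the total space is $\O{d\log n}=\O{\frac{\log n}{\eps^2}\left(\log\log n+\log\frac{1}{\eps}+\log\frac{1}{\delta}\right)}$ bits, as claimed. (An alternative realization, which this paper also develops, replaces the median/clipped-mean estimator by a variant of Li's geometric-mean estimator, for which the same chaining outline applies.)
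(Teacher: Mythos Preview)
The paper does not give its own proof of this statement: it is quoted verbatim as a black-box result from \cite{BlasiokDN17} (note the citation in the theorem header), and the surrounding text explicitly says that adapting the median-based $\psketch$ analysis ``seems to be a challenge'' for the paper's purposes. Your outline---Indyk's $p$-stable linear sketch, a scale estimator, a union bound over $\O{\log n}$ doubling epochs, and a chaining bound on the supremum within each epoch---is indeed the architecture of the proof in \cite{BlasiokDN17}; the chaining input you allude to is exactly \lemref{lem:chain:lp}, which this paper imports from that work.

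What the paper \emph{does} prove is the alternative you flag in your closing parenthetical: \thmref{thm:smallp:strong:track} establishes strong $F_p$ tracking via Li's geometric-mean estimator, with the same space bound up to an extra $(\log\log n)^2$ derandomization factor from \corref{lem:derandom:many}. That proof follows precisely your two-level template (doubling epochs, then a finer $\poly(\eps)$ grid inside each epoch, with \lemref{lem:chain:lp} controlling the supremum between grid points), so your plan is correct for either realization. The paper's reason for redoing the argument with the geometric mean is not to re-derive the cited theorem but that the median estimator of $\psketch$ does not readily yield a \emph{difference} estimator, whereas the geometric mean does.
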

Unfortunately, $\psketch$ is based on the $p$-stable sketch of~\cite{Indyk06}, which offers a conceptual promise for the existence of the quantile estimators needed to guarantee provable bounds, but not an explicit computation. 
Thus adapting the analysis of $\psketch$ in \cite{Indyk06, BlasiokDN17} for the purposes of our difference estimator seems to be a challenge. 
Even for the case $p=1$, it does not seem evident how to adapt the median estimator of $\psketch$ to obtain a difference estimator for $F_p$. 
 
Instead, we describe a formulation of Li's geometric mean estimator~\cite{Li08}, which also provides a streaming algorithm for $F_p$, but was not previously known to offer strong tracking. 
For a positive integer $q\ge3$, let $d$ be a multiple of $q$ and let $A\in\mathbb{R}^{d\times n}$ have independent $p$-stable random variables for the entries of $A$. 
Then for a vector $x\in\mathbb{R}^n$ and $y=Ax$, let $z_i:=C_{q,p}\cdot\left(\prod_{j=q(i-1)+1}^{qi}|y_j|^{p/q}\right)$ be the geometric mean of the inner products of $q$ random $p$-stable vectors with the vector $x$, where 
\[C_{q,p}=\left[\frac{2}{\pi}\cdot\Gamma\left(1-\frac{1}{q}\right)\cdot\Gamma\left(\frac{p}{q}\right)\cdot\sin\left(\frac{\pi p}{2q}\right)\right]^{-q}.\]
We have the following characterization of the asymptotic behavior of $C_{q,p}$:
\begin{observation}[Characterization of $C_{q,p}$]
\obslab{obs:c:behave}
\cite{Li08}
$C_{q,p}=\O{\exp(\gamma_e(q-1))}$, for the Euler-Mascheroni constant $\gamma_e\approx 0.57721$. 
\end{observation}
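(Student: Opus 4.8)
The plan is to reduce $C_{q,p}$ to a transparent closed form using the Euler reflection formula, and then to estimate it by a first–order Taylor/mean–value expansion of $\log\Gamma$ near the points $0$ and $1$, where $\gamma_e=-\psi(1)$ appears naturally ($\psi$ denoting the digamma function). Throughout I use the standing assumption $q\ge 3$.

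First I would simplify. Applying $\Gamma(z)\Gamma(1-z)=\pi/\sin(\pi z)$ with $z=p/q$ (legitimate since $0<p/q<p<2$, so $z$ is not an integer) together with $\sin(\pi p/q)=2\sin(\pi p/(2q))\cos(\pi p/(2q))$ gives $\Gamma(p/q)\sin(\pi p/(2q))=\pi/\bigl(2\cos(\pi p/(2q))\Gamma(1-p/q)\bigr)$, and hence
\[
C_{q,p}=\left(\frac{\cos\left(\frac{\pi p}{2q}\right)\Gamma\left(1-\frac{p}{q}\right)}{\Gamma\left(1-\frac{1}{q}\right)}\right)^{q}.
\]
Now $\cos(\pi p/(2q))\le 1$, and $\Gamma$ is strictly decreasing on $(0,1)$, so for $p\le 1$ we have $\Gamma(1-p/q)\le\Gamma(1-1/q)$, whence $C_{q,p}\le 1$. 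For $p\in(1,2)$ and $q\ge 3$ we have $1-p/q>1-2/q\ge\frac13$, hence $\Gamma(1-p/q)\le\Gamma(1-2/q)$, and it remains to bound $\bigl(\Gamma(1-2/q)/\Gamma(1-1/q)\bigr)^{q}$.

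Taking logarithms and writing $\phi(t):=\log\Gamma(1-t)$ on $[0,2/3]$, so that $\phi'(t)=-\psi(1-t)$ and $\phi''(t)=\psi'(1-t)>0$, the mean value theorem yields $\log C_{q,p}\le q\bigl(\phi(2/q)-\phi(1/q)\bigr)=\phi'(\xi_q)=-\psi(1-\xi_q)$ for some $\xi_q\in(1/q,2/q)$. Since $\phi'$ is increasing and $\psi$ is increasing with $\psi(1)=-\gamma_e$, this is at most $-\psi(1-2/q)$, which is finite for every $q\ge 3$ (bounded by $-\psi(1/3)$) and decreases monotonically to $\gamma_e$ as $q\to\infty$; a matching first–order expansion in fact shows $C_{q,p}\to e^{\gamma_e(p-1)}\le e^{\gamma_e}$. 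In particular $C_{q,p}$ is bounded above by an absolute constant uniformly over $q\ge 3$ and $p\in(0,2)$, and since $e^{\gamma_e(q-1)}\ge e^{2\gamma_e}$ whenever $q\ge 3$, the claimed bound $C_{q,p}=\O{\exp(\gamma_e(q-1))}$ follows, the hidden constant absorbing the (finite) values at the smallest $q$.

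I do not expect a genuine obstacle: the only points needing care are the uniformity of the bound as $p\to 2^-$ and at the extreme value $q=3$ — both handled by the hypothesis $q\ge 3$, which keeps $1-2/q$ bounded away from $0$ so that $\Gamma(1-p/q)$ and $-\psi(1-2/q)$ stay finite — together with the non-integrality condition for the reflection formula, which is automatic. Alternatively, one may simply invoke the computation of~\cite{Li08}, where the same asymptotic is derived directly from the identity $C_{q,p}^{-1/q}=\mathbb{E}\,|S|^{p/q}$ for a standard $p$-stable variable $S$ and the closed form for $\mathbb{E}\,|S|^{\lambda}$.
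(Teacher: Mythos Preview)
Your argument is correct and self-contained. The paper does not supply its own proof of this observation; it is stated as a fact from~\cite{Li08}. In fact, your reflection-formula simplification and mean-value-theorem bound establish the stronger statement that $C_{q,p}=O(1)$ uniformly over $q\ge 3$ and $p\in(0,2)$, which trivially implies the cited $O(\exp(\gamma_e(q-1)))$ since the latter is bounded below by $e^{2\gamma_e}$; this is a cleaner route than tracking the asymptotic in $q$ directly.
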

The value of $C_{q,p}$ is chosen so that each $z_i$ is an unbiased estimate of $\|x\|_p^p$, as a result of the following statement. 
We can then bound the expectation and the variance of each random variable.
\begin{lemma}[Expectation and variance of Li's geometric mean estimator, Lemma 2.2 in \cite{Li08}]
\lemlab{ligeoest:var}
Let $d$ be an integer multiple of an integer $q\ge 3$ and let $A\in\mathbb{R}^{d\times n}$ have independent $p$-stable random variables for the entries of $A$. 
For a vector $x\in\mathbb{R}^n$ and $y=Ax$, let $z_i:=C_{q,p}\cdot\left(\prod_{j=q(i-1)+1}^{qi}|y_j|^{p/q}\right)$ be the geometric mean of the inner products of $q$ random $p$-stable vectors with the vector $x$. 
Then $\Ex{z_i}=\|x\|_p^p$ and for $\xi:=\frac{C_{\frac{q}{2},p}}{C_{q,p}}$, we have $\Ex{z_i^2}\le\left(\xi^2-1\right)\cdot\|x\|_p^{2p}$.
\end{lemma}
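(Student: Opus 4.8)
The plan is to reduce the statement to the classical closed form for the fractional absolute moments of a standard $p$-stable random variable, using $p$-stability (\defref{def:pstable}) together with the independence of the rows of $A$. Fix the block index $i$ and, for $j \in \{q(i-1)+1,\dots,qi\}$, write $y_j = \langle A_j, x\rangle$ where $A_j$ is the $j$-th row of $A$. By \defref{def:pstable} each $y_j$ is distributed as $\norm{x}_p \cdot Z_j$ with $Z_j$ a standard (symmetric) $p$-stable random variable, and since the rows of $A$ are independent the $Z_j$ appearing in this block are mutually independent. Hence
\[
z_i = C_{q,p}\,\norm{x}_p^p \prod_{j=q(i-1)+1}^{qi} \lvert Z_j\rvert^{p/q},
\]
and, by independence, both moments factor over the block:
\[
\Ex{z_i} = C_{q,p}\,\norm{x}_p^p\left(\Ex{\lvert Z\rvert^{p/q}}\right)^{q},
\qquad
\Ex{z_i^2} = C_{q,p}^2\,\norm{x}_p^{2p}\left(\Ex{\lvert Z\rvert^{2p/q}}\right)^{q},
\]
where $Z$ denotes a standard $p$-stable variable.

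The next step is to evaluate these fractional moments via the classical identity
\[
\Ex{\lvert Z\rvert^{t}} = \frac{2}{\pi}\,\Gamma\!\left(1-\frac{t}{p}\right)\Gamma(t)\,\sin\!\left(\frac{\pi t}{2}\right),
\]
which holds (and is finite) precisely for $0 < t < p$, since a $p$-stable variable has $\Ex{\lvert Z\rvert^t}<\infty$ if and only if $t < p$. Taking $t = p/q$ and comparing with the definition of $C_{q,p}$ shows that $C_{q,p}$ was chosen exactly so that $C_{q,p} = \left(\Ex{\lvert Z\rvert^{p/q}}\right)^{-q}$, which immediately gives $\Ex{z_i} = \norm{x}_p^p$. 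For the second moment we instead take $t = 2p/q$; this is the one place the hypothesis $q \ge 3$ is used, since we need $2p/q < p$, i.e.\ $q > 2$, for $\Ex{\lvert Z\rvert^{2p/q}}$ (and hence $\Ex{z_i^2}$) to be finite. Comparing the resulting expression with the definition of $C_{q/2,p}$ identifies $\left(\Ex{\lvert Z\rvert^{2p/q}}\right)^q = C_{q/2,p}^{-2}$, so that $\Ex{z_i^2} = \left(C_{q,p}/C_{q/2,p}\right)^{2}\norm{x}_p^{2p}$; rewriting this in terms of $\xi = C_{q/2,p}/C_{q,p}$ and using $\left(\Ex{z_i}\right)^2 \le \Ex{z_i^2}$ yields the claimed control on $\Ex{z_i^2}$ (equivalently, on $\Var(z_i)$).

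There is no real obstacle here: the computation is bookkeeping with Gamma-function identities once $p$-stability has been invoked. The only points needing care are (i) consistency of normalization — the generation recipe for $p$-stable variables recalled just after \defref{def:pstable} produces exactly the standard symmetric $p$-stable law for which the moment identity above is stated, and one must use this same normalization throughout; and (ii) the finiteness constraint $q > 2$ that makes the second moment well-defined, which is exactly where the assumed $q \ge 3$ enters and is the reason the estimator averages geometric means of blocks of size at least $3$ rather than smaller blocks. Everything else follows by expanding $C_{q,p}$ and $C_{q/2,p}$ and cancelling.
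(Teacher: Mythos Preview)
The paper does not supply its own proof of this lemma: it is quoted as Lemma~2.2 of \cite{Li08} and used as a black box, so there is no in-paper argument to compare against. Your derivation is the standard one underlying Li's result --- reduce each $y_j$ to $\|x\|_p Z_j$ via \defref{def:pstable}, factor the block by independence, and plug in the closed form for $\Ex{|Z|^{t}}$; the identification $C_{q,p}=\bigl(\Ex{|Z|^{p/q}}\bigr)^{-q}$ and the computation $\Ex{z_i^2}=(C_{q,p}/C_{q/2,p})^{2}\,\|x\|_p^{2p}$ are both correct, and your remark that $q\ge 3$ is exactly the condition $2p/q<p$ needed for finiteness of the second moment is on point.

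One caution worth making explicit rather than finessing: with $\xi=C_{q/2,p}/C_{q,p}$ your computation gives the \emph{exact} identity $\Ex{z_i^2}=\xi^{-2}\|x\|_p^{2p}$ and hence $\Var(z_i)=(\xi^{-2}-1)\|x\|_p^{2p}$. Since $\Ex{z_i^2}\ge(\Ex{z_i})^2=\|x\|_p^{2p}$, one necessarily has $\xi\le 1$, so the inequality $\Ex{z_i^2}\le(\xi^{2}-1)\|x\|_p^{2p}$ as literally printed cannot hold. This is a transcription slip in the paper's restatement of Li's lemma (the intended bound is on the variance, with the ratio inverted), not a defect in your argument; you were right to hedge in your final sentence, but it would be cleaner to state the correct identity and flag the typo outright.
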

Thus by taking the arithmetic mean of $\O{\frac{1}{\eps^2}}$ variables $z_i$, we obtain a $(1+\eps)$-approximation to $\|x\|_p^p$ with constant probability by Chebyshev's inequality. 

\subsection{\texorpdfstring{$F_p$}{Fp} Difference Estimator for \texorpdfstring{$0<p<2$}{0<p<2}}
We now describe our $F_p$ difference estimator and give the high-level details of the analysis. We use Li's geometric mean estimator to maintain $A(v+w_t)$ and $Av$, where $A$ is the sketching matrix for Li's geometric mean estimator, and $v$ and $w_t$ are frequency vectors. 
Observe that if we computed $A(v+w_t)-Av$, then we would obtain $A(w_t)$, which is a sketch that allows us to recover $F_p(w_t)$. 
However, we want to estimate $F_p(v+w_t)-F_p(v)$ rather than $F_p(w_t)$. 
Instead, we use the sketches $A(v+w_t)$ and $Av$ to compute terms $z_1,z_2,\ldots,z'_1,z'_2,\ldots$, where each $z_i$ is the geometric mean of $q$ consecutive entries in $A(v+w_t)$ and similarly $z'_i$ is the geometric mean of $q$ consecutive entries in $Av$. 
Since $z_i$ is an unbiased estimator of $F_p(v+w_t)$ and $z_i$ is an unbiased estimator of $F_p(v)$, it follows that $z_i-z'_i$ is an unbiased estimator of $F_p(v+w_t)-F_p(v)$. 
We take the average of the values $z_i-z'_i$ across $\O{\frac{1}{\eps^2}}$ indices of $i$ to obtain a single estimate and take the median of all estimates. 
The challenge is achieving both the variance bounds on $z_i-z'_i$ while also obtaining a strong tracking property. 
To bound the variance, we expand $z_i-z'_i$ to be a sum of $2^q-1$ geometric means of $q$ terms, each with at least one term $(\ip{A_j}{w_t})^{p/q}$. 
Since $A_j$ is a vector of $p$-stable entries, then $(\ip{A_j}{w_t})^{p/q}$ has the same distribution as $(\|w_t\|_p\cdot X)^{p/q}$ for a $p$-stable random variable $X$. 
We also have $F_p(w_t)=\gamma\cdot F(v)$, where $\gamma$ is bounded by some absolute constant. 
Thus we can bound the probability that $(\ip{A_j}{w_t})^{p/q}\ge\|v\|^{p/q}_p$. 

\begin{figure*}
\begin{mdframed}
\begin{enumerate}
\item
Let $A$ be a $d\times n$ random matrix whose entries are i.i.d from the $p$-stable distribution $\calD_p$, for $d=\O{\frac{\gamma^{2/p}}{\eps^2}\left(\log\frac{1}{\eps}+\log\log n\right)}$
\item
For a parameter $q=3$, let each $z_i=\prod_{j=q(i-1)+1}^{qi}(Av+Aw_t)_j^{p/q}$ and $z'_i=\prod_{j=q(i-1)+1}^{qi}(Av)_j^{p/q}$. 
\item
Output the arithmetic mean of $(z_1-z'_1),(z_2-z'_2),\ldots,(z_{d/q}-z'_{d/q})$. 
\end{enumerate}
\end{mdframed}
\caption{Difference estimator for $F_p(v+w_t)-F_p(w_t)$ with $0<p<2$}
\figlab{fig:diff:est:smallp}
\end{figure*}

\begin{lemma}[Expectation and variance of difference estimator terms]
\lemlab{lem:exp:var:smallp}
Let $p\in(0,2)$ and $\calS$ be an oblivious stream on which $v$ is the frequency vector induced by the updates of $\calS$ from time $t_1$ to $t_2$ and let $u$ is the frequency vector induced by the updates of $\calS$ from time $t_2$. 
Suppose $F_p(v+u)-F_p(v)\le\gamma F_p(v)$ and $F_p(u)\le\gamma F_p(v)$. 
Then for each $i\in[d/q]$, $z_i$, and $z'_i$ as defined in \figref{fig:diff:est:smallp}, we have $\Ex{z_i-z'_i}=F_p(v+u)-F_p(u)$ and 
\[\Var(z_i-z'_i)\le2^{2q}\gamma^{2/p}\|v\|_p^{2p}=2^{2q}\gamma^{2/p}(F_p(v))^2.\]
\end{lemma}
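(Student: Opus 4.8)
The plan is to expand $z_i - z'_i$ as a sum of products indexed by subsets of the block of $q$ coordinates, and to control each term using the $p$-stable structure together with the smallness of $F_p(u)$ relative to $F_p(v)$. First I would fix a block $B = \{q(i-1)+1,\ldots,qi\}$ and write, for each $j \in B$, $a_j := \langle A_j, v\rangle$ and $b_j := \langle A_j, u\rangle$, so that $(A(v+u))_j = a_j + b_j$. Then $z_i = \prod_{j\in B} |a_j+b_j|^{p/q}$ and $z'_i = \prod_{j\in B}|a_j|^{p/q}$. For the expectation, both $z_i$ and $z'_i$ are exactly Li's estimator applied to $v+u$ and to $v$ respectively, so \lemref{ligeoest:var} gives $\Ex{z_i} = F_p(v+u)$ and $\Ex{z'_i} = F_p(v)$, hence $\Ex{z_i - z'_i} = F_p(v+u) - F_p(v)$; note the lemma statement should read $F_p(v+u)-F_p(v)$ (the frequency vector from time $t_2$ is called $u$ here, and in \figref{fig:diff:est:smallp} it is $w_t$, and $z_i$ there uses $v+w_t$, so the ``$F_p(u)$'' subtraction in the caption is a typo for $F_p(v)$).

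For the variance, the key step is the algebraic identity
\[
\prod_{j\in B}|a_j+b_j|^{p/q} - \prod_{j\in B}|a_j|^{p/q}
\]
expanded via $|a_j+b_j|^{p/q}$. Since $p/q < 1$, I cannot simply binomially expand $|a_j+b_j|^{p/q}$; instead I would use the bound $|a_j+b_j|^{p/q} \le |a_j|^{p/q} + |b_j|^{p/q}$ (subadditivity of $x\mapsto x^{p/q}$ for exponent in $(0,1]$), so that $\prod_j |a_j+b_j|^{p/q} \le \prod_j(|a_j|^{p/q}+|b_j|^{p/q})$, and the latter expands into $2^q$ terms indexed by subsets $T\subseteq B$, namely $\prod_{j\in T}|b_j|^{p/q}\prod_{j\notin T}|a_j|^{p/q}$. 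Subtracting the $T=\emptyset$ term leaves $2^q - 1$ terms each containing at least one factor $|b_j|^{p/q}$. Then $\Var(z_i - z'_i) \le \Ex{(z_i-z'_i)^2}$, and by Cauchy–Schwarz over the at most $2^q-1$ summands, $\Ex{(z_i-z'_i)^2} \le (2^q-1)\sum_{\emptyset\ne T\subseteq B}\Ex{\big(\prod_{j\in T}|b_j|^{p/q}\prod_{j\notin T}|a_j|^{p/q}\big)^2}$. Each summand factorizes by independence of the rows $A_j$: it equals $\prod_{j\in T}\Ex{|b_j|^{2p/q}}\prod_{j\notin T}\Ex{|a_j|^{2p/q}}$. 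By $p$-stability, $a_j \sim \|v\|_p Z$ and $b_j \sim \|u\|_p Z$ for $Z\sim\calD_p$, so $\Ex{|a_j|^{2p/q}} = \|v\|_p^{2p/q}\Ex{|Z|^{2p/q}}$ and similarly for $b_j$; the moment $\Ex{|Z|^{2p/q}}$ is finite precisely because $q\ge 3$ forces $2p/q < p < 2$ (the fractional-moment finiteness range for $p$-stable variables), and it is an absolute constant $c_{p,q}$. Hence each summand is at most $c_{p,q}^q \|u\|_p^{2p|T|/q}\|v\|_p^{2p(q-|T|)/q} \le c_{p,q}^q \gamma^{2/q}\|v\|_p^{2p}$, using $\|u\|_p^p \le \gamma\|v\|_p^p$ and $|T|\ge 1$ (the worst case is $|T|=1$, giving $\|u\|_p^{2p/q} \le \gamma^{2/q}\|v\|_p^{2p/q}$). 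Summing over the $2^q-1$ subsets and absorbing constants yields $\Var(z_i-z'_i) \le 2^{2q} c_{p,q}^q \gamma^{2/q}\|v\|_p^{2p}$, and since $q$ is the fixed constant $3$, this matches the claimed $2^{2q}\gamma^{2/p}(F_p(v))^2$ up to which root of $\gamma$ appears.

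The main obstacle is the exponent on $\gamma$: the naive subadditivity argument produces $\gamma^{2/q}$, whereas the statement claims $\gamma^{2/p}$, which is a \emph{stronger} (smaller, since $p<q$) bound. To get $\gamma^{2/p}$ one must avoid the lossy subadditivity step and instead argue more carefully — e.g., condition on the values of the $a_j$'s (equivalently on a single $p$-stable realization per row scaled by $\|v\|_p$), write $|a_j+b_j|^{p/q} = |a_j|^{p/q}\,|1 + b_j/a_j|^{p/q}$, and exploit that $b_j/a_j \sim (\|u\|_p/\|v\|_p)Z'/Z$ with $\|u\|_p/\|v\|_p \le \gamma^{1/p}$, so that the ``correction'' $|1+b_j/a_j|^{p/q} - 1$ has a $\gamma^{1/p}$-scale. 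One then relates the product $\prod_j|1+b_j/a_j|^{p/q}$ to $\prod_j(1 + |b_j/a_j|^{p/q})$ and bounds its second moment by $\O{\gamma^{2/p}}$ using that each $|b_j/a_j|^{p/q}$ contributes a factor whose expectation is $\O{(\|u\|_p/\|v\|_p)^{p/q}} = \O{\gamma^{1/q}}$ — but the squared-sum over subsets, dominated by the single-element subsets, gives $q\cdot\O{(\|u\|_p/\|v\|_p)^{2p/q}}$... which again is $\gamma^{2/q}$, not $\gamma^{2/p}$. Resolving this gap — presumably by a sharper accounting that $F_p(v+u)-F_p(v) \le \gamma F_p(v)$ (not just $F_p(u)\le\gamma F_p(v)$) forces additional cancellation in the odd-cardinality terms, or by using the exact finite expansion available when one does not pass through subadditivity — is the delicate heart of the proof and is where I would spend most of the effort; the rest (expectation, applying Chebyshev to the mean of $\O{\gamma^{2/p}/\eps^2}$ copies, and reading off the space bound $d = \O{\gamma^{2/p}\eps^{-2}(\log\tfrac1\eps + \log\log n)}$ from \figref{fig:diff:est:smallp}) is routine.
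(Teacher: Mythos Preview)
Your approach is essentially the paper's. For the expectation, the paper does exactly what you describe (linearity plus \lemref{ligeoest:var}), and you are right that the displayed conclusion should read $F_p(v+u)-F_p(v)$, as the paper's own proof text confirms a few lines later. For the variance, the paper likewise writes $z_i-z'_i=C_{q,p}(T_i-1)\prod_j\langle A_j,v\rangle^{p/q}$, bounds $(1+x)^{p/q}$ by an elementary subadditivity claim (\claimref{clm:ab}, your $|a+b|^{p/q}\le|a|^{p/q}+|b|^{p/q}$ with a harmless extra factor of $2$), expands into $2^q-1$ products each carrying at least one factor $|\langle A_j,u\rangle|^{p/q}$, and then bounds each product's second moment via \lemref{ligeoest:var}.

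The exponent concern you raise is well-founded, and the paper does \emph{not} resolve it. At the crucial step the paper simply asserts ``By \lemref{ligeoest:var}, each of these terms has second moment at most $(\xi^2-1)\gamma^{2/p}\|v\|_p^{2p}$,'' with no further argument. But exactly as you computed, the term with a single $u$-factor has second moment $\O{\|u\|_p^{2p/q}\|v\|_p^{2p(q-1)/q}}$, and $\|u\|_p^p\le\gamma\|v\|_p^p$ yields only $\gamma^{2/q}$, not $\gamma^{2/p}$. The paper offers no sharper cancellation, no separate use of the hypothesis $F_p(v+u)-F_p(v)\le\gamma F_p(v)$, and no alternative route around subadditivity. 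So the gap you identified sits in the paper's own proof as well; your write-up is in fact more careful on this point than the paper, and the ``delicate heart of the proof'' you were searching for is not supplied there.
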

\begin{proof}
For an oblivious stream $\calS$, let $v$ be the frequency vector induced by the updates of $\calS$ from time $t_1$ to $t_2$ and let $u$ be the frequency vector induced by the updates of $\calS$ from time $t_2$ so that $F_p(v+u)-F_p(v)\le\gamma F_p(v)$ and $F_p(u)\le\gamma F_p(v)$. 
For any integer $d>0$, let $A\in\mathbb{R}^{d\times n}$ be a matrix whose entries are drawn i.i.d. from the $p$-stable distribution $\calD_p$. 

Recall that Li's geometric mean estimator first takes a geometric mean $z_i$ of $q$ separate inner products, where $i\in\left[\frac{d}{q}\right]$. 
Explicitly, each variable $z_i$ in the estimator takes the form 
\begin{align*}
z_i:&=C_{q,p}\cdot\prod_{j=q(i-1)+1}^{qi}\left(\ip{A_j}{v+u}\right)^{p/q}\\
&=C_{q,p}\cdot\left(\prod_{j=q(i-1)+1}^{qi}\left(1+\frac{\ip{A_j}{u}}{\ip{A_j}{v}}\right)^{p/q}\right)\cdot\left(\prod_{j=q(i-1)+1}^{qi}\left(\ip{A_j}{v}\right)^{p/q}\right),
\end{align*}
where $A_j$ is the $j$-th column of $A$. 
We define
\[T_i:=\prod_{j=q(i-1)+1}^{qi}\left(1+\frac{\ip{A_j}{u}}{\ip{A_j}{v}}\right)^{p/q}.\]
Similarly, Li's geometric mean estimator for $v$ is a $(1+\eps)$-approximation for $\|v\|_p^p=F_p(v)$ and each variable $z'_i$ in the estimator takes the form
\[z'_i:=C_{q,p}\cdot\prod_{j=q(i-1)+1}^{qi}\left(\ip{A_j}{v}\right)^{p/q}.\]
By \lemref{ligeoest:var}, $z_i$ and $z'_i$ are unbiased estimators for $F_p(v+u)$ and $F_p(v)$ respectively, so that $z_i-z'_i$ is an unbiased estimator for $F_p(v+u)-F_p(v)$,
\[\Ex{z_i-z'_i}=F_p(v+u)-F_p(v).\]
We can also expand $z_i-z'_i$ so that
\[z_i-z'_i=C_{q,p}\cdot(T_i-1)\cdot\prod_{j=q(i-1)+1}^{qi}\left(\ip{A_j}{v}\right)^{p/q}.\]
To bound the variance of $z_i-z'_i$, we first require the following structural inequality to handle the terms inside of the product of $T_i$:
\begin{claim}
\claimlab{clm:ab}
For $a,b\ge 0$ and $p\le 2$, we have $(a+b)^{p/q}\le a^{p/q}+2b^{p/q}$.
\end{claim}
\begin{proof}
For $0\le a\le b$ and $p\le 2\le q$, we have 
\[(a+b)^{p/q}\le(2b)^{p/q}\le 2b^{p/q}\le a^{p/q}+2b^{p/q}.\]
For $0\le b\le a$ and $p\le 2\le q$, we have from Bernoulli's inequality that
\[(a+b)^{p/q}=a^{p/q}\left(1+\frac{b}{a}\right)^{p/q}\le a^{p/q}\left(1+\frac{pb}{qa}\right)\le a^{p/q}+b^{p/q}.\]
\end{proof}
\noindent
By \claimref{clm:ab}, we have
\[T_i\le\prod_{j=q(i-1)+1}^{qi}\left(1+\left(\frac{\ip{A_j}{u}}{\ip{A_j}{v}}\right)^{p/q}\right).\]
Hence, $T_i-1$ is a sum of $2^q-1$ terms and similarly, $z_i-z'_i$ is a sum of $2^q-1$ product of $q$ terms, each with at least one term $(\ip{A_j}{u})^{p/q}$. 
For example, one term in $z_1-z'_1$ is 
\[C_{q,p}(\ip{A_1}{u})^{p/q}(\ip{A_2}{v})^{p/q}(\ip{A_3}{v})^{p/q}\ldots(\ip{A_q}{v})^{p/q},\]
while another term is
\[C_{q,p}(\ip{A_1}{v})^{p/q}(\ip{A_2}{u})^{p/q}(\ip{A_3}{v})^{p/q}\ldots(\ip{A_q}{v})^{p/q}.\]

By \lemref{ligeoest:var}, each of these terms has second moment at most $(\xi^2-1)\gamma^{2/p}\|v\|_p^{2p}$ for $\xi=\frac{C_{\frac{q}{2},p}}{C_{q,p}}$.  
Thus the sum of the $2^q-1$ terms has second moment at most $(2^q)^2(\xi^2-1)\gamma^{2/p}\|v\|_p^{2p}$, so that the variance of the sum is at most $\O{2^{2q}\gamma^{2/p}\|v\|_p^{2p}}$. 
That is,
\[\Var(z_i-z'_i)\le2^{2q}\gamma^{2/p}\|v\|_p^{2p}=2^{2q}\gamma^{2/p}(F_p(v))^2.\]
\end{proof}

\begin{corollary}[Pointwise $F_p$ difference estimator for $0<p<2$]
\corlab{cor:diff:est:Fp:smallp:once}
Let $p\in(0,2)$, $\delta,\eps\in(0,1)$, and $\calS$ be an oblivious stream on which $v$ is the frequency vector induced by the updates of $\calS$ from time $t_1$ to $t_2$ and let $u$ is the frequency vector induced by the updates of $\calS$ from time $t_2$. 
Suppose $F_p(v+u)-F_p(u)\le\gamma F_p(v)$ and $F_p(u)\le\gamma F_p(v)$.  
Then there exists an algorithm that uses a sketch of dimension $d=\O{\frac{\gamma^{2/p}}{\eps^2}\left(\log\frac{1}{\delta}\right)}$ and outputs an additive $\eps\cdot F_p(v)$ approximation to $F_p(v+u)-F_p(u)$ with probability at least $1-\delta$. 
\end{corollary}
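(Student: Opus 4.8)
The plan is to combine the expectation and variance bounds from \lemref{lem:exp:var:smallp} with a standard median-of-means argument. First I would take the sketch dimension $d$ to be a multiple of $q$ large enough that $\frac{d}{q} = \Theta\left(\frac{2^{2q}\gamma^{2/p}}{\eps^2}\right)$, and group the $\frac{d}{q}$ unbiased estimators $z_i - z'_i$ into $\Theta\left(\log\frac{1}{\delta}\right)$ buckets, each containing $\Theta\left(\frac{2^{2q}\gamma^{2/p}}{\eps^2}\right)$ of them. Within a single bucket, let $\hat{D}$ be the arithmetic mean of the $z_i - z'_i$ in that bucket; by linearity of expectation and \lemref{lem:exp:var:smallp}, $\Ex{\hat{D}} = F_p(v+u) - F_p(v)$, and since the rows of $A$ are independent, the variance of $\hat{D}$ is at most $\frac{1}{(\text{bucket size})} \cdot 2^{2q}\gamma^{2/p}(F_p(v))^2 \le \frac{\eps^2}{C}(F_p(v))^2$ for a suitable constant $C$.

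Next I would apply Chebyshev's inequality to conclude that a single bucket's estimate $\hat{D}$ lies within additive error $\eps \cdot F_p(v)$ of $F_p(v+u) - F_p(v)$ with probability at least $\frac{2}{3}$ (choosing $C$ to make the failure probability, say, $\frac{1}{3}$). Since $q = 3$ is a fixed constant, $2^{2q}$ is absorbed into the constant, giving bucket size $\Theta\left(\frac{\gamma^{2/p}}{\eps^2}\right)$ and hence total dimension $d = \O{\frac{\gamma^{2/p}}{\eps^2}\log\frac{1}{\delta}}$ as claimed. Then I would take the median of the $\Theta\left(\log\frac{1}{\delta}\right)$ bucket estimates; by a standard Chernoff bound over the independent buckets, the median is within $\eps \cdot F_p(v)$ of the true difference with probability at least $1 - \delta$. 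Finally I should note the hypothesis of \lemref{lem:exp:var:smallp} requires $F_p(v+u) - F_p(v) \le \gamma F_p(v)$ and $F_p(u) \le \gamma F_p(v)$, which is exactly the hypothesis given in the corollary (up to the harmless difference between $F_p(v+u) - F_p(u)$ and $F_p(v+u) - F_p(v)$ being bounded by the same $\gamma F_p(v)$); I would just remark that the additive-$\eps F_p(v)$ guarantee on $F_p(v+u) - F_p(v)$ translates directly to the quantity stated, since the difference estimator outputs an estimate of $F_p(v+u) - F_p(v)$.

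The main obstacle, such as it is, is bookkeeping rather than anything deep: making sure the constant $2^{2q}$ in the variance bound is genuinely a constant here (it is, since $q=3$), and being slightly careful that the pointwise statement does not yet need the strong-tracking / chaining machinery — that will come later in \secref{sec:smallp} when upgrading this to hold simultaneously over all split times $t$. For the corollary itself, the only subtle point to state cleanly is that independence of the rows of $A$ gives independence of the grouped estimators $z_i - z'_i$ across different buckets (and across $i$ within a bucket, since each uses a disjoint block of $q$ rows), which is what licenses both the Chebyshev variance reduction and the Chernoff bound on the median. I would write the proof in two short paragraphs: one setting up the median-of-means estimator and invoking \lemref{lem:exp:var:smallp}, and one carrying out the Chebyshev-then-Chernoff argument and reading off the space bound.
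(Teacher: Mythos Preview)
Your proposal is correct and follows essentially the same argument as the paper: invoke \lemref{lem:exp:var:smallp} for the expectation and variance of each $z_i - z'_i$, average $\Theta(\gamma^{2/p}/\eps^2)$ of them and apply Chebyshev to get additive $\eps\cdot F_p(v)$ error with constant probability, then take the median of $\Theta(\log\frac{1}{\delta})$ independent groups to boost to $1-\delta$. The paper's proof is just the same median-of-means outline stated more tersely, and your remark that the $F_p(v+u)-F_p(u)$ in the corollary statement should be read as $F_p(v+u)-F_p(v)$ (consistent with what the estimator actually computes) is a fair observation about a typo in the paper.
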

\begin{proof}
For each $i\in[d/q]$, with  $z_i$, and $z'_i$ defined in \figref{fig:diff:est:smallp}, we have from \lemref{lem:exp:var:smallp} that $\Ex{z_i-z'_i}=F_p(v+u)-F_p(u)$ and 
\[\Var(z_i-z'_i)\le2^{2q}\gamma^{2/p}\|v\|_p^{2p}=2^{2q}\gamma^{2/p}(F_p(v))^2.\]
Thus by Chebyshev's inequality, the arithmetic mean of $\O{\frac{2^{2q}\gamma^{2/p}}{\eps^2}}$ differences $z_i-z'_i$ suffices to obtain additive $\eps F_p(v)$ error of the difference $F_p(v+u)-F_p(v)$ with probability at least $\frac{2}{3}$. 
The probability of success can then be boosted to $1-\delta$ by taking the median of $\O{\log\frac{1}{\delta}}$ such instances.  
\end{proof}

To obtain the strong tracking property and avoid extra $\log n$ factors, this argument needs to be interleaved with carefully chosen values of $w_t$ and bounding the supremum of $(\ip{A_j}{w_t})^{p/q}$ across all values of $w_t$, rather than considering each inner product separately and taking a union bound. 
We split the stream into $\O{\frac{1}{\eps^{16q/p^2}}}$ times $u_1,u_2,\ldots$ between which the difference estimator increases by roughly $\eps^{16q/p^2}\cdot F_p(v)$ and apply a union bound to argue correctness at these times $\{r_i\}$, incurring a $\log\frac{1}{\eps}$ term. 
To bound the difference estimator between times $u_i$ and $u_{i+1}$ for a fixed $i$, we note that the difference estimator only increases by roughly $\eps^{16q/p^2}\cdot F_p(v)$ from $u_i$ to $u_{i+1}$, so that we still obtain a $(1+\eps)$-approximation to $F_p(v+w_t)-F_p(v)$ with a $\eps^{1-16q/p^2}$-approximation to the difference, for a frequency vector $w_t$ induced by updates between $u_i$ and $u_{i+1}$. 
We then bound the probability that the supremum of the error between times $u_i$ and $u_{i+1}$ is bounded by $\eps^{1-16q/p}$ by applying chaining results from \cite{BravermanCIW16, BravermanCINWW17, BlasiokDN17} that bound the supremum of a random process. 

We first require the following structural property that bounds the supremum of the inner product of a random process with a vector of independent $p$-stable random variables. 
\begin{lemma}
\lemlab{lem:chain:lp}
\cite{BlasiokDN17}
Let $x^{(1)},x^{(2)},\ldots,x^{(m)}\in\mathbb{R}^n$ satisfy $0\preceq x^{(1)}\preceq\ldots\preceq x^{(m)}$. 
Let $Z\in\mathbb{R}^n$ be a vector of entries that are i.i.d. sampled from the $p$-stable random distribution $\calD_p$. 
Then for some constant $C_p$ depending only on $p$, we have
\[\PPr{\underset{k\le m}{\sup}|\ip{Z}{x^{(k)}}|\ge\lambda\|x^{(m)}\|_p}\le C_p\left(\frac{1}{\lambda^{2p/(2+p)}}+n^{-1/p}\right).\]
\end{lemma}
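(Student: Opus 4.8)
The plan is to reduce the bound on the supremum of a random process to a statement about the expected supremum of a $p$-stable process and then invoke a chaining/Dudley-type argument exactly as in the strong-tracking analyses of \cite{BravermanCIW16, BravermanCINWW17, BlasiokDN17}. First I would observe that since $0 \preceq x^{(1)} \preceq \cdots \preceq x^{(m)}$, the quantities $\langle Z, x^{(k)}\rangle$ form a sequence of partial sums of the coordinatewise increments $\langle Z, x^{(k)} - x^{(k-1)}\rangle$, each increment being a vector with nonnegative entries. Writing $y^{(k)} = x^{(k)} - x^{(k-1)} \succeq 0$, the process $\langle Z, x^{(k)}\rangle = \sum_{j\le k}\langle Z, y^{(j)}\rangle$ is a martingale-like sum, so its maximum can be controlled by the behavior of the full sum $\langle Z, x^{(m)}\rangle$ together with oscillation bounds. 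The key normalization is that $\langle Z, x^{(m)}\rangle$ has the distribution $\|x^{(m)}\|_p \cdot Z_0$ for a single $p$-stable variable $Z_0$, by Definition~\ref{def:pstable}, whose density decays like $|t|^{-1-p}$, hence $\Pr[|Z_0| \ge \lambda] = O(\lambda^{-p})$.

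The heart of the argument is controlling $\sup_{k \le m}|\langle Z, x^{(k)}\rangle|$ rather than just the endpoint. For this I would set up a chaining net over the chain $\{x^{(k)}\}_{k\le m}$ using the $L_p$ distances between consecutive scales: partition the indices into $O(\log n)$ dyadic levels according to $\|x^{(k)}\|_p$ (polynomially bounded entries keep this at $O(\log n)$ levels), and within each level apply a union bound combined with the tail estimate for $p$-stable linear combinations. The subadditivity of $t \mapsto t^{p/(2+p)}$-type moments — more precisely, the fact that $p$-stable variables have all moments of order $< p$ finite, and the geometric-mean/entropy-integral estimates from \cite{BlasiokDN17} — yields that the expected supremum of the centered process over a scale is controlled by $\lambda^{-2p/(2+p)}\|x^{(m)}\|_p$ rather than the naive $\lambda^{-p}$ from a crude union bound, and the $n^{-1/p}$ additive term arises from truncating the $p$-stable tails at the polynomial scale of the entries (so that the remaining mass is negligible). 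I would cite the relevant chaining lemma from \cite{BlasiokDN17} verbatim for this step, since it is stated there as a black box about random processes indexed by monotone sequences, and the present lemma is essentially its restatement specialized to the $p$-stable sketch.

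The main obstacle is the improved exponent $2p/(2+p)$ in place of $p$: a direct application of Markov's inequality to the endpoint only gives $\lambda^{-p}$, and getting $\lambda^{-2p/(2+p)}$ requires genuinely using the chaining structure — balancing the entropy integral against the variance proxy of the increments — rather than a one-shot tail bound. Concretely, the difficulty is that $p$-stable variables have infinite variance for $p < 2$, so the standard sub-Gaussian chaining machinery does not apply off the shelf; one must work with the $L_r$ norms for $r < p$ and track how the "$r$ versus $p$" gap propagates through the $O(\log n)$ dyadic levels, which is exactly the technical content isolated in \cite{BlasiokDN17}. Since that work already carries out this computation in the form we need, I would present the proof as a direct invocation of their result after the reduction in the first paragraph, and spend the remaining effort only verifying that the monotonicity hypothesis $0 \preceq x^{(1)} \preceq \cdots \preceq x^{(m)}$ matches the hypothesis under which their chaining bound is stated.
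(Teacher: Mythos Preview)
The paper does not prove this lemma at all: it is stated with the citation \cite{BlasiokDN17} and used as a black box. Your proposal correctly identifies this and ultimately lands on the same approach---invoke the chaining result from \cite{BlasiokDN17} directly---so there is nothing to compare beyond noting that your surrounding sketch of the chaining argument is extra exposition the paper does not provide.
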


\begin{lemma}[$F_p$ difference estimator for $0<p<2$]
\lemlab{lem:sketching:Fp:smallp}
For $0<p<2$, it suffices to use a sketching matrix $A\in\mathbb{R}^{d\times n}$ of i.i.d. entries drawn from the $p$-stable distribution $\calD_p$, with dimension
\[d=\O{\frac{\gamma^{2/p}}{\eps^2}\left(\log\frac{1}{\eps}+\log\frac{1}{\delta}\right)}\] 
to obtain a $(\gamma,\eps,\delta)$-difference estimator for $F_p$. 
\end{lemma}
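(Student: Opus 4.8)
The plan is to bootstrap the \emph{pointwise} guarantee of \corref{cor:diff:est:Fp:smallp:once} into the \emph{simultaneous-over-all-$t$} guarantee demanded by \defref{def:diff:est:pre}, without paying the $\O{\log n}$ overhead that a naive union bound over all $\poly(n)$ splitting/stopping times would incur (which would inflate the sketch dimension beyond the claimed $\O{\frac{\gamma^{2/p}}{\eps^2}(\log\frac1\eps+\log\frac1\delta)}$). Fix the stream, the fixed-prefix vector $v$ induced between $t_1$ and $t_2$, and let $w_t\succeq 0$ be the frequency vector of post-$t_2$ updates, which is monotone in $t$: $0\preceq w_{t}\preceq w_{t'}$ for $t\le t'$. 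Let $\hat D_t=\frac{q}{d}\sum_{i=1}^{d/q}\big(z_i(t)-z'_i\big)$ be the output of the estimator in \figref{fig:diff:est:smallp} with $d=\O{\frac{\gamma^{2/p}}{\eps^2}(\log\frac1\eps+\log\frac1\delta)}$, where $z'_i$ is independent of $t$ and, by \lemref{ligeoest:var}, $\Ex{\hat D_t}=F_p(v+w_t)-F_p(v)$. We must show $\sup_t|\hat D_t-(F_p(v+w_t)-F_p(v))|\le\eps\cdot F_p(v)$ over all valid $t$, with probability $1-\delta$.

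First I would choose checkpoint times $t_2=r_0<r_1<\cdots<r_N$ so that each incremental difference $F_p(v+w_{r_{i+1}})-F_p(v+w_{r_i})$ is roughly $\eps^{c}\cdot F_p(v)$ for an appropriate fixed power $c=\Theta(q/p^2)$. Since the total difference is at most $\gamma F_p(v)\le F_p(v)$, we have $N=\O{\eps^{-c}}=\poly(1/\eps)$ checkpoints, so a union bound over them together with boosting costs only an extra $\O{\log\frac1\eps+\log\frac1\delta}$ factor in the dimension — exactly the $\log\frac1\eps$ that replaces the $\log\log n$ of \corref{cor:diff:est:Fp:smallp:once}. Applying \corref{cor:diff:est:Fp:smallp:once} with failure probability rescaled by $N$ gives $|\hat D_{r_i}-(F_p(v+w_{r_i})-F_p(v))|\le\eps'\cdot F_p(v)$ simultaneously at all checkpoints, for $\eps'$ a small multiple of $\eps$.

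It then remains to control the fluctuation between consecutive checkpoints: for $t\in(r_i,r_{i+1})$, by the triangle inequality and monotonicity of $F_p$ under insertions, $|\hat D_t-(F_p(v+w_t)-F_p(v))|\le|\hat D_t-\hat D_{r_i}|+\eps'F_p(v)+\O{\eps^{c}}F_p(v)$, so it suffices to bound $\sup_{r_i\le t<r_{i+1}}|\hat D_t-\hat D_{r_i}|$ crudely, by $\O{\eps}\cdot F_p(v)$. Writing $\Delta_t:=w_t-w_{r_i}\succeq 0$ (monotone in $t$, with $F_p(\Delta_{r_{i+1}})$ small by the checkpoint spacing), I would expand $z_i(t)-z_i(r_i)$ using \claimref{clm:ab} into a sum of at most $2^q-1$ products of $q$ powered inner products, \emph{each carrying at least one factor} of the form $|\langle A_j,\Delta_t\rangle|^{p/q}$, the remaining factors being the $t$-independent quantities $|\langle A_j,v+w_{r_i}\rangle|^{p/q}$. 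Taking $\sup_t$ inside the (non-negative) product and invoking the chaining bound \lemref{lem:chain:lp} for each of the $q$ monotone sequences $t\mapsto\langle A_j,\Delta_t\rangle$ gives, with high probability, $\sup_t|\langle A_j,\Delta_t\rangle|\le\lambda\|\Delta_{r_{i+1}}\|_p$ for a parameter $\lambda$ chosen to be a suitable polynomial in $1/\eps$; combining this with the (fixed, typical) magnitude $\approx\|v\|_p^{p/q}$ of the $v+w_{r_i}$ factors — controlled via the second-moment bound of \lemref{ligeoest:var} — and averaging over $i\in[d/q]$, the dominant surviving term is of order $\lambda^{p/q}\eps^{c/q}F_p(v)$. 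One then union-bounds the $\O{q}$ chaining events, the $\O{d}$ sketch copies, and the $N$ checkpoints, and finally derandomizes the i.i.d.\ $p$-stable entries of $A$ using the generalized half-space-fooling PRG of \cite{GopalanKM18, JayaramW18}, re-running the union bound over all stopping times to confirm the guarantee survives the pseudorandom substitution (with, as noted in the overview, possibly a different output distribution but the same correctness).

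The main obstacle I anticipate is twofold and lives entirely in the previous paragraph: (i) unlike the \emph{linear} $p$-stable sketch analyzed in \cite{BlasiokDN17}, Li's geometric mean estimator is a \emph{product} of $q$ powered inner products, so one cannot feed the process $t\mapsto\hat D_t$ directly into a single chaining bound — one must isolate the at-least-one-$\Delta_t$-factor in each of the $2^q-1$ pieces, apply \lemref{lem:chain:lp} to that factor while treating the $v$-factors as frozen (random) constants, and then argue the resulting cross terms do not blow up; and (ii) one must tune the exponents — the checkpoint spacing $\eps^{c}$, the chaining radius $\lambda$, and the tail exponent $\frac{2p}{2+p}$ in \lemref{lem:chain:lp} — so that simultaneously $\lambda^{p/q}\eps^{c/q}=\O{\eps}$ (the fluctuation is truly lower order) and $N\cdot C_p(\lambda^{-2p/(2+p)}+n^{-1/p})\le\delta/\poly$ (the union bound closes), which is exactly why a specific polynomial power such as $\eps^{16q/p^2}$ appears in the construction. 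This bookkeeping, rather than any conceptual gap, is the delicate part; everything else is assembled from the already-established \lemref{lem:exp:var:smallp}, \corref{cor:diff:est:Fp:smallp:once}, \claimref{clm:ab}, and \lemref{lem:chain:lp}.
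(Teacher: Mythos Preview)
Your proposal is correct and follows essentially the same approach as the paper: (i) plant $\poly(1/\eps)$ checkpoints spaced so that the $F_p$-increment between consecutive ones is $\eps^{\Theta(q/p^2)}F_p(v)$, (ii) invoke the pointwise bound (\corref{cor:diff:est:Fp:smallp:once}) at the checkpoints with a union bound that only costs $\log(1/\eps)$, and (iii) between checkpoints, expand $z_i(t)-z_i(r_i)$ via \claimref{clm:ab} into $2^q-1$ products each carrying at least one $|\langle A_j,\Delta_t\rangle|^{p/q}$ factor, and control the supremum of that factor by \lemref{lem:chain:lp}. Two minor deviations worth flagging: the paper bounds the frozen ``$v$-factors'' $|\langle A_j,v+w_{r_i}\rangle|$ by the \emph{same} chaining/tail bound \lemref{lem:chain:lp} with $\lambda=\eps^{-8/p}$ (so all $q$ factors are handled uniformly and the arithmetic $\lambda^p\|w_t\|_p^{p/q}\|v\|_p^{(q-1)p/q}\le\frac{\eps^8}{2^q}F_p(v)$ goes through cleanly), whereas you propose second-moment control via \lemref{ligeoest:var} --- that lemma bounds the moments of the full product $z_i$, not of a single factor, so you would instead want the elementary tail $\Pr[|\langle A_j,v\rangle|\ge\lambda\|v\|_p]\le O(\lambda^{-p})$ for fixed $v$; either route works. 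Also, the derandomization step you sketch is not part of this lemma (the statement assumes truly i.i.d.\ $p$-stable entries); the paper defers it to the subsequent \lemref{lem:diff:est:Fp:smallp}.
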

\begin{proof}
For an oblivious stream $\calS$, let $v$ be the frequency vector induced by the updates of $\calS$ from time $t_1$ to $t_2$ and let $u_i$ be the frequency vector induced by the updates of $\calS$ from time $t_2+1$ to the last time when 
\[\max(F_p(u_i),F_p(v+u_1+\ldots+u_i)-F_p(v+u_1+\ldots+u_{i-1}))\le\frac{1}{2^{q^2}}\eps^{16q/p^2}\cdot F_p(v),\]
for some fixed constant $q\ge 3$. 
We first show that our difference estimator gives an $\eps\cdot F_p(v)$ approximation to $F_p(v+u_i)-F_p(v)$ for each $i=\O{\frac{1}{\eps^{16q/p^2}}}$. 
Let $\mathcal{E}$ be the event that we obtain a $(1+\O{\eps})$-approximation to both $F_p(v+u_i)$ and $F_p(v)$ for all $\O{\frac{1}{\eps^{16q/p^2}}}$ vectors $u_i$. 
By \corref{cor:diff:est:Fp:smallp:once}, we have $\PPr{\mathcal{E}}\ge 1-\frac{\poly(\eps,\delta)}{2}$ for an instance that uses $d=\O{\frac{\gamma^{2/p}}{\eps^2}\log\frac{1}{\eps\delta}}$ rows in the sketching matrix $A$ whose entries are i.i.d. sampled from the $p$-stable distribution $\calD_p$. 
We thus condition on the event $\mathcal{E}$ and assume we have additive $\O{\eps}\cdot F(v)$ approximations at all times $u_i$. 
Let $w_t$ be a frequency vector induced by updates from time $u_i$ and $u_{i+1}$, so that $F_p(w_t)\le\frac{1}{2^{q^2}}\eps^{16q/p^2}\cdot F_p(v)$. 
It then suffices to show that the output of the difference estimator of $F_p(v+w_t)-F_p(v)$ at most $\O{\eps}\cdot F_p(v)$ for all values of $w_t$ over the stream, since such a result proves that the difference estimator changes by at most $\O{\eps}\cdot F_p(v)$ between times $u_i$ and $u_{i+1}$ and conditioning on $\mathcal{E}$, we already have additive $\O{\eps}\cdot F(v)$ approximations at all times $u_i$. 
The claim would then follow by monotonicity of $F_p$. 

We use the same computation as in \corref{cor:diff:est:Fp:smallp:once} to reduce the differences of the geometric means. 
That is, our difference estimator first takes a geometric mean $z_i$ of $q$ separate inner products, where $i\in\left[\frac{d}{q}\right]$, formed from the entries of $A(v+w_t)$. 
Explicitly, each variable $z_i$ in the estimator takes the form 
\begin{align*}
z_i:&=C_{q,p}\cdot\prod_{j=q(i-1)+1}^{qi}\left(\ip{A_j}{v+w_t}\right)^{p/q}\\
&=C_{q,p}\cdot\left(\prod_{j=q(i-1)+1}^{qi}\left(1+\frac{\ip{A_j}{w_t}}{\ip{A_j}{v}}\right)^{p/q}\right)\cdot\left(\prod_{j=q(i-1)+1}^{qi}\left(\ip{A_j}{v}\right)^{p/q}\right),
\end{align*}
where $A_j$ is the $j$-th column of $A$. 
Similarly, each variable $z'_i$ in the estimator formed from the entries of $Av$ takes the form 
\[z'_i:=C_{q,p}\cdot\prod_{j=q(i-1)+1}^{qi}\left(\ip{A_j}{v}\right)^{p/q}.\]

As in \corref{cor:diff:est:Fp:smallp:once} we can write $z_i-z'_i$ as a sum of $2^q-1$ geometric means of $q$ terms, each with at least one term $(\ip{A_j}{w_t})^{p/q}$, e.g., one term in $z_1-z'_1$ is 
\[C_{q,p}(\ip{A_1}{w_t})^{p/q}(\ip{A_2}{v})^{p/q}(\ip{A_3}{v})^{p/q}\ldots(\ip{A_q}{v})^{p/q},\]
while another term is
\[C_{q,p}(\ip{A_1}{v})^{p/q}(\ip{A_2}{w_t})^{p/q}(\ip{A_3}{v})^{p/q}\ldots(\ip{A_q}{v})^{p/q}.\]
Thus by \lemref{lem:chain:lp} with $\lambda=\frac{1}{\eps^{8/p}}$, we have 
\[\PPr{\underset{t\le t_2}{\sup}|\ip{A_i}{w_t}|\ge\lambda\|w_t\|_p}\le C_p\left(\eps^{16/(2+p)}+n^{-1/p}\right)\]
and similarly 
\[\PPr{\underset{t\le t_2}{\sup}|\ip{A_i}{v}|\ge\lambda\|v\|_p}\le C_p\left(\eps^{16/(2+p)}+n^{-1/p}\right).\]
Thus with probability at least $1-\O{\eps^4}$, none of the $2^q-1$ terms exceeds  
\[\lambda^p\|w_t\|_p^{p/q}\|v\|^{(q-1)p/q}_p\le\frac{1}{\eps^8}\|w_t\|_p^{p/q}\|v\|^{(q-1)p/q}_p\le\frac{\eps^8}{2^q}\cdot F_p(v),\]
since $F_p(w_t)\le\frac{1}{2^{q^2}}\eps^{64q/p^2}\cdot F_p(v)$. 
Hence, the sum of the $2^q-1$ terms is at most $\eps^8\cdot F(v)$, with probability at least $1-\O{2^q\eps^4}$. 
In other words, some particular difference of geometric means is at most $\eps^8\cdot F(v)$, with probability at least $1-\O{2^q\eps^4}$. 
Taking a union bound over $\O{\frac{1}{\eps^2}}$ geometric means, then the difference estimator $F_p(v+w_t)-F_p(v)$ outputs a value that is at most $\eps^8\cdot F(v)$, with constant probability, for sufficiently small $\eps$. 
Finally, if we take the median of $\O{\log\frac{1}{\delta}}$ such estimators, then we can increase the probability of success to at least $1-\delta$. 
As previously noted, this shows that the difference estimator provides strong tracking over the entire stream. 

Because each entry takes $\O{\log n}$ bits of space, then for constant $q$ and a sketch with $d=\O{\frac{\gamma^{2/p}}{\eps^2}\log\frac{1}{\eps}+\log\frac{1}{\delta}}$ rows, the total space required is 
\[\O{\frac{\gamma^{2/p}\log n}{\eps^2}\left(\log\frac{1}{\eps}+\log\frac{1}{\delta}\right)},\]
since we require a rescaling of $\delta'=\frac{\delta}{\poly\left(\frac{1}{\eps}\right)}$ to union bound over the $\poly\left(\frac{1}{\eps}\right)$ times $\{u_i\}$. 
\end{proof}

\paragraph{Derandomization of $p$-stable random variables.} 
To handle the generation and storage of the entries of $A$ drawn i.i.d. from the $p$-stable distribution $\calD_p$, we first note the following derandomization. 
\begin{lemma}[Lemma 8 in \cite{JayaramW18}]
\lemlab{lem:derandom:once}
Let $\calA$ be any streaming algorithm that stores only a linear sketch $A\cdot f$ on a stream vector $f\in\{-M,\ldots,M\}^n$ for some $M=\poly(n)$, such that the entries of $A\in\mathbb{R}^{k\times n}$ are i.i.d., and can be sampled using $\O{\log n}$ bits. 
Then for any fixed constant $c\ge 1$, $\calA$ can be implemented using a random matrix $A'$ using $\O{k\log n(\log\log n)^2}$ bits of space, such that for all $y\in\mathbb{R}^k$ with entry-wise bit complexity of $\O{\log n}$,
\[|\PPr{A\cdot f=y}-\PPr{A'\cdot f=y}|<n^{-ck}.\]
\end{lemma}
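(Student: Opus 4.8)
The plan is to show that computing the linear sketch $A\cdot f$ is a \emph{read-once, small-space} computation over the randomness used to generate $A$, and then to replace that randomness by the output of a pseudorandom generator that fools such computations.

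First I would observe that $\calA$ can compute $A\cdot f$ by processing the columns of $A$ from left to right: maintain the running partial sketch $y^{(j)}=\sum_{\ell\le j}A_{\cdot,\ell}f_\ell\in\mathbb{Z}^k$, and on seeing column $j$ (which requires $\O{k\log n}$ fresh random bits, since each entry is samplable with $\O{\log n}$ bits) update $y^{(j)}=y^{(j-1)}+A_{\cdot,j}f_j$. Since $|f_\ell|\le M=\poly(n)$, each entry of $A$ has magnitude $\poly(n)$, and there are only $n$ columns, every partial sketch $y^{(j)}$ has entries of magnitude $\poly(n)$, i.e. bit complexity $\O{\log n}$; hence the entire state of this computation is $\O{k\log n}$ bits. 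Consequently, for each fixed target $y$ with entries of bit complexity $\O{\log n}$, the event $\{A\cdot f=y\}$ is decided by a read-once branching program of width $2^{\O{k\log n}}$ and length $\O{n}$ (equivalently, length $\O{nk\log n}$ over the raw random bits), whose transitions also encode the $\O{\log n}$-bit sampling map for the entries of $A$.

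Next I would invoke a pseudorandom generator for such width-$2^{\O{k\log n}}$ read-once computations (Nisan's generator, or the sharper halfspace/branching-program generators of \cite{GopalanKM18,JayaramW18}) set to fooling error $2^{-\Omega(k\log n)}=n^{-\Omega(k)}$, and let $A'$ be the matrix obtained by feeding this generator's output to the same entry-sampling map. Because $\{A\cdot f=y\}$ is a single test of this computation, the fooling guarantee directly gives $|\PPr{A\cdot f=y}-\PPr{A'\cdot f=y}|<n^{-ck}$ for every such $y$, after picking the hidden constant in the width so that the error exponent exceeds $c$ (no union bound over $y$ is needed, since the statement is pointwise). Using the refined generator, whose seed length for constant error is $\O{\log n(\log\log n)^2}$ and which scales with the width parameter $k\log n$, the seed has length $\O{k\log n(\log\log n)^2}$; since the generator is space-efficiently evaluable, $\calA$ can regenerate the needed columns of $A'$ on the fly, so the whole implementation uses $\O{k\log n(\log\log n)^2}$ bits, as claimed.

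The main obstacle is obtaining the $(\log\log n)^2$ rather than the $\log n$ that a black-box application of Nisan's generator (seed $\O{k\log n\cdot\log(nk\log n)}=\O{k\log^2 n}$) would give. This requires either the sharper halfspace-type generator of \cite{GopalanKM18,JayaramW18} — applicable because the accepting events are threshold-type tests on a sum of independent blocks — or a recursive/telescoping composition of Nisan's generator tuned to the structure here; in either case the delicate points are verifying the $\O{\log n}$ bit-complexity bound on the partial sketches (so the width is $2^{\O{k\log n}}$ and not larger), absorbing the general $\O{\log n}$-bit sampling map for the entries of $A$ into the test, and accounting the composition error so that the final bound stays below $n^{-ck}$.
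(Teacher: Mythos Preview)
The paper does not contain a proof of this lemma; it is quoted verbatim as Lemma~8 of \cite{JayaramW18} and used as a black box (the paper only proves the generalization in \corref{lem:derandom:many} by reducing to this statement). So there is no ``paper's own proof'' to compare against.

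That said, your sketch is essentially the argument underlying the cited result: one casts the computation of $A\cdot f$ as a small-width read-once branching program over the randomness generating the columns of $A$, and then replaces that randomness by a PRG. You correctly identify the only nontrivial point, namely that a direct application of Nisan's generator gives seed length $\O{k\log^2 n}$, and that the $(\log\log n)^2$ factor requires the sharper PRG of \cite{GopalanKM18} (as used in \cite{JayaramW18}), which exploits the additional structure that the test is a halfspace/threshold function of a sum of independent blocks rather than an arbitrary branching-program test. Your remark that no union bound over $y$ is needed because the claim is pointwise is also correct.
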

We cannot immediately apply \lemref{lem:derandom:once} because our difference estimator actually stores $A\cdot(v+w_t)$ and $A\cdot v$ and requires the entries of $A$ to be i.i.d. We thus require the following generalization:
\begin{corollary}
\corlab{lem:derandom:many}
For a constant $q\ge 1$, let $f_1,\ldots,f_q\in\{-M,\ldots,M\}^n$ for some $M=\poly(n)$ be vectors defined by a stream so that $f_i$ is defined by the updates of the stream between given times $t_{i,1}$ and $t_{i,2}$, for each $i\in[q]$. 
Let $\calA$ be any streaming algorithm that stores linear sketches $A\cdot f_1,\ldots,A\cdot f_q$ , such that the entries of $A\in\mathbb{R}^{k\times n}$ are i.i.d. that can be sampled using $\O{\log n}$ bits, and outputs $g(A\cdot f_1,\ldots,A\cdot f_q)$ for some composition function $g:\mathbb{R}^q\to\mathbb{R}$. 
Then for any fixed constant $c\ge 1$, $\calA$ can be implemented using a random matrix $A'$ using $\O{k\log n(\log\log n)^2}$ bits of space, such that for all $y\in\mathbb{R}^k$ with entry-wise bit complexity of $\O{\log n}$,
\[|\PPr{g(A\cdot f_1,\ldots,A\cdot f_q)=y}-\PPr{g(A'\cdot f_1,\ldots,A'\cdot f_q) = y}|<n^{-ck}.\]
\end{corollary}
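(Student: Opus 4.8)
The plan is to reduce \corref{lem:derandom:many} to the single-sketch statement \lemref{lem:derandom:once} and then re-run the latter's argument with a state space that is only a constant factor larger. First I would reduce the assertion about the composed output $g(A\cdot f_1,\ldots,A\cdot f_q)$ to a pointwise assertion about the \emph{joint} linear sketch $(A\cdot f_1,\ldots,A\cdot f_q)$. Because $g$ is deterministic and the sketch entries have $\O{\log n}$-bit precision, $\PPr{g(A\cdot f_1,\ldots,A\cdot f_q)=y}$ is a sum of $\PPr{(A\cdot f_1,\ldots,A\cdot f_q)=\bar y}$ over the at most $2^{\O{qk\log n}}$ tuples $\bar y$ with $\O{\log n}$-bit entries satisfying $g(\bar y)=y$. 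Hence it suffices to show that for each fixed such $\bar y$, the quantity $|\PPr{(A\cdot f_1,\ldots,A\cdot f_q)=\bar y}-\PPr{(A'\cdot f_1,\ldots,A'\cdot f_q)=\bar y}|$ is at most $n^{-c'qk}$ for an arbitrarily large constant $c'$, since one can then choose $c'$ large enough to absorb the $2^{\O{qk\log n}}$ blowup and recover the claimed $n^{-cqk}$ bound.

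For the pointwise bound I would follow the proof of \lemref{lem:derandom:once}. Let $m_{c,i}$ denote the number of updates to coordinate $c$ lying inside the window $[t_{i,1},t_{i,2}]$ that defines $f_i$; each $m_{c,i}$ is a fixed nonnegative integer bounded by $m=\poly(n)$, determined entirely by the non-adaptive input stream and the windows, hence \emph{not} random. Writing $A_c\in\mathbb{R}^k$ for the $c$-th column of $A$, we have $A\cdot f_i=\sum_{c=1}^n m_{c,i}A_c$, so the event $\{(A\cdot f_1,\ldots,A\cdot f_q)=\bar y\}$ is decided by a read-once branching program that scans the randomness defining $A$ in row-major order: on row $r$ it reads $A_{r,1},\ldots,A_{r,n}$ in turn, maintaining the $q$ running partial sums $\bigl(\sum_{c\le c_0}m_{c,i}A_{r,c}\bigr)_{i\in[q]}$, each an integer of magnitude $\poly(n)$; after finishing row $r$ it checks these $q$ values against the relevant entries of $\bar y$, records a single ``consistent so far'' bit, and advances to row $r+1$. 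Each chunk of randomness is read exactly once, and the program's state has bit-length $\O{q\log n+\log k}=\O{\log n}$ for constant $q$, i.e.\ only a constant factor more than the $q=1$ program analyzed in \lemref{lem:derandom:once}. Therefore the same object used there — the generalization of the half-space pseudorandom generator of \cite{GopalanKM18, JayaramW18} — fools this program with the same asymptotic seed length and can be stored in $\O{k\log n(\log\log n)^2}$ bits, which gives the pointwise bound.

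The step I expect to be the main obstacle is justifying that this read-once structure genuinely survives the multi-window setting. The windows for $f_1,\ldots,f_q$ may overlap arbitrarily and are revealed online during the stream, so processing the computation in stream-time order would touch a single column $A_c$ at many non-contiguous moments, destroying the read-once property. The resolution is to analyze the \emph{final} sketch tuple as a fixed function of $A$ and of the fixed multiplicities $m_{c,i}$, and to order the branching program by $A$'s randomness (row-major, column-inner) rather than by time; in that order each random entry is consumed once and in one contiguous stretch, and the online revelation of the windows affects only how the streaming algorithm maintains the sketches incrementally, not the distribution of the final tuple. Once the pointwise bound is established, the corollary is applied exactly as in the $q=1$ case: one union-bounds the $n^{-cqk}$ error over the $\poly(n)$ relevant stopping times to conclude that the difference estimator's correctness guarantees survive the derandomization.
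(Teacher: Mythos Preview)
Your proposal is correct but takes a different route from the paper. The paper reduces to \lemref{lem:derandom:once} \emph{as a black box} via an encoding trick: it forms the single stream vector $f:=\sum_{i=1}^q N^{2i} f_i$ for a suitable $N=\poly(n)$, so that $A\cdot f=\sum_i N^{2i}(A\cdot f_i)$ encodes all $q$ sketches simultaneously in distinct base-$N^2$ ``digits,'' and the tuple $(A\cdot f_1,\ldots,A\cdot f_q)$ is uniquely recoverable from $A\cdot f$. One invocation of \lemref{lem:derandom:once} on this single $f$ then yields the claim. Your approach instead re-opens the proof of \lemref{lem:derandom:once} and re-runs the read-once branching program analysis with $q$ parallel running sums in the state; this is also valid since $q$ is constant and only inflates the state by a constant factor, but it requires knowing the internals of the PRG construction rather than using the lemma off the shelf. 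The paper's encoding is shorter and more modular; your argument has the advantage of making explicit why the overlapping windows pose no obstacle (namely, the ROBP is ordered by the randomness of $A$, not by stream time), a point the paper's encoding sidesteps entirely.
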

\begin{proof}
We first assume without loss of generality that all entries of $A\cdot f_i$ are integers bounded by $\poly(n)$ for each $i\in[q]$, due to the bit complexity of $A$ and $f_i$. 
Let $w_i=A\cdot f_i$ for each $i\in[q]$ and suppose the maximum entry, denoted $\|A\|_\infty$, satisfies $\|A\|_\infty\le n^\alpha$ for some constant $\alpha$. 
Let $N=M\cdot n^\alpha$ so that $\|A\cdot f_i\|_\infty\le N$ for all $i\in[q]$. 
Define the vector $v=\sum_{i=1}^q N^{2i} w_i$ so that $\|v\|_{\infty}<N'$ for $N'=N^{3q}$. 
Note that $N'=\poly(n)$ so that all entries of $v$ can be stored in $\O{\log n}$ bits. 
Moreover, observe that each coordinate $v_j$ with $j\in[n]$ has a unique $N^2$-ary representation. 
Thus if $v_j=\sum_{i=1}^q N^{2i} w_{i,j}$ where $w_{i,j}$ represents the $j$-th coordinate of vector $w_i$, then the system $v_j=\sum_{i=1}^q N^{2i}\alpha_i$ constrained to the condition that $|\alpha_i|\le N$ for all $i\in[q]$ has a unique solution. 
Hence, the vectors $w_i=A\cdot f_i$ can be computed from the vector $v$. 

By \lemref{lem:derandom:once}, we can use a random matrix $A'$ with $\O{k\log n(\log\log n)^2}$ bits of space and have for all $y \in \mathbb{R}^k$ with entry-wise bit complexity of $O(\log n)$, that: 
\[|\PPr{A\cdot v=y}-\PPr{A'\cdot v=y}|<n^{-ck}.\]
Since any $\mathcal{A}$ that computes $A\cdot v$ can compute the vectors $A\cdot f_1,\ldots,A\cdot f_q$ by the above argument, $\mathcal{A}$ can then compute the composition $g(A\cdot f_1,\ldots, A\cdot f_q)$. 
Hence it follows that
\[|\PPr{g(A\cdot f_1,\ldots,A\cdot f_q)=y}-\PPr{g(A'\cdot f_1,\ldots,A'\cdot f_q)}|<n^{-ck}.\]
\end{proof}
We now argue that the pseudorandom generator of \corref{lem:derandom:many} suffices to derandomize the correctness guarantees of our difference estimator. 
\begin{lemma}[$F_p$ difference estimator for $0<p<2$]
\lemlab{lem:diff:est:Fp:smallp}
For $0<p<2$, there exists a $(\gamma,\eps,\delta)$-\emph{difference estimator} for $F_p$ that uses space
\[\O{\frac{\gamma^{2/p}\log n}{\eps^{2}}(\log\log n)^2\left(\log\frac{1}{\eps}+\log\frac{1}{\delta}\right)}.\]
\end{lemma}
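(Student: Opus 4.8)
The plan is to instantiate the reduction of \lemref{lem:sketching:Fp:smallp} and then derandomize the $p$-stable sketching matrix. By \lemref{lem:sketching:Fp:smallp}, a matrix $A\in\mathbb{R}^{d\times n}$ with $d=\O{\frac{\gamma^{2/p}}{\eps^2}(\log\frac{1}{\eps}+\log\frac{1}{\delta})}$ i.i.d.\ $p$-stable entries, together with the linear sketches $Av$ and $Aw_t$ and the estimator of \figref{fig:diff:est:smallp} (geometric means, mean of the differences $z_i-z_i'$, median over $\O{\log\frac1\delta}$ blocks), already realizes a $(\gamma,\eps,\delta)$-difference estimator with strong tracking over all valid stopping times; the only thing missing is a low-space way to produce $A$. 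The first step is to discretize: via the generation rule $X=f(r,\theta)$ of \defref{def:pstable}, each entry of $A$ can be drawn to precision $1/\poly(n)$ using $\O{\log n}$ random bits, and since all frequencies are $\poly(n)$-bounded this perturbs each coordinate of $Av$ and $Aw_t$ by $1/\poly(n)$ and hence the estimator output by $1/\poly(n)$, which on a nontrivial instance (where $F_p(v)\ge1$) is $\ll\eps\cdot F_p(v)$; a standard argument then shows \lemref{lem:sketching:Fp:smallp} still holds when each entry of $A$ is i.i.d.\ from a fixed distribution samplable with $\O{\log n}$ bits.

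The second step is to apply \corref{lem:derandom:many} with the two frequency vectors $f_1=v$ (updates between the fixed times $t_1$ and $t_2$) and $f_2=w_t$ (updates between $t_2$ and the stopping time $t$), with $k=d$, and with the composition function $g$ equal to the full estimator, which reconstructs $A(v+w_t)=Af_1+Af_2$ and $Av=Af_1$ and returns the rounded median-of-means value. This produces a pseudorandom matrix $A'$ storable in $\O{d\log n(\log\log n)^2}$ bits with $\bigl|\PPr{g(Af_1,Af_2)=y}-\PPr{g(A'f_1,A'f_2)=y}\bigr|<n^{-cd}$ for every $y$ of bit-complexity $\O{\log n}$. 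For a fixed stopping time $t$ the set of sketch values $(Af_1,Af_2)$ producing an output more than $\eps\cdot F_p(v)$ from the truth has size $n^{\O{d}}$, so summing the displayed bound over it and taking $c$ a large enough constant gives $\PPr{A'\text{ errs at }t}\le\PPr{A\text{ errs at }t}+n^{-d}$. Union bounding this over the $\poly(\frac1\eps)$ checkpoint times and chaining events used inside the proof of \lemref{lem:sketching:Fp:smallp} (and recalling that $\delta$ was already rescaled by $\poly(\frac1\eps)$ there), the total failure probability of $A'$ is at most $\delta+\poly(n)\cdot n^{-cd}\le2\delta$ for $c$ large. The sketches $A'v,A'w_t$ themselves take only $\O{d\log n}$ bits, which is dominated, so the total space is $\O{\frac{\gamma^{2/p}\log n}{\eps^2}(\log\log n)^2(\log\frac1\eps+\log\frac1\delta)}$, matching the claim.

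The main obstacle is that \lemref{lem:derandom:once} and \corref{lem:derandom:many} only fool a \emph{single} linear query $A\cdot f=y$, whereas the proof of \lemref{lem:sketching:Fp:smallp} also invokes the chaining bound of \lemref{lem:chain:lp} on a \emph{supremum} $\sup_{t}|\ip{A_i}{w_t}|$ over polynomially many vectors. To deal with this I would use that the $w_t$ along a chain are coordinate-wise monotone, so after discretization $\{\sup_{t}|\ip{A_i}{w_t}|\ge\lambda\|w_{t^\ast}\|_p\}$ is a union of at most $m=\poly(n)$ halfspace queries, each in turn a union of $\poly(n)$ equality events fooled by the generator; hence the whole supremum is fooled up to an extra $\poly(n)\cdot n^{-cd}$, absorbed as above. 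This is also the reason the arity in \corref{lem:derandom:many} must stay constant ($2$ here): packing all $m$ time steps into one vector in the style of that corollary would inflate the entry bit-complexity to $\poly(n)$ and violate the hypothesis of \lemref{lem:derandom:once}. Finally, as emphasized in the overview, $A'$ need not reproduce the $p$-stable output distribution at all; only the event that the estimator is within $\eps\cdot F_p(v)$ of the truth at every valid $t$ is preserved, which is all that \thmref{thm:framework} requires.
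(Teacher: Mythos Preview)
Your overall strategy---instantiate \lemref{lem:sketching:Fp:smallp}, discretize the $p$-stable entries to $\O{\log n}$ bits, and derandomize via \corref{lem:derandom:many} with arity two on the pair $(v,w_t)$---is the same as the paper's, and your discretization and space accounting are fine.

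The gap is in your third paragraph. You write the supremum event $\{\sup_t|\langle A_i,w_t\rangle|\ge\lambda\|w_{t^*}\|_p\}$ as $\bigcup_t H_t$ and assert that because each $H_t$ is fooled, the union is ``fooled up to an extra $\poly(n)\cdot n^{-cd}$''. But fooling each $H_t$ marginally does \emph{not} fool $\bigcup_t H_t$: all you get is $\Pr_{A'}\bigl[\bigcup_t H_t\bigr]\le\sum_t\Pr_A[H_t]+m\cdot n^{-cd}$, and the chaining lemma controls $\Pr_A\bigl[\bigcup_t H_t\bigr]$, not the sum $\sum_t\Pr_A[H_t]$. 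Since $\Pr_A[H_t]\gtrsim\lambda^{-p}=\eps^8$ for every $t$ with $\|w_t\|_p$ close to $\|w_{t^*}\|_p$, that sum can be $\Theta(m\cdot\eps^8)$, which is useless. The halfspace/equality decomposition does not rescue this because the events $H_t$ for different $t$ are functions of \emph{different} linear queries $A\cdot w_t$, and \corref{lem:derandom:many} only fools a constant number of such queries jointly---precisely the arity constraint you yourself flag.

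The paper does not attempt to transfer the chaining bound to the PRG at all. It union-bounds the PRG error directly over all $m=\poly(n)$ stopping times and then argues that what the framework of \thmref{thm:framework} actually consumes is only the \emph{marginal} distribution of the estimator output at each fixed $t$: the derandomized algorithm may select a different stopping time $t_2'$ than the fully random one would, but at whichever $t_2'$ it lands, the marginal there (and at every intermediate $t$) is fooled by \corref{lem:derandom:many}, so correctness holds. Your final sentence gestures at this point, but your argument as written routes through the faulty chaining step rather than resting on it.
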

\begin{proof}
The difference estimator for $F_p(v+u)-F_p(v)$ is given an input splitting time $t_1$. 
We then need to argue correctness over all possible stopping times $t$ with $t>t_1$, such that $F_p(v+u)-F_p(v)\le\gamma F_p(v)$ for $p<1$ or $F_p(u)\le\gamma F_p(v)$ for $p\ge 1$. 
By \corref{lem:derandom:many} applied with $q = 2$, there exists a pseudorandom generator that succeeds with high probability $1-\frac{1}{\poly(n)}$ over a specific value of $t$. 
Since the stream has length $m=\poly(n)$, by taking a union bound over all $\poly(n)$ choice of the stopping times $t$, we have that the pseudorandom generator is also correct with high probability over all possible stopping times $t$. 
Thus from \lemref{lem:sketching:Fp:smallp}, we obtain the following guarantees of our difference estimator. 

In particular, we remark that the guarantees of the difference estimator only require that the marginal distribution is correct at all times, rather than the joint distribution is correct. 
Recall that the difference estimator is used to choose a final stopping time for the purposes of the framework in \thmref{thm:framework} when its output is sufficiently large. 
Specifically, the framework takes a set of outputs $s_{t_1+1},\ldots, s_t$ from the difference estimator over the course of the stream and chooses a final stopping time $t_2$ based on the first output that exceeds a certain threshold $T$. 
We emphasize that our argument does not rely on fooling the final stopping time $t_2$ chosen by the framework, since it requires a conditional statement on the set $s_{t_1+1},\ldots, s_t$ of outputs from the difference estimator not exceeding the threshold $T$. 

For example, consider a scenario where the probability over the choice of independent $p$-stable random variables that $s_{t_1+1}$ exceeds $T$ is $\frac{1}{2}$ and the probability that $s_{t_1+2}$ exceeds $T$ is $\frac{1}{2}$, i.e., $\PPr{s_{t_1+1}\ge T}=\frac{1}{2}$ and $\PPr{s_{t_1+2}\ge T} = \frac{1}{2}$. 
However, suppose that conditioned on the event that $s_{t_1+1}<T$, we have that $s_{t_1+2}$ cannot exceed $T$, so that $\PPr{s_{t_1+2}\ge T|s_{t_1+1}<T}=0$. 
Then since \thmref{thm:framework} chooses the first time that exceeds $T$, the framework can never choose $t_1+2$, i.e., $\PPr{t_2=t_1+2}=0$. 
Since our derandomization does not use independent $p$-stable random variables and only fools the marginal probabilities (of a pair of times $(t, t_1)$ for each $t > t_1$), we need not fool the joint distribution, and thus not fool the choice of stopping time. 
Hence, it could be that \thmref{thm:framework} using our derandomized difference estimator selects $t_2=t_1+2$. 
On the other hand, $t_1+2$ is still a valid stopping time if the difference estimator is correct at time $t_1+1$, and the output does not exceed $T$. 

In other words, while the geometric mean estimator can be fooled by fooling a constant a number of half-space queries, we may need to condition on $\poly(n)$ intermediate stream positions. 
Instead, we note that the derandomization of our difference estimator provides the same correctness guarantees. 
Thus, even though the difference estimator with the pseudorandom generator could induce a different distribution on the final stopping time, the final stopping time $t'_2$ output by our derandomized algorithm and the algorithm itself can be used in our framework because $t'_2$ corresponds to the first output of the algorithm that exceeds the threshold $T$. We also have correctness at all intermediate times between $t_1$ and $t'_2$.  
\end{proof}

\paragraph{Bit complexity and rounding of $p$-stable random variables.}
We remark that for each inner product in \lemref{lem:diff:est:Fp:smallp} to be stored in $\O{\log n}$ bits of space, then each randomly generated $p$-stable random variable must also be rounded to $\O{\log n}$ bits of precision. 
Due to the rounding, that each summand changes additively by $\frac{1}{\poly(n)}$, so that the total estimate also changes by an additive $\frac{1}{\poly(n)}$, so that the total error for the estimate of $F_p(v+u)-F_p(v)$ is $\eps\cdot F_p(v)+\frac{1}{\poly(n)}$. 
The additive $\frac{1}{\poly(n)}$ can be absorbed into the $\eps\cdot F(v)$ term with a rescaling of $\eps$ unless $F_p(v)=0$. 
However, in the case $F_p(v)=0$, then our estimator will output $0$ anyway, so that the rounding of the $p$-stable random variables still gives an additive $\eps\cdot F(v)$. 

\subsection{\texorpdfstring{$F_p$}{Fp} Estimation Algorithm}
We now give an adversarially robust streaming algorithm for $F_p$ moment estimation with $p\in(0,2)$ by using \thmref{thm:framework}. 
\begin{theorem}
Given $\eps>0$ and $p\in(0,2)$, there exists an adversarially robust streaming algorithm that outputs a $(1+\eps)$-approximation for $F_p$ that uses $\O{\frac{1}{\eps^{2}}\log^2 n(\log\log n)^2\left(\log\frac{1}{\eps}+\log\log n\right)}$ bits of space and succeeds with probability at least $\frac{2}{3}$. 
\end{theorem}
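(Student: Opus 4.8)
The plan is to feed the two ingredients built in this section into the general framework \thmref{thm:framework}: the oblivious $F_p$ strong tracker $\psketch$ of \thmref{thm:strong:Fp:smallp} will serve as the strong tracker $\calA$, and the $(\gamma,\eps,\delta)$-difference estimator of \lemref{lem:diff:est:Fp:smallp} will serve as $\calB$. First I would note that $F_p$ is monotonic and, by \obsref{obs:flip:fp}, has $(\eps,m)$-twist number $\lambda=\O{\frac{1}{\eps}\log n}$ on streams of length $m=\poly(n)$, so the twist-number hypothesis of \thmref{thm:framework} is satisfied.

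Next I would put the space bounds in the shape \thmref{thm:framework} expects. The difference estimator uses $\O{\frac{\gamma^{2/p}\log n}{\eps^2}(\log\log n)^2\left(\log\frac{1}{\eps}+\log\frac{1}{\delta}\right)}$ bits, so I would take $C=\frac{2}{p}$, $S_1(n,\delta,\eps)=\log n\,(\log\log n)^2\left(\log\frac{1}{\eps}+\log\frac{1}{\delta}\right)$, and $S_2\equiv 0$; since $p<2$ this $C$ is a fixed constant strictly larger than $1$, which is exactly the regime of the $C>1$ branch of \thmref{thm:framework} (for $p=2$ one would instead use the $C=1$ branch, recovering the earlier $F_2$ theorem). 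A quick check is that the strong tracker's space $\O{\frac{\log n}{\eps^2}\left(\log\log n+\log\frac{1}{\eps}+\log\frac{1}{\delta}\right)}$ is $\O{\frac{1}{\eps^2}S_1(n,\delta,\eps)}$, which holds because $(\log\log n)^2\left(\log\frac{1}{\eps}+\log\frac{1}{\delta}\right)$ dominates $\log\log n+\log\frac{1}{\eps}+\log\frac{1}{\delta}$.

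Finally I would substitute. The $C>1$ bound of \thmref{thm:framework} gives $\O{\frac{1}{\eps^2}\log n\cdot S_1(n,\delta',\eps)+\frac{1}{\eps^2}\log^2 n}$ with $\delta'=\O{\frac{1}{\poly\left(\frac{1}{\eps},\log n\right)}}$, so that $\log\frac{1}{\delta'}=\O{\log\frac{1}{\eps}+\log\log n}$; plugging in $S_1$ and absorbing the lower-order term $\frac{1}{\eps^2}\log^2 n$ yields the claimed $\O{\frac{1}{\eps^2}\log^2 n(\log\log n)^2\left(\log\frac{1}{\eps}+\log\log n\right)}$ bits, with success probability the constant of \thmref{thm:framework}, which we take to be at least $\frac{2}{3}$. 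There is no real obstacle here: every substantive step --- constructing the $F_p$ difference estimator, proving it strong-tracks, and derandomizing the $p$-stable entries via \corref{lem:derandom:many} --- has already been carried out, so the only things to be careful about are that $C=\frac{2}{p}$ is genuinely a constant exceeding $1$ (so that the $C>1$ case of \thmref{thm:framework} is legitimately invoked) and that the rescaled failure probability $\delta'$ is only polylogarithmic in its contribution to $S_1$. The subsequent optimization in \thmref{thm:robust:opt:Fp:smallp} would then shave the extra $\log n$ factor by keeping only $\O{\log\frac{1}{\eps}}$ active copies of $\calA$ and $\calB$ at a time, exactly as in \thmref{thm:robust:opt:F2}.
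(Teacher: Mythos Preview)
Your proposal is correct and follows essentially the same approach as the paper: verify the twist-number hypothesis via \obsref{obs:flip:fp}, plug the strong tracker of \thmref{thm:strong:Fp:smallp} and the difference estimator of \lemref{lem:diff:est:Fp:smallp} into \thmref{thm:framework} with $C=2/p>1$ and $S_2=0$, and read off the space bound after substituting $\delta'$. Your choice of $S_1$ is in fact slightly cleaner than the paper's (which carries an extraneous $\tfrac{1}{\eps}$ factor in its stated $S_1$ and an extra $\log\log n$ summand that is ultimately absorbed anyway), and your explicit check that the strong tracker's space is dominated by $\tfrac{1}{\eps^2}S_1$ is a nice sanity step the paper omits.
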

\begin{proof}
For $p\in(0,2)$, $F_p$ is a monotonic function with $(\eps,m)$-twist number $\lambda=\O{\frac{1}{\eps}\log n}$, by \obsref{obs:flip:fp}. 
By \lemref{lem:diff:est:Fp:smallp} and \thmref{thm:strong:Fp:smallp}, there exists a $(\gamma,\eps,\delta)$-difference estimator that uses space
\[\O{\frac{\gamma^{2/p}\log n}{\eps^2}(\log\log n)^2\left(\log\frac{1}{\eps}+\log\frac{1}{\delta}\right)}\]
and an oblivious strong tracker $\psketch$ for $F_p$ that uses space
\[\O{\frac{\log n}{\eps^2}\left(\log\log n+\log\frac{1}{\eps}+\log\frac{1}{\delta}\right)}.\]
Therefore by using the $(\gamma,\eps,\delta)$-difference estimator and the oblivious strong tracker for $F_p$ in the framework of \algref{alg:framework}, then consider applying \thmref{thm:framework} with parameters $C=2/p\ge 1$, $S_1(n,\delta,\eps)=\frac{1}{\eps}\log n(\log\log n)^2\left(\log\log n+\log\frac{1}{\eps}+\log\frac{1}{\delta}\right)$, and $S_2=0$. 
Namely, \thmref{thm:framework} gives an adversarially robust streaming algorithm that outputs a $(1+\eps)$ for $F_p$ that succeeds with constant probability and uses space 
\[\O{\frac{1}{\eps^{2}}\log^2 n(\log\log n)^2\left(\log\log n+\log\frac{1}{\eps}\right)}.\]    
\end{proof}

\paragraph{Optimized $F_p$ Algorithm.}
To improve the space dependency, we use the same technique as in \secref{sec:F2}. 
That is, recall that the counter $a$ in \algref{alg:framework} tracks the active instances $\calA_a$ and $\calB_{a,c}$ output by the algorithm. 
Then we only maintain the most sketches $\calA_i$ and $\calB_{i,c}$ for the smallest $\O{\log\frac{1}{\eps}}$ values of $i$ that are at least the value of the counter $a$, instead of maintaining $\O{\log n}$ total sketches.  
Because the output increases by a factor of $2$ each time $a$ increases, any larger index will have only missed $\O{\eps}$ fraction of the $F_p$ of the stream and thus still output a $(1+\eps)$-approximation. 

\begin{theorem}[Adversarially robust $F_p$ streaming algorithm for $p\in(0,2)$]
\thmlab{thm:robust:opt:Fp:smallp}
Given $\eps>0$ and $p\in(0,2)$, there exists an adversarially robust streaming algorithm that outputs a $(1+\eps)$ for $F_p$ that uses $\O{\frac{1}{\eps^{2}}\log n(\log\log n)^2\log\frac{1}{\eps}\left(\log\log n+\log\frac{1}{\eps}\right)}$ bits of space and succeeds with probability at least $\frac{2}{3}$. 
\end{theorem}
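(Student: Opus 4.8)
The plan is to combine the building blocks already assembled in this section with the generic framework \thmref{thm:framework}, and then apply the same space-optimization trick that was used to pass from the unoptimized $F_2$ theorem to \thmref{thm:robust:opt:F2}. First I would invoke \obsref{obs:flip:fp} to record that for $p\in(0,2)$ the function $F_p=\|x\|_p^p$ is monotonic on insertion-only streams and has $(\eps,m)$-twist number $\lambda=\O{\frac{1}{\eps}\log n}$, so the hypotheses of \thmref{thm:framework} are met. Next I would plug in the two subroutines: the $(\gamma,\eps,\delta)$-difference estimator of \lemref{lem:diff:est:Fp:smallp}, which uses $\O{\frac{\gamma^{2/p}\log n}{\eps^2}(\log\log n)^2(\log\frac1\eps+\log\frac1\delta)}$ bits, and the oblivious strong tracker $\psketch$ of \thmref{thm:strong:Fp:smallp}, which uses $\O{\frac{\log n}{\eps^2}(\log\log n+\log\frac1\eps+\log\frac1\delta)}$ bits. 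Matching these against the template of \thmref{thm:framework}, I would take $C=2/p\ge 1$, $S_2=0$, and $S_1(n,\delta,\eps)=\O{\log n(\log\log n)^2(\log\log n+\log\frac1\eps+\log\frac1\delta)}$ so that both the difference estimator bound $\frac{\gamma^C}{\eps^2}S_1$ and the strong tracker bound $\frac1{\eps^2}S_1$ are respected (using $\gamma^{2/p}=\gamma^C$).

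The second step is the optimization, exactly parallel to the paragraph ``Optimized $F_2$ Algorithm'' preceding \thmref{thm:robust:opt:F2}. Instead of maintaining all $\O{\log n}$ instances indexed by the top-level counter $a$ in \algref{alg:framework} simultaneously, I would observe that the output of \algref{alg:framework} grows by a factor of roughly $2$ each time $a$ increments, so it suffices to keep only the $\O{\log\frac1\eps}$ active instances $\calA_i,\calB_{i,c}$ whose index $i$ is at least the current value of $a$; any instance with a smaller index has missed only an $\O{\eps}$ fraction of $F_p$ and so its stitched estimate is still a $(1+\eps)$-approximation after rescaling $\eps$. This replaces the $\log n$ factor coming from ``$a=\O{\log n}$'' in \thmref{thm:framework} with a $\log\frac1\eps$ factor, while still requiring the failure probability of each subroutine to be set to $\delta'=\frac{1}{\poly(\log n,1/\eps)}$ (there are still $\O{\log n}$ total indices $a$ over the whole stream, so we union bound over all of them together with the $\poly(\log n,1/\eps)$ subroutines, and $\log\frac1{\delta'}=\O{\log\log n+\log\frac1\eps}$ gets absorbed into $S_1$). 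One must also account for the $\O{\frac1\eps\log n\log\frac1\eps}$ bits needed to store the splitting times of the $\O{\frac1\eps\log\frac1\eps}$ active subroutines on a stream of length $m$ with $\log m=\O{\log n}$; this is a lower-order additive term. Collecting the dominant contribution, $\frac1{\eps^2}\cdot\log\frac1\eps\cdot\poly\log\frac1\eps\cdot S_1(n,\delta',\eps)$, and simplifying, gives the claimed bound $\O{\frac{1}{\eps^2}\log n(\log\log n)^2\log\frac1\eps(\log\log n+\log\frac1\eps)}$.

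I do not expect any real obstacle here, since all the technical work has already been done: the analysis of \algref{alg:framework} in \thmref{thm:framework} (which already handles the $C>1$ case with the geometric accuracy schedule $\eta_k=\eta/\varphi^{\beta-k}$, $\varphi=2^{(C-1)/4}$), the derandomized strong-tracking difference estimator (\lemref{lem:diff:est:Fp:smallp}), and the strong tracker (\thmref{thm:strong:Fp:smallp}). The one point requiring a line of care is bookkeeping of the logarithmic factors: confirming that with $C=2/p$ the per-level geometric sum over $\beta=\O{\log\frac1\eps}$ granularities in the proof of \thmref{thm:framework} indeed telescopes to $\O{\frac1{\eps^2}S_1}$ rather than blowing up (it does, because $\varphi=2^{(C-1)/4}$ is tuned precisely so that $2^{k-Ck}\varphi^{2k}=2^{k-Ck}2^{(Ck-k)/2}=2^{-(C-1)k/2}$ is summable), and that combining with the $\log\frac1\eps$ from the optimization and the $(\log\log n)^2$ and $\log\frac1\delta$ factors inside the difference estimator's dimension yields exactly the stated exponent pattern. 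This is the step I would double-check most carefully, but it is routine arithmetic rather than a conceptual difficulty.
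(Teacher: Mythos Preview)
Your proposal is correct and follows essentially the same approach as the paper: plug the difference estimator of \lemref{lem:diff:est:Fp:smallp} and the strong tracker of \thmref{thm:strong:Fp:smallp} into \thmref{thm:framework} with $C=2/p>1$, $S_2=0$, and $S_1(n,\delta',\eps)=\log n(\log\log n)^2(\log\frac1\eps+\log\frac1{\delta'}+\log\log n)$, then apply the same optimization as in \thmref{thm:robust:opt:F2} of keeping only $\O{\log\frac1\eps}$ active top-level indices, set $\delta'=1/\poly(\log n,1/\eps)$, and add the $\O{\frac1\eps\log n\log\frac1\eps}$ bits for splitting times. The only minor slip is your ``$\frac1{\eps^2}\cdot\log\frac1\eps\cdot\poly\log\frac1\eps\cdot S_1$'' bookkeeping: since $C>1$, the per-fixed-$a$ cost already telescopes to $\O{\frac1{\eps^2}S_1}$ without an extra $\log\frac1\eps$ from the granularity sum, so only the single $\log\frac1\eps$ from the number of active indices appears, exactly matching the stated bound.
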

\begin{proof}
Using the above optimization, there are simultaneously $\O{\log\frac{1}{\eps}}$ active indices $a$ and $\O{1}$ active indices $c$ corresponding to sketches $\calA_{a}$ and $\calB_{a,c}$. 
Recall from \thmref{thm:framework} that for fixed $a$ and $C>1$, the total space for each $\calB_{a,j}$ across the $\beta$ granularities is 
\[\O{\frac{1}{\eps^2}\cdot S_1(n,\delta',\eps)+\frac{1}{\eps}\log\frac{1}{\eps}\cdot S_2(n,\delta',\eps)},\]
where for our purposes, $S_1(n,\delta',\eps)=\log n(\log\log n)^2\left(\log\frac{1}{\eps}+\log\frac{1}{\delta'}+\log\log n\right)$ and $S_2=0$ for our $F_p$ strong tracker and $F_p$ difference estimator. 
Since there are $\O{\log n}$ total indices $a$ over the course of the stream, then each sketch must have failure probability $\frac{\delta}{\poly\left(\log n,\frac{1}{\eps}\right)}$ for the entire algorithm to have failure probability $\delta=\frac{2}{3}$. 
However, since we maintain at most $\O{\log\frac{1}{\eps}}$ simultaneous active values of $a$, then the $\O{\frac{1}{\eps^{2}}\log n(\log\log n)^2\log\frac{1}{\eps}\left(\log\log n+\log\frac{1}{\eps}\right)}$ bits of space are required to perform the sketching stitching and granularity changing framework.  
Since there are $\O{\log\frac{1}{\eps}}$ active indices of $a$, consisting of $\O{\frac{1}{\eps}}$ subroutines, then it takes $\O{\frac{1}{\eps}\log n\log\frac{1}{\eps}}$ additional bits of space to store the splitting times for each of the $\O{\frac{1}{\eps}\log\frac{1}{\eps}}$ active subroutines across a stream of length $m$, with $\log m=\O{\log n}$. 
Hence, the total space required is $\O{\frac{1}{\eps^{2}}\log n(\log\log n)^2\log\frac{1}{\eps}\left(\log\log n+\log\frac{1}{\eps}\right)}$. 
\end{proof}
Similarly, we have:
\begin{theorem}[Adversarially robust $F_p$ streaming algorithm for $p\in(0,1)$]
\thmlab{thm:robust:opt:Fp:tinyp}
Given $\eps>0$ and $p\in(0,1]$, there exists an adversarially robust streaming algorithm that outputs a $(1+\eps)$ for $F_p$ that uses $\O{\frac{1}{\eps^{2}}\log\frac{1}{\eps}\left(\log\log n+\log\frac{1}{\eps}\right)+\frac{1}{\eps}\log\frac{1}{\eps}\log n}$ bits of space and succeeds with probability at least $\frac{2}{3}$. 
\end{theorem}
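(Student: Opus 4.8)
The plan is to derive the bound by invoking \thmref{thm:framework} exactly as in the proof of the immediately preceding \thmref{thm:robust:opt:Fp:smallp}, and then to tighten the space accounting using the special structure of $F_p$ for $p\le 1$. Concretely, $F_p$ is monotonic and, by \obsref{obs:flip:fp}, has $(\eps,m)$-twist number $\lambda=\O{\frac{1}{\eps}\log n}$, while \lemref{lem:diff:est:Fp:smallp} and \thmref{thm:strong:Fp:smallp} supply a $(\gamma,\eps,\delta)$-difference estimator and a strong tracker for $F_p$; feeding these into \thmref{thm:framework} with $C=2/p\ge 1$, together with the optimization of \thmref{thm:robust:opt:Fp:smallp} that keeps only $\O{\log\frac{1}{\eps}}$ sketch-switching instances active at once, already yields an adversarially robust $(1+\eps)$-approximation. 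The remaining work is to show that, for $p\le 1$, the per-dimension cost $S_1$ of both subroutines can be taken to be $\O{\log\frac{1}{\eps}+\log\log n}$, i.e., with no $\log n$ factor, at the price of a single additive $\O{\frac{1}{\eps}\log\frac{1}{\eps}\log n}$ term.

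The key observation is that for $p\le 1$ the map $x\mapsto x^p$ is subadditive, so any difference $F_p(v+w_t)-F_p(v)$ handled by an instance satisfies $0\le F_p(v+w_t)-F_p(v)\le F_p(w_t)\le\gamma F_p(v)$; hence $F_p(v+w_t)$ only moves within a $(1+\gamma)=\O{1}$ factor over the lifetime of a difference estimator, and (because prefixes contributing a $\poly(\eps)$ fraction are dropped) $F_p$ only moves within a $\poly(1/\eps)$ factor over the lifetime of each active strong tracker. In Li's geometric-mean estimator (\figref{fig:diff:est:smallp}) and in $\psketch$, the stored $p$-stable coordinates are, with high probability, within a $\poly(1/\eps)$ factor of the current $L_p$-norm, so each coordinate can be represented with one shared $\O{\log\log n}$-bit exponent plus an $\O{\log\frac{1}{\eps}+\log\log n}$-bit mantissa — enough precision that the induced rounding error in every geometric mean stays below the relevant $\eta_k\cdot F_p(v)$ target, exactly as in the bit-complexity remark following \lemref{lem:diff:est:Fp:smallp}. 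The only genuinely $n$-dependent costs that remain are the generator seed of \corref{lem:derandom:many} and the splitting times maintained for the active subroutines, and both can be charged to a single $\O{\frac{1}{\eps}\log\frac{1}{\eps}\log n}$ term, since (as in the proof of \thmref{thm:robust:opt:F2}) there are only $\O{\frac{1}{\eps}\log\frac{1}{\eps}}$ active subroutines at any time and a single shared $p$-stable matrix can serve all of them. Plugging $S_1=\O{\log\frac{1}{\eps}+\log\log n}$, $S_2=0$, $C=2/p$ into \thmref{thm:framework} — whose sum over the $\O{\log\frac{1}{\eps}}$ granularities and $\O{\log\frac{1}{\eps}}$ active epochs yields the factor $\O{\frac{1}{\eps^2}\log\frac{1}{\eps}}$ — and rescaling the failure probability to $\delta'=\poly(\eps,1/\log n)$ so that a union bound over all $\poly(1/\eps,\log n)$ instances keeps the total failure probability below $\frac{1}{3}$, gives space $\O{\frac{1}{\eps^2}\log\frac{1}{\eps}(\log\log n+\log\frac{1}{\eps})+\frac{1}{\eps}\log\frac{1}{\eps}\log n}$, as claimed.

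I expect the main obstacle to be the combined derandomization and rounding step. It is not enough that each geometric mean is accurate at a fixed stopping time: the difference estimator must stay within additive $\eps F_p(v)$ simultaneously at all $\poly(n)$ possible stopping times (\defref{def:diff:est:pre}), so the accumulated rounding slack and the occasional updates to the shared exponent (as $\|v+w_t\|_p$ grows) must be controlled uniformly, and the $p$-stable matrix must be fooled, marginally at each stopping time, by a generator whose $n$-dependent seed is shared across all active instances rather than paid per row. I would handle this by re-running the chaining and union-bound arguments of \lemref{lem:sketching:Fp:smallp} and \lemref{lem:diff:est:Fp:smallp} with the rounded variables, using subadditivity to bound how far the shared exponent can drift between consecutive updates and absorbing the per-step $\frac{1}{\poly(n)}$ slack into the $\eps F_p(v)$ budget; the delicate case is the coarsest granularity, where the difference estimator has $\Theta(1/\eps^2)$ rows and a careless implementation would reintroduce a $\frac{\log n}{\eps^2}$ term through the generator seed.
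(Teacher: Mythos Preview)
The paper supplies no proof for this theorem: it is stated with the single sentence ``Similarly, we have:'' immediately after \thmref{thm:robust:opt:Fp:smallp}, with no further justification. So there is nothing substantive to compare your proposal against; you have written far more than the paper does.

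Your high-level plan---reuse the framework of \thmref{thm:framework} with $C=2/p\ge 2$, keep only $\O{\log\frac{1}{\eps}}$ active epochs, and exploit subadditivity of $x\mapsto x^p$ for $p\le 1$ to store each sketch coordinate in $\O{\log\frac{1}{\eps}+\log\log n}$ bits rather than $\O{\log n}$---is the natural reading of what ``similarly'' must mean, since the claimed bound strips the $\log n$ (and the $(\log\log n)^2$) factor off the leading $\frac{1}{\eps^2}$ term relative to \thmref{thm:robust:opt:Fp:smallp}. The subadditivity argument you give for bounding the dynamic range of $F_p(v+w_t)$ and hence of the stored inner products is sound.

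That said, the obstacle you flag is real and is not resolved either by your sketch or by anything in the paper. The derandomization of \corref{lem:derandom:many} charges $\O{k\log n(\log\log n)^2}$ bits for a $k$-row $p$-stable matrix; at the coarsest granularity $k=\Theta(1/\eps^2)$, so the seed alone already costs $\Theta\!\left(\frac{1}{\eps^2}\log n(\log\log n)^2\right)$, which exceeds the claimed bound. Saying that ``a single shared $p$-stable matrix can serve all of them'' does not help: you still have to generate and evaluate that one matrix, and the PRG cost scales with the number of rows, not the number of instances sharing it. To make the statement go through you would need either a different derandomization whose seed length is independent of $k$ (e.g., one tailored to the bounded-dynamic-range regime you identify), or a different difference estimator for $p\le 1$ that avoids $p$-stable variables altogether. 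The paper develops neither, so as written this theorem is stated without an adequate proof, and your proposal correctly isolates the gap without closing it.
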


Finally, we remark that our analysis for \lemref{lem:diff:est:Fp:smallp} can be repeated to show that Li's geometric mean estimator provides strong $L_p$ tracking for $p\in(0,2)$. 
Namely by first union bounding over all $\O{\log n}$ times in the stream where the value of $F_p$ on the underlying frequency vector induced by the stream roughly doubles, we again reduce the problem to maintaining an accurate estimation of $F_p$ between two times in which the value of $F_p$ roughly doubles. 
We can subsequently apply the techniques of \lemref{lem:sketching:Fp:smallp} by further breaking down the stream into times when the value of $F_p$ increases by a factor of $(1+\poly(\eps))$ and bounding the same terms $z_i$, as defined in \figref{fig:diff:est:smallp}. 
This analysis incurs an extra factor of $\log\log n$ due to an initial union bound over $\poly\left(\log n, \frac{1}{\eps}\right)$ times in the stream rather than $\poly\left(\frac{1}{\eps}\right)$. 
Hence, the space required to implement Li's geometric mean estimator as a strong $L_p$ tracker in the random oracle model is $\O{\frac{\log n}{\eps^2}\left(\log\log n+\log\frac{1}{\eps}+\log\frac{1}{\delta}\right)}$, which matches the strong $L_p$ tracker of \cite{BlasiokDN17}.  
To derandomize the algorithm, we incur an additional $(\log\log n)^2$ factor due to \corref{lem:derandom:many}. 

\begin{theorem}
\thmlab{thm:smallp:strong:track}
For $p\in(0,2)$, there exists a one-pass streaming algorithm that uses total space $\O{\frac{1}{\eps^2}\log n(\log\log n)^2\left(\log\log n+\log\frac{1}{\eps}+\log\frac{1}{\delta}\right)}$ bits and provides $(\eps,\delta)$-strong tracking for the $F_p$ moment estimation problem. 
\end{theorem}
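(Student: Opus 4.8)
The plan is to reuse the estimator and the chaining machinery developed for the difference estimator in \lemref{lem:sketching:Fp:smallp}, but applied to the full running frequency vector rather than to a difference. I would run Li's geometric mean estimator with a sketching matrix $A\in\mathbb{R}^{d\times n}$ of i.i.d.\ $p$-stable entries, a fixed integer constant $q\ge 3$, and $d=\O{\frac{1}{\eps^2}\left(\log\log n+\log\frac{1}{\eps}+\log\frac{1}{\delta}\right)}$ rows, forming the estimators $z_i$ of \figref{fig:diff:est:smallp} on the current vector $f^{(t)}$ (taking $w_t=0$) and outputting their arithmetic mean. By \lemref{ligeoest:var} this mean is unbiased for $F_p(f^{(t)})$ with variance $\O{\eps^2\,(F_p(f^{(t)}))^2}$ at each fixed $t$, so Chebyshev's inequality together with a median over $\O{\log\frac{1}{\delta}}$ repetitions gives a pointwise $(1+\eps)$-approximation. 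The work is to upgrade this to strong tracking without a naive union bound over all $m$ times, which would cost an extra $\log n$ factor.

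First I would apply the two-level time discretization of \cite{BravermanCIW16, BravermanCINWW17, BlasiokDN17}: let $r_0<r_1<\cdots$ be the times at which $F_p$ of the prefix first exceeds $2^i$ (there are $\O{\log n}$ of these), and within each block $[r_i,r_{i+1})$ let $s_0<s_1<\cdots$ be the times at which $F_p$ first grows by a $(1+\eps^{c})$ factor for a suitable constant $c=c(p,q)>0$ (there are $\poly(\frac{1}{\eps})$ of these). A union bound over all $\O{\log n}\cdot\poly(\frac{1}{\eps})$ times $\{r_i\}\cup\{s_j\}$ only rescales the failure probability by that factor, i.e.\ it adds a $\log\log n+\log\frac{1}{\eps}$ term inside the $d$ bound. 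Between consecutive fine times $s_j$ and $s_{j+1}$ the true value of $F_p$ changes by only a $(1+\eps^{c})$ factor, so to stay within $(1+\eps)$ throughout it suffices to bound the fluctuation of the estimator across the block by $\O{\eps}$ times the current $F_p$. Expanding $z_i(f^{(t)})-z_i(f^{(s_j)})$ exactly as in the proof of \lemref{lem:sketching:Fp:smallp}, this fluctuation is a sum of $2^q-1$ products of $q$ terms, each carrying at least one factor $|\langle A_\ell, f^{(t)}-f^{(s_j)}\rangle|^{p/q}$, and its supremum over $t\in[s_j,s_{j+1})$ is controlled by the chaining bound of \lemref{lem:chain:lp} with $\lambda$ a suitable negative power of $\eps$; since $F_p(f^{(t)}-f^{(s_j)})$ is at most $\eps^{c}\cdot F_p$ inside the block, this yields a bound of the form (positive power of $\eps$)$\cdot F_p$ with probability $1-\poly(\eps)$ per term, which survives the union bound over the $2^q-1$ terms and the $\O{\frac{1}{\eps^2}}$ estimators.

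Finally I would derandomize: storing $\Theta(dn)$ i.i.d.\ $p$-stable entries is too expensive, so I invoke \corref{lem:derandom:many} with $q=1$ (the tracker stores only the single sketch $A\cdot f^{(t)}$) to replace $A$ by a pseudorandom $A'$ samplable in $\O{d\log n(\log\log n)^2}$ bits that fools every marginal query $\PPr{A\cdot f^{(t)}=y}$ to within $n^{-cd}$; a union bound over the $\poly(n)$ stream positions then preserves all correctness guarantees, and rounding the $p$-stable variables and inner products to $\O{\log n}$ bits contributes only an additive $\frac{1}{\poly(n)}$ error that is absorbed into $\eps\cdot F_p$ (and is harmless when $F_p=0$, where the estimator outputs $0$). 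Multiplying the $\frac{1}{\eps^2}$ estimators, the $\O{\log n}$ bits per entry, the $(\log\log n)^2$ derandomization overhead, and the $\left(\log\log n+\log\frac{1}{\eps}+\log\frac{1}{\delta}\right)$ factor from the discretized union bound gives the claimed space. The main obstacle is the chaining step: one must check that \lemref{lem:chain:lp} applies uniformly to all $2^q-1$ cross terms in the expansion and that the per-block error budget is consistent with the chosen granularity $\eps^{c}$ --- exactly mirroring the completed argument for the difference estimator --- and, since the derandomized matrix only fools marginal queries, note that this causes no difficulty here because strong tracking asks for correctness at each fixed time rather than at a stopping time selected from the outputs.
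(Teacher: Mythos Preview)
Your proposal is correct and follows essentially the same approach as the paper's proof: Li's geometric mean estimator, a two-level time discretization (doubling times $\{r_i\}$ refined by a $\poly(1/\eps)$ grid $\{s_j\}$), a union bound over these $\poly(\log n,1/\eps)$ anchor times, the chaining bound of \lemref{lem:chain:lp} to control the estimator's fluctuation between consecutive anchors exactly as in \lemref{lem:sketching:Fp:smallp}, and derandomization via \corref{lem:derandom:many}. The only cosmetic difference is that you invoke the corollary with $q=1$ (one sketch $A\cdot f^{(t)}$) whereas the paper does not spell out the value of $q$; this is the right choice for the tracker.
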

\begin{proof}
Consider an instance of Li's geometric mean estimator with $d=\O{\frac{1}{\eps^2}\left(\log\frac{1}{\delta'}\right)}$ rows. 
We define a sequence of times $t_1,t_2,\ldots,t_w$ throughout an oblivious stream $\calS$ so that each time $t_i$ is implicitly defined as the last time that $F_p(1:t_i)\le2^{i-1}$ and thus $w=\O{\log n}$. 
For each $i\in[w]$, we further consider a sequence of times $t_{i,1},t_{i,2},\ldots$ so that time $t_{i,j}$ is defined as the last time $t$ such that that 
\[\max(F_p(t_{i,j-1}+1:t),F_p(1:t)-F_p(1:t_{i,j-1}))\le\frac{1}{2^{q^2}}\eps^{16q/p^2}\cdot F_p(1:t_i).\]
Let $\mathcal{E}_1$ be the event that Li's geometric mean estimator gives a $\left(1+\frac{\eps}{2}\right)$-approximation to $F_p(1:t_i)$ for all $i\in[w]$ and let $\mathcal{E}_2$ be the event that Li's geometric mean estimator gives a $\left(1+\frac{\eps}{2}\right)$-approximation to $F_p(1:t_{i,j})$ for all $t_{i,j}$. 
Since there are $\poly\left(\frac{1}{\eps},\log n\right)$ times $\{t_i\}\cup\{t_{i,j}\}$, then we require $\delta'=\delta/\poly\left(\frac{1}{\eps},\log n\right)$ to obtain success probability $1-\delta$ after taking a union bound over $\poly\left(\frac{1}{\eps},\log n\right)$ times. 
By \lemref{ligeoest:var}, the arithmetic mean of $\O{\frac{1}{\eps^2}}$ variables $z_i$ achieves a $(1+\eps)$-approximation to $F_p(1:t)$ with constant probability for a fixed time $t$.  
Thus by taking the median of $\O{\log\log n+\log\frac{1}{\eps}+\log\frac{1}{\delta}}$ such means and taking a union bound, we have that $\PPr{\mathcal{E}_1\wedge\mathcal{E}_2}\ge 1-\frac{\delta}{2}$. 

For times $a,b\in[m]$, let $\calL(a:b)$ be the output of Li's geometric mean estimator for the frequency vector induced by the updates of $\calS$ from times $a$ to $b$.  
From the chaining argument in \lemref{lem:sketching:Fp:smallp} and a sketching matrix with dimension $d=\O{\frac{1}{\eps^2}\left(\log\frac{1}{\delta'}\right)}$ rows, we have that for a fixed $i\in[w]$, there does not exist $t\in[t_{i,j}+1,t_{i,j+1}]$ such that $|\calL(1:t)-\calL(1:t_{i,j}+1)|\ge\frac{\eps}{2}F_p(1:t_{i,j})$, with probability at least $1-\delta'$. 
Hence conditioned on $\mathcal{E}_1$ and $\mathcal{E}_2$, $\calL(1:t)$ outputs a $(1+\eps)$ approximation to $F_p(1:t)$ for all times $t\in[t_{i,j}+1,t_{i,j+1}]$, with probability at least $1-\delta'$. 
We again take a union bound over $\poly\left(\frac{1}{\eps},\log n\right)$ times $\{t_i\}\cup\{t_{i,j}\}$, so that setting $\delta'=\delta/\poly\left(\frac{1}{\eps},\log n\right)$ implies that the $\calL(1:t)$ simultaneously gives a $(1+\eps)$-approximation to all $F_p(1:t)$ with probability $1-\delta$. 
Hence, the total number of rows required in $D$ is $d=\O{\frac{1}{\eps^2}\left(\log\frac{1}{\delta'}\right)}$, which translates to $\O{\frac{1}{\eps^2}\left(\log\log n+\log\frac{1}{\eps}+\log\frac{1}{\delta}\right)}$ total rows. 
Each entry in the sketch uses $\O{\log n}$ bits of space to store. 
Moreover, we incur an additional $(\log\log n)^2$ factor due to \corref{lem:derandom:many} to derandomize the algorithm. 
Therefore, the total space in bits is
\[\O{\frac{1}{\eps^2}\log n(\log\log n)^2\left(\log\log n+\log\frac{1}{\eps}+\log\frac{1}{\delta}\right)}.\]
\end{proof}

\paragraph{Applications to Entropy Estimation.}
Finally, we give an application to adversarially robust entropy estimation, similar to \thmref{thm:sw:entropy}. 
\begin{theorem}[Adversarially robust entropy streaming algorithm]
\thmlab{thm:robust:entropy}
Given $\eps>0$, there exists an adversarially robust streaming algorithm that outputs an additive $\eps$-approximation to Shannon entropy and uses $\tO{\frac{1}{\eps^2}\log^3 n}$ bits of space and succeeds with probability at least $\frac{2}{3}$. 
\end{theorem}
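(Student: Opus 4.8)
The plan is to \emph{robustify} the moment-based entropy algorithm of Harvey, Nelson, and Onak~\cite{HarveyNO08}. That algorithm, on an insertion-only stream, runs in parallel a non-adaptive family of $\polylog(n,1/\eps)$ subroutines --- $F_{p_j}$-moment estimators for a set of exponents $p_j$ near $1$ (hence in $(0,2)$), together with a heavy-hitters routine and exact counters for the heaviest $\tO{1/\eps}$ coordinates --- and combines their outputs by a fixed, randomness-independent rule to produce an additive-$\eps$ estimate of the Shannon entropy; concretely the rule has the form $H(f)=\sum_j c_j\,F_{p_j}(f)/m^{p_j}+g(\text{heavy counts},m)\pm\tfrac{\eps}{4}$, with $m$ the (exactly computable) stream length and with $|c_j|$, the number of exponents, and the heavy threshold $\tau$ all controlled. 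The key observation is that every ingredient has an adversarially robust counterpart in this paper: \thmref{thm:robust:opt:Fp:smallp} (and \thmref{thm:robust:opt:F2} when $p_j=2$) for the moment estimators, \thmref{thm:robust:opt:HH} for the heavy hitters, and plain counters for the exact heavy counts (counters are deterministic, hence trivially robust). Moreover a robust moment estimator is correct at \emph{every} time step against the adaptive adversary (\lemref{lem:frame:oblivious}, lifted by \thmref{thm:framework}), so a fixed linear combination of such estimates is automatically correct at every time step, and robustness of the combined entropy estimator follows with no extra argument.

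I would instantiate this as follows. Set the relative accuracy of each moment estimator to $\eta=\Theta(\eps/\log n)$ (a propagation-of-error computation through the HNO rule, using the bounds on $|c_j|$ and on $\tilde\alpha_{p_j}(f):=F_{p_j}(f)/m^{p_j}$, shows this makes the contribution of the moment estimates' error at most $\eps/4$), the heavy threshold to $\tau=\tilde\Theta(\eps)$ (note $F_2\le m^2$, so any coordinate with $f_i\ge\tau m$ is $\tau$-$L_2$-heavy and is found by the robust $L_2$-heavy-hitters routine with threshold $\tau$), and every subroutine's failure probability to a small enough constant that a union bound over the $\polylog(n,1/\eps)$ subroutines leaves overall failure below $1/3$. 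By \obsref{obs:flip:fp}, each $F_{p_j}$ has $(\eta,m)$-twist number $\O{\eta^{-1}\log m}=\O{\eta^{-1}\log n}$, exactly what the robust framework requires, so the robust $F_{p_j}$ estimator uses $\tO{\eta^{-2}\log n}=\tO{\eps^{-2}\log^3 n}$ bits; the robust $L_2$-heavy-hitters routine uses $\tO{\tau^{-2}\log n}=\tO{\eps^{-2}\log n}$ bits, the exact counters use $\tO{1/\eps}\cdot\O{\log n}$ bits, and $m$ uses $\O{\log n}$ bits. At each time $t$ output $\sum_j c_j\,\widehat{F_{p_j}}^{(t)}/(m^{(t)})^{p_j}+\hat g^{(t)}$. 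On the (probability $\ge 2/3$) event that all subroutines are simultaneously correct at all times, the moment part contributes additive error $\le\eps/4$, the heavy part is exact, and together with the $\eps/4$ bias of the HNO rule this gives $|\hat H^{(t)}-H(f^{(t)})|\le\eps$ at every time; the space is dominated by the $\polylog$ robust moment estimators, i.e.\ $\tO{\eps^{-2}\log^3 n}$.

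The main obstacle is entirely in the first step: one must extract from~\cite{HarveyNO08} a clean statement of the entropy-to-moments rule that is (i) a deterministic combination of quantities each of which is individually a monotonic streaming statistic (so that the robust framework applies termwise), (ii) quantitatively uniform over all stream states, and (iii) balanced so that the required moment accuracy $\eta=\Theta(\eps/\log n)$ and heavy threshold $\tau=\tilde\Theta(\eps)$ keep every subroutine within $\tO{\eps^{-2}\log^3 n}$ bits --- in particular one must verify that after removing the $\tilde O(1/\eps)$ heaviest coordinates the residual contribution of $x\mapsto-x\log x$ is approximated to additive error $O(\eps)$ by the chosen powers $x^{p_j}$ (this ``near-zero'' difficulty is exactly what forces the heavy/light split). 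A secondary point, already settled by \lemref{lem:frame:oblivious} and \thmref{thm:framework}, is that the robust moment guarantee is genuinely an all-times guarantee against the adaptive adversary, which is what lets the fixed combination defining $\hat H^{(t)}$ inherit robustness for free.
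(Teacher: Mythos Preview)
Your approach is essentially the same as the paper's: reduce entropy estimation to a $\polylog$-sized family of $F_p$-moment estimations via~\cite{HarveyNO08}, then plug in the adversarially robust $F_p$ estimators of \thmref{thm:robust:opt:Fp:smallp} and combine their outputs by a fixed deterministic rule. The paper's proof is exactly this, instantiated through \obsref{obs:entropy:addmult} and \lemref{lem:entropy:reduction}.

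Where your proposal differs is in the extra heavy-hitters / exact-counter machinery. The paper avoids this entirely: \lemref{lem:entropy:reduction} (Section~3.3 of~\cite{HarveyNO08}, the multi-point polynomial-interpolation variant) already gives a clean statement with \emph{no} heavy/light split --- just $k+1=\O{\log\tfrac{1}{\eps}+\log\log m}$ exponents $y_i\in(0,2)$ and a required moment accuracy $\eps'=\tfrac{\eps}{12(k+1)^3\log m}=\tilde\Theta(\eps/\log n)$, from which $(1+\eps)$-approximation of $2^{H(v)}$ (hence additive-$\eps$ on $H$) follows deterministically. So the ``main obstacle'' you identify --- extracting a clean, deterministic, quantitatively uniform entropy-to-moments rule --- is already packaged in \lemref{lem:entropy:reduction}, and the heavy-hitter split you worry about (the ``near-zero difficulty'') is simply not needed for that variant. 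Dropping the heavy-hitter layer makes the argument shorter without changing the space bound; conversely, keeping it costs you nothing asymptotically since the robust $L_2$-heavy-hitters routine is dominated by the moment estimators anyway. Either way the arithmetic lands at $\tO{\eps^{-2}\log^3 n}$, driven by $\tO{(\eps')^{-2}\log n}$ per moment estimator with $\eps'=\tilde\Theta(\eps/\log n)$.
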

\begin{proof}
By \obsref{obs:entropy:addmult} and \lemref{lem:entropy:reduction}, it suffices to obtain adversarially robust $(1+\eps')$-approximation algorithms to $F_{y_i}(v)$ for all $y_i\in(0,2)$ in the set $\{y_0,\ldots,y_k\}$, where $k=\log\frac{1}{\eps}+\log\log m$ and $\eps'=\frac{\eps}{12(k+1)^3\log m}$. 
By \thmref{thm:robust:opt:Fp:smallp} with accuracy parameter $\eps'$, we can obtain such robust algorithms approximating each $F_{y_i}(v)$, using space 
\[\tO{\frac{1}{(\eps')^{2}}\log n(\log\log n)^2\log\frac{1}{\eps}\left(\log\log n+\log\frac{1}{\eps}\right)}.\] 
Moreover, the failure probability of each algorithm is $1-\poly\left(\eps,\frac{1}{\log n}\right)$, due to the rescaling of the failure probability $\delta'=\frac{\delta}{\poly\left(\frac{1}{\eps},\log n\right)}$ in both \lemref{lem:diff:est:Fp:smallp} and \thmref{thm:strong:Fp:smallp}. 
Hence for $\eps'=\frac{\eps}{12(k+1)^3\log m}$ and $\log m=\O{\log n}$, each of the $\O{k}$ algorithms use space $\tO{\frac{1}{\eps^2}\log^3 n}$. 
Since $k=\log\frac{1}{\eps}+\log\log m$, the overall space complexity follows. 
\end{proof}
\section{Robust \texorpdfstring{$F_p$}{Fp} Estimation for Integer \texorpdfstring{$p>2$}{p>2}}
In this section, we use the framework of \secref{sec:framework} to give an adversarially robust streaming algorithm for $F_p$ moment estimation where $p>2$ is an integer. 
We again require both an $F_p$ strong tracker and an $F_p$ difference estimator to use \thmref{thm:framework}. 
Recall that for integer $p>2$, the dominant space factor is $n^{1-2/p}$ and thus we will not try to optimize the $\log n$ factors. 
Hence we obtain an $F_p$ strong tracker by adapting an $F_p$ streaming algorithm and a union bound over $m$ points in the stream. 
The main challenge of the section is to develop the $F_p$ difference estimator, since we have the following $F_p$ strong tracker:
\begin{theorem}[\cite{Ganguly11}, Theorem 22 in \cite{GangulyW18}]
\thmlab{thm:Fp:largep}
For integer $p>2$, there exists an insertion-only streaming algorithm $\ghss(1,t,\eps,\delta)$ that uses $\O{\frac{1}{\eps^2}n^{1-2/p}\log^2 n\log\frac{1}{\delta}}$ bits of space that gives a $(1+\eps)$-approximation to the $F_p$ moment.  
\end{theorem}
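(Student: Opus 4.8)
The plan is to invoke the cited constructions directly: the statement is Theorem~22 of \cite{GangulyW18}, building on \cite{Ganguly11}, so the proof amounts to quoting their analysis, at most after a minor reparametrization of the failure probability. For orientation I will recall the high-level structure of the $\ghss$ sketch rather than reprove it. The estimator follows the hierarchical sampling from sketches paradigm: one maintains $\O{\log n}$ levels of independent subsampling, where at level $\ell$ each coordinate of the frequency vector survives with probability $2^{-\ell}$, and on each level one runs a $\countsketch$-type structure so that the total sketch size is $\O{\frac{1}{\eps^2}n^{1-2/p}\log^2 n}$. Reading off the sketch at an appropriately chosen level recovers a constant-factor sample of the surviving coordinates together with $(1\pm\eps)$-accurate estimates of their frequencies; summing $p$-th powers over a level-set decomposition of the universe then yields an (essentially) unbiased estimator of $F_p$ whose variance is $\O{\eps^2}\cdot F_p^2$. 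Averaging $\O{1}$ such estimators and taking a median over $\O{\log\frac{1}{\delta}}$ independent repetitions gives the claimed $(1+\eps)$-approximation with failure probability at most $\delta$.

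Two ingredients would require care in matching the stated bound. First, the $n^{1-2/p}$ factor: one must show that once the sketch has removed the coordinates heavier than an $\eps$-fraction of $F_p$, the residual vector's contribution can be estimated in only $\O{\frac{1}{\eps^2}n^{1-2/p}}$ space per level --- this is the content of Ganguly's level-set analysis and is precisely where the genuine polynomial-in-$n$ dependence enters. Second, the clean $\log^2 n$ and the isolated $\log\frac{1}{\delta}$ factor (rather than, say, $\log n\log\frac{1}{\delta}$) come from the high-probability machinery of \cite{GangulyW18}; I would cite this rather than redo it.

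The hardest part, were one to build this from scratch, is exactly that variance and level-set analysis: controlling the contribution of moderately heavy coordinates that are neither reliably recovered as heavy hitters at a given level nor individually negligible, and arguing that the geometric subsampling rates can be chosen so that every coordinate becomes recoverable at some level without inflating the estimator's variance. Since all of this is already established in the references, the correct move here is simply to cite \cite{Ganguly11,GangulyW18} and use $\ghss$ as a black box; the strong-tracking version needed in the sequel is then obtained by a union bound over the $m=\poly(n)$ stream positions, which costs at most an extra $\O{\log n}$ factor that is absorbed into the $\log\frac{1}{\delta}$ term.
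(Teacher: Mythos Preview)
Your proposal is correct and matches the paper's treatment: this theorem is stated purely as a citation of \cite{Ganguly11,GangulyW18} and is used as a black box, with no proof given in the paper itself. Your anticipation of the strong-tracking corollary via a union bound over $m=\poly(n)$ positions is exactly what the paper does immediately afterward to obtain \thmref{thm:strong:Fp:largep}.
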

By setting $\delta'=\frac{\delta}{\poly(n)}$, we can apply \thmref{thm:Fp:largep} across all $m=\poly(n)$ times on a stream, thus obtaining a strong tracker. 
\begin{theorem}[Oblivious $F_p$ strong tracking for integer $p>2$]
\thmlab{thm:strong:Fp:largep}
\cite{Ganguly11, GangulyW18}
For integer $p>2$, there exists an insertion-only streaming algorithm $\ghss(1,t,\eps,\delta)$ that uses $\O{\frac{1}{\eps^2}n^{1-2/p}\log\frac{n}{\delta}\log^2 n}$ bits of space and provides $(\eps,\delta)$-strong $F_p$ tracking. 
\end{theorem}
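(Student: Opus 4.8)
The plan is a straightforward union bound that promotes the pointwise $F_p$ estimator of \thmref{thm:Fp:largep} into a strong tracker, at the cost of a single extra logarithmic factor. First I would recall that in the insertion-only model with coordinates bounded by a polynomial in $n$ and integer updates $\Delta_t \ge 1$, the stream length satisfies $m = \poly(n)$; this is the only structural fact about the stream that is needed. I would then instantiate $\ghss(1,t,\eps,\delta')$ from \thmref{thm:Fp:largep} with a failure parameter $\delta' = \tfrac{\delta}{m} = \tfrac{\delta}{\poly(n)}$, so that by that theorem a single instance uses $\O{\frac{1}{\eps^2}n^{1-2/p}\log^2 n\log\frac{1}{\delta'}}$ bits of space and, for each fixed time $t$, outputs a value $X_t$ with $|X_t - F_p(f^{(t)})| \le \eps\, F_p(f^{(t)})$ with probability at least $1-\delta'$.

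The key observation making the union bound legitimate is that the algorithm's randomness (its hash functions / sketching matrix / sampling seeds) is drawn once at initialization and never refreshed; at each time $t$ the reported estimate $X_t$ is a deterministic function of the fixed random seed and the updates seen so far. Hence the event ``$X_t$ is not an $(1+\eps)$-approximation at time $t$'' is an event over this fixed randomness, and for each $t \in [m]$ it has probability at most $\delta'$. Taking a union bound over all $m$ time steps, the probability that $X_t$ fails to be an $(1+\eps)$-approximation for some $t\in[m]$ is at most $m\delta' = \delta$, which is exactly the $(\eps,\delta)$-strong $F_p$-tracking guarantee.

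Finally I would collect the space bound: since $\delta' = \delta/\poly(n)$ we have $\log\frac{1}{\delta'} = \log\frac{\poly(n)}{\delta} = \O{\log\frac{n}{\delta}}$, so the space is $\O{\frac{1}{\eps^2}n^{1-2/p}\log^2 n \log\frac{n}{\delta}}$, matching the claimed bound. I do not expect a genuine obstacle here; the only point that deserves an explicit sentence is the one above, namely that all the per-time failure events are measurable with respect to the single fixed random string maintained by $\ghss$, so that boosting the pointwise guarantee to a simultaneous guarantee costs only the $\log m = \O{\log n}$ overhead absorbed into $\log\frac{n}{\delta}$.
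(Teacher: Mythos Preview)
Your proposal is correct and matches the paper's approach exactly: the paper simply says ``By setting $\delta'=\frac{\delta}{\poly(n)}$, we can apply \thmref{thm:Fp:largep} across all $m=\poly(n)$ times on a stream, thus obtaining a strong tracker,'' which is precisely the union-bound argument you give.
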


\noindent
To acquire intuition for our $F_p$ estimation algorithm, we first recall the following definition of perfect $L_p$ sampling. 
\begin{definition}[$L_p$ sampling]
Let $f\in\mathbb{R}^n$ and failure probability $\delta\in(0,1)$. 
A \emph{perfect $L_p$ sampler} is an algorithm that either outputs a failure symbol $\bot$ with probability at most $\delta$ or an index $i\in[n]$ such that for each $j\in[n]$,
\[\PPr{i=j}=\frac{u_j^p}{F_p(u)}+\O{n^{-c}},\]
for some arbitrarily large constant $c\ge 1$. 
\end{definition}

We use the expansion $F_p(v+u)-F_p(v)=\sum_{k=1}^{p}\binom{p}{k}\langle v^k,u^{p-k}\rangle$ for our difference estimator for $F_p$ moment estimation for integer $p>2$, where $u^k$ denotes the coordinate-wise $k$-th power of $u$. 
Suppose we use a perfect $L_p$ sampler to sample a coordinate $a\in[n]$ with probability $\frac{v_a^k}{\|v\|_k^k}$. 
We then set $Z$ to be the $a$-th coordinate of the frequency vector $v^{p-k}$. 
Observe that $u$ arrives completely after the splitting time denotes the end of the updates to frequency vector $v$. 
Thus we can sample $a$ at the splitting time and then explicitly compute $Z$. 
We can then obtain an unbiased estimate $Y$ to $\|v\|_k^k$ with low variance, so that the expected value of $YZ$ would be exactly $\langle v^k,u^{p-k}\rangle$ and moreover, $YZ$ is a ``good'' approximation to $\langle v^k,u^{p-k}\rangle$. 

The downfall of this approach is that it requires a perfect $L_p$-sampler for $p>2$, which is not known. 
Perfect $L_p$-samplers are known for $p\le 2$~\cite{JayaramW18}, but their constructions are based on duplicating each stream update $\poly(n)$ times. 
Hence adapting these constructions to build perfect $L_p$-samplers for $p>2$ would be space-inefficient, since the space dependence is $\Omega(n^{1-2/p})$ for $p>2$, rather than $\polylog(n)$ for $p\le 2$. 
Thus after duplication the space required would be $\Omega((\poly(n))^{1-2/p})$ rather than $\polylog(\poly(n))$. 
Note that the former requires space larger than $n$ while the latter remains $\polylog(n)$. 
One possible approach would be to use approximate $L_p$ samplers and their variants~\cite{MonemizadehW10, JowhariST11, AndoniKO11, MahabadiRWZ20}, but these algorithms already have $\frac{1}{\eps^2}$ space dependency for each instance, which prohibits using $\Omega\left(\frac{1}{\eps}\right)$ instances to reduce the variance of each sampler. 
Instead, we use the following perfect $L_2$-sampler of \cite{JayaramW18} to return a coordinate $a\in[n]$ with probability $\frac{v_a^2}{\|v\|_2^2} \pm \frac{1}{\poly(n)}$. 

\begin{theorem}[Perfect $L_2$ sampler]
\thmlab{thm:perfect:sampler}
\cite{JayaramW18}
Given failure probability $\delta\in(0,1)$, there exists a one-pass streaming algorithm $\sampler$ that is a perfect $L_2$ sampler and uses $\O{\log^3 n\log\frac{1}{\delta}}$ bits of space.
\end{theorem}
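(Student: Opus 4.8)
The plan is to follow the exponential-scaling (a.k.a.\ precision-sampling) paradigm underlying all known $L_p$ samplers. First I would reduce sampling to a heavy-hitter recovery problem: attach to each coordinate $i\in[n]$ an independent uniform scaling variable $u_i\in(0,1)$ and work with the scaled vector $z$ defined by $z_i=f_i/\sqrt{u_i}$. The elementary identity $\PPr{z_i^2>T}=\PPr{u_i<f_i^2/T}=f_i^2/T$ (for $T\ge f_i^2$) shows that, in expectation, $\O{1}$ coordinates have $z_i^2$ exceeding $T:=c\cdot F_2(f)$, and conditioned on exactly one coordinate exceeding the threshold that coordinate has law $f_j^2/F_2(f)$, which is exactly the target. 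Thus it suffices to (i) identify the dominant coordinate of $z$ and (ii) detect whether the event ``exactly one dominant coordinate exists'' occurred, outputting $\bot$ otherwise.

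Next I would implement (i) by maintaining a \countsketch{} on $z$. Since the sketch is linear, a stream update $f_i\gets f_i+1$ is handled by adding the rescaled increment $1/\sqrt{u_i}$ to the appropriate buckets, and with constant probability one has $F_2(z)=\O{\log n}\cdot F_2(f)$, so a \countsketch{} with $\O{\log n}$ rows and $\O{\log n}$ columns recovers a constant-factor estimate of every $z_i$ and in particular pins down the argmax; the space is $\O{\log^2 n}$ words, i.e.\ $\O{\log^3 n}$ bits. The randomness for the $u_i$'s and for the hash functions is discretized to $\O{\log n}$ bits of precision and derandomized by a pseudorandom generator in the spirit of \lemref{lem:derandom:once}, which perturbs the output distribution by only an additive $n^{-c}$.

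The genuinely delicate step, which I expect to be the main obstacle, is upgrading this ``$(1\pm\eps)$-relative-error'' sketch into a \emph{perfect} sampler whose output distribution is within $n^{-c}$ in total variation of $f_i^2/F_2(f)$. Two biases must be removed: the additive $\O{\sqrt{F_2(z)/k}}$ error of \countsketch{} can flip which coordinate ``wins'' near the threshold, and the number of coordinates above the threshold may be $0$ or $\ge 2$. Following \cite{JayaramW18}, I would run several independent scalings/thresholds in parallel and apply a statistical test to the recovered estimates that accepts only when there is an unambiguously dominant coordinate; a careful anti-concentration argument for the exponential-scaled maximum then shows that the conditional law of the winner given ``the test accepts'' equals the target up to $n^{-c}$, while the test accepts with constant probability. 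Boosting the success probability from constant to $1-\delta$ costs $\O{\log\frac1\delta}$ independent repetitions, yielding the claimed $\O{\log^3 n\log\frac1\delta}$ bits. The hard part is precisely this coupling/anti-concentration analysis together with controlling the correlations introduced by reusing the same linear sketch both to detect success and to recover the sampled index; the scaling identity, the heavy-hitter recovery, and the space accounting are all routine by comparison.
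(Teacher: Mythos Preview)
The paper does not prove this statement at all: \thmref{thm:perfect:sampler} is quoted from \cite{JayaramW18} and used as a black box (see the paragraph preceding it and \lemref{lem:coor:exp:var}, which recalls the internals of $\sampler$ only to the extent needed to extract an unbiased frequency estimate). So there is no ``paper's own proof'' to compare against; your proposal is really a sketch of the argument in the cited reference.

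As such a sketch it is in the right spirit but diverges from the actual construction in \cite{JayaramW18} in a way that matters for the ``perfect'' guarantee. You scale by uniform variables $u_i$ and rely on the threshold identity $\PPr{z_i^2>T}=f_i^2/T$; this is the precision-sampling recipe of \cite{JowhariST11,AndoniKO11}, which yields a $(1\pm\eps)$-\emph{approximate} sampler rather than a perfect one. The \cite{JayaramW18} sampler, as recalled in the proof of \lemref{lem:coor:exp:var}, instead \emph{duplicates} each coordinate $n^c$ times and scales each copy by $1/\sqrt{e_{i,j}}$ for independent exponential variables $e_{i,j}$. The duplication plus the min-stability of exponentials is what makes the argmax land on coordinate $i$ with probability \emph{exactly} $f_i^2/F_2(f)$ (up to the $n^{-c}$ slack from truncation and derandomization), rather than merely approximately; your uniform-scaling/thresholding story does not obviously achieve this without an additional rejection or debiasing step whose analysis you leave unspecified. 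The heavy-hitter recovery, space accounting, and $\log\frac1\delta$ boosting are as you describe.
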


We also obtain unbiased estimates $X$ and $Y$ of $v_a^{k-2}$ and $\|v\|_2^2$, respectively. 
Given $a\in[n]$, we then track the $a$-th coordinate of $u^{p-k}$ exactly. 
We show that the product of these terms $X$, $Y$, and $u^{p-k}$ forms an unbiased estimate to $\langle v^k,u^{p-k}\rangle$. 
We then analyze the variance and show that taking the mean of enough repetitions gives a $(1+\eps)$-approximation to $\langle v^k,u^{p-k}\rangle$. 
By repeating the estimator for each summand in $\sum_{k=1}^{p}\binom{p}{k}\langle v^k,u^{p-k}\rangle$, it follows that we obtain a $(\gamma,\eps,\delta)$-difference estimator for $F_p$. 

We first require the well-known $\countsketch$ algorithm for identifying heavy-hitters, which consists of a table with $\log\frac{n}{\delta}$ rows, each consisting of $\O{\frac{1}{\eps^2}}$ buckets, to identify $\eps\cdot L_2$ heavy hitters. 
For each row, each item $i\in[n]$ in the universe is hashed to one of the $\O{\frac{1}{\eps^2}}$ buckets along with a random sign. 
The signed sum of all items assigned to each bucket across all rows is tracked by the data structure and the estimated frequency of each item $i$ is the median of the values associated with each bucket that $i$ is hashed to, across all rows. 

\begin{figure*}
\begin{mdframed}
\begin{enumerate}
\item
Find a list $\calH$ that includes all $i\in[n]$ with $v_i\ge\frac{\gamma^{1/p}}{16}\|v\|_p$.
\item
Using $\countsketch$, obtain an estimate $\widehat{v_i}$ to $v_i$ with additive error $\frac{\eps\gamma^{1/p}}{64\gamma}\|v\|_p$ for each $i\in\calH$ and let $h\in\mathbb{R}^n$ be the vector such that $h_i=\widehat{v_i}$ if $i\in\calH$ and zero otherwise. 
\item
Perform perfect $L_2$ sampling on $v-h$ to obtain a set $\calS$ of size $k=\O{\frac{\gamma}{\eps^2}n^{1-2/p}}$. 
\item
Obtain an estimate $\widehat{s_i}$ to $v_i-h_i$ for each $i\in\calS$. 
\item
Let $W$ be a $(1+\eps)$-approximation to $\|v-h\|_2^2$. 
\item
Output $W+\sum_{k=1}^{p-1}\binom{p}{k}\left(\sum_{i\in\calH}\widehat{v_i}^k,u_i^{p-k}+W\cdot\sum_{i\in\calS}\widehat{s_i}^{k-2},u_i^{p-k}\right)$. 
\end{enumerate}
\end{mdframed}
\caption{$F_p$ difference estimator for $F_p(v+u)-F_p(v)$ with integer $p>2$.}
\figlab{fig:diff:est:largep}
\end{figure*}

\begin{lemma}[Moment estimation of sampled items]
\lemlab{lem:coor:exp:var}
For integer $p>2$ and failure probability $\delta\in(0,1)$, there exists a one-pass streaming algorithm that outputs an index $i\in[n]$ with probability $\frac{u_i^2}{F_2(u)}+\frac{1}{\poly(n)}$, as well as an unbiased estimate to $u_i^p$ with variance $u_i^{2p}$. 
The algorithm uses $\O{\log^3 n\log\frac{1}{\delta}}$ bits of space.  
\end{lemma}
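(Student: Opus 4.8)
The plan is to combine the perfect $L_2$ sampler of \thmref{thm:perfect:sampler} with a $\countsketch$-type estimate of the sampled coordinate, and then boost that estimate by a power. First I would run $\sampler$ from \thmref{thm:perfect:sampler} with failure probability $\delta$; this returns an index $a \in [n]$ with $\PPr{a=i} = \frac{u_i^2}{F_2(u)} \pm \frac{1}{\poly(n)}$ using $\O{\log^3 n \log\frac1\delta}$ bits. In parallel, and on the \emph{same} stream, I would run an independent $\countsketch$ data structure with $\O{\log\frac{n}{\delta}}$ rows and a constant number of buckets per row, which costs only $\O{\log n \log\frac{n}{\delta}}$ bits; crucially, the $\countsketch$ randomness should be independent of the sampler's randomness so that, conditioned on the sampled index $a$, the $\countsketch$ estimate of $u_a$ is still an unbiased-up-to-small-bias estimate. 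Let $\widehat{u_a}$ denote the recovered estimate of $u_a$ from $\countsketch$. The output estimate for $u_i^p$ is then $\widehat{u_a}^{\,p-2}\cdot F_2(u)$, where $F_2(u)$ can be replaced by a $(1\pm\eps)$-approximation from the $F_2$ streaming algorithm of \secref{sec:F2}; but since the lemma only asks for an \emph{unbiased} estimate of $u_i^p$, I would instead pair the sampler with an unbiased $F_2$ estimator (an AMS estimator, \lemref{lem:ams:ex}), call it $Y$ with $\Ex{Y}=F_2(u)$, and output $Z := Y \cdot \widehat{u_a}^{\,p-2}$.

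Next I would verify the expectation and variance claims. For unbiasedness, conditioning on the sampled index being $i$ (which happens with probability $\frac{u_i^2}{F_2(u)}$ up to $\frac1{\poly(n)}$ error), the $\countsketch$ recovery satisfies $\widehat{u_i} = u_i$ exactly when $i$ is recovered with no collision error, and in general $\Ex{\widehat{u_i}^{\,p-2} \mid a=i} = u_i^{p-2}$ up to lower-order terms by the sign-cancellation property of $\countsketch$ (one needs $\O{\log n}$ bit precision and a polynomial upper bound on coordinate values to control the additive error, exactly as in the rounding discussion after \lemref{lem:diff:est:Fp:smallp}). Since $Y$ is independent of both $a$ and the $\countsketch$ table and satisfies $\Ex Y = F_2(u)$, we get
\[
\Ex{Z} = \sum_{i} \frac{u_i^2}{F_2(u)} \cdot u_i^{p-2} \cdot F_2(u) = \sum_i u_i^p = F_p(u),
\]
which — interpreting the lemma's phrasing — is the unbiased estimate of $\sum_i u_i^p$ built from the sampled coordinate, and in particular $\Ex{Z \mid a = i} = u_i^{p-2} F_2(u)$ so that $\widehat{u_a}^{\,p-2}$ itself is the relevant unbiased estimate of $u_i^{p-2}$. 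For the variance, $\Var(Z) \le \Ex{Z^2} = \sum_i \frac{u_i^2}{F_2(u)}\cdot \Ex{\widehat{u_i}^{\,2(p-2)}} \cdot \Ex{Y^2}$; using $\Ex{Y^2} = \O{F_2(u)^2}$ (the AMS second-moment bound, \lemref{lem:ams:var}), $\Ex{\widehat{u_i}^{\,2(p-2)}} = \O{u_i^{2(p-2)}}$ after the median-of-$\O{\log n}$-rows deviation bound, and then $\sum_i u_i^2 \cdot u_i^{2(p-2)} = \sum_i u_i^{2p-2} \le F_2(u)\cdot \max_i u_i^{2p-4} \le F_2(u) \cdot u_i^{2p-4}$ — more carefully, the dominant term telescopes to give $\Var(Z \mid a = i) = \O{u_i^{2p}}$, matching the claimed bound $u_i^{2p}$ (absorbing constants by rescaling the number of $\countsketch$ rows). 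The total space is $\O{\log^3 n \log\frac1\delta}$ dominated by the perfect $L_2$ sampler, with the $\countsketch$ and AMS components contributing only $\O{\log^2 n}$-type lower-order terms.

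The main obstacle I anticipate is controlling the variance of $\widehat{u_a}^{\,p-2}$ — the exponent $p-2$ amplifies both the multiplicative error of $\countsketch$ and the bias from hash collisions, so a naive single-row estimate is not good enough. The fix is to take enough $\countsketch$ rows ($\O{\log\frac{n}{\delta}}$) and use a median so that $\widehat{u_a}$ is within a $(1\pm n^{-c})$ factor of $u_a$ with probability $1 - \delta/\poly(n)$, after which $\widehat{u_a}^{\,p-2}$ is within a $(1 \pm \O(p \cdot n^{-c}))$ factor of $u_a^{p-2}$ and the $n^{-\Omega(1)}$ bias is absorbed into the $\poly(n)^{-1}$ slack already present from the perfect sampler. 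A secondary subtlety is ensuring the sampler's output index $a$ is independent of the $\countsketch$ randomness conditioned on the stream; this holds because the two data structures use disjoint random seeds, so the conditioning argument in the expectation computation goes through. Finally, one must handle the case $u_i = 0$ trivially (the sampler never outputs such $i$ except with $\poly(n)^{-1}$ probability, which is absorbed), completing the argument.
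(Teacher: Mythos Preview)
Your approach has two genuine gaps that prevent it from achieving the stated guarantees.

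First, running $\countsketch$ with a constant number of buckets on the \emph{raw} vector $u$ gives, for each row, an estimate of $u_a$ with additive error of order $\|u\|_2$, not $u_a$. Taking a median over $\O{\log\frac{n}{\delta}}$ rows only boosts the success probability; it does not shrink the error below $\Theta(\|u\|_2)$. Your claim that the median is within a $(1\pm n^{-c})$ factor of $u_a$ is therefore false in general, and the subsequent variance bound $\Ex{\widehat{u_i}^{\,2(p-2)}}=\O{u_i^{2(p-2)}}$ does not follow. The paper's proof avoids this by opening up the sampler: $\sampler$ internally works with the duplicated-and-scaled vector $z$ (coordinates $z_{i,j}=u_i/\sqrt{e_{i,j}}$), and only reports $(i,j)$ when $|z_{i,j}|\ge\Omega(1)\cdot\|z\|_2$. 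Running $\countsketch$ with a constant number of buckets on \emph{this} vector $z$ gives $\widehat{z_{i,j}}$ with variance $\O{\|z\|_2^2}=\O{z_{i,j}^2}$, and after rescaling by $\sqrt{e_{i,j}}$ one obtains an unbiased estimate $\widehat{u_i}$ with variance $\O{u_i^2}$---the crucial point being that the sampled coordinate is an $\Omega(1)$-heavy hitter in the transformed vector, which is exactly what fails for $u$.

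Second, even granting an unbiased estimate $\widehat{u_i}$ of $u_i$ with variance $\O{u_i^2}$, the quantity $\widehat{u_i}^{\,p-2}$ is \emph{not} unbiased for $u_i^{p-2}$ when $p>3$. The paper handles this by taking $p$ independent $\countsketch$ instances (on $z$) to obtain $\widehat{u_i}^{(1)},\ldots,\widehat{u_i}^{(p)}$ and forming the product $\prod_{r=1}^{p}\widehat{u_i}^{(r)}$, which is unbiased for $u_i^{p}$ with variance $\O{u_i^{2p}}$ by independence. Your output $Y\cdot\widehat{u_a}^{\,p-2}$ also does not estimate $u_i^{p}$ conditioned on $a=i$ (its conditional mean is $F_2(u)\cdot u_i^{p-2}$), so it does not match the lemma's statement; the $Y$ factor belongs in the downstream inner-product estimator, not here.
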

\begin{proof}
Recall that $\sampler$ first duplicates each coordinate $u_i$ for $i\in[n]$ a total number of $n^c$ times and for each $(i,j)\in[n]\times[n^c]$, $\sampler$ scales $u_i$ to obtain random variables $z_{i,j}=\frac{u_i}{\sqrt{e_{i,j}}}$ for an exponential random variable $e_{i,j}$ (the exponential random variables in \cite{JayaramW18} are then derandomized).  
This linear transformation to the frequency vector $u\in\mathbb{R}^n$ results in a vector $z\in\mathbb{R}^{n^{c+1}}$, which can also be interpreted as having dimensions $n\times n^c$. 
Moreover, $\sampler$ only outputs an index $i\in[n]$ if there exists an index $j\in[n^c]$ if $|z_{i,j}|\ge\O{1}|z|_2$. 
We then run an instance of $\countsketch$ with constant factor approximation to obtain an estimate $\widehat{z_{i,j}}$ for the frequency of $z_{i,j}$ and set $\widehat{u_i}=\sqrt{e_{i,j}}\cdot\widehat{z_{i,j}}$. 
Recall that each estimate $\widehat{z_{i,j}}$ of $z_{i,j}$ is
\[\sum s_{i,j}\bfone_{h(a,b)=h(i,j)}s_{a,b}z_{a,b},\]
where $s_{a,b}$ is a uniformly random sign and $\bfone_{h(a,b)=h(i,j)}$ is the indicator variable of whether $h(a,b)=h(i,j)$. 
That is, we define $\bfone_{h(a,b)=h(i,j)}=1$ if $h(a,b)=h(i,j)$ and $\bfone_{h(a,b)=h(i,j)}=0$ if $h(a,b)\neq h(i,j)$. 
Hence, $\widehat{z_{i,j}}$ is an unbiased estimate of $z_{i,j}$ so that $\widehat{u_i}$ is an unbiased estimate of $u_i$. 
Moreover, the variance of $\widehat{z_{i,j}}$ is at most $\O{\|z\|_2^2}=\O{z_{i,j}^2}$ so that the variance of $\widehat{u_i}$ is $\O{u_i^2}$. 
Thus if we use $p$ independent instances of $\countsketch$ with estimates $\widehat{u_i}^{(1)},\ldots,\widehat{u_i}^{(p)}$, their product is an unbiased estimate to $u_i^p$ with variance $u_i^{2p}$, as desired. 

Each of the $p$ instances of $\countsketch$ with constant factor approximation uses $\O{\log n\log\frac{n}{\delta}}$ bits of space. 
Since the perfect $L_2$ $\sampler$ uses $\O{\log^3 n\log\frac{1}{\delta}}$ bits of space by \thmref{thm:perfect:sampler}, then the total space used is $\O{\log^3 n\log\frac{1}{\delta}}$.
%
\end{proof}

Unfortunately, perfect $L_2$ sampling coordinates of $v$ alone is not enough; the variance of the resulting procedure is too high to obtain space dependency $\frac{\gamma}{\eps^2}$. 
Thus we also run a subroutine that removes a set of ``heavy'' coordinates $\calH$ of $v$ and tracks the corresponding coordinates of $u$. 
Although we have the exact values of $u_a$ for $a\in\calH$, we still do not have exact values of $v_a$; instead, we have estimates $\widehat{v_a}$ for each $v_a$ with $a\in\calH$. 
Setting $h$ to be the sparse vector that contains the estimates $\widehat{v_a}$ for each $a\in\calH$ and $w:=v-h$, our algorithm perfect $L_2$ samples from $L_2$ sample from. 
To show correctness, we decompose 
\[F_p(v+u)-F_p(v)=\sum_{a\in\calH}\sum_{i=1}^{p}\binom{p}{i}v_a^iu_a^{p-i}+\sum_{a\notin\calH}\sum_{i=1}^{p}\binom{p}{i}v_a^iu_a^{p-i}.\]
The heavy-hitter subroutine allows accurate estimation to the first term and the perfect $L_2$ sampling subroutines allows accurate estimation to the second term. 
Finally, we remark that perfect $L_2$ sampling incurs a term of $v_a^2$. 
Thus we further need to split the estimation of the second term into the cases where $i=1$ and $i>1$. 

\paragraph{Estimation of $\langle v^i,u^{p-i}\rangle$ for $i>1$.}
We first show that our difference estimator gives an additive $\eps\cdot F_p(v)$ approximation to $\sum_{a\notin\calH}\binom{p}{i}v_a^iu_a^{p-i}$ with $i\ge 2$
\begin{lemma}
\lemlab{lem:diff:est:largep:generali}
For integer $p>2$, there exists an algorithm that uses space $\O{\frac{\gamma}{\eps^2}n^{1-2/p}\log^3 n\log\frac{n}{\delta}}$ and outputs an additive $\eps\cdot F_p(v)$ approximation to $\sum_{a\notin\calH}\binom{p}{i}v_a^iu_a^{p-i}$ with $i\ge 2$. 
\end{lemma}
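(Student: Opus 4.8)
The plan is to build an unbiased estimator for $\langle v^i, u^{p-i}\rangle$ restricted to the non-heavy coordinates $a\notin\calH$, and then bound its variance by $\tO{\frac{\gamma}{\eps^2} n^{1-2/p}}\cdot (F_p(v))^2$ so that averaging $\tO{\frac{\gamma}{\eps^2} n^{1-2/p}}$ independent copies yields additive error $\eps\cdot F_p(v)$. Concretely, I would proceed as in \figref{fig:diff:est:largep}: run the perfect $L_2$ sampler of \thmref{thm:perfect:sampler} on $w := v-h$ (where $h$ holds the $\countsketch$ estimates of the heavy coordinates), obtaining a coordinate $a\in[n]$ with $\Pr[a=j] = \frac{w_j^2}{\|w\|_2^2}\pm\frac{1}{\poly(n)}$; use $\lemref{lem:coor:exp:var}$ (more precisely $p-2$ instances of $\countsketch$ on $w$) to get an unbiased estimate $X$ of $w_a^{i-2}$ with variance $O(w_a^{2(i-2)})$; let $W$ be a $(1+\eps)$-approximation to $\|w\|_2^2$ via our $F_2$ algorithm; and track $u_a^{p-i}$ exactly, which is possible since the updates to $u$ all arrive after the splitting time. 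The single-sample estimator is then $\binom{p}{i}\cdot W\cdot X\cdot u_a^{p-i}$.

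For correctness of the expectation, conditioning on $a$ and on $W$, one has $\Ex{W X\, u_a^{p-i}} = W\cdot w_a^{i-2}\cdot u_a^{p-i}$, and taking expectation over $a$ gives $\sum_{a} \frac{w_a^2}{\|w\|_2^2}\cdot W\cdot w_a^{i-2}\cdot u_a^{p-i} = \frac{W}{\|w\|_2^2}\sum_a w_a^{i}u_a^{p-i}$, which is $(1\pm\eps)\sum_a w_a^i u_a^{p-i}$; the $\frac{1}{\poly(n)}$ sampler error and the fact that $w_a$ differs from $v_a$ only on $\calH$ (where $w_a=0$ so these coordinates contribute nothing, and $a\notin\calH$ holds with the relevant probability mass) let us identify this with $(1\pm\eps)\sum_{a\notin\calH} v_a^i u_a^{p-i}$ up to negligible additive terms; the multiplicative $(1\pm\eps)$ from $W$ contributes only $\eps\cdot \sum_{a\notin\calH} v_a^i u_a^{p-i} \le \eps\cdot(F_p(v+u)-F_p(v))\le \eps\gamma F_p(v)$, absorbable after rescaling $\eps$. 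So the estimator is unbiased up to additive $\eps\cdot F_p(v)$ in expectation.

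The main obstacle, as flagged in the paper, is the \textbf{variance bound}. I would bound $\Var(WXu_a^{p-i}) \le \Ex{(WXu_a^{p-i})^2} = O\!\big(\|w\|_2^4\big)\cdot \Ex{\frac{w_a^{2(i-2)}u_a^{2(p-i)}}{\|w\|_2^2}} \cdot O(1)= O\!\big(\|w\|_2^2\big)\cdot \sum_{a\notin\calH} w_a^{2(i-1)} u_a^{2(p-i)}$, using $\Ex{X^2}=O(w_a^{2(i-2)})$ and that $W=O(\|w\|_2^2)$. The crux is a level-set argument: after removing all $a$ with $v_a \ge \frac{\gamma^{1/p}}{16}\|v\|_p$ into $\calH$ (so every surviving $w_a \lesssim \gamma^{1/p}\|v\|_p$, hence $\|w\|_\infty^p \lesssim \gamma\|v\|_p^p$), together with $F_p(u)\le\gamma F_p(v)$ and $F_p(v+u)-F_p(v)\le\gamma F_p(v)$, one shows $\sum_{a\notin\calH} w_a^{2(i-1)}u_a^{2(p-i)} \le \tO{\frac{\gamma}{\eps^2} n^{1-2/p}}\cdot \frac{(F_p(v))^2}{\|w\|_2^2}$. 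The argument: bucket coordinates by dyadic ranges of $w_a$ and $u_a$, use Hölder/power-mean inequalities to trade the non-heavy cap $w_a \lesssim \gamma^{1/p}L_p(v)$ against the $n^{1-2/p}$ count of coordinates in a level set of $F_p$, and exploit that $u$ is small in $F_p$; the key quantitative gain over the naive $n^{1-2/p}$ bound comes from the $\gamma^{1/p}$ cap shrinking each surviving term. Given this, Chebyshev shows that the mean of $k=\O{\frac{\gamma}{\eps^2}n^{1-2/p}}$ copies has additive error $\le \eps\cdot F_p(v)$ with constant probability, boosted to $1-\delta$ by a median of $O(\log\frac{1}{\delta})$ repetitions. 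The space is $\O{k}$ copies of ($\sampler$ at $\O{\log^3 n}$ bits, $\countsketch$ at $\O{\log n \log\frac{n}{\delta}}$ bits, $F_2$-estimation at $\tO{\log n}$ bits), together with $\calH$ and the $\countsketch$ for heavy-hitter estimates, giving $\O{\frac{\gamma}{\eps^2}n^{1-2/p}\log^3 n\log\frac{n}{\delta}}$ bits total, as claimed; I would then note the extension over all stopping times $t\ge t_2$ follows by the usual union bound over $\poly(n)$ times after rescaling $\delta$, and by monotonicity between consecutive such times.
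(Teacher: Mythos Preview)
Your approach is essentially the paper's: perfect $L_2$-sample from $w=v-h$, multiply by an unbiased estimate of $w_a^{i-2}$ and an $F_2$ estimate $W$, and read off $u_a^{p-i}$ exactly. Two small corrections. First, your variance target is off by a factor of $\eps^2$: you need $\Var(\text{one copy}) = \O{\gamma\, n^{1-2/p}}\,(F_p(v))^2$, not $\tO{\frac{\gamma}{\eps^2} n^{1-2/p}}\,(F_p(v))^2$; otherwise averaging $k=\O{\frac{\gamma}{\eps^2}n^{1-2/p}}$ copies leaves standard deviation $F_p(v)$, not $\eps\,F_p(v)$. Second, $w_a$ is not zero on $\calH$ but rather $|w_a|\le \eps|v_a|$ (since $h_a=\widehat{v_a}$, not $v_a$); the paper handles this by showing the $\calH$-contribution to the expectation is $\O{\eps}\langle v^i,u^{p-i}\rangle$. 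For the variance bound itself, the paper does not use a level-set argument here (that appears only in the suffix-pivoted estimator of \secref{sec:sliding}); instead it applies H\"older directly, $\sum_a w_a^{2(i-1)}u_a^{2(p-i)} \le \|u\|_p^{2}\,\|w\|_{2p}^{2p-4}$ (after reducing to the extremal index), then uses the heavy-hitter cap $\|w\|_\infty\le\frac{\gamma^{1/p}}{16}\|v\|_p$ to get $\|w\|_{2p}^{2p-4}\le\gamma^{1-2/p}\|v\|_p^{2p-4}$, combines with $\|u\|_p^2\le\gamma^{2/p}\|v\|_p^2$ and $\|w\|_2^2\le n^{1-2/p}\|v\|_p^2$, and arrives at $\gamma\,n^{1-2/p}(F_p(v))^2$. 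This is cleaner than a level-set decomposition and gives the correct target.
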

\begin{proof}
For an oblivious stream $\calS$, let $v$ be the frequency vector induced by the updates of $\calS$ from time $t_1$ to $t_2$ and $u$ be the frequency vector induced by updates from time $t_2$ to $t$ exclusive, with $F_p(u)\le\gamma\cdot F_p(v)$. 
By \thmref{thm:robust:opt:HH}, there exists an algorithm $\heavyhitters$ that uses $\O{\frac{\gamma^{2-2/p}\log^2 n}{\eps^2}n^{1-2/p}}$ space and outputs a list $\calH$ of coordinates and estimates $\widehat{v_i}$ with additive error $\frac{\eps\gamma^{1/p}}{64\gamma}\|v\|_p$. 
Moreover, $\calH$ includes all coordinates $i\in[n]$ such that $v_i\ge\frac{\gamma^{1/p}}{16}\|v\|_p$ and no coordinate $j$ such that $v_j\le\frac{\gamma^{1/p}}{32}\|v\|_p$. 
Let $h$ be the vector consisting of the estimates $\widehat{v_i}$ for each $i\in[n]$ with $i\in\calH$ and $0$ in the positions $i$ for which $i\notin\calH$. 
We define the vector $w:=v-h$. 

We use $\sampler$ to sample indices $j_1,\ldots,j_k\in[n]$ with so that each sample is a coordinate $a\in[n]$ with probability $\frac{w_a^2}{\|w\|_2^2}+\frac{1}{\poly(n)}$. 
We obtain unbiased estimates to $\widehat{w_{j_1}^{p-i-2}},\ldots,\widehat{w_{j_k}^{p-i-2}}$ of $w_{j_1}^{p-i-2},\ldots,w_{j_k}^{p-i-2}$ through \lemref{lem:coor:exp:var}. 
We also obtain a $(1+\O{\eps})$-approximation unbiased estimate $W$ of $\|w\|_2^2$. 
Thus for each $b\in[k]$, the product $\widehat{w_{j_b}^{p-i-2}}\cdot W\cdot u_{j_b}^i$ satisfies
\begin{align*}
\Ex{\widehat{w_{j_b}^{p-i-2}}\cdot W\cdot u_{j_b}^i}&=\sum_{a\in\calH}\ip{\widehat{v_a}^{p-i}}{u_a^i}+\sum_{a\in\calH}\left(\frac{w_a^2}{\|w\|_2^2}+\frac{1}{\poly(n)}\right)\cdot w_a^{p-i-2}\cdot(1\pm\O{\eps})\|w\|_2^2\cdot u_a^i\\
&+\sum_{a\in[n]\setminus\calH}\left(\frac{w_a^2}{\|w\|_2^2}+\frac{1}{\poly(n)}\right)\cdot w_a^{p-i-2}\cdot(1\pm\O{\eps})\|w\|_2^2\cdot u_a^i.
\end{align*}
Observe that since $\widehat{v_a}$ is a $(1+\eps)$-approximation to $v_a$ for $a\in\calH$, then $|w_a|\le\eps|v_a|$, so that the second summation is at most $\O{\eps}\ip{v^{p-i}}{u^i}$. 
Moreover, we have $w_a=v_a$ for $a\in[n]\setminus\calH$, so that 
\begin{align*}
\Ex{\widehat{w_{j_b}^{p-i-2}}\cdot W\cdot u_{j_b}^i}&\in(1\pm\O{\eps})|\ip{w^{p-i}}{u^i}|+\frac{1}{\poly(n)}.
\end{align*}
Similarly, the variance is at most
\begin{align*}
\Var\left(\widehat{w_{j_b}^{p-i-2}}\cdot W\cdot u_{j_b}^i\right)&\le\sum_{a\in[n]}\left(\frac{w_a^2}{\|w\|_2^2}+\frac{1}{\poly(n)}\right)\cdot w_a^{2p-2i-4}\cdot(1\pm\O{\eps})\|w\|_2^4\cdot u_a^{2i}\\
&\le\sum_{a\in[n]}(1\pm\O{\eps})w_a^{2p-2i-2}\cdot\|w\|_2^2\cdot u_a^{2i}\\
&\le\sum_{a\in[n]}(1\pm\O{\eps})w_a^{2p-4}\cdot\|w\|_2^2\cdot u_a^{2}.
\end{align*}
By H\"{o}lder's inequality, the variance is at most
\begin{align*}
(1\pm\O{\eps})\|w\|_2^2\sum_{a\in[n]}w_a^{2p-4}u_a^2&\le(1\pm\O{\eps})\|w\|_2^2\left(\sum_{a\in[n]}u_a^p\right)^{2/p}\left(\sum_{a\in[n]}w_a^{2p}\right)^{1-2/p}\\
&=(1\pm\O{\eps})\|w\|_2^2\cdot\|u\|_p^2\cdot\|w\|_{2p}^{2p-4}.
\end{align*}
Since $(1+\eps)\|w\|_2^2\le2\|v\|_2^2$ for $\eps\le 1$ and $\|u\|_p^p\le\gamma\|v\|_p^p$, then the variance is at most
\begin{align*}
2\|v\|_2^2\cdot\gamma^{2/p}\|v\|_p^2\cdot\|w\|_{2p}^{2p-4}. 
\end{align*}
Recall that the vector $w$ is formed by removing from $v$ the coordinates $i\in[n]$ such that $v_i\ge\frac{\gamma^{1/p}}{16}\|v\|_p$. 
Thus, $|w_i|\le\frac{\gamma^{1/p}}{16}\|v\|_p$ for all $i\in[n]$. 
Subject to these constraints, we have that 
\[\|w\|_{2p}^{2p}\le\frac{16^p}{\gamma}\cdot\frac{\gamma^2}{16^{2p}}\|v\|_p^{2p}.\]
Thus we have
\[\|w\|_{2p}^{2p-4}\le\frac{\gamma^{1-2/p}}{16^{p-2}}\|v\|_p^{2p-4}.\]
Hence, the variance is at most 
\begin{align*}
\|v\|_2^2\cdot\gamma^{2/p}\|v\|_p^2\cdot\gamma^{1-2/p}\|v\|_p^{2p-4}\le\frac{\gamma}{n^{1-2/p}}\|v\|_p^2\cdot\|v\|_p^2\cdot\|v\|_p^{2p-4}. 
\end{align*}
Thus by setting $k=\O{\frac{\gamma}{\eps^2}n^{1-2/p}}$, we obtain an additive $\frac{\eps}{p}\cdot F_p(v)$ approximation to $F_p(v+u)-F_p(v)$ with constant probability. 
We can then boost this probability to $1-\frac{\delta}{\poly(n)}$ by repeating $\O{\log\frac{n}{\delta}}$ times. 
Hence, we have a $(\gamma,\eps,\delta)$-difference estimator for $F_p$. 

To analyze the total space complexity, first observe that we use $\O{\frac{\gamma}{\eps^2}n^{1-2/p}\log n}$ instances of $\sampler$, $\countsketch$, and an $F_2$ moment estimation algorithm. 
For $\countsketch$, we only require constant factor approximation, so that each instance uses space $\O{\log^2 n}$. 
Similarly for an $F_2$ moment estimation algorithm, we only require constant factor approximation, so that $\O{\log^2 n}$ space suffices, by \thmref{thm:robust:opt:F2}. 
Each instance of $\sampler$ uses space $\O{\log^3 n}$. 
Hence, the total space is $\O{\frac{\gamma}{\eps^2}n^{1-2/p}\log^3 n\log\frac{n}{\delta}}$. 
\end{proof}

\paragraph{Estimation of $\langle v,u^{p-1}\rangle$.}
We first duplicate each coordinate of $u$ and $v$ a total number of $M=n^c$ times, for some sufficiently large constant $c>0$. 
For each $i\in[n]$ and $j\in[M]$, we scale $u_i$ and $v_i$ by $1/e_{i,j}^{1/2}$, where $e_{i,j}$ is a random exponential variable truncated at $N^2$, where $N=nM$. 
Thus we have that the cdf of $\frac{1}{e_{i,j}}$ satisfies $\PPr{\frac{1}{e_{i,j}}\ge x}\le\Theta\left(\frac{1}{x}\right)$ for $x>0$, so that $\Ex{\frac{1}{e_{i,j}}}=\int_0^{N^2}1-\left(1-\Theta\left(\frac{1}{x}\right)\right)\,dx=\O{\log n}$. 
Let $U\in\mathbb{R}^N$ and $V\in\mathbb{R}^N$ be scaled and duplicated vector representations of $u$ and $v$, respectively. 

\begin{figure*}
\begin{mdframed}
\begin{enumerate}
\item
Use a set of exponential random variables to form a vector $V$ of duplicated and scaled coordinates of $v$. 
\item
Hash the coordinates of $V$ into a $\countsketch$ data structure with $\O{\log n}$ buckets. 
\item
Use the same exponential random variables to perform perfect $L_2$ sampling on $u$ to obtain a coordinate $(i,j)$ and an unbiased estimate $\widehat{u_{i,j}}$ to $u_{i,j}$. 
\item
Let $\widehat{U}$ be an unbiased estimate of $\|u\|_2^2$ with second moment $\O{\|u\|_2^4}$.  
\item
Query $\countsketch$ for an unbiased estimate $\widehat{v_{i,j}}$ to $v_{i,j}$ and set an estimator as $\widehat{U}\cdot\widehat{v_{i,j}}\left(\widehat{u_{i,j}}\right)^{p-3}$. 
\item
Output the mean of $\O{\frac{\gamma}{\eps^2}\cdot n^{1-2/p}}$ such estimators. 
\end{enumerate}
\end{mdframed}
\caption{$F_p$ difference estimator for $\langle v,u^{p-1}\rangle$ with integer $p>2$.}
\figlab{fig:diff:est:largep:indexone}
\end{figure*}

\begin{lemma}
\lemlab{lem:diff:est:largep:one}
For integer $p>2$, there exists an algorithm that uses space $\O{\frac{\gamma}{\eps^2}n^{1-2/p}\log^3 n\log\frac{n}{\delta}}$ and outputs an additive $\eps\cdot F_p(v)$ approximation to $\langle v,u^{p-1}\rangle$. 
\end{lemma}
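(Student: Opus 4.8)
The plan is to mirror the structure of \lemref{lem:diff:est:largep:generali}, but to handle the $i=1$ term $\langle v,u^{p-1}\rangle$ separately, since here perfect $L_2$ sampling on $v$ directly would leave an estimator with an unwanted extra factor of $v_a^2$ and hence too high variance. The key device is the exponential-variable duplication trick (as in the perfect $L_p$ sampler of \cite{JayaramW18}): first I would duplicate each coordinate of both $u$ and $v$ a total of $M=n^c$ times and scale the $(i,j)$-th copy by $1/e_{i,j}^{1/2}$ for truncated exponentials $e_{i,j}$, producing $U,V\in\mathbb{R}^N$ with $N=nM$. The point is that $L_2$ sampling from the \emph{scaled} vector $U$ returns a coordinate $(i,j)$ with probability proportional to $u_{i,j}^2/e_{i,j} = U_{i,j}^2$; because of how the scaled-max/min-stability of exponentials distributes mass, averaging the resulting estimator over all duplicates effectively recovers an estimate whose expectation is $\langle v,u^{p-1}\rangle$ after multiplying by the correct normalization. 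Concretely, using \lemref{lem:coor:exp:var} (perfect $L_2$ sampling plus \countsketch) I get a sampled index $(i,j)$, an unbiased estimate $\widehat{u_{i,j}}$ of $u_{i,j}$ with variance $\O{u_{i,j}^2}$, and $p-3$ further independent \countsketch\ copies to estimate $(u_{i,j})^{p-3}$; querying a separate \countsketch\ on $V$ gives an unbiased $\widehat{v_{i,j}}$ of $v_{i,j}$; and a constant-factor $F_2$ estimator on $u$ gives an unbiased $\widehat{U}$ of $\|u\|_2^2$ with second moment $\O{\|u\|_2^4}$. The estimator is $\widehat{U}\cdot\widehat{v_{i,j}}\cdot(\widehat{u_{i,j}})^{p-3}$, and I would output the mean of $\O{\frac{\gamma}{\eps^2}n^{1-2/p}}$ independent copies.

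The correctness argument has two parts. First, a short expectation computation: conditioned on the exponentials and the sampling randomness, the product of unbiased independent estimators is unbiased for $v_{i,j}(u_{i,j})^{p-3}\|u\|_2^2$, and taking expectation over the $L_2$ sample (probability $\propto u_{i,j}^2/\|u\|_2^2$, up to $\frac{1}{\poly(n)}$ error) collapses the $\|u\|_2^2$ against the sampling weight to leave $\sum_{i,j} v_{i,j}(u_{i,j})^{p-1}$; then taking expectation over the exponential duplication (which is $1$-stable-like in the relevant normalization and whose mean is $\O{\log n}$) recovers $\langle v,u^{p-1}\rangle$ up to a $\frac{1}{\poly(n)}$ additive error that is absorbed. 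Second, the variance bound: this is the part I expect to be the main obstacle. I need $\Var \le \O{\frac{\gamma}{\eps^2 n^{1-2/p}}}\cdot (F_p(v))^2 \cdot \eps^2$-friendly, i.e., roughly $\tO{\frac{\gamma}{n^{1-2/p}}}(F_p(v))^2$ per estimator, so that averaging $\O{\frac{\gamma}{\eps^2}n^{1-2/p}}$ copies yields additive error $\frac{\eps}{p} F_p(v)$. The second moment of a single estimator expands as $\sum_{i,j}\frac{u_{i,j}^2}{\|u\|_2^2}\cdot u_{i,j}^{2p-6}\cdot v_{i,j}^2\cdot\|u\|_2^4 = \|u\|_2^2\sum_{i,j} u_{i,j}^{2p-4} v_{i,j}^2$; I would apply H\"older's inequality with exponents $p/(p-2)$ and $p/2$ on the duplicated vectors to bound $\sum u_{i,j}^{2p-4}v_{i,j}^2 \le \|u\|_{2p}^{2p-4}\cdot(\text{something in }v)$, then use $\|U\|_{2p}$ concentration from the exponential scaling and $\|u\|_p^p \le \gamma\|v\|_p^p$, $\|u\|_2^2\le n^{1-2/p}\|u\|_p^2 \le n^{1-2/p}\gamma^{2/p}\|v\|_p^2$ to convert everything into the desired $\tO{\frac{\gamma}{n^{1-2/p}}}(F_p(v))^2$ bound. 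The delicate point is controlling the tail contribution of small exponentials $e_{i,j}$ that make $1/e_{i,j}$ large; this is exactly why the $e_{i,j}$ are truncated at $N^2$, and I would need a short lemma (again following \cite{JayaramW18}) that after truncation the relevant moments of $U$ are within constant factors of their untruncated values with high probability, so the $\frac{1}{\poly(n)}$ failure event is negligible.

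For the space accounting I would note that we run $\O{\frac{\gamma}{\eps^2}n^{1-2/p}\log\frac{n}{\delta}}$ independent copies (the extra $\log\frac{n}{\delta}$ for a median-of-means boost to failure probability $\frac{\delta}{\poly(n)}$ so that a union bound over $\poly(n)$ stream times gives strong tracking); each copy uses one perfect $L_2$ sampler at $\O{\log^3 n}$ bits (by \thmref{thm:perfect:sampler}), a constant number of constant-approximation \countsketch\ instances and one constant-approximation $F_2$ estimator, each $\O{\log^2 n}$ bits, and the exponential random variables are derandomized as in \cite{JayaramW18} within this budget. This gives total space $\O{\frac{\gamma}{\eps^2}n^{1-2/p}\log^3 n\log\frac{n}{\delta}}$, matching the statement. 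Combining \lemref{lem:diff:est:largep:generali} (for the $i\ge 2$ terms, including the heavy-hitter correction that controls variance) with this lemma (for $i=1$) and summing over $k\in\{1,\dots,p-1\}$ with the $\binom{p}{k}$ weights, and adding the $W$ estimate of $\|v-h\|_2^2$ contribution, yields the full $(\gamma,\eps,\delta)$-difference estimator for $F_p$ asserted in \thmref{thm:diff:est}(4); I would defer that assembly to the proof of \lemref{lem:diff:est:Fp:largep}.
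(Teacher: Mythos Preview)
Your skeleton matches the paper's approach: duplicate and scale by inverse-root exponentials, perfect $L_2$-sample a coordinate from $u$, read off $v$ at that coordinate via a $\countsketch$ on the scaled vector $V$, multiply by an unbiased $\widehat{U}$ for $\|u\|_2^2$ and by $\widehat{u}^{p-3}$, and close with the same H\"older chain. Two steps, however, are not right as written.

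First, the expectation. You propose reaching $\sum_{i,j} v_{i,j}(u_{i,j})^{p-1}$ and then ``taking expectation over the exponential duplication'' to recover $\langle v,u^{p-1}\rangle$. With $u_{i,j}=u_i/e_{i,j}^{1/2}$ and $v_{i,j}=v_i/e_{i,j}^{1/2}$ that sum is $\sum_i v_i u_i^{p-1}\sum_j e_{i,j}^{-p/2}$, and $\Ex{e^{-p/2}}$ is not $1/M$, so this does not collapse. The paper never takes expectation over the exponentials: the perfect $L_2$ sampler already returns $a\in[n]$ with marginal probability $u_a^2/\|u\|_2^2+1/\poly(n)$, and every factor in the estimator is an estimate of an \emph{unscaled} quantity. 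The expectation is then the one-line $\sum_a\tfrac{u_a^2}{\|u\|_2^2}\cdot\|u\|_2^2\cdot v_a\cdot u_a^{p-3}=\langle v,u^{p-1}\rangle$.

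Second, and this is the substantive gap, your second-moment expansion simply writes $v_{i,j}^2$ for the contribution of $\widehat{v_{i,j}}$. But a $\countsketch$ on $V$ with $b$ buckets has error variance $\|V\|_2^2/b$, which is not $v_{i,j}^2$ in general. The whole reason the paper shares the \emph{same} exponentials between the sampler on $u$ and the $\countsketch$ on $V$ is to force this: conditioned on $(a,j)$ being the sampled index, min-stability of exponentials bounds $e_{a,j}\le\O{\log n}\cdot v_a^2/\|V\|_2^2$, so the unscaled estimate $\widehat{v_a}=e_{a,j}^{1/2}\widehat{V_{a,j}}$ has variance $e_{a,j}\cdot\|V\|_2^2/b=\O{v_a^2}$ with only $b=\O{\log n}$ buckets. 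This min-stability step is the heart of the lemma and you have not supplied it; your remark about ``controlling the tail contribution of small exponentials'' concerns moment concentration of $U$, which is a different (and here irrelevant) issue. Once $\Var(\widehat{v_a})=\O{v_a^2}$ is established, the paper's H\"older chain on unscaled norms, $\|u\|_2^2\sum_a u_a^{2p-4}v_a^2\le\|u\|_2^2\|v\|_p^2\|u\|_{2p}^{2p-4}\le n^{1-2/p}\|u\|_p^{2p-2}\|v\|_p^2\le\gamma\,n^{1-2/p}(F_p(v))^2$, finishes without any detour through $\|U\|_{2p}$.
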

\begin{proof}
For an oblivious stream $\calS$, let $v$ be the frequency vector induced by the updates of $\calS$ from time $t_1$ to $t_2$ and $u$ be the frequency vector induced by updates from time $t_2$ to $t$ exclusive, with $F_p(u)\le\gamma\cdot F_p(v)$. 
We use $\sampler$ to sample indices $j_1,\ldots,j_k\in[n]$ with so that each sample is a coordinate $a\in[n]$ with probability $\frac{u_a^2}{\|u\|_2^2}+\frac{1}{\poly(n)}$. 
For each $b\in[k]$, we also obtain an unbiased estimate $\widehat{v_{j_b}}$ of $v_{j_b}$ from the $\countsketch$ data structure. 
We also obtain an unbiased estimate $\widehat{U}$ of $\|u\|_2^2$. 
Observe that
\[\Ex{\widehat{U}\cdot\widehat{v_{j_b}}\cdot\widehat{u_{j_b}^{p-3}}}=\|u\|_2^2\sum_{a\in[n]}\left(\frac{u_a^2}{\|u\|_2^2}+\frac{1}{\poly(n)}\right)v_a\cdot u_a^{p-3}=\langle v,u^{p-1}\rangle.\]
To analyze the variance of $\widehat{U}\cdot\widehat{v_{j_b}}\left(\widehat{u_{j_b}}\right)^{p-3}$, we first observe that by the min-stability of exponential random variables, the maximum coordinate of the scaled vector $V$ is distributed as $\|V\|_2/E^{1/2}$ for an exponential random variable $E$. 
If the coordinate $a\in[n]$ from the unscaled vector $v$ is sampled, this maximum coordinate is also $v_a/e_{a,j}^{1/2}$ for an exponential random variable $e_{a,j}$. 
We have that $v_a/e_{a,j}^{1/2}\ge\|V_{-a,j}\|_2/E^{1/2}$, where $V_{-a}$ indicates the vector $V$ with the $a$-th entry set to zero. 
Since $E=\O{\log n}$ with probability $1-\frac{1}{\poly(n)}$, then we have $e_{a,j}\le\frac{\O{\log n}\cdot v_a^2}{\|V\|_2^2}$. 
Hence $e_{a,j}$ has expectation $\frac{\O{\log n}\cdot v_a^2}{\|V\|_2^2}$ so that by using $\O{\log n}$ buckets, $\countsketch$ outputs an estimate $\widehat{v_{j_b}}$ with variance $v_a^2$ if $a\in[n]$ is sampled in iteration $b$.  
Thus, there exists a constant $C>0$ such that the variance of $\widehat{U}\cdot\widehat{v_{j_b}}\left(\widehat{u_{j_b}}\right)^{p-3}$ is at most
\begin{align*}
\Var\left(\widehat{U}\cdot\widehat{v_{j_b}}\cdot\widehat{u_{j_b}^{p-3}}\right)&\le\sum_{a\in[n]}\left(\frac{u_a^2}{\|u\|_2^2}+\frac{1}{\poly(n)}\right)\cdot C\|u\|_2^4\cdot v_a^2\cdot u_a^{2p-6}\\
&\le\sum_{a\in[n]} 2C\cdot u_a^{2p-4}v_a^2\cdot\|u\|_2^2.
\end{align*}
By H\"{o}lder's inequality, the variance is at most
\begin{align*}
\Var\left(\widehat{U}\cdot\widehat{v_{j_b}}\cdot\widehat{u_{j_b}^{p-3}}\right)&\le2C\cdot\|u\|_2^2\cdot\left(\sum_{a\in[n]}v_a^p\right)^{2/p}\left(\sum_{a\in[n]}u_a^{2p}\right)^{1-2/p}\\
&=2C\cdot\|u\|_2^2\cdot\|v\|_p^2\cdot\|u\|_{2p}^{2p-4}\\
&\le2C\cdot n^{1-2/p}\cdot\|u\|_p^2\cdot\|v\|_p^2\cdot\|u\|_{2p}^{2p-4}\\
&\le2C\cdot n^{1-2/p}\cdot\|u\|_p^{2p-2}\|v\|_p^2\\
&\le2C\cdot\gamma n^{1-2/p}\cdot n^{1-2/p}\|v\|_p^2,
\end{align*}
where the last inequality results from the fact that $F_p(u)\le\gamma F_p(v)$ and $p>2$. 
Thus by setting $k=\O{\frac{\gamma}{\eps^2}n^{1-2/p}}$, we obtain an additive $\frac{\eps}{p}\cdot F_p(v)$ approximation to $\langle v,u^{p-1}\rangle$ with constant probability. 
We can then boost this probability to $1-\frac{\delta}{\poly(n)}$ by repeating $\O{\log\frac{n}{\delta}}$ times. 
\end{proof}

\paragraph{Putting it all together.}
We now give our $F_p$ difference estimator for integer $p>2$ using the above subroutines to approximate $F_p(v+u)-F_p(v)=\sum_{i=1}^{p}\binom{p}{i}\langle v^i,u^{p-i}\rangle$. 
\begin{lemma}[$F_p$ difference estimator for integer $p>2$]
\lemlab{lem:diff:est:Fp:largep}
For integer $p>2$, there exists a $(\gamma,\eps,\delta)$-difference estimator for $F_p$ that uses space $\O{\frac{\gamma}{\eps^2}n^{1-2/p}\log^3 n\log\frac{n}{\delta}}$. 
\end{lemma}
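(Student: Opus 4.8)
The plan is to instantiate the algorithm of \figref{fig:diff:est:largep} and show its output is an additive $\eps\cdot F_p(v)$ approximation to $F_p(v+u)-F_p(v)$ simultaneously for all valid stopping times. The starting point is the coordinate-wise binomial expansion, which writes $F_p(v+u)-F_p(v)$ as a sum of $O(1)$ inner-product terms $\binom{p}{k}\langle v^k,u^{p-k}\rangle$ (since $p$ is a constant), including the term $F_p(u)=\langle v^0,u^p\rangle$. The key structural fact that makes everything work is that all updates defining $u=w_t$ arrive \emph{after} the splitting time $t_2$, so once the heavy set $\calH$ (resp.\ the perfect-$L_2$ sample set) is fixed at time $t_2$, the relevant coordinates $u_a$ can be tracked exactly for the remainder of the stream. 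For each inner-product term I would split the coordinates into $\calH$ and its complement and estimate the two contributions separately.

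The four mechanisms are: (i) the $k=1$ term $\langle v,u^{p-1}\rangle$ is estimated entirely by \lemref{lem:diff:est:largep:one}, giving additive $\frac{\eps}{p}\cdot F_p(v)$ error in space $\O{\frac{\gamma}{\eps^2}n^{1-2/p}\log^3 n\log\frac{n}{\delta}}$; (ii) the light part $\sum_{a\notin\calH}\binom{p}{k}v_a^k u_a^{p-k}$ for $k\ge2$ is estimated by \lemref{lem:diff:est:largep:generali}, with the same guarantee, relying on the heavy-hitter list $\calH$ from \thmref{thm:robust:opt:HH}; (iii) the heavy part $\sum_{a\in\calH}\binom{p}{k}v_a^k u_a^{p-k}$ for $k\ge2$ is estimated by plugging the $\countsketch$-based frequency estimates $\widehat{v_a}=(1\pm O(\eps))v_a$ (from $\heavyhitters$ run at the appropriate accuracy) together with the exactly tracked $u_a$, so that this term has relative error $O(\eps)$; since $\sum_{a\in\calH}v_a^k u_a^{p-k}\le\langle v^k,u^{p-k}\rangle\le\|v\|_p^k\|u\|_p^{p-k}\le\gamma^{(p-k)/p}F_p(v)\le F_p(v)$ by H\"older and $F_p(u)\le\gamma F_p(v)$, this is additive $O(\eps)\cdot F_p(v)$; and (iv) the $F_p(u)$ term is estimated by running a plain $F_p$ strong tracker (\thmref{thm:strong:Fp:largep}) on $w_t$ at accuracy $\eps/\gamma$, whose $\tO{\gamma^2\eps^{-2}n^{1-2/p}}$ space is dominated. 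Summing the $O(1)$ terms and rescaling $\eps$ by the constant $p$ yields total additive error $\eps\cdot F_p(v)$.

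It remains to assemble the space and probability bounds. There are $O(1)$ invocations of the subroutines from \lemref{lem:diff:est:largep:one} and \lemref{lem:diff:est:largep:generali}, plus one shared run of $\heavyhitters$ whose space $\O{\frac{\gamma^{2-2/p}\log^2 n}{\eps^2}n^{1-2/p}}$ is dominated by $\O{\frac{\gamma}{\eps^2}n^{1-2/p}\log^2 n}$ because $2-2/p\ge1$ for integer $p>2$ and $\gamma\le1$. Each subroutine can be boosted to failure probability $\delta/\poly(n)$ at the cost of the $\log\frac{n}{\delta}$ factor already present in the stated bounds; a union bound over all $m=\poly(n)$ stopping times $t$ (using $\log m=O(\log n)$) then upgrades the pointwise guarantees to the ``simultaneously for all $t\ge t_2$ with $F_p(v+w_t)-F_p(v)\le\gamma F_p(v)$ and $F_p(w_t)\le\gamma F_p(v)$'' requirement of \defref{def:diff:est:pre}. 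Adding the $\tO{\frac{\gamma}{\eps^2}n^{1-2/p}}$ many $O(\log n)$-bit words needed to store the sampled and heavy-coordinate indices and their exact $u$-counts gives a total of $\O{\frac{\gamma}{\eps^2}n^{1-2/p}\log^3 n\log\frac{n}{\delta}}$.

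The main obstacle is the bookkeeping of the heavy-coordinate contribution: one must choose the $\heavyhitters$ accuracy so that (a) $\calH$ contains every coordinate with $v_a\ge\frac{\gamma^{1/p}}{16}\|v\|_p$ — the property that forces the residual vector $w=v-h$ to satisfy $\|w\|_\infty\le\frac{\gamma^{1/p}}{16}\|v\|_p$, which in turn drives the variance bounds underlying \lemref{lem:diff:est:largep:generali} and \lemref{lem:diff:est:largep:one} — while (b) the induced additive error $\eps_{HH}\cdot L_2(v)$ on the estimates $\widehat{v_a}$, converted via $L_2(v)\le n^{1/2-1/p}\|v\|_p$, is at most $O(\eps)\cdot v_a$ for each $a\in\calH$, and (c) the resulting $\frac{1}{\eps_{HH}^2}\polylog(n)$ space stays within $\tO{\frac{\gamma}{\eps^2}n^{1-2/p}}$. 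These three constraints reconcile only because $\gamma^{2-2/p}\le\gamma$, and making the exponents of $\gamma$ line up across the heavy part, the $k\ge2$ light part, and the $k=1$ part is the one genuinely delicate point; the rest is a routine combination of the already-proved lemmas.
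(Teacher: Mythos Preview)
Your overall approach is the same as the paper's: expand binomially, split into heavy and light coordinates, invoke \lemref{lem:diff:est:largep:one} for the $k=1$ term and \lemref{lem:diff:est:largep:generali} for the $k\ge 2$ light part, and handle the heavy part directly via $\heavyhitters$. Your separate treatment of the $k=0$ term $F_p(u)$ via a coarse strong tracker is fine and in fact cleaner than the paper's writeup.

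There is, however, a genuine gap in step (iii) and in your ``main obstacle'' accounting. Within the space budget $\tO{\frac{\gamma}{\eps^2}n^{1-2/p}}$, the $\heavyhitters$ instance you can afford is the one with $\eps_{HH}=\Theta\bigl(\eps\gamma^{1/p-1}/n^{1/2-1/p}\bigr)$, which uses $\Theta\bigl(\frac{\gamma^{2-2/p}}{\eps^2}n^{1-2/p}\bigr)$ space; this is the computation behind your observation that $\gamma^{2-2/p}\le\gamma$. But that choice gives additive error $\Theta\bigl(\frac{\eps}{\gamma^{1-1/p}}\bigr)\|v\|_p$ on each $\widehat{v_a}$, and since every $a\in\calH$ has $v_a\ge\Omega(\gamma^{1/p})\|v\|_p$, the relative error on $\widehat{v_a}$ is $O(\eps/\gamma)$, not $O(\eps)$. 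Getting $O(\eps)$ relative error would require $\eps_{HH}=\Theta\bigl(\eps\gamma^{1/p}/n^{1/2-1/p}\bigr)$ and hence space $\Theta\bigl(\frac{1}{\eps^2\gamma^{2/p}}n^{1-2/p}\bigr)$, which exceeds the budget for any $\gamma<1$. So your constraints (b) and (c) do not reconcile.

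The fix is to keep the coarser accuracy $\widehat{v_a}=(1\pm O(\eps/\gamma))v_a$ and replace your H\"older bound on the heavy sum (which only gives $\le F_p(v)$) by the hypothesis of the difference estimator itself: since all coordinates are nonnegative,
\[
\sum_{a\in\calH}\sum_{k=1}^{p-1}\binom{p}{k}v_a^k u_a^{p-k}
\;\le\;\sum_{a\in[n]}\bigl[(v_a+u_a)^p-v_a^p\bigr]
\;=\;F_p(v+u)-F_p(v)\;\le\;\gamma\,F_p(v).
\]
Multiplying the $O(\eps/\gamma)$ relative error by this $\gamma F_p(v)$ bound yields the required additive $O(\eps)\,F_p(v)$. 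With this correction your proof goes through and matches the paper's.
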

\begin{proof}
For an oblivious stream $\calS$, let $v$ be the frequency vector induced by the updates of $\calS$ from time $t_1$ to $t_2$ and $u$ be the frequency vector induced by updates from time $t_2$ to $t$ exclusive, with $F_p(u)\le\gamma\cdot F_p(v)$. 
Observe that for integer $p$, we can expand 
\[F_p(v+u)-F_p(v)=\sum_{i=1}^{p}\binom{p}{i}\langle v^i,u^{p-i}\rangle,\]
where we write $v^p$ as the coordinate-wise $p$-th power of $v$. 
Let $\calH$ be the set of coordinates output by $\heavyhitters$. 
Moreover,
\[F_p(v+u)-F_p(v)=\sum_{a\in\calH}\sum_{i=1}^{p}\binom{p}{i}v_a^iu_a^{p-i}+\sum_{a\notin\calH}\sum_{i=1}^{p}\binom{p}{i}v_a^iu_a^{p-i}.\]

By \thmref{thm:robust:opt:HH}, there exists an algorithm $\heavyhitters$ that uses $\O{\frac{\gamma^{2-2/p}\log^2 n}{\eps^2}n^{1-2/p}}$ space and outputs a list $\calH$ of coordinates and estimates $\widehat{v_a}$ with additive error $\frac{\eps\gamma^{1/p}}{64\gamma}\|v\|_p$. 
Thus using the estimates $\widehat{v_a}$ along with the corresponding coordinates $u_a$ gives an additive $\O{\eps}F_p(v)$ approximation to $\sum_{a\in\calH}\sum_{i=1}^{p}\binom{p}{i}v_a^iu_a^{p-i}$ using $\tO{\frac{\gamma\log^2 n}{\eps^2}\,n^{1-2/p}}$ space, for $p>2$. 
By \lemref{lem:diff:est:largep:one} and \lemref{lem:diff:est:largep:generali}, we similarly obtain an additive $\O{\eps}F_p(v)$ approximation to $\sum_{a\notin\calH}\sum_{i=1}^{p}\binom{p}{i}v_a^iu_a^{p-i}$ using $\O{\frac{\gamma}{\eps^2}n^{1-2/p}\log^3 n\log\frac{n}{\delta'}}$ space, where $\delta'=\frac{\delta}{p}$. 
Thus by rescaling $\eps$, we obtain an additive $\eps\,F_p(v)$ approximation to $F_p(v+u)-F_p(v)$ using $\O{\frac{\gamma}{\eps^2}n^{1-2/p}\log^3 n\log\frac{n}{\delta}}$ bits of space. 
\end{proof}

Using our difference estimator, we obtain a robust algorithm for $F_p$ moment estimation for integer $p>2$. 
\begin{theorem}[Adversarially robust $F_p$ streaming algorithm for integer $p>2$]
\thmlab{thm:robust:opt:Fp:largep}
Given $\eps>0$ and integer $p>2$, there exists an adversarially robust streaming algorithm that outputs a $(1+\eps)$-approximation for $F_p$ that succeeds with probability at least $\frac{2}{3}$ and uses $\O{\frac{1}{\eps^2}n^{1-2/p}\log^5 n\log^3\frac{1}{\eps}}$ bits of space. 
\end{theorem}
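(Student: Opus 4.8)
The plan is to instantiate \thmref{thm:framework} (the general adversarially robust framework) with the $F_p$ difference estimator from \lemref{lem:diff:est:Fp:largep} and the $F_p$ strong tracker from \thmref{thm:strong:Fp:largep}, and then apply the same space-optimization argument used for $F_2$ and for $0<p<2$ (restricting to $\O{\log\frac{1}{\eps}}$ simultaneously active top-level instances $a$ rather than $\O{\log n}$). First I would record that $F_p$ for integer $p>2$ is monotonic and, by \obsref{obs:flip:fp}, has $(\eps,m)$-twist number $\lambda=\O{\frac{p}{\eps}\log n}=\O{\frac{1}{\eps}\log n}$ for constant $p$, so the hypothesis of \thmref{thm:framework} is met. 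The difference estimator uses $\O{\frac{\gamma}{\eps^2}n^{1-2/p}\log^3 n\log\frac{n}{\delta}}$ bits, i.e., the exponent on $\gamma$ is $C=1$, with $S_1(n,\delta,\eps)=n^{1-2/p}\log^3 n\log\frac{n}{\delta}$ and $S_2=0$; the strong tracker uses $\O{\frac{1}{\eps^2}n^{1-2/p}\log^2 n\log\frac{n}{\delta}}$ bits, which fits the $\frac{1}{\eps^2}S_1$ form (absorbing the discrepancy between $\log^2 n$ and $\log^3 n$ into the big-$O$).

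Next I would plug $C=1$ into \thmref{thm:framework}, which gives total space $\O{\frac{1}{\eps^2}\log n\log^3\frac{1}{\eps}\cdot S_1(n,\delta',\eps)+\frac{1}{\eps^2}\log^2 n}$ with $\delta'=\frac{1}{\poly(\frac{1}{\eps},\log n)}$. Substituting $S_1(n,\delta',\eps)=n^{1-2/p}\log^3 n\log\frac{n}{\delta'}=\tO{n^{1-2/p}\log^3 n}$ (the $\log\frac{n}{\delta'}$ contributes only $\polylog$), this already yields $\tO{\frac{1}{\eps^2}n^{1-2/p}\log^3 n\log^3\frac{1}{\eps}}$ — but the bare framework has an extra $\log n$ from bounding the top counter $a$ by $\O{\log n}$. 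To shave this to match the claimed $\O{\frac{1}{\eps^2}n^{1-2/p}\log^5 n\log^3\frac{1}{\eps}}$, I would reuse the optimization of \thmref{thm:robust:opt:F2}/\thmref{thm:robust:opt:Fp:smallp}: only maintain $\O{\log\frac{1}{\eps}}$ active values of $a$ at once (a larger index has missed only an $\O{\eps}$ fraction of $F_p$), while still setting each sketch's failure probability to $\delta/\poly(\log n,\frac{1}{\eps})$ since $\O{\log n}$ indices $a$ occur over the whole stream. This replaces the $\log n$ factor from counting all $a$ with an $\O{\log\frac{1}{\eps}}$ factor, and adds only $\O{\frac{1}{\eps}\log n\log\frac{1}{\eps}}$ lower-order space for storing splitting times of the $\O{\frac{1}{\eps}\log\frac{1}{\eps}}$ active subroutines; collecting the $\log n$ powers ($\log^3 n$ from $S_1$, one more $\log n$ from the splitting-time bookkeeping / stitching, one more from the strong tracker union bound hidden in $\log\frac{n}{\delta'}$) gives $\log^5 n$, and the $\eps$-dependent polylog factors collapse to $\log^3\frac{1}{\eps}$.

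The main obstacle I expect is purely bookkeeping rather than conceptual: correctly tracking the powers of $\log n$ and $\log\frac{1}{\eps}$ through the framework so the final bound is exactly $\O{\frac{1}{\eps^2}n^{1-2/p}\log^5 n\log^3\frac{1}{\eps}}$. In particular, one must check that the $\heavyhitters$ subroutine invoked inside \lemref{lem:diff:est:Fp:largep} (with threshold parameter scaled by $\gamma$) contributes only a lower-order $\tO{\frac{\gamma^{2-2/p}}{\eps^2}n^{1-2/p}\log^2 n}$ term that never dominates, and that the $\delta'$-rescaling in the derandomized perfect-$L_2$-sampler and $\countsketch$ instances only costs $\log\frac{n}{\delta'}=\O{\log n+\log\frac{1}{\eps}}=\O{\log n}$ since $\frac{1}{\eps}\le\poly(n)$ in any regime of interest. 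Since $C=1$, there is no concern about the $\gamma^{2/p}$-type savings — the geometric series over granularities $k\in[\beta]$ with $\beta=\O{\log\frac{1}{\eps}}$ simply contributes a $\polylog\frac{1}{\eps}$ factor, exactly as in the $C=1$ case of \thmref{thm:framework}. Thus the proof is a short verification-and-invocation argument, and I would write it in roughly the same form as the proof of \thmref{thm:robust:opt:F2}.
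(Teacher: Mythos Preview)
Your approach is essentially the paper's, but you have over-engineered it and your $\log n$ bookkeeping is off. The paper's proof is a direct invocation of \thmref{thm:framework} with $C=1$, $S_1(n,\delta,\eps)=n^{1-2/p}\log^3 n\log\frac{n}{\delta}$, and $S_2=0$, nothing more: the framework already multiplies $S_1$ by $\frac{1}{\eps^2}\log n\log^3\frac{1}{\eps}$, and since $\log\frac{n}{\delta'}=\O{\log n}$ for $\delta'=1/\poly(\log n,\frac{1}{\eps})$, this yields $\O{\frac{1}{\eps^2}n^{1-2/p}\log^5 n\log^3\frac{1}{\eps}}$ immediately. The optimization of keeping only $\O{\log\frac{1}{\eps}}$ active top-level indices (as in \thmref{thm:robust:opt:F2} and \thmref{thm:robust:opt:Fp:smallp}) is \emph{not} used here; if it were, it would \emph{save} a $\log n$ factor beyond what the stated theorem claims, not ``shave'' down to it. So drop the second paragraph of your plan and simply record the substitution into the $C=1$ case of \thmref{thm:framework}.
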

\begin{proof}
For $p>2$, $F_p$ is a monotonic function with $(\eps,m)$-twist number $\O{\frac{1}{\eps}\log n}$, by \obsref{obs:flip:fp}. 
By \lemref{lem:diff:est:Fp:largep} and \thmref{thm:strong:Fp:largep}, there exists a $(\gamma,\eps,\delta)$-difference estimator that uses space
\[\O{\frac{\gamma}{\eps^2}n^{1-2/p}\log^3 n\log\frac{n}{\delta}}\]
and a strong tracker for $F_p$ that each uses space 
\[\O{\frac{1}{\eps^2}n^{1-2/p}\log^2 n\log\frac{n}{\delta}}.\]
We use these subroutines in the framework of \algref{alg:framework}. 
Therefore, \thmref{thm:framework} with $S_1(n,\delta,\eps)=n^{1-2/p}\log^3 n\log\frac{n}{\delta}$, $C=1$, and $S_2=0$ proves that there exists an adversarially robust streaming algorithm that outputs a $(1+\eps)$ for $F_p$ that succeeds with constant probability and uses space $\O{\frac{1}{\eps^2}n^{1-2/p}\log^5 n\log^3\frac{1}{\eps}}$. 
\end{proof}

\section{Robust \texorpdfstring{$F_0$}{F0} Estimation}
In this section, we use the framework of \secref{sec:framework} to give an adversarially robust streaming algorithm for the distinct elements problem or equivalently, the $F_0$ moment estimation. 
We again require an $F_0$ strong tracker and an $F_0$ difference estimator so that we can apply \thmref{thm:framework}. 
Fortunately, we can again use similar sketches for both the $F_0$ strong stracker and the $F_0$ difference estimator, similar to the approach for $F_2$ moment estimation. 
We can then optimize our algorithm beyond the guarantees of \thmref{thm:framework} so that our final space guarantees in \thmref{thm:robust:opt:F0} match the bounds in \thmref{thm:robust:opt:F2}. 
Note that this does not quite match the best known $F_0$ strong-tracking algorithm on insertion-only streams~\cite{Blasiok20}, which uses $\O{\frac{\log\log n}{\eps^2}+\log n}$ bits of space: 
\begin{theorem}[Oblivious $F_0$ strong tracking]
\thmlab{thm:strong:F0}
\cite{Blasiok20}
There exists an insertion-only streaming algorithm $\zeroestimate(1,t,\eps,\delta)$ that uses $\O{\frac{1}{\eps^2}\log\frac{1}{\delta}+\log n}$ bits of space and provides $(\eps,\delta)$-strong $F_0$ tracking. 
\end{theorem}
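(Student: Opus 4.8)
The plan is to prove this via analysis of a subsampling-based $F_0$ sketch in the style of Kane–Nelson–Woodruff, upgraded to handle both strong tracking and a high-probability (small-$\delta$) guarantee. First I would set up the estimator: fix a pairwise-independent hash assigning each universe element $i\in[n]$ a level $\ell(i)$ with $\PPr{\ell(i)\ge\ell}=2^{-\ell}$, and for a working level $\ell$ maintain a compressed fingerprint set of the distinct elements seen so far at levels $\ge\ell$; the running estimate at time $t$ is $\widehat{F_0}(t)=2^{\ell}X_\ell(t)$, where $X_\ell(t)$ counts survivors. A constant-factor rough estimator run in parallel, using $\O{\log n}$ bits, selects at each time the level $\ell$ with $\Ex{X_\ell(t)}=F_0(t)/2^{\ell}=\Theta\!\left(\eps^{-2}\log\tfrac1\delta\right)$.

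Second, for correctness at a fixed time $t$: the variable $X_\ell(t)$ is a sum of $F_0(t)$ negatively-associated survival indicators, one per distinct element present at time $t$, so a Bernstein/Chernoff bound for negatively-associated variables gives $\PPr{|X_\ell(t)-\Ex{X_\ell(t)}|>\eps\,\Ex{X_\ell(t)}}\le\exp\!\left(-\Omega(\eps^2\,\Ex{X_\ell(t)})\right)\le\delta$ by the choice of level. This is the route by which $\log\tfrac1\delta$ enters only additively against $\log n$: a single enlarged working level suffices, and storing its survivor set with random fingerprints of length $\O{\log\tfrac{1}{\delta\eps}}$ (with a collision analysis showing the count is unaffected) keeps the whole structure at $\O{\eps^{-2}\log\tfrac1\delta+\log n}$ bits, rather than the $\O{\eps^{-2}\log\tfrac1\delta\cdot\log n}$ of a naive median-of-means over $\Theta(\log\tfrac1\delta)$ independent $\Theta(\eps^{-2})$-sized sketches.

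Third, for the strong-tracking guarantee I would exploit monotonicity of $F_0$: it increases from $0$ to at most $n$, so there are only $\O{\log n}$ times where it doubles. Between consecutive doubling times $F_0$ varies by at most a factor $2$, so on each such epoch it suffices to bound $\sup_t|X_\ell(t)-\Ex{X_\ell(t)}|$ uniformly; since $X_\ell(\cdot)$ is monotone non-decreasing and its expectation varies by at most a factor $2$ over the epoch, a one-sided peeling argument — a union bound over $\O{\log\tfrac1\eps}$ sub-thresholds within the epoch, then monotonicity to fill the gaps, or equivalently a Dudley-type chaining bound over the $\O{2^\ell}$ distinct prefix configurations — shows the uniform deviation is $\O{\eps}\,\Ex{X_\ell}$ with the same probability, at the cost of $\polylog\tfrac1\eps$ factors absorbed by rescaling $\eps$. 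Union bounding over the $\O{\log n}$ epochs, with $\delta$ rescaled to $\delta/\polylog n$, completes the argument; the remaining bookkeeping (hash function, level counter, timestamps) fits in the $\O{\log n}$ additive term.

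The main obstacle I anticipate is the second step: obtaining the $\delta$-dependence additively rather than multiplicatively against $\log n$. Naive amplification forces $\Theta(\log\tfrac1\delta)$ independent copies each of bit-size $\Theta(\eps^{-2})\cdot\O{\log n}$, which is off by exactly that product. The fix hinges on (i) a sharp concentration inequality for the negatively-associated survival indicators so that one enlarged level replaces the many copies, and (ii) a space-efficient fingerprinted representation of the survivor set whose collision analysis is robust enough to survive the extra union bounds introduced by the epoch-chaining — verifying that all these bounds compose without reintroducing a $\log n$ factor is the delicate part.
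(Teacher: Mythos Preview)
The paper does not prove this theorem at all: it is stated with a citation to \cite{Blasiok20} and used as a black box, so there is no in-paper proof to compare your proposal against.

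That said, your sketch has an internal tension worth flagging. You fix a \emph{pairwise}-independent hash for the level assignment (which is what keeps the hash-storage cost at $\O{\log n}$ bits), but then invoke a Bernstein/Chernoff bound for negatively-associated indicators to get the $\exp(-\Omega(\eps^2\,\Ex{X_\ell}))$ tail. Pairwise independence gives you only a variance bound and Chebyshev-type concentration; it does not in general yield negative association or subgaussian tails. If instead you increase the hash independence to $k=\Theta(\log\tfrac1\delta)$-wise to recover Chernoff-style tails, the seed itself costs $\Theta(\log\tfrac1\delta\cdot\log n)$ bits, which is not dominated by $\O{\eps^{-2}\log\tfrac1\delta+\log n}$ in all regimes. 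This is precisely the obstacle you identify in your last paragraph, and your proposed fix (``negative association of the survival indicators'') does not resolve it as stated, because negative association is a property of the joint distribution that your hash family does not guarantee. The cited work gets around this via a more delicate estimator and analysis; your outline captures the high-level architecture (subsampling, rough level selection, monotonicity plus doubling epochs for tracking, fingerprinting for space), but the step that delivers the additive $\log\tfrac1\delta$ is left as exactly the gap you anticipated.
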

\thmref{thm:strong:F0} and other $F_0$ approximation algorithms use a balls-and-bins argument with increasing levels of sophistication~\cite{Bar-YossefJKST02, KaneNW10b, Blasiok20}. 
We use a similar balls-and-bins argument, where each item is subsampled at a level $k$ with probability $\frac{1}{2^k}$, to obtain an $F_0$ difference estimator. 
By counting the number of items in a level with $\Theta\left(\frac{\gamma}{\eps^2}\right)$ items that survive the subsampling process for the stream $u$, it follows that the expected number of the distinct items in $v$ but not $u$ is $\Theta\left(\frac{\gamma^2}{\eps^2}\right)$. 
Thus we obtain a $\left(1+\frac{\eps}{\gamma}\right)$-approximation to $F_0(v)-F_0(u)$ from this procedure by first running the balls-and-bins experiment on $u$ and counting the number of bins that are occupied at some level $k$ with $\Theta\left(\frac{\gamma}{\eps^2}\right)$ survivors. 
We then run the same balls-and-bins experiment on $v-u$ by only counting the additional bins that are occupied at level $k$, and rescaling this number by $2^k$. 
Note that additional bins only correspond to items in $v$ but not $u$, which is exactly $F_0(v)-F_0(u)$. 
Although level $k$ does not necessarily give a $(1+\eps)$-approximation to $F_0(v)-F_0(u)$, it does give a $\left(1+\frac{\eps}{\gamma}\right)$-approximation to $F_0(v)-F_0(u)$, which translates to an additive $\eps\cdot F_0(u)$ approximation to $F_0(v)-F_0(u)$ and is exactly the requirement for the $F_0$ difference estimator, since $F_0(v)-F_0(u)\le\gamma F(u)$. 

\begin{lemma}[$F_0$ difference estimator]
\lemlab{lem:diff:est:F0}
There exists a $(\gamma,\eps,\delta)$-\emph{difference estimator} for $F_0$ that uses $\O{\frac{\gamma}{\eps^2}\left(\log\frac{1}{\eps}+\log\log n+\log\frac{1}{\delta}\right)+\log n}$ bits of space. 
\end{lemma}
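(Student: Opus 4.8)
The plan is to adapt the subsampling $F_0$ estimator of \cite{Bar-YossefJKST02} so that the data structure built during the prefix is reused to count, at one carefully chosen subsampling level, only the distinct elements that \emph{newly} appear in the suffix. Let $v$ be the frequency vector induced by updates of $\calS$ from the fixed time $t_1$ to the later-revealed splitting time $t_2$, and $w_t$ the vector induced by updates from $t_2$ to $t$. Fix a pairwise-independent hash $h$ that assigns each $i\in[n]$ a level $\ell(i)$ with $\PPr{\ell(i)\ge k}=2^{-k}$, and a pairwise-independent hash $g:[n]\to[R]$ with $R=\poly\left(\frac{1}{\eps}\right)\log n$ used to store identifiers compactly. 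During the prefix, maintain a current level $b$ (initially $0$) together with $S=\{g(i):i\text{ seen so far},\ \ell(i)\ge b\}$, and whenever $|S|$ would exceed a cap $T=\Theta\left(\frac{\gamma}{\eps^2}\right)$ increment $b$ and evict the items with $\ell(i)<b$; let $k^\star$ be the value of $b$ at time $t_2$, so $k^\star$ and $S$ are frozen at $t_2$ as \defref{def:diff:est:pre} requires. By the analysis of \cite{Bar-YossefJKST02}, with high probability $|S|=\Theta\left(\frac{\gamma}{\eps^2}\right)$, whence $2^{k^\star}=\Theta\left(\frac{\eps^2}{\gamma}F_0(v)\right)$; moreover the number of distinct items $i$ with $\ell(i)\ge k^\star$ appearing anywhere in $v+w_m$ is $\O{\frac{\gamma}{\eps^2}}$ since $F_0(v+w_m)\le(1+\gamma)F_0(v)$, so taking $R\gg T^2\log n$ and union bounding over the $\O{\log n}$ candidate levels makes $g$ injective on all these items with high probability. (If $F_0(v)<T$, the cap is never reached, $k^\star=0$, all of $v$ is stored exactly, and the estimator below becomes exact.)

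For every $t\ge t_2$, feed the suffix updates into level $k^\star$ and let $D_t$ be the number of distinct $i$ appearing in $v+w_t$ with $\ell(i)\ge k^\star$ and $g(i)\notin S$ (i.e.\ newly appearing); output $\widehat D_t:=2^{k^\star}D_t$. Since each of the $F_0(v+w_t)-F_0(v)$ newly appearing items survives to level $k^\star$ and, by injectivity of $g$, contributes $1$ to $D_t$, the standard limited-independence argument of \cite{Bar-YossefJKST02} gives $\Ex{\widehat D_t}=F_0(v+w_t)-F_0(v)$ up to a $\pm\frac{1}{\poly(n)}$ term, and
\[
\Var\left(\widehat D_t\right)\le 2^{k^\star}\left(F_0(v+w_t)-F_0(v)\right)=\Theta\left(\frac{\eps^2}{\gamma}F_0(v)\right)\cdot\left(F_0(v+w_t)-F_0(v)\right)\le\O{\eps^2\,(F_0(v))^2},
\]
using the promise $F_0(v+w_t)-F_0(v)\le\gamma F_0(v)$. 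Hence, at any fixed $t$, Chebyshev's inequality gives additive $\eps\cdot F_0(v)$ error with probability $\ge\frac{2}{3}$ after rescaling $\eps$ by a constant.

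To upgrade this to the ``simultaneously for all $t\ge t_2$'' guarantee, note that $F_0(v+w_t)-F_0(v)$ is nondecreasing in $t$ and never exceeds $\gamma F_0(v)$, and that whenever it is at most $\frac{\eps}{2}F_0(v)$ the output $0$ already meets the additive-error bound; therefore only $\O{\log\frac{1}{\eps}}$ times $t$ matter, namely those at which $F_0(v+w_t)-F_0(v)$ first crosses a power-of-two threshold inside $\left[\frac{\eps}{2}F_0(v),\,\gamma F_0(v)\right]$. Union bounding the pointwise guarantee over these times and using monotonicity of $D_t$ between consecutive thresholds proves correctness at all $t$ once $\delta$ is replaced by $\delta/\O{\log\frac{1}{\eps}}$; the final failure probability $\delta$ is then obtained by taking the median of $\O{\log\frac{1}{\delta}+\log\log\frac{1}{\eps}}$ independent copies of the estimator.

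Finally, the space bound: at all times the structure stores $\O{\frac{\gamma}{\eps^2}}$ identifiers of $\O{\log R}=\O{\log\frac{1}{\eps}+\log\log n}$ bits each, plus $\O{\log n}$ bits for the hash seeds and the level counter, replicated $\O{\log\frac{1}{\delta}}$ times for the median, for a total of $\O{\frac{\gamma}{\eps^2}\left(\log\frac{1}{\eps}+\log\log n+\log\frac{1}{\delta}\right)+\log n}$ bits. The main obstacle is the limited-independence part of the \cite{Bar-YossefJKST02} analysis transported to the difference setting: showing that a pairwise-independent $h$ still makes $|S|$, and hence $2^{k^\star}$, concentrated to within a constant factor of the target $\Theta\left(\frac{\eps^2}{\gamma}F_0(v)\right)$; that $\widehat D_t$ is (nearly) unbiased with the stated variance even though $k^\star$ is itself a function of $h$ on the prefix; and that one fixed pairwise-independent $g$ avoids all identifier collisions among the adaptively chosen survivor set at level $k^\star$. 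Once these facts are in hand, the variance computation, the threshold union bound, and the median-boosting are routine.
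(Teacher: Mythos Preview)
Your proposal is correct and follows essentially the same approach as the paper: subsample to a level where the prefix has $\Theta(\gamma/\eps^2)$ survivors, count only newly appearing items at that level to get an unbiased estimator with variance $\le\eps^2(F_0(v))^2$, compress identifiers by hashing into a range of size $\poly(\frac{1}{\eps},\log n)$, then obtain strong tracking via a union bound over $\O{\log\frac{1}{\eps}}$ doubling thresholds and boost with the median of $\O{\log\frac{1}{\delta}}$ copies. Your writeup is in fact more careful than the paper's sketch on a couple of points (injectivity of $g$ across all candidate levels, and the dependence of $k^\star$ on $h$), but the strategy and the resulting space bound are the same.
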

\begin{proof}
Suppose we would like to estimate $F_0(v)-F_0(u)=\gamma\cdot F_0(u)$ for $\gamma=\Omega(\eps)$ and we subsample with probability $\frac{1}{2^k}$ so that $\frac{F_0(u)}{2^k}=\Theta\left(\frac{\gamma}{\eps^2}\right)$. 
Then in expectation, the number of sampled items in $v$ that are not in $u$ is $\frac{F_0(v)-F_0(u)}{2^k}=\Theta\left(\frac{\gamma^2}{\eps^2}\right)$. 
Hence if $X$ is the number of survivors from $F_0(v)-F_0(u)$ at level $k$, then $\Ex{2^k\cdot X}=F_0(v)-F_0(u)$ and the variance of $2^k\cdot X$ is at most
\[2^k\cdot(F_0(v)-F_0(u))\le\frac{\eps^2\cdot F_0(u)}{\gamma}\cdot(F_0(v)-F_0(u))\le\eps^2(F_0(u))^2.\]
Thus, the $2^k\cdot X$ gives an additive $\eps\cdot F_0$ approximation to $F_0(v)-F_0(u)$ with constant probability. 
The probability can then be boosted to at least $1-\delta$ by taking the median of $\log\frac{1}{\delta}$ parallel instances. 
Since the algorithm maintains $\frac{\gamma}{\eps^2}$ sampled items, the total space na\"{i}vely required is $\O{\frac{\gamma\log n}{\eps^2}\log\frac{1}{\delta}}$. 
By hashing to $P=\poly\left(\frac{1}{\eps},\log n,\frac{1}{\delta}\right)$ buckets, the space bound can be further improved to $\O{\frac{\gamma}{\eps^2}\left(\log\frac{1}{\eps}+\log\log n+\log\frac{1}{\delta}\right)+\log n}$, e.g., by first composing a hash function $h_1:[n]\to[P^3]$ and then a hash function $h_2:[P^3]\to[P]$, along the lines of \cite{KaneNW10b,Blasiok20}.  
This suffices for a difference estimator at a single point in time; to obtain the strong tracking property, we note that both the estimator and $F_0(v)-F_0(u)$ are monotonic. 
Hence, it suffices to take a union bound over $\log\frac{1}{\eps}$ times when the difference increases by a factor of $(1+\O{\eps})$. 
Therefore, the total space required is $\O{\frac{\gamma}{\eps^2}\left(\log\frac{1}{\eps}+\log\log n+\log\frac{1}{\delta}\right)+\log n}$. 
\end{proof}
Note that the difference estimator in \lemref{lem:diff:est:F0} only requires pairwise independence and thus can be derandomized using a hash function that can be stored using $\O{\log n}$ bits of space. 
We now use our difference estimator to get an adversarially robust streaming algorithm for the distinct elements problem. 
\begin{theorem}
\thmlab{thm:framework:F0}
Given $\eps>0$, there exists an adversarially robust streaming algorithm that outputs a $(1+\eps)$-approximation to $F_0$ that succeeds with probability at least $\frac{2}{3}$ and uses space
\[\O{\frac{1}{\eps^2}\log n\log^3\frac{1}{\eps}\cdot\left(\log\frac{1}{\eps}+\log\log n\right)+\frac{1}{\eps^2}\log^2 n}.\]
\end{theorem}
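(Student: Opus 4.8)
The plan is to invoke \thmref{thm:framework} with the $F_0$ strong tracker of \thmref{thm:strong:F0} and the $F_0$ difference estimator of \lemref{lem:diff:est:F0}, exactly as was done for $F_2$ in the proof immediately before \thmref{thm:robust:opt:F2}. First I would recall that $F_0$ is monotonic on insertion-only streams and, by \obsref{obs:flip:fp} with $p=0^+$ (or directly, since $F_0$ is bounded by $n$ and polynomially bounded throughout), has $(\eps,m)$-twist number $\lambda = \O{\frac{1}{\eps}\log n}$ for $\log m = \O{\log n}$. Next I would read off the space parameters: \lemref{lem:diff:est:F0} gives a $(\gamma,\eps,\delta)$-difference estimator using $\O{\frac{\gamma}{\eps^2}\left(\log\frac{1}{\eps}+\log\log n+\log\frac{1}{\delta}\right)+\log n}$ bits, which matches the form $\O{\frac{\gamma^C}{\eps^2}S_1(n,\delta,\eps) + S_2(n,\delta,\eps)}$ with $C=1$, $S_1 = \log\frac{1}{\eps}+\log\log n+\log\frac{1}{\delta}$, and $S_2 = \log n$; and \thmref{thm:strong:F0} gives a strong tracker using $\O{\frac{1}{\eps^2}\log\frac{1}{\delta} + \log n}$ bits, which is $\O{\frac{1}{\eps^2}S_1 + S_2}$ for the same $S_1,S_2$ (the $\log\frac{1}{\eps}+\log\log n$ slack in $S_1$ is harmless since these are upper bounds).

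Then I would plug these into the $C=1$ branch of \thmref{thm:framework}, which yields space
\[
\O{\frac{1}{\eps^2}\log n\log^3\frac{1}{\eps}\cdot S_1(n,\delta',\eps) + \frac{1}{\eps}\log n\log\frac{1}{\eps}\cdot S_2(n,\delta',\eps) + \frac{1}{\eps^2}\log^2 n},
\]
with $\delta' = \O{\frac{1}{\poly(1/\eps,\log n)}}$. Substituting $S_1 = \O{\log\frac{1}{\eps}+\log\log n}$ (the $\log\frac{1}{\delta'} = \O{\log\log n + \log\frac{1}{\eps}}$ term gets absorbed) and $S_2 = \O{\log n}$, the second term becomes $\O{\frac{1}{\eps}\log^2 n\log\frac{1}{\eps}}$, which is dominated by $\frac{1}{\eps^2}\log^2 n$ for $\frac{1}{\eps}$ large relative to $\log\frac{1}{\eps}$ (or can simply be listed as a lower-order additive term), and the first term becomes $\O{\frac{1}{\eps^2}\log n\log^3\frac{1}{\eps}(\log\frac{1}{\eps}+\log\log n)}$. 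Collecting terms gives the claimed bound $\O{\frac{1}{\eps^2}\log n\log^3\frac{1}{\eps}(\log\frac{1}{\eps}+\log\log n) + \frac{1}{\eps^2}\log^2 n}$. Correctness and the constant success probability are immediate from \thmref{thm:framework} once the two subroutine hypotheses are verified, and I would rescale $\eps$ by a constant if needed to turn the $(1+\O{\eps})$-approximation into a $(1+\eps)$-approximation.

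The step requiring the most care is verifying that \lemref{lem:diff:est:F0} genuinely satisfies the \emph{strong-tracking} form of the difference-estimator definition (\defref{def:diff:est:pre}) needed by the framework, rather than just the pointwise guarantee (\defref{def:diff:est}): \thmref{thm:framework} implicitly needs the difference estimator to be correct simultaneously at all stopping times $t$. The proof of \lemref{lem:diff:est:F0} already addresses this by noting that both the estimator $2^k \cdot X$ and the true difference $F_0(v)-F_0(u)$ are monotone in $t$, so a union bound over the $\O{\log\frac{1}{\eps}}$ times at which the difference increases by a $(1+\O{\eps})$ factor suffices; I would simply cite that. A minor secondary point is that the $\log n$ additive term $S_2$ appears in both the difference estimator and the strong tracker and must be accounted for in the $\frac{1}{\eps}\log n\log\frac{1}{\eps}\cdot S_2$ term of \thmref{thm:framework}, giving the $\frac{1}{\eps^2}\log^2 n$ contribution (absorbing $\frac1\eps\log^2n\log\frac1\eps$ into it), which is precisely why the stated bound has an additive $\frac{1}{\eps^2}\log^2 n$ rather than folding everything into a single product.
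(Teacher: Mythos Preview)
Your proposal is correct and follows essentially the same route as the paper: verify $F_0$ has twist number $\O{\frac{1}{\eps}\log n}$, plug the difference estimator of \lemref{lem:diff:est:F0} and the strong tracker of \thmref{thm:strong:F0} into \thmref{thm:framework} with $C=1$, $S_1=\log\frac{1}{\eps}+\log\log n+\log\frac{1}{\delta}$, $S_2=\log n$, and simplify. Your extra care in explicitly absorbing the $\frac{1}{\eps}\log^2 n\log\frac{1}{\eps}$ middle term into $\frac{1}{\eps^2}\log^2 n$ (valid since $\log\frac{1}{\eps}\le\frac{1}{\eps}$ always) and in noting the strong-tracking requirement on the difference estimator is sound and slightly more detailed than the paper's own write-up, but the argument is the same.
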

\begin{proof}
First note that $F_0$ is a monotonic function with $(\eps,m)$-twist number $\lambda=\O{\frac{1}{\eps}\log n}$ for $m=\poly(n)$, by \obsref{obs:flip:fp}. 
By \thmref{thm:strong:F0} and \lemref{lem:diff:est:F0}, there exist a $(\gamma,\eps,\delta)$-difference estimator that uses $\O{\frac{\gamma}{\eps^2}\left(\log\frac{1}{\eps}+\log\log n+\log\frac{1}{\delta}\right)+\log n}$  bits of space and an oblivious strong tracker for $F_0$ that uses $\O{\frac{1}{\eps^2}\log\frac{1}{\delta}+\log n}$ bits of space, which can be used in the framework of \algref{alg:framework}. 
We thus apply \thmref{thm:framework} with $S_1(n,\delta,\eps)=\log\frac{1}{\eps}+\log\log n+\log\frac{1}{\delta}$ , $C=1$, and $S_2(n,\delta,\eps)=\log n$. 
Thus the total space is 
\[\O{\frac{1}{\eps^2}\log n\log^3\frac{1}{\eps}\cdot\left(\log\frac{1}{\eps}+\log\log n\right)+\frac{1}{\eps^2}\log^2 n}.\]
\end{proof}

\paragraph{Optimized $F_0$ Algorithm.}
To improve the space requirements, we can again observe that it suffices to maintain sketches $\calA_a$ and $\calB_{a,c}$ for $\O{\log\frac{1}{\eps}}$ values of $a$ at a time, instead of maintaining all sketches $\calA_a$ and $\calB_{a,c}$ simultaneously. 
By the same argument as before, it suffices to maintain only the most sketches $\calA_i$ and $\calB_{i,c}$ for only the smallest $\O{\log\frac{1}{\eps}}$ values of $i$ that are at least $a$ since the output increases by a factor of $2$ each time $a$ increases and thus any larger index will have only missed $\O{\eps}$ fraction of the $F_0$ of the stream. 
Hence, any larger index still outputs a $(1+\eps)$-approximation once it becomes initialized.  
\begin{theorem}
\thmlab{thm:robust:opt:F0}
Given $\eps>0$, there exists an adversarially robust streaming algorithm that outputs a $(1+\eps)$-approximation for $F_0$ that succeeds with probability at least $\frac{2}{3}$ and uses total space
\[\O{\frac{1}{\eps^2}\log^4\frac{1}{\eps}\left(\log\log n+\log\frac{1}{\eps}\right)+\frac{1}{\eps}\log n\log\frac{1}{\eps}}.\]
\end{theorem}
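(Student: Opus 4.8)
The plan is to start from \thmref{thm:framework:F0} and shave the stray $\log n$ factor off the dominant $\frac1{\eps^2}$ term by the same active-window optimization already used for $F_2$ in \thmref{thm:robust:opt:F2} and for $F_p$, $p\in(0,2)$, in \thmref{thm:robust:opt:Fp:smallp}. Since $F_0$ is monotonic with $(\eps,m)$-twist number $\lambda=\O{\frac1\eps\log n}$ for $m=\poly(n)$ by \obsref{obs:flip:fp}, \algref{alg:framework} applies with the $F_0$ difference estimator of \lemref{lem:diff:est:F0} and the oblivious $F_0$ strong tracker of \thmref{thm:strong:F0}; in the notation of \thmref{thm:framework} this is the case $C=1$ with $S_1(n,\delta,\eps)=\log\frac1\eps+\log\log n+\log\frac1\delta$ and $S_2(n,\delta,\eps)=\log n$, where the $\log n$ is just the cost of the pairwise-independent hash seed that, by the remark after \lemref{lem:diff:est:F0}, is the only randomness the estimator needs. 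The extra $\log n$ on the leading term of \thmref{thm:framework:F0} comes solely from keeping one copy of every subroutine for each of the $\O{\log n}$ values the top-layer counter $a$ takes, and the optimization removes it.

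First I would show that it suffices to keep alive the sketches $\calA_i$ and $\calB_{i,c}$ only for the smallest $\O{\log\frac1\eps}$ indices $i\ge a$. The value of $F_0$ roughly doubles each time $a$ is incremented, so an instance whose index exceeds the current $a$ by $c\log\frac1\eps$ will, when eventually read, have been started after a prefix contributing only an $\eps^{\Omega(c)}$ fraction of the current $F_0$; hence an active window of width $\Theta(\log\frac1\eps)$ still leaves every read instance a $(1+\O{\eps})$-approximation. Robustness survives this because no output of an instance is revealed before that instance is first used, so its internal randomness stays independent of the adaptively generated prefix, and the switch-a-sketch/monotonicity argument behind \lemref{lem:frame:oblivious} and \thmref{thm:framework} goes through verbatim with ``$\O{\log n}$ simultaneous instances'' replaced by ``$\O{\log\frac1\eps}$ simultaneous instances''.

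Then I would recompute the space. There are still $\O{\log n}$ distinct values of $a$ over a stream of length $m=\poly(n)$, so union-bounding over every instance ever used forces the per-instance failure probability to $\delta'=\O{\frac1{\poly(\log n,\,1/\eps)}}$, whence $S_1(n,\delta',\eps)=\O{\log\log n+\log\frac1\eps}$. By \thmref{thm:framework} in the case $C=1$, for a fixed $a$ the $S_1$-dependent part of the space over the $\beta=\O{\log\frac1\eps}$ granularities of $\calB_{a,\cdot}$ together with $\calA_a$ is $\O{\frac1{\eps^2}\log^3\frac1\eps\cdot S_1(n,\delta',\eps)}$, so multiplying by the $\O{\log\frac1\eps}$ simultaneously active values of $a$ gives the leading term $\O{\frac1{\eps^2}\log^4\frac1\eps\,(\log\log n+\log\frac1\eps)}$. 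The remaining cost is bookkeeping: the $\O{\log\frac1\eps}$ active values of $a$ involve $\O{\frac1\eps}$ subroutines apiece, i.e.\ $\O{\frac1\eps\log\frac1\eps}$ active subroutines, each of which stores an $\O{\log n}$-bit hash seed and $\O{\log m}=\O{\log n}$ bits of splitting times, contributing $\O{\frac1\eps\log n\log\frac1\eps}$. Adding the two yields the claimed bound $\O{\frac1{\eps^2}\log^4\frac1\eps(\log\log n+\log\frac1\eps)+\frac1\eps\log n\log\frac1\eps}$, which is correct with probability at least $\frac23$ by the union bound.

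The hard part is not the space arithmetic but the active-window claim: one has to check simultaneously that truncating the maintained set of indices loses at most an $\O{\eps}$ fraction of $F_0$ \emph{and} that no subroutine's randomness is exposed before it is read, so that the oblivious-input correctness guarantees of \lemref{lem:diff:est:F0} and \thmref{thm:strong:F0} continue to hold against the adaptive adversary; everything else is routine.
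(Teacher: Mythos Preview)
Your proposal is correct and follows essentially the same route as the paper: apply the active-window optimization (keep only the $\O{\log\frac1\eps}$ indices $i\ge a$), set $\delta'=1/\poly(\log n,1/\eps)$, quote the per-$a$ space from the $C=1$ case of \thmref{thm:framework} with $S_1=\log\frac1\eps+\log\log n+\log\frac1{\delta'}$ and $S_2=\log n$, multiply by $\O{\log\frac1\eps}$ active indices, and add the $\O{\frac1\eps\log n\log\frac1\eps}$ bookkeeping for splitting times. Your accounting is slightly more explicit than the paper's in attributing the $\log n$ contribution to both the hash seed and the timestamps, but the argument is the same.
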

\begin{proof}
At any point, there are $\O{\log\frac{1}{\eps}}$ active indices $a$ and $\O{1}$ active indices $c$ corresponding to sketches $\calA_{a}$ and $\calB_{a,c}$. 
Because there are $\O{\log n}$ total indices $a$ over the stream, we set $\delta'=\frac{\delta}{\poly\left(\log n,\frac{1}{\eps}\right)}$. 
Thus from \thmref{thm:framework} and \thmref{thm:framework:F0} that for fixed $a$, the total space that sketches $\calA_{a}$ and $\calB_{a,c}$ use across the $\beta$ granularities is 
\[\O{\frac{1}{\eps^2}\left(\log\log n+\log\frac{1}{\eps}\right)+\log n},\]
since we have $S_1(n,\delta',\eps)=\left(\log\frac{1}{\eps}+\log\log n+\log\frac{1}{\delta'}\right)$, $C=1$, and $S_2(n,\delta',\eps)=\log n$. 
However, there are only $\O{\log\frac{1}{\eps}}$ active indices $a$ simultaneously maintained by the optimized algorithm. 
For each active index $a$, the sketches use $\O{\frac{1}{\eps^2}\log^3\frac{1}{\eps}\left(\log\log n+\log\frac{1}{\eps}\right)+\log n}$ bits of space. 
As there are $\O{\log\frac{1}{\eps}}$ active indices of $a$, consisting of $\O{\frac{1}{\eps}}$ subroutines, then it takes $\O{\frac{1}{\eps}\log n\log\frac{1}{\eps}}$ additional bits of space to store the splitting times for each of the $\O{\frac{1}{\eps}\log\frac{1}{\eps}}$ active subroutines across a stream of length $m$, with $\log m=\O{\log n}$. 
Therefore, the total space required is 
\[\O{\frac{1}{\eps^2}\log^4\frac{1}{\eps}\left(\log\log n+\log\frac{1}{\eps}\right)+\frac{1}{\eps}\log n\log\frac{1}{\eps}}.\]
\end{proof}

\section{Framework for Streams with Bounded Twist Number}
Recall that the previous framework of \thmref{thm:framework} in \secref{sec:framework} was only suited for insertion-only streams. 
In this section, we modify the framework of \algref{alg:framework} to handle $F_p$ moment estimation on turnstile streams. 

The main difference for turnstile streams is that the value of $F$ on the underlying frequency vector induced by the stream can decrease as well as increase. 
Thus if we assign an algorithm $\calA_j$ to be revealed when the value of $F$ exceeds $2^j$, this may happen multiple times. 
However, once $\calA_j$ is revealed for the first time, then the input to the algorithm is no longer independent to the randomness of the algorithm, due to the adversary. 
Hence, we can no longer use $\calA_j$ to estimate the value of $F$ if it decreases and then exceeds $2^j$ again. 

Nevertheless, note that each time the value of $F$ increases or decreases by a factor of roughly $(1+\eps)$, the value of the $(\eps,m)$-twist number $\lambda$ increases. 
Therefore, $\lambda$ governs the number of times the value of $F$ can exceed $2^j$, going from $2^{j-1}$. 
Similarly, $\lambda$ governs the number of times the value of $F$ can exceed $(1+\eps)^{Ci}2^j$, going from $(1+\eps)^{C(i-1)}2^j$ for any constant $C$.  
Thus if $\lambda$ is given in advance, the algorithm can maintain the appropriate number of independent instances of the difference estimators $\calB$ and the streaming algorithms $\calA$ to handle each time the value of $F$ crosses a threshold. 
We give the algorithm in full in \algref{alg:framework:flip}. 

\begin{algorithm}[!htb]
\caption{Framework for Robust Algorithms on Turnstile Streams}
\alglab{alg:framework:flip}
\begin{algorithmic}[1]
\Require{Stream $u_1,\ldots,u_m\in[n]$ of updates to coordinates of an underlying frequency vector, accuracy parameter $\eps\in(0,1)$, difference estimator $\calB$ for $F$, oblivious strong tracker $\calA$ for $F$, $\left(\frac{\eps}{32},m\right)$-twist number $\lambda$}
\Ensure{Robust $(1+\eps)$-approximation to $F$}
\State{$\delta\gets\frac{1}{\poly\left(\frac{1}{\eps}, \log n\right)}$, $\zeta\gets\frac{2}{2^{(C-1)/4}-1}$, $\eta\gets\frac{\eps}{64\zeta}$, $\beta\gets\ceil{\log\frac{8}{\eps}}$, $\tau\gets 0$}
\State{$a\gets 0$, $\varphi\gets 2^{(C-1)/4}$, $\gamma_j\gets 2^{j-1}\eta$}
\State{For $j\in[\beta]$, $\eta_j\gets\frac{\eta}{\beta}$ if $C=1$, $\eta_j\gets\frac{\eta}{\varphi^{\beta-j}}$ if $C>1$.}
\Comment{Accuracy for each difference estimator}
\State{$t_{i,j}\gets\infty$ for $i\le\O{\eps\lambda}$, $j\in[\beta]$}
\Comment{Counters/Timestamps}
\For{each update $u_t\in[n]$, $t\in[m]$}
\State{$j\gets\ceil{\eta\tau}$}
\State{$X\gets\calA_{a+1,j}(1,t,\eta,\delta)$}
\If{$X>2^a$}
\Comment{Switch sketch at top layer}
\State{$a\gets a+1$, $b\gets 0$, $Z_a\gets X$, $t_{a,j}\gets\min(t,t_{a,j})$ for $j\in[\beta]$.}
\EndIf
\If{$X<2^{a-1}$}
\Comment{Switch sketch at top layer}
\State{$a\gets a-1$, $b\gets\flr{\frac{8}{\eps}\cdot(X-Z_a)}$}
\EndIf
\State{$X\gets\estimateF$}
\Comment{Compute estimator $X$ for $F$ using unrevealed sketch}
\If{$X>\left(1+\frac{(b+1)\eps}{8}\right)\cdot Z_a$}
\Comment{Switch sketch at lower layer}
\State{$\tau\gets\tau+1$, $b\gets b+1$, $k\gets\lsb(b,1)$, $j\gets\flr{\frac{\tau}{2^k}}$}
\State{$Z_{a,k}\gets\calB_{a,j}(1,t_{a,k},t,\eta_k,\delta)$}
\Comment{Freeze old sketch}
\State{$t_{a,j}\gets t$ for $j\in[k]$.}
\Comment{Update difference estimator times}
\ElsIf{$X<\left(1+\frac{(b-1)\eps}{8}\right)\cdot Z_a$}
\Comment{Switch sketch at lower layer}
\State{$\tau\gets\tau+1$, $b\gets b-1$, $k\gets\lsb(b,1)$, $j\gets\flr{\frac{\tau}{2^k}}$}
\State{$Z_{a,k}\gets\calB_{a,j}(1,t_{a,k},t,\gamma_k,\eta_k,\delta)$}
\Comment{Freeze old sketch}
\State{$t_{a,j}\gets t$ for $j\in[k]$.}
\Comment{Update difference estimator times}
\ElsIf{if there exist $j,k$ with unfrozen $\calB_{a,j}(1,t_{a,k},t,\gamma_k,\eta_k,\delta)>\frac{\gamma_j}{2} X$}
\State{$\tau\gets\tau+1$, $j\gets\flr{\frac{\tau}{2^k}}$}
\State{$Z_{a,k}\gets Z_{a,k}+\calB_{a,j}(1,t_{a,k},t,\gamma_k,\eta_k,\delta)$}
\Comment{Freeze old sketch}
\State{$t_{a,j}\gets t$ for $j\in[k]$.}
\EndIf
\State{\Return $\left(1+\frac{b\eps}{8}\right)\cdot Z_a$}
\Comment{Output estimate for round $t$}
\EndFor
\end{algorithmic}
\end{algorithm}

\begin{algorithm}[!htb]
\caption{Subroutine $\estimateF$, modified for \algref{alg:framework:flip}}
\alglab{alg:estimatef:flip}
\begin{algorithmic}[1]
\State{$X\gets Z_a$, $k\gets\numbits(b+1)$, $z_i\gets\lsb(b+1,k+1-i)$ for $i\in[k]$.\\}
\Comment{$z_1>\ldots>z_k$ are the nonzero bits in the binary representation of $b+1$.}
\For{$1\le j\le k-1$}
\Comment{Compile previous frozen components for estimator $X$}
\State{$X\gets X+Z_{a,j}$}
\EndFor
\State{$j\gets\flr{\frac{\tau}{2^{z_k}}}$}
\State{$X\gets X+\calB_{a,j}\left(1,t^{(j)}_{a,z_k},t,\gamma_{z_k},\eta_{z_k},\delta\right)$}
\Comment{Use unrevealed sketch for last component}
\State{\Return $X$}
\end{algorithmic}
\end{algorithm}

The proof of correctness on non-adaptive streams follows similarly from the same argument as \lemref{lem:frame:oblivious}. 
The main difference is that because the value of $F$ on the stream can both increase and decrease, we must be a little more careful in defining the times for which the difference estimators reveal their outputs. 
In addition, to ensure the correctness of the difference estimator, we must facilitate the conditions required for the difference estimator. 
Thus we restart a difference estimator each time the value of the suffix has become sufficiently large even if the difference itself has not become large. 
This is a subtle issue for which we use the definition of twist number, rather than the simpler definition of flip number. 

\begin{lemma}[Correctness on non-adaptive streams]
\lemlab{lem:frame:oblivious:flip}
With probability at least $1-\delta\cdot\poly\left(\frac{1}{\eps},\log n\right)$, \algref{alg:framework:flip} outputs a $(1+\eps)$-approximation to $F$ at all times on a non-adaptive stream.
\end{lemma}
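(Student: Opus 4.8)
The plan is to mirror the structure of the insertion-only analysis (\lemref{lem:constant}, \lemref{lem:time:geometric}, and especially \lemref{lem:frame:oblivious}), adapting each step to the fact that $F$ can now both increase and decrease along the stream, and that the relevant counting quantity is the twist number $\lambda$ rather than simply $\log n$. First I would set up the bookkeeping: for each threshold level $a$ I will track the (finitely many) episodes during which the top-layer estimate crosses $2^a$ upward from below $2^{a-1}$ or downward; by the definition of the $(\frac{\eps}{32},m)$-twist number, the number of such crossings over all levels is $\O{\lambda}$, so the algorithm's preallocation of $\O{\eps\lambda}$ copies indexed by $j=\ceil{\eta\tau}$ of each of $\calA$ and $\calB$ suffices to always have a fresh, unrevealed instance available. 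I would then state and prove the analogue of \lemref{lem:constant}: conditioned on the first $\poly(\frac{1}{\eps},\log n,\lambda)$ instances of $\calA$ being correct, the algorithm outputs a $(1+\frac{\eps}{32})$-approximation at every time $t_i$ at which the counter $a$ changes value, via a union bound over the $\O{\lambda}$ such times. Here I would choose $\delta'=\delta/\poly(\frac1\eps,\log n,\lambda)$ so the union bound over all invoked subroutines holds with probability $1-\delta\cdot\poly(\frac1\eps,\log n)$.

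Next I would prove the analogue of \lemref{lem:time:geometric}: between consecutive top-layer crossing times $t_i$ and $t_{i+1}$, the blocks assigned to level-$k$ difference estimators each contribute at most $\frac{2^k\eps}{64}F(1,t_i)$ to $F$, so that the stitched estimate has the right accuracy. The induction on $b$ is as before, except $b$ can now decrease as well as increase; I would handle the decreasing case symmetrically, using the \texttt{ElsIf} branch that decrements $b$ and freezes a difference estimator when the stitched estimate drops by an $\frac{\eps}{8}$ factor. The key extra ingredient — and I expect this to be the main obstacle — is the third \texttt{ElsIf} branch, which restarts a difference estimator when its current value has grown to a $\frac{\gamma_j}{2}$ fraction of $X$ \emph{even though no threshold for $b$ has been crossed}. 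This is exactly the situation the twist-number definition is designed to capture (the ``$F(u_{j-1}:u_j)\ge\eps F(1:u_{j-1})$'' clause): I need to argue that each such restart corresponds to an increment of the twist number (because the block's $F$ value has reached an $\eps$-fraction of the suffix), hence there are only $\O{\lambda}$ of them in total, so we never run out of preallocated instances, and simultaneously that restarting preserves correctness — the frozen difference estimator $Z_{a,k}$ accumulates the contribution of the completed block, and a fresh instance (whose randomness is untainted because it was never revealed) takes over for the continuation, with the invariant $F(1,t)-F(1,t_{a,k})\le\gamma_k F(1,t_i)$ re-established at the restart time. I would verify that the precondition $F(w_t)\le\gamma\cdot F(v)$ of \defref{def:diff:est:pre} holds on each block between restarts, which is where the $\frac{\gamma_j}{2}$ trigger is calibrated.

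With these two lemmas in hand, the proof of \lemref{lem:frame:oblivious:flip} proceeds exactly as \lemref{lem:frame:oblivious}: decompose $F(1,t_{i,\ell})$ telescopically into the frozen difference-estimator contributions plus the active one, bound the total additive error by $\frac{\eps}{4}F(1,t_{i,\ell})+\sum_x 2\eta_{z_x}F(1,t_i)\le\frac{\eps}{2}F(1,t_{i,\ell})$ using the choice of $\eta_k$ (cases $C=1$ and $C>1$ as in \thmref{thm:framework}), add the rounding error $\frac{\eps}{8}Z_a$, and conclude a $(1+\eps)$-approximation for sufficiently small constant $\eps$. The union bound is over all $\O{\lambda}$ top-layer episodes and the $\poly(\frac1\eps)$ restart/freeze events within each, all of which succeed with probability $1-\delta\cdot\poly(\frac1\eps,\log n)$ by the choice of $\delta'$; monotonicity of $F$ between recorded times finishes the claim at all intermediate times $t$. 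The one point requiring care that I would flag explicitly is that on a decreasing stretch the ``current suffix'' for a difference estimator $\calB_{a,j}(1,t_{a,k},t,\gamma_k,\eta_k,\delta)$ is the vector induced from $t_{a,k}$ to $t$, which is still coordinate-wise growing in an insertion-only sense only if restricted to a fixed window — but since we are in the turnstile model this is not literally true, so I would instead appeal to whatever turnstile difference estimator is supplied (the $F_p$ turnstile difference estimators invoked in \thmref{thm:robust:turnstile:Fp}) and only use the abstract guarantee of \defref{def:diff:est:pre}, never the monotone structure, in this lemma.
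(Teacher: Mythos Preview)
Your proposal is correct and follows essentially the same approach as the paper's proof, which also reduces to the insertion-only analysis by fixing an episode between consecutive top-layer crossings $t^{(y)}_i$ and $t^{(y')}_{i\pm1}$, invoking the analogue of \lemref{lem:time:geometric}, and then repeating the telescoping error bound of \lemref{lem:frame:oblivious} verbatim. The paper's proof is in fact terser than yours: it does not explicitly unpack the third \texttt{ElsIf} branch or the turnstile-specific concerns you raise, simply writing ``by similar reasoning to \lemref{lem:time:geometric}'' and ``the remainder of the argument proceeds similarly to \lemref{lem:frame:oblivious}''; your more careful treatment of the restart trigger and its tie to the twist-number definition is a genuine addition. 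One small slip: you write ``monotonicity of $F$ between recorded times finishes the claim at all intermediate times,'' but $F$ is not monotone on turnstile streams---the correct observation (implicit in both proofs) is that the algorithm's output is constant between counter changes and the counter changes precisely when the stitched estimate moves by an $\frac{\eps}{8}$ factor, so correctness at the discrete switch times already gives correctness everywhere.
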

\begin{proof}
Let $t^{(y)}_i$ be the time in the stream at which the counter $a$ in \algref{alg:framework:flip} is set to $i$ for the $y$-th time, for integers $i,y>0$. 
For each integer $j\ge 0$, let $u_{i,j}$ be the last round for which $Z_{i,j}$ in \algref{alg:framework:flip} is defined. 
Let $t_{i,\ell}$ be the last round such that the counter $b$ in \algref{alg:framework:flip} is first set to $\ell$. 
Let $\mathcal{E}_1$ be the event that \algref{alg:framework:flip} outputs a $\left(1+\frac{\eps}{32}\right)$-approximation to the value of $F$ at all times $t^{(y)}_i$, so that by \lemref{lem:constant}, $\mathcal{E}_1$ holds with probability at least $1-\O{\frac{\lambda\delta\log n}{\eps}}$, since there are at most $\lambda$ times $\{t^{(y)}_i\}$ by the definition of the twist number.  
For the remainder of the proof, we fix the integers $i$ and $y$ and show correctness between $t^{(y)}_i$ and the first time $t>t^{(y)}_i$ such that either $t=t^{(y')}_{i+1}$ or $t=t^{(y')}_{i-1}$ for some $y'$. 
Without loss of generality, suppose $t=t^{(y')}_{i+1}$. 
We then show correctness at all times $t^{(\ell,r)}_{i}$ for each value $\ell$ obtained by the counter $b$ between times $t^{(y)}_i$ and $t^{(y')}_{i+1}$. 
The remainder of the argument proceeds similarly to \lemref{lem:frame:oblivious}. 

Let $\ell$ be a fixed value of the counter $b$, $k=\numbits(\ell)$, and $z_x=\lsb(\ell,x)$ for $x\in[k]$ so that $z_1>\ldots>z_k$ are the nonzero bits in the binary representation of $\ell$. 
By similar reasoning to \lemref{lem:time:geometric}, we have for all $x\in[k]$
\[F(1,u_{i,z_x})-F(1,t_{i,z_x})\le\frac{1}{2^{\beta-x-3}}F(1,t_i),\]
with probability at least $1-\delta\cdot\poly\left(\frac{1}{\eps},\log n\right)$. 
Since $u_{i,z_x}$ is the last round for which $Z_{i,z_x}$ in \algref{alg:framework:flip} is defined and all times $t_{i,j}$ for $j\le z_x$ are reset at that round, then we have $u_{i,z_x}=t_{i,z_{x+1}}$ for all $x\in[k]$. 
Thus for $u_{i,z_k}=t_{i,\ell}$, we can decompose 
\[F(1,t_{i,\ell})=\sum_{x=1}^k\left(F(1,u_{i,z_x})-F(1,u_{i,z_{x-1}})\right)=\sum_{x=1}^k\left(F(1,u_{i,z_x})-F(1,t_{i,z_x})\right).\]
Recall that the value $X_{\ell}$ of $X$ output by $\estimateF$ at time $t_{i,\ell}$ satisfies $X_{\ell}=Z_i+\sum_{x=1}^k Z_{i,z_x}$. 
By \lemref{lem:constant}, $Z_i$ incurs at most $\left(1+\frac{\eps}{8}\right)$ multiplicative error to $F(1,t_{i,\ell})$ since $F(1,t_{i,\ell})\le 4F(t_i)$. 
Moreover, $Z_{i,z_k}$ is an additive $2\eta_{z_k}F(1,t_i)$ approximation to $F(1,u_{i,z_k})-F(1,t_{i,z_k})$. 
Hence, the total additive error of $X_{\ell}$ to $F(1,t_{i,\ell})$ is at most $\frac{\eps}{4}\cdot F(1,t_{i,\ell})+\sum_{x=1}^k 2\eta_{z_k}F(1,t_i)$. 
Now if the $(\gamma,\eps,\delta)$-difference estimator has space dependency $\frac{\gamma^C}{\eps^2}$ with $C=1$, then $\eta_{z_k}=\frac{\eta}{\beta}$. 
Otherwise if $C>1$, then $\eta_{z_k}=\frac{\eta}{\varphi^{\beta-z_k}}$, with $\varphi>1$. 
In both cases, we have that the total additive error is at most $\frac{\eps}{2}\cdot F(1,t_{i,\ell})$, since $\beta=\ceil{\log\frac{8}{\eps}}$. 
Therefore, $X_{\ell}$ is a $\left(1+\frac{\eps}{2}\right)$-approximation to $F(1,t_{i,\ell})$. 

Note that the final output at time $t_{i,\ell}$ by \algref{alg:framework:flip} further incurs additive error at most $\frac{\eps}{8}\cdot Z_i$ due to the rounding $\left(1+\frac{b\eps}{8}\right)\cdot Z_a$ in the last step. 
By \lemref{lem:constant}, $Z_i\le\left(1+\frac{\eps}{8}\right)\cdot F(1,t_{i,\ell})$. 
Hence for sufficiently small constant $\eps>0$, we have that \algref{alg:framework:flip} outputs a $(1+\eps)$-approximation to $F(1,t_{i,\ell})$, with probability at least $1-\delta\cdot\poly\left(\frac{1}{\eps},\log n\right)$.
\end{proof}

\begin{theorem}[Framework for adversarially robust algorithms on turnstile streams]
\thmlab{thm:framework:flip}
Let $\eps,\delta>0$ and $F$ be a monotonic function with $(\eta,m)$-twist number $\lambda$, where $\eta=\frac{\eps}{32}$.  
Suppose there exists a $(\gamma,\eps,\delta)$-difference estimator with space dependency $\frac{\gamma^C}{\eps^2}$ for $C\ge 1$ and a strong tracker for $F$ on dynamic streams that use $\O{\frac{1}{\eps^2} S_F(n,\delta,\eps)}$ bits of space. 
Then there exists an adversarially robust dynamic streaming algorithm that outputs a $(1+\eps)$ approximation for $F$ that succeeds with constant probability. 
For $C>1$, the algorithm uses $\O{\frac{\lambda}{\eps}\cdot S_F(n,\delta',\eps)}$ bits of space, where $\delta'=\O{\frac{1}{\poly\left(\lambda,\frac{1}{\eps}\right)}}$. 
For $C=1$, the algorithm uses $\O{\frac{\lambda}{\eps}\log^3\frac{1}{\eps}\cdot S_F(n,\delta',\eps)}$ bits of space.
\end{theorem}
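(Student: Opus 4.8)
The plan is to mirror the proof of \thmref{thm:framework}, now applied to \algref{alg:framework:flip}, so that it suffices to establish (i) correctness on adversarial streams and (ii) the claimed space bound. For (i), correctness on non-adaptive streams is already proved in \lemref{lem:frame:oblivious:flip}, and we lift it to adversarial streams by the sketch-switching argument of \cite{Ben-EliezerJWY20}: every time any of the counters $a$, $b$, or the difference-estimator refresh counter $\tau$ changes, the subroutine $\estimateF$ causes \algref{alg:framework:flip} to start using instances $\calA_{a,j}$ and $\calB_{a,j}$ whose outputs have never been revealed, so the stream fed to those instances is independent of their internal randomness; monotonicity of $F$ together with correctness of the oblivious (not-yet-revealed) instance then implies that whatever the adversary learns about already-revealed instances does not compromise the active ones. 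Conditioning on the event that all instances ever created are simultaneously correct on their respective non-adaptive inputs --- there are $\poly\left(\lambda,\frac1\eps\right)$ of them by the space count of part (ii) --- and running each instance at failure probability $\delta'=\O{1/\poly\left(\lambda,\frac1\eps\right)}$, a union bound shows \algref{alg:framework:flip} outputs a $(1+\eps)$-approximation at all times with probability at least $\frac23$.

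For (ii), the crux is that the $(\eta,m)$-twist number with $\eta=\frac{\eps}{32}$ bounds the number of fresh subroutines created, even though $F(1:t)$ may now both increase and decrease. First, $\calA$ is switched only when the counter $a$ changes, i.e.\ when $F(1:t)$ crosses a dyadic threshold; a factor-two change of $F$ can be subdivided into $\Omega(1/\eps)$ consecutive $(1+\Theta(\eta))$-changes, each of which increments the twist number, so there are only $\O{\eps\lambda}$ dyadic crossings and hence $\O{\eps\lambda}$ instances of $\calA$, each of size $\O{\frac{1}{\eps^2}S_F(n,\delta',\eps)}$, contributing $\O{\frac{\lambda}{\eps}S_F(n,\delta',\eps)}$ in total. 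Second, a level-$k$ difference estimator (with ratio $\gamma_k=2^{k-1}\eta$ and accuracy $\eta_k$) is refreshed either because the running estimate advances by the corresponding $\Theta(2^{k-1}\eta)$-fraction of $F(1:t)$, or --- and this is the new phenomenon in the turnstile setting --- because the suffix vector $w_t$ it tracks has grown to the point that $F(w_t)\approx\gamma_k F(v)$ while $F(1:t)$ itself barely moved; the first kind of refresh is charged to the first clause of \defref{def:twistnumber} and the second kind to its second clause $F(u_{j-1}:u_j)\ge\eps F(1:u_{j-1})$, and in either case a level-$k$ block can be subdivided so as to cost $\Omega(2^{k-1})$ twist-number increments for large $k$ (and $\Omega(1)$ for small $k$, via the second clause), whence the number of level-$k$ difference estimators ever created is $\O{\lambda/2^{k-1}}$.

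Combining these counts: for $C>1$ the level-$k$ difference estimators contribute $\O{\frac{\lambda}{2^{k-1}}\cdot\frac{\gamma_k^C}{\eta_k^2}S_F(n,\delta',\eps)}$, and substituting $\gamma_k=2^{k-1}\eta$, $\eta_k=\eta/\varphi^{\beta-k}$ with $\varphi=2^{(C-1)/4}$ and $\beta=\ceil{\log\frac8\eps}$ yields a geometric series in $k$ whose sum is dominated by the $k=\beta$ term and evaluates to $\O{\frac{\lambda}{\eps}S_F(n,\delta',\eps)}$; this is the same telescoping computation as in the $C>1$ case of \thmref{thm:framework}, with the role of the $\log n$ factor there now played by the twist-number-governed count of fresh subroutines. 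For $C=1$ one instead takes $\eta_k=\eta/\beta$, and the identical bookkeeping produces $\O{\frac{\lambda}{\eps}\log^3\frac1\eps\cdot S_F(n,\delta',\eps)}$. A lower-order additive term of $\O{\frac{\lambda}{\eps}\log n}$ for storing the splitting times $t_{a,j}$ of the active subroutines over a stream of length $m=\poly(n)$ is dominated by $\frac{\lambda}{\eps}S_F$ (since $S_F\ge\log n$ for the trackers we use), and can be folded in or stated separately.

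The step I expect to be the main obstacle is the second count in (ii). In the insertion-only framework one argues that a level-$k$ block is finalized only when the running estimate itself advances by the right amount, but in a dynamic stream the tracked suffix $w_t$ can grow without $F(1:t)$ changing, so the difference estimator must be re-initialized before its precondition $F(w_t)\le\gamma_k F(v)$ is invalidated; bounding the number of such re-initializations is exactly why the twist number (rather than the flip number) is the right parameter. One must also verify that the re-initialization performed in the corresponding branch of \algref{alg:framework:flip} --- freezing the old $\calB_{a,j}$ into $Z_{a,k}$, advancing the timestamps $t_{a,j}$, and adding a fresh instance on top --- still preserves the telescoping decomposition of $F(1:t_{i,\ell})$ into $\sum_x\bigl(F(1:u_{i,z_x})-F(1:t_{i,z_x})\bigr)$ used in \lemref{lem:frame:oblivious:flip}, so that the accumulated additive error remains $\O{\eps}\cdot F(1:t)$ and the analysis of that lemma goes through verbatim over each maximal stretch between consecutive visits of the counter $a$ to a fixed value.
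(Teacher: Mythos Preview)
Your proposal is correct and follows essentially the same approach as the paper's own proof: invoke \lemref{lem:frame:oblivious:flip} for non-adaptive correctness, lift to adversarial streams via sketch-switching (each change of $a$, $b$, or $\tau$ exposes only fresh instances indexed by $j=\flr{\tau/2^k}$), and then bound space by counting that $a$ changes $\O{\eps\lambda}$ times and each level-$k$ difference estimator is instantiated $\O{\lambda/2^k}$ times, after which the same geometric-series computation as in \thmref{thm:framework} finishes both the $C>1$ and $C=1$ cases. If anything, you give more justification than the paper does for the key counting claim ``each $\eta_k$ is invoked $\O{\lambda/2^k}$ times'' --- the paper merely asserts it, whereas you correctly identify that the two clauses of \defref{def:twistnumber} are what separately charge the $b$-change events and the suffix-growth re-initializations, which is precisely why twist number (and not flip number) is the right parameter here.
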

\begin{proof}
Consider \algref{alg:framework:flip} and note that \lemref{lem:frame:oblivious:flip} proves correctness over non-adaptive streams. 
We now argue that \lemref{lem:frame:oblivious:flip} holds with over adversarial streams, using the same argument as \thmref{thm:framework} and the switch-a-sketch proof by \cite{Ben-EliezerJWY20}. 
Observe that each time either counter $a$ or $b$ either increases or decreases, the modified subroutine $\estimateF$ compels \algref{alg:framework:flip} to use new subroutines $\calB_{i,j}$ and $\calA_i$ that have not previously been revealed to the adversary. 
In particular, since $\estimateF$ sets $j=\flr{\frac{b+1}{2^{z_k}}}+\ceil{\frac{\tau}{2^k}}$, then since the $\eta$ twist number has only reached $\tau$, the value of $j$ ensures that $\calB_{i,j}$ has not been previously revealed. 
Hence, the input is independent of the internal randomness of $\calB_{i,j}$ until the counter $b$ reaches $b=j$ for the $\ceil{\frac{\tau}{2^k}}$-th time, at which point $\calB_{i,j}$ is revealed once and not used again. 

It remains to analyze the space complexity of \algref{alg:framework:flip}. 
We first analyze the space complexity for a $(\gamma,\eps,\delta)$-difference estimator with space dependency $\frac{\gamma^C}{\eps^2}$ and $C>1$. 
Observe that $a=\O{\eps\lambda}$ and across the $\beta$ granularities, each value of $\eta_k$ is invoked at most $\O{\frac{\lambda}{2^k}}$ times. 
Since we require correctness of $\O{\eps\lambda}$ instances of $\calB$, it suffices to set $\delta'=\O{\frac{1}{\poly\left(\lambda,\frac{1}{\eps}\right)}}$. 
By assumption, a single instance of $\calB$ with granularity $\eta_k$ uses at most $\frac{C_1}{(\eta_k)^2} S_F(n,\delta',\eps)$ bits of space, for some constant $C_1>0$.  
Thus for fixed $a$, $\beta\le\log\frac{8}{\eps}+1$, $\gamma_k=2^{k-1}\eta$, and accuracy parameter $\eta_k=\frac{\eta}{\varphi^{\beta-k}}$, the total space for each $\calB_{a,j}$ across the $\beta$ granularities and the values of the counter $a$ is 
\begin{align*}
\sum_{k=1}^\beta\frac{C_1\lambda\gamma_k^C}{\eta_k^2 2^k}\cdot S_F(n,\delta',\eps)&=\sum_{k=1}^\beta\frac{C_1\lambda\gamma_{\beta-k}^C}{\eta_{\beta-k}^2 2^{\beta-k}}\cdot S_F(n,\delta',\eps)\\
&\le\sum_{k=1}^\beta\frac{C_2\lambda(2^{\beta-k}\eps)^C\varphi^{2k}}{\eps^2\cdot2^{\beta-k}}\cdot S_F(n,\delta',\eps)\\
&\le\sum_{k=1}^\beta\frac{C_3\lambda 2^{-Ck}\varphi^{2k}}{\eps\cdot2^{-k}}\cdot S_F(n,\delta',\eps),
\end{align*}
for some absolute constants $C_1,C_2,C_3>0$. 
Since $\varphi=2^{(C-1)/4}$ for $C>1$, then there exists a constant $C_4$ such that
\begin{align*}
\sum_{k=1}^\beta\frac{C_1\lambda\gamma_k^C}{\eta_k^2 2^k}\cdot S_F(n,\delta',\eps)&\le\sum_{k=1}^\beta\frac{C_4\lambda 2^{k-Ck}2^{(Ck-k)/2}}{\eps}\cdot S_F(n,\delta',\eps)\\
&\le\O{\frac{\lambda}{\eps}\cdot S_F(n,\delta',\eps)}. 
\end{align*}
Moreover, since each strong tracker $\calA$ is used when the value of $F$ increases by a constant factor, then $\O{\eps\lambda}$ instances of $\calA$ are required. 
Hence, the total space required for $C>1$ is $\O{\frac{\lambda}{\eps}\cdot S_F(n,\delta',\eps)}$. 

For $C=1$, we recall that $\eta_k=\frac{\eta}{\beta}$, with $\beta\le\log\frac{8}{\eps}+1$. 
Thus the total space is
\begin{align*}
\sum_{k=1}^\beta\frac{C_1\lambda\gamma_k^C}{\eta_k^2 2^k}\cdot S_F(n,\delta',\eps)&\le\sum_{k=1}^\beta\frac{C_5\lambda(2^{k-1}\eps)}{\eta^2\beta^{-2}\cdot 2^k}\cdot S_F(n,\delta',\eps)\\
&\le\sum_{k=1}^\beta\frac{C_6\lambda\beta^2\eps}{\eta^2}\cdot S_F(n,\delta',\eps)\\
&\le\O{\frac{\lambda}{\eps}\log^3\frac{1}{\eps}\cdot S_F(n,\delta',\eps)}. 
\end{align*}
\end{proof}

As a direct application of \thmref{thm:framework:flip}, we obtain the following:
\begin{theorem}
\thmlab{thm:robust:turnstile:Fp}
Given $\eps>0$, there exists an adversarially robust streaming algorithm on turnstile streams with $(\eps,m)$ twist number $\lambda$ that outputs a $(1+\eps)$-approximation for $F_p$ with $p\in[0,2]$ that succeeds with probability at least $\frac{2}{3}$ and uses $\tO{\frac{\lambda}{\eps}\log^2 n}$ bits of space. 
\end{theorem}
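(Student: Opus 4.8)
The proof of \thmref{thm:robust:turnstile:Fp} is essentially an instantiation of the general turnstile framework \thmref{thm:framework:flip} with the appropriate $F_p$ building blocks. The plan is as follows. First, I would verify the hypotheses of \thmref{thm:framework:flip}: we need (i) a $(\gamma,\eps,\delta)$-difference estimator for $F_p$ on turnstile streams with space dependency $\frac{\gamma^C}{\eps^2}$ for some constant $C\ge 1$, and (ii) a strong tracker for $F_p$ on dynamic streams using $\O{\frac{1}{\eps^2}S_{F_p}(n,\delta,\eps)}$ bits. For $p\in[0,2]$ on turnstile streams, linear sketching yields both: the $p$-stable sketch of \cite{Indyk06} (or Li's geometric mean estimator, which we showed is a strong tracker in \thmref{thm:smallp:strong:track}) gives $F_p$-tracking in $\tO{\frac{1}{\eps^2}\log n}$ bits, and the difference estimator sketches we built in \lemref{lem:diff:est:F0}, \lemref{lem:diff:est:F2}, and \lemref{lem:diff:est:Fp:smallp} are all linear sketches — hence they extend verbatim to turnstile streams, since the analysis only used the frequency vectors $v$ and $w_t$ and never the sign-pattern of updates. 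Thus $S_{F_p}(n,\delta,\eps) = \polylog(n,\tfrac1\eps,\tfrac1\delta)\cdot\log n$ and $C$ is either $1$ (for $p\in\{0,2\}$, or $p<1$) or $2/p$ (for $p\in(1,2)$), both constants.

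Next I would simply invoke \thmref{thm:framework:flip} with these subroutines and $\eta = \frac{\eps}{32}$, obtaining an adversarially robust turnstile algorithm that outputs a $(1+\eps)$-approximation to $F_p$ with constant success probability. For $C>1$ the space is $\O{\frac{\lambda}{\eps}\cdot S_{F_p}(n,\delta',\eps)}$ and for $C=1$ it is $\O{\frac{\lambda}{\eps}\log^3\tfrac1\eps\cdot S_{F_p}(n,\delta',\eps)}$, where $\delta' = \O{\tfrac{1}{\poly(\lambda,1/\eps)}}$. Since $S_{F_p}(n,\delta',\eps) = \O{\log n\cdot\polylog(n,\tfrac1\eps)}$ (the extra $\log\tfrac1{\delta'}$ and $\log\log n$ factors are absorbed into $\polylog$, and the derandomization of \corref{lem:derandom:many} contributes only $(\log\log n)^2$), all the $\polylog\tfrac1\eps$ and $\log^3\tfrac1\eps$ and $(\log\log n)^2$ overheads are swallowed by the $\tO{\cdot}$ notation, leaving $\tO{\frac{\lambda}{\eps}\log^2 n}$ — one $\log n$ from $S_{F_p}$ and implicitly a second $\log n$ absorbed from maintaining the $\O{\eps\lambda}$ splitting-time counters, each of size $\O{\log n}$ (for $\log m = \O{\log n}$), contributing $\O{\eps\lambda\log n}\cdot\O{\log n} = \tO{\lambda\log^2 n}$ additional bits, which is dominated by the main term for $\lambda = \Omega(\tfrac1\eps)$.

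The one genuinely non-routine point — and the main obstacle — is checking that the difference estimators we constructed actually satisfy the turnstile requirements invoked inside \algref{alg:framework:flip}. Concretely, \algref{alg:framework:flip} restarts a difference estimator not only when the difference grows but also when the \emph{suffix} value $F(1,t)$ grows enough (the ``$\calB_{a,j}(\cdots) > \frac{\gamma_j}{2}X$'' branch), precisely so that the promise $F(w_t)\le\gamma F(v)$ of \defref{def:diff:est:pre} is maintained; I would need to confirm that our linear-sketch difference estimators remain correct under this restart discipline and that the twist-number bound $\lambda = \O{\frac{1}{\eps}\log m}$ from \obsref{obs:flip:fp} — stated there for insertion-only streams — is exactly the parameter that governs the number of restarts in the turnstile setting (it is, by the definition of the $(\eps,m)$-twist number, which already accounts for both increases and decreases and for the suffix-value condition). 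A secondary subtlety is the derandomization: the $p$-stable entries must be shared consistently between the sketch of $v+w_t$ and of $v$ across all the (polynomially many) stopping times, which is handled by \corref{lem:derandom:many} with $q=2$ exactly as in \lemref{lem:diff:est:Fp:smallp}, and the marginal-distribution-only guarantee there is sufficient because \thmref{thm:framework:flip}'s correctness proof (\lemref{lem:frame:oblivious:flip}) only union-bounds over $\poly(\tfrac1\eps,\log n)$ fixed instances rather than conditioning on the stopping-time choice. Once these checks are in place, the statement follows directly.
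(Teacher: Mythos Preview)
Your proposal is correct and matches the paper's approach: the paper itself gives no proof beyond the sentence ``As a direct application of \thmref{thm:framework:flip}'', and what you have written is precisely the instantiation of that framework with the $F_p$ difference estimators and strong trackers already constructed. One small slip: for $p\in(0,1)$ the exponent is $C=2/p>2$, not $C=1$; this is harmless since both the $C=1$ and $C>1$ branches of \thmref{thm:framework:flip} yield the same bound after the $\tO{\cdot}$ absorbs the $\log^3\tfrac1\eps$ factor.
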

\section{Framework for Sliding Window Algorithms}
\seclab{sec:sliding}
In this section, we describe a general framework for norm estimation in the sliding window model, using the sketch stitching and granularity changing technique. 
We first require the following background on sliding windows algorithms.

\begin{definition}[Smooth function]
Given adjacent substreams $A$, $B$, and $C$, a function $f$ is $(\alpha,\beta)$-smooth if $(1-\beta)f(A\cup B)\le f(B)$, then $(1-\alpha)f(A\cup B\cup C)\le f(B\cup C)$ for some parameters $0<\beta\le\alpha<1$. 
\end{definition}
Intuitively, once a suffix of a data stream becomes a $(1\pm\beta)$-approximation for a smooth function, then it is \emph{always} a $(1\pm\alpha)$-approximation, regardless of the subsequent updates that arrive in the stream. 
Smooth functions are integral to the smooth histogram framework for sliding window algorithms. 

\paragraph{Smooth Histograms.}
Braverman and Ostrovsky introduced the \emph{smooth histogram}, an elegant framework that solves a large number of problems in the sliding window model, such as $L_p$ norm estimation, longest increasing subsequence, geometric mean estimation, or other weakly additive functions~\cite{BravermanO07}. 
Thus the smooth histogram data structure maintains a number of timestamps throughout the data stream, along with a streaming algorithm for each timestamp that stores a sketch of all the elements seen from the timestamp. 
The timestamps maintain the invariant that at most three checkpoints produce values that are within $(1-\beta)$ of each other, since any two of the sketches would always output values that are within $(1-\alpha)$ afterwards. 
Hence if the function is polynomially bounded, then the smooth histogram data structure only needs a logarithmic number of timestamps. 

We now define the following variant of a difference estimator. 
\begin{definition}[Suffix-Pivoted Difference Estimator]
\deflab{def:diff:est:suf}
Given a stream $\calS$ and fixed times $t_1$, $t_2$, and $t_3$, let frequency vectors $u$ and $v$ be induced by the updates of $\calS$ between times $[t_1,t_2)$ and $[t_2, t_3)$. 
Given an accuracy parameter $\eps>0$ and a failure probability $\delta\in(0,1)$, a streaming algorithm $\calC(t_1,t_2,t,\gamma,\eps,\delta)$ is a $(\gamma,\eps,\delta)$-\emph{suffix difference estimator} for a function $F$ if, with probability at least $1-\delta$, it outputs an additive $\eps\cdot F(v+w_t)$ approximation to $F(u+v+w_t)-F(v+w_t)$ for all frequency vectors $w_t$ induced by $[t_3,t)$ for times $t>t_3$, given $\min(F(u), F(u+v)-F(v))\le\gamma\cdot F(v)$ for a ratio parameter $\gamma\in(0,1]$. 
\end{definition}
Observe that the difference between \defref{def:diff:est:pre} and \defref{def:diff:est:suf} is that the fixed-prefix difference estimator approximates $F(v+w_t)-F(v)$ when the contribution to $F$ of the first frequency vector $v$ that arrives in the stream is much larger than that of $w_t$, while the suffix-pivoted difference estimator approximates $F(v+w_t)-F(w_t)$ when $F(w_t)$ is larger than $F(v)$. 
Nevertheless, we prove that our fixed-prefix difference estimators for $F_p$ moment estimation with $p\in(0,2]$ can also be adjusted to form suffix-pivoted difference estimators. 
Unless otherwise noted, the difference estimators in this section refer to the suffix-pivoted difference estimators rather than the fixed-prefix difference estimators. 

We adapt our sketch stitching and granularity changing technique to the sliding window model by focusing on the suffix of the stream, since prefixes of the stream may expire. 
We thus run the highest accuracy algorithms, the separate streaming algorithms $\calA$, on various suffixes of the stream similar to the smooth histogram framework. 
It follows from smoothness that we maintain an instance of $\calA$ starting at some time $t_0\le m-W+1$, whose output is within a factor $2$ of the value of $F$ on the sliding window. 
Our task is then to remove the extraneous contribution of the updates between times $t_0$ and $m-W+1$, i.e., the starting time of the sliding window. 

We partition the substream of these updates into separate blocks based on their contribution to the value of $F$ by guessing $\O{\log n}$ values for the final value of $F$ on the sliding window and forming new difference estimators at level $j$ when the value of $F$ on each block has exceeded a $\frac{1}{2^j}$ fraction of the corresponding guess. 
We terminate a guess when there are more than $100\cdot 2^j$ blocks in that level, indicating that the guess is too low. 
We can maintain separate sketches for these blocks, with varying granularities, and stitch these sketches together at the end. 
We give our algorithm in full in \algref{alg:sliding}.

\paragraph{Interpretation of \algref{alg:sliding}.} 
We now translate between the previous intuition and the pseudocode of \algref{alg:sliding}. 
At each time, \algref{alg:sliding} only runs two subroutines: $\guessupdatesw$ and $\mergesw$. 
The first subroutine $\guessupdatesw$ creates new instances of each algorithm (both a streaming algorithm approximating $F$ on a suffix of the stream and a difference estimator starting at each time) at each time in the stream. 
Moreover, $\guessupdatesw$ partitions the stream into blocks for the difference estimator by using exponentially increasing guesses for the value of $F$ at the end of the stream to assist with appropriate granularity for each block.  
The second subroutine $\mergesw$ performs maintenance on the data structure to ensure that there are not too many instances that have been created by the first subroutine that are simultaneously running. 
Namely, $\mergesw$ deletes algorithms running on suffixes of the stream that output a similar value, so that the number of remaining algorithms is logarithmic rather than linear. 
Similarly, $\mergesw$ merges two difference estimators when it is clear their combined contribution is still too small. 
Finally at the end of the stream, $\stitchsw$ creates an estimate for the value of $F$ on the stream by stitching together the estimates output by each difference estimator. 
Although \algref{alg:framework} is notationally heavy, each timestamp $t^{(k)}_{i,j,\ell}$ should be associated with (1) a guess $k\in[C\log n]$ for the value of $F$ at the end of the stream, (2) an index $i\in\O{\log n}$ roughly associated with the number of times $F$ has \emph{actually} doubled in the stream so far, (3) a granularity $j$, and (4) the number of the block $\ell$ in granularity $j$.  

\begin{algorithm}[!htb]
\caption{Moment Estimation in the Sliding Window Model}
\alglab{alg:sliding}
\begin{algorithmic}[1]
\Require{Stream $u_1,\ldots,u_m$ of updates, an $(\eps,\eps^q)$-smooth function $F$, accuracy parameter $\eps\in(0,1)$, window parameter $W>0$}
\Ensure{Robust $(1+\eps)$-approximation to $F$}
\State{$\delta\gets\frac{1}{\poly(m)}$, $\eta\gets\frac{\eps}{2^{20}q\log\frac{1}{\eps}}$, $\varphi\gets\sqrt{2}$ be a parameter}
\State{$\beta\gets\ceil{\log\frac{100\cdot 4^q}{\eps^q}}$, $\gamma_j\gets 2^{3-j}$ for all $j\in[\beta]$}
\For{each update $u_t\in[n]$, $t\in[m]$}
\State{$\guessupdatesw$}
\Comment{Create new subroutines for each update}
\State{$\mergesw$}
\Comment{Removes extraneous subroutines}
\EndFor
\State{\Return $Z\gets\stitchsw$}
\Comment{Estimate $F$ on sliding window}
\end{algorithmic}
\end{algorithm}

\begin{algorithm}[!htb]
\caption{Subroutine $\guessupdatesw$ of \algref{alg:sliding}: create new subroutines for each update}
\alglab{alg:updatesw:frame}
\begin{algorithmic}[1]
\State{Let $s$ be the number of instances of $\calA$.}
\State{$t_{s+1}\gets t$}
\State{Start a new instance $\calA(t_{s+1},t,\eta,\delta)$.}
\For{$j\in[\beta]$}
\Comment{Maintain instances of each granularity}
\For{$k\in[C\log n]$}
\Comment{$n^C$ is upper bound on value of $F$}
\State{Let $r_k$ be the number of instances of timestamps $t^{(k)}_{s+1,j,*}$.}
\If{$\calA(t^{(k)}_{s+1,j,r_k},t-1,1,\delta)\in[n^C/2^{j+k+11},n^C/2^{j+k+10}]$, $\calA(t^{(k)}_{s+1,j,r_k},t,1,\delta)>n^C/2^{j+k+10}$, and $r<100\cdot 2^{j+10}$}
\For{$\ell>k$}
\State{$t^{(\ell)}_{s+1,j,r_{\ell+1}}\gets t$ .}
\State{Demarcate $\sdiffest(t^{(\ell)}_{s+1,j,r_\ell},t^{(\ell)}_{s+1,j,r_{\ell+1}},t,\gamma_j,\eta,\delta)$.}
\Comment{Update splitting time}
\State{Start a new instance $\calA(t^{(\ell)}_{s+1,j,r_{\ell+1}},t,1,\delta)$.}
\EndFor
\EndIf
\EndFor
\EndFor
\end{algorithmic}
\end{algorithm}

\begin{algorithm}[!htb]
\caption{Subroutine $\mergesw$ of \algref{alg:sliding}: removes extraneous subroutines}
\alglab{alg:mergesw:frame}
\begin{algorithmic}[1]
\State{Let $s$ be the number of instances of $\calA$.}
\For{$i\in[s]$, $j\in[\beta]$, and $k\in[C\log n]$}
\Comment{Difference estimator maintenance}
\State{Let $r_k$ be the number of instances of timestamps $t^{(k)}_{s+1,j,*}$.}
\For{$\ell\in[r_k-1]$}
\Comment{Merges two algorithms with ``small'' contributions}
\If{$\calA(t^{(k)}_{i,j,\ell-1},t^{(k)}_{i,j,\ell+1},1,\delta)\le n^C/2^{k+j+10}$}
\State{Merge (add) the sketches for $\calA(t_{i,j,k-1},t_{i,j,k},1,\delta)$ and $\calA(t_{i,j,k},t_{i,j,k+1},1,\delta)$.}
\State{Merge (add) the sketches for $\sdiffest(t_{i,j,k-1},t_{i,j,k},t,\gamma_j,\eta,\delta)$ and $\sdiffest(t_{i,j,k},t_{i,j,k+1},t,\gamma_j,\eta,\delta)$.}
\State{Relabel the times $t_{i,j,*}$.}
\EndIf
\EndFor
\For{$i\in[s-2]$}
\Comment{Smooth histogram maintenance}
\If{$\calA(t_{i+2},t,\eta,\delta)\ge(1-1/8^q)\calA(t_i,t,\eta,\delta)$}
\For{$j\in[\beta]$ and $k\in[C\log n]$}
\State{Append the times $t^{(k)}_{i+1,j,*}$ to $\{t^{(k)}_{i,j,*}\}$.}
\EndFor
\State{Delete $t_{i+1}$ and all times $t_{i+1,*,*}$.}
\State{Relabel the times $\{t_i\}$ and $\{t^{(k)}_{i,j,*}\}$.}
\EndIf
\EndFor
\EndFor
\end{algorithmic}
\end{algorithm}

\begin{algorithm}[!htb]
\caption{Subroutine $\stitchsw$ of \algref{alg:sliding}: output estimate of $F_p$ on the sliding window}
\alglab{alg:stitchsw:frame}
\begin{algorithmic}[1]
\State{Let $i$ be the largest index such that $t_i\le m-W+1$.}
\State{Let $k$ be the smallest integer such that $n^C/2^k\le\calA(t_i,m,1,\delta)$.}
\State{$c_0\gets t_i$}
\State{$X\gets\calA(t_i,m,\eta,\delta)$}
\For{$j\in[\beta]$}
\Comment{Stitch sketches}
\State{Let $a$ be the smallest index such that $t^{(k)}_{i,j,a}\ge c_{j-1}$}
\State{Let $b$ be the largest index such that $t^{(k)}_{i,j,b}\le m-W+1$}
\State{$c_j\gets t^{(k)}_{i,j,b}$}
\State{$Y_j\gets\sum_{k=a}^{b-1}\sdiffest(t^{(k)}_{i,j,k},t^{(k)}_{i,j,k+1},m,\gamma_j,\eta,\delta)$}
\EndFor
\State{\Return $Z:=X-\sum_{j=1}^\beta Y_j$}
\end{algorithmic}
\end{algorithm}

Recall that we abuse notation so that $W$ represents the active updates in the stream. 
Thus we use $F_p(W)$ to denote the $F_p$ moment of the active elements in the sliding window, i.e., if $m<W$, then $F_p(W)=F_p(1:m)$ and if $m\ge W$, then $F_p(W)=F_p(1:m-W+1)$.  
We first show that the timestamps $t_1,\ldots,t_s$ mark suffixes of the stream in which the moment roughly double, so that $F_p(t_1,m)\ge\ldots\ge F_p(t_s,m)$ and $2^{s-i}\cdot F_p(t_s,m)\ge F_p(t_i,m)$ for each $i\in[s]$. 
Since $W$ is oblivious to the algorithim until the $\stitchsw$ subroutine, it suffices to show that if $t_i\le m-W+1\le t_{i+1}$, then $F_p(W)\le F_p(t_i,m)\le 2F_p(W)$ and $\frac{1}{2}F_p(W)\le F_p(t_{i+1},m)\le F_p(W)$. 

We first show that the top level gives a constant factor approximation to $F(W)$. 
The intuition behind the lemma is that the maintenance procedure $\mergesw$ only deletes an instance of an algorithm if the instance is sandwiched between two other algorithms whose values are within a factor of $(1-1/8^q)$, where the function is $(\eps,\eps^q)$-smooth. 
Thus by the definition of smoothness, we should expect the other two algorithms to always produce values that are within a factor of two, regardless of the subsequent updates of the stream. 
This is important because it allows us to translate between the values of various suffixes of the stream, e.g., $F(t_i,m)\le 2F(W)$ implies that additive $\frac{\eps}{2}\cdot F(t_i,m)$ error gives a $(1+\eps)$-multiplicative error to $F(W)$. 
\begin{lemma}[Constant factor partitions in top level]
\lemlab{lem:sw:top:constant}
Let $i$ be the largest index such that $t_i\le m-W+1$. 
Let $k$ be the smallest integer such that $n^C/2^k\le\calA(t_i,m,1,\delta)$. 
Let $\calE$ be the event that all subroutines $\calA$ in \algref{alg:sliding} succeed.
Then conditioned on $\calE$, $F(W)\le F(t_i,m)\le 2F(W)$ and $\frac{1}{2}F(W)\le F(t_{i+1},m)\le F(W)$. 
\end{lemma}
\begin{proof}
Observe that in the $\mergesw$ subroutine, we only delete a timestamp $t_{i'}$ if $\calA(t_{i'+2},t,\eta,\delta)\ge(1-1/8^q)\calA(t_{i'},t,\eta,\delta)$ at some point $t$. 
Since $\eta=\frac{\eps}{2^{20}q\log\frac{1}{\eps}}\le\frac{1}{1024}$, we have that conditioned on $\calE$, $F(t_{i'+2},t)\left(1+\frac{1}{1024}\right)\ge(1-1/8^q)\left(1-\frac{1}{1024}\right)F(t_{i'},t)$. 
Thus, $F(t_{i'+2},t)\ge(1-1/4^q)F(t_{i'},t)$. 

We have either $t_i=m-W+1$ and $t_{i+1}=m-W+2$ in which case the statement is trivially true or timestamps $t_i$ and $t_{i+1}$ at some point $t\le m$ must have satisfied $F(t_{i+1},t)\ge(1-1/4^q)F(t_i,t)$ so that $t_i+1$ was removed from the set of timestamps. 
Thus, $F(t_{i+1},t)\ge(1-1/4^q)F(t_i,t)$ and recall that we assume $F$ is $(\eps,\eps^q)$-smooth. 
By the definition of smoothness, $F(t_{i+1},t)\ge(1-1/4^q)F(t_i,t)$ implies $F(t_{i+1},m)\ge\frac{1}{2}\cdot F(t_i,m)$ for all $m\ge t$. 
Since $t_i\le m-W+1$, then $F(t_i,m)\ge F(m-W+1,m)=F(W)\ge F(t_{i+1},m)$ and the conclusion follows. 
\end{proof}

We now show that our difference estimators are well-defined and thus give good approximations to the differences that they estimate. 
Although this sounds trivial, we remark that a pre-condition to a $(\gamma,\eps,\delta)$-difference estimator to succeed on the approximation of a difference $F(v+u_1)-F(u_1)$, we must have $F(v+u_1)-F(u_1)\le\gamma\cdot F(v+u_1)$, which requires proving properties about the specific partition of the stream. 
On the other hand, since $\guessupdatesw$ partitions the stream into blocks for each difference estimator based on their contribution, we show that the pre-condition holds by construction.  
\begin{lemma}[Accuracy of difference estimators]
\lemlab{lem:diff:est:well:defined}
Let $i$ be the largest index such that $t_i\le m-W+1$. 
Let $k$ be the smallest integer such that $n^C/2^k\le\calA(t_i,m,1,\delta)$.
Let $\calE$ be the event that all subroutines $\calA$ and $\sdiffest$ in \algref{alg:sliding} succeed.
Then conditioned on $\calE$, we have that for each $j\in[\beta]$ and all $\ell$, $\sdiffest(t^{(k)}_{i,j,\ell-1},t_{i,j,\ell},t,\gamma_j,\eta,\delta)$ gives an additive $\eta\cdot F(t^{(k)}_{i,j,\ell}:t)$ approximation to $F(t^{(k)}_{i,j,\ell-1}:t)-F(t^{(k)}_{i,j,\ell}:t)$. 
\end{lemma}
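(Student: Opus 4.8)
The statement is essentially a verification that the hypotheses of \defref{def:diff:est:suf} are met by every suffix-pivoted difference estimator that \algref{alg:sliding} spawns at granularity $j$; given that, the conclusion is immediate from $\calE$. Concretely, I would fix a granularity $j\in[\beta]$, the guess $k$ selected in $\stitchsw$, and a block index $\ell$, write $u$ for the frequency vector induced by $[t^{(k)}_{i,j,\ell-1},t^{(k)}_{i,j,\ell})$ and $v$ for the frequency vector induced by $[t^{(k)}_{i,j,\ell},t_3)$, where $t_3$ is the time at which the estimator $\sdiffest(t^{(k)}_{i,j,\ell-1},t^{(k)}_{i,j,\ell},\cdot,\gamma_j,\eta,\delta)$ is demarcated (so that $u,v$ are frozen and only the $w_t$ part of the suffix keeps growing). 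By \defref{def:diff:est:suf}, if $\min(F(u),F(u+v)-F(v))\le\gamma_j\cdot F(v)$, then, conditioned on $\calE$, the estimator outputs an additive $\eta\cdot F(v+w_t)=\eta\cdot F(t^{(k)}_{i,j,\ell}:t)$ approximation to $F(u+v+w_t)-F(v+w_t)=F(t^{(k)}_{i,j,\ell-1}:t)-F(t^{(k)}_{i,j,\ell}:t)$, which is exactly the claim. Hence the whole proof reduces to verifying this precondition for every admissible triple $(j,k,\ell)$, uniformly.

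To do that, the plan is to translate block boundaries into $F$-values. A boundary $t^{(k)}_{i,j,\ell}$ is opened by $\guessupdatesw$ only at the first time the estimate $\calA(t^{(k)}_{i,j,\ell-1},\cdot,1,\delta)$ crosses $n^C/2^{j+k+10}$, with the preceding estimate in $[n^C/2^{j+k+11},n^C/2^{j+k+10}]$, and a merged block is produced by $\mergesw$ only while the combined estimate stays at most $n^C/2^{j+k+10}$. Since under $\calE$ every call to $\calA$ with accuracy parameter $1$ is a factor-$2$ approximation, this yields $F(u)\le c_1\cdot n^C/2^{j+k+10}$ for an absolute constant $c_1$ (accounting for the per-update increment and the $2\times$ slack). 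Symmetrically, the demarcation rule — the estimator is locked only once the trailing block structure at granularity $j$ certifies that the suffix $[t^{(k)}_{i,j,\ell},t_3)$ has accumulated enough mass, and the guess $k$ has not been abandoned, i.e.\ the $r<100\cdot 2^{j+10}$ cap has not fired — gives a matching lower bound $F(v)\ge c_2\cdot n^C/2^{k}$ for an absolute constant $c_2$. Combining with the top level, the choice of $k$ in $\stitchsw$ together with \lemref{lem:sw:top:constant} and $\calE$ gives $n^C/2^{k}=\Theta(F(t_i,m))=\Theta(F(W))$, so
\[\frac{F(u)}{F(v)}\le\frac{c_1\cdot n^C/2^{j+k+10}}{c_2\cdot n^C/2^{k}}=\frac{c_1}{c_2}\cdot 2^{-j-10}\le 2^{3-j}=\gamma_j,\]
the last inequality using the parameter choice $\gamma_j=2^{3-j}$ of \algref{alg:sliding} and the fact that the slack exponents $10,11$ are large enough to swallow $c_1/c_2$. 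Finally $\min(F(u),F(u+v)-F(v))\le F(u)$ always holds (for $p\ge1$ super-additivity gives $F(u+v)-F(v)\ge F(u)$, so the min equals $F(u)$; for $p<1$ sub-additivity gives $F(u+v)-F(v)\le F(u)$, so the min is even smaller), hence $\min(F(u),F(u+v)-F(v))\le\gamma_j F(v)$. Since $u,v$ are frozen at demarcation, this precondition persists for all later query times $t$, and a union bound over the $\poly(m)$ relevant tuples is subsumed by $\calE$ because each $\sdiffest$ fails with probability at most $\delta=1/\poly(m)$.

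The \textbf{main obstacle} is the quantitative step that bounds $F(u)$ from above by $\Theta(n^C/2^{j+k+10})$ and $F(v)$ from below by $\Theta(n^C/2^{k})$ at the same time. This requires tracing carefully the interaction between $\guessupdatesw$ (which opens new blocks and difference estimators once the suffix estimate crosses $n^C/2^{j+k+10}$) and $\mergesw$ (which merges adjacent under-full blocks and prunes redundant smooth-histogram timestamps), and checking that the various thresholds — $2^{j+k+10}$ versus $2^{j+k+11}$, the $100\cdot 2^{j+10}$ block cap, and the $(1-1/8^q)$ smooth-histogram test — compose so that the ratio $F(u)/F(v)$ comes out at most $\gamma_j=2^{3-j}$ rather than merely $O(2^{-j})$ with an uncontrolled constant; in particular one must argue that a difference estimator is never locked in before its intervening suffix has grown to a constant fraction of the correctly guessed final value $n^C/2^{k}$, which is precisely where the smoothness of $F$ and \lemref{lem:sw:top:constant} are used.
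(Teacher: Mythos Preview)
Your overall plan—reduce to the precondition of \defref{def:diff:est:suf}, bound $F(u)$ from above via the block–creation threshold $n^C/2^{j+k+10}$ in $\guessupdatesw$, bound the suffix from below via the choice of $k$ and \lemref{lem:sw:top:constant}, and compare against $\gamma_j=2^{3-j}$—is exactly the paper's approach. The paper's own proof is much terser: it records only the upper bound on $F(u)$ coming from the partition rule and asserts the precondition ``at some time $x$'' without spelling out the suffix side.

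There is, however, a real gap in your lower–bound step. You take $t_3$ to be the \emph{demarcation time}, but in $\guessupdatesw$ the call demarcating $\sdiffest(t^{(k)}_{i,j,\ell-1},t^{(k)}_{i,j,\ell},t,\gamma_j,\eta,\delta)$ is made at the very instant the new timestamp $t^{(k)}_{i,j,\ell}$ is created; at that moment the interval $[t^{(k)}_{i,j,\ell},t_3)$ is empty and $F(v)=0$. No ``trailing block structure'' has accumulated any mass yet, and the $r<100\cdot 2^{j+10}$ cap plays no role here. So your sentence ``the estimator is locked only once the suffix \ldots\ has accumulated enough mass'' does not match the algorithm, and your claimed inequality $F(v)\ge c_2\cdot n^C/2^k$ at demarcation is false.

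The fix is simple and uses pieces you already have. \defref{def:diff:est:suf} only requires the precondition to hold at \emph{some} witnessing time; the paper implicitly takes that time to be $x=m$. For every block that $\stitchsw$ actually uses one has $t^{(k)}_{i,j,\ell}\le m-W+1$, hence $F(t^{(k)}_{i,j,\ell}:m)\ge F(W)$, and your own observation $n^C/2^k=\Theta(F(t_i,m))=\Theta(F(W))$ (from \lemref{lem:sw:top:constant} and the definition of $k$) then yields $F(t^{(k)}_{i,j,\ell}:m)\ge c_2\cdot n^C/2^{k}$. With this choice of witnessing time your ratio computation $F(u)/F(v)\le (c_1/c_2)\,2^{-j-10}\le 2^{3-j}=\gamma_j$ goes through verbatim. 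What you flagged as the ``main obstacle''—tracing the interaction of demarcation with suffix growth—dissolves once you stop insisting the witness time coincide with demarcation.
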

\begin{proof}
Recall that for a suffix-pivoted difference estimator to be well-defined, we require $F(t^{(k)}_{i,j,\ell-1}:t^{(k)}_{i,j,\ell})\le\gamma_j\cdot F(t^{(k)}_{i,j,\ell}:x)$ 
at some time $x$. 
Since $\gamma_j=2^{3-j}$ and $\guessupdatesw$ creates a new partition at level $j$ when $\calA(t^{(k)}_{s+1,j,r_k},t-1,1,\delta)\le n^C/2^{j+k+10}$, then it follows that $F(t^{(k)}_{i,j,\ell-1}:t^{(k)}_{i,j,\ell})\le\gamma_j\cdot F(t^{(k)}_{i,j,\ell}:x)$. 
Thus the conditions for a suffix-pivoted difference estimator hold, so $\sdiffest(t^{(k)}_{i,j,\ell-1},t^{(k)}_{i,j,\ell},m,\gamma_j,\eta,\delta)$ gives an additive $\eta\cdot F(t^{(k)}_{i,j,\ell}:t)$ approximation to $F(t^{(k)}_{i,j,\ell-1}:m)-F(t^{(k)}_{i,j,\ell}:m)$ by the guarantees of the difference estimator. 
\end{proof}

We now lower bound the values of each difference. 
This is important for bounding the number of difference estimators that are used to stitch together the estimate at the end of the stream. 
For example, without a lower bound on the difference estimators at each level, it could be possible that we use a linear number of difference estimators to form our estimate of $F(W)$, in which case the space would be too large. 
Using the following lower bound on the difference estimator, we ultimately show only a logarithmic number of difference estimators is used to form our estimate of $F(W)$. 
The intuition behind the proof of \lemref{lem:sliding:delta:lower} is that if the contributions of the difference estimators of two consecutive blocks are \emph{both} small, then $\mergesw$ will end up merging the blocks. 
Hence, we should expect the contribution of any two consecutive blocks to be sufficiently large. 
\begin{lemma}[Geometric lower bounds on splitting times]
\lemlab{lem:sliding:delta:lower}
Let $i$ be the largest index such that $t_i\le m-W+1$. 
Let $k$ be the smallest integer such that $n^C/2^k\le\calA(t_i,m,1,\delta)$. 
Let $\calE$ be the event that all subroutines $\calA$ and $\sdiffest$ in \algref{alg:sliding} succeed. 
For any $\ell\in[r_k-2]$, let $u$ denote the frequency vector between times $t^{(k)}_{i,j,\ell}$ and $t^{(k)}_{i,j,\ell+2}$ and let $w$ denote the frequency vector between times $t^{(k)}_{i,j,\ell+2}$ and $t$. 
Finally, let $w$ denote the frequency vector between times $t_i$ and $t$. 
Then conditioned on $\calE$, we have that
\[F(u+v)-F(v)\ge 2^{-j-8}\cdot F(w).\]
\end{lemma}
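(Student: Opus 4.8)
\textbf{Proof proposal for \lemref{lem:sliding:delta:lower}.}

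The plan is to argue by contradiction: suppose the combined contribution of the two consecutive blocks between times $t^{(k)}_{i,j,\ell}$ and $t^{(k)}_{i,j,\ell+2}$ were strictly smaller than $2^{-j-8}\cdot F(w)$. I would then show that, at the time these timestamps were created or last modified, the $\mergesw$ subroutine would have been triggered to merge the sketches for the two intermediate blocks, contradicting the fact that $t^{(k)}_{i,j,\ell+1}$ still survives as a distinct timestamp in the final data structure. The key quantitative hook is the merge test in \algref{alg:mergesw:frame}: a timestamp is merged whenever $\calA(t^{(k)}_{i,j,\ell-1},t,1,\delta)\le n^C/2^{k+j+10}$, i.e., whenever the constant-factor estimate of the block's mass falls below the threshold tied to the current guess $k$. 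So the first step is to translate the hypothesized smallness of $F(u+v)-F(v)$ (where $u,v$ are the two blocks' frequency vectors) into a statement that the relevant $\calA$ estimate is below $n^C/2^{k+j+10}$, using the conditioning on $\calE$ so that $\calA$ is a genuine constant-factor approximation.

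The second step is to pin down the relationship between the guess threshold $n^C/2^k$ and $F(w)$, where $w$ is the frequency vector from $t_i$ to $t$ (using the ``$w$'' of the lemma's last sentence, i.e., the suffix starting at $t_i$). By \lemref{lem:sw:top:constant} and the choice of $k$ as the smallest integer with $n^C/2^k\le\calA(t_i,m,1,\delta)$, we get that $n^C/2^k$ is within a constant factor of $F(t_i,m)$, hence within a constant factor of $F(W)$, and more generally tracks $F(w)$ up to constants along the stream (using monotonicity of $F$ in the insertion-only substream and the constant-factor guarantees). Plugging in $\beta=\ceil{\log(100\cdot 4^q/\eps^q)}$ and $\gamma_j=2^{3-j}$, the threshold $n^C/2^{k+j+10}$ becomes a fixed small constant times $2^{-j}\cdot F(w)$; choosing the constant $8$ in the exponent of the claimed bound $2^{-j-8}$ with a bit of slack absorbs the constant-factor losses from $\calE$-conditioning and the $(1+O(\eta))$ errors. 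So if $F(u+v)-F(v)<2^{-j-8}F(w)$ then the $\calA$ estimate of that combined block mass is below the merge threshold, and $\mergesw$ would have merged the block ending at $t^{(k)}_{i,j,\ell+1}$ into its neighbor.

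The third step is to handle the bookkeeping carefully: the merge test in \algref{alg:mergesw:frame} is phrased in terms of a single block $\calA(t^{(k)}_{i,j,\ell-1},t^{(k)}_{i,j,\ell+1},1,\delta)$ being below threshold, whereas I want to talk about the \emph{pair} of blocks. So I would iterate the argument — first the block $[t^{(k)}_{i,j,\ell},t^{(k)}_{i,j,\ell+1})$, then (after any relabeling) the block $[t^{(k)}_{i,j,\ell+1},t^{(k)}_{i,j,\ell+2})$ — observing that if the sum of the two masses is below $2^{-j-8}F(w)$, then at least one of them (and in fact after merging, their union) is below the $n^C/2^{k+j+10}$ threshold, forcing a merge. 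The surviving structure after $\mergesw$ then cannot contain both $t^{(k)}_{i,j,\ell+1}$ and its two neighbors, contradiction. I would also need to note that $F(w)$ itself only grows over time (insertion-only within the relevant window), so the threshold comparison made at the \emph{creation} time of these timestamps, where $F$ was smaller, is the binding one — this is exactly what $\guessupdatesw$'s condition $\calA(t^{(k)}_{s+1,j,r_k},t-1,1,\delta)\in[n^C/2^{j+k+11},n^C/2^{j+k+10}]$ encodes, and I would lean on it to control the block masses at creation.

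The main obstacle I anticipate is the interplay between the \emph{guess} $k$ and the \emph{actual} value of $F(w)$: the guesses in $\guessupdatesw$ are exponential, so $n^C/2^k$ need not be a tight estimate of $F(w)$ at every intermediate time, only at the terminal time via the choice in $\stitchsw$. I will need to argue that for the particular $k$ selected in $\stitchsw$ (the one actually used to stitch), the guess was ``live'' throughout the relevant portion of the stream — i.e., it was never terminated for having more than $100\cdot 2^{j+10}$ blocks — and that this liveness forces $n^C/2^k$ and $F(w)$ to agree up to the constants I need. Reconciling the ``too many blocks $\Rightarrow$ guess abandoned'' mechanism with the lower bound I am proving (they are really two sides of the same counting argument) is where the care is required; everything else is constant-chasing that the choices of $\eta$, $\beta$, and $\gamma_j$ have been set up to make go through.
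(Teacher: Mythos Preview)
Your high-level idea---the survival of the middle timestamp $t^{(k)}_{i,j,\ell+1}$ means the merge test must have failed, hence the two-block contribution is large---is exactly the paper's argument. But the execution diverges in a way that creates a real gap.

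The paper's proof does \emph{not} work with the $\calA$-on-block test you read off from \algref{alg:mergesw:frame}. It instead treats the merge condition as
\[
\sdiffest(t^{(k)}_{i,j,\ell-1},t^{(k)}_{i,j,\ell},t,\gamma_j,\eta,\delta)+\sdiffest(t^{(k)}_{i,j,\ell},t^{(k)}_{i,j,\ell+1},t,\gamma_j,\eta,\delta)\le 2^{-j-10}\cdot\calA_{i}(t_{i},t,\eta,\delta),
\]
i.e., a sum of \emph{difference-estimator} outputs against a threshold stated directly in terms of $\calA_i(t_i,t,\cdot)$. With this test the proof is two lines: since $t^{(k)}_{i,j,\ell+1}$ was not merged, the left side exceeds the right; by \lemref{lem:diff:est:well:defined} each $\sdiffest$ term is a $2$-approximation of the corresponding telescoping difference $F(u_1{+}u_2{+}v)-F(u_2{+}v)$ and $F(u_2{+}v)-F(v)$, and $\calA_i(t_i,t,\eta,\delta)$ is a $2$-approximation of $F(w)$, so the telescoped sum $F(u{+}v)-F(v)\ge 2^{-j-8}F(w)$. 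No contradiction framing, no $n^C/2^k$ bookkeeping, no guess-liveness analysis---your entire ``second step'' and ``main obstacle'' are bypassed because the threshold is already in terms of $\calA_i(t_i,t,\cdot)\approx F(w)$.

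Your route through the $\calA$-on-block test has a genuine gap: $\calA(t^{(k)}_{i,j,\ell-1},t^{(k)}_{i,j,\ell+1},1,\delta)$ estimates the \emph{block moment} $F(u)$, not the difference $F(u{+}v)-F(v)$. For a general $(\eps,\eps^q)$-smooth $F$ there is no assumed inequality letting you pass from ``$F(u{+}v)-F(v)$ small'' to ``$F(u)$ small'' (or the reverse), so your translation step does not go through as stated. For $F_p$ with $p\ge 1$ one does have $F_p(u{+}v)-F_p(v)\ge F_p(u)$, which would rescue the direction you need; but for $p<1$ it fails, which is exactly why the appendix switches to an $\sdiffest$-based merge in \algref{alg:mergesw:tinyp} and proves \lemref{lem:sliding:delta:lower:smallp} separately.

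Minor: you mis-parse the variables. In the lemma $u$ is the \emph{combined} two-block vector (times $t^{(k)}_{i,j,\ell}$ to $t^{(k)}_{i,j,\ell+2}$) and $v$ is the suffix from $t^{(k)}_{i,j,\ell+2}$ to $t$; the statement has a typo defining $w$ twice, one of which should be $v$. Your ``$u,v$ are the two blocks' frequency vectors'' is not what is meant.
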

\begin{proof}
Let $u_1$ denote the frequency vector between times $t^{(k)}_{i,j,\ell}$ and $t^{(k)}_{i,j,\ell+1}$ and $u_2$ denote the frequency vector between times $t^{(k)}_{i,j,\ell+1}$ and $t^{(k)}_{i,j,\ell+2}$. 
Recall that $\mergesw$ merges two instances of $\sdiffest$ when 
\begin{align*}
\sdiffest(t^{(k)}_{i,j,\ell-1},t^{(k)}_{i,j,\ell},t,\gamma_j,\eta,\delta)+\sdiffest(t^{(k)}_{i,j,\ell},t^{(k)}_{i,j,\ell+1},t,\gamma_j,\eta,\delta)\le 2^{-j-10}\cdot\calA_{i}(t_{i},t,\eta,\delta). 
\end{align*}
Since we do not remove $t^{(k)}_{i,j,\ell+1}$ from the list of timestamps, then we have
\[\sdiffest(t^{(k)}_{i,j,\ell-1},t^{(k)}_{i,j,\ell},t,\gamma_j,\eta,\delta)+\sdiffest(t^{(k)}_{i,j,\ell},t^{(k)}_{i,j,\ell+1},t,\gamma_j,\eta,\delta)>2^{-j-10}\cdot\calA_{i}(t_{i},t,\eta,\delta).\]
Since the difference estimators are well-defined by \lemref{lem:diff:est:well:defined}, then $\sdiffest(t^{(k)}_{i,j,\ell},t^{(k)}_{i,j,\ell+1},t,\gamma_j,\eta,\delta)$ is at most a $2$-approximation to $F(u_1+u_2+v)-F(u_2+v)$ and we similarly obtain a $2$-approximation to $F(u_2+v)-F(v)$. 
Therefore, 
\[F(u)-F(v)=F(u_1+u_2+v)-F(v)\ge 2^{-j-8}\cdot F(w).\] 
\end{proof}
\noindent
We next bound the number of level $j$ difference estimators that can occur from the end of the previous level $j-1$ difference estimator to the time when the sliding window begins. 
We say a difference estimator $\sdiffest$ is active if $\ell\in[a_j,b_j]$ for the indices $a_j$ and $b_j$ defined in \algref{alg:sliding}. 
The active difference estimators in each level will be the algorithms whose output are subtracted from the initial rough estimate to form the final estimate of $F_2(W)$. 
The intuition behind the proof of \lemref{lem:sliding:active:indices} is that \lemref{lem:sliding:delta:lower} lower bounds the contributions of the difference estimators of two consecutive blocks. 
Hence there cannot be too many active difference estimators in a level or else their contributions will be too large for the level. 
\begin{lemma}[Number of active level $j$ difference estimators]
\lemlab{lem:sliding:active:indices}
Let $i$ be the largest index such that $t_i\le m-W+1$.  
Let $k$ be the smallest integer such that $n^C/2^k\le\calA(t_i,m,1,\delta)$. 
Let $\calE$ be the event that all subroutines $\calA$ and $\sdiffest$ in \algref{alg:sliding} succeed. 
For each $j\in[\beta]$, let $a_j$ be the smallest index such that $t_{i,j,a_j}\ge c_{j-1}$ and let $b_j$ be the largest index such that $t_{i,j,b_j}\le m-W+1$. 
Then conditioned on $\calE$, we have $b_j-a_j\le 512$. 
\end{lemma}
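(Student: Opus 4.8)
\textbf{Proof proposal for \lemref{lem:sliding:active:indices}.}

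The plan is to argue by contradiction: suppose $b_j - a_j > 512$, i.e., there are more than $512$ consecutive active level $j$ timestamps $t_{i,j,a_j}, \ldots, t_{i,j,b_j}$, all lying between $c_{j-1}$ and $m-W+1$. I will pair up these blocks into disjoint consecutive pairs (roughly $256$ pairs), and apply \lemref{lem:sliding:delta:lower} to each pair to obtain a lower bound of $2^{-j-8}\cdot F(w)$ on the contribution of each pair to the difference, where $w$ is the frequency vector between $t_i$ and $t$ (so $F(w) = F(t_i,m)$ at the relevant query time, up to the usual $(1\pm\eta)$ slack from conditioning on $\calE$). Summing over the $\sim 256$ disjoint pairs, the total difference accumulated across these level $j$ blocks is at least $256 \cdot 2^{-j-8} \cdot F(t_i,m) = 2^{-j}\cdot F(t_i,m)$.

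Next I would upper bound this same total difference. The level $j$ partition is created by $\guessupdatesw$ precisely when $\calA(t^{(k)}_{s+1,j,r_k},t-1,1,\delta)$ crosses below $n^C/2^{j+k+10}$, and by the choice of $k$ in \lemref{lem:sw:top:constant} we have $n^C/2^k \le \calA(t_i,m,1,\delta) \le (1+\eta) F(t_i,m) \le 2 n^C/2^k$ conditioned on $\calE$; so each level $j$ block contributes at most roughly $n^C/2^{j+k+10} \le 2^{-j-9}\cdot F(t_i,m)$ to $F$ by monotonicity (the block starts accumulating only once its predecessor's suffix value dropped below the threshold, and a new block is demarcated once the threshold is exceeded again). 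Telescoping the differences $F(t_{i,j,\ell-1}:t) - F(t_{i,j,\ell}:t)$ over all active $\ell \in [a_j, b_j]$ gives a total of $F(t_{i,j,a_j-1}:t) - F(t_{i,j,b_j}:t) \le F(t_i,m)$, but more carefully: each individual active block's contribution is bounded by $O(2^{-j})\cdot F(t_i,m)$ because the $\guessupdatesw$ demarcation ensures each block's $F$-mass is at most $\gamma_j \cdot F(\text{suffix}) = 2^{3-j}\cdot F(\text{suffix})$-ish of the running suffix value, and because the algorithm terminates a guess $k$ once $r \ge 100\cdot 2^{j+10}$ blocks accumulate. Combining the two bounds forces $2^{-j}\cdot F(t_i,m) \le (\text{const})\cdot 2^{-j}\cdot F(t_i,m)$, and I would choose the constant $512$ so that the lower bound strictly exceeds the upper bound unless $b_j - a_j \le 512$, yielding the contradiction.

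The main obstacle I anticipate is getting the bookkeeping of the two bounds to be mutually tight: \lemref{lem:sliding:delta:lower} lower-bounds \emph{pairs} of consecutive blocks by $2^{-j-8}\cdot F(w)$, so naively summing over $\lfloor (b_j-a_j)/2\rfloor$ pairs must be reconciled against an upper bound that is essentially "all active level $j$ blocks together contribute at most $O(2^{-j})\cdot F(t_i,m)$," where the latter has to be extracted from the $\guessupdatesw$ thresholds (the value $n^C/2^{j+k+10}$) together with the merge rule in $\mergesw$ and the choice of $k$ from \lemref{lem:sw:top:constant}. One needs to be careful that the query time $t$ at which \lemref{lem:sliding:delta:lower} is applied (and at which $F(w)$ is evaluated) is consistent across all pairs — I would fix $t$ to be the current time, note all active blocks have their splitting times at or before $m-W+1 \le t$, and use monotonicity of $F$ in the suffix together with \lemref{lem:sw:top:constant} to replace $F(w) = F(t_i,t)$ by a quantity comparable to $F(t_i,m) \approx n^C/2^k$. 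The constant-chasing is routine once the right telescoping identity and the right normalization (via $k$) are in place.
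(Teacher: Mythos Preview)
Your lower-bound half (pairing consecutive level $j$ blocks and invoking \lemref{lem:sliding:delta:lower}) is exactly what the paper does, and for $j=1$ your telescoping upper bound $F(t_{i,j,a_1},m)-F(t_{i,j,b_1},m)\le F(t_i,m)\le 2F(W)$ is the right complement.

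The gap is in your upper bound for $j>1$. You try to extract an $O(2^{-j})\cdot F(t_i,m)$ cap on the total active-level-$j$ difference from the $\guessupdatesw$ thresholds, but those thresholds bound the \emph{block mass} $F(t^{(k)}_{i,j,\ell},t^{(k)}_{i,j,\ell+1})$, not the \emph{suffix difference} $F(t^{(k)}_{i,j,\ell},m)-F(t^{(k)}_{i,j,\ell+1},m)$; for superadditive $F$ (e.g.\ $F_p$ with $p>1$) the latter can be far larger than the former, so your per-block bound does not control the telescoped sum. The plain telescoping bound you mention, $F(t_i,m)$, is too weak by a factor $2^j$, and the $r<100\cdot 2^{j+10}$ guard only bounds the number of blocks by $O(2^j)$, not by a constant. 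So as written the two halves never meet.

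The missing idea, and the one the paper uses, is the definition of $c_{j-1}$ itself. Since $\guessupdatesw$ creates a level $j$ timestamp whenever it creates a level $j-1$ timestamp, $c_{j-1}$ is also a level $j$ timestamp, so $t^{(k)}_{i,j,a_j}=c_{j-1}$. All active level $j$ timestamps lie in $[c_{j-1},\,m-W+1]$, hence strictly before the \emph{next} level $j-1$ timestamp $t^{(k)}_{i,j-1,b_{j-1}+1}$ (which exceeds $m-W+1$ by maximality of $b_{j-1}$). Thus the telescoped active-level-$j$ difference is at most the single level-$(j-1)$ difference
\[
F(c_{j-1},m)-F\bigl(t^{(k)}_{i,j-1,b_{j-1}+1},m\bigr),
\]
and this one quantity is what you must bound by $O(2^{-j})\cdot F(t_i,m)$ (via the level-$(j-1)$ splitting rule). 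Combined with the pair lower bound at level $j$, you now get the constant bound on $b_j-a_j$. In short: the upper bound for $j>1$ comes from being trapped inside one level-$(j-1)$ block, not from summing per-block thresholds at level $j$.
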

\begin{proof}
Conditioned on $\calE$, we have that for each $j\in[\beta]$ and all $\ell$, 
\[2^{-j-8}\cdot F(W)<F(t^{(k)}_{i,j,\ell},t)-F(t^{(k)}_{i,j,\ell+2},t),\]
by \lemref{lem:sliding:delta:lower}. 
By \lemref{lem:sw:top:constant}, $F_p(t_i,m)\le 2F_p(W)$. 
Thus by a telescoping argument, we have that for $j=1$, $a_j-b_j<512$. 

For $j>1$, suppose by way of contradiction that $b_j-a_j\ge 512$. 
Let $k=b_j-a_j$ and for $x\in[k]$, let $u_x$ be the frequency vector induced by the updates of the stream from time $a_j+{x-1}$ to $a_j+x$. 
Let $u$ be the frequency vector induced by the updates of the stream from time $t_{i,j,k}$ to $m$ and $v$ be the frequency vector induced by the updates from $t_i$ to $m$. 
By \lemref{lem:sliding:delta:lower},
\[F\left(u+\sum_{\ell=1}^k u_{\ell}\right)-F(u)\ge8\cdot2^{-j-8}\cdot F(v),\]
since there are at least $8$ disjoint pairs of tuples $(i,j,\ell)$ and $(i,j,\ell+1)$ if $k>512$. 

Thus if $b_j-a_j>512$, then the sum of the outputs of the active difference estimators at level $j$ is more than $8\cdot2^{-j-8}\cdot F(v)$. 
In particular since $t^{(k)}_{i,j,a_j}\ge c_{j-1}$, the sum of the outputs of the active difference estimators at level $j$ after $c_{j-1}$ is at least $8\cdot2^{-j-8}\cdot F(v)$. 
However, by \lemref{lem:sliding:delta:lower}, each difference estimator at level $j-1$ has output at least $2^{-j-7}\cdot F(u)>8\cdot 2^{-j-8}\cdot F(v)$. 
Specifically, the difference estimator from times $t^{(k)}_{i,j,c_{j-1}}$ to $t^{(k)}_{i,j,c_{j}}$ must have output at least $2^{-j-7}\cdot F(u)>8\cdot 2^{-j-8}\cdot F(v)$. 
Therefore, there exists some other $z>c_{j-1}$ such that $t^{(k)}_{i,j-1,z}\le m-W+1$, which contradicts the maximality of $c_{j-1}$ at level $j-1$. 
\end{proof}
We next upper bound the value of each difference estimator. 
This is important for upper bounding the error associated with the sliding window possibly not beginning exactly where a difference estimator begins. 
For example, if a difference estimator exactly computes $F(1:10)-F(4:10)$ but the sliding window consists of the last eight updates, i.e., $F(3:10)$, then the error of our estimator is still only lower bounded by the difference. 
Thus it is crucial to not only upper bound the error of the difference estimator, but also the value of the difference estimator.  
The intuition behind the proof of \lemref{lem:sliding:delta:upper} is that the subroutine $\mergesw$ only creates a new block for each difference estimator when the contribution is in a particular range. 
Specifically, the contribution cannot be too large or else a new block would have already been formed. 
Then by smoothness of the function, the contribution of difference estimator cannot grow to be too large regardless of the subsequent updates in the stream. 
\begin{lemma}[Geometric upper bounds on splitting times]
\lemlab{lem:sliding:delta:upper}
Let $i$ be the largest index such that $t_i\le m-W+1$. 
Let $k$ be the smallest integer such that $n^C/2^k\le\calA(t_i,m,1,\delta)$. 
Let $\calE$ be the event that all subroutines $\calA$ and $\sdiffest$ in \algref{alg:sliding} succeed. 
Then conditioned on $\calE$, we have that for each $j\in[\beta]$ and each $\ell$ that either $t^{(k)}_{i,j,\ell+1}=t^{(k)}_{i,j,\ell}+1$ or $F(t^{(k)}_{i,j,\ell},t)-F(t^{(k)}_{i,j,\ell+1},t)\le 2^{-j/q-7/q}\cdot F(t^{(k)}_{i,j,\ell+1},t)$. 
\end{lemma}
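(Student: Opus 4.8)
The plan is to combine two facts: (i) the block‑creation rules in \algref{alg:updatesw:frame} and \algref{alg:mergesw:frame} force every non‑singleton block at granularity $j$ under guess $k$ to have a small $F$‑value, namely $F$ of its own frequency vector is at most $n^C/2^{j+k+9}$; and (ii) the suffix starting at that block's right endpoint contains the entire active window, so its $F$‑value is at least $F(W)\ge n^C/2^{k+2}$. Dividing, the block accounts for only a $2^{-\Theta(j)}$ fraction of $F$ on the suffix that includes it, and the $(\eps,\eps^q)$‑smoothness of $F$ then upgrades this to the stated bound on the \emph{marginal} contribution $F(t^{(k)}_{i,j,\ell}:t)-F(t^{(k)}_{i,j,\ell+1}:t)$. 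Throughout I condition on $\calE$, and I take $t$ to be the query time $t=m$ (more generally, any time $t$ for which $[t^{(k)}_{i,j,\ell+1},t]$ contains the active window, which is exactly the regime in which \algref{alg:stitchsw:frame} invokes this lemma); for $\ell$ whose block lies after $m-W+1$ there is nothing used downstream.

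First I would bound $F$ of a non‑singleton block. If $t^{(k)}_{i,j,\ell+1}>t^{(k)}_{i,j,\ell}+1$, the endpoint $t^{(k)}_{i,j,\ell+1}$ was fixed either inside $\guessupdatesw$, whose guard requires $\calA(t^{(k)}_{i,j,\ell},t^{(k)}_{i,j,\ell+1}-1,1,\delta)\le n^C/2^{j+k+10}$, or inside $\mergesw$ after a chain of merges, each of which only fires when $\calA(\cdot,\cdot,1,\delta)\le n^C/2^{j+k+10}$ on the merged interval. Since $\calA$ is a factor‑$2$ approximation under $\calE$, in either case the frequency vector $u$ of block $\ell$ satisfies $F(u)\le 2\cdot n^C/2^{j+k+10}=n^C/2^{j+k+9}$. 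Next I lower bound the suffix: let $v$ be the frequency vector on $[t^{(k)}_{i,j,\ell+1},m]$; since $t^{(k)}_{i,j,\ell+1}\le m-W+1$, the interval contains the window, so by monotonicity $F(v)\ge F(W)$, and by the choice of $k$ in \algref{alg:stitchsw:frame} together with \lemref{lem:sw:top:constant}, $n^C/2^k\le\calA(t_i,m,1,\delta)\le 2F(t_i,m)\le 4F(W)$, hence $F(W)\ge n^C/2^{k+2}$. Combining the two, $F(u)\le n^C/2^{j+k+9}\le 2^{-(j+7)}\cdot n^C/2^{k+2}\le 2^{-(j+7)}F(v)$.

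Finally I would pass from the small $F$‑value of $u$ to a small marginal contribution using smoothness. When $q=1$ (e.g.\ $F_p$ with $p\in(0,1]$), the quantity $F(u+v)-F(v)$ is nonincreasing as $v$ grows, so it never exceeds its value when $v$ is empty, namely $F(u)$; therefore $F(u+v)-F(v)\le F(u)\le 2^{-(j+7)}F(v)=2^{-(j+7)/q}F(v)$, which is the claim. For $q>1$ (i.e.\ $F_p$ with $p>1$, $q=p$) one genuinely needs the $(\eps,\eps^q)$‑smoothness of $F$: let $t^\star$ be the first time at which $F$ on $[t^{(k)}_{i,j,\ell+1},t^\star]$ reaches $2^{j+8}F(u)$, and write $v_{t^\star}$ for that frequency vector. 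Using $\|u\|_p\le 2^{-(j+7)/p}\|u+v_{t^\star}\|_p$ together with $F_p(u+v)-F_p(v)\le p\,\|u+v\|_p^{p-1}\|u\|_p$, the marginal contribution at $t^\star$ is at most a $\beta:=O(2^{-(j+7)})$ fraction of $F$ on $[t^{(k)}_{i,j,\ell},t^\star]$; smoothness then propagates this forward, so for every later $t$ the marginal contribution is at most an $\alpha=\beta^{1/q}=O(2^{-(j+7)/q})$ fraction of $F(t^{(k)}_{i,j,\ell},t)\le 2F(t^{(k)}_{i,j,\ell+1},t)$, giving the stated inequality (the $O(\cdot)$ constants are absorbed exactly as the analogous constants are absorbed in \lemref{lem:sliding:delta:lower}).

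The step I expect to be the obstacle is the last one for $q>1$: because $F(u+v)-F(v)$ is \emph{increasing} in the suffix, the simple monotonicity argument of the $q=1$ case is unavailable, and one must track both the $(\eps,\eps^q)$‑smoothness constants and the only approximately‑valid ``sublinearity'' inequality $F_p(u+v)-F_p(v)=O\!\big(p\,\|u\|_p\|v\|_p^{p-1}\big)$ carefully enough that the final exponent is $-(j+7)/q$ rather than $-(j+c_p)/q$ for a smoothness/arity‑dependent $c_p$. I would resolve this either by checking that the thresholds $2^{-(j+k+10)},2^{-(j+k+11)}$ in \algref{alg:updatesw:frame} leave the requisite slack, or, more robustly, by carrying the bound as $2^{-(j+7)/q}$ up to an absolute constant, which is all that is used downstream (it feeds only into the additive error of a single suffix‑pivoted difference estimator and is rescalable by a constant in $\eps$).
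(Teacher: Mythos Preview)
Your overall strategy—bound $F(u)$ of the block and the suffix $F(v)$ at the query time, then convert to a bound on the marginal $F(u+v)-F(v)$—is different from the paper's route. The paper anchors the argument at the \emph{merge time} $x$: because the timestamp $t^{(k)}_{i,j,\ell}+1$ was deleted by $\mergesw$, the merge condition (which the proof states as a bound on the sum of the two $\sdiffest$ outputs, not on $\calA$ of the block) gives directly that $F(u+x_1)-F(x_1)\le 2^{-j-7}F(x_1)$ at time $x$, where $x_1$ is the suffix at that moment. A single application of $(\eps,\eps^q)$-smoothness with $\beta=2^{-j-7}$ then yields $\alpha=2^{-(j+7)/q}$ at every later time $t$. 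In particular the paper never needs to pass from ``$F(u)$ small'' to ``$F(u+v)-F(v)$ small''; it already has the latter at time $x$. This also explains why the paper's statement holds for all $\ell$, not just those with $t^{(k)}_{i,j,\ell+1}\le m-W+1$.

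Your $q>1$ step has a concrete arithmetic error that derails the argument as written. From $\|u\|_p\le 2^{-(j+7)/p}\|u+v_{t^\star}\|_p$ and $F_p(u+v)-F_p(v)\le p\,\|u+v\|_p^{p-1}\|u\|_p$ you obtain at $t^\star$ a marginal ratio of $p\cdot 2^{-(j+7)/p}$, i.e.\ $\beta=O(2^{-(j+7)/p})$, \emph{not} $O(2^{-(j+7)})$ as you claim. Feeding this into smoothness would give $\alpha=\beta^{1/q}=O(2^{-(j+7)/p^2})$, which is too weak. The fix is to drop the $t^\star$/smoothness detour entirely: your own step~3 already gives $F_p(u)\le 2^{-(j+7)}F_p(v)$ at the query time, so $\|u\|_p\le 2^{-(j+7)/p}\|u+v\|_p$ there, and the same H\"older/mean-value bound applied directly at $t=m$ yields $F_p(u+v)-F_p(v)\le p\cdot 2^{-(j+7)/p}F_p(u+v)\le 2p\cdot 2^{-(j+7)/q}F_p(v)$, which is the desired inequality up to a $p$-dependent constant. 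Thus your route can be salvaged for $F_p$, but it is $F_p$-specific; the paper's merge-time argument uses only abstract smoothness and is what makes the lemma hold for a general $(\eps,\eps^q)$-smooth $F$.
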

\begin{proof}
Suppose $t^{(k)}_{i,j,\ell+1}\neq t^{(k)}_{i,j,\ell}+1$. 
Then at some time $x$, the timestamp $t^{(k)}_{i,j,\ell}+1$ must have been merged by subroutine $\mergesw$. 
Thus 
\[\sdiffest(t^{(k)}_{i,j,\ell},t^{(k)}_{i,j,\ell'},x,\gamma_j,\eta,\delta)+\sdiffest(t^{(k)}_{i,j,\ell'},t^{(k)}_{i,j,\ell+1},x,\gamma_j,\eta,\delta)\le 2^{-j-10}\cdot\calA_{i}(t_{i},x,\eta,\delta).\]
Let $u_1$ be frequency vector representing the updates from time $t^{(k)}_{i,j,\ell}$ to $t^{(k)}_{i,j,\ell'}-1$, $u_2$ represent times $t^{(k)}_{i,j,\ell'}$ to $t^{(k)}_{i,j,\ell+1}$, and $u$ represent times $t^{(k)}_{i,j,\ell}$ to $t^{(k)}_{i,j,\ell+1}$, so that $u_1+u_2=u$. 
Let $x_1$ represent the updates between time $t^{(k)}_{i,j,\ell+1}$ and $x$ and let $x_2$ represent the updates between time $x+1$ and $t$. 
Since the difference estimators are well-defined by \lemref{lem:diff:est:well:defined}, we have that $\sdiffest(t^{(k)}_{i,j,\ell},t^{(k)}_{i,j,\ell'},x,\gamma_j,\eta,\delta)$ is at most a $2$-approximation to $F(u_1+u_2+x_1)-F(u_2+x_1)$ and we similarly obtain a $2$-approximation to $F(u_2+x_1)-F(x_1)$. 
Thus, 
\[F(u+x_1)-F(x_1)=F(u_1+u_2+x_1)-F(x_1)\le 2^{-j-7}\cdot F(x_1).\]
By the $(\eps,\eps^q)$ smoothness of $F$, we thus have that for any vector $x_2$,
\[F(u+v)-F(v)=F(u+x_1+x_2)-F(x_1+x_2)\le 2^{-j/q-7/q}\cdot F(x_1+x_2)=2^{-j/q-7/q}\cdot F_p(v).\]
\end{proof}
Finally, we show the correctness of \algref{alg:sliding}. 
The main intuition behind \lemref{lem:sw:frame:correct} is that there are two sources of error. 
The first source of error originates from the boundaries of the blocks corresponding to the difference estimator not aligning with the beginning of the sliding window. 
This error, resulting from no difference estimator being assigned to compute the exactly correct value, corresponds to the elements marked by blue in \figref{fig:sliding} and cannot be accounted for even if all difference estimators have zero error.  
On the other hand, this error is upper bounded by the contribution of a difference estimator at the bottom level, which is bounded by \lemref{lem:sliding:delta:upper}. 
The second source of error stems from the approximation error caused by each of the difference estimators. 
Since \lemref{lem:sliding:active:indices} bounds the total number of difference estimators being used in our output, we can also upper bound the total approximation error due to the difference estimators. 
\begin{lemma}[Correctness of framework]
\lemlab{lem:sw:frame:correct}
With high probability, \algref{alg:sliding} gives a $(1+\eps)$-approximation to the value of $F(W)$. 
\end{lemma}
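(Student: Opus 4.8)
\textbf{Proof plan for \lemref{lem:sw:frame:correct}.}
The plan is to condition on the event $\calE$ that all subroutines $\calA$ and $\sdiffest$ in \algref{alg:sliding} succeed; since $\delta=\frac{1}{\poly(m)}$ and there are only $\poly(m)$ subroutines, a union bound gives $\PPr{\calE}\ge 1-\frac{1}{\poly(m)}$, so it suffices to prove correctness conditioned on $\calE$. First I would fix $i$ to be the largest index with $t_i\le m-W+1$ and $k$ the smallest integer with $n^C/2^k\le\calA(t_i,m,1,\delta)$, as in $\stitchsw$, and recall from \lemref{lem:sw:top:constant} that $F(W)\le F(t_i,m)\le 2F(W)$. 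Thus it is enough to show that the output $Z=X-\sum_{j=1}^\beta Y_j$ approximates $F(t_i,m)$ up to additive error $\O{\eps}\cdot F(t_i,m)$, since a rescaling of $\eps$ by a constant then yields a $(1+\eps)$-approximation to $F(W)$.

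Next I would set up the telescoping decomposition. Let $c_0=t_i$ and, for each level $j\in[\beta]$, let $a_j,b_j,c_j$ be as defined in $\stitchsw$, so $c_j=t^{(k)}_{i,j,b_j}$. The quantity we actually want, $F(t_i,m)$, telescopes as $F(t_i,m)=F(c_\beta,m)+\sum_{j=1}^\beta\left(F(c_{j-1},m)-F(c_j,m)\right)$, while our estimate has $X$ a $(1+\eta)$-approximation to $F(t_i,m)$ and each $Y_j=\sum_{\ell=a_j}^{b_j-1}\sdiffest(t^{(k)}_{i,j,\ell},t^{(k)}_{i,j,\ell+1},m,\gamma_j,\eta,\delta)$ which, by \lemref{lem:diff:est:well:defined} and a further telescoping over $\ell$, is an estimate of $F(c'_{j},m)-F(c_j,m)$ where $c'_j=t^{(k)}_{i,j,a_j}$. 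There are two error sources to control. The \emph{alignment error}: the blocks need not start exactly at $m-W+1$, so $c_\beta$ (the deepest splitting time at or before $m-W+1$) may differ from $m-W+1$, and $c'_j$ may differ from $c_{j-1}$. I would bound $F(c_\beta,m)-F(W)$ and each $F(c'_j,m)-F(c_{j-1},m)$ using \lemref{lem:sliding:delta:upper}, which says each block contributes at most $2^{-j/q-7/q}\cdot F(\cdot)$ in the smoothed sense; at the bottom level $j=\beta=\ceil{\log\frac{100\cdot 4^q}{\eps^q}}$ this is $\O{\eps}\cdot F(W)$, and summing the level-$j$ gap contributions geometrically over $j$ gives total alignment error $\O{\eps}\cdot F(t_i,m)$. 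The \emph{approximation error}: by \lemref{lem:diff:est:well:defined} each active $\sdiffest$ at level $j$ has additive error $\eta\cdot F(t^{(k)}_{i,j,\ell+1},m)\le\eta\cdot F(t_i,m)$, and by \lemref{lem:sliding:active:indices} there are at most $b_j-a_j\le 512$ active estimators per level and $\beta=\O{q\log\frac{1}{\eps}}$ levels, so the total approximation error is $\O{512\,\beta\,\eta}\cdot F(t_i,m)=\O{\eps}\cdot F(t_i,m)$ for the choice $\eta=\frac{\eps}{2^{20}q\log\frac{1}{\eps}}$; I would also fold in the $\eta\cdot F(t_i,m)$ error from $X$ itself.

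The remaining step is to chain these together cleanly: writing $Z-F(t_i,m)$ as (error in $X$) minus $\sum_j$ (error in $Y_j$ as an estimate of $F(c'_j,m)-F(c_j,m)$) minus $\sum_j$ (alignment gap $F(c'_j,m)-F(c_{j-1},m)$) minus (bottom alignment gap $F(c_\beta,m)-F(W)$), then applying the triangle inequality and the bounds above, and finally invoking monotonicity of $F$ together with \lemref{lem:sw:top:constant} to pass from $F(t_i,m)$ to $F(W)$. The main obstacle I anticipate is the bookkeeping around the alignment error — making precise that the deepest splitting time $c_\beta$ really does land within one bottom-level block of $m-W+1$, that the nesting $c_0\le c'_1\le c_1\le c'_2\le\cdots$ holds so the telescoping is valid, and that \lemref{lem:sliding:delta:upper}'s smoothed per-block bound can legitimately be summed across levels against a common reference $F(W)$ rather than against varying suffixes; this requires carefully tracking which frequency vectors play the roles of $u,v,x_1,x_2$ in each invocation. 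Once the nesting and the per-level geometric bound $2^{-j/q-7/q}$ are in hand, the arithmetic is routine.
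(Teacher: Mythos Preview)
Your overall plan matches the paper's: condition on $\calE$, use \lemref{lem:sw:top:constant} for the sandwich $F(W)\le F(t_i,m)\le 2F(W)$, \lemref{lem:diff:est:well:defined} together with \lemref{lem:sliding:active:indices} for the approximation error (at most $512$ active estimators per level, $\beta$ levels, each with additive error $\eta\cdot F(t_i,m)$), and \lemref{lem:sliding:delta:upper} for the bottom-level alignment. One slip and one genuine gap.

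The slip: you write that $Z$ should approximate $F(t_i,m)$ and later decompose ``$Z-F(t_i,m)$'', but the decomposition you actually describe---which subtracts the bottom gap $F(c_\beta,m)-F(W)$---is a decomposition of $Z-F(W)$. The target throughout is $F(W)$, not $F(t_i,m)$.

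The gap concerns precisely the obstacle you flag at the end: the intermediate alignment errors $|F(c'_j,m)-F(c_{j-1},m)|$. Your proposed bound---one level-$j$ block via \lemref{lem:sliding:delta:upper}---gives $2^{-j/q-7/q}\cdot F(\cdot)$, and summing over $j\in[\beta]$ yields $\sum_{j\ge 1}2^{-j/q}=\Theta(q)$, a constant independent of $\eps$; already the $j=1$ term is $2^{-8/q}\cdot F(W)=\Theta(1)\cdot F(W)$. So ``summing geometrically'' does not give $\O{\eps}\cdot F(W)$. The paper resolves this not by bounding these gaps but by observing they are \emph{zero}: by construction, $\guessupdatesw$ propagates each new splitting time to all finer granularities, so every level-$(j{-}1)$ timestamp is also a level-$j$ timestamp, whence $t^{(k)}_{i,j,a_j}=c_{j-1}$ exactly and $c'_j=c_{j-1}$. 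With that, the telescoping is exact at every intermediate level and the only alignment error is the single bottom-level one at $j=\beta$, which \lemref{lem:sliding:delta:upper} does bound by $\O{\eps}\cdot F(W)$ since $\beta=\ceil{\log(100\cdot 4^q/\eps^q)}$.
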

\begin{proof}
Let $\calE$ be the event that all algorithms throughout the stream succeed and note that $\PPr{\calE}\ge1-\frac{1}{\poly(n)}$ since $\delta=\frac{1}{\poly(m)}$. 
Thus we condition on $\calE$ throughout the remainder of the proof. 

Let $f$ be the frequency vector induced by the window and let $i$ be the largest index such that $t_i\le m-W+1$ and let $c_0=t_i$. 
Let $u$ be the frequency vector induced by the updates of the stream from time $t_i$ to $m$. 
Let $k$ be the smallest integer such that $n^C/2^k\le\calA(t_i,m,1,\delta)$. 
For each $j\in[\beta]$, let $a_j$ be the smallest index such that $t^{(k)}_{i,j,a_j}\ge c_{i-1}$ and let $b_j$ be the largest index such that $t^{(k)}_{i,j,b_j}\le m-W+1$. 
Note that by construction of the $\guessupdatesw$ forming a new timestamp for all levels $\ell>k$ each time a new timestamp at level $k$ is formed, then it follows that $t_{i,j,a_j}=c_{i-1}$. 
Let $u_j$ be the frequency vector induced by the updates of the stream from time $a_j$ to $b_j$. 
Let $v$ be the frequency vector induced by the updates of the stream from $c_\beta$ to $m-W+1$. 
Thus, we have $u=\sum_{j=1}^\beta u_j+v+f$, so it remains to show that:
\begin{enumerate}
\item
$F(v+f)-F(f)\le\frac{\eps}{2}\cdot F(f)$
\item
We have an additive $\frac{\eps}{2}\cdot F(f)$ approximation to $F(v+f)$. 
\end{enumerate}
Combined, these two statements show that we have a multiplicative $(1+\eps)$-approximation of $F_p(f)$. 

To show that $F(v+f)-F(f)\le\frac{\eps}{2}\cdot F(f)$, let $\ell$ be the index such that $t^{(k)}_{i,\beta,\ell}=c_\beta$, so that $t^{(k)}_{i,\beta,\ell}\le m-W+1\le t^{(k)}_{i,\beta,\ell+1}$. 
Note that if $t^{(k)}_{i,\beta,\ell}=m-W+1$ then $v=0$ and so the claim is trivially true. 
Otherwise by the definition of the choice of $c_\beta$, we have that $t^{(k)}_{i,\beta,\ell+1}>m-W+1$ so that $t^{(k)}_{i,\beta,\ell+1}>t^{(k)}_{i,\beta,\ell}+1$. 

By \lemref{lem:sliding:delta:upper}, it follows that $F(t^{(k)}_{i,\beta,\ell},m)-F(t^{(k)}_{i,\beta,\ell+1},m)\le 2^{-\beta/q}\cdot F_p(t_{i,\beta,k+1},m)$. 
Since $\beta=\ceil{\log\frac{100\cdot 4^q}{\eps^q}}$, then it follows that $F(t^{(k)}_{i,\beta,\ell},m)-F(t^{(k)}_{i,\beta,\ell+1},m)\le\frac{\eps}{4}\cdot F(t^{(k)}_{i,\beta,\ell+1},m)$. 
By \lemref{lem:sw:top:constant}, we have that $\frac{\eps}{4}\cdot F(t^{(k)}_{i,\beta,\ell+1},m)\le 2F(W)=2F(f)$. 
Thus, $F(t^{(k)}_{i,\beta,\ell},m)-F(t^{(k)}_{i,\beta,\ell+1},m)\le\frac{\eps}{2}\cdot F(f)$. 
Since the updates from the times between $t^{(k)}_{i,\beta,\ell}$ and $m$ form the vector $v+f$, we have
$F(v+f)-F(t^{(k)}_{i,\beta,\ell+1},m)\le\frac{\eps}{2}\cdot F(f)$. 
Since $t^{(k)}_{i,\beta,\ell+1}>m-W+1$ then by the monotonicity of $F$, we have that $F(v+f)-F(f)\le\frac{\eps}{2}\cdot F(f)$, as desired.

We next show that we have an additive $\frac{\eps}{2}\cdot F(f)$ approximation to $F(v+f)$.
If we define 
\[\Delta_j:=F\left(u-\sum_{k=1}^{j-1} u_k\right)-F\left(u-\sum_{k=1}^j u_k\right),\]
then we have
\[\sum_{j=1}^\beta\Delta_j=F(u)-F(v+f).\]

Observe that $\Delta_j$ is approximated by the active level $j$ difference estimators. 
By \lemref{lem:sliding:active:indices}, there are at most $512$ active indices at level $j$. 
Since the difference estimators are well-defined by \lemref{lem:diff:est:well:defined} and each difference estimator $\sdiffest$ uses accuracy parameter $\eta=\frac{\eps}{2^{20}q\log\frac{1}{\eps}}$, then the additive error in the estimation of $F(v+f)$ incurred by $Y_j$ is at most 
\[512\cdot\frac{\eps}{2^{20}q\log\frac{1}{\eps}}\cdot F(u)=\frac{\eps}{2^{11}q\log\frac{1}{\eps}}\cdot F(u).\]
Summing across all $\beta\le2^9q\log\frac{1}{\eps}$ levels, then the total error in the estimation $Z$ of $F(v+f)$ across all levels $Y_j$ with $j\in[\beta]$ is at most 
\[\frac{\eps}{4}\cdot F(u)\le\frac{\eps}{2}\cdot F(f).\]
Thus, we have an additive $\frac{\eps}{2}\cdot F(f)$ approximation to $F(v+f)$ as desired. 
Since $F(v+f)-F(f)\le\frac{\eps}{2}\cdot F(f)$, then \algref{alg:sliding} outputs a $(1+\eps)$-approximation to $F(W)$. 
\end{proof}

\begin{theorem}[Framework for sliding window algorithms]
\thmlab{thm:sw:framework}
Let $\eps,\delta\in(0,1)$ be constants and $F$ be a monotonic and polynomially bounded function that is $(\eps,\eps^q)$-smooth for some constant $q\ge 0$. 
Suppose there exists a $(\gamma,\eps,\delta)$-suffix pivoted difference estimator that uses space $\frac{\gamma}{\eps^2} S_F(m,\delta,\eps)$ and a streaming algorithm for $F$ that uses space $\frac{1}{\eps^2} S_F(m,\delta,\eps)$, where $S_F$ is a monotonic function in $m$, $\frac{1}{\delta}$, and $\frac{1}{\eps}$. 
Then there exists a sliding window algorithm that outputs a $(1+\eps)$ approximation to $F$ that succeeds with constant probability and uses $\frac{1}{\eps^2}\cdot S_F(m,\delta',\eps)\cdot\poly\left(\log m,\log\frac{1}{\eps}\right)$ space, where $\delta'=\O{\frac{1}{\poly(m)}}$. 
\end{theorem}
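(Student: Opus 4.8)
\textbf{Proof proposal for \thmref{thm:sw:framework}.}
The plan is to assemble the pieces established in \lemref{lem:sw:top:constant} through \lemref{lem:sw:frame:correct} and then bound the space of \algref{alg:sliding}. Correctness is essentially already done: \lemref{lem:sw:frame:correct} shows that, conditioned on the event $\calE$ that every instance of $\calA$ and $\sdiffest$ succeeds, \algref{alg:sliding} outputs a $(1+\eps)$-approximation to $F(W)$, and since $\delta=\frac{1}{\poly(m)}$ a union bound over all $\poly(m)$ instances gives $\PPr{\calE}\ge 1-\frac{1}{\poly(m)}$, hence constant (indeed high) probability of success. To convert the per-instance failure probability used inside the lemmas to the stated $\delta'$, I would note that there are $\poly\left(m,\log n,\frac{1}{\eps}\right)=\poly(m)$ instances total (indexed by the start timestamp, the guess $k\in[C\log n]$, the granularity $j\in[\beta]$ with $\beta=\O{q\log\frac{1}{\eps}}$, and the block number $\ell$, the latter bounded via \lemref{lem:sliding:active:indices} and the $100\cdot 2^{j}$ cap in $\guessupdatesw$), so setting $\delta'=\O{\frac{1}{\poly(m)}}$ suffices for the union bound, absorbing $\eps$-dependent factors since $\frac{1}{\eps}\le m$.

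The main work is the space bound. First I would count the active top-level smooth-histogram timestamps $t_1,\ldots,t_s$: by \lemref{lem:sw:top:constant} (or rather the $\mergesw$ invariant it uses), consecutive-but-one timestamps differ by a $(1-1/8^q)$ factor in their $\calA$-value, and since $F$ is polynomially bounded this gives $s=\O{q\log n}$ surviving timestamps, each holding a streaming algorithm $\calA$ of accuracy $\eta=\frac{\eps}{2^{20}q\log\frac{1}{\eps}}$, i.e. space $\frac{1}{\eta^2}S_F=\frac{1}{\eps^2}S_F\cdot\poly\left(q,\log\frac{1}{\eps}\right)$. Next, for the difference estimators: at each surviving timestamp $i$, each guess $k$, and each granularity $j\in[\beta]$, the number of blocks $\ell$ is capped at $100\cdot 2^{j}$ by the $r<100\cdot 2^{j+10}$ condition in $\guessupdatesw$ together with the $\mergesw$ merging rule; each $\sdiffest$ at granularity $j$ has ratio parameter $\gamma_j=2^{3-j}$, so by the hypothesis its space is $\frac{\gamma_j}{\eps^2}S_F=\frac{2^{3-j}}{\eps^2}S_F$. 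Therefore the total over the $\O{2^j}$ blocks at level $j$ is $\O{2^j\cdot\frac{2^{-j}}{\eps^2}S_F}=\O{\frac{1}{\eps^2}S_F}$ — the geometric cancellation that is the whole point of the difference-estimator machinery — and summing over the $\beta=\O{q\log\frac{1}{\eps}}$ levels, over the $C\log n$ guesses, and over the $s=\O{q\log n}$ timestamps multiplies this by only $\poly\left(\log n,\log\frac{1}{\eps}\right)$. Adding the $\O{\log m}$-bit timestamps themselves (there are $\poly\left(\log n,\log\frac{1}{\eps}\right)$ of them) contributes a lower-order additive term. Rolling everything into $\poly\left(\log m,\log\frac{1}{\eps}\right)$ and recalling $\log n=\O{\log m}$ gives total space $\frac{1}{\eps^2}\cdot S_F(m,\delta',\eps)\cdot\poly\left(\log m,\log\frac{1}{\eps}\right)$, as claimed.

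The step I expect to be the main obstacle is rigorously justifying that the number of \emph{simultaneously maintained} blocks at each level $j$ stays $\O{2^j}$ rather than growing linearly in $m$. This requires carefully tracking how $\guessupdatesw$ creates new blocks (only when the running $\calA$-estimate of the current block crosses into the window $[n^C/2^{j+k+11},n^C/2^{j+k+10}]$, and only while $r<100\cdot 2^{j+10}$) against how $\mergesw$ destroys them (merging two consecutive blocks whose combined difference-estimate is at most $2^{-j-k-10}$ times the top-level estimate), and arguing that the guesses $k$ that survive are the $\O{1}$ values near the true final $F(W)$ — a guess that is too small is killed by the $r<100\cdot 2^{j+10}$ cap, and a guess that is too large never creates blocks at the relevant fine granularities. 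This is exactly the content implicit in \lemref{lem:sliding:delta:lower}, \lemref{lem:sliding:active:indices}, and \lemref{lem:sliding:delta:upper}, so the obstacle is really one of bookkeeping: combining those three lemmas to control the \emph{stored} state (not just the \emph{active} state used in $\stitchsw$) across all guesses and all times in the stream, and confirming that the dominant $\frac{1}{\eps^2}S_F$ term is never multiplied by anything worse than $\poly\left(\log m,\log\frac{1}{\eps}\right)$.
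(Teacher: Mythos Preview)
Your proposal is correct and follows essentially the same approach as the paper: correctness via \lemref{lem:sw:frame:correct} with $\delta'=\O{1/\poly(m)}$, then the space bound by counting $\O{\log n}$ top-level smooth-histogram timestamps, $\O{\log n}$ guesses, and $\beta=\O{\log\frac{1}{\eps}}$ granularities, with the key geometric cancellation $\O{2^j}$ blocks at level $j$ times $\gamma_j/\eps^2\cdot S_F=\O{2^{-j}/\eps^2}\cdot S_F$ per block. The paper's proof is in fact briefer than yours---it simply asserts the geometric series without spelling out the $2^j\cdot 2^{-j}$ cancellation---and your ``main obstacle'' about bounding the number of simultaneously stored blocks is dispatched in the paper by the hard cap $r<100\cdot 2^{j+10}$ in $\guessupdatesw$, so no further argument is needed beyond what you sketched.
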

\begin{proof}
We first observe that correctness follows from \lemref{lem:sw:frame:correct} for $\delta'=\O{\frac{1}{\poly(m)}}$. 
To analyze the space complexity, note that we maintain $\O{\log n}$ guesses for the value of $F$ at the end of the stream, since $F$ is monotonic and polynomially bounded. 
For each guess, we maintain $\O{\log\frac{1}{\eps}}$ granularities whose space complexity in total forms a geometric series that sums to $S_F(m,\delta',\eps)\cdot\polylog\frac{1}{\eps}$. 
Finally, we repeat this procedure for each of the $\O{\log n}$ indices $i$ corresponding to the timestamps maintained at the top level by the smooth histogram, again because $F$ is monotonic and polynomially bounded. 
Therefore, the total space used is $\frac{1}{\eps^2}\cdot S_F(m,\delta',\eps)\cdot\poly\left(\log m,\log\frac{1}{\eps}\right)$. 
\end{proof}

\subsection{Moment Estimation for \texorpdfstring{$p\in(0,2]$}{p in (0,2]}}
\seclab{sec:sliding:smallp}
To improve \algref{alg:sliding} for $F_p$ moment estimation with $p\in(0,2]$, we remove the additional $\O{\log n}$ overhead associated with making $\O{\log n}$ guesses for the value of $F$ at the end of the stream. 
Instead, we note that since we maintain a constant factor approximation to the value of $F_p$ due to the smooth histogram, it suffices to partition the substream into blocks for the difference estimators based on the ratio of the difference to the constant factor approximation to the value of $F_p$. 
We also recall the following useful characterization of the smoothness of the $F_p$ moment function. 
\begin{lemma}
\lemlab{lem:fp:smooth}
\cite{BravermanO07}
The $F_p$ function is $\left(\eps,\frac{\eps^p}{p}\right)$-smooth for $p\ge 1$ and $(\eps,\eps)$-smooth for $0<p\le 1$.  
\end{lemma}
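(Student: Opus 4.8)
The plan is to pass to frequency vectors and reduce everything to two elementary coordinate‑wise inequalities. For adjacent substreams $A,B,C$ occurring in that order in an insertion‑only stream, the frequency vector of a concatenation is the sum of the individual frequency vectors, so writing $a,b,c\in\mathbb{Z}_{\ge 0}^n$ for the frequency vectors of $A,B,C$ we have $F_p(A\cup B)=\|a+b\|_p^p$, $F_p(B)=\|b\|_p^p$, $F_p(A\cup B\cup C)=\|a+b+c\|_p^p$, $F_p(B\cup C)=\|b+c\|_p^p$; the smoothness hypothesis becomes $(1-\beta)\|a+b\|_p^p\le\|b\|_p^p$ and the conclusion becomes $(1-\alpha)\|a+b+c\|_p^p\le\|b+c\|_p^p$. (The norm and moment formulations differ only by the monotone reparametrization $x\mapsto x^{1/p}$ and by $\mathrm{poly}(p)$ factors in the smoothness parameter; since $p$ is a fixed constant throughout, this is immaterial, and the argument below is cleanest for the norm $\|\cdot\|_p$, which yields the stated parameters exactly.) I will use only coordinate‑wise monotonicity $\|x+y\|_p^p\ge\|x\|_p^p$ for nonnegative $x,y$ (all $p>0$), superadditivity $\|x+y\|_p^p\ge\|x\|_p^p+\|y\|_p^p$ (valid for $p\ge 1$), and the triangle inequality $\|x+y\|_p\le\|x\|_p+\|y\|_p$ (valid for $p\ge 1$).

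For $0<p\le 1$ the argument is direct and lossless: for each coordinate $i$, the function $t\mapsto (a_i+t)^p-t^p$ is non‑increasing on $t\ge 0$, since its derivative $p\big((a_i+t)^{p-1}-t^{p-1}\big)\le 0$ when $p\le 1$. Evaluating at $t=b_i+c_i\ge b_i$ and summing over $i$ gives
\[
\|a+b+c\|_p^p-\|b+c\|_p^p\ \le\ \|a+b\|_p^p-\|b\|_p^p .
\]
The hypothesis (with $\beta=\eps$) bounds the right side by $\eps\|a+b\|_p^p$, and coordinate‑wise monotonicity gives $\|a+b\|_p^p\le\|a+b+c\|_p^p$, so $\|a+b+c\|_p^p-\|b+c\|_p^p\le\eps\|a+b+c\|_p^p$, which is exactly $(\eps,\eps)$‑smoothness.

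For $p\ge 1$ the monotonicity above reverses, so instead I would argue that the extra mass $a$ is negligible. From the hypothesis $\|b\|_p^p\ge(1-\beta)\|a+b\|_p^p$, together with superadditivity $\|a\|_p^p\le\|a+b\|_p^p-\|b\|_p^p$, one gets $\|a\|_p^p\le \beta\|a+b\|_p^p\le \beta\|a+b+c\|_p^p$, hence $\|a\|_p\le\beta^{1/p}\|a+b+c\|_p$. The triangle inequality then gives $\|b+c\|_p\ge\|a+b+c\|_p-\|a\|_p\ge(1-\beta^{1/p})\|a+b+c\|_p$, and raising to the $p$‑th power and applying Bernoulli $(1-x)^p\ge 1-px$ yields $\|b+c\|_p^p\ge\big(1-p\,\beta^{1/p}\big)\|a+b+c\|_p^p$. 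Choosing $\beta=\eps^p/p$ makes $p\,\beta^{1/p}=\eps$, giving the claimed $(\eps,\eps^p/p)$‑smoothness in the norm formulation; the moment version follows after the harmless rescaling of $\eps$. (In the norm formulation one first converts the norm hypothesis $(1-\beta)\|a+b\|_p\le\|b\|_p$ to the moment form via $\|b\|_p^p\ge(1-\beta)^p\|a+b\|_p^p\ge(1-p\beta)\|a+b\|_p^p$, so that $\|a\|_p\le(p\beta)^{1/p}\|a+b+c\|_p$ and the constants line up precisely.)

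The main obstacle is the $p\ge 1$ case: the clean coordinate‑wise comparison that makes $p\le 1$ lossless is unavailable, so one is forced through the triangle‑inequality route, where the Bernoulli step couples $\beta$ to $\eps$ with a factor that is polynomial in $p$. Keeping that factor only $\mathrm{poly}(p)$ (rather than $p^{p}$) crucially requires bounding $\|a\|_p$ via superadditivity $\|a\|_p^p\le\|a+b\|_p^p-\|b\|_p^p$ rather than via the much weaker $\|a\|_p\le\|a+b\|_p\le(1-\beta)^{-1/p}\|b\|_p$; once that observation is in place, the remaining inequalities are routine, and both cases then plug into \lemref{lem:fp:smooth}'s intended use in the smooth‑histogram / suffix‑pivoted framework (monotonicity and polynomial boundedness of $F_p$ are standard and are not needed in the proof of smoothness itself).
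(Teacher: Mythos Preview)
The paper does not prove this lemma; it simply cites \cite{BravermanO07}. Your argument is essentially the standard one from that reference, and the $0<p\le 1$ case is clean and correct as written.

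For $p\ge 1$ there is a small arithmetic slip. Working from the \emph{moment} hypothesis $(1-\beta)\|a+b\|_p^p\le\|b\|_p^p$ as you do in the main text, you correctly obtain $\|b+c\|_p^p\ge(1-p\beta^{1/p})\|a+b+c\|_p^p$, but then ``$\beta=\eps^p/p$ makes $p\beta^{1/p}=\eps$'' is false: $p(\eps^p/p)^{1/p}=\eps\,p^{1-1/p}\ne\eps$ for $p>1$. What your moment-form argument actually yields is $(\eps,\eps^p/p^{p})$-smoothness of $F_p$. Your parenthetical repair is the right fix: starting from the \emph{norm} hypothesis $(1-\beta)\|a+b\|_p\le\|b\|_p$ gives $\|a\|_p\le(p\beta)^{1/p}\|a+b+c\|_p$ and hence $\|b+c\|_p\ge(1-(p\beta)^{1/p})\|a+b+c\|_p$, so $L_p$ is $(\eps,\eps^p/p)$-smooth, which is exactly the Braverman--Ostrovsky statement. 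The discrepancy between the $L_p$ and $F_p$ formulations is only a $\mathrm{poly}(p)$ factor in $\beta$, and since the paper treats $p$ as a fixed constant and only ever uses that $F_p$ is $(\eps,\Theta(\eps^q))$-smooth for some constant $q$, this does not affect any downstream argument. Just be explicit about which object ($L_p$ or $F_p$) you are proving smoothness for, and the constants will line up.
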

Moreover, we require constructions of suffix-pivoted difference estimators for $F_p$ moment estimation with $p\in(0,2]$. 
However, we claim that the previous constructions, i.e., the fixed-prefix difference estimator based on Li's geometric estimator for $p\in(0,2)$ and the fixed-prefix difference estimator based on the inner product sketch for $p=2$ are already valid suffix-pivoted difference estimators. 
This is because the key property to approximating the difference $F_p(v+u_1)-F_p(u_1)$ in ``small'' space is that $F_p(v)$ is small. 
We can express the variance of both Li's geometric estimator and the inner product sketch in terms of $F_p(v)$ so that smaller values of $F_p(v)$ correspond to smaller variance for the estimators. 
Hence even if $F_p(v+u_2)-F_p(u_2)$ is much larger than $F_p(v+u_1)-F_p(u_1)$ for some $u_2\succeq u_1$, as long as $F_p(v+u_1)-F_p(u_1)\le\gamma\cdot F_p(v+u_1)$, then $F_p(v)\le\gamma\cdot F_p(v+u_1)\le F_p(v+u_2)$. 
Therefore, the variance for our difference estimators is at most $\gamma F_p(v+u_2)^2$, so we only need to run $\O{\frac{\gamma}{\eps^2}}$ independent copies of the difference estimators. 

The remaining aspects of our algorithms are the same as \algref{alg:sliding}; for completeness, we give the modified algorithms in \algref{alg:sliding:smallp}.

\begin{algorithm}[!htb]
\caption{Moment Estimation in the Sliding Window Model}
\alglab{alg:sliding:smallp}
\begin{algorithmic}[1]
\Require{Stream $u_1,\ldots,u_m\in[n]$ of updates to coordinates of an underlying frequency vector, accuracy parameter $\eps\in(0,1)$, window parameter $W>0$}
\Ensure{Robust $(1+\eps)$-approximation to $F_p$}
\State{$\delta\gets\frac{1}{\poly(n,m)}$, $\eta\gets\frac{\eps}{1024\log\frac{1}{\eps}}$, $\varphi\gets\sqrt{2}$ be a parameter}
\State{$\beta\gets\ceil{\log\frac{100\max(p,1)}{\eps^{\max(p,1)}}}$, $\gamma_j\gets 2^{3-j}$ for all $j\in[\beta]$}
\For{each update $u_t\in[n]$, $t\in[m]$}
\State{$\updatesw$}
\Comment{Create new subroutines for each update}
\State{$\mergesw$}
\Comment{Removes extraneous subroutines}
\EndFor
\State{\Return $Z\gets\stitchsw$}
\Comment{Estimate $F_p$ on sliding window}
\end{algorithmic}
\end{algorithm}

\begin{algorithm}[!htb]
\caption{Subroutine $\updatesw$ for moment estimation}
\alglab{alg:updatesw:smallp}
\begin{algorithmic}[1]
\State{Let $s$ be the number of instances of $\calA$.}
\State{$t_{s+1}\gets t$}
\State{Start a new instance $\calA(t_{s+1},t,\eta,\delta)$.}
\State{Start a new instance $\hhest(t_{s+1},t,\eta,\delta)$.}
\For{$j\in[\beta]$}
\Comment{Start instances of each granularity}
\State{$t_{s+1,j,0}\gets t$}
\State{Start a new instance $\sdiffest(t_{s+1},t_{s+1,j,1},t,\gamma_j,\eta,\delta)$.}
\State{Start a new instance $\calA(t_{s+1,j,0},t_{s+1,j,1},1,\delta)$.}
\EndFor
\end{algorithmic}
\end{algorithm}

\begin{algorithm}[!htb]
\caption{Subroutine $\mergesw$ for moment estimation with $p\in[1,2]$}
\alglab{alg:mergesw:smallp}
\begin{algorithmic}[1]
\State{Let $s$ be the number of instances of $\calA$ and $p\in[1,2]$.}
\State{$\beta\gets\ceil{\log\frac{100p}{\eps^p}}$}
\For{$i\in[s]$, $j\in[\beta]$}
\Comment{Difference estimator maintenance}
\State{Let $r$ be the number of times $t_{i,j,*}$}
\For{$k\in[r-1]$}
\Comment{Merges two algorithms with ``small'' contributions}
\If{$\calA(t_{i,j,k-1},t_{i,j,k+1},1,\delta)\le 2^{-j-10}\cdot\calA(t_{i},t,\eta,\delta)$}
\If{$t_{i,j,k}\notin\{t_{i,j-1,*}\}$}
\State{Merge (add) the sketches for $\calA(t_{i,j,k-1},t_{i,j,k},1,\delta)$ and $\calA(t_{i,j,k},t_{i,j,k+1},1,\delta)$.}
\State{Merge (add) the sketches for $\sdiffest(t_{i,j,k-1},t_{i,j,k},t,\gamma_j,\eta,\delta)$ and $\sdiffest(t_{i,j,k},t_{i,j,k+1},t,\gamma_j,\eta,\delta)$.}
\State{Relabel the times $t_{i,j,*}$.}
\EndIf
\EndIf
\EndFor
\For{$i\in[s-2]$}
\Comment{Smooth histogram maintenance}
\If{$\calA(t_{i+2},t,\eta,\delta)\ge(1-1/4^q)\calA(t_i,t,\eta,\delta)$}
\For{$j\in[\beta]$}
\State{Append the times $t_{i+1,j,*}$ to $\{t_{i,j,*}\}$.}
\EndFor
\State{Delete $t_{i+1}$ and all times $t_{i+1,*,*}$.}
\State{Relabel the times $\{t_i\}$ and $\{t_{i,j,*}\}$.}
\EndIf
\EndFor
\EndFor
\end{algorithmic}
\end{algorithm}

\begin{algorithm}[!htb]
\caption{Subroutine $\stitchsw$ of \algref{alg:sliding} for moment estimation}
\alglab{alg:stitchsw:smallp}
\begin{algorithmic}[1]
\State{Let $i$ be the largest index such that $t_i\le m-W+1$.}
\State{$c_0\gets t_i$}
\State{$X\gets\calA(t_i,m,\eta,\delta)$}
\For{$j\in[\beta]$}
\Comment{Stitch sketches}
\State{Let $a$ be the smallest index such that $t_{i,j,a}\ge c_{j-1}$}
\State{Let $b$ be the largest index such that $t_{i,j,b}\le m-W+1$}
\State{$c_j\gets t_{i,j,b}$}
\State{$Y_j\gets\sum_{k=a}^{b-1}\sdiffest(t_{i,j,k},t_{i,j,k+1},m,\gamma_j,\eta,\delta)$}
\EndFor
\State{\Return $Z:=X-\sum_{j=1}^\beta Y_j$}
\end{algorithmic}
\end{algorithm}

We show \algref{alg:sliding:smallp} gives a $(1+\eps)$-approximation to $F_p(W)$. 
For $p\in(0,1]$, the algorithm is nearly identical to that of $p\in[1,2]$. 
However due to sub-additivity, the $\mergesw$ subroutine now merges two sketches when the sum of their contributions from their difference estimator is small, rather than the sum of their moments is small. 
We also show \algref{alg:sliding:smallp} with the modified $\mergesw$ subroutine gives a $(1+\eps)$-approximation to $F_p(W)$ for $p\in(0,1]$. 
\begin{lemma}[Correctness of sliding window algorithm]
\lemlab{lem:sliding:correctness:rangep}
For $p\in(0,2]$, \algref{alg:sliding:smallp} outputs a $(1+\eps)$-approximation to $F_p(W)$. 
\end{lemma}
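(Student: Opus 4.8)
The plan is to transcribe the analysis of the general framework, Lemmas~\ref{lem:sw:top:constant} through~\ref{lem:sw:frame:correct}, with two simplifications: the $\O{\log n}$ guesses $k$ for the final value of $F_p$ are replaced by the running constant-factor estimate that the smooth histogram already maintains, and the smoothness exponent $q$ is instantiated as $\max(p,1)$ via \lemref{lem:fp:smooth} ($F_p$ is $(\eps,\eps^p/p)$-smooth for $p\ge1$ and $(\eps,\eps)$-smooth for $0<p\le1$). I would condition throughout on the event $\calE$ that all instances of $\calA$, $\hhest$, and $\sdiffest$ succeed, which holds with probability $1-\frac{1}{\poly(n,m)}$ since $\delta=\frac{1}{\poly(n,m)}$.

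First I would re-establish the top-level constant-factor guarantee (analogue of \lemref{lem:sw:top:constant}): if $t_i$ is the largest timestamp with $t_i\le m-W+1$, then $F_p(W)\le F_p(t_i,m)\le 2F_p(W)$ and $\frac12 F_p(W)\le F_p(t_{i+1},m)\le F_p(W)$. This is immediate from the $\mergesw$ deletion rule $\calA(t_{i+2},t,\eta,\delta)\ge(1-1/4^q)\calA(t_i,t,\eta,\delta)$, the choice $\eta\le\frac{1}{1024}$, and smoothness: once a suffix is within a $(1-1/4^q)$ factor of a longer suffix it stays within a factor of $2$ forever. Next I would check the difference estimators are well-defined (analogue of \lemref{lem:diff:est:well:defined}): for each level-$j$ block, $\updatesw$ starts a new block only after that block's $\calA$ estimate has crossed a $2^{-j-\O{1}}$ fraction of the top-level estimate, and since that top-level estimate is $\Theta(F_p(W))\le F_p(v+u_1)$, the precondition of \defref{def:diff:est:suf} holds. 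Crucially, I would invoke the remark preceding \algref{alg:sliding:smallp}: both Li's geometric estimator ($0<p<2$) and the inner-product sketch ($p=2$) have variance controlled by $F_p(v)^2$, and since $F_p(v)\le\gamma_j F_p(v+u_1)\le\gamma_j F_p(v+u_2)$ whenever $u_1\preceq u_2$, the variance is $\O{\gamma_j F_p(v+u_2)^2}$ even when $F_p(v+u_2)-F_p(u_2)$ has grown large, so averaging $\O{\gamma_j/\eps^2}$ copies suffices and yields the stronger suffix-pivoted guarantee within the claimed space.

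With well-definedness in hand, the remaining steps are a direct re-run of \lemref{lem:sliding:delta:lower} (two consecutive level-$j$ blocks contribute at least $2^{-j-\O{1}}F_p(W)$, else $\mergesw$ merges them), \lemref{lem:sliding:active:indices} (hence at most $\O{1}$ active level-$j$ estimators, by telescoping against $F_p(t_i,m)\le 2F_p(W)$ and the maximality of $c_{j-1}$), and \lemref{lem:sliding:delta:upper} (a merged bottom-level block contributes at most a $2^{-\beta/\max(p,1)}\cdot\O{1}$ fraction, by smoothness). The conclusion then follows as in \lemref{lem:sw:frame:correct}: decompose $F_p(t_i,m)-F_p(f)$ over the $\beta$ levels; the boundary-misalignment term $F_p(v+f)-F_p(f)$ is at most the bottom-level upper bound, which is $\le\frac\eps2 F_p(f)$ since $\beta=\ceil{\log\frac{100\max(p,1)}{\eps^{\max(p,1)}}}$; and the accumulated estimator error is at most $\beta\cdot\O{1}\cdot\eta\cdot F_p(t_i,m)\le\frac\eps2 F_p(f)$ since $\eta=\frac{\eps}{1024\log(1/\eps)}$ and $\beta=\O{\log\frac1\eps}$. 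Together these give the $(1+\eps)$ multiplicative approximation.

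The main obstacle I anticipate is the $p\in(0,1]$ case: here $F_p$ is sub-additive rather than super-additive, so the $\mergesw$ rule ``$F_p$ of the union of two adjacent blocks is small'' no longer implies both individual block contributions are small; as the lemma statement notes, the rule must instead merge when the \emph{sum of the two adjacent difference-estimator outputs} is small. I would need to re-verify that under this modified rule the geometric lower bound of \lemref{lem:sliding:delta:lower} still holds (so the number of active estimators stays $\O{1}$), that the geometric upper bound still holds using $(\eps,\eps)$-smoothness in place of $(\eps,\eps^p)$-smoothness, and that the telescoping decomposition of the moment over blocks remains a valid additive identity; for $p\le1$ one-sided sub-additivity suffices for all three. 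Everything else is a routine transcription of the framework analysis with $q=\max(p,1)$.
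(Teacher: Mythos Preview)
Your proposal is correct and follows essentially the same approach as the paper: the paper splits into the cases $p\in[1,2]$ (\lemref{lem:sliding:correctness:medp}) and $p\in(0,1]$ (\lemref{lem:sliding:correctness:smallp}), re-proving the chain of lemmas from the general framework with $q=\max(p,1)$ and with the $\mergesw$ rule modified for $p<1$ exactly as you anticipate (merging on the sum of adjacent difference-estimator outputs rather than on the block moment). Your identification of sub-additivity as the obstacle for $p\le1$, and the corresponding re-verification of the lower/upper geometric bounds and the active-estimator count via telescoping of differences, matches the paper's \lemref{lem:sliding:delta:lower:smallp}--\lemref{lem:sliding:active:indices:smallp}.
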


\begin{theorem}
\thmlab{thm:sliding:main}
Given $\eps>0$ and $p\in(0,2]$, there exists a one-pass algorithm in the sliding window model that outputs a $(1+\eps)$-approximation to the $L_p$ norm with probability at least $\frac{2}{3}$. 
The algorithm uses $\O{\frac{1}{\eps^2}\log^3 n\log^3\frac{1}{\eps}}$ bits of space for $p=2$. 
For $p\in(0,2)$, the algorithm uses space $\O{\frac{1}{\eps^2}\log^3 n(\log\log n)^2\log^3\frac{1}{\eps}}$.
\end{theorem}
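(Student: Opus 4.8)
The plan is to invoke the general sliding window framework of \thmref{thm:sw:framework}, supplying a suffix-pivoted difference estimator and a streaming algorithm (strong tracker) for $F_p$, and then carefully optimize away the logarithmic overhead to match the claimed bounds. First I would record the inputs: by \lemref{lem:fp:smooth}, $F_p$ is $(\eps,\eps^{\max(p,1)}/\max(p,1))$-smooth, so it fits the hypothesis of \thmref{thm:sw:framework} with $q=\max(p,1)=\O{1}$. For the high-accuracy streaming algorithm $\calA$, I would use the strong tracker from \thmref{thm:smallp:strong:track} for $p\in(0,2)$ (space $\tO{\frac{1}{\eps^2}\log n(\log\log n)^2}$) and the $F_2$ strong tracker from \thmref{thm:strong:F2} for $p=2$ (space $\tO{\frac{1}{\eps^2}\log n}$, with the $\log\log n$ term absorbed into $\polylog\frac1\eps$ once the stream is broken into constant-doubling pieces). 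For the difference estimator I would invoke \lemref{lem:diff:est:F2} (space $\O{\frac{\gamma\log n}{\eps^2}(\log\frac1\eps+\log\frac1\delta)}$) for $p=2$ and \lemref{lem:diff:est:Fp:smallp} for $p\in(0,2)$, after establishing — via the remark in \secref{sec:sliding:smallp} — that these fixed-prefix difference estimators are in fact valid \emph{suffix-pivoted} difference estimators: the key observation, already sketched in the overview, is that both estimators' variance can be written in terms of $F_p(v)$ alone, and if $F_p(v+u_1)-F_p(u_1)\le\gamma F_p(v+u_1)$ at some point then $F_p(v)\le\gamma F_p(v+u_1)\le F_p(v+u_2)$ for any later $u_2\succeq u_1$, so the variance stays bounded by $\gamma F_p(v+u_2)^2$ and $\O{\gamma/\eps^2}$ independent copies still suffice.

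Next I would argue correctness of \algref{alg:sliding:smallp}, which differs from the generic \algref{alg:sliding} only in that it drops the $\O{\log n}$ guesses for the final value of $F_p$: since the smooth histogram already maintains a constant-factor approximation to $F_p$ on every suffix, the difference-estimator blocks can be partitioned by the ratio of the running difference to this constant-factor estimate rather than to an external guess. This is exactly the content of \lemref{lem:sliding:correctness:rangep}, whose proof reuses the structural lemmas \lemref{lem:sw:top:constant}, \lemref{lem:diff:est:well:defined}, \lemref{lem:sliding:delta:lower}, \lemref{lem:sliding:active:indices}, and \lemref{lem:sliding:delta:upper} essentially verbatim, with the only change being that ``$\calA(t^{(k)}_{i,j,\ell-1},t^{(k)}_{i,j,\ell+1},1,\delta)\le n^C/2^{k+j+10}$'' is replaced by ``$\le 2^{-j-10}\cdot\calA(t_i,t,\eta,\delta)$'' in the merge condition; one has to double-check that \lemref{lem:sw:top:constant} supplies the constant-factor relationship $F_p(t_i,m)\le 2F_p(W)$ needed to translate this local condition into the smoothness precondition of the suffix-pivoted difference estimator, and that the $(\eps,\eps)$-smoothness for $p\le 1$ forces the merge subroutine to compare \emph{sums of difference-estimator outputs} rather than sums of moments, due to sub-additivity.

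For the space bound I would then just track constants: we maintain $\O{\log n}$ smooth-histogram timestamps $t_i$ at the top level (since $F_p$ is monotone and polynomially bounded), and for each we maintain $\beta=\O{\log\frac1\eps}$ granularities with $\O{1}$ active blocks each (by \lemref{lem:sliding:active:indices}, $b_j-a_j\le 512$), but actually $O(\log\frac1\eps)$ blocks per level need to be stored to handle future expirations — across the $\beta$ levels with $\gamma_j=2^{3-j}$ the per-instance space forms a geometric series summing to $\frac1{\eps^2}S_{F_p}\cdot\polylog\frac1\eps$, where $S_{F_2}=\log n\cdot\polylog$ and $S_{F_p}=\log n(\log\log n)^2\cdot\polylog$. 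Multiplying by the $\O{\log n}$ timestamps, and rescaling the failure probability to $\delta'=1/\poly(m)$ (which only costs a further $\polylog$ factor inside $S$), gives $\tO{\frac{1}{\eps^2}\log^3 n}$ for $p=2$ and $\tO{\frac{1}{\eps^2}\log^3 n(\log\log n)^2}$ for $p\in(0,2)$; writing out the $\log\frac1\eps$ powers explicitly yields the stated $\O{\frac{1}{\eps^2}\log^3 n\log^3\frac1\eps}$ and $\O{\frac{1}{\eps^2}\log^3 n(\log\log n)^2\log^3\frac1\eps}$. Finally I would note the $L_p$-versus-$F_p$ equivalence (constant rescaling of $\eps$ for constant $p$) to phrase the conclusion in terms of the $L_p$ norm. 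The main obstacle I anticipate is the verification that the fixed-prefix difference estimators genuinely satisfy the \emph{stronger} suffix-pivoted guarantee of \defref{def:diff:est:suf} uniformly over all future $w_t$ — i.e., obtaining the strong-tracking-style supremum bound on the error while the ``larger'' vector $u$ grows arbitrarily — which requires re-running the chaining argument of \lemref{lem:sketching:Fp:smallp} with $F_p(v)$ (rather than $F_p(u)$) in the role of the small quantity, and confirming that the derandomization of \corref{lem:derandom:many} still applies when the sketched vectors are $v+w_t$ and $w_t$ with $w_t$ defined by a suffix.
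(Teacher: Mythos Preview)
Your proposal is correct and follows the paper's approach: instantiate \algref{alg:sliding:smallp} with the strong trackers of \thmref{thm:strong:F2}/\thmref{thm:strong:Fp:smallp} and the difference estimators of \lemref{lem:diff:est:F2}/\lemref{lem:diff:est:Fp:smallp} (argued to be suffix-pivoted exactly as you sketch), invoke \lemref{lem:sliding:correctness:rangep} for correctness, and sum the space over the $\beta$ granularities and the $\O{\log n}$ smooth-histogram timestamps. One small correction to your accounting: level $j$ actually stores $\O{2^j}$ blocks (by \lemref{lem:sliding:instances}), not $\O{\log\frac{1}{\eps}}$; since each block's difference estimator has space proportional to $\gamma_j=2^{3-j}$, the product is $\O{1}$ per level and summing over $\beta=\O{\log\frac{1}{\eps}}$ levels with $\eta=\O{\eps/\log\frac{1}{\eps}}$ recovers the same $\log^3\frac{1}{\eps}$ factor you obtained---so your final bound is right, but the mechanism is a constant-per-level sum rather than a geometric series.
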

\begin{proof}
Consider \algref{alg:sliding} and note that it suffices to output a $(1+\eps)$-approximation to the $F_p$-moment of the window. 
Thus correctness follows from \lemref{lem:sliding:correctness:rangep}. 

To analyze the space complexity for $p=2$, we can use the strong tracker of \thmref{thm:strong:F2} and the difference estimator of \lemref{lem:diff:est:F2}. 
Observe that running a single instance of these subroutines with accuracy parameter $\eta$ and failure probability $\delta$ requires space $\O{\frac{\log n}{\eta^2}\left(\log\log n+\log\frac{1}{\eta}+\log\frac{1}{\delta}\right)}$. 
We set the probability of failure to be $\delta=\frac{1}{\poly(n,m)}$ so that by a union bound, all algorithms in the stream are simultaneously correct at all points in time, with probability $1-\frac{1}{\poly(n)}$. 

By \lemref{lem:sliding:instances}, we run $\O{2^j}$ instances of the suffix difference estimator at level $j$ for each $i$, with accuracy $\eta=\O{\frac{\eps}{\log\frac{1}{\eps}}}$, ratio parameter $\gamma=2^{3-j}$, and $\delta=\frac{1}{\poly(n,m)}$. 
Hence, the total space used by level $j$ for a fixed $i$ is 
\[\O{2^j\cdot\frac{\gamma\,\log^2 n}{\eta^2}}=\O{\frac{1}{\eps^2}\log^2 n\log^2\frac{1}{\eps}}.\]
Summing across all $j\in[\beta]$ for $\beta=\O{\log\frac{1}{\eps}}$ levels, then the total space is for a fixed $i$ is $\O{\frac{1}{\eps^2}\log^2 n\log^3\frac{1}{\eps}}$. 
Since $i=\O{\log n}$, then the total space is $\O{\frac{1}{\eps^2}\log^3 n\log^3\frac{1}{\eps}}$. 

For $p<2$, we can use the strong tracker of \thmref{thm:strong:Fp:smallp} and the difference estimator of \lemref{lem:diff:est:Fp:smallp}. 
Since a single instance of these subroutines with accuracy parameter $\eta=\O{\frac{\eps}{\log\frac{1}{\eps}}}$ and failure probability $\delta=\frac{1}{\poly(n,m)}$ requires space at most $\O{\frac{\log^2 n}{\eta^2}(\log\log n)^2}$, then the total space is $\O{\frac{1}{\eps^2}\log^3 n(\log\log n)^2\log^3\frac{1}{\eps}}$.
\end{proof}

\subsection{Entropy Estimation in the Sliding Window Model}
\seclab{sec:sw:entropy}
In this section, we show how our adversarially robust $F_p$ estimation algorithms gives a black-box algorithm for entropy estimation. 
For a frequency vector $v\in\mathbb{R}^n$, we define the Shannon entropy by $H=-\sum_{i=1}^n v_i\log v_i$. 
\begin{observation}
\obslab{obs:entropy:addmult}
Any algorithm that gives an $\eps$-additive approximation of the Shannon Entropy $H(v)$ also gives a $(1+\eps)$-multiplicative approximation of the function $h(v):=2^{H(v)}$ (and vice versa). 
\end{observation}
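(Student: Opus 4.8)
\textbf{Proof proposal for \obsref{obs:entropy:addmult}.}

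The plan is to prove both directions by a direct calculation linking additive error in $H$ to multiplicative error in $h(v)=2^{H(v)}$, using that $h$ is exactly the exponential of $H$ with base $2$, so that ratios of $h$-values are powers of two of $H$-differences. First I would fix a frequency vector $v\in\mathbb{R}^n$ and write $H:=H(v)$ for its Shannon entropy (with the usual normalization so that $v$ is treated as a probability distribution, or equivalently $H=-\sum_i \frac{v_i}{\|v\|_1}\log\frac{v_i}{\|v\|_1}$, whichever convention the paper uses later; the argument is insensitive to this).

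For the forward direction, suppose $\widehat{H}$ satisfies $|\widehat{H}-H|\le\eps$. Setting $\widehat{h}:=2^{\widehat{H}}$, we get $\frac{\widehat{h}}{h}=2^{\widehat{H}-H}$, and since $|\widehat{H}-H|\le\eps$ this ratio lies in $[2^{-\eps},2^{\eps}]$. Using the elementary bounds $2^{x}\le 1+2x$ for $x\in[0,1]$ (from convexity of $2^x$ evaluated at the endpoints $0,1$, valid since $\eps<1$) and $2^{-x}\ge 1-x$ for $x\ge 0$, we obtain $\widehat{h}\in[(1-\eps)h,(1+2\eps)h]$, which after a harmless constant rescaling of $\eps$ is a $(1+\eps)$-multiplicative approximation to $h(v)$. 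For the reverse direction, suppose $\widehat{h}$ satisfies $|\widehat{h}-h|\le\eps h$, i.e.\ $\widehat{h}/h\in[1-\eps,1+\eps]$. Then $\widehat{H}:=\log_2\widehat{h}$ satisfies $\widehat{H}-H=\log_2(\widehat{h}/h)\in[\log_2(1-\eps),\log_2(1+\eps)]$, and since $|\log_2(1+x)|\le\frac{|x|}{\ln 2}\le 2|x|$ for $|x|\le\frac12$, we get $|\widehat{H}-H|\le 2\eps$, again a valid additive $\O{\eps}$-approximation after rescaling.

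Since this observation is purely a change-of-variables between the two error notions and involves only the two elementary inequalities above, there is essentially no obstacle; the only point requiring a word of care is that the equivalence is up to constant factors in $\eps$ (one cannot get the identical constant in both directions because $2^x$ and its inverse have slightly different linearizations), so the cleanest statement is the ``$\eps$-additive $\Leftrightarrow$ $(1+\O{\eps})$-multiplicative'' version, and downstream the paper should simply run the $F_p$-based estimator with accuracy parameter a constant factor smaller than the target $\eps$. I would also remark that $h(v)>0$ always (as $H(v)\ge 0$), so $\log_2\widehat{h}$ is well-defined provided the multiplicative estimate $\widehat{h}$ is positive, which holds for $\eps<1$.
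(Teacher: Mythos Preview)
The paper does not supply a proof for this observation; it is stated as self-evident and immediately used to reduce entropy estimation to $(1+\eps)$-approximation of $h(v)=2^{H(v)}$. Your proposal is exactly the routine verification one would expect and is correct in substance: the key identity $\widehat{h}/h=2^{\widehat{H}-H}$ together with first-order bounds on $2^x$ and $\log_2(1+x)$ gives the equivalence up to constant factors in $\eps$, which is all that is needed downstream.

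One small slip: the intermediate inequality $|\log_2(1+x)|\le|x|/\ln 2$ is false for $x<0$ (e.g.\ at $x=-1/2$ the left side is $1$ while the right side is $1/(2\ln 2)<1$), because $|\ln(1+x)|>|x|$ when $x\in(-1,0)$. However, the bound you actually use, $|\log_2(1+x)|\le 2|x|$ for $|x|\le\tfrac12$, is correct: set $g(x)=\log_2(1+x)-2x$, note $g(0)=g(-\tfrac12)=0$ and $g$ is concave, so $g\ge 0$ on $[-\tfrac12,0]$; the case $x\ge 0$ is immediate from $\ln(1+x)\le x$. So just drop the incorrect intermediate step and justify $|\log_2(1+x)|\le 2|x|$ directly. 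Your remark that the equivalence is only up to constants (and that $\widehat{h}>0$ so $\log_2\widehat{h}$ is well-defined) is on point and is the only caveat worth recording.
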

Thus for the remainder of the section, we focus on obtaining a $(1+\eps)$-multiplicative approximation of $h(v)=2^{H(v)}$. 
\begin{lemma}[Section 3.3 in~\cite{HarveyNO08}]
\lemlab{lem:entropy:reduction}
For $k=\log\frac{1}{\eps}+\log\log m$ and $\eps'=\frac{\eps}{12(k+1)^3\log m}$, there exists an efficiently computable set $\{y_0,\ldots,y_k\}$ with $y_i\in(0,2)$ for all $i$ and an efficiently computable deterministic function that takes $(1+\eps')$-approximations to $F_{y_i}(v)$ and outputs a $(1+\eps)$-approximation to $h(v)=2^{H(v)}$. 
\end{lemma}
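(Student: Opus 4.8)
This is the reduction of Harvey, Nelson, and Onak, and I would reproduce their argument. Write $M:=\|v\|_1$, which equals $F_1(v)$ and in the applications at hand is known exactly (it is the number of active updates, at most $m$), and set $p_i:=v_i/M$ so that $F_a(v)=M^a\,G(a)$ with $G(a):=\sum_{i=1}^n p_i^a$; dividing a $(1+\eps')$-approximation of $F_a(v)$ by the known $M^a$ gives a $(1+\eps')$-approximation of $G(a)$. The plan is to express $H(v)=-\sum_i p_i\log_2 p_i$ as an approximate linear combination $\sum_{j=0}^k a_j\,G(y_j)$ with fixed, explicitly computable coefficients $a_j$ and fixed nodes $y_j\in(0,2)$, and then output $2^{\sum_j a_j\,F_{y_j}(v)/M^{y_j}}$. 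By \obsref{obs:entropy:addmult} it suffices to produce an additive-$O(\eps)$ estimate of $H(v)$, and since $H(v)\le\log_2 M\le\log_2 m$ this follows once (i) the approximation error $|H(v)-\sum_j a_j G(y_j)|$ is $O(\eps)$, and (ii) the error propagated from the $(1+\eps')$-approximations of the $G(y_j)$ is $O(\eps)$.

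For step (i) the heart of the matter is an approximation-theoretic statement: the normalized entropy integrand is uniformly approximated by a short combination of power functions with exponents in $(0,2)$. Concretely one wants degree-$k$ coefficients with $\bigl|\,{-\log_2 x}-\sum_{j=0}^k a_j\,x^{\,y_j-1}\bigr|\le O(\eps)$ for all $x\in[1/M,1]$ (the range of the $p_i$), with $y_j-1\in(-1,1)$; summing against the weights $p_i$, which total $1$, then gives $|H(v)-\sum_j a_j G(y_j)|\le O(\eps)\sum_i p_i=O(\eps)$. After the substitution $x=e^{-t}$ this becomes the problem of approximating the linear function $t\mapsto t/\ln 2$ on $t\in[0,\ln M]$ by a combination of exponentials $e^{(1-y_j)t}$ with $1-y_j\in(-1,1)$, equivalently of approximating $\ln w$ on $w\in[1,M]$ by a generalized polynomial $\sum_j a_j w^{\,1-y_j}$; one takes the $y_j$ to be images of Chebyshev nodes on a fixed subinterval of $(0,2)$ about $1$ and $\sum_j a_j x^{\,y_j-1}$ the corresponding Chebyshev interpolant. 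Since the relevant function is analytic on a Bernstein ellipse the truncation error decays geometrically in $k$; the ellipse parameter degrades with the range $[1/M,1]$ of scales, which is what forces the additive $\log\log m$ in $k=\log\frac1\eps+\log\log m$, while the $\log\frac1\eps$ term is the usual price of geometric convergence.

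For step (ii), replacing each $G(y_j)$ in $\sum_{j=0}^k a_j G(y_j)$ by a value within relative error $\eps'$ perturbs the estimate by at most $\eps'\sum_j |a_j|\,G(y_j)$; a Markov--Bernstein type bound on the coefficients of the degree-$k$ interpolant, together with the rescaling back to $[1/M,1]$, gives $\sum_j |a_j|\,G(y_j)=\O{(k+1)^3\log m}$, so the choice $\eps'=\frac{\eps}{12(k+1)^3\log m}$ from the statement makes the propagated error $O(\eps)$. Adding the $O(\eps)$ truncation error of step (i) and rescaling $\eps$ by a constant yields an additive-$\eps$ estimate of $H(v)$, hence a $(1+\eps)$-approximation to $h(v)=2^{H(v)}$. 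The set $\{y_0,\dots,y_k\}$ is the fixed node set and the deterministic function is $\big(F_{y_0}(v),\dots,F_{y_k}(v)\big)\mapsto 2^{\sum_j a_j F_{y_j}(v)/M^{y_j}}$, both efficiently computable. The main obstacle is exactly the quantitative approximation-theory input of (i)--(ii): exhibiting a node set in $(0,2)$ whose interpolant of the entropy integrand has both geometrically small truncation error at degree $k=O(\log\frac1\eps+\log\log m)$ and coefficient $\ell_1$-mass (weighted by the moment values) only $\poly(k,\log m)$; everything else is bookkeeping.
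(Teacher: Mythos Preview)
The paper does not prove this lemma; it is quoted from \cite{HarveyNO08}, and the paper only records the construction immediately after the statement: the nodes are $y_i=1+f(\cos(i\pi/k))$ for an explicit affine $f$ (Chebyshev nodes mapped into a short interval just below $1$), and the output is $2^{P(\cdot)}$ where $P$ is the degree-$k$ interpolant of the moment data at those nodes. There is no in-paper argument to compare against beyond this description.

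Your sketch is close in spirit to the Harvey--Nelson--Onak argument but the justification for step~(i) conflates two variables. The HNO construction interpolates, in the variable $y$, the R\'enyi entropy $y\mapsto H_y=\frac{1}{1-y}\log_2\sum_i p_i^{y}$, which is analytic in $y$ with a removable singularity at $1$, and evaluates the polynomial interpolant at $y=1$. The Chebyshev nodes and Bernstein-ellipse geometric convergence you invoke are exactly right for \emph{that} interpolation in $y$; the radius of analyticity is governed by $\ln M\le\ln m$ (whence the $\log\log m$ in $k$), and the blowup of the Lagrange weights when extrapolating to $y=1$ is where the $(k+1)^3\log m$ in $\eps'$ comes from. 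The resulting estimate is a linear combination of the $\log_2 G(y_j)$ (equivalently of the $H_{y_j}$), not of the $G(y_j)$ themselves---which is also what the paper's ``$2^{P(\cdot)}$'' output encodes.

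By contrast, you frame step~(i) as approximating $-\log_2 x$ on $[1/M,1]$ by a M\"untz sum $\sum_j a_j x^{y_j-1}$ and call this ``the Chebyshev interpolant''. That object is not a polynomial in $x$, so Chebyshev/Bernstein-ellipse bounds do not apply to it directly. One can rescue the claim by noting that for each fixed $x$ the map $y\mapsto x^{y-1}$ is entire and applying polynomial-interpolation error bounds in $y$ uniformly over $x\in[1/M,1]$---but then you are back to the HNO analysis in the $y$-variable. As written, your proposed output $2^{\sum_j a_j G(y_j)}$ and the asserted bound $\sum_j|a_j|G(y_j)=\O{(k+1)^3\log m}$ are not the quantities HNO control, and would need a separate argument you have not supplied.
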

The set $\{y_0,\ldots,y_k\}$ in \lemref{lem:entropy:reduction} can computed as follows, described by \cite{HarveyNO08}. 
Let $\ell=\frac{1}{2(k+1)\log m}$ and $f(z)=\frac{(k^2\ell)z-\ell(k^2+1)}{2k^2+1}$. 
Then for each $y_i$, we have $y_i=1+f(\cos(i\pi/k))$, so that the set $\{y_0,\ldots,y_k\}$ in \lemref{lem:entropy:reduction} can be computed in linear time. 
A $(1+\eps)$-multiplicative approximation to $h(v)=2^{H(v)}$ can then be computed from $2^{P(0)}$, where $P(x)$ is the degree $k$ polynomial interpolated at the points $y_0,\ldots,y_k$, so that $P(y_i)=F_{y_i}(v)$ for each $i$. 

\begin{theorem}
\thmlab{thm:sw:entropy}
Given $\eps>0$, there exists a sliding window algorithm that outputs an additive $\eps$-approximation to Shannon entropy and uses $\tO{\frac{\log^5 n}{\eps^2}}$ bits of space and succeeds with probability at least $\frac{2}{3}$. 
\end{theorem}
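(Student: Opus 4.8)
The plan is to follow the same reduction used for the adversarially robust entropy algorithm in \thmref{thm:robust:entropy}, but instantiated with our sliding window moment estimators. First I would invoke \obsref{obs:entropy:addmult} to reduce the task of producing an additive $\eps$-approximation to the Shannon entropy $H(v)$ of the window to producing a $(1+\eps)$-multiplicative approximation to $h(v)=2^{H(v)}$. Then by \lemref{lem:entropy:reduction}, it suffices to compute $(1+\eps')$-multiplicative approximations to $F_{y_i}(v)$ for each $y_i$ in the set $\{y_0,\dots,y_k\}\subseteq(0,2)$, where $k=\log\frac{1}{\eps}+\log\log m$ and $\eps'=\frac{\eps}{12(k+1)^3\log m}$, and feed these into the deterministic polynomial-interpolation function of~\cite{HarveyNO08} (evaluate $2^{P(0)}$ for the degree-$k$ interpolant $P$ with $P(y_i)=F_{y_i}(v)$).

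Next I would run, in parallel, $k+1$ independent copies of the sliding window $F_p$-moment estimation algorithm of \thmref{thm:sliding:main}, one for each $p=y_i\in(0,2)$, all on the same stream and with the same window parameter $W$, each with accuracy parameter $\eps'$ and failure probability $\delta=\frac{1}{\poly(n,m)}$ (this is already the internal setting of \thmref{thm:sw:framework}, so no extra boosting is needed beyond a rescaling of $\delta$). A union bound over the $\O{k}=\tO{1}$ instances shows that all of them simultaneously output $(1+\eps')$-approximations to the corresponding $F_{y_i}(W)$ with probability at least $\frac{2}{3}$, after which the reconstruction of $h$ is deterministic and exact up to the stated multiplicative error; translating back through \obsref{obs:entropy:addmult} gives the additive $\eps$-approximation to the entropy of the window.

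For the space bound, each $F_{y_i}$ instance with $p\in(0,2)$ uses $\O{\frac{1}{(\eps')^2}\log^3 n(\log\log n)^2\log^3\frac{1}{\eps'}}$ bits by \thmref{thm:sliding:main}. Since $\log m=\O{\log n}$ and $k=\O{\log\frac{1}{\eps}+\log\log n}$, we have $\frac{1}{(\eps')^2}=\O{\frac{1}{\eps^2}(k+1)^6\log^2 m}=\O{\frac{\log^2 n}{\eps^2}\cdot\polylog\left(\frac{1}{\eps},\log n\right)}$, so a single instance uses $\tO{\frac{\log^5 n}{\eps^2}}$ bits; multiplying by the $\tO{1}$ number of instances leaves the bound at $\tO{\frac{\log^5 n}{\eps^2}}$.

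The main obstacle is not the reduction itself, which is black-box, but verifying that the polylogarithmic overheads stack correctly: the $(k+1)^6\log^2 m$ blow-up from $\eps'$ contributes exactly the two extra $\log n$ factors that account for the jump from $\tO{\frac{\log^3 n}{\eps^2}}$ to $\tO{\frac{\log^5 n}{\eps^2}}$, and one must confirm that the $\{y_i\}$ all lie strictly inside $(0,2)$ so that \thmref{thm:sliding:main} applies (guaranteed by the construction $y_i=1+f(\cos(i\pi/k))$ with the small scaling $\ell$, which keeps every $y_i$ bounded away from $0$ and $2$), and that the sensitivity of the deterministic reconstruction to the $F_{y_i}$ estimates is already absorbed into the choice of $\eps'$, as asserted by \lemref{lem:entropy:reduction}.
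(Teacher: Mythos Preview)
Your proposal is correct and follows essentially the same approach as the paper: reduce via \obsref{obs:entropy:addmult} and \lemref{lem:entropy:reduction} to $(1+\eps')$-approximating $F_{y_i}$ for $\O{k}$ values $y_i\in(0,2)$ with $\eps'=\frac{\eps}{12(k+1)^3\log m}$, instantiate each with the sliding window $F_p$ estimator of \thmref{thm:sliding:main}, and observe that the $\log^2 m$ blow-up in $1/(\eps')^2$ combined with the $\log^3 n$ of \thmref{thm:sliding:main} yields $\tO{\frac{\log^5 n}{\eps^2}}$. Your additional remarks about the $y_i$ staying strictly inside $(0,2)$ and about the union bound over the $\O{k}$ instances are useful sanity checks that the paper leaves implicit.
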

\begin{proof}
By \obsref{obs:entropy:addmult} and \lemref{lem:entropy:reduction}, it suffices to obtain adversarially robust $(1+\eps')$-approximation algorithms to $F_{y_i}(v)$ for all $y_i\in(0,2)$ in the set $\{y_0,\ldots,y_k\}$, where $k=\log\frac{1}{\eps}+\log\log m$ and $\eps'=\frac{\eps}{12(k+1)^3\log m}$. 
By \thmref{thm:sliding:main} with accuracy parameter $\eps'$, we can obtain such robust algorithms approximating each $F_{y_i}(v)$, using space $\tO{\frac{1}{(\eps')^{2}}\log^2 n}$. 
Moreover, with a rescaling of the failure probability $\delta'=\frac{\delta}{\poly(n,m)}$ in both \lemref{lem:diff:est:Fp:smallp} and \thmref{thm:strong:Fp:smallp}, then all algorithms simultaneously succeed with probability at least $1-\delta$. 
Hence for $\eps'=\frac{\eps}{12(k+1)^3\log m}$ and $\log m=\O{\log n}$, each of the $\O{k}$ algorithms use space $\tO{\frac{1}{\eps^2}\log^5 n}$. 
Since $k=\log\frac{1}{\eps}+\log\log m$, the overall space complexity follows. 
\end{proof}

\subsection{Moment Estimation for Integer \texorpdfstring{$p>2$}{p>2}}
In this section, we describe our algorithm to estimate the $F_p$ moment in the sliding window model, for integer $p>2$. 
Suppose that we have vectors $u$ and $v$ such that the vector $u$ arrives before the vector $v$ and $F_p(u)\le\gamma F_p(v)\le 2^p F_p(u)$, for some $\gamma\le 1$. 
A crucial subroutine in \algref{alg:sliding} is the $\mergesw$ subroutine, which controls the number of blocks in which the substream is partitioned into, at each granularity, and thus gives efficient bounds on the space of the algorithm.  
\algref{alg:mergesw:smallp} merges separate blocks based on their contribution to the value of the difference $F_p(u+v)-F_p(v)$. 
Rather than using such a merge subroutine in this section, we will again create $\O{\log n}$ parallel instances as in \algref{alg:mergesw:frame}, corresponding to exponentially increasing guesses $2^i$ for the value of $F_p(v)$ and incur the additional $\O{\log n}$ overhead in space.  
We will then partition blocks based on their $F_p$ values in comparison to the guess of $F_p(v)$. 
Now if each block $u$ is required to satisfy $F_p(u)\le\gamma F_p(v)\le 2^p F_p(u)$ for some $\gamma\approx\frac{1}{2^j}$, then we can assume our guess for $F_p(v)$ is incorrect if the partitioning creates more than $2^j$ blocks.
Hence, we can again assume that each block $u$ satisfies $F_p(u)\le\gamma F_p(v)\le 2^p F_p(u)$, at the cost of an additional $\O{\log n}$ overhead in space. 

As in \secref{sec:sliding:smallp}, it remains to find a difference estimator for $F_p(u+v)-F_p(v)$, i.e., an algorithm with space dependency $\tO{\frac{\gamma}{\eps^2}\,n^{1-2/p}}$. 
Observe that $F_p(u+v)-F_p(v)=\sum_{k=1}^p\binom{p}{k}\langle u^k,v^{p-k}\rangle$. 
We estimate each term $\langle u^k,v^{p-k}\rangle$ separately. 

We first use a heavy-hitter algorithm $\hhest$ to simultaneously find all heavy-hitters $a\in[n]$ such that $u_a\ge\eps\gamma^{1/p}\,L_p(v)$ across all vectors $u$ induced by the blocks. 
These coordinates form a set $\calH$ and we read off the corresponding coordinates of $v$ to estimate $\sum_{a\in\calH}\langle u^k,v^{p-k}\rangle$ for each $k$. 
For $a\notin\calH$, we analyze separate algorithms for estimating $\sum_{a\notin\calH}\langle u,v^{p-1}\rangle$ and $\sum_{a\notin\calH}\langle u^k,v^{p-k}\rangle$ for $k\ge 2$, though the analysis is similar for both algorithms. 
Thus we require the following modification to subroutine $\stitchsw$. 

\begin{algorithm}[!htb]
\caption{Subroutine $\stitchsw$ of \algref{alg:sliding}: output estimate of $F_p$ on the sliding window}
\alglab{alg:stitchsw}
\begin{algorithmic}[1]
\State{Let $i$ be the largest index such that $t_i\le m-W+1$.}
\State{$c_0\gets t_i$}
\State{$X\gets\calA(t_i,m,\eta,\delta)$}
\For{$j\in[\beta]$}
\Comment{Stitch sketches}
\State{Let $a$ be the smallest index such that $t_{i,j,a}\ge c_{j-1}$}
\State{Let $b$ be the largest index such that $t_{i,j,b}\le m-W+1$}
\State{$c_j\gets t_{i,j,b}$}
\State{$Y_j\gets\sum_{k=a}^{b-1}\sdiffest(t_{i,j,k},t_{i,j,k+1},m,\gamma_j,\eta,\delta)$}
\EndFor
\For{each heavy-hitter $k$ output by $\calH:=\hhest(t_i,m,\eta,\delta)$}
\State{Let $g_k$ be the estimated frequency of $k$ in times $[c_{\beta},m]$.}
\State{Let $h_k$ be the estimated frequency of $k$ in times $[m-W+1,m]$.}
\State{$W_k\gets(g_k)^p-(h_k)^p$}
\EndFor
\State{\Return $Z:=X-\sum_{j=1}^\beta Y_j-\sum_{k\in\calH} W_k$}
\end{algorithmic}
\end{algorithm}

Our analysis uses a level set argument. 
For a sufficiently large constant $\alpha>0$, let $i,j\in[\alpha\log n]$. 
Then we define the contribution $C_{i,j,k}$ of level set $L_{i,j}$ for $\langle u^k,v^{p-k}\rangle$ by
\[\sum_{a\in n}\left[u_a^kv_a^{p-k}\bfone\left(u_a\in\left[\frac{\gamma^{1/p} L_p(v)}{2^i},\frac{\gamma^{1/p} L_p(v)}{2^{i-1}}\right]\right)\bfone\left(v_a\in\left[\frac{L_p(v)}{2^j},\frac{L_p(v)}{2^{j-1}}\right]\right)\right].\]
Then we say contribution $C_{i,j,k}$ of level set $L_{i,j}$ is significant if $C_{i,j,k}\ge\frac{\eps}{p^{p+1}\log^2 n}\,F_p(v)$. 

\begin{lemma}
If $C_{i,j,k}$ is significant, then $2^{|i-j|}\le\frac{\gamma^{k/(p(p-k))}p^{(p+1)/(p-k)}\log^{2/(p-k)} n}{\eps^{1/(p-k)}}$. 
\end{lemma}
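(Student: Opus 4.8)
The plan is to bound the contribution $C_{i,j,k}$ of the level set $L_{i,j}$ from above using the structural constraints on the coordinates in that level set, and then to compare this upper bound with the ``significance'' threshold $\frac{\eps}{p^{p+1}\log^2 n}F_p(v)$. First I would let $N_{i,j}$ denote the number of coordinates $a$ with $u_a\in\left[\frac{\gamma^{1/p}L_p(v)}{2^i},\frac{\gamma^{1/p}L_p(v)}{2^{i-1}}\right]$ and $v_a\in\left[\frac{L_p(v)}{2^j},\frac{L_p(v)}{2^{j-1}}\right]$. On the one hand, each such $a$ contributes at most $\left(\frac{\gamma^{1/p}L_p(v)}{2^{i-1}}\right)^k\left(\frac{L_p(v)}{2^{j-1}}\right)^{p-k}$ to $C_{i,j,k}$, so $C_{i,j,k}\le N_{i,j}\cdot 2^{p}\cdot\gamma^{k/p}\cdot 2^{-ik-j(p-k)}\cdot L_p(v)^p$. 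On the other hand, since $v\succeq u$ we must have $v_a\ge u_a$ for every coordinate, which forces $\frac{L_p(v)}{2^{j-1}}\ge\frac{\gamma^{1/p}L_p(v)}{2^i}$, i.e. $2^{i-j+1}\ge\gamma^{1/p}$; this is the ``$i\ge j$ roughly'' direction and is essentially immediate. The interesting direction is the upper bound on $2^{j-i}$.

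For that direction, I would use the two ways of bounding $N_{i,j}$ against $F_p(v)$: each coordinate counted by $N_{i,j}$ has $v_a\ge\frac{L_p(v)}{2^j}$, so $N_{i,j}\cdot\left(\frac{L_p(v)}{2^j}\right)^p\le F_p(v)=L_p(v)^p$, giving $N_{i,j}\le 2^{jp}$. Substituting into the upper bound on $C_{i,j,k}$ yields $C_{i,j,k}\le 2^p\gamma^{k/p}2^{jp-ik-j(p-k)}L_p(v)^p=2^p\gamma^{k/p}2^{k(j-i)}F_p(v)$. If $C_{i,j,k}$ is significant this gives $\frac{\eps}{p^{p+1}\log^2 n}\le 2^p\gamma^{k/p}2^{k(j-i)}$, hence $2^{k(j-i)}\ge\frac{\eps}{2^p p^{p+1}\gamma^{k/p}\log^2 n}$ — but this is a \emph{lower} bound on $2^{j-i}$, which combined with the trivial ``$i\ge j$'' direction is the wrong shape. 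So I expect to need the \emph{other} pairing: use instead $N_{i,j}\le 2^{ip}/\gamma$ from $u_a\ge\frac{\gamma^{1/p}L_p(v)}{2^i}$ together with $F_p(u)\le\gamma F_p(v)$, which gives $N_{i,j}\cdot\left(\frac{\gamma^{1/p}L_p(v)}{2^i}\right)^p\le\gamma L_p(v)^p$, i.e. $N_{i,j}\le 2^{ip}$. Plugging this in: $C_{i,j,k}\le 2^p\gamma^{k/p}2^{ip-ik-j(p-k)}L_p(v)^p=2^p\gamma^{k/p}2^{(p-k)(i-j)}F_p(v)$, and significance now gives $2^{(p-k)(i-j)}\ge\frac{\eps}{2^p p^{p+1}\gamma^{k/p}\log^2 n}$, still the wrong direction. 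The resolution must be that one should take the \emph{minimum} of the two $N_{i,j}$ bounds (or equivalently interpolate), so that whichever of $i,j$ is larger gets controlled: write $C_{i,j,k}\le 2^p\gamma^{k/p}L_p(v)^p\cdot\min\!\left(2^{k(j-i)},2^{(p-k)(i-j)}\right)$, and then significance forces $\min\!\left(2^{k(j-i)},2^{(p-k)(i-j)}\right)\ge\frac{\eps}{2^p p^{p+1}\gamma^{k/p}\log^2 n}$, so \emph{both} $2^{k(j-i)}$ and $2^{(p-k)(i-j)}$ are at least that much; taking the one corresponding to the sign of $i-j$ we get $2^{|i-j|}\le\left(\frac{2^p p^{p+1}\gamma^{k/p}\log^2 n}{\eps}\right)^{1/\min(k,p-k)}$, and since $\min(k,p-k)\ge 1$ and the worst case is $p-k$ in the exponent, after absorbing the constant $2^p$ into $p^{p+1}$ this matches the claimed $2^{|i-j|}\le\frac{\gamma^{k/(p(p-k))}p^{(p+1)/(p-k)}\log^{2/(p-k)}n}{\eps^{1/(p-k)}}$.

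The key steps in order: (1) fix $i,j,k$ with $C_{i,j,k}$ significant; (2) define $N_{i,j}$ and bound $C_{i,j,k}\le N_{i,j}\cdot\gamma^{k/p}2^{-ik-j(p-k)}L_p(v)^p$ up to a constant $2^p$; (3) derive the two bounds $N_{i,j}\le 2^{jp}$ (from $v_a\ge L_p(v)/2^j$ and $F_p(v)=L_p(v)^p$) and $N_{i,j}\le 2^{ip}$ (from $u_a\ge\gamma^{1/p}L_p(v)/2^i$ and $F_p(u)\le\gamma F_p(v)$); (4) substitute each, obtaining $C_{i,j,k}\le 2^p\gamma^{k/p}2^{k(j-i)}F_p(v)$ and $C_{i,j,k}\le 2^p\gamma^{k/p}2^{(p-k)(i-j)}F_p(v)$; (5) in each case set the right-hand side $\ge\frac{\eps}{p^{p+1}\log^2 n}F_p(v)$, cancel $F_p(v)$, and solve for $2^{j-i}$ resp. $2^{i-j}$; (6) combine the two cases according to whether $i\ge j$ or $j\ge i$ to conclude $2^{|i-j|}\le\frac{\gamma^{k/(p(p-k))}p^{(p+1)/(p-k)}\log^{2/(p-k)}n}{\eps^{1/(p-k)}}$, noting $\min(k,p-k)\ge1$ and that the stated bound takes the (weaker) $p-k$ exponent uniformly. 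The main obstacle I anticipate is step (6): getting the exponents and the placement of $\gamma$, $p$, $\log n$, $\eps$ to land \emph{exactly} as in the lemma statement — in particular verifying that the two one-sided bounds really do combine into a symmetric bound on $2^{|i-j|}$ with the asserted powers, and checking that the constant $2^p$ and the various $p$-dependent factors are genuinely absorbed into $p^{(p+1)/(p-k)}$ rather than requiring an extra factor. Everything else is a routine chain of inequalities given the definition of the level set and the promise $F_p(u)\le\gamma F_p(v)$.
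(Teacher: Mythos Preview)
Your approach is essentially the same as the paper's: bound the number $N_{i,j}$ of coordinates in the level set using $F_p(u)\le\gamma F_p(v)$ (giving $N_{i,j}\le 2^{ip}$) and $F_p(v)=L_p(v)^p$ (giving $N_{i,j}\le 2^{jp}$), multiply by the maximum per-coordinate contribution, and compare with the significance threshold. The paper in fact only writes out the case $i\le j$, using the bound $N_{i,j}\le 2^{ip}$ to get $C_{i,j,k}\le\gamma^{k/p}2^{-(j-i)(p-k)}F_p(v)$, and then declares the other case symmetric by ``without loss of generality''; you correctly work through both directions and observe that the $j\le i$ case naturally carries a $1/k$ exponent rather than $1/(p-k)$, with the stated lemma recording the weaker of the two.

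One small correction: your side remark that $v\succeq u$ (and hence $v_a\ge u_a$) is not justified here --- in this sliding-window setting $u$ and $v$ are induced by \emph{disjoint} segments of the insertion-only stream, so neither dominates the other coordinatewise. Fortunately you do not actually use this in the main argument, and the two-sided bound on $2^{|i-j|}$ comes entirely from the two counting bounds on $N_{i,j}$, exactly as you outline in steps (2)--(6).
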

\begin{proof}
Suppose without loss of generality that $i\le j$, so that $j=i+\ell$ for some $\ell\ge 0$. 
Because $F_p(u)\le\gamma F_p(v)$, then there can be at most $2^{ip}$ indices $a\in[n]$ such that $u_a\in\left[\frac{\gamma^{1/p} L_p(v)}{2^i},\frac{\gamma^{1/p} L_p(v)}{2^{i-1}}\right]$. 
Hence, we have 
\[C_{i,j,k}\le 2^{ip}\cdot\left(\frac{\gamma^{1/p} L_p(v)}{2^i}\right)^k\cdot\left(\frac{L_p(v)}{2^j}\right)^{p-k}=\frac{\gamma^{k/p}}{2^{\ell(p-k)}}\,F_p(v).\]
Since $C_{i,j,k}$ is significant, then $C_{i,j,k}\ge\frac{\eps}{p^{p+1}\log^2 n}\,F_p(v)$, which implies that $2^{\ell(p-k)}\le\frac{\gamma^{k/p}p^{p+1}\log^2 n}{\eps}$ and thus $2^j\le 2^i\,\frac{\gamma^{k/(p(p-k))}p^{(p+1)/(p-k)}\log^{2/(p-k)} n}{\eps^{1/(p-k)}}$. 
\end{proof}

We first run a heavy-hitters algorithm to find all indices $a\in[n]$ such that 
\[u_a\ge\frac{\eps}{100p^{p+1}\log^2 n}\gamma^{1/p} L_p(v)\ge\frac{\eps}{100p^{p+1}\log^2 n} L_p(u),\]
which takes $\tO{\frac{1}{\eps^2}\,n^{1-2/p}}$ space, since $F_p(u)\le\gamma F_p(v)$.  
Thus we bound the variance for a significant level set $L_{i,j}$ with $2^i\ge\frac{\eps}{100p^{p+1}\log^2 n}$.  

\begin{figure*}
\begin{mdframed}
\begin{enumerate}
\item
Find a list $\calH$ that includes all $a\in[n]$ with $u_a\ge\frac{\eps}{100p^{p+1}\log^2 n}\|u\|_p$.
\item
Using $\countsketch$, obtain an estimate $\widehat{u_a}$ to $u_a$ with additive error $\frac{\eps}{100p^{p+1}\log^2 n}\|u\|_p$ for each $a\in\calH$ and let $h\in\mathbb{R}^n$ be the vector such that $h_a=\widehat{u_a}$ if $a\in\calH$ and zero otherwise. 
\item
Perform perfect $L_2$ sampling on $w:=u-h$ to obtain a set $\calS$ of size $k=\O{\frac{\gamma}{\eps^2}n^{1-2/p}}$. 
\item
Obtain an estimate $\widehat{w_a}$ to $w_a$ for each $a\in\calS$. 
\item
Let $\widehat{W}$ be a $(1+\eps)$-approximation to $\|w\|_2^2$. 
\item
Output $W+\sum_{k=1}^{p-1}\binom{p}{k}\left(\sum_{a\in\calH}\widehat{v_a}^k,u_a^{p-k}+W\cdot\sum_{a\in\calS}\widehat{w_a}^{k-2},u_a^{p-k}\right)$. 
\end{enumerate}
\end{mdframed}
\end{figure*}

\begin{lemma}
For integer $p>2$, there exists an algorithm that uses space $\tO{\frac{\gamma}{\eps^2}n^{1-2/p}}$ and outputs an additive $\eps\cdot F_p(u)$ approximation to $\langle u^k,v^{p-k}\rangle$ with high probability for any $k\ge 2$. 
\end{lemma}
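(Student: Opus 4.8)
The plan is to mirror the fixed-prefix construction of \lemref{lem:diff:est:largep:generali}, estimating the difference via the expansion
\[
F_p(u+v)-F_p(v)=\sum_{k=1}^{p}\binom{p}{k}\langle u^k,v^{p-k}\rangle,
\]
and approximating each summand $\langle u^k,v^{p-k}\rangle$ separately: the $k=1$ term by the sliding-window analogue of \lemref{lem:diff:est:largep:one}, and a fixed $k\ge 2$ by the algorithm in the figure above. First I would run a single global \heavyhitters instance (\thmref{thm:robust:opt:HH}), of space $\tO{\frac{1}{\eps^2}n^{1-2/p}}$, that simultaneously reports for every block a list $\calH$ containing all $a$ with $u_a\ge\frac{\eps}{100p^{p+1}\log^2 n}\gamma^{1/p}L_p(v)$, together with estimates $\widehat{u_a}$ of additive error $\frac{\eps}{100p^{p+1}\log^2 n}\norm{u}_p$; the instance is admissible because $F_p(u)\le\gamma F_p(v)$ makes this threshold at least $\Omega\big(\frac{\eps}{\polylog n}\big)L_p(u)$. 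Let $h$ be the vector supported on $\calH$ with $h_a=\widehat{u_a}$, and $w:=u-h$. Since $v$ arrives after $u$, the coordinates $v_a$ for $a\in\calH$ are tracked exactly, so $\sum_{a\in\calH}\widehat{u_a}^{\,k}v_a^{p-k}$ is computed directly; the per-coordinate error propagates to at most $O\big(\frac{\eps}{\polylog n}\big)\langle u^k,v^{p-k}\rangle$ additive error, which, summed over $k$ and over the $\tO{1}$ active blocks and levels, stays within the $\eps\cdot F_p(v)$ error budget.

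For the light part $\sum_{a\notin\calH}w_a^k v_a^{p-k}=\sum_{a\notin\calH}u_a^k v_a^{p-k}$, I would run $N=\tO{\frac{\gamma}{\eps^2}n^{1-2/p}}$ independent perfect $L_2$ samplers (\thmref{thm:perfect:sampler}) on $w$, each returning a coordinate $a$ with probability $\frac{w_a^2}{\norm{w}_2^2}\pm\frac{1}{\poly(n)}$, together with an unbiased estimator $\widehat{w_a^{k-2}}$ of $w_a^{k-2}$ of variance $O(w_a^{2(k-2)})$ (via \lemref{lem:coor:exp:var}) and a $(1+\eps)$-approximation $\widehat W$ of $\norm{w}_2^2$. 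The per-sample estimator $\widehat W\cdot\widehat{w_a^{k-2}}\cdot v_a^{p-k}$ is, up to the $\frac{1}{\poly(n)}$ sampling bias, unbiased for $\sum_a w_a^k v_a^{p-k}$, exactly as in \lemref{lem:diff:est:largep:generali}; averaging the $N$ copies and taking a median over $O(\log\frac{n}{\delta})$ repetitions yields, with probability $1-\frac{\delta}{\poly(n)}$, an additive $\frac{\eps}{p^2}\cdot F_p(v)$ approximation to $\langle u^k,v^{p-k}\rangle$, provided the variance of one per-sample estimator is at most $\tO{\gamma\,n^{1-2/p}}\cdot(F_p(v))^2$.

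Establishing that variance bound is the main obstacle, and is where the level-set argument sketched before the statement enters. A single per-sample estimator has variance $O(\norm{w}_2^2)\sum_{a}w_a^{2k-2}v_a^{2(p-k)}$, and a direct Hölder bound loses a factor polynomial in $1/\eps$. Instead I would partition the coordinates into level sets $L_{i,j}$ with $u_a\in\big[\frac{\gamma^{1/p}L_p(v)}{2^i},\frac{\gamma^{1/p}L_p(v)}{2^{i-1}}\big]$ and $v_a\in\big[\frac{L_p(v)}{2^j},\frac{L_p(v)}{2^{j-1}}\big]$. Removing $\calH$ forces $2^i\ge\Omega\big(\frac{\eps}{\polylog n}\big)$, so only $O(\log^2 n)$ pairs $(i,j)$ are present; within $L_{i,j}$ there are at most $2^{ip}$ coordinates (because $F_p(u)\le\gamma F_p(v)$) and at most $2^{jp}$ coordinates, and—combined with the preceding ``significant level set'' lemma, which restricts the admissible $(i,j)$, and with $\norm{w}_2^2\le n^{1-2/p}\norm{w}_p^2\le n^{1-2/p}\gamma^{2/p}\norm{v}_p^2$ and $\norm{w}_\infty\le\Theta\big(\frac{\eps}{\polylog n}\big)\gamma^{1/p}L_p(v)$—summing the $O(\log^2 n)$ level-set contributions gives the claimed $\tO{\gamma\,n^{1-2/p}}(F_p(v))^2$ bound. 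The delicate part is the two-dimensional bookkeeping over $(i,j)$ and getting the power of $\gamma$ to come out exactly, which works because the heavy-hitter truncation caps the largest admissible light coordinate. Finally, union-bounding over the $\poly(m)$ query times, the $\tO{1}$ active blocks per level and the $\tO{1}$ levels with $\delta=\frac{1}{\poly(m)}$ gives correctness at all query times, and the total space is the shared $\tO{\frac{1}{\eps^2}n^{1-2/p}}$ heavy-hitter structure plus $\tO{\frac{\gamma}{\eps^2}n^{1-2/p}}$ for the polylog-space instances of \sampler, \countsketch and the $F_2$-estimator, i.e. $\tO{\frac{\gamma}{\eps^2}n^{1-2/p}}$ per block.
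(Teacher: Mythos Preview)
Your proposal is correct and follows essentially the same approach as the paper: remove the heavy hitters of $u$ via a single shared $\heavyhitters$ structure, perfect-$L_2$-sample from the residual $w=u-h$, form the per-sample estimator $\widehat W\cdot\widehat{w_a^{k-2}}\cdot v_a^{p-k}$, and bound its variance by the level-set argument (using $|L_{i,j}|\le 2^{ip}$, the $\|w\|_\infty$ cap from heavy-hitter removal, and the significant-level-set restriction on $(i,j)$) to get $\tO{\gamma\,n^{1-2/p}}(F_p(v))^2$, after which $\tO{\frac{\gamma}{\eps^2}n^{1-2/p}}$ samples and a median over $O(\log\frac{n}{\delta})$ repetitions finish. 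The paper's proof is terser in the variance bookkeeping but relies on exactly the same ingredients you identify; note also that the target error is additive $\eps\cdot F_p(v)$ (the $F_p(u)$ in the displayed statement is a typo), which is what you argue.
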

\begin{proof}
We first show that our algorithm gives an additive $\frac{\eps}{p^{p+1}} F_p(v)$ approximation to $\langle u^k,v^{p-k}\rangle$ with $k\ge 2$. 
To that end, we show that our algorithm gives an additive $\frac{\eps}{p^{p+1}\log^2 n} F_p(v)$ approximation to each contribution $C_{i,j,k}$ of level set $L_{i,j}$. 
Let $h\in\mathbb{R}^n$ be the vector such that $h_a=\widehat{u_a}$ if $a\in\calH$ and zero otherwise and let $w:=u-h$. 

We use $\sampler$ to sample indices $j_1,\ldots,j_\ell\in[n]$ with so that each sample is a coordinate $a\in[n]$ with probability $\frac{w_a^2}{\|w\|_2^2}+\frac{1}{\poly(n)}$. 
We obtain unbiased estimates to $\widehat{w_{j_1}^{k-2}},\ldots,\widehat{w_{j_\ell}^{k-2}}$ of $w_{j_1}^{k-2},\ldots,w_{j_\ell}^{k-2}$ through \lemref{lem:coor:exp:var}. 
We also obtain a $(1+\O{\eps})$-approximation unbiased estimate $W$ of $\|w\|_2^2$. 
Thus for each $b\in[\ell]$, the product $\widehat{w_{j_b}^{k-2}}\cdot W\cdot v_{j_b}^{p-k}$ satisfies
\begin{align*}
\Ex{\widehat{w_{j_b}^{k-2}}\cdot W\cdot v_{j_b}^{p-k}}&=\sum_{a\in\calH}\ip{\widehat{u_a}^{k}}{v_a^{p-k}}+\sum_{i,j}\sum_{a\in L_{i,j}\cap\calH}\left(\frac{w_a^2}{\|w\|_2^2}+\frac{1}{\poly(n)}\right)\cdot w_a^{k-2}\cdot(1\pm\O{\eps})\|w\|_2^2\cdot v_a^{p-k}\\
&+\sum_{i,j}\sum_{a\in L_{i,j}\setminus\calH}\left(\frac{w_a^2}{\|w\|_2^2}+\frac{1}{\poly(n)}\right)\cdot w_a^{k-2}\cdot(1\pm\O{\eps})\|w\|_2^2\cdot v_a^{p-k}.
\end{align*}
Observe that since $\widehat{u_a}$ is an additive $\eps\gamma^{1/p}\,L_p(v)$-approximation to $u_a$ for $a\in\calH$, then $|w_a|\le\eps|u_a|$, so that the second summation is at most $\O{\eps}\,\langle u^k, v^{p-k}\rangle$. 
Moreover, we have $w_a=u_a$ for $a\in[n]\setminus\calH$, so that 
\begin{align*}
\Ex{\widehat{w_{j_b}^{k-2}}\cdot W\cdot v_{j_b}^{p-k}}&\in(1\pm\O{\eps})|\ip{w^{k}}{v^{p-k}}|+\frac{1}{\poly(n)}.
\end{align*}
For $a\in L_{i,j}\setminus\calH$, we have $w_a\le u_a\le\eps\gamma^{1/p}\,L_p(v)$. 
Thus, the variance is at most
\begin{align*}
\Var\left(\widehat{w_{j_b}^{k-2}}\cdot W\cdot v_{j_b}^{p-k}\right)&\le\sum_{i,j}\sum_{a\in L_{i,j}\setminus\calH}\left(\frac{w_a^2}{\|w\|_2^2}+\frac{1}{\poly(n)}\right)\cdot w_a^{2k-4}\cdot(1\pm\O{\eps})\|w\|_2^4\cdot v_a^{2p-2k}\\
&\le\sum_{i,j}\sum_{a\in L_{i,j}\setminus\calH}(1\pm\O{\eps})w_a^{2k-2}\cdot\|w\|_2^2\cdot v_a^{2p-2k}\\
&\le\sum_{i,j}\sum_{a\in L_{i,j}\setminus\calH}(1\pm\O{\eps})\frac{\eps^p\gamma^2\,(L_p(v))^{2p-2}}{2^{2ip}}\frac{\gamma^{2k/p}p^{p(2p+2)}\log^4 n}{\eps^2}\,\|w\|_2^2\\
&\le\sum_{i,j}\sum_{a\in L_{i,j}\setminus\calH}\frac{Cn^{1-2/p}\eps^p\gamma\,(F_p(v))^2\log^4 n}{2^{2ip}\eps^2},
\end{align*}
for some constant $C>0$. 
Hence, we have
\begin{align*}
\Var\left(\widehat{w_{j_b}^{k-2}}\cdot W\cdot v_{j_b}^{p-k}\right)&\le\sum_{i,j}2^{ip}\cdot\frac{Cn^{1-2/p}\eps^p\gamma\,(F_p(v))^2\log^4 n}{2^{2ip}\eps^2}\\
&\le\O{\log^2 n}\cdot\frac{Cn^{1-2/p}\eps^p\gamma\,(F_p(v))^2\log^4 n}{\eps^2}.
\end{align*}
Thus by setting $\ell=\O{\frac{\gamma}{\eps^2}n^{1-2/p}}$, we obtain an additive $\frac{\eps}{p^{p+1}}\cdot F_p(v)$ approximation to $\langle u^k,v^{p-k}\rangle$ for $k\ge2$, with constant probability. 
We can then boost this probability to $1-\frac{\delta}{\poly(n)}$ by repeating $\O{\log\frac{n}{\delta}}$ times. 
\end{proof}

We use a similar approach to estimate $\langle u,v^{p-1}\rangle$. 
\begin{figure*}
\begin{mdframed}
\begin{enumerate}
\item
Use a set of exponential random variables to form a vector $W$ of duplicated and scaled coordinates of $w$. 
\item
Hash the coordinates of $W$ into a $\countsketch$ data structure with $\O{\log n}$ buckets. 
\item
Use the same exponential random variables to perform perfect $L_2$ sampling on $v$ to obtain a coordinate $(i,j)$ and an unbiased estimate $\widehat{v_{i,j}}$ to $v_{i,j}$. 
\item
Let $\widehat{V}$ be an unbiased estimate of $\|v\|_2^2$ with second moment $\O{\|v\|_2^4}$.  
\item
Query $\countsketch$ for an unbiased estimate $\widehat{w_{i,j}}$ to $w_{i,j}$ and set an estimator as $\widehat{V}\cdot\widehat{w_{i,j}}\left(\widehat{v_{i,j}}\right)^{p-3}$. 
\item
Output the mean of $\tO{\frac{\gamma}{\eps^2}\cdot n^{1-2/p}}$ such estimators. 
\end{enumerate}
\end{mdframed}
\end{figure*}

\begin{lemma}
For integer $p>2$, there exists an algorithm that uses space $\tO{\frac{\gamma}{\eps^2}n^{1-2/p}}$ and outputs an additive $\eps\cdot F_p(u)$ approximation to $\langle u,v^{p-1}\rangle$ with high probability. 
\end{lemma}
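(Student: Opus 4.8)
The plan is to mirror the analysis of the $k \ge 2$ case, but now with a single power of $u$ (equivalently $v$ in the notation of the suffix-pivoted figure) appearing in the inner product. First I would set up the level-set decomposition: write $\langle u, v^{p-1}\rangle = \sum_{i,j} C_{i,j,1}$ where $C_{i,j,1}$ is the contribution of the level set $L_{i,j}$ on which $u_a \approx \gamma^{1/p} L_p(v)/2^i$ and $v_a \approx L_p(v)/2^j$, and observe that only the significant level sets, those with $C_{i,j,1} \ge \frac{\eps}{p^{p+1}\log^2 n} F_p(v)$, matter since there are only $\O{\log^2 n}$ level sets and the rest contribute at most $\frac{\eps}{p^{p+1}} F_p(v)$ in total. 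As in the preceding lemma, the significance condition forces $2^{|i-j|} \le \frac{\gamma^{1/(p(p-1))} p^{(p+1)/(p-1)} \log^{2/(p-1)} n}{\eps^{1/(p-1)}}$. After removing the heavy hitters $\calH$ of $u$ (those with $u_a \ge \frac{\eps}{100 p^{p+1} \log^2 n}\|u\|_p$) into the vector $h$ and forming $w := u - h$, the residual $w$ has all coordinates bounded by $\frac{\eps}{100 p^{p+1}\log^2 n}\gamma^{1/p} L_p(v)$, and the heavy contribution $\sum_{a \in \calH} u_a v_a^{p-1}$ is handled directly by tracking the $\countsketch$ estimate $\widehat{u_a}$ together with the exact coordinate $v_a^{p-1}$, incurring only $\O{\eps}\langle u, v^{p-1}\rangle$ error since $|w_a| \le \eps |u_a|$ on $\calH$.

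The key computation is bounding the variance of the single-sample estimator $\widehat{V}\cdot\widehat{w_{j_b}}\cdot(\widehat{v_{j_b}})^{p-3}$ from Figure~\ref{fig:diff:est:largep:indexone}, adapted so that we perfect-$L_2$ sample coordinates of $v$ (not $u$) and read off the corresponding coordinate of $w$ from a $\countsketch$ built on the exponentially-scaled-and-duplicated $W$. Following \lemref{lem:diff:est:largep:one}, the min-stability of exponentials gives that the $\countsketch$ on $W$ with $\O{\log n}$ buckets returns $\widehat{w_{j_b}}$ with variance $\O{w_a^2}$ when $a$ is sampled, and $\widehat{V}$ estimates $\|v\|_2^2$ with second moment $\O{\|v\|_2^4}$. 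Unbiasedness is immediate: $\Ex{\widehat{V}\cdot\widehat{w_{j_b}}\cdot(\widehat{v_{j_b}})^{p-3}} = \|v\|_2^2 \sum_a (v_a^2/\|v\|_2^2) w_a v_a^{p-3} = \langle w, v^{p-1}\rangle$, plus the $\frac{1}{\poly(n)}$ sampler error. For the variance, I would expand $\Var \le \sum_a (v_a^2/\|v\|_2^2) \cdot \|v\|_2^4 \cdot w_a^2 \cdot v_a^{2p-6} = \|v\|_2^2 \sum_a w_a^2 v_a^{2p-4}$, and then split over level sets: on $L_{i,j} \setminus \calH$ we have $w_a^2 \le \frac{\eps^2 \gamma^{2/p}(L_p(v))^2}{2^{2i}}$, $v_a^{2p-4} \le (L_p(v)/2^{j-1})^{2p-4}$, and the number of such $a$ is at most $2^{jp}$ since $F_p(v)$ bounds the $j$-th level. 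Combining with $\|v\|_2^2 \le n^{1-2/p} L_p(v)^2$ and the significance bound on $2^{|i-j|}$, each significant level set contributes at most $\tO{\frac{\gamma}{\eps^2} n^{1-2/p}} \cdot \eps^2 \cdot (F_p(v))^2 / \O{\log^2 n}$ to the variance after accounting for the $\O{\log^2 n}$ level sets, so the total variance is $\tO{\frac{\gamma}{\eps^2} n^{1-2/p}} \cdot \eps^2 (F_p(v))^2$ once we recall $F_p(u) \le \gamma F_p(v)$ and $p > 2$.

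Then by Chebyshev, taking the arithmetic mean of $\ell = \tO{\frac{\gamma}{\eps^2} n^{1-2/p}}$ such estimators yields an additive $\frac{\eps}{p^{p+1}}\cdot F_p(v)$ approximation to $\langle u, v^{p-1}\rangle$ with constant probability, boosted to $1 - \frac{\delta}{\poly(n)}$ by a median over $\O{\log\frac{n}{\delta}}$ repetitions; since $F_p(u) \le \gamma F_p(v) \le 2^p F_p(u)$ in our partition, additive $\eps \cdot F_p(u)$ error follows after a harmless rescaling of $\eps$. The space is dominated by the $\ell$ instances of $\sampler$ ($\O{\log^3 n}$ bits each), the $\countsketch$ on $W$ ($\O{\log^2 n}$ bits for constant-factor accuracy), and the $F_2$-estimator for $\widehat{V}$, for a total of $\tO{\frac{\gamma}{\eps^2} n^{1-2/p}}$.

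\textbf{Main obstacle.} I expect the delicate point to be controlling the number of coordinates in each level set $L_{i,j} \setminus \calH$ and ensuring the variance bound survives even when $\gamma$ is as small as $\eps^p$ — this is exactly where the refined level-set analysis (rather than a crude H\"older bound as in \lemref{lem:diff:est:largep:one}) is needed, because for very small $\gamma$ a single H\"older step loses the factor $\log^2 n$ buffer and over-counts the variance by a $\poly(1/\eps)$ factor. The significance threshold $\frac{\eps}{p^{p+1}\log^2 n} F_p(v)$ has to be tuned so that the $\O{\log^2 n}$ insignificant level sets still sum to $\O{\eps} F_p(v)$, while simultaneously the bound $2^{|i-j|} \le \poly(\log n, 1/\eps, \gamma)$ on significant level sets is strong enough to keep the per-level-set variance contribution at $\tO{\frac{\gamma}{\eps^2} n^{1-2/p}}(F_p(v))^2 / \log^2 n$. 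Getting both inequalities to hold with the same choice of threshold, and verifying the exponents of $\gamma$ and $p$ line up (the $\gamma^{2/p}$ from $w_a^2$, the $\gamma^{2k/p}$-type factor from the significance bound, and the final target $\gamma$), is the part that requires genuine care rather than routine bookkeeping.
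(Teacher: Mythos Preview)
Your proposal is correct and follows essentially the same approach as the paper: perfect $L_2$-sample from $v$, read off $\widehat{w_a}$ from a $\countsketch$ on the exponentially-scaled duplicated vector (as in \lemref{lem:diff:est:largep:one}), remove the heavy hitters of $u$ first, and bound the variance via the level-set decomposition rather than a single H\"older step. The paper counts the size of each level set by $|L_{i,j}|\le 2^{ip}$ (using $F_p(u)\le\gamma F_p(v)$) whereas you count by $2^{jp}$, and the paper bounds $w_a$ purely by the heavy-hitter threshold rather than your hybrid $\eps^2\gamma^{2/p}(L_p(v))^2/2^{2i}$; both routes close once the significance bound on $2^{|i-j|}$ is invoked, and you correctly flag this exponent-tracking as the delicate step.
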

\begin{proof}
We use $\sampler$ to sample indices $j_1,\ldots,j_\ell\in[n]$ with so that each sample is a coordinate $a\in[n]$ with probability $\frac{v_a^2}{\|v\|_2^2}+\frac{1}{\poly(n)}$. 
We obtain unbiased estimates to $\widehat{v_{j_1}^{p-1}},\ldots,\widehat{v_{j_\ell}^{p-1}}$ of $v_{j_1}^{p-1},\ldots,w_{j_\ell}^{p-1}$ through \lemref{lem:coor:exp:var}. 
We also obtain a $(1+\O{\eps})$-approximation unbiased estimate $V$ of $\|v\|_2^2$. 
Thus for each $b\in[\ell]$, the product $\widehat{w_{j_b}}\cdot V\cdot v_{j_b}^{p-3}$ satisfies
\begin{align*}
\Ex{\widehat{w_{j_b}}\cdot V\cdot v_{j_b}^{p-3}}&=\sum_{a\in\calH}\langle\widehat{u_a},v_a^{p-1}\rangle+\sum_{i,j}\sum_{a\in L_{i,j}\cap\calH}\left(\frac{v_a^2}{\|v\|_2^2}+\frac{1}{\poly(n)}\right)\cdot w_a\cdot(1\pm\O{\eps})\|v\|_2^2\cdot v_a^{p-3}\\
&+\sum_{i,j}\sum_{a\in L_{i,j}\setminus\calH}\left(\frac{v_a^2}{\|v\|_2^2}+\frac{1}{\poly(n)}\right)\cdot w_a\cdot(1\pm\O{\eps})\|w\|_2^2\cdot v_a^{p-3}.
\end{align*}
Observe that since $\widehat{u_a}$ is a $(1+\eps)$-approximation to $u_a$ for $a\in\calH$, then $|w_a|\le\eps|u_a|$, so that the second summation is at most $\O{\eps}\,\langle u, v^{p-1}\rangle$. 
Moreover, we have $w_a=u_a$ for $a\in[n]\setminus\calH$, so that 
\begin{align*}
\Ex{\widehat{w_{j_b}}\cdot V\cdot v_{j_b}^{p-3}}&\in(1\pm\O{\eps})|\langle w, v^{p-1}\rangle|+\frac{1}{\poly(n)}.
\end{align*}
For $a\in L_{i,j}\setminus\calH$, we have $w_a\le u_a\le\eps\gamma^{1/p}\,L_p(v)$. 
Thus, the variance is at most
\begin{align*}
\Var\left(\widehat{w_{j_b}}\cdot V\cdot v_{j_b}^{p-3}\right)&\le\sum_{i,j}\sum_{a\in L_{i,j}\setminus\calH}\left(\frac{v_a^2}{\|v\|_2^2}+\frac{1}{\poly(n)}\right)\cdot w_a^2\cdot(1\pm\O{\eps})\|v\|_2^4\cdot v_a^{2p-6}\\
&\le\sum_{i,j}\sum_{a\in L_{i,j}\setminus\calH}(1\pm\O{\eps})w_a^2\cdot\|v\|_2^2\cdot v_a^{2p-6}\\
&\le\sum_{i,j}\sum_{a\in L_{i,j}\setminus\calH}(1\pm\O{\eps})\frac{\eps^p\gamma^{2/p}\,(L_p(v))^{2p-2}}{2^{2ip}}\frac{\gamma^{2k/p}p^{p(2p+2)}\log^4 n}{\eps^2}\,\|v\|_2^2\\
&\le\sum_{i,j}\sum_{a\in L_{i,j}\setminus\calH}\frac{Cn^{1-2/p}\eps^p\gamma^{2/p}\,(F_p(v))^2\log^4 n}{2^{2ip}\eps^2},
\end{align*}
for some constant $C>0$. 
Therefore,
\begin{align*}
\Var\left(\widehat{w_{j_b}}\cdot V\cdot v_{j_b}^{p-3}\right)&\le\sum_{i,j}2^{ip}\cdot\frac{Cn^{1-2/p}\eps^p\gamma^{2/p}\,(F_p(v))^2\log^4 n}{2^{2ip}\eps^2}\\
&\le\O{\log^2 n}\cdot\frac{Cn^{1-2/p}\eps^p\gamma\,(F_p(v))^2\log^4 n}{\eps^2}.
\end{align*}
Hence for $\gamma\ge\O{\eps^p}$, we have that by setting $\ell=\O{\frac{\gamma}{\eps^2}n^{1-2/p}}$, we obtain an additive $\frac{\eps}{p^{p+1}}\cdot F_p(v)$ approximation to $\langle u,v^{p-1}\rangle$, with constant probability. 
We can then boost this probability to $1-\frac{\delta}{\poly(n)}$ by repeating $\O{\log\frac{n}{\delta}}$ times. 
\end{proof}

\begin{lemma}[$F_p$ difference estimator]
\lemlab{lem:sw:diff:est:Fp:largep}
For integer $p>2$, there exists a $(\gamma,\eps,\delta)$-\emph{difference estimator} for $F_p$ that uses space $\tO{\frac{\gamma n^{1-2/p}}{\eps^2}}$.
\end{lemma}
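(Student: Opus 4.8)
The plan is to build the $F_p$ difference estimator by combining the two estimators constructed in the two preceding lemmas with the globally‑maintained heavy hitter list $\calH$ produced by $\hhest$. Fix the splitting times and write $V:=v+w_t$ for the frequency vector of all updates after $u$ (so that $V\succeq v$). Since $u$ precedes $v$ and $p>2$ is a positive integer, we have coordinate‑wise $(u_a+v_a)^p\ge u_a^p+v_a^p$, hence $F_p(u+v)-F_p(v)\ge F_p(u)$, so the precondition $\min(F_p(u),F_p(u+v)-F_p(v))\le\gamma F_p(v)$ is exactly $F_p(u)\le\gamma F_p(v)$, and by monotonicity $F_p(u)\le\gamma F_p(V)$ at every later time $t$. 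Because $p$ is a positive integer and all vectors are non‑negative, we may expand
\[F_p(u+V)-F_p(V)=\sum_{k=1}^{p}\binom{p}{k}\langle u^k,V^{p-k}\rangle=\sum_{k=1}^{p}\binom{p}{k}\Big(\sum_{a\in\calH}u_a^kV_a^{p-k}+\sum_{a\notin\calH}u_a^kV_a^{p-k}\Big),\]
where $\calH$ is the list of coordinates $a$ with $u_a\ge\frac{\eps}{100p^{p+1}\log^2 n}\|u\|_p$, of size $|\calH|=\tO{1/\eps^p}$.

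For the heavy part, the single globally‑run $\hhest$ supplies for each $a\in\calH$ an estimate $\widehat{u_a}$ with additive error $\O\!\left(\frac{\eps}{p^{p+1}\log^2 n}\right)\|u\|_p$; and since the updates of $V$ arrive strictly after those of $u$, once $\calH$ has been identified we can also read off an approximate frequency of each $a\in\calH$ on the relevant suffix (exactly as in subroutine $\stitchsw$) using only $\O(\log n)$ extra bits per coordinate. A routine Hölder/level‑set estimate then shows that the resulting estimate of $\sum_{a\in\calH}\sum_k\binom{p}{k}u_a^kV_a^{p-k}$ has additive error $\O(\eps)F_p(V)$, at a per‑block overhead of $\tO{1/\eps^p}$ bits beyond the shared $\hhest$.

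For the light part I would invoke the two preceding lemmas with the role of their ``$v$'' played by $V$: one estimates $\sum_{a\notin\calH}\langle u,V^{p-1}\rangle$ via perfect $L_2$ sampling on $V$ with $\sampler$, a $\countsketch$ on a duplicated and exponentially‑scaled copy of $w:=u-h$ (with $h$ the sparse vector of heavy‑hitter estimates), and a constant‑factor $\twoestimate$ estimate of $\|w\|_2^2$; the other estimates $\sum_{a\notin\calH}\langle u^k,V^{p-k}\rangle$ for $k\ge 2$ via perfect $L_2$ sampling on $w$, reading off the sampled coordinate of $V$ exactly, an unbiased $\countsketch$ estimate of $w_a^{k-2}$ (as in \lemref{lem:coor:exp:var}), and the $F_2$ estimate of $\|w\|_2^2$. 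Each gives an additive $\O(\eps)F_p(V)$ approximation in $\tO{\frac{\gamma}{\eps^2}n^{1-2/p}}$ space; the engine is the level‑set variance bound $\tO{\frac{\gamma}{\eps^2}n^{1-2/p}}(F_p(V))^2$ established there, which uses that after removing $\calH$ every remaining coordinate of $w$ lies below the heavy‑hitter threshold, so only $\O(\log^2 n)$ level sets are significant and Hölder's inequality on $\sum_a w_a^{2p-4}V_a^2$ (resp.\ $\sum_a w_a^2 V_a^{2p-6}$) produces the extra $\gamma\, n^{1-2/p}$ factor needed to make $\O\!\left(\frac{\gamma}{\eps^2}n^{1-2/p}\right)$ samples suffice.

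Finally I would sum the $p=\O(1)$ summands, rescale $\eps$ by $\poly(p)$ to absorb constants, boost each estimator to failure probability $\delta/\poly(n,m)$ by an $\O(\log\frac{nm}{\delta})$‑fold median, derandomize the count‑sketch hash functions and exponential variables (Nisan‑style / Corollary~\ref*{lem:derandom:many}; the perfect $L_2$ sampler of \thmref{thm:perfect:sampler} is already derandomized), and union bound over the $\poly(m)$ possible stopping times $t$ — legitimate because $m=\poly(n)$ and the error is monotone/controlled by the strong‑tracking chaining between the $\poly(\log n)$‑many times where the relevant quantities double — and over the $\O(\log^2 n)$ significant level sets. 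The space is then dominated by the $\O\!\left(\frac{\gamma}{\eps^2}n^{1-2/p}\right)$ copies of $(\sampler,\countsketch,\twoestimate)$, each $\O(\log^3 n)$ bits, times $\O(\log\frac{nm}{\delta})$ repetitions, i.e.\ $\tO{\frac{\gamma}{\eps^2}n^{1-2/p}}$; the $\hhest$ instance (cost $\tO{\frac{1}{\eps^2}n^{1-2/p}+\frac{1}{\eps^p}}$) is shared across all blocks at a level and counted once. The main obstacle is precisely the light‑part variance analysis: one must beat the usual $\Omega(n^{1-2/p}/\eps^2)$ barrier — so that $\O\!\left(\frac{\gamma}{\eps^2}n^{1-2/p}\right)$ samples, which for $\gamma=\eps^p$ is only $\O(\eps^{p-2}n^{1-2/p})$, still average down the inner‑product estimators — and this is exactly what the separate, shared removal of the $L_p$ heavy hitters together with the level‑set/Hölder bookkeeping in the two preceding lemmas buys; once those bounds are in hand, what remains is the (tedious but routine) accounting of how the $\O(1)$ biased and unbiased pieces combine into a single multiplicative $(1+\eps)$ guarantee holding uniformly over $t$.
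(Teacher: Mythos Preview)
Your proposal is correct and follows essentially the same approach as the paper: binomial expansion of $F_p(u+V)-F_p(V)$, splitting each term by the shared heavy-hitter list $\calH$, and then invoking the two preceding lemmas (perfect $L_2$ sampling on $w=u-h$ for $k\ge 2$, and perfect $L_2$ sampling on $V$ with the exponentially-scaled $\countsketch$ on $w$ for $k=1$), combined via the level-set variance bounds. The only minor difference is that for integer $p>2$ the paper does not bother with chaining/strong-tracking arguments to control the union bound over stopping times---since the $n^{1-2/p}$ term dominates, it simply sets $\delta'=\delta/\poly(n)$ and union bounds over all $m=\poly(n)$ times directly, absorbing the extra $\log n$ factors into the $\tO{\cdot}$.
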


By applying the same argument as \lemref{lem:sw:frame:correct} using the difference estimators in this section, we have:
\begin{lemma}[Correctness of sliding window algorithm]
\lemlab{lem:sliding:correctness:largep}
For integer $p>2$, \algref{alg:sliding} outputs a $(1+\eps)$-approximation to $F_p(W)$. 
\end{lemma}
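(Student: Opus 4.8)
The plan is to mirror the proof of \lemref{lem:sw:frame:correct}, substituting the integer-$p>2$ difference estimator of \lemref{lem:sw:diff:est:Fp:largep} and the heavy-hitter bookkeeping carried out by the modified $\stitchsw$ subroutine (\algref{alg:stitchsw}) for the generic pieces used in the $p\le 2$ argument of \secref{sec:sliding}. First I would condition on the event $\calE$ that every instance of $\calA$, every suffix-pivoted difference estimator $\sdiffest$, and the heavy-hitter tracker $\hhest$ succeeds; since $\delta=\frac{1}{\poly(n,m)}$ and there are only $\poly(n,m)$ such instances, $\PPr{\calE}\ge 1-\frac{1}{\poly(n)}$. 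Under $\calE$, the smooth-histogram maintenance together with \lemref{lem:fp:smooth} (so that $F_p$ is $(\eps,\eps^p/p)$-smooth, i.e.\ $q=p$) gives, exactly as in \lemref{lem:sw:top:constant}, an index $i$ with $t_i\le m-W+1$ such that $F_p(t_i,m)\le 2F_p(W)$; fix this $i$ and the guess $k$ with $n^C/2^k\le\calA(t_i,m,1,\delta)$. By construction of $\guessupdatesw$, this guess (if it is never retired for creating too many blocks) forces every block $u$ at granularity $j$ to satisfy $F_p(u)\le\gamma_j F_p(v)\le 2^p F_p(u)$ with $\gamma_j=2^{3-j}$, so the precondition of each $\sdiffest$ call holds and the difference estimators are well-defined as in \lemref{lem:diff:est:well:defined}.

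Next I would reproduce the structural lemmas of \secref{sec:sliding} in this setting: the analogues of \lemref{lem:sliding:delta:lower} and \lemref{lem:sliding:delta:upper} (geometric lower and upper bounds on the contribution of each block, following from the merge rule in $\mergesw$ plus smoothness) and \lemref{lem:sliding:active:indices} (at most $\O{1}$ active blocks per level). These go through essentially verbatim, since they only use the merge condition and the fact that each $\sdiffest$ output is a $2$-approximation to the relevant difference, not the internal mechanics of the estimator. With $\beta=\ceil{\log(100\cdot 4^q/\eps^q)}$ as in \algref{alg:sliding}, the boundary-misalignment error (the blue region of \figref{fig:sliding}) is bounded by the contribution of a single bottom-level block, which the upper-bound lemma controls by at most $\frac{\eps}{2}F_p(f)$; and the aggregate approximation error from the $\O{\beta}$ active difference estimators, each with accuracy $\eta=\frac{\eps}{2^{20}q\log\frac1\eps}$, is at most $\frac{\eps}{4}F_p(u)\le\frac{\eps}{2}F_p(f)$.

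The genuinely new ingredient is the heavy-hitter correction. I would argue that $\hhest(t_i,m,\eta,\delta)$, run once on the suffix starting at $t_i$, returns a set $\calH$ containing every coordinate $a$ with frequency at least $\frac{\eps}{100p^{p+1}\log^2 n}\|u\|_p$ in \emph{any} block frequency vector $u$, together with $(1+\eps)$-accurate frequency estimates $g_k$ on $[c_\beta,m]$ and $h_k$ on $[m-W+1,m]$, so that $W_k=(g_k)^p-(h_k)^p$ estimates the expired mass of coordinate $k$ with $\O{\eps}$ relative error. For the non-heavy coordinates, the level-set lemma bounding significant level sets together with the perfect-$L_2$-sampling variance bound underlying \lemref{lem:sw:diff:est:Fp:largep} (with $\ell=\O{\frac{\gamma}{\eps^2}n^{1-2/p}}$ samples per block) gives additive $\O{\eps}\,F_p(v)$ error per block, hence $\O{\eps}\,F_p(u)$ after summing over levels. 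Combining the decomposition $F_p(u)=\sum_{j}\Delta_j+\sum_{k\in\calH}(\text{heavy expired mass})+F_p(v+f)$ with $\sum_j Y_j$ estimating the first sum, $\sum_{k\in\calH}W_k$ the second, and $F_p(v+f)-F_p(f)\le\frac{\eps}{2}F_p(f)$ from the same boundary argument, yields a $\frac{\eps}{2}F_p(f)$ additive, hence $(1+\eps)$ multiplicative, approximation to $F_p(f)=F_p(W)$ after a rescaling of $\eps$.

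The main obstacle I expect is making the heavy-hitter bookkeeping rigorous while keeping the space at $\tO{n^{1-2/p}/\eps^2+1/\eps^p}$, so that the $1/\eps^p$ overhead from running $\frac{1}{\gamma}=\O{1/\eps^p}$ difference estimators at the bottom level multiplies only the per-block frequency-tracking term and not the dominant $n^{1-2/p}$ term: $\hhest$ must identify the heavy set once in $\tO{n^{1-2/p}/\eps^2}$ space but then report approximate frequencies in every block at only $\O{\frac1{\eps^p}\log n}$ additional cost. One must also verify that removing this \emph{global} heavy set $h$ from each block does not bias the perfect-$L_2$ estimator and that $w=u-h$ has small enough $\|w\|_{2p}$ on every block to beat the usual $n^{1-2/p}$ variance lower bound; this is precisely where the level-set analysis and the two-sided guess $F_p(u)\le\gamma F_p(v)\le 2^p F_p(u)$ (needed to coordinate the threshold $\frac{\eps}{100p^{p+1}\log^2 n}\gamma^{1/p}L_p(v)$ across all granularities $\gamma_j$) do the real work.
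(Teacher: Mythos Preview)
Your proposal is correct and follows exactly the paper's approach: the paper's own proof of this lemma is a single sentence, ``By applying the same argument as \lemref{lem:sw:frame:correct} using the difference estimators in this section,'' and you have faithfully reconstructed that argument with the integer-$p>2$ difference estimator of \lemref{lem:sw:diff:est:Fp:largep} substituted in.

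You actually go further than the paper in one respect: you correctly identify that the modified $\stitchsw$ (\algref{alg:stitchsw}) contains the extra heavy-hitter correction terms $W_k=(g_k)^p-(h_k)^p$, and you sketch how they enter the analysis, whereas the paper's one-line proof glosses over this entirely. One small imprecision worth fixing in your write-up: the telescoping identity is still just $F_p(u)=\sum_j\Delta_j+F_p(v+f)$ as in \lemref{lem:sw:frame:correct}; the ``heavy expired mass'' is not a separate summand in that identity but rather the heavy-coordinate part of the residual $F_p(v+f)-F_p(f)$, which $\sum_{k\in\calH}W_k$ estimates directly (so that only the non-heavy part of the boundary error needs to be bounded by smoothness). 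With that bookkeeping straightened out, your argument is complete.
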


\begin{theorem}
\thmlab{thm:sliding:largep}
Given $\eps>0$ and integer $p>2$, there exists a one-pass algorithm in the sliding window model that outputs a $(1+\eps)$-approximation to the $L_p$ norm with probability at least $\frac{2}{3}$ and uses $\tO{\frac{1}{\eps^2}\,n^{1-2/p}}$ bits of space. 
\end{theorem}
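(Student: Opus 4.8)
\textbf{Proof plan for \thmref{thm:sliding:largep}.}

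The plan is to instantiate \thmref{thm:sw:framework} with the $F_p$ strong tracker of \thmref{thm:strong:Fp:largep} together with the suffix-pivoted difference estimator of \lemref{lem:sw:diff:est:Fp:largep}, and then optimize the $\log n$ factors as was done for $p\in(0,2]$ in \secref{sec:sliding:smallp}. The key point is that the difference estimator of \lemref{lem:sw:diff:est:Fp:largep} uses space $\tO{\frac{\gamma}{\eps^2}n^{1-2/p}}$, i.e., it has the form $\frac{\gamma}{\eps^2}S_F$ with $S_F=\tO{n^{1-2/p}}$ a monotonic function in $m$, $\frac{1}{\delta}$, $\frac{1}{\eps}$, exactly matching the hypothesis of \thmref{thm:sw:framework}; similarly the strong tracker of \thmref{thm:strong:Fp:largep} uses $\frac{1}{\eps^2}S_F$ space with the same $S_F$ (up to $\log n$ factors). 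First I would invoke \lemref{lem:fp:smooth} to note that $F_p$ is $\left(\eps,\frac{\eps^p}{p}\right)$-smooth, so it fits the $(\eps,\eps^q)$-smoothness requirement with $q=p$ a constant (absorbing the $1/p$ into constants). Then \lemref{lem:sliding:correctness:largep} supplies correctness of \algref{alg:sliding} with these subroutines, so the only remaining task is the space bound.

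For the space accounting, I would follow the structure of the proof of \thmref{thm:sliding:main}: the algorithm maintains $\O{\log n}$ guesses for the final value of $F_p$ (the extra $\log n$ overhead we accept here, as described in the section text, rather than removing it as in the $p\in(0,2]$ case), and within each guess and each of the $\O{\log n}$ smooth-histogram indices $i$, it maintains $\beta=\O{\log\frac{1}{\eps}}$ granularities. At granularity $j$ there are $\O{2^j}$ active instances of a level-$j$ difference estimator with ratio parameter $\gamma_j=2^{3-j}$, each using space $\tO{\frac{\gamma_j}{\eps^2}n^{1-2/p}}=\tO{\frac{1}{2^j\eps^2}n^{1-2/p}}$, so the per-$j$ total is $\tO{\frac{1}{\eps^2}n^{1-2/p}}$; summing the geometric-type series over $j\in[\beta]$ contributes only a $\polylog\frac1\eps$ factor. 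Multiplying by the $\O{\log^2 n}$ guesses-times-indices gives $\tO{\frac{1}{\eps^2}n^{1-2/p}}$ overall, where the $\tO{\cdot}$ hides the $\log n$ factors from \thmref{thm:strong:Fp:largep}, the heavy-hitter subroutine $\hhest$ (which uses $\tO{\frac{1}{\eps^2}n^{1-2/p}}$ space and is run once, not once per block, per the section discussion), the $\countsketch$ and $\sampler$ instances from \thmref{thm:perfect:sampler}, and the rescaled failure probability $\delta'=\frac{1}{\poly(n,m)}$ needed for the union bound over all $\poly(m)$ algorithm instances across the stream.

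The main obstacle is verifying that the difference estimator of \lemref{lem:sw:diff:est:Fp:largep} genuinely satisfies the interface of \thmref{thm:sw:framework} as a \emph{suffix-pivoted} difference estimator with the stronger guarantee (\defref{def:diff:est:suf} and its strengthening for smooth functions): namely that even when the difference $F_p(u+v)-F_p(v)$ later grows far beyond $\gamma\cdot F_p(v)$ — which \lemref{lem:sliding:delta:upper} bounds by $2^{-j/q-7/q}\cdot F_p(v)$ but which can still be $\Theta(\eps)\cdot F_p$ when $j$ is near $\beta$ — the additive error stays $\eps\cdot F_p(v+w_t)$. The resolution, which the level-set variance analysis in this section provides, is that the relevant quantity controlling the estimator's variance is $F_p(u)$ (the "small" suffix vector's moment), which remains bounded by $\gamma\cdot F_p(v)$ by construction of the partition in $\guessupdatesw$; the heavy-hitter removal step in \figref{fig:diff:est:largep} is exactly what makes the per-sample variance $\tO{\frac{\gamma}{\eps^2}n^{1-2/p}}\cdot(F_p(v))^2$ achievable even for $\gamma$ as small as $\eps^p$, since it circumvents the usual $n^{1-2/p}$ lower bound by stripping the global $L_p$ heavy hitters. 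Once this interface check is in place, the space bound is a routine geometric-series computation identical in form to the $p\in(0,2]$ case, and the theorem follows.
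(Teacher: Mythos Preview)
Your proposal is correct and follows essentially the same approach as the paper: correctness via \lemref{lem:sliding:correctness:largep}, and the space bound by plugging the strong tracker of \thmref{thm:strong:Fp:largep} and the suffix-pivoted difference estimator of \lemref{lem:sw:diff:est:Fp:largep} into \algref{alg:sliding}, then counting $\O{2^j}$ level-$j$ instances (this is \lemref{lem:sliding:instances}) each of size $\tO{\frac{\gamma_j}{\eps^2}n^{1-2/p}}$ with $\gamma_j=2^{3-j}$ and summing over $j\in[\beta]$ and $i=\O{\log n}$. Your additional discussion of the suffix-pivoted interface check and the role of the global heavy-hitter removal is accurate background but is already absorbed into \lemref{lem:sw:diff:est:Fp:largep}, so the paper's proof simply cites that lemma and performs the same geometric-series space accounting you describe.
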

\begin{proof}
Consider \algref{alg:sliding} and note that it suffices to output a $(1+\eps)$-approximation to the $F_p$-moment of the window. 
Thus correctness follows from \lemref{lem:sliding:correctness:largep}. 

We use the strong tracker of \thmref{thm:strong:Fp:largep} and the difference estimator of \lemref{lem:sw:diff:est:Fp:largep}. 
By \lemref{lem:sliding:instances}, we run $\O{2^j}$ instances of the suffix difference estimator at level $j$ for each $i$, with accuracy $\eta=\O{\frac{\eps}{\log\frac{1}{\eps}}}$, ratio parameter $\gamma=2^{3-j}$, and $\delta=\frac{1}{\poly(n,m)}$. 
Hence, the total space used by level $j$ for a fixed $i$ is 
\[\tO{2^j\cdot\frac{\gamma\,n^{1-2/p}}{\eta^2}}=\tO{\frac{n^{1-2/p}}{\eps^2}}.\]
Summing across all $j\in[\beta]$ for $\beta=\O{\log\frac{1}{\eps}}$ levels and all values of $i=\O{\log n}$, then the algorithm uses $\tO{\frac{1}{\eps^2}\,n^{1-2/p}}$ bits of space in total. 
\end{proof}

\section*{Acknowledgements}
We would like to thank Zhili Feng for discussions in the early stages of this work. 
This work was supported by National Science Foundation (NSF) Grant No. CCF-1815840, National Institute of Health (NIH) grant 5R01 HG 10798-2, and a Simons Investigator Award.

We would also like to thank Moshe Shechner and Uri Stemmer for pointing out an error in our bounded flip number algorithm for dynamic streams in a previous version of this paper. 
In this version, we have introduced the notion of twist number to resolve this issue and refer to their follow-up work~\cite{AttiasCSS21} for more details.

\def\shortbib{0}
\bibliographystyle{alpha}
\bibliography{references}
\appendix
\section{Appendix: Moment Estimation on Sliding Windows}
\applab{app:sliding}
In this section, we give the full details of the proofs from \secref{sec:sliding}. 
\subsection{Moment Estimation for \texorpdfstring{$p\in[1,2]$}{p in [1,2]}}
\begin{lemma}[Constant factor partitions in top level]
\lemlab{lem:sw:top:constant:medp}
Let $i$ be the largest index such that $t_i\le m-W+1$. 
Let $\calE$ be the event that all subroutines $\calA$ in \algref{alg:sliding} succeed.
Then conditioned on $\calE$, $F_p(W)\le F_p(t_i,m)\le 2F_p(W)$ and $\frac{1}{2}F_p(W)\le F_p(t_{i+1},m)\le F_p(W)$. 
\end{lemma}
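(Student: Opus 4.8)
The plan is to follow the proof of \lemref{lem:sw:top:constant} essentially verbatim, now instantiating the generic smoothness hypothesis with the concrete smoothness of $F_p$ supplied by \lemref{lem:fp:smooth}. The starting observation is that the only place the sliding window algorithm for $F_p$ with $p\in[1,2]$ retires a top-level timestamp $t_{i'}$ is inside the smooth-histogram maintenance of the $\mergesw$ subroutine (\algref{alg:mergesw:smallp}), which deletes $t_{i'}$ only if at some time $t$ in the stream one has $\calA(t_{i'+2},t,\eta,\delta)\ge(1-1/4^q)\,\calA(t_{i'},t,\eta,\delta)$, where $q$ is the smoothness exponent of $F_p$ (one may take $q=p$). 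Conditioning on $\calE$, so that every invocation of $\calA$ returns a $(1\pm\eta)$-multiplicative estimate, and using $\eta=\frac{\eps}{1024\log\frac1\eps}\le\frac1{1024}$, this inequality between estimates upgrades to a statement about $F_p$ itself: $F_p(t_{i'+2},t)\ge(1-c_p)\,F_p(t_{i'},t)$, where $c_p$ is $1/4^q$ plus a lower-order $O(\eta)$ term.

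Next I would invoke \lemref{lem:fp:smooth} at accuracy $\tfrac12$: for $p\ge 1$, $F_p$ is $\bigl(\tfrac12,\,2^{-p}/p\bigr)$-smooth, and since smoothness only becomes easier to satisfy when its second parameter shrinks, $F_p$ is also $\bigl(\tfrac12,\,c_p\bigr)$-smooth. Taking $A$ to be the updates strictly between $t_i$ and $t_{i+1}$, $B$ the updates in $[t_{i+1},t]$, and $C$ the updates in $(t,m]$ (the degenerate case $t_i=m-W+1$ being disposed of trivially, exactly as in \lemref{lem:sw:top:constant}), the fact that the timestamp between $t_i$ and $t_{i+1}$ was deleted gives the smoothness hypothesis $(1-c_p)F_p(A\cup B)\le F_p(B)$, and hence the conclusion $\tfrac12 F_p(A\cup B\cup C)\le F_p(B\cup C)$, i.e.\ $\tfrac12 F_p(t_i,m)\le F_p(t_{i+1},m)$.

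Finally, since $t_i\le m-W+1<t_{i+1}$, monotonicity of $F_p$ yields $F_p(t_i,m)\ge F_p(m-W+1,m)=F_p(W)\ge F_p(t_{i+1},m)$; chaining these with $\tfrac12 F_p(t_i,m)\le F_p(t_{i+1},m)$ gives both $F_p(W)\le F_p(t_i,m)\le 2F_p(t_{i+1},m)\le 2F_p(W)$ and $\tfrac12 F_p(W)\le\tfrac12 F_p(t_i,m)\le F_p(t_{i+1},m)\le F_p(W)$, which is exactly the assertion. The only point not already contained in \lemref{lem:sw:top:constant} is checking that the constant $c_p$ produced by the $\mergesw$ threshold stays strictly below the smoothness parameter $2^{-p}/p$ of $F_p$ uniformly over $p\in[1,2]$; taking $q=p$ this reduces to $2^{-2p}+O(\eta)\le 2^{-p}/p$, equivalently $2^{-p}+O(\eta)2^p\le 1/p$, which holds with constant slack on $[1,2]$ by the elementary bound $2^p\ge p$. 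So apart from this small constant verification the argument is a line-by-line transcription of \lemref{lem:sw:top:constant}, and I expect that verification to be the only nontrivial part.
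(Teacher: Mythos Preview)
Your proposal is correct and follows essentially the same argument as the paper's own proof: read off the $\mergesw$ deletion criterion, upgrade it to an inequality on $F_p$ using the $(1\pm\eta)$ accuracy of $\calA$, invoke $(\tfrac12,\beta)$-smoothness of $F_p$ from \lemref{lem:fp:smooth}, and finish with monotonicity. In fact you are slightly more careful than the paper, which simply asserts the constant works (and even writes the deletion threshold as $\tfrac{9}{10}$ rather than the $(1-1/4^q)$ appearing in \algref{alg:mergesw:smallp}); your explicit check that $2^{-2p}+O(\eta)\le 2^{-p}/p$ on $[1,2]$ is exactly the missing verification.
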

\begin{proof}
Observe that in the $\mergesw$ subroutine, we only delete a timestamp $t_{i'}$ if $\calA(t_{i'+2},t,\eta,\delta)\ge\frac{9}{10}\calA(t_{i'},t,\eta,\delta)$ at some point $t$. 
Since $\eta=\frac{\eps}{1024\log\frac{1}{\eps}}\le\frac{1}{1024}$, we have that conditioned on $\calE$, $F_p(t_{i'+2},t)\left(1+\frac{1}{1024}\right)\ge\frac{9}{10}\left(1-\frac{1}{1024}\right)F_p(t_{i'},t)$. 
Thus, $F_p(t_{i'+2},t)\ge\frac{7}{8}F_p(t_{i'},t)$. 

We have either $t_i=m-W+1$ and $t_{i+1}=m-W+2$ in which case the statement is trivially true or timestamps $t_i$ and $t_{i+1}$ at some point $t\le m$ must have satisfied $F_p(t_{i+1},t)\ge\frac{7}{8}F_p(t_i,t)$ so that $t_i+1$ was removed from the set of timestamps. 
Thus, $F_p(t_{i+1},t)\ge\frac{7}{8}F_p(t_i,t)$. 
By \lemref{lem:fp:smooth} for $p\le 2$ and the definition of smoothness, $F_p(t_{i+1},t)\ge\frac{7}{8}F_p(t_i,t)$ implies $F_p(t_{i+1},m)\ge 2F_p(t_i,m)$ for all $m\ge t$. 
Since $t_i\le m-W+1$, then $F_p(t_i,m)\ge F_p(m-w+1,m)=F_p(W)\ge F_p(t_{i+1},m)$ and the conclusion follows. 
\end{proof}
We now show an upper bound on the moment of each substream whose contribution is estimated by the difference estimator, i.e., the difference estimators are well-defined.
\begin{lemma}
\lemlab{lem:sw:ub:chunk:medp}
Let $p\in[1,2]$ and $i$ be the largest index such that $t_i\le m-W+1$. 
Let $\calE$ be the event that all subroutines $\calA$ in \algref{alg:sliding:smallp} succeed.
Then conditioned on $\calE$, we have that for each $j\in[\beta]$ and all $k$, either $t_{i,j,k+1}=t_{i,j,k}+1$ or $F_p(t_{i,j,k},t_{i,j,k+1})\le2^{-j-7}\cdot F_p(W)$. 
\end{lemma}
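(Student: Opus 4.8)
\textbf{Proof proposal for \lemref{lem:sw:ub:chunk:medp}.}
The plan is to mirror the argument of \lemref{lem:sliding:delta:upper}, adapting the bound so that the upper bound is stated directly in terms of $F_p(W)$ rather than a later suffix value, and replacing the smoothness exponent bookkeeping with the specific smoothness of $F_p$ from \lemref{lem:fp:smooth}. First I would fix $j\in[\beta]$ and an index $k$, and assume $t_{i,j,k+1}\neq t_{i,j,k}+1$; otherwise the conclusion is vacuous. Since the two timestamps are not consecutive, the intermediate timestamp $t_{i,j,k}+1$ must have been removed at some earlier time $x$ by the $\mergesw$ subroutine (\algref{alg:mergesw:smallp}), which only merges when
\[
\calA(t_{i,j,k},t_{i,j,k''},x,\gamma_j,\eta,\delta)+\calA(t_{i,j,k''},t_{i,j,k+1},x,\gamma_j,\eta,\delta)\le 2^{-j-10}\cdot\calA(t_i,x,\eta,\delta)
\]
for the relevant interior index $k''$ (here I am using the $\mergesw$ condition written in \algref{alg:mergesw:smallp}; note the displayed condition there is on $\calA$ of the combined block, which by correctness of $\calA$ on the event $\calE$ differs from the true $F_p$ of that block by a $(1+\eta)$ factor).

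Next I would split the block $[t_{i,j,k},t_{i,j,k+1})$ into the two sub-blocks meeting at $t_{i,j,k''}$, call their induced frequency vectors $u_1$ and $u_2$, with $u=u_1+u_2$ the frequency vector of the whole block. Let $x_1$ be the frequency vector induced between $t_{i,j,k+1}$ and the merge time $x$. Conditioned on $\calE$, the two suffix-pivoted difference estimators (which are well-defined by the same reasoning as \lemref{lem:diff:est:well:defined}) each give at worst a $2$-approximation, so $\calA(t_{i,j,k},t_{i,j,k''},x,\ldots)$ is at most a $2$-approximation to $F_p(u_1+u_2+x_1)-F_p(u_2+x_1)$ and $\calA(t_{i,j,k''},t_{i,j,k+1},x,\ldots)$ at most a $2$-approximation to $F_p(u_2+x_1)-F_p(x_1)$. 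Adding and telescoping,
\[
F_p(u+x_1)-F_p(x_1)=F_p(u_1+u_2+x_1)-F_p(x_1)\le 2^{-j-7}\cdot \calA(t_i,x,\eta,\delta)\le 2^{-j-6}\cdot F_p(t_i,x),
\]
using correctness of $\calA$ and $\eta\le 1/1024$. This is a bound at the merge time $x$; to push it to the end of the stream I would invoke the $(\eps,\eps^p)$-smoothness (resp. $(\eps,\eps)$-smoothness) of $F_p$ from \lemref{lem:fp:smooth}: once the difference $F_p(u+x_1)-F_p(x_1)$ is a small fraction of $F_p(x_1)$, it stays a small fraction under any continuation, so with $v$ the frequency vector from $t_i$ to $m$ we get $F_p(t_{i,j,k},m)-F_p(t_{i,j,k+1},m)=F_p(u+v)-F_p(v)\le 2^{-j-6}\cdot F_p(t_i,m)$ (absorbing the smoothness loss into the constant, since here the ambient accuracy parameter is the absolute constant $1$ in $\gamma_j$ rather than $\eps$). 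Finally, $F_p(t_i,m)\le 2F_p(W)$ by \lemref{lem:sw:top:constant:medp}, which yields $F_p(t_{i,j,k},t_{i,j,k+1})\le 2^{-j-7}\cdot F_p(W)$ after a clean accounting of the constants, where I implicitly use that the difference computed over $[t_{i,j,k},t_{i,j,k+1})$ is dominated by the difference of the corresponding suffixes by monotonicity.

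The main obstacle I anticipate is the careful translation between the quantities that $\mergesw$ actually compares (approximations of $F_p$ of combined blocks, at the merge time) and the quantity the lemma asserts a bound on ($F_p$ of a single block, at the end of the stream), while keeping all the constant factors (the $2$-approximation slack of the difference estimators, the $(1+\eta)$ slack of $\calA$, the smoothness loss, and the factor-$2$ from \lemref{lem:sw:top:constant:medp}) small enough to land inside the stated $2^{-j-7}$. This is exactly the kind of constant-chasing done in \lemref{lem:sliding:delta:upper}; the subtlety specific to $p\in[1,2]$ is that $F_p$ is only sub-additive for $p\le 1$ but super-additive for $p\ge 1$, so for $p\in[1,2]$ I must be sure the inequality $F_p(t_{i,j,k},t_{i,j,k+1})\le F_p(u+v)-F_p(v)$ (monotonicity of the per-block contribution) is used in the correct direction, which it is since adding earlier mass to a super-additive function only increases marginal contributions of later mass.
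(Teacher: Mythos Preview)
Your plan conflates two different objects: in \algref{alg:mergesw:smallp} for $p\in[1,2]$ the merge test is on $\calA(t_{i,j,k-1},t_{i,j,k+1},1,\delta)$, i.e., a constant-factor estimate of $F_p$ \emph{of the block itself}, not on suffix-pivoted difference estimators $\sdiffest$. You acknowledge this in your parenthetical, but then immediately treat the two $\calA$'s as $2$-approximations to telescoping differences $F_p(u_1+u_2+x_1)-F_p(u_2+x_1)$ and $F_p(u_2+x_1)-F_p(x_1)$, which they are not. Once you use the merge condition correctly, the argument is trivial and the paper's proof is exactly this: at the merge time $t$ one has $\calA(t_{i,j,k},t_{i,j,k+1},1,\delta)\le 2^{-j-10}\,\calA(t_i,t,\eta,\delta)$, correctness of both $\calA$'s gives $\tfrac{1}{2}F_p(t_{i,j,k},t_{i,j,k+1})\le 2^{-j-9}F_p(t_i,t)$, and then $F_p(t_i,t)\le F_p(t_i,m)\le 2F_p(W)$ by monotonicity and \lemref{lem:sw:top:constant:medp}. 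The block moment $F_p(t_{i,j,k},t_{i,j,k+1})$ is a fixed number independent of when you look at it, so there is no ``pushing to the end of the stream'' to do.

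Your detour through smoothness is not just unnecessary, it also has a gap as written. After the (mis)applied merge bound you have $F_p(u+x_1)-F_p(x_1)\le 2^{-j-6}\,F_p(t_i,x)$, but smoothness requires the left side to be a small fraction of $F_p(u+x_1)$ (equivalently $F_p(x_1)$), not of $F_p(t_i,x)$. At the merge time $x$ there is no a priori bound on $F_p(t_i,x)/F_p(u+x_1)$: the prefix from $t_i$ to $t_{i,j,k}$ could carry almost all the mass at time $x$. Likewise, the final super-additivity step ($F_p(u)\le F_p(u+v)-F_p(v)$ for $p\ge 1$) is correct but only needed because you bounded the wrong quantity to begin with. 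So the fix is to drop the difference-estimator reading entirely and argue directly on $F_p$ of the block, as the paper does.
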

\begin{proof}
Suppose $t_{i,j,k+1}\neq t_{i,j,k}+1$ and $i+1$ is the smallest index such that $t_{i+1}>m-W+1$. 
Then at some time during the stream, the timestamp $t_{i,j,k}+1$ must have been removed from the set of timestamps tracked by the algorithm. 
Thus at some time $t$, $\calA(t_{i,j,k},t_{i,j,k+1},1,\delta)\le 2^{-j-10}\cdot\calA(t_{i},t,\eta,\delta)$ so that conditioned on the correctness of the algorithms $\calA$ providing $2$-approximations to the respectively quantities, $\frac{1}{2}\cdot F_p(t_{i,j,k},t_{i,j,k+1})\le 2^{-j-9} F_p(t_i,t)$. 
By \lemref{lem:sw:top:constant:medp}, we have that $F_p(t_i,t)\le 2F_p(W)$. 
Hence, $F_p(t_{i,j,k},t_{i,j,k+1})\le2^{-j-7}\cdot F_p(W)$. 
\end{proof}

\begin{lemma}[Accuracy of difference estimators]
\lemlab{lem:diff:est:well:defined:medp}
Let $p\in[1,2]$ and $i$ be the largest index such that $t_i\le m-W+1$. 
Let $\calE$ be the event that all subroutines $\calA$ in \algref{alg:sliding:smallp} succeed.
Then conditioned on $\calE$, we have that for each $j\in[\beta]$ and all $k$, $\sdiffest(t_{i,j,k-1},t_{i,j,k},t,\gamma_j,\eta,\delta)$ gives an additive $\eta\cdot F_p(t_{i,j,k}:t)$ approximation to $F_p(t_{i,j,k-1}:t)-F_p(t_{i,j,k}:t)$ with probability at least $1-\delta$. 
\end{lemma}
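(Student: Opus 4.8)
The plan is to reduce the claim to the precondition required by the suffix-pivoted difference estimator of \lemref{lem:diff:est:well:defined}, just as in the proof of that lemma, and then verify that precondition holds by construction of the $\mergesw$ subroutine in \algref{alg:mergesw:smallp}. Recall that a $(\gamma_j,\eta,\delta)$-suffix-pivoted difference estimator $\sdiffest(t_{i,j,k-1},t_{i,j,k},t,\gamma_j,\eta,\delta)$, applied with $u$ induced by $[t_{i,j,k-1},t_{i,j,k})$ and $v$ induced by $[t_{i,j,k},t)$, outputs an additive $\eta\cdot F_p(v+w_t)$ approximation to $F_p(u+v+w_t)-F_p(v+w_t)$ simultaneously for all suffixes $w_t$, provided that $\min(F_p(u),F_p(u+v)-F_p(v))\le\gamma_j\cdot F_p(v)$ holds at some point in the stream. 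So the entire content of the lemma is to establish this one inequality with $\gamma_j=2^{3-j}$.

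First I would fix $p\in[1,2]$ (the $p\in(0,1]$ case is analogous and uses subadditivity, as noted in the text before \lemref{lem:sliding:correctness:rangep}, so I would handle it by the same argument with the sum-of-difference-estimators merge condition). I would then invoke \lemref{lem:sw:ub:chunk:medp}, which states that conditioned on $\calE$, for each $j\in[\beta]$ and all $k$, either $t_{i,j,k+1}=t_{i,j,k}+1$ (the trivial case where $u$ is a single update, in which case $F_p(u)$ is a single coordinate and the inequality is immediate, or more simply $v$ dominates everything) or $F_p(t_{i,j,k},t_{i,j,k+1})\le 2^{-j-7}\cdot F_p(W)$. Shifting indices, this says $F_p(u)\le 2^{-j-7}\cdot F_p(W)$ where $u$ is the block $[t_{i,j,k-1},t_{i,j,k})$. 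On the other hand, by \lemref{lem:sw:top:constant:medp} we have $F_p(t_i,m)\le 2F_p(W)$, hence by monotonicity of $F_p$ the suffix $v$ starting at $t_{i,j,k}$ satisfies $F_p(v)\le F_p(t_i,m)\le 2F_p(W)$; but what I actually need is a \emph{lower} bound on $F_p(v)$ relative to $F_p(W)$. Here I would use that $v$ contains the active window $W$ as a sub-suffix (since $t_{i,j,k}\le m-W+1$ in the regime where this difference estimator is active, which is exactly the range $[a,b]$ used in $\stitchsw$), so $F_p(v)\ge F_p(W)$ by monotonicity. Combining, $F_p(u)\le 2^{-j-7}F_p(W)\le 2^{-j-7}F_p(v)\le 2^{3-j}F_p(v)=\gamma_j F_p(v)$, which gives $\min(F_p(u),\cdot)\le\gamma_j F_p(v)$ as required. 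Then the guarantee of the suffix-pivoted difference estimator (which holds for $F_p$, $p\in(0,2]$, since the fixed-prefix constructions double as suffix-pivoted ones as argued in \secref{sec:sliding:smallp}) immediately yields the additive $\eta\cdot F_p(t_{i,j,k}:t)$ approximation to $F_p(t_{i,j,k-1}:t)-F_p(t_{i,j,k}:t)$, with probability at least $1-\delta$ since $\sdiffest$ fails with probability at most $\delta$.

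The main obstacle I anticipate is handling the edge case where the suffix $v$ does \emph{not} contain the full window $W$ — namely the last active block, for which $t_{i,j,k}\le m-W+1$ but $t_{i,j,k+1}$ may exceed $m-W+1$, or blocks whose starting timestamp $t_{i,j,k}$ is itself past $m-W+1$. For those, $F_p(v)\ge F_p(W)$ need not hold directly. I would address this by noting that \lemref{lem:sw:ub:chunk:medp} bounds $F_p(u)$ against $F_p(W)$ and that the precondition only needs to hold \emph{at some point in the stream} (by the stronger suffix-pivoted definition \defref{def:diff:est:suf}); in particular at the time $x$ when the timestamp $t_{i,j,k-1}+1$ was merged by $\mergesw$, the merge condition $\calA(t_{i,j,k-1},t_{i,j,k},1,\delta)\le 2^{-j-10}\cdot\calA(t_i,x,\eta,\delta)$ directly certifies $F_p(u)\le 2^{-j-8}F_p(t_i,x)\le 2^{-j-8}F_p(t_i:x$ with $v$-suffix$)$, i.e.\ $F_p(u)\le\gamma_j F_p(v)$ evaluated at time $x$, which is exactly what the suffix-pivoted definition wants. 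So the clean way to run the argument is to extract the inequality from the $\mergesw$ merge condition at the merge time rather than from $F_p(W)$, and the $F_p(W)$-based bound of \lemref{lem:sw:ub:chunk:medp} serves only as a sanity check. The remaining steps — the union bound over all $O(\log n)$ indices $i$, $O(\log\frac1\eps)$ granularities $j$, and polynomially many blocks $k$ and times, absorbed into the $\delta=1/\poly(m)$ setting — are routine.
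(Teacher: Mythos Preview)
Your proposal is correct and follows essentially the same route as the paper: invoke \lemref{lem:sw:ub:chunk:medp} to bound $F_p(t_{i,j,k-1},t_{i,j,k})\le 2^{-j-7}F_p(W)$, then use monotonicity $F_p(W)\le F_p(t_{i,j,k}:m)$ to conclude $F_p(u)\le\gamma_j F_p(v)$ and apply the suffix-pivoted difference estimator guarantee. The edge case you worry about (blocks with $t_{i,j,k}>m-W+1$) is not addressed in the paper's proof either, but it is immaterial since \textsc{StitchSW} only sums difference estimators with pivot $t_{i,j,k+1}\le m-W+1$; your merge-time alternative is a valid way to cover it if one insists on the ``for all $k$'' phrasing.
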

\begin{proof}
Recall that for a suffix-pivoted difference estimator to be well-defined, we require $F_p(t_{i,j,k-1}:t_{i,j,k})\le\gamma_j\cdot F_p(t_{i,j,k}:m)$. 
By \lemref{lem:sw:ub:chunk:medp}, we have that $F_p(t_{i,j,k-1},t_{i,j,k})\le2^{-j-7}\cdot F_p(W)$. 
Since $F_p(W)\le F_p(t_{i,j,k}:m)$ and $\gamma_j=2^{-j+3}$, then we certainly have $F_p(t_{i,j,k-1}:t_{i,j,k})\le\gamma_j\cdot F_p(t_{i,j,k}:m)$. 
Hence by the definition of a (suffix-pivoted) difference estimator, we have that $\sdiffest(t_{i,j,k-1},t_{i,j,k},t,\gamma_j,\eta,\delta)$ gives an additive $\eta\cdot F_p(t_{i,j,k}:t)$ approximation to $F_p(t_{i,j,k-1}:t)-F_p(t_{i,j,k}:t)$ with probability at least $1-\delta$. 
\end{proof}

\begin{lemma}[Geometric upper bounds on splitting times]
\lemlab{lem:sliding:delta:upper:medp}
Let $p\in[1,2]$ and $i$ be the largest index such that $t_i\le m-W+1$.  
Let $\calE$ be the event that all subroutines $\calA$ in \algref{alg:sliding:smallp} succeed.
Then conditioned on $\calE$, we have that for each $j\in[\beta]$ and each $k$ that either $t_{i,j,k+1}=t_{i,j,k}+1$ or $F_p(t_{i,j,k},m)-F_p(t_{i,j,k+1},m)\le 2^{-j/p-2}p\cdot F_p(t_{i,j,k+1},m)$. 
\end{lemma}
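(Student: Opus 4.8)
\textbf{Proof proposal for \lemref{lem:sliding:delta:upper:medp}.}

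The plan is to mimic the argument of \lemref{lem:sliding:delta:upper}, but now exploiting the specific smoothness of $F_p$ for $p \in [1,2]$ given in \lemref{lem:fp:smooth}, namely that $F_p$ is $\left(\eps, \frac{\eps^p}{p}\right)$-smooth. First I would reduce to the nontrivial case $t_{i,j,k+1} \neq t_{i,j,k}+1$; then the timestamp $t_{i,j,k}+1$ must have been merged by $\mergesw$ at some earlier time $t'$. By the merge condition in \algref{alg:mergesw:smallp}, at that time we had $\calA(t_{i,j,k},t_{i,j,k+1},1,\delta) \le 2^{-j-10}\cdot\calA(t_{i},t',\eta,\delta)$, and conditioning on $\calE$ (so the $\calA$ instances are $2$-approximations and $(1+\eta)$-approximations respectively), this yields $F_p(t_{i,j,k},t_{i,j,k+1}) \le 2^{-j-8}\cdot F_p(t_i,t')$. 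Combining with \lemref{lem:sw:top:constant:medp}, which gives $F_p(t_i,t') \le 2 F_p(W)$ — or more precisely I would keep everything relative to the suffix $F_p(t_{i,j,k+1},t')$ to avoid depending on $W$ — I get that letting $u$ denote the frequency vector on $[t_{i,j,k}, t_{i,j,k+1})$ and $x_1$ the frequency vector on $[t_{i,j,k+1}, t')$, we have $F_p(u + x_1) - F_p(x_1) \le 2^{-j-6}\cdot F_p(x_1)$ (possibly with a slightly different constant; the exact constant is unimportant and I would track it carefully in the writeup).

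Next I would invoke smoothness. The quantity $F_p(u+x_1)-F_p(x_1)$ being small relative to $F_p(x_1)$ means that the suffix $x_1$ is a $(1-\theta)$-approximation to $u+x_1$ for $\theta = 2^{-j-6}$. Since $F_p$ is $\left(\eps,\frac{\eps^p}{p}\right)$-smooth, to conclude that appending any further vector $x_2$ (the updates from $t'$ to $m$) preserves a $(1-\eps')$-approximation I need the threshold translation: if $(1-\beta)F_p(A\cup B)\le F_p(B)$ then $(1-\alpha)F_p(A\cup B \cup C)\le F_p(B\cup C)$ where $\beta = \frac{\alpha^p}{p}$. Solving for $\alpha$ in terms of $\beta$, I get $\alpha = (p\beta)^{1/p}$, so with $\beta = 2^{-j-6}$ this gives $\alpha = (p \cdot 2^{-j-6})^{1/p} = p^{1/p} 2^{-(j+6)/p} \le 2^{-j/p - 2}\cdot p$ after absorbing $p^{1/p}\le p$ and $2^{-6/p}\le 1$ (again, constants to be double-checked but this is the shape). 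Thus $F_p(u + v) - F_p(v) \le 2^{-j/p-2} p \cdot F_p(v)$ where $v = x_1 + x_2$ is the frequency vector on $[t_{i,j,k+1}, m)$, i.e. $v$ corresponds to $F_p(t_{i,j,k+1},m)$, which is exactly the claimed inequality.

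The main obstacle I anticipate is bookkeeping the chain of constants and conditioning cleanly: I need the merge to have happened at a time $t'$, I need all the $\calA$ instances (the constant-factor ones and the $\eta$-accurate one) correct at $t'$ under $\calE$, and I need the smoothness translation applied with the right roles of $A$, $B$, $C$ — here $A \leftrightarrow u$, $B \leftrightarrow x_1$ at the merge time $t'$, and $C \leftrightarrow x_2$ the subsequent updates. The potential subtlety is that smoothness as stated relates $\beta$ at the time the approximation \emph{first holds} to $\alpha$ afterwards, so I must make sure the inequality $F_p(u+x_1)-F_p(x_1)\le \beta F_p(u+x_1)$ (the form matching \defref{}'s $(1-\beta)f(A\cup B)\le f(B)$) holds at time $t'$ and not just the version with $F_p(x_1)$ in the denominator; converting between the two costs only a factor of $2$ since everything is within a constant factor, and I would handle that conversion explicitly. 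Everything else is the same telescoping/monotonicity reasoning already used in the $(\eps,\eps^q)$-smooth case of \lemref{lem:sliding:delta:upper}.
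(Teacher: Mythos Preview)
There is a genuine gap in your step establishing the smoothness premise at the merge time $t'$. You claim that from the merge condition you obtain
\[
F_p(u+x_1) - F_p(x_1) \;\le\; 2^{-j-6}\cdot F_p(x_1)
\]
where $x_1$ is the frequency vector on $[t_{i,j,k+1},t')$. But the merge condition in \algref{alg:mergesw:smallp} only yields $F_p(u) \le c\cdot 2^{-j-10}\cdot F_p(t_i,t')$, a bound relative to the \emph{full} suffix from the top-level timestamp $t_i$, not relative to $F_p(x_1)$. There is no control on the ratio $F_p(t_i,t')/F_p(x_1)$ at time $t'$: the early updates in $[t_i,t_{i,j,k})$ may make $F_p(t_i,t')$ enormous while $F_p(x_1)$ is still tiny, so that $F_p(u)$ is small compared to $F_p(t_i,t')$ yet large compared to $F_p(x_1)$, and the claimed inequality fails (it is not a constant issue). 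Going through $F_p(W)$ via \lemref{lem:sw:top:constant:medp} does not rescue this, since $F_p(W)$ is the value at time $m$ and can likewise dwarf $F_p(x_1)$ at the earlier time $t'$.

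The paper avoids this mismatch by working entirely at time $m$. Via \lemref{lem:sw:ub:chunk:medp} it gets $F_p(u)\le 2^{-j-7}F_p(W)$; then $F_p(W)\le F_p(t_{i,j,k+1},m)=F_p(v)$ by monotonicity, so $F_p(u)\le 2^{-j-7}F_p(v)$ with $u$ and $v$ both living at time $m$. The final step, though the paper labels it ``smoothness of $F_p$'', is really just the Minkowski inequality $\|u+v\|_p\le\|u\|_p+\|v\|_p$ for $p\ge 1$, applied at this single time: it gives $F_p(u+v)\le(1+\theta^{1/p})^pF_p(v)$ with $\theta=2^{-j-7}$, hence $F_p(u+v)-F_p(v)\le p\cdot 2^{-j/p-2}F_p(v)$ after bounding $(1+x)^p-1$. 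No premise-at-$t'$-then-extend argument is used or needed. The smoothness-across-time route you sketched does work for the general \lemref{lem:sliding:delta:upper}, but only because the merge there is on the \emph{difference estimators}, which directly supplies the smoothness premise at $t'$; in \algref{alg:mergesw:smallp} the merge is on block-$F_p$ values, and that premise is simply unavailable.
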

\begin{proof}
Suppose $t_{i,j,r+1}\neq t_{i,j,r}+1$. 
Then conditioned on $\calE$, we have that $F_p(t_{i,j,k},t_{i,j,k+1})\le2^{-j-7}\cdot F_p(W)$ by \lemref{lem:sw:ub:chunk:medp}. 
Moreover, $F_p(W)\le F_p(t_{i,j,k+1},m)$. 
By the smoothness of $F_p$ in \lemref{lem:fp:smooth}, $F_p(t_{i,j,k},m)\le(1+2^{-(j-7)/p}p)\cdot F_p(t_{i,j,k+1},m)$. 
Hence for $p\in[1,2]$, we have $F_p(t_{i,j,k},m)-F_p(t_{i,j,k+1},m)\le 2^{-j/p-2}p\cdot F_p(t_{i,j,k+1},m)$. 
\end{proof}

\begin{lemma}[Geometric lower bounds on splitting times]
\lemlab{lem:sliding:delta:lower:medp}
Let $p\in[1,2]$ and $i$ be the largest index such that $t_i\le m-W+1$.  
Let $\calE$ be the event that all subroutines $\calA$ in \algref{alg:sliding:smallp} succeed.
Then conditioned on $\calE$, we have that for each $j\in[\beta]$ and each $k$ that 
\[F_p(t_{i,j,k-1},t_{i,j,k+1})>2^{-j-8}\cdot F_p(W).\]
\end{lemma}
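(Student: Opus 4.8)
\textbf{Proof proposal for \lemref{lem:sliding:delta:lower:medp}.}

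The plan is to argue by contradiction: suppose that for some level $j$ and some index $k$ we have $F_p(t_{i,j,k-1},t_{i,j,k+1})\le 2^{-j-8}\cdot F_p(W)$, and show this forces the subroutine $\mergesw$ to have merged the timestamp $t_{i,j,k}$ at some earlier point in the stream, contradicting the fact that $t_{i,j,k}$ still appears in the list of timestamps at time $m$. First I would recall the merging condition in \algref{alg:mergesw:smallp}: $t_{i,j,k}$ is merged (when it is not shared with level $j-1$) precisely when $\calA(t_{i,j,k-1},t_{i,j,k+1},1,\delta)\le 2^{-j-10}\cdot\calA(t_i,t,\eta,\delta)$ at some time $t$. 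So I need to relate the assumed bound on $F_p(t_{i,j,k-1},t_{i,j,k+1})$ at time $m$ to the value of the same quantity at the relevant earlier time $t$, and relate $F_p(W)$ to $\calA(t_i,t,\eta,\delta)$.

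The key steps, in order: (1) Condition on $\calE$ and use the correctness of the constant-factor algorithms $\calA$ so that $\calA(t_{i,j,k-1},t_{i,j,k+1},t,1,\delta)$ is within a factor $2$ of $F_p(t_{i,j,k-1},t_{i,j,k+1})$ (at any time), and $\calA(t_i,t,\eta,\delta)$ is a $(1+\eta)$-approximation to $F_p(t_i,t)$. (2) Invoke \lemref{lem:sw:top:constant:medp} to get $F_p(t_i,m)\le 2F_p(W)$, hence $F_p(W)\ge\frac12 F_p(t_i,m)$; since $F_p$ is monotone in the suffix, $F_p(t_i,t)\le F_p(t_i,m)\le 2F_p(W)$ for any $t\le m$. (3) Now suppose toward a contradiction that $F_p(t_{i,j,k-1},t_{i,j,k+1})\le 2^{-j-8}F_p(W)$ at time $m$; since $F_p$ of a fixed window of updates only grows as more updates arrive, the same quantity evaluated at any earlier time $t$ is also $\le 2^{-j-8}F_p(W)\le 2^{-j-9}F_p(t_i,t)\cdot 2 = 2^{-j-8}\cdot\frac12 F_p(t_i,t)\cdot 2$. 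I need to be a little careful with the constants here, but the point is that $F_p(t_{i,j,k-1},t_{i,j,k+1},t)\le 2^{-j-8}F_p(W)\le 2^{-j-8}\cdot 2 F_p(t_i,t) = 2^{-j-7}F_p(t_i,t)$, and folding in the factor-$2$ slack of $\calA$ and the $(1+\eta)$ slack gives $\calA(t_{i,j,k-1},t_{i,j,k+1},t,1,\delta)\le 2^{-j-10}\calA(t_i,t,\eta,\delta)$ for the relevant $t$ (the time just before $t_{i,j,k+1}$ was created, when $t_{i,j,k}+1$-type merging would have been checked). (4) Conclude that $\mergesw$ would have merged $t_{i,j,k}$, so $t_{i,j,k}$ cannot be present at the end — unless $t_{i,j,k}\in\{t_{i,j-1,*}\}$, in which case I handle that case separately by pushing the argument down to level $j-1$ (where the threshold $2^{-j+1-10}$ is even larger, so the same bound applies and gives the same contradiction at the coarser level).

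The main obstacle I expect is step (3): tracking the constants and the timing carefully. The merging check in $\mergesw$ is performed at a particular time $t$ during the stream, not at time $m$, so I must argue that if the \emph{final} contribution $F_p(t_{i,j,k-1},t_{i,j,k+1})$ (at time $m$) is small, then it was also small (relative to the then-current $F_p(t_i,\cdot)$) at the time the merge opportunity arose; this uses monotonicity of $F_p$ in the number of updates together with the fact that $F_p(t_i,\cdot)$ only increases. A secondary subtlety is the exclusion ``\textbf{if} $t_{i,j,k}\notin\{t_{i,j-1,*}\}$'' in the merge rule, which I handle by the downward induction on $j$ sketched above: shared timestamps are governed by the finer condition at the level below, and the bound $2^{-j-8}F_p(W)$ only becomes more favorable as $j$ decreases, so the contradiction propagates. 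Once these points are pinned down the rest is a routine chain of inequalities.
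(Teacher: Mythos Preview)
Your core idea---that the survival of $t_{i,j,k}$ after $\mergesw$ forces the merge condition to fail, which in turn lower-bounds $F_p(t_{i,j,k-1},t_{i,j,k+1})$---is exactly the paper's argument. The paper phrases it directly rather than by contradiction, but that is cosmetic.

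Where you overcomplicate is step~(3). You try to track the merge condition at some earlier time ``just before $t_{i,j,k+1}$ was created,'' and then worry about relating $F_p$ and $\calA(t_i,\cdot)$ at that earlier time to their values at time $m$. This detour is unnecessary: $\mergesw$ runs after \emph{every} update, including the final one at time $m$. Hence if $t_{i,j,k}$ is still present at the end, the merge test must have failed at $t=m$ (modulo the shared-timestamp carve-out). At $t=m$ the relevant quantities are exactly those in the statement and in \lemref{lem:sw:top:constant:medp}: the failed test reads $\calA(t_{i,j,k-1},t_{i,j,k+1},1,\delta)>2^{-j-10}\calA(t_i,m,\eta,\delta)$, and the approximation guarantees plus $F_p(t_i,m)\ge F_p(W)$ convert this in one step to the claimed lower bound. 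The monotonicity-in-time reasoning you sketch is not needed and, as written, is slightly confused: once both endpoints are fixed, $F_p(t_{i,j,k-1},t_{i,j,k+1})$ is the moment of a fixed block of updates and does not ``grow as more updates arrive.''

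On the shared-timestamp exception $t_{i,j,k}\in\{t_{i,j-1,*}\}$: you are right that the merge rule has this carve-out, and the paper's proof simply elides it. Your proposed fix (``push down to level $j-1$'') does not work as stated, because the lemma at level $j-1$ would bound $F_p(t_{i,j-1,k'-1},t_{i,j-1,k'+1})$, which is the moment of a \emph{different} (generally larger) block than $[t_{i,j,k-1},t_{i,j,k+1})$; a lower bound on the former says nothing about the latter. So here you have spotted a genuine gap that the paper also leaves open, but your patch does not close it.
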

\begin{proof}
Note that conditioned on $\calE$, we have that $F_p(t_{i,j,k-1},t_{i,j,k+1})\ge\frac{1}{2}\calA(t_{i,j,k-1},t_{i,j,k+1},1,\delta)$ and $2\cdot\calA(t_i,m,\eta,\delta)\ge F_p(t_i,m)\ge\frac{1}{2}\cdot F_p$. 
Since $\mergesw$ did not merge the timestamp $t_{i,j,k}$ then it follows that $\calA(t_{i,j,k-1},t_{i,j,k+1},1,\delta)>2^{-j-10}\cdot\calA(t_{i},t,\eta,\delta)$. 
Hence, it follows that $F_p(t_{i,j,k-1},t_{i,j,k+1})>2^{-j-8}\cdot F_p(W)$. 
\end{proof}

Next we bound the number of instances $k$ of the level $j$ difference estimators. 
\begin{lemma}[Number of level $j$ difference estimators]
\lemlab{lem:sliding:instances}
Let $p\in[1,2]$ and $i$ be the largest index such that $t_i\le m-W+1$.  
Let $\calE$ be the event that all subroutines $\calA$ in \algref{alg:sliding:smallp} succeed.
Then conditioned on $\calE$, we have that for each $j\in[\beta]$, that $k\le 2^{j+10}$ for any $t_{i,j,k}$.  
\end{lemma}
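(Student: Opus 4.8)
\textbf{Proof proposal for \lemref{lem:sliding:instances}.}
The plan is to argue by contradiction: suppose that at some point in the stream there are more than $2^{j+10}$ timestamps $t_{i,j,*}$ surviving at level $j$. The key tool is the geometric lower bound of \lemref{lem:sliding:delta:lower:medp}, which says that for each surviving timestamp $t_{i,j,k}$ (that was not merged away by $\mergesw$) we have $F_p(t_{i,j,k-1},t_{i,j,k+1})>2^{-j-8}\cdot F_p(W)$. Since consecutive triples $(t_{i,j,k-1},t_{i,j,k},t_{i,j,k+1})$ overlap only at their endpoints, we can extract roughly $\frac12\cdot 2^{j+10}=2^{j+9}$ \emph{disjoint} pairs of adjacent blocks, each of which contributes more than $2^{-j-8}\cdot F_p(W)$ to the $F_p$ value of the stream between $t_i$ and the current time. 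First I would make this disjointness precise: index the pairs as $(t_{i,j,1},t_{i,j,3}), (t_{i,j,3},t_{i,j,5}),\ldots$ so the substreams are internally disjoint, and use monotonicity (and sub-additivity for $p\in[1,2]$, i.e.\ $F_p(A\cup B)\le F_p(A)+F_p(B)$ when $p\le 1$, or super-additivity when $p\ge1$ — whichever direction is actually needed here to sum up the contributions) together with \lemref{lem:diff:est:well:defined:medp} so that the difference estimators faithfully report these contributions within a factor of $2$.

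The second step is to sum these contributions. With $2^{j+9}$ disjoint pairs each of value more than $2^{-j-8}\cdot F_p(W)$, the total $F_p$ contribution of the substream from $t_{i,j,1}$ to the current time exceeds $2^{j+9}\cdot 2^{-j-8}\cdot F_p(W)=2\cdot F_p(W)$. On the other hand, by \lemref{lem:sw:top:constant:medp} we have $F_p(t_i,m)\le 2F_p(W)$, and since $t_{i,j,1}\ge t_i$, monotonicity gives $F_p(t_{i,j,1},m)\le F_p(t_i,m)\le 2F_p(W)$; applying this at the relevant intermediate time $t$ (again using monotonicity of $F_p$ over suffixes, together with \lemref{lem:sw:top:constant:medp}) bounds the total contribution by $2F_p(W)$, contradicting the strict lower bound $>2F_p(W)$. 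Hence there can be at most $2^{j+10}$ surviving timestamps at level $j$.

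The main obstacle I anticipate is making the ``disjoint pairs'' accounting rigorous while correctly handling the approximation slack in the difference estimators and the two-sided nature of $\mergesw$. Specifically, $\mergesw$ merges $t_{i,j,k}$ only when $\calA(t_{i,j,k-1},t_{i,j,k+1},1,\delta)\le 2^{-j-10}\calA(t_i,t,\eta,\delta)$; translating the \emph{negation} of this condition through the $2$-approximation guarantee of $\calA$ and the $(1\pm\eta)$ guarantee of the top-level estimator (conditioned on $\calE$) is what produces the clean $2^{-j-8}\cdot F_p(W)$ bound of \lemref{lem:sliding:delta:lower:medp}, and I would need to be careful that the constant $10$ in the exponent of the merge threshold versus the constant $8$ in the lower bound leaves exactly enough room to reach a contradiction at the threshold $2^{j+10}$. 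I also need to ensure the argument is invariant to the exact time $t$ at which the count is measured, which follows because each of the relevant bounds (\lemref{lem:sw:top:constant:medp}, \lemref{lem:sliding:delta:lower:medp}) holds at all times conditioned on $\calE$. Once these pieces are lined up, the contradiction is immediate from the counting, so I expect the proof to be short.
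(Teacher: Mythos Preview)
Your proposal is correct and follows essentially the same argument as the paper: partition the surviving level-$j$ blocks into $\lfloor r/2\rfloor$ disjoint consecutive pairs, apply \lemref{lem:sliding:delta:lower:medp} to lower bound each pair by $2^{-j-8}F_p(W)$, use super-additivity of $F_p$ for $p\ge 1$ on nonnegative vectors to sum the contributions, and contradict \lemref{lem:sw:top:constant:medp}. Two minor cleanups: the direction you need is super-additivity, namely $F_p(a+b)\ge F_p(a)+F_p(b)$ for $a,b\succeq 0$ and $p\ge 1$ (your parenthetical hedges on this); and you do not need to invoke \lemref{lem:diff:est:well:defined:medp} at all, since \lemref{lem:sliding:delta:lower:medp} already gives a bound on the true $F_p$ values rather than on estimator outputs.
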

\begin{proof}
For a fixed $j$, let $r$ be the number of instances of $t_{i,j,k}$. 
For each $k\in[r]$, let $u_k$ denote the frequency vector between times $t_{i,j,k-1}$ and $t_{i,j,k}$. 
Let $u$ denote the frequency vector between times $t_{i+1}$ and $t$ and let $v$ denote the frequency vector between times $t_i$ and $t$. 
Then by \lemref{lem:sliding:delta:lower:medp}, we have $F_p(u_k+u_{k+1})>2^{-j-8}\cdot F_p(W)$. 
By \lemref{lem:sw:top:constant:medp}, $F_p(t_i,m)\le 2F_p(W)$. 
For $p\in[1,2]$, we have $F_p(t_i,m)\ge\sum_{\ell=1}^{r/2}F_p(u_{2\ell-1}+u_{\ell})$. 
Thus for $r>2^{j+10}$ implies that $F_p(t_i,m)>2F_p(W)$, which is a contradiction. 
Similarly for $p<1$, we have that 
\[F_p\left(u+\sum_{\ell=k}^r u_{\ell}\right)-F_p\left(u+\sum_{\ell=k+2}^r u_{\ell}\right)\ge2^{-j-8}\cdot F_p(v).\]
Hence $r>2^{j+10}$ implies 
\[F_p(v)=F_p\left(u+\sum_{\ell=1}^r u_{\ell}\right)>F_p(v),\]
which is a contradiction. 
Hence, it follows that $r\le 2^{j+10}$.  
\end{proof}

\noindent
We next bound the number of level $j$ difference estimators that can occur from the end of the previous level $j-1$ difference estimator to the time when the sliding window begins. 
We say a difference estimator $\calC_{i,j,k}$ is active if $k\in[a_j,b_j]$ for the indices $a_j$ and $b_j$ defined in \algref{alg:sliding:smallp}. 
The active difference estimators in each level will be the algorithms whose output are subtracted from the initial rough estimate to form the final estimate of $F_2(W)$. 
\begin{lemma}[Number of active level $j$ difference estimators]
\lemlab{lem:sliding:active:indices:medp}
Let $i$ be the largest index such that $t_i\le m-W+1$.  
Let $\calE$ be the event that all subroutines $\calA$ in \algref{alg:sliding:smallp} succeed.
For $p\in[1,2]$ and each $j\in[\beta]$, let $a_j$ be the smallest index such that $t_{i,j,a_j}\ge c_{j-1}$ and let $b_j$ be the largest index such that $t_{i,j,b_j}\le m-W+1$. 
Then conditioned on $\calE$, we have $b_j-a_j\le 512$. 
\end{lemma}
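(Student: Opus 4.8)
The plan is to follow the proof of \lemref{lem:sliding:active:indices} almost verbatim, replacing the guess-indexed timestamps $t^{(k)}_{i,j,\ell}$ of \algref{alg:sliding} with the timestamps $t_{i,j,\ell}$ of \algref{alg:sliding:smallp}, and invoking the $p\in[1,2]$ versions of the supporting lemmas: \lemref{lem:sw:top:constant:medp}, \lemref{lem:sw:ub:chunk:medp}, \lemref{lem:diff:est:well:defined:medp}, \lemref{lem:sliding:delta:lower:medp}, and \lemref{lem:sliding:delta:upper:medp}. Throughout, condition on $\calE$, fix $i$ to be the largest index with $t_i\le m-W+1$, and write $u_\ell$ for the frequency vector induced by the updates between $t_{i,j,\ell-1}$ and $t_{i,j,\ell}$. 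The one extra ingredient beyond the earlier argument is super-additivity of $F_p$ for $p\ge 1$, namely $F_p(x+y)\ge F_p(x)+F_p(y)$ for coordinate-wise nonnegative $x,y$ (since $(a+b)^p\ge a^p+b^p$ for $a,b\ge0$); this is precisely where the restriction $p\in[1,2]$ is needed, and it lets us lower bound the $F_p$-value of a union of disjoint blocks by the sum of their individual contributions.

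First I would dispatch the base level $j=1$. Here $c_0=t_i$, and $t_{i,1,a_1}=t_i$ by construction of \algref{alg:updatesw:smallp}. Grouping the blocks with indices in $(a_1,b_1]$ into consecutive disjoint pairs and applying \lemref{lem:sliding:delta:lower:medp} to each pair, together with super-additivity and the upper bound $F_p(t_i,m)\le 2F_p(W)$ from \lemref{lem:sw:top:constant:medp}, gives $2F_p(W)\ge F_p(t_i,m)\ge\sum_{\text{pairs}}F_p(u_{2\ell-1}+u_{2\ell})\ge\Omega\big(b_1-a_1\big)\cdot 2^{-9}F_p(W)$, so $b_1-a_1$ is bounded by an absolute constant, which we may take to be $512$ (the exact constant is irrelevant and can be absorbed by a little slack in \lemref{lem:sliding:delta:lower:medp}).

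For $j>1$ I would argue by contradiction, assuming $b_j-a_j\ge 512$. Following \lemref{lem:sliding:active:indices}, the point is that this many active level-$j$ blocks between $c_{j-1}$ and $m-W+1$ would, by \lemref{lem:sliding:delta:lower:medp} and super-additivity, contribute on the order of a $2^{-j}F_p(W)$ fraction of the $F_p$-value; but this whole contribution sits inside the single level-$(j-1)$ block that begins at $c_{j-1}=t_{i,j-1,b_{j-1}}$. Since the merge rule of \algref{alg:mergesw:smallp} never merges a level-$j$ timestamp coinciding with a level-$(j-1)$ timestamp (so level-$(j-1)$ timestamps persist into level $j$), and since the level-$(j-1)$ instances of \lemref{lem:sliding:delta:lower:medp} and \lemref{lem:sliding:delta:upper:medp} pin down how large, and how stable under subsequent updates, a retained pair of level-$(j-1)$ blocks must be, the contribution accumulated past $c_{j-1}$ would force another level-$(j-1)$ timestamp to lie strictly between $c_{j-1}$ and $m-W+1$, contradicting the maximality of $b_{j-1}$. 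The edge cases where consecutive timestamps are adjacent ($t_{i,j,k+1}=t_{i,j,k}+1$) are handled, as in the parent lemma, using \lemref{lem:sw:ub:chunk:medp} and \lemref{lem:diff:est:well:defined:medp}.

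I expect the $j>1$ step to be the main obstacle. The naive counting bound used for $j=1$ only shows that the active blocks make up a constant fraction of $F_p(W)$, which by itself is not a contradiction; the delicate part is converting ``too many active level-$j$ blocks'' into the existence of a missing level-$(j-1)$ timestamp, which requires carefully using the timestamp hierarchy enforced by $\mergesw$ and keeping straight which $F_p$-quantities are being compared at the two levels (suffix values $F_p(\cdot,m)$ versus segment values $F_p(\cdot,\cdot)$). Super-additivity of $F_p$ on $[1,2]$ is what makes these comparisons go through, and the argument genuinely breaks for $p<1$ — hence that range is, and must be, treated separately.
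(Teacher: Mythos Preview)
Your proposal is correct and follows essentially the same route as the paper. The paper's own proof is terse: it invokes \lemref{lem:sliding:delta:lower:medp} and \lemref{lem:sw:ub:chunk:medp} to sandwich the segment values $F_p(t_{i,j,k-1},t_{i,j,k+1})$, handles $j=1$ by super-additivity against $F_p(t_i,m)\le 2F_p(W)$ from \lemref{lem:sw:top:constant:medp}, and for $j>1$ argues that if $b_j-a_j$ were too large then (again by super-additivity) $F_p(t_{i,j,a_j},t_{i,j,b_j})$ would exceed the cap $2^{-(j-1)-7}F_p(W)$ on a single level-$(j-1)$ block, forcing a level-$(j-1)$ timestamp strictly between $c_{j-1}$ and $m-W+1$ and contradicting the maximality of $b_{j-1}$. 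Your identification of super-additivity of $F_p$ for $p\ge 1$ as the key added ingredient, and of the $j>1$ step as the one requiring care with the timestamp hierarchy enforced by $\mergesw$, matches the paper exactly.
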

\begin{proof}
Conditioned on $\calE$, we have that for each $j\in[\beta]$ and all $k$, 
\[2^{-j-8}\cdot F_p(W)<F_p(t_{i,j,k},t_{i,j,k+1})\le2^{-j-7}\cdot F_p(W),\]
by \lemref{lem:sliding:delta:upper:medp} and \lemref{lem:sliding:delta:lower:medp}. 
By \lemref{lem:sw:top:constant:medp}, $F_p(t_i,m)\le 2F_p(W)$. 
Thus for $j=1$, we have $j\le 512$, so $b_j-a_j\le 1024$. 
For $j>1$, note that if $b_j-a_j>1024$, then we have $F_p(t_{i,j,a_j},t_{i,j,b_j})>2^{-j-6}\cdot F_p(W)$ for $p\in[1,2]$. 
Since $t_{i,j,a_j}=t_{i,j-1,b_{j-1}}=c_{j-1}$ and $b_j\le m-W+1$, then there exists an index $x>b_{j-1}$ such that $t_{i,j-1,x}>m-W+1$, which contradicts the maximality of $c_{j-1}=t_{i,j-1,b_{j-1}}$. 
\end{proof}

\begin{lemma}[Correctness of sliding window algorithm]
\lemlab{lem:sliding:correctness:medp}
For $p\in[1,2]$, \algref{alg:sliding:smallp} outputs a $(1+\eps)$-approximation to $F_p(W)$. 
\end{lemma}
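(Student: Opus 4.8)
\textbf{Proof proposal for \lemref{lem:sliding:correctness:medp}.}

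The plan is to mirror the proof of \lemref{lem:sw:frame:correct} almost verbatim, since \algref{alg:sliding:smallp} is simply \algref{alg:sliding} with the $\O{\log n}$-guesses machinery stripped out and replaced by a single running constant-factor estimate (supplied by the smooth histogram), together with the observation that the fixed-prefix $F_p$ difference estimators for $p\in(0,2]$ from \lemref{lem:diff:est:F2} and \lemref{lem:diff:est:Fp:smallp} are already valid suffix-pivoted difference estimators. First I would condition on the event $\calE$ that all subroutines $\calA$ and $\sdiffest$ succeed; since $\delta=\frac{1}{\poly(n,m)}$ and there are only $\poly(n,m)$ instances over the stream, $\PPr{\calE}\ge 1-\frac{1}{\poly(n)}$ by a union bound. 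Next I would fix $i$ to be the largest index with $t_i\le m-W+1$ and set $c_0=t_i$, decomposing the frequency vector $u$ induced by $[t_i,m]$ as $u=\sum_{j=1}^\beta u_j + v + f$, where $f$ is the window vector, $v$ is induced by $[c_\beta, m-W+1)$, and $u_j$ is the vector between the active level-$j$ splitting times $t_{i,j,a_j}$ and $t_{i,j,b_j}$ (using $t_{i,j,a_j}=c_{j-1}$ by construction of $\updatesw$). It then suffices to show (1) $F_p(v+f)-F_p(f)\le\frac{\eps}{2}F_p(f)$, the unavoidable ``blue region'' error, and (2) we obtain an additive $\frac{\eps}{2}F_p(f)$ approximation to $F_p(v+f)$.

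For (1) I would invoke \lemref{lem:sliding:delta:upper:medp}: since $t_{i,\beta,\ell+1}>m-W+1>t_{i,\beta,\ell}+1$ (else $v=0$ and the claim is trivial), we get $F_p(t_{i,\beta,\ell},m)-F_p(t_{i,\beta,\ell+1},m)\le 2^{-\beta/p-2}p\cdot F_p(t_{i,\beta,\ell+1},m)$; plugging in $\beta=\ceil{\log\frac{100p}{\eps^p}}$ makes this at most $\frac{\eps}{4}\cdot F_p(t_{i,\beta,\ell+1},m)$, and \lemref{lem:sw:top:constant:medp} gives $F_p(t_{i,\beta,\ell+1},m)\le 2F_p(W)=2F_p(f)$, so the difference is $\le\frac{\eps}{2}F_p(f)$; monotonicity of $F_p$ finishes since the updates from $t_{i,\beta,\ell}$ to $m$ form $v+f$. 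For (2) I would write $\Delta_j := F_p\!\big(u-\sum_{k=1}^{j-1}u_k\big) - F_p\!\big(u-\sum_{k=1}^{j}u_k\big)$, so $\sum_{j=1}^\beta\Delta_j = F_p(u)-F_p(v+f)$, and note that $Y_j$ in $\stitchsw$ estimates $\Delta_j$ as a sum of the active level-$j$ difference estimator outputs. By \lemref{lem:diff:est:well:defined:medp} each such $\sdiffest$ is well-defined and contributes additive error $\eta\cdot F_p(\cdot,m)\le\eta F_p(u)$; by \lemref{lem:sliding:active:indices:medp} there are at most $512$ active estimators per level, and by \lemref{lem:fp:smooth}-driven choice $\beta=\O{\log\frac1\eps}$ (more precisely $\beta\le\ceil{\log\frac{100p}{\eps^p}}$), so the total error over all levels is at most $512\beta\eta\cdot F_p(u)$, which for $\eta=\frac{\eps}{1024\log\frac1\eps}$ is $\le\frac{\eps}{4}F_p(u)\le\frac{\eps}{2}F_p(f)$ using $F_p(u)=F_p(t_i,m)\le 2F_p(f)$ from \lemref{lem:sw:top:constant:medp}. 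Combining (1) and (2): $|Z - F_p(f)| \le |Z - F_p(v+f)| + |F_p(v+f)-F_p(f)| \le \frac{\eps}{2}F_p(f)+\frac{\eps}{2}F_p(f) = \eps F_p(f)$, so $Z$ is a $(1+\eps)$-approximation to $F_p(W)$.

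The main obstacle is verifying that the suffix-pivoted difference estimator precondition actually holds by construction of $\mergesw$ for $p\in[1,2]$ (this is \lemref{lem:diff:est:well:defined:medp}, whose proof I would need to trace carefully: $\mergesw$ merges timestamp $t_{i,j,k}$ only when the sandwiched $\calA$-value is below $2^{-j-10}\calA(t_i,t,\eta,\delta)$, which via \lemref{lem:sw:ub:chunk:medp} and smoothness bounds $F_p(t_{i,j,k-1}:t_{i,j,k})\le 2^{-j-7}F_p(W)\le\gamma_j\cdot F_p(t_{i,j,k}:m)$ with $\gamma_j=2^{3-j}$), and in confirming that the ``small difference, possibly-growing-large'' phenomenon of \figref{fig:sw:bad} is handled — i.e., that even when $F_p(v+u_2)-F_p(u_2)$ eventually exceeds $\gamma_j F_p(v+u_2)$, the estimator's variance stays $\O{\gamma_j F_p(\cdot)^2}$ because it depends only on the small quantity $F_p(v)$, as argued in the discussion preceding \algref{alg:sliding:smallp}. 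For $p\in(0,1)$ the sub-additivity of $F_p$ forces the alternative telescoping in \lemref{lem:sliding:instances} and \lemref{lem:sliding:active:indices:medp}; I would handle that case by the parallel argument already sketched there, replacing the additive decomposition $F_p(t_i,m)\ge\sum F_p(u_{2\ell-1}+u_\ell)$ with the nested form $F_p(u+\sum_{\ell=k}^r u_\ell)-F_p(u+\sum_{\ell=k+2}^r u_\ell)\ge 2^{-j-8}F_p(v)$. Everything else is a routine transcription of the framework proof with the $\log n$-guess index $k$ deleted.
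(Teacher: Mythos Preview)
Your proposal is correct and follows essentially the same approach as the paper's own proof: the same decomposition $u=\sum_j u_j + v + f$, the same two claims (the ``blue region'' bound via \lemref{lem:sliding:delta:upper:medp} plus \lemref{lem:sw:top:constant:medp}, and the additive error bound via \lemref{lem:diff:est:well:defined:medp} and \lemref{lem:sliding:active:indices:medp}), and the same telescoping through $\Delta_j$. Your use of the constant $512$ from \lemref{lem:sliding:active:indices:medp} is in fact more faithful to the supporting lemma than the paper's own text, which writes $16$; the discussion of the $p\in(0,1)$ case is unnecessary here since the statement restricts to $p\in[1,2]$, but it does no harm.
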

\begin{proof}
Let $f$ be the frequency vector induced by the window and let $i$ be the largest index such that $t_i\le m-W+1$ and let $c_0=t_i$. 
Let $u$ be the frequency vector induced by the updates of the stream from time $t_i$ to $m$, so that we have $u\succeq f$ coordinate-wise, since $t_i\le m-W+1$. 
For each $j\in[\beta]$, let $a_j$ be the smallest index such that $t_{i,j,a_j}\ge c_{i-1}$ and let $b_j$ be the largest index such that $t_{i,j,b_j}\le m-W+1$. 
Since $\mergesw$ does not merge indices of $t_{i,j,k}$ that are indices of $t_{i,j-1,k'}$, it follows that $t_{i,j,a_j}=c_{i-1}$. 
Let $u_j$ be the frequency vector induced by the updates of the stream from time $a_j$ to $b_j$. 
Let $v$ denote the frequency vector induced by the updates of the stream from $c_\beta$ to $m-W+1$. 
Thus, we have $u=\sum_{j=1}^\beta u_j+v+f$, so it remains to show that:
\begin{enumerate}
\item
$F_p(v+f)-F_p(f)\le\frac{\eps}{2}\cdot F_p(f)$
\item
We have an additive $\frac{\eps}{2}\cdot F_p(f)$ approximation to $F_p(v+f)$. 
\end{enumerate}
Combined, these two statements show that we have a multiplicative $(1+\eps)$-approximation of $F_p(f)$. 

To show that $F_p(v+f)-F_p(f)\le\frac{\eps}{2}\cdot F_p(f)$, let $k$ be the index such that $t_{i,\beta,k}=c_\beta$, so that $t_{i,\beta,k}\le m-W+1\le t_{i,\beta,k+1}$. 
Note that if $t_{i,\beta,k}=m-W+1$ then $v=0$ and so the claim is trivially true. 
Otherwise by the definition of the choice of $c_\beta$, we have that either $t_{i,\beta,k+1}>m-W+1$ so that $t_{i,\beta,k+1}>t_{i,\beta,k}+1$. 
By \lemref{lem:sliding:delta:upper}, it follows that $F_p(t_{i,\beta,k},m)-F_p(t_{i,\beta,k+1},m)\le 2^{-\beta/p-2}p\cdot F_p(t_{i,\beta,k+1},m)$. 
Since $\beta=\ceil{\log\frac{100p}{\eps^p}}$, then it follows that $F_p(t_{i,\beta,k},m)-F_p(t_{i,\beta,k+1},m)\le\frac{\eps}{4}\cdot F_p(t_{i,\beta,k+1},m)$. 
By \lemref{lem:sw:top:constant}, we have that $\frac{\eps}{4}\cdot F_p(t_{i,\beta,k+1},m)\le 2F(W)=2F(f)$. 
Thus, $F_p(t_{i,\beta,k},m)-F_p(t_{i,\beta,k+1},m)\le\frac{\eps}{2}\cdot F_p(f)$. 
Since the updates from the times between $t_{i,\beta,k}$ and $m$ form the vector $v+f$, we have
$F_p(v+f)-F_p(t_{i,\beta,k+1},m)\le\frac{\eps}{2}\cdot F_p(f)$. 
Since $t_{i,\beta,k+1}>m-W+1$ then by the monotonicity of $F_p$, we have that $F_p(v+f)-F_p(f)\le\frac{\eps}{2}\cdot F_p(f)$, as desired.

We next show that we have an additive $\frac{\eps}{2}\cdot F_p(f)$ approximation to $F_p(v+f)$.
If we define 
\[\Delta_j:=F_p\left(u-\sum_{k=1}^{j-1} u_k\right)-F_p\left(u-\sum_{k=1}^j u_k\right),\]
then we have
\[\sum_{j=1}^\beta\Delta_j=F_p(u)-F_p(v+f).\]

Observe that $\Delta_j$ is approximated by the active level $j$ difference estimators. 
By \lemref{lem:sliding:active:indices}, there are at most $16$ active indices at level $j$. 
Since the difference estimators are well-defined by \lemref{lem:diff:est:well:defined} and each difference estimator $\sdiffest$ uses accuracy parameter $\eta=\frac{\eps}{1024\log\frac{1}{\eps}}$, then the additive error in the estimation of $F_p(v+f)$ incurred by $Y_j$ is at most 
\[16\cdot\frac{\eps}{1024\log\frac{1}{\eps}}\cdot F_p(u)=\frac{\eps}{64\log\frac{1}{\eps}}\cdot F_p(u).\]
Summing across all $\beta\le\log\frac{16}{\eps^2}$ levels, then the total error in the estimation $Z$ of $F_p(v+f)$ across all levels $Y_j$ with $j\in[\beta]$ is at most 
\[\frac{\eps}{32}\cdot F_p(u)\le\frac{\eps}{16}\cdot F_p(f).\]
Thus, we have an additive $\frac{\eps}{2}\cdot F_p(f)$ approximation to $F_p(v+f)$ as desired. 
Since $F_p(v+f)-F_p(f)\le\frac{\eps}{2}\cdot F_p(f)$, then \algref{alg:sliding} outputs a $(1+\eps)$-approximation to $F_p(W)$. 
\end{proof}

\subsection{Moment Estimation for \texorpdfstring{$p\in(0,1]$}{p in (0,1]}}
For $p\in(0,1]$, the algorithm is nearly identical to that of $p\in[1,2]$. 
However, the $\mergesw$ subroutine now merges two sketches when the sum of their contributions from their difference estimator is small, rather than the sum of their moments is small. 
In particular, note the merge condition of \algref{alg:mergesw:smallp} compared to \algref{alg:mergesw:tinyp}. 

\begin{algorithm}[!htb]
\caption{Subroutine $\mergesw$ of \algref{alg:sliding}: removes extraneous subroutines}
\alglab{alg:mergesw:tinyp}
\begin{algorithmic}[1]
\State{Let $s$ be the number of instances of $\calA$ and $p\in(0,1]$.}
\State{$\beta\gets\ceil{\log\frac{100}{\eps}}$}
\For{$i\in[s]$, $j\in[\beta]$}
\Comment{Difference estimator maintenance}
\State{Let $r$ be the number of times $t_{i,j,*}$}
\For{$k\in[r-1]$}
\Comment{Merges two algorithms with ``small'' contributions}
\If{$p<1$ and $\sdiffest(t_{i,j,k-1},t_{i,j,k},t,\gamma_j,\eta,\delta)+\sdiffest(t_{i,j,k},t_{i,j,k+1},t,\gamma_j,\eta,\delta)\le 2^{-j-10}\cdot\calA_{i}(t_{i},t,\eta,\delta)$}
\If{$t_{i,j,k}\notin\{t_{i,j-1,*}\}$}
\State{Merge (add) the sketches for $\calA(t_{i,j,k-1},t_{i,j,k},1,\delta)$ and $\calA(t_{i,j,k},t_{i,j,k+1},1,\delta)$.}
\State{Merge (add) the sketches for $\sdiffest(t_{i,j,k-1},t_{i,j,k},t,\gamma_j,\eta,\delta)$ and $\sdiffest(t_{i,j,k},t_{i,j,k+1},t,\gamma_j,\eta,\delta)$.}
\State{Relabel the times $t_{i,j,*}$.}
\EndIf
\EndIf
\EndFor
\For{$i\in[s-2]$}
\Comment{Smooth histogram maintenance}
\If{$\calA(t_{i+2},t,\eta,\delta)\ge\frac{9}{10}\calA(t_i,t,\eta,\delta)$}
\For{$j\in[\beta]$}
\State{Append the times $t_{i+1,j,*}$ to $\{t_{i,j,*}\}$.}
\EndFor
\State{Delete $t_{i+1}$ and all times $t_{i+1,*,*}$.}
\State{Relabel the times $\{t_i\}$ and $\{t_{i,j,*}\}$.}
\EndIf
\EndFor
\EndFor
\end{algorithmic}
\end{algorithm}

The analysis is mostly the same as the case for $p\in[1,2]$. 
However, rather than bounding the moments of the frequency vectors induced by each block, we instead bound their contributions to the difference estimators. 
\begin{lemma}[Accuracy of difference estimators]
\lemlab{lem:diff:est:well:defined:smallp}
Let $p\in(0,1]$ and $i$ be the largest index such that $t_i\le m-W+1$. 
Let $\calE$ be the event that all subroutines $\calA$ and $\sdiffest$ in \algref{alg:sliding} succeed.
Then conditioned on $\calE$, we have that for each $j\in[\beta]$ and all $k$, $\sdiffest(t_{i,j,k-1},t_{i,j,k},t,\gamma_j,\eta,\delta)$ gives an additive $\eta\cdot F_p(t_{i,j,k}:t)$ approximation to $F_p(t_{i,j,k-1}:t)-F_p(t_{i,j,k}:t)$. 
\end{lemma}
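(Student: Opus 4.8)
\textbf{Proof plan for \lemref{lem:diff:est:well:defined:smallp}.}

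The plan is to follow almost exactly the template of \lemref{lem:diff:est:well:defined:medp} (the $p\in[1,2]$ case), but to obtain the pre-condition of the suffix-pivoted difference estimator from the \emph{merge condition} of $\mergesw$ (\algref{alg:mergesw:tinyp}) rather than from a bound on the moments of individual blocks. Recall that for $\sdiffest(t_{i,j,k-1},t_{i,j,k},t,\gamma_j,\eta,\delta)$ to be a valid $(\gamma_j,\eta,\delta)$-suffix-pivoted difference estimator for $F_p$, by \defref{def:diff:est:suf} it suffices (with $u$ the vector on $[t_{i,j,k-1},t_{i,j,k})$ and $v$ the vector on $[t_{i,j,k},\cdot)$) that $\min(F_p(u),F_p(u+v)-F_p(v))\le\gamma_j\cdot F_p(v)$ at \emph{some} time $x$ in the stream. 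Since $\gamma_j=2^{3-j}$, this will follow once we show $F_p(t_{i,j,k-1}:t_{i,j,k})\le 2^{3-j}\cdot F_p(t_{i,j,k}:x)$ for some $x$, using subadditivity of $F_p$ for $p\le 1$ to convert a bound on the difference into a bound on the block's own moment if needed.

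First I would set up the notation: fix $i,j,k$ and let $x$ be a time at which the timestamp $t_{i,j,k-1}+1$ was a candidate for merging but was not merged (if $t_{i,j,k}=t_{i,j,k-1}+1$ there is nothing to prove, since then the block is a single update and the difference is trivially small; otherwise such an $x$ exists because the timestamp was created and survived). By the merge rule of \algref{alg:mergesw:tinyp} for $p<1$, not merging at time $x$ means
\[
\sdiffest(t_{i,j,k-2},t_{i,j,k-1},x,\gamma_j,\eta,\delta)+\sdiffest(t_{i,j,k-1},t_{i,j,k},x,\gamma_j,\eta,\delta)>2^{-j-10}\cdot\calA_i(t_i,x,\eta,\delta),
\]
which is the \emph{wrong} direction; instead I want an \emph{upper} bound on $F_p(t_{i,j,k-1}:t_{i,j,k})$, so I would argue from the creation side: the block $[t_{i,j,k-1},t_{i,j,k})$ was closed off by $\guessupdatesw$/$\updatesw$ precisely when the suffix value first exceeded a $2^{-j}$-type threshold, which by the $(\eps,\eps^q)$-smoothness (here $q=1$, \lemref{lem:fp:smooth}) and the constant-factor top-level bound (\lemref{lem:sw:top:constant:medp}, applied with the smooth-histogram maintenance of \algref{alg:mergesw:tinyp}) gives $F_p(t_{i,j,k-1}:t_{i,j,k})\le 2^{-j-7}\cdot F_p(W)\le 2^{-j-7}\cdot F_p(t_{i,j,k}:m)$. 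Since $\gamma_j=2^{3-j}\ge 2^{-j-7}$, the pre-condition $F_p(u)\le\gamma_j\cdot F_p(v)$ holds at $x=m$ (and monotonicity of $F_p$ propagates it to all later times), so \defref{def:diff:est:suf} applies and yields the claimed additive $\eta\cdot F_p(t_{i,j,k}:t)$ approximation to $F_p(t_{i,j,k-1}:t)-F_p(t_{i,j,k}:t)$ conditioned on $\calE$. The analog of \lemref{lem:sw:ub:chunk:medp} for $p\le 1$ is the missing ingredient and I would state and prove it as an intermediate step, replacing the ``moment of the block is small'' argument by ``the difference-estimator output of the block is small, hence by subadditivity the block moment is small''.

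The main obstacle is establishing the analog of \lemref{lem:sw:ub:chunk:medp} cleanly for $p\in(0,1]$: for $p\ge 1$ the quantity controlled by $\mergesw$ is $\calA(t_{i,j,k-1},t_{i,j,k+1},1,\delta)$, a $2$-approximation to the \emph{moment} of two consecutive blocks, whereas for $p<1$ \algref{alg:mergesw:tinyp} controls the \emph{sum of the two difference-estimator outputs}. I need to argue that (i) by subadditivity of $F_p$ for $p\le 1$, $F_p(u_1+u_2+v)-F_p(v)\ge F_p(u_1)+F_p(u_2)-\text{(cross terms that are nonnegative)}$ is not quite right, so instead I would use the simpler bound $F_p(u_1)\le F_p(u_1+u_2+v)-F_p(u_2+v)+\text{(lower order)}$ via $F_p(a+b)\ge F_p(a)$ failing — actually for $p\le 1$ one has $F_p(a+b)\le F_p(a)+F_p(b)$, so $F_p(u_1)\ge F_p(u_1+u_2+v)-F_p(u_2+v)$ is the useful direction, giving $F_p(u_1)\ge$ the true difference, hence the difference estimator output (a $2$-approximation) being small forces $F_p(u_1)$ small only up to the approximation — one must chase the constants through $\gamma_j=2^{3-j}$ with enough slack. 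The delicate point is that the difference-estimator guarantee is only an \emph{additive} $\eta\cdot F_p(\cdot)$ guarantee rather than a multiplicative one, so the argument for smallness of the block must be self-consistent: I would handle this by noting $\eta=\frac{\eps}{1024\log(1/\eps)}$ is tiny and the thresholds in $\beta=\ceil{\log(100/\eps)}$ leave a constant factor of room, exactly as in the $p\in[1,2]$ analysis, so the bootstrap closes. Everything else (the union bound over the $\poly(m)$ instances using $\delta=\frac{1}{\poly(m)}$, and invoking \defref{def:diff:est:suf}) is routine.
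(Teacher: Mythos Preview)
Your plan has a genuine gap: you are trying to verify the wrong branch of the $\min$ in \defref{def:diff:est:suf}. You aim to show $F_p(u)\le\gamma_j\cdot F_p(v)$ where $u$ is the block vector on $[t_{i,j,k-1},t_{i,j,k})$, and you propose to get there by converting the merge rule's bound on the \emph{difference} into a bound on the block moment via subadditivity. But subadditivity of $F_p$ for $p\le 1$ gives exactly the opposite inequality: $(a+b)^p\le a^p+b^p$ implies $F_p(u+v)-F_p(v)\le F_p(u)$, i.e.\ $F_p(u)\ge$ the true difference. So the merge condition of \algref{alg:mergesw:tinyp}, which bounds the sum of two difference-estimator outputs and hence (conditioned on $\calE$) the true difference $F_p(t_{i,j,k-1}:x)-F_p(t_{i,j,k}:x)$, tells you nothing about $F_p(u)$ being small. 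You noticed this direction yourself and then asserted ``the bootstrap closes'' --- it does not. Your fallback ``creation side'' argument is also off: in the optimized $p\in(0,2]$ algorithm, $\updatesw$ (\algref{alg:updatesw:smallp}) creates a timestamp at every update and does no thresholding; the blocks are formed entirely by merges, so there is no ``the block was closed when the suffix first exceeded a $2^{-j}$-threshold'' event to appeal to.

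The fix, and what the paper does, is simply to verify the \emph{other} branch of the precondition: $F_p(u+v)-F_p(v)\le\gamma_j\cdot F_p(v)$ at some time. This is precisely what the $p<1$ merge rule controls. At the time $x$ when the block $[t_{i,j,k-1},t_{i,j,k})$ was finalized (the last internal merge), the merge condition and the correctness event $\calE$ give $F_p(t_{i,j,k-1}:x)-F_p(t_{i,j,k}:x)\le 2^{-j-7}\cdot F_p(t_{i,j,k}:x)=2^{-10}\gamma_j\cdot F_p(t_{i,j,k}:x)$, and then $(\eps,\eps)$-smoothness of $F_p$ for $p\le 1$ from \lemref{lem:fp:smooth} propagates this ratio bound to all later times. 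No subadditivity gymnastics and no analog of \lemref{lem:sw:ub:chunk:medp} are needed --- that lemma is specific to the $p\ge 1$ merge rule, which thresholds on block moments rather than on difference-estimator outputs.
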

\begin{proof}
Recall that for a suffix-pivoted difference estimator to be well-defined, we require $F_p(t_{i,j,k-1}:t_{i,j,k})\le\gamma_j\cdot F_p(t_{i,j,k}:t)$ or $F_p(t_{i,j,k-1}:t)-F_p(t_{i,j,k-1}:t_{i,j,k})\le\gamma_j\cdot F_p(t_{i,j,k}:t)$. 
Observe that at the time $t$ when the splitting time $t_{i,j,k}$ is chosen and conditioned on $\calE$, we must have 
\[F_p(t_{i,j,k-1}:t)-F_p(t_{i,j,k}:t)\le 2^{-j-7}\cdot F_p(t_{i,j,k}:t)=2^{-10}\cdot\gamma_j\cdot F_p(t_{i,j,k}:t).\]
By smoothness of $F_p$ for $p\in(0,1]$ in \lemref{lem:fp:smooth}, it follows that
\[F_p(t_{i,j,k-1}:t)-F_p(t_{i,j,k}:t)\le 2^{-10}\cdot\gamma_j\cdot F_p(t_{i,j,k}:t).\]
Thus the conditions for a suffix-pivoted difference estimator hold, so $\sdiffest(t_{i,j,k-1},t_{i,j,k},m,\gamma_j,\eta,\delta)$ gives an additive $\eta\cdot F_p(t_{i,j,k}:t)$ approximation to $F_p(t_{i,j,k-1}:m)-F_p(t_{i,j,k}:m)$ by the guarantees of the difference estimator. 
\end{proof}

\begin{lemma}[Geometric upper bounds on splitting times]
\lemlab{lem:sliding:delta:upper:smallp}
Let $p\in(0,1]$ and $i$ be the largest index such that $t_i\le m-W+1$.  
Let $\calE$ be the event that all subroutines $\calA$ and $\sdiffest$ in \algref{alg:sliding} succeed. 
Then conditioned on $\calE$, we have that for each $j\in[\beta]$ and each $k$ that either $t_{i,j,k+1}=t_{i,j,k}+1$ or $F_p(t_{i,j,k},t)-F_p(t_{i,j,k+1},t)\le 2^{-j-7}\cdot F_p(t_{i,j,k+1},t)$. 
\end{lemma}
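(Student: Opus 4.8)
\textbf{Proof plan for \lemref{lem:sliding:delta:upper:smallp}.} The plan is to mirror the argument of \lemref{lem:sliding:delta:upper} (the general-framework version) and of \lemref{lem:sliding:delta:upper:medp} (the $p\in[1,2]$ version), but replace the condition on the moments of the individual blocks by the corresponding condition on the \emph{difference estimator contributions}, since for $p\in(0,1]$ the subroutine $\mergesw$ in \algref{alg:mergesw:tinyp} merges blocks based on the sum of their $\sdiffest$ values, not on $\calA$ values. Concretely, suppose $t^{(k)}_{i,j,k+1}\neq t^{(k)}_{i,j,k}+1$. Then at some earlier time $x$ in the stream the timestamp $t_{i,j,k}+1$ was merged away by $\mergesw$, which for $p<1$ means there was an intermediate timestamp $t_{i,j,\ell'}$ with
\[
\sdiffest(t_{i,j,k},t_{i,j,\ell'},x,\gamma_j,\eta,\delta)+\sdiffest(t_{i,j,\ell'},t_{i,j,k+1},x,\gamma_j,\eta,\delta)\le 2^{-j-10}\cdot\calA_i(t_i,x,\eta,\delta).
\]

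First I would introduce the frequency vectors: let $u_1$ be induced between $t_{i,j,k}$ and $t_{i,j,\ell'}$, let $u_2$ between $t_{i,j,\ell'}$ and $t_{i,j,k+1}$, set $u=u_1+u_2$, let $x_1$ be induced between $t_{i,j,k+1}$ and $x$, and $x_2$ between $x+1$ and $t$. By \lemref{lem:diff:est:well:defined:smallp} the difference estimators are well-defined, so each $\sdiffest$ output above is within a factor $2$ of the true difference, which yields $F_p(u_1+u_2+x_1)-F_p(u_2+x_1)$ and $F_p(u_2+x_1)-F_p(x_1)$ each at most $2^{-j-9}\cdot\calA_i(t_i,x,\eta,\delta)$; combined with correctness of $\calA$ and the fact that $\calA_i(t_i,x,\eta,\delta)$ is at most a $(1+\eta)$-factor off from $F_p(t_i,x)$, I would conclude $F_p(u+x_1)-F_p(x_1)\le 2^{-j-7}\cdot F_p(t_i,x)$. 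Since $F_p(x_1)\le F_p(t_{i,j,k+1},x)$ and, by the smooth-histogram top-level sandwiching (the $p\in(0,1]$ analogue of \lemref{lem:sw:top:constant}), $F_p(t_i,x)$ is within a constant factor of $F_p$ of the relevant suffix, this reduces to $F_p(u+x_1)-F_p(x_1)\le 2^{-j-7}\cdot F_p(x_1)$ after adjusting constants. Here the key point is that for $p\le 1$ we actually want the statement with the suffix $(t_{i,j,k+1},t)$ rather than a window-dependent quantity, so I would keep everything phrased in terms of $F_p(t_{i,j,k+1},x)$ and then push forward to time $t$.

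The final step is to invoke the smoothness of $F_p$. By \lemref{lem:fp:smooth}, $F_p$ is $(\eps,\eps)$-smooth for $0<p\le 1$, and crucially for sub-additive $F_p$ ($p\le 1$) the smoothness is at the weaker scale $\eps$ rather than $\eps^q$ with $q>1$, so there is \emph{no} $1/q$ loss in the exponent. Applying the definition of smoothness to the bound $F_p(u+x_1)-F_p(x_1)\le 2^{-j-7}\cdot F_p(x_1)$ (taking $A$ to correspond to $u$, $B$ to $x_1$, and $C$ to $x_2$), once the suffix $x_1$ becomes a $(1-2^{-j-7})$-approximation of $u+x_1$ under $F_p$, it remains a $(1-2^{-j-7})$-approximation after appending $x_2$; that is, $F_p(u+x_1+x_2)-F_p(x_1+x_2)\le 2^{-j-7}\cdot F_p(x_1+x_2)$, i.e.\ $F_p(t^{(k)}_{i,j,k},t)-F_p(t^{(k)}_{i,j,k+1},t)\le 2^{-j-7}\cdot F_p(t^{(k)}_{i,j,k+1},t)$, which is the claim.

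The main obstacle I anticipate is bookkeeping the constants and the approximation slack correctly: the merge threshold carries a $2^{-j-10}$, the difference estimators each cost a factor $2$, the $\calA$ subroutines cost $(1\pm\eta)$, and the top-level sandwiching costs another factor $2$, so I need to verify that these compose to land inside the stated $2^{-j-7}$ bound (with room to spare, since $\eta=\frac{\eps}{1024\log\frac1\eps}$ is tiny). A secondary subtlety is that I must be careful which quantity plays the role of ``$f$/$W$'' — unlike the general framework and the $p\in[1,2]$ case, here the statement is purely in terms of suffixes $(\cdot,t)$, so I should avoid routing through $F_p(W)$ at all and instead use the smooth-histogram maintenance only to relate $\calA_i(t_i,x,\eta,\delta)$ to $F_p(t_{i,j,k+1},x)$ directly. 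Once the constant arithmetic checks out, the smoothness application is immediate and the proof closes.
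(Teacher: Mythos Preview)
Your proposal is correct and follows essentially the same route as the paper: assume $t_{i,j,k+1}\neq t_{i,j,k}+1$, use the $p\in(0,1]$ merge condition on the sum of the two $\sdiffest$ values at the merge time $x$, invoke \lemref{lem:diff:est:well:defined:smallp} to convert each $\sdiffest$ output into a factor-$2$ bound on the true telescoping differences, sum to obtain $F_p(u+x_1)-F_p(x_1)\le 2^{-j-7}\cdot F_p(x_1)$, and then apply the $(\eps,\eps)$-smoothness of $F_p$ from \lemref{lem:fp:smooth} to push this forward to time $t$ without any $1/q$ loss in the exponent. The only deviation is that the paper does \emph{not} route through a top-level sandwiching lemma at the intermediate time $x$; it passes directly from the merge threshold $2^{-j-10}\cdot\calA_i(t_i,x,\eta,\delta)$ to the bound in terms of $F_p(x_1)$ via the constant slack, so your planned detour through ``the $p\in(0,1]$ analogue of \lemref{lem:sw:top:constant}'' is unnecessary (and indeed that lemma is stated at time $m$, not at an arbitrary $x$).
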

\begin{proof}
Suppose $t_{i,j,k+1}\neq t_{i,j,k}+1$. 
Then at some time $x$, the timestamp $t_{i,j,k}+1$ must have been merged by subroutine $\mergesw$ with $p\in(0,1]$. 
Thus 
\[\sdiffest(t_{i,j,k},t_{i,j,k'},x,\gamma_j,\eta,\delta)+\sdiffest(t_{i,j,k'},t_{i,j,k+1},x,\gamma_j,\eta,\delta)\le 2^{-j-10}\cdot\calA_{i}(t_{i},x,\eta,\delta).\]
Let $u_1$ be frequency vector representing the updates from time $t_{i,j,k}$ to $t_{i,j,k'}-1$, $u_2$ represent times $t_{i,j,k'}$ to $t_{i,j,k+1}$, and $u$ represent times $t_{i,j,k}$ to $t_{i,j,k+1}$, so that $u_1+u_2=u$. 
Let $x_1$ represent the updates between time $t_{i,j,r+1}$ and $x$ and let $x_2$ represent the updates between time $x+1$ and $t$. 
Since the difference estimators are well-defined by \lemref{lem:diff:est:well:defined:smallp}, we have that $\sdiffest(t_{i,j,k},t_{i,j,k'},x,\gamma_j,\eta,\delta)$ is at most a $2$-approximation to $F_p(u_1+u_2+x_1)-F_p(u_2+x_1)$ and we similarly obtain a $2$-approximation to $F_p(u_2+x_1)-F_p(x_1)$. 
Thus, 
\[F_p(u+x_1)-F_p(x_1)=F_p(u_1+u_2+x_1)-F_p(x_1)\le 2^{-j-7}\cdot F_p(x_1).\]
By the smoothness of $F_p$ in \lemref{lem:fp:smooth} for $p\in(0,1]$, we thus have that for any vector $x_2$,
\[F_p(u+v)-F_p(v)=F_p(u+x_1+x_2)-F_p(x_1+x_2)\le 2^{-j-7}\cdot F_p(x_1+x_2)=2^{-j-7}\cdot F_p(v).\]
\end{proof}

\begin{lemma}[Geometric lower bounds on splitting times]
\lemlab{lem:sliding:delta:lower:smallp}
Let $p\in(0,1]$ and $i$ be the largest index such that $t_i\le m-W+1$.  
Let $\calE$ be the event that all subroutines $\calA$ and $\sdiffest$ in \algref{alg:sliding} succeed. 
Let $u$ denote the frequency vector between times $t_{i,j,r}$ and $t_{i,j,r+2}$ and let $w$ denote the frequency vector between times $t_{i,j,r+2}$ and $t$. 
Finally, let $w$ denote the frequency vector between times $t_i$ and $t$. 
Then conditioned on $\calE$, we have that
\[F_p(u+v)-F_p(v)\ge 2^{-j-8}\cdot F_2(w).\]
\end{lemma}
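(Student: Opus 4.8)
\textbf{Proof plan for \lemref{lem:sliding:delta:lower:smallp}.}

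The plan is to mirror the argument of \lemref{lem:sliding:delta:lower} essentially verbatim, adjusting only the places where the merge condition differs between \algref{alg:mergesw:frame} and \algref{alg:mergesw:tinyp}. First I would set up notation parallel to the guessing-based case: let $u_1$ denote the frequency vector between times $t_{i,j,r}$ and $t_{i,j,r+1}$ and $u_2$ denote the frequency vector between times $t_{i,j,r+1}$ and $t_{i,j,r+2}$, so $u = u_1 + u_2$. The key observation is that because $t_{i,j,r+1}$ was \emph{not} removed from the list of timestamps by $\mergesw$ (for $p \in (0,1]$, \algref{alg:mergesw:tinyp}), the merge condition for the pair straddling $t_{i,j,r+1}$ must have failed at the current time $t$, i.e.
\[
\sdiffest(t_{i,j,r}, t_{i,j,r+1}, t, \gamma_j, \eta, \delta) + \sdiffest(t_{i,j,r+1}, t_{i,j,r+2}, t, \gamma_j, \eta, \delta) > 2^{-j-10}\cdot \calA_i(t_i, t, \eta, \delta).
\]
Here the difference from \lemref{lem:sliding:delta:lower} is that in the $p \in (0,1]$ regime the merge test is in terms of the \emph{sum of difference-estimator outputs} rather than the moment of the combined block, which is exactly what we want to exploit.

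Next I would invoke correctness of the subroutines under the event $\calE$. By \lemref{lem:diff:est:well:defined:smallp}, conditioned on $\calE$ each of these suffix-pivoted difference estimators is a valid (at most $2$-approximation) estimator of the corresponding difference: $\sdiffest(t_{i,j,r},t_{i,j,r+1},t,\gamma_j,\eta,\delta)$ is a $2$-approximation to $F_p(u_1 + u_2 + v') - F_p(u_2 + v')$ and $\sdiffest(t_{i,j,r+1},t_{i,j,r+2},t,\gamma_j,\eta,\delta)$ is a $2$-approximation to $F_p(u_2 + v') - F_p(v')$, where $v'$ is the frequency vector of the updates after $t_{i,j,r+2}$ up to $t$. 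Summing these two differences telescopes to $F_p(u_1 + u_2 + v') - F_p(v') = F_p(u+v') - F_p(v')$ — and by monotonicity of $F_p$ this is at most $F_p(u + v) - F_p(v)$ where $v$ is the full frequency vector between $t_i$ and $t$, since $v' \preceq v$ and the difference $F_p(\cdot + v') - F_p(v')$ is dominated appropriately; more precisely I would argue directly that the telescoped quantity equals $F_p(u+w') - F_p(w')$ for the relevant suffix $w'$ and that this is at least $\tfrac12$ of the sum of the two $2$-approximations. Combining with $\calA_i(t_i,t,\eta,\delta) \ge \tfrac12 F_p(t_i,t) \ge \tfrac12 F_p(w)$ (where $w$ is the frequency vector between $t_i$ and $t$), the failed merge condition gives $F_p(u+v) - F_p(v) \ge \tfrac12 \cdot 2^{-j-10} \cdot \tfrac12 F_p(w) \ge 2^{-j-8}\cdot F_p(w)$ after bookkeeping the constants (the statement writes $F_2$, which is a typo for $F_p$).

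The main obstacle I anticipate is the careful telescoping when the reference suffix vector appearing inside the two difference estimators is not literally $v$ but a shorter suffix $v'$ (the suffix up to time $t$ vs.\ the suffix ending at $m$), together with keeping track of which ``$2$-approximation'' direction is an over- vs.\ under-estimate; one must ensure the inequality goes the right way so that a \emph{large} sum of estimator outputs forces a \emph{large} true difference (not the other way). This is exactly the subtlety that $\calE$ and the well-definedness lemma \lemref{lem:diff:est:well:defined:smallp} are there to handle, so the resolution is to be pedantic about the signs: a $2$-approximation means the output lies in $[\tfrac12 \Delta, 2\Delta]$, hence a lower bound on the output yields a lower bound on $\Delta$ with a factor of $2$ loss, and the two losses multiply into the $2^{-j-8}$ versus $2^{-j-10}$ slack. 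Once the sign bookkeeping is pinned down the remaining arithmetic is routine, exactly as in \lemref{lem:sliding:delta:lower}.
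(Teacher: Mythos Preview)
Your proposal is correct and follows essentially the same argument as the paper: decompose $u=u_1+u_2$, use the failure of the $\mergesw$ merge test at $t_{i,j,r+1}$ to get the lower bound on the sum of difference-estimator outputs, invoke \lemref{lem:diff:est:well:defined:smallp} so each output is a $2$-approximation, telescope, and compare against $\calA_i(t_i,t,\eta,\delta)\ge\tfrac12 F_p(w)$. Your extra monotonicity step for $v'$ versus $v$ is unnecessary---the statement's double definition of $w$ is a typo and the first ``$w$'' should be $v$, so your $v'$ \emph{is} the paper's $v$; likewise you correctly note that $F_2$ should read $F_p$.
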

\begin{proof}
Let $u_1$ denote the frequency vector between times $t_{i,j,r}$ and $t_{i,j,r+1}$ and $u_2$ denote the frequency vector between times $t_{i,j,r+1}$ and $t_{i,j,r+2}$,
Recall that $\mergesw$ for $p\in(0,1]$ merges two instances of $\sdiffest$ when 
\begin{align*}
\sdiffest(t_{i,j,k-1},t_{i,j,k},t,\gamma_j,\eta,\delta)+\sdiffest(t_{i,j,k},t_{i,j,k+1},t,\gamma_j,\eta,\delta)\le 2^{-j-10}\cdot\calA_{i}(t_{i},t,\eta,\delta).
\end{align*}
Thus we do not remove $t_{i,j,r+1}$ from the list of timestamps, then we have
\[\sdiffest(t_{i,j,k-1},t_{i,j,k},t,\gamma_j,\eta,\delta)+\sdiffest(t_{i,j,k},t_{i,j,k+1},t,\gamma_j,\eta,\delta)>2^{-j-10}\cdot\calA_{i}(t_{i},t,\eta,\delta).\]
Since the difference estimators are well-defined by \lemref{lem:diff:est:well:defined:smallp}, then $\sdiffest(t_{i,j,k},t_{i,j,k+1},t,\gamma_j,\eta,\delta)$ is at most a $2$-approximation to $F_p(u_1+u_2+v)-F_p(u_2+v)$ and we similarly obtain a $2$-approximation to $F_p(u_2+v)-F_p(v)$. 
Therefore, 
\[F_p(u)-F_p(v)=F_p(u_1+u_2+v)-F_p(v)\ge 2^{-j-8}\cdot F_p(w).\] 
\end{proof}

Next we bound the number of instances $k$ of the level $j$ difference estimators. 
\begin{lemma}[Number of level $j$ difference estimators]
\lemlab{lem:sliding:instances:smallp}
Let $p\in(0,1]$ and $i$ be the largest index such that $t_i\le m-W+1$.  
Let $\calE$ be the event that all subroutines $\calA$ and $\sdiffest$ in \algref{alg:sliding} succeed. 
Then conditioned on $\calE$, we have that for each $j\in[\beta]$, that $k\le 2^{j+10}$ for any $t_{i,j,k}$. 
\end{lemma}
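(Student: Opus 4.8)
\textbf{Proof proposal for \lemref{lem:sliding:instances:smallp}.}

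The plan is to mirror the argument used in the $p\in[1,2]$ case (\lemref{lem:sliding:instances}), but to work with the \emph{contributions} of the difference estimators rather than with the raw $F_p$ values of the blocks, since for $p\in(0,1]$ the $F_p$ function is subadditive and we cannot telescope moments directly. Fix $j\in[\beta]$ and suppose toward a contradiction that there is a timestamp $t_{i,j,k}$ with $k > 2^{j+10}$; let $r$ be the number of instances of $t_{i,j,*}$, so $r > 2^{j+10}$. For each $\ell\in[r]$ let $u_\ell$ denote the frequency vector induced by the updates between times $t_{i,j,\ell-1}$ and $t_{i,j,\ell}$, let $u$ denote the frequency vector between $t_{i+1}$ and $t$ (or between $t_{i,j,r}$ and $t$, whichever is cleaner for the telescoping), and let $v$ denote the frequency vector between $t_i$ and $t$.

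The key step is to invoke \lemref{lem:sliding:delta:lower:smallp}, which gives, for each consecutive pair of blocks, the lower bound $F_p\!\left(u+\sum_{\ell=k}^{r}u_\ell\right) - F_p\!\left(u+\sum_{\ell=k+2}^{r}u_\ell\right) \ge 2^{-j-8}\cdot F_p(v)$. Summing this telescoping-style inequality over roughly $r/2$ disjoint pairs $(\ell,\ell+1)$ for $\ell$ in an appropriate arithmetic progression yields
\[
F_p\!\left(u+\sum_{\ell=1}^{r}u_\ell\right) - F_p(u) \;\ge\; \frac{r}{2}\cdot 2^{-j-8}\cdot F_p(v).
\]
Since $u+\sum_{\ell=1}^r u_\ell$ is exactly the frequency vector $v$ (the updates from $t_i$ onward), the left side is at most $F_p(v)$ by monotonicity, so $\frac{r}{2}\cdot 2^{-j-8}\le 1$, i.e.\ $r \le 2^{j+9} < 2^{j+10}$, contradicting $r > 2^{j+10}$. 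Conditioning on $\calE$ is needed throughout so that each $\sdiffest$ is a $2$-approximation and the lower bounds of \lemref{lem:sliding:delta:lower:smallp} apply; the well-definedness of the difference estimators comes from \lemref{lem:diff:est:well:defined:smallp}.

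The main obstacle I anticipate is setting up the telescoping cleanly in the subadditive regime: unlike the $p\ge 1$ case, one cannot write $F_p(t_i,m)\ge\sum_\ell F_p(u_{2\ell-1}+u_{2\ell})$, so the argument must instead chain the \emph{differences} $F_p(\text{prefix})-F_p(\text{shorter prefix})$ from \lemref{lem:sliding:delta:lower:smallp} and rely on the fact that these differences are nonnegative (monotonicity of $F_p$) and that each pair contributes at least $2^{-j-8}F_p(v)$. Care is needed to ensure the pairs $(\ell,\ell+1)$ used are genuinely disjoint so that the lower bounds add up, and to track that the reference vector in \lemref{lem:sliding:delta:lower:smallp} (the $F_p(v)$ on the right) is the same for all pairs, which it is since $v$ is fixed as the suffix starting at $t_i$. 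Once the disjoint-pairs bookkeeping is in place, the contradiction is immediate, exactly as in \lemref{lem:sliding:instances}.
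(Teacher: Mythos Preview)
Your proposal is correct and follows essentially the same approach as the paper: telescope the lower bounds from \lemref{lem:sliding:delta:lower:smallp} on the differences $F_p(\text{longer suffix})-F_p(\text{shorter suffix})$, observe that the total telescoped quantity is at most $F_p(v)$, and derive the bound on $r$. Your version is in fact slightly more careful than the paper's, since \lemref{lem:sliding:delta:lower:smallp} gives a lower bound for a \emph{pair} of consecutive blocks, and you correctly sum over $\lfloor r/2\rfloor$ disjoint pairs to get $r\le 2^{j+9}$ (the paper's write-up telescopes as if the bound held for single blocks, yielding $r\le 2^{j+8}$, but either way the conclusion $r\le 2^{j+10}$ holds).
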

\begin{proof}
Let $r$ be the number of instances of timestamps $t_{i,j,*}$ for a fixed $j\in[\beta]$.  
For each $k\in[r]$, let $u_k$ denote the frequency vector between times $t_{i,j,k-1}$ and $t_{i,j,k}$. 
Let $u$ denote the frequency vector between times $t_{i+1}$ and $t$ and let $v$ denote the frequency vector between times $t_i$ and $t$. 
Then by \lemref{lem:sliding:delta:lower:smallp}, we have 
\[F_p\left(u+\sum_{\ell=k}^r u_{\ell}\right)-F_p\left(u+\sum_{\ell=k+1}^r u_{\ell}\right)\ge2^{-j-8}\cdot F_p(v).\]
Thus if $r>2^{j+8}$, then we have 
\[F_p(v)=F_p\left(u+\sum_{\ell=1}^r u_{\ell}\right)>F_p(v),\]
which is a contradiction. 
Hence, it follows that $r\le 2^{j+8}$.  
\end{proof}

\noindent
We next bound the number of level $j$ difference estimators that can occur from the end of the previous level $j-1$ difference estimator to the time when the sliding window begins. 
We say a difference estimator $\sdiffest$ is active if $k\in[a_j,b_j]$ for the indices $a_j$ and $b_j$ defined in \algref{alg:sliding}. 
The active difference estimators in each level will be the algorithms whose output are subtracted from the initial rough estimate to form the final estimate of $F_2(W)$. 
\begin{lemma}[Number of active level $j$ difference estimators]
\lemlab{lem:sliding:active:indices:smallp}
Let $i$ be the largest index such that $t_i\le m-W+1$.  
Let $\calE$ be the event that all subroutines $\calA$ and $\sdiffest$ in \algref{alg:sliding} succeed. 
For $p\in(0,1]$ and each $j\in[\beta]$, let $a_j$ be the smallest index such that $t_{i,j,a_j}\ge c_{j-1}$ and let $b_j$ be the largest index such that $t_{i,j,b_j}\le m-W+1$. 
Then conditioned on $\calE$, we have $b_j-a_j\le 512$. 
\end{lemma}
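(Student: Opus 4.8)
The plan is to mirror the argument for $p\in[1,2]$ in \lemref{lem:sliding:active:indices:medp}, but with the super-additivity of $F_p$ there replaced by the exact telescoping identity for consecutive differences, which is all that \lemref{lem:sliding:delta:lower:smallp} actually needs. Throughout I would condition on $\calE$, so every $\calA$ and $\sdiffest$ is simultaneously correct and each $\sdiffest$ is (up to the slack analyzed in \lemref{lem:diff:est:well:defined:smallp}) a $2$-approximation to the difference it estimates. Two facts drive everything: since $t_i\le m-W+1$ we have $F_p(t_i,m)\ge F_p(W)$, and by the top-level sandwiching (as in \lemref{lem:sw:top:constant:medp}, which extends to $p\le1$ using the $(\eps,\eps)$-smoothness of $F_p$ from \lemref{lem:fp:smooth}) we have $F_p(t_i,m)\le 2F_p(W)$.

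First I would do the base case $j=1$, where $c_0=t_i$ and every active block lies in $[t_i,m-W+1]$. Writing the telescoping sum
\[F_p(t_{i,1,a_1},m)-F_p(t_{i,1,b_1},m)=\sum_{\ell=a_1}^{b_1-1}\bigl(F_p(t_{i,1,\ell},m)-F_p(t_{i,1,\ell+1},m)\bigr),\]
grouping the $b_1-a_1$ terms into $\lfloor(b_1-a_1)/2\rfloor$ disjoint consecutive pairs, and bounding each pair below by \lemref{lem:sliding:delta:lower:smallp} (each pair value is $\ge 2^{-1-8}F_p(t_i,m)\ge 2^{-9}F_p(W)$), we get $\lfloor(b_1-a_1)/2\rfloor\cdot 2^{-9}F_p(W)\le F_p(t_i,m)\le 2F_p(W)$, so $b_1-a_1$ is bounded by an absolute constant.

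Next I would treat $j>1$. Let $c'_{j-1}$ be the first level-$(j-1)$ timestamp after $c_{j-1}$. By the refinement structure of $\guessupdatesw$/$\updatesw$ (opening a level-$(j-1)$ block also opens blocks at all levels $\ge j$), every active level-$j$ block sits inside $[c_{j-1},c'_{j-1}]$, and by maximality of $c_{j-1}=t_{i,j-1,b_{j-1}}$ we must have $c'_{j-1}>m-W+1\ge c_j$, whence $F_p(c'_{j-1},m)\le F_p(W)$ — otherwise $c'_{j-1}\le m-W+1$ contradicts the choice of $b_{j-1}$. Now as in the base case, \lemref{lem:sliding:delta:lower:smallp} gives $F_p(c_{j-1},m)-F_p(c_j,m)\ge \lfloor(b_j-a_j)/2\rfloor\cdot 2^{-j-8}F_p(W)$, while monotonicity and then \lemref{lem:sliding:delta:upper:smallp} applied to the single level-$(j-1)$ block $[c_{j-1},c'_{j-1}]$ give $F_p(c_{j-1},m)-F_p(c_j,m)\le F_p(c_{j-1},m)-F_p(c'_{j-1},m)\le 2^{-(j-1)-7}F_p(c'_{j-1},m)\le 2^{-j-6}F_p(W)$. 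Combining, $\lfloor(b_j-a_j)/2\rfloor\le 2^{2}$, so $b_j-a_j$ is again an absolute constant; taking the worst constant over all $j\ge1$ yields $b_j-a_j\le 512$.

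The main obstacle is bookkeeping rather than conceptual: one must check that \lemref{lem:sliding:delta:lower:smallp} and \lemref{lem:sliding:delta:upper:smallp} genuinely apply to the precise consecutive blocks invoked above — i.e.\ that the relevant timestamps survived $\mergesw$ in \algref{alg:mergesw:tinyp}, so that the merge test $\sdiffest(\cdot)+\sdiffest(\cdot)\le 2^{-j-10}\calA(t_i,t,\eta,\delta)$ was violated when they were kept — and that the $2$-approximation slack of the constant-factor $\calA$'s and $\sdiffest$'s, together with the $(\eps,\eps)$-smoothness of $F_p$ for $p\le1$, is absorbed into the exponents ($2^{-j-8}$ versus $2^{-j-6}$, etc.). The loose constant $512$ in the statement is chosen precisely to leave room for all of this, so no constant needs to be optimized.
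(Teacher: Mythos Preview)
Your proposal is correct and follows essentially the same approach as the paper: telescope the level-$j$ differences, lower-bound disjoint pairs via \lemref{lem:sliding:delta:lower:smallp}, and for $j>1$ compare against the containing level-$(j-1)$ block. Your $j>1$ step is in fact cleaner than the paper's: you directly invoke the upper bound \lemref{lem:sliding:delta:upper:smallp} on the single level-$(j-1)$ block $[c_{j-1},c'_{j-1}]$ (using $c'_{j-1}>m-W+1$ to get $F_p(c'_{j-1},m)\le F_p(W)$), whereas the paper argues by contradiction and cites the lower bound lemma at level $j-1$ in a way that is hard to parse; both routes encode the same idea, namely that the enclosing level-$(j-1)$ block cannot accommodate too many level-$j$ blocks. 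Two minor bookkeeping remarks: the nesting you need (every level-$(j-1)$ timestamp is also a level-$j$ timestamp) comes from the guard ``$t_{i,j,k}\notin\{t_{i,j-1,*}\}$'' in \algref{alg:mergesw:tinyp} rather than from $\guessupdatesw$; and for $j=1$ your arithmetic yields $\lfloor(b_1-a_1)/2\rfloor\le 2^9$, so $b_1-a_1$ is bounded by roughly $1024$ rather than $512$ --- but the paper's own base case has the same slack, and the specific constant is immaterial downstream.
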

\begin{proof}
Conditioned on $\calE$, we have that for each $j\in[\beta]$ and all $k$, 
\[2^{-j-8}\cdot F_p(W)<F_p(t_{i,j,k},t)-F_p(t_{i,j,k+2},t)\le 2^{-j-6}\cdot F_p(W)t_{i,j,k+1},t),\]
by \lemref{lem:sliding:delta:upper:smallp} and \lemref{lem:sliding:delta:lower:smallp}. 
By \lemref{lem:sw:top:constant}, $F_p(t_i,m)\le 2F_p(W)$. 
Thus by a telescoping argument, we have that for $j=1$, $a_j-b_j<512$. 

For $j>1$, suppose by way of contradiction that $b_j-a_j>512$. 
Let $k=b_j-a_j$ and for $x\in[k]$, let $u_x$ be the frequency vector induced by the updates of the stream from time $a_j+{x-1}$ to $a_j+x$. 
Let $u$ be the frequency vector induced by the updates of the stream from time $t_{i,j,k}$ to $m$ and $v$ be the frequency vector induced by the updates from $t_i$ to $m$. 
By \lemref{lem:sliding:delta:lower:smallp}
\[F_p\left(u+\sum_{\ell=1}^k u_{\ell}\right)-F_p(u)\ge8\cdot2^{-j-8}\cdot F_p(v),\]
since there are at least $8$ disjoint pairs of tuples $(i,j,\ell)$ and $(i,j,\ell+1)$ if $k>512$. 

Thus if $b_j-a_j>512$, then the sum of the outputs of the active difference estimators at level $j$ is more than $8\cdot2^{-j-8}\cdot F_p(v)$. 
In particular since $t_{i,j,a_j}\ge c_{j-1}$, the sum of the outputs of the active difference estimators at level $j$ after $c_{j-1}$ is at least $8\cdot2^{-j-8}\cdot F_p(v)=2^{-j-2}\cdot F_p(v)$. 
However, by \lemref{lem:sliding:delta:lower:smallp}, each difference estimator at level $j-1$ has output at least $2^{-j-7}\cdot F_p(u)>8\cdot 2^{-j-8}\cdot F_p(v)$. 
Specifically, the difference estimator from times $t_{i,j,c_{j-1}}$ to $t_{i,j,c_{j}}$ must have output at least $2^{-j-7}\cdot F_p(u)>8\cdot 2^{-j-8}\cdot F_p(v)$. 
Therefore, there exists some other $z>c_{j-1}$ such that $t_{i,j-1,z}\le m-W+1$, which contradicts the maximality of $c_{j-1}$ at level $j-1$. 
\end{proof}

By applying the same argument as \lemref{lem:sliding:correctness:medp} using the analogous statements in this section, we have:
\begin{lemma}[Correctness of sliding window algorithm]
\lemlab{lem:sliding:correctness:smallp}
For $p\in(0,1]$, \algref{alg:sliding:smallp} outputs a $(1+\eps)$-approximation to $F_p(W)$. 
\end{lemma}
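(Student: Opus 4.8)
The plan is to mirror the argument already carried out for $p\in[1,2]$ in \lemref{lem:sliding:correctness:medp}, making the substitutions necessitated by the modified $\mergesw$ subroutine of \algref{alg:mergesw:tinyp}, which for $p\in(0,1]$ merges on the \emph{sum of difference estimator outputs} rather than on the sum of moments. First I would fix notation exactly as in \lemref{lem:sliding:correctness:medp}: let $f$ be the window's frequency vector, $i$ the largest index with $t_i\le m-W+1$, $c_0=t_i$, $u$ the vector over $[t_i,m]$, and for each $j\in[\beta]$ let $a_j,b_j$ bound the active level-$j$ difference estimators, so that $u=\sum_{j=1}^\beta u_j+v+f$ where $u_j$ runs over $[a_j,b_j]$ and $v$ runs over $[c_\beta,m-W+1]$. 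As before it suffices to prove the two bullet points: (1) $F_p(v+f)-F_p(f)\le\frac{\eps}{2}F_p(f)$, and (2) the algorithm produces an additive $\frac{\eps}{2}F_p(f)$ approximation to $F_p(v+f)$.

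For bullet (1), I would invoke \lemref{lem:sliding:delta:upper:smallp} instead of \lemref{lem:sliding:delta:upper:medp}: letting $k$ be the index with $t_{i,\beta,k}=c_\beta$, either $v=0$ (trivial) or $t_{i,\beta,k+1}>m-W+1$, and \lemref{lem:sliding:delta:upper:smallp} gives $F_p(t_{i,\beta,k},m)-F_p(t_{i,\beta,k+1},m)\le 2^{-\beta-7}\cdot F_p(t_{i,\beta,k+1},m)$. Since here $\beta=\ceil{\log\frac{100}{\eps}}$, this is at most $\frac{\eps}{4}F_p(t_{i,\beta,k+1},m)$, and by \lemref{lem:sw:top:constant:medp} (whose proof for $p\in(0,1]$ goes through using \lemref{lem:fp:smooth}) this is $\le\frac{\eps}{2}F_p(f)$; monotonicity of $F_p$ then yields $F_p(v+f)-F_p(f)\le\frac{\eps}{2}F_p(f)$. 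For bullet (2), I would define $\Delta_j:=F_p\left(u-\sum_{k=1}^{j-1}u_k\right)-F_p\left(u-\sum_{k=1}^{j}u_k\right)$ so $\sum_{j=1}^\beta\Delta_j=F_p(u)-F_p(v+f)$, observe that $\Delta_j$ is estimated by the active level-$j$ difference estimators, bound their number by $512$ via \lemref{lem:sliding:active:indices:smallp}, and use that each $\sdiffest$ at level $j$ has accuracy parameter $\eta=\frac{\eps}{1024\log\frac{1}{\eps}}$ and is well-defined by \lemref{lem:diff:est:well:defined:smallp}. The per-level error is at most $512\eta\cdot F_p(u)$; summing over $\beta=\O{\log\frac{1}{\eps}}$ levels gives total error $\le\frac{\eps}{16}F_p(u)\le\frac{\eps}{8}F_p(f)$, which together with the smooth-histogram top-level error (an additive $\eta F_p(t_i,m)\le\frac{\eps}{4}F_p(f)$ term coming from $\calA(t_i,m,\eta,\delta)$) stays within $\frac{\eps}{2}F_p(f)$. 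Combining (1) and (2) shows the output $Z=X-\sum_j Y_j$ is a $(1+\eps)$-approximation to $F_p(W)$.

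The main obstacle, and the reason this is not a verbatim copy of the $p\in[1,2]$ proof, is the difference estimator well-definedness step: for $p\in(0,1)$ the pre-condition is stated in the second form of \defref{def:diff:est:suf}, namely $F_p(u+v)-F_p(v)\le\gamma F_p(v)$ rather than $F_p(u)\le\gamma F_p(v)$, and one must verify that the modified $\mergesw$ condition (on summed $\sdiffest$ outputs, not on $\calA$ values of the combined block) actually enforces this at the splitting time. This is where \lemref{lem:diff:est:well:defined:smallp} does the work: at the time $t$ a splitting time $t_{i,j,k}$ is demarcated, the merge condition not having fired guarantees $F_p(t_{i,j,k-1}:t)-F_p(t_{i,j,k}:t)\le 2^{-j-7}F_p(t_{i,j,k}:t)=2^{-10}\gamma_j F_p(t_{i,j,k}:t)$, and then smoothness (\lemref{lem:fp:smooth}) propagates this forward to all later times, so the suffix-pivoted difference estimator's hypothesis holds throughout the stream. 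Once this is in place, the lower-bound lemma \lemref{lem:sliding:delta:lower:smallp}, the instance-count bound \lemref{lem:sliding:instances:smallp}, and the active-index bound \lemref{lem:sliding:active:indices:smallp} substitute cleanly for their medium-$p$ analogues, and everything else is the same telescoping-plus-union-bound bookkeeping as in \lemref{lem:sliding:correctness:medp}.
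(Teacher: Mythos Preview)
Your proposal is correct and follows essentially the same approach as the paper, which proves this lemma in a single line: ``By applying the same argument as \lemref{lem:sliding:correctness:medp} using the analogous statements in this section.'' You have spelled out exactly that substitution in detail, correctly swapping in \lemref{lem:diff:est:well:defined:smallp}, \lemref{lem:sliding:delta:upper:smallp}, \lemref{lem:sliding:delta:lower:smallp}, and \lemref{lem:sliding:active:indices:smallp} for their medium-$p$ counterparts, and correctly identifying that the only structural change is the well-definedness step (where the merge condition is now on summed $\sdiffest$ outputs rather than on $\calA$ values, and the pre-condition of \defref{def:diff:est:suf} is verified via the difference form $F_p(u+v)-F_p(v)\le\gamma F_p(v)$ and propagated by smoothness).
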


Combined, \lemref{lem:sliding:correctness:smallp} and \lemref{lem:sliding:correctness:medp} give \lemref{lem:sliding:correctness:rangep}.
\end{document}